\journal{[To be determined]}
\title{Multiple Context-Free Tree Grammars: \\ Lexicalization and
  Characterization}
\author[je]{Joost Engelfriet}
\ead{j.engelfriet@liacs.leidenuniv.nl}
\author[am]{Andreas Maletti}
\ead{maletti@informatik.uni-leipzig.de}
\author[sm]{Sebastian Maneth}
\ead{maneth@uni-bremen.de}
\address[je]{LIACS, Leiden University, P.O.~Box~9512,
  2300~RA~Leiden, The Netherlands}
\address[am]{Institute of Computer Science, Universit\"at Leipzig,
  P.O.~Box 100\,920, 04009~Leipzig, Germany \\[.5ex]}
\address[sm]{Department of Mathematics and Informatics, Universit{\"a}t Bremen,
  P.O.~Box 330\,440, 28334 Bremen, Germany}
\DeclareMathOperator{\pos}{pos}
\DeclareMathOperator{\yield}{yd}
\DeclareMathOperator{\rk}{rk}
\DeclareMathOperator{\init}{in}
\DeclareMathOperator{\dec}{dec}
\DeclareMathOperator{\rem}{rem}
\DeclareMathOperator{\val}{val}
\DeclareMathOperator{\var}{var}
\DeclareMathOperator{\sequ}{seq}
\DeclareMathOperator{\ren}{ren}
\DeclareMathOperator{\renh}{\ren_{\hole}}
\DeclareMathOperator{\tree}{tree}
\DeclareMathOperator{\dtr}{dtr}
\DeclareMathOperator{\lex}{lex}
\DeclareMathOperator{\bign}{bign}
\DeclareMathOperator{\pf}{pf}
\DeclareMathOperator{\sts}{sts}
\DeclareMathOperator{\lhs}{lhs}
\DeclareMathOperator{\rhs}{rhs}
\DeclareMathOperator{\num}{num}
\DeclareMathOperator{\alp}{occ}
\DeclareMathOperator{\pr}{pr}
\DeclareMathOperator{\pair}{pair}
\DeclareMathOperator{\gr}{gr}
\DeclareMathOperator{\ext}{ext}
\DeclareMathOperator{\inc}{inc}
\DeclareMathOperator{\cut}{cut}
\DeclareMathOperator{\wid}{\theta}
\DeclareMathOperator{\skel}{\kappa}
\newcommand{\plus}{^{\scriptscriptstyle +}}
\newcommand{\der}{\mathrm{der}}
\newcommand{\red}{\mathrm{red}}
\newcommand{\e}{\,\triangleright}
\newcommand{\N}{\mathcal{N}}
\newcommand{\LL}{\mathcal{L}}
\newcommand{\X}{\mathcal{X}}
\newcommand{\F}{\mathcal{F}}
\newcommand{\M}{\mathcal{M}}
\newcommand{\D}{\mathcal{D}}
\newcommand{\DL}{\mathit{DL}}
\newcommand{\LDTR}{\textup{LDT}$^\textup{R}$}
\newcommand{\DTRfc}{\textup{DT}$_\textup{fc}^\textup{R}$}
\newcommand{\Deltadl}{\Delta_{\text{dl}}}
\newcommand{\Dec}{\mathrm{Skel}}
\newcommand{\PF}{\mathit{PF}}
\newcommand{\rlab}{\mathrm{rlab}}
\newcommand{\flab}{\mathrm{flab}}
\newcommand{\new}{\text{new}}
\newcommand{\mr}{\mathit{mrk}}
\newcommand{\fin}{\hfill $\Box$}
\newcommand{\hole}{\SBox}
\providecommand*{\SBox}[0]{\ensuremath{\mathchoice
    {{\scriptstyle \Box}}%
    {{\scriptstyle \Box}}%
    {{\scriptscriptstyle \Box}}%
    {{\scriptscriptstyle \Box}}%
}}
\providecommand*{\nat}[0]{\ensuremath{\mathbb{N}}}
\providecommand*{\seq}[3]{\ensuremath{#1_{#2}, \dotsc, #1_{#3}}}
\providecommand*{\word}[3]{\ensuremath{#1_{#2} \dotsm #1_{#3}}}
\providecommand*{\abs}[1]{\ensuremath{\lvert #1 \rvert}}
\begin{document}

\begin{abstract}
Multiple (simple) context-free tree grammars are investigated,
where ``simple'' means ``linear and nondeleting''. 
Every multiple context-free tree grammar that is finitely ambiguous can be lexicalized; 
i.e., it can be transformed into an equivalent one (generating the same tree language)
in which each rule of the grammar contains a lexical symbol. 
Due to this transformation, the rank of the nonterminals increases at most by~1,
and the multiplicity (or fan-out) of the grammar increases 
at most by the maximal rank of the lexical symbols; 
in particular, the multiplicity does not increase when all lexical symbols have rank~0.
Multiple context-free tree grammars have the same tree generating power as
multi-component tree adjoining grammars (provided the latter can use a root-marker).
Moreover, every multi-component tree adjoining grammar that is finitely ambiguous 
can be lexicalized. 
Multiple context-free tree grammars have the same string generating power 
as multiple context-free (string) grammars and polynomial time parsing algorithms.
A tree language can be generated by a multiple context-free tree grammar if and only if
it is the image of a regular tree language under a deterministic finite-copying macro tree transducer.
Multiple context-free tree grammars can be used as a synchronous translation device. 
\end{abstract}

\maketitle
 
{\scriptsize \tableofcontents}
 
\section{Introduction}

Multiple context-free (string) grammars (MCFG) were introduced in~\cite{sekmatfujkas91} and, 
independently, in~\cite{shaweijos87} where they are called (string-based) 
linear context-free rewriting systems (LCFRS). 
They are of interest to computational linguists because they can model cross-serial dependencies,
whereas they can still be parsed in polynomial time and generate semi-linear languages.   
Multiple context-free \emph{tree} grammars were introduced in~\cite{kan10},
in the sense that it is suggested in~\cite[Section~5]{kan10} that they are the 
hyperedge-replacement context-free graph grammars in tree generating normal form, 
as defined in~\cite{engman00}. 
Such graph grammars generate the same string languages as MCFGs~\cite{enghey91,wei92}. 
It is shown in~\cite{kan10} that they 
generate the same tree languages as second-order abstract categorial grammars (2ACG),
generalizing the fact that MCFGs generate the same string languages as 2ACGs \cite{sal07}. 
It is also observed in~\cite{kan10} that the set-local multi-component 
tree adjoining grammar (MC"~TAG, see~\cite{wei88,kall09}), well-known to computational linguists, 
is roughly the monadic restriction 
of the multiple context-free tree grammar, 
just as the tree adjoining grammar (TAG, see~\cite{joslevtak75,jossch97}) is roughly the monadic restriction 
of the (linear and nondeleting) context-free tree grammar, see~\cite{moe98,fujkas00,keprog11}. 
We note that the multiple context-free tree grammar could also be called the \emph{tree-based} LCFRS;
such tree grammars were implicitly envisioned already in~\cite{shaweijos87}.  

In this paper we define the multiple context-free tree grammars (MCFTG) 
in terms of familiar concepts from tree language theory (see, e.g., \cite{gecste84,gecste97}),
and we base our proofs on elementary properties of trees and tree homomorphisms. 
Thus, we do not use other formalisms such as graph grammars, $\lambda$-calculus, or logic programs. 
Since the relationship between MCFTGs and the above type of graph grammars is quite straightforward,
it follows from the results of~\cite{engman00} that the tree languages generated by 
MCFTGs can be characterized as the images of the regular tree languages under 
deterministic finite-copying macro tree transducers (see~\cite{engvog85,fulvog98,engman99}).
However, since no full version of~\cite{engman00} ever appeared in a journal, 
we present that characterization here (Theorem~\ref{thm:charact}). 
It generalizes the well-known fact that the 
string languages generated by MCFGs can be characterized as the yields of the images 
of the regular tree languages under deterministic finite-copying top-down tree transducers, 
cf.~\cite{wei92}.
These two characterizations imply (by a result from~\cite{engman99}) that the MCFTGs
have the same string generating power as MCFGs, through the yields of their tree languages. 
We also give a direct proof of this fact (Corollary~\ref{cor:ymcft}), 
and show how it leads to polynomial time parsing algorithms for MCFTGs (Theorem~\ref{thm:mcftgparse}). 
All trees that have a given string as yield, can be viewed as ``syntactic trees'' of that string. 
A parsing algorithm computes, for a given string, one syntactic tree (or all syntactic trees) 
of that string in the tree language generated by the grammar.
It should be noted that, due to its context-free nature, an MCFTG, like a TAG, 
also has derivation trees (or parse trees), which show the way in which a tree 
is generated by the rules of the grammar. A derivation tree can be viewed as a meta level tree
and the derived syntactic tree as an object level tree, cf.~\cite{jossch97}. 
In fact, the parsing algorithm computes a derivation tree (or all derivation trees) 
for the given string, and then computes the corresponding syntactic tree(s). 

We define the MCFTG as a straightforward generalization of the MCFG, based on  
tree substitution rather than string substitution, 
where a (second-order) tree substitution is a tree homomorphism.
However, our formal syntactic definition of the MCFTG is closer to the one 
of the context-free tree grammar (CFTG) 
as in, e.g., \cite{rou70,gecste97,engsch77,keprog11,fujkas00,kan16,staott07}. 
Just as for the MCFG, the semantics of the MCFTG is a least fixed point semantics, 
which can easily be viewed as a semantics based on parse trees (Theorem~\ref{thm:dtree}). 
Moreover, we provide a rewriting semantics for MCFTGs
(similar to the one for CFTGs and similar to the one 
in~\cite{ramsat99} for MCFGs) leading to a usual notion of derivation,
for which the derivation trees then equal the parse trees (Theorem~\ref{thm:deriv}).
Intuitively, an MCFTG $G$ is a \emph{simple} 
(i.e., linear and nondeleting) context-free tree grammar (spCFTG)
in which several nonterminals are rewritten in one derivation step. Thus every rule 
of $G$ is a sequence of rules of an spCFTG, and the left-hand side nonterminals of these rules 
are rewritten simultaneously. However, a sequence of nonterminals can only be rewritten 
if (earlier in the derivation) they were introduced explicitly as such 
by the application of a rule of $G$. Therefore, each rule of $G$ must also specify the 
sequences of (occurrences of) nonterminals in its right-hand side that may later be rewritten. 
This restriction is called ``locality'' in~\cite{wei88,ramsat99,kall09}.

Apart from the above-mentioned results (and some related results), our main result is 
that MCFTGs can be lexicalized (Theorem~\ref{thm:main}). 
Let us consider an MCFTG $G$ that generates a tree language $L(G)$ over the ranked alphabet $\Sigma$,
and let $\Delta\subseteq\Sigma$ be a given set of \emph{lexical items}. 
We say that $G$ is \emph{lexicalized} (with respect to $\Delta$) if every rule of $G$ 
contains at least one lexical item (or anchor). 
Lexicalized grammars are of importance for several reasons. First, a lexicalized grammar 
is often more understandable, because the rules of the grammar can be grouped around the lexical items.
Each rule can then be viewed as lexical information on its anchor, 
demonstrating a syntactical construction in which the anchor can occur.
Second, a lexicalized grammar defines a so-called dependency structure on the lexical items 
of each generated object, allowing to investigate certain aspects 
of the grammatical structure of that object, see~\cite{kuh10}. 
Third, certain parsing methods can take significant advantage of the fact 
that the grammar is lexicalized,
see, e.g., \cite{schabejos88}. In the case where each lexical item is a
symbol of the string alphabet (i.e., has rank~0), each rule of a lexicalized
grammar produces at least one symbol of the generated string.
Consequently, the number of rule applications (i.e., derivation steps) 
is clearly bounded by the length of the input string.  In addition, the lexical items in the
rules guide the rule selection in a derivation, which
works especially well in scenarios with large
alphabets (cf.\@ the detailed account in~\cite{che01}). 

We say that $G$ is \emph{finitely ambiguous} (with respect to $\Delta$) if, for every $n\geq 0$,
$L(G)$ contains only finitely many trees with $n$ occurrences of lexical items. For simplicity,
let us also assume here that every tree in $L(G)$ contains at least one lexical item. 
Obviously, if $G$ is lexicalized, then it is finitely ambiguous.
Our main result is that for a given MCFTG $G$ it is decidable whether or not $G$ is finitely ambiguous,
and if so, a lexicalized MCFTG $G'$ can be constructed 
that is (strongly) equivalent to $G$, i.e., $L(G')=L(G)$.
Moreover, we show that $G'$ is grammatically similar to $G$, in the sense that 
their derivation trees are closely related: every derivation tree of $G'$ 
can be translated by a finite-state tree transducer into a derivation tree of $G$ 
for the same syntactic tree, and vice versa. To be more precise, 
this can be done by a linear deterministic top-down tree transducer 
with regular look-ahead (\LDTR"~transducer). 
We say that $G$ and $G'$ are \LDTR"~equivalent.  
Since the class of \LDTR"~transductions is closed under composition, 
this is indeed an equivalence relation for MCFTGs.  
Note that, due to the \LDTR"~equivalence of $G'$ and $G$, 
any parsing algorithm for $G'$ can be turned into a parsing algorithm for $G$
by translating the derivation trees of $G'$ in linear time into derivation trees of $G$,
using the \LDTR"~transducer. Thus, the notion of \LDTR"~equivalence is similar to 
the well-known notion of cover for context-free grammars (see, e.g., \cite{grahar72,nij80}).
For context-free grammars, no \LDTR"~transducer can handle the derivation tree translation  
that corresponds to the transformation into Greibach Normal Form. In fact, our lexicalization of MCFTGs
generalizes the transformation of a context-free grammar into Operator Normal Form 
as presented in~\cite{grahar72}, which is much simpler than the transformation into Greibach Normal Form. 

The \emph{multiplicity} (or \emph{fan-out}) of an MCFTG is the maximal number of nonterminals 
that can be rewritten simultaneously in one derivation step. The lexicalization of MCFTGs,
as discussed above, increases the multiplicity of the grammar by at most the maximal rank of 
the lexical symbols in $\Delta$. 
When viewing an MCFTG as generating a string language, consisting of the yields of the generated trees,
it is natural that all lexical items are symbols of rank~0, 
which means that they belong to the alphabet of that string language. 
The lexicalization process is then called strong lexicalization, 
because it preserves the generated tree language (whereas weak lexicalization just requires 
preservation of the generated string language). 
Thus, strong lexicalization of MCFTGs does not increase the multiplicity. 
In particular spCFTGs, which are MCFTGs of multiplicity~1, can be strongly lexicalized 
as already shown in~\cite{maleng12}. Note that 
all TAG tree languages can be generated by spCFTGs \cite{keprog11}. 
Although TAGs can be weakly lexicalized (see~\cite{fuj05}), 
they cannot be strongly lexicalized, which was unexpectedly shown in~\cite{kuhsat12}. 
Thus, from the lexicalization point of view, spCFTGs have a significant advantage over TAGs. 
The strong lexicalization of MCFTGs (with lexical symbols of rank~0) is presented without proof 
(and without the notion of \LDTR"~equivalence) in~\cite{maleng17}. 

The \emph{width} of an MCFTG is the maximal rank of its nonterminals. 
The lexicalization of MCFTGs increases the width of the grammar by at most~1. 

In addition to the above results we compare the MCFTGs with the MC"~TAGs
and prove that they have (``almost'') the same tree generating power,
as also presented in~\cite{maleng17}.
It is shown in~\cite{keprog11} that ``non"~strict'' TAGs, which 
are a slight generalization of TAGs, generate the same tree languages as
monadic spCFTGs, where `monadic' means width at most~1; i.e., 
all nonterminals have rank~1 or~0.
We confirm and strengthen the above-mentioned observation in~\cite{kan10}
by showing that both MCFTGs and monadic MCFTGs have the same tree generating power 
as non"~strict~MC"~TAGs (Theorems~\ref{thm:footed} and~\ref{thm:monadic}),
with a polynomial increase of multiplicity. 
Since the constructions preserve lexicalized grammars, we obtain that 
non"~strict~MC"~TAGs can be (strongly) lexicalized. 
Note that by a straightforward generalization of~\cite{kuhsat12} it can be shown that 
non"~strict TAGs cannot be strongly lexicalized.
Then we show that even (strict)~MC"~TAGs have the same tree generating power as MCFTGs
(Theorem~\ref{thm:mctal}). 
To be precise, if $L$ is a tree language generated by an MCFTG, then the tree language 
$\#(L)=\{\#(t)\mid t\in L\}$ can be generated by an MC"~TAG, where $\#$ is a ``root-marker'' of rank~1.
This result settles a problem stated in~\cite[Section~4.5]{wei88}.\footnote{In the first paragraph 
of that section, Weir states that ``it would be interesting to investigate whether there exist
LCFRS's with object level tree sets that cannot be produced by any MCTAG.''}
It also implies that, as opposed to TAGs, 
MC"~TAGs can be (strongly) lexicalized (Theorem~\ref{thm:lexmctag}). 

It is shown in~\cite{yoskan05,yos06} that 2ACGs, and in particular tree generating 2ACGs, 
can be lexicalized (for $\Delta=\Sigma$). Although 2ACGs and MCFTGs generate the same tree languages,
this does not imply that MCFTGs can be lexicalized. 
It is shown in~\cite{sanaksgra16} that multi-dimensional TAGs can be strongly lexicalized.
Although it seems that for every multi-dimensional TAG there is an MCFTG 
generating the same tree language (see the Conclusion of~\cite{kan16}), 
nothing else seems to be known about the relationship between multi-dimensional TAGs and MC"~TAGs or MCFTGs. 

The structure of this paper is as follows. Section~\ref{sec:prelim} consists of preliminaries, 
mostly on trees and tree homomorphisms. Since a sequence of nonterminals of an MCFTG generates 
a sequence of trees, we also consider sequences of trees, called forests. 
The substitution of a forest for a sequence of symbols in a forest is realized by a tree homomorphism. 
In Section~\ref{sec:mcftg} we define the MCFTG, 
its least fixed point semantics (in terms of forest substitution), its derivation trees, and its derivations. 
Every derivation tree yields a tree, called its value, and the tree language generated by the grammar 
equals the set of values of its derivation trees. 
The set of derivation trees is itself a regular tree language. 
We recall the notion of an \LDTR"~transducer, and we define two MCFTGs to be \LDTR"~equivalent
if there is a value-preserving \LDTR"~transducer from the derivation trees of one grammar to the other,
and vice versa.  
Section~\ref{sec:norm} contains a number of normal forms. For every MCFTG we construct an \LDTR"~equivalent
MCFTG in such a normal form. In Section~\ref{sub:basicnf} we discuss some basic normal forms, 
such as permutation-freeness which means that application of a rule cannot permute subtrees. 
In Section~\ref{sub:lexnf} we prove that every MCFTG can be transformed into Growing Normal Form
(generalizing the result of~\cite{staott07,sta09} for spCFTGs). This means that 
every derivation step increases the sum of the number of terminal symbols and the number of 
``big nonterminals'' (which are the sequences of nonterminals that 
form the left-hand sides of the rules of the MCFTG). 
It even holds for finitely ambiguous MCFTGs, with `terminal' replaced by `lexical'
(Theorem~\ref{thm:dec-growing}). 
Thus, this result is already part of our lexicalization procedure. 
Moreover, we prove that finite ambiguity is decidable. 
Section~\ref{sec:lex} is devoted to the remaining, main part of the lexicalization procedure.
It shows that every MCFTG in (lexical) Growing Normal Form can be transformed into an \LDTR"~equivalent
lexicalized MCFTG. The intuitive idea is to transport certain lexical items from positions
in the derivation tree that contain more than one lexical item (more precisely, that 
are labeled with a rule of the grammar that contains more than one lexical item), 
up to positions that do not contain any lexical item. 
In Section~\ref{sub:mctag} we 
prove that MCFTGs have the same tree generating power as non-strict MC"~TAGs.
We define non-strict MC"~TAGs as a special type of MCFTGs, namely ``footed'' ones, 
which (as in~\cite{keprog11}) are permutation-free MCFTGs such that in every rule 
the arguments of each left-hand side nonterminal are all passed to one node in the 
right-hand side of the rule.
Then we prove in Section~\ref{sub:mcftismctal} that (strict) MC"~TAGs have the same tree 
generating power as MCFTGs, as explained above, and we show that MC"~TAGs can be strongly lexicalized. 
In Section~\ref{sub:monadic} we observe that every MC"~TAG (and hence every MCFTG) can be transformed 
into an equivalent MCFTG of width at most~1, which is in contrast to the fact that spCFTGs (and 
arbitrary context-free tree grammars) give rise to a strict hierarchy with respect to width,
as shown in~\cite[Theorem~6.5]{engrozslu80} (see also~\cite[Lemma~24]{lohmansch12}). 
In all the results of Section~\ref{sec:monadic} the constructed grammar is 
\LDTR"~equivalent to the given one. 
In Section~\ref{sub:stringgen} we define the multiple context-free (string) grammar (MCFG) as 
the ``monadic case'' of the MCFTG, which means that all terminal and nonterminal symbols 
have rank~1, except for a special terminal symbol and the initial nonterminal symbol that have rank~0. 
We prove (using permutation-freeness) that every tree language $L(G)$ that is generated by an MCFTG $G$
can also be generated by an MCFG, provided that we view every tree as a string in the usual way
(Theorem~\ref{thm:mcft-to-mcf}). 
Using this we show that $\yield(L(G))$, which is the set of yields of the trees in $L(G)$, 
can also be generated by an MCFG $G'$ and, in fact, every MCFG string language is of that form. 
Since, moreover, the derivation trees of $G$ and $G'$ are related by \LDTR"~transducers
(in a way similar to \LDTR"~equivalence), 
this result can be used to transform any polynomial time parsing algorithm for MCFGs into 
a polynomial time parsing algorithm for MCFTGs, as discussed in Section~\ref{sub:parsing}.
In Section~\ref{sec:charact} we recall the notion of macro tree transducer, and show that
the tree translation that computes the value of a derivation tree of an MCFTG $G$
can be realized by a deterministic finite-copying macro tree transducer 
(DMT$_\textrm{fc}$"~transducer). This implies that $L(G)$ is 
the image of a regular tree language (viz.\@ the set of derivation trees of $G$) under 
a DMT$_\textrm{fc}$"~transduction. 
Vice versa, every such image can be generated by an MCFTG that can be obtained by 
a straightforward product construction. From this characterization of the MCFTG tree languages 
we obtain a number of other characterizations (including those for the MCFG string languages), 
known from the literature. Thus, they are the tree/string languages generated by context-free graph grammars, 
they are the tree/string languages generated by 2ACGs, and 
they are the tree/string languages obtained as images of the regular tree languages under 
deterministic MSO"~definable tree/tree-to-string transductions 
(where MSO stands for Monadic Second-Order logic). 
Section~\ref{sec:trans} is based on the natural idea that,  
since every ``big nonterminal'' of an MCFTG generates a forest, i.e., a sequence of trees, 
we can also use an MCFTG to generate a set of pairs of trees (i.e., a tree translation)
and hence, taking yields, to realize a string translation. 
We study the resulting translation device in Section~\ref{sec:trans} and call it an MCFT"~transducer. 
It generalizes the (binary) rational tree translation of~\cite{rao97} 
(called synchronous forest substitution grammar in~\cite{mal13}) and the 
synchronous context-free tree grammar of~\cite{nedvog12}. 
We prove two results similar to those in~\cite{nedvog12}.
The first result characterizes the MCFT"~transductions in terms of macro tree transducers,
generalizing the characterization of the MCFTG tree languages of Section~\ref{sec:charact}. 
We show that the MCFT"~transductions are the bimorphisms determined by the
DMT$_\textrm{fc}$"~transductions as morphisms (Theorem~\ref{thm:bimorphism}). 
The second result generalizes the parsing result for MCFTGs in Section~\ref{sec:mcfg}.
It shows that any polynomial time parsing algorithm for MCFGs can be transformed into 
a polynomial time parsing algorithm for MCFT"~transducers (Theorem~\ref{thm:transparse}). 
For an MCFT"~transducer $M$, the algorithm parses a given input string $w$ and 
translates it into a corresponding output string; 
more precisely, the algorithm computes all pairs $(t_1,t_2)$ in the transduction of $M$ 
such that the yield of $t_1$ is $w$. 
Finally, in Section~\ref{sec:parallel}, we consider two generalizations of the MCFTG
for which the basic semantic definitions are essentially still valid.
In both cases the generalized MCFTG is able to generate 
an unbounded number of copies of a subtree, 
by allowing several occurrences of the same nonterminal (in the first case) or the same variable 
(in the second case) to appear in the right-hand side of a rule. 
Consequently, the resulting tree languages need not be semi-linear
anymore.
The first generalization is the \emph{parallel} MCFTG (or PMCFTG), which is the obvious generalization of 
the well-known parallel MCFG of~\cite{sekmatfujkas91}. Roughly speaking, in a parallel MCFTG 
(or parallel MCFG),
whenever two occurrences of the same nonterminal are introduced in a derivation step,
these occurrences must be rewritten in exactly the same way in the remainder of the derivation. 
We did not study the lexicalization 
of PMCFTGs, but for all the other results on MCFTGs there are analogous results for PMCFTGs
with almost the same proofs.
The second generalization, which we briefly consider, is the \emph{general} 
(P)MCFTG, for which we drop the restriction that the rules must be linear (in the variables). 
Thus a general (P)MCFTG can copy subtrees during one derivation step. 
General MCFTGs are discussed in~\cite{boukalsal12}. 
The general MCFTGs of multiplicity~1 are the classical IO~context-free tree grammars.
The synchronized-context-free tree languages of~\cite{boicharet15}
(which are defined by logic programs) lie between the MCFTG tree languages and 
the general PMCFTG tree languages. 
The general PMCFTG tree languages can be characterized as the images of 
the regular tree languages under arbitrary deterministic macro tree transductions,
but otherwise we have no results for general (P)MCFTGs. 

As observed above, part of the results in this contribution were first
presented in~\cite{engman00}, \cite{maleng12},
  and~\cite{maleng17}.

\section{Preliminaries}
\label{sec:prelim}
\noindent We denote the set~$\{1, 2, 3, \dotsc\}$ of positive
integers by~$\nat$ and the set of nonnegative integers by~$\nat_0 =
\nat \cup \{0\}$.  For every~$n 
\in \nat_0$, we let~$[n] = \{i \in \nat \mid i \leq n\}$.
For a set $A$, we denote its cardinality by~$\abs A$.
A partition of $A$ is a set $\Pi$ of subsets of $A$ 
such that each element of~$A$ is contained in exactly one element of~$\Pi$;
we allow the empty set $\emptyset$ to be an element of $\Pi$. 
For two functions $f \colon A \to B$ and $g \colon B \to C$ 
(where $A$, $B$, and $C$ are sets), 
the composition $g \circ f \colon A \to C$ of $f$ and $g$ 
is defined as usual by~$(g \circ f)(a) = g(f(a))$ for every~$a \in A$.

\subsection{Sequences and strings}
\label{sub:seqs}
\noindent Let $A$~be a (not necessarily finite) set.
When we view~$A$ as
a set of basic (i.e., indecomposable) elements, we call $A$~an
alphabet and each of its elements a symbol.  Note that we do not
require alphabets to be finite; finiteness will be explicitly
mentioned.\footnote{Infinite alphabets are sometimes convenient.
For instance, it is natural to view the infinite set $\{x_1,x_2,\dotsc\}$ of 
variables occurring in trees as an alphabet, see
Section~\protect{\ref{sub:sub}}.  We will use grammars with infinite
alphabets as a technical tool in Section~\protect{\ref{sub:deriv}} to
define the derivations of usual grammars, which of course have finite
alphabets.}
For every $n \in \nat_0$, we denote by~$A^n$ the $n$"~fold
Cartesian product of~$A$ containing sequences over~$A$;
i.e., $A^n = \{(\seq a1n) \mid \seq a1n \in A\}$ and $A^0 = \{(\,)\}$
contains only the empty sequence~$(\,)$, which we also denote
by~$\varepsilon$.  Moreover, we let
$A^{\scriptscriptstyle +} = \bigcup_{n \in \nat} A^n$ and
$A^* = \bigcup_{n \in \nat_0} A^n$.  
When $A$ is viewed as an alphabet, the sequences in $A^*$ are also called strings. 
Let $w = (\seq a1n)$ be a sequence (or string). 
Its length~$n$ is denoted by~$\abs w$.
For $i\in[n]$, the $i$"~th~element of~$w$ is~$a_i$.
The elements of $w$ are said to occur in~$w$.
The set $\{\seq a1n\}$ of elements of~$w$ will be denoted by~$\alp(w)$.
The sequence~$w$ is \emph{repetition-free}
if no element of $A$ occurs more than once in~$w$; i.e., $\abs{\alp(w)} = n$.
A \emph{permutation} of $w$ is a sequence $(\seq a{i_1}{i_n})$ of the same length
such that $\{\seq i1n\}=[n]$. 
Given another sequence $v = (\seq {a'}1m)$ the
concatenation~$w \cdot v$, also written just~$wv$, is simply
$(\seq a1n, \seq {a'}1m)$.  Moreover, for every~$n \in \nat_0$, the
$n$"~fold concatenation of~$w$ with itself is denoted by~$w^n$, in
particular~$w^0 = \varepsilon$.  As usual, we identify the
sequence~$(a)$ of length~$1$ with the element $a \in A$ it contains,
so~$A = A^1 \subseteq A^{\scriptscriptstyle +}$.  Consequently, we often write the
sequence~$(\seq a1n)$ as~$\word a1n$.  However, if the
$\seq a1n$ are themselves sequences, then $\word a1n$~will
always denote their concatenation and never the sequence~$(\seq a1n)$
of sequences.

\paragraph{Notation} In the following we will often denote sequences over a
set~$A$ by the same letters as the elements of~$A$.  For instance, we
will write $a = (\seq a1n)$ with~$a \in A^{\scriptscriptstyle +}$ and
$a_i \in A$ for all $i \in [n]$.  It should hopefully always be clear
whether a sequence over~$A$ or an element of~$A$ is meant.  We will
consider sequences over several different types of sets, and it would
be awkward to use different letters, fonts, or decorations (like
$\overline{a}$~and~$\vec{a}$) for all of them.

\paragraph{Homomorphisms}
Let $A$ and $B$ be sets.  
A (string) homomorphism from $A$ to $B$ is a mapping $h\colon A\to B^*$. It determines 
a mapping $h^*\colon A^*\to B^*$ which is also called a (string) homomorphism and 
which is defined inductively as follows for $w\in A^*$: 
\[ h^*(w) = 
\begin{cases} 
  \varepsilon & \text{if $w = \varepsilon$} \\
  h(a)\cdot h^*(v) & \text{if $w = av$
    with $a \in A$ and $v\in A^*$.}
\end{cases} \]
We note that $h^*$~and~$h$ coincide on~$A$ and that
$h^*(wv) = h^*(w) \cdot h^*(v)$ 
for all $w,v\in A^*$. 
In certain particular cases, which will be explicitly mentioned, we will denote $h^*$
simply by~$h$, for readability.\footnote{There will be four such cases only: 
yield functions `$\yield$' (see the remainder of this paragraph), 
rank functions `$\rk$' (see the first paragraph of Section~\protect{\ref{sub:trees}}), 
injections `$\init$' (see the first paragraph of Section~\protect{\ref{sub:sub}}),
and tree homomorphisms $\hat{h}$ (see the third paragraph of Section~\protect{\ref{sub:sub}}).
}
A~homomorphism \emph{over $A$} is a homomorphism from $A$ to itself.
We will often use the following homomorphism from $A$ to $B$, 
in the special case where $B\subseteq A$. For a string $w$ over $A$, 
the \emph{yield} of $w$ with respect to $B$,
denoted $\yield_B(w)$, is the string over $B$ that is obtained from $w$ by erasing 
all symbols not in $B$. Formally, $\yield_B$ is the homomorphism from $A$ to $B$ 
such that $\yield_B(a)=a$ if $a\in B$ and $\yield_B(a)=\varepsilon$ otherwise,
and we define $\yield_B(w)=\yield_B^*(w)$. Thus, 
\[ \yield_B(w) = 
\begin{cases}
  \varepsilon & \text{if $w = \varepsilon$} \\
  a \yield_B(v)
  & \text{if $w = av$ with $a \in B$ and $v\in A^*$} \\
  \yield_B(v) 
  & \text{if $w = av$ with $a \in A\setminus B$ and $v\in A^*$.}
\end{cases} \]
Note that $\yield_A$ is the identity on $A^*$. 

\paragraph{Context-free grammars} 
We assume that the reader is familiar with context-free
grammars~\cite{aubebo97}, which are presented here as systems
$G = (N, \Sigma, S, R)$ containing a finite alphabet~$N$ of
nonterminals, a finite alphabet~$\Sigma$ of terminals that is disjoint
to~$N$, an initial nonterminal~$S \in N$, and a finite set~$R$ of
rules of the form~$A \to w$ with a nonterminal $A \in N$ and a
string $w \in (N \cup \Sigma)^*$.  Each nonterminal~$A$
generates a language~$L(G, A)$, which is given by
$L(G, A) =\{w \in \Sigma^* \mid A \Rightarrow_G^* w\}$ using the
reflexive, transitive closure~$\Rightarrow_G^*$ of the usual rewriting
relation~$\mathord{\Rightarrow_G} = \{(uAv, uwv) \mid u, v \in (N \cup
\Sigma)^*,\, A \to w \in R\}$
of the context-free grammar~$G$.  The language generated by~$G$
is~$L(G) = L(G, S)$.  The nonterminals $A, A' \in N$ are
\emph{aliases} if $\{w \mid A \to w \in R\} = \{w \mid A'
\to w \in R\}$, which yields that $L(G, A) = L(G, A')$.  It is well
known that for every context-free grammar~$G = (N, \Sigma, S, R)$
there is an equivalent one $G' = (N', \Sigma, S_1, R')$ such that
$w$~does not contain any nonterminal more than once for every rule
$A \to w \in R'$.  This can be achieved by introducing sufficiently
many aliases as follows.  Let $m$~be the maximal number of occurrences
of a nonterminal in the right-hand side of a rule in~$R$.  We replace
each nonterminal~$A$ by new nonterminals~$\seq A1m$ with initial
nonterminal~$S_1$.  In addition, we replace each rule~$A \to w$ by all
the rules~$A_i \to w'$, where~$i \in [m]$ and $w'$~is obtained
from~$w$ by replacing the $j$"~th~occurrence of each nonterminal~$B$
in~$w$ by~$B_j$.  Thus, $\seq A1m$ are aliases. 
As an example, the grammar~$G$ with rules
$S \to \sigma SS$ and $S \to a$ is transformed into the grammar~$G'$
with rules $S_1 \to \sigma S_1S_2$, $S_2 \to \sigma S_1S_2$,
$S_1 \to a$, and $S_2 \to a$.  It should be clear that~$L(G') = L(G)$,
and in fact, the derivation trees of $G$~and~$G'$ are closely related
(by simply introducing appropriate subscripts in the derivation trees of $G$ 
or removing the introduced subscripts from the derivation trees of~$G'$).

\subsection{Trees and forests}
\label{sub:trees}
\noindent
A \emph{ranked set}, or \emph{ranked alphabet}, is a pair~$(\Sigma,
\mathord{\rk}_\Sigma)$, where $\Sigma$~is a (possibly infinite) set and 
$\mathord{\rk}_\Sigma \colon \Sigma \to \nat_0$ is a mapping 
that associates a rank to every element of~$\Sigma$. 
In what follows the elements of $\Sigma$ will be called symbols. 
For all $k\in \nat_0$, 
we let $\Sigma^{(k)} = \{ \sigma \in \Sigma \mid
\mathord{\rk}_\Sigma(\sigma) = k\}$ be the set of all symbols of rank~$k$.  
We sometimes indicate the rank~$k$ of a
symbol~$\sigma\in\Sigma$ explicitly, as in~$\sigma^{(k)}$.
Moreover, as usual, we just write~$\Sigma$ for the ranked alphabet~$(\Sigma,
\mathord{\rk}_\Sigma)$, and whenever $\Sigma$ is clear from the context, 
we write `$\mathord{\rk}$' instead of `$\mathord{\rk}_\Sigma$'. 
If $\Sigma$ is finite, then we denote by $\mr_\Sigma$ the maximal rank 
of the symbols in $\Sigma$; i.e., $\mr_\Sigma=\max\{\rk(\sigma)\mid \sigma\in\Sigma\}$.
The mapping $\mathord{\rk}^*$ from~$\Sigma^*$
to~$\nat_0^*$, as defined in the paragraph on homomorphisms in Section~\ref{sub:seqs}, 
will also be denoted by `$\mathord{\rk}$'.
It associates a multiple rank (i.e., a sequence of ranks) 
to every sequence of elements of~$\Sigma$. 
The union of ranked alphabets 
$(\Sigma,\mathord{\rk}_\Sigma)$~and~$(\Delta, \mathord{\rk}_\Delta)$ 
is~$(\Sigma \cup \Delta, \mathord{\rk}_\Sigma \cup \mathord{\rk}_\Delta)$;
it is again a ranked alphabet provided that
the same rank~$\rk_\Sigma(\gamma) = \rk_\Delta(\gamma)$ is assigned to all
symbols~$\gamma \in \Sigma \cap \Delta$.  

We build trees over the ranked alphabet~$\Sigma$ such that the nodes
are labeled by elements of~$\Sigma$ and the rank of the node label
determines the number of its children.  Formally we define trees as 
nonempty strings over $\Sigma$ as follows. The set~$T_\Sigma$
of \emph{trees over~$\Sigma$} is the smallest set~$T \subseteq
\Sigma^{\scriptscriptstyle +}$ such that $\sigma \word t1k \in T$ for
all $k \in \nat_0$, $\sigma \in \Sigma^{(k)}$, and $\seq t1k \in T$.
As usual, we will also denote the string~$\sigma \word t1k$ by the term~$\sigma(\seq
t1k)$.  If we know that~$t \in T_\Sigma$ and $t = \sigma(\seq t1k)$,
then it is clear that $k \in \nat_0$, $\sigma \in \Sigma^{(k)}$, and
$\seq t1k \in T_\Sigma$, so unless we need stronger assumptions, we
will often omit the quantifications of~$k$, $\sigma$, and~$\seq t1k$.
It is well known that if $\sigma w \in T_\Sigma$ with $k \in \nat_0$,
$\sigma \in \Sigma^{(k)}$, and $w \in \Sigma^*$,  then there 
are unique trees~$\seq t1k \in T_\Sigma$ such that~$w = \word t1k$.
Any subset of~$T_\Sigma$ is called a tree language over~$\Sigma$.  A
detailed treatment of trees and tree languages is presented
in~\cite{gecste84} (see also~\cite{eng75b,gecste97}).

Trees can be viewed as node-labeled graphs in a well-known way.  As
usual, we use \textsc{Dewey} notation to address the nodes of a tree;
these addresses will be called positions.  Formally, a \emph{position}
is an element of~$\nat^*$.  Thus, it is a sequence of positive
integers, which, intuitively, indicates successively in which subtree
the addressed node can be found.  More precisely, the root is at
position~$\varepsilon$, and the position~$pi$ with $p \in\nat^*$~and~$i \in \nat$
refers to the $i$"~th child of the node at position~$p$.  
The set~$\pos(t) \subseteq \nat^*$ of positions of a tree~$t
\in T_\Sigma$ with $t = \sigma(\seq t1k)$ is defined inductively by
$\pos(t) = \{\varepsilon\} \cup \{ ip \mid i \in [k],\, p \in
\pos(t_i)\}$.  The tree~$t$ associates a label to each of its
positions, so it induces a mapping~$t \colon \pos(t) \to \Sigma$ such
that $t(p)$~is the \emph{label} of~$t$ at position~$p$.  Formally,
if~$t = \sigma(\seq t1k)$, then~$t(\varepsilon) = \sigma$ and $t(ip) =
t_i(p)$.  
For nodes $p,p'\in\pos(t)$, we say as usual that 
$p'$ is an \emph{ancestor} of $p$ if $p'$ is a prefix of $p$; 
i.e., there exists $w\in\nat^*$ such that $p=p'w$. 
A \emph{leaf} of~$t$ is a position~$p \in \pos(t)$
with~$t(p) \in \Sigma^{(0)}$.  
The \emph{yield} of~$t$, denoted
by~$\yield(t)$, is the sequence of labels of its leaves, read from
left to right. However, as usual, we assume the existence of a special symbol $e$ of rank~0
that represents the empty string and is omitted from~$\yield(t)$. 
Formally $\yield(t) = \yield_{\Sigma^{(0)}\setminus\{e\}}(t)$,
where $\yield_B$ is defined in the paragraph on homomorphisms in Section~\ref{sub:seqs}. 

A \emph{forest} is a sequence of trees; i.e., an element of~$T_\Sigma^*$.
Note that every tree of~$T_\Sigma$ is a forest of length~$1$.  
A forest can be viewed as a node-labeled graph in a natural way, 
for instance by connecting the roots of its trees by ``invisible'' $\#$"~labeled directed edges,
in the given order. This leads to the following obvious extension of \textsc{Dewey} notation
to address the nodes of a forest. Formally, from now on, a \emph{position}
is an element of the set~$\{\#^n p\mid n\in\nat_0, \,p\in\nat^*\}\subseteq (\nat\cup\{\#\})^*$,
where $\#$ is a special symbol not in~$\nat$.
Intuitively, the root of the $j$"~th tree of a forest is at position $\#^{j-1}$ and,
as before, the position~$pi$ refers to the $i$"~th child of the node at position~$p$. 
For each forest~$t = (\seq 
t1m)$ with $m \in \nat_0$~and~$\seq t1m \in T_\Sigma$, the
set~$\pos(t)$ of positions of~$t$ is defined by~$\pos(t) =
\bigcup_{j = 1}^m \{ \#^{j-1}p \mid p \in \pos(t_j)\}$.  Moreover, for
every $j \in [m]$~and~$p \in \pos(t_j)$, we let $t(\#^{j-1}p) =
t_j(p)$ be the label of~$t$ at position~$\#^{j-1}p$.\footnote{These definitions 
are consistent with those given in the previous paragraph 
for trees, which are forests of length~1.   
}
Let~$\Omega \subseteq\Sigma$ be a selection of symbols.  For every~$t \in
T_\Sigma^*$, we let $\pos_\Omega(t) = \{ p \in
\pos(t) \mid t(p) \in \Omega\}$ be the set of all $\Omega$"~labeled
positions of~$t$.  For every $\sigma \in \Sigma$, we simply
write~$\pos_\sigma(t)$ instead of~$\pos_{\{\sigma\}}(t)$, and we say
that $\sigma$~\emph{occurs in}~$t$ if $\pos_\sigma(t) \neq \emptyset$.
The set of symbols in~$\Omega$ that occur in~$t$ is denoted
by~$\alp_\Omega(t)$; i.e., $\alp_\Omega(t) = \{t(p) \mid p \in
\pos_\Omega(t)\}$.\footnote{Note that 
$\alp(t)=\{\seq t1m\}$ by Section~\protect{\ref{sub:seqs}}. This will, however, never be used.
} 
The forest~$t$ is \emph{uniquely $\Omega$"~labeled} if no symbol
in~$\Omega$ occurs more than once in~$t$; i.e.,
$\abs{\pos_\omega(t)} \leq 1$ for every $\omega \in \Omega$.  It is
well known, and can easily be proved by induction on the structure
of~$t$, that $\abs{\pos(t)} + m \leq 2 \cdot
\abs{\pos_{\Sigma^{(0)}}(t)} + \abs{\pos_{\Sigma^{(1)}}(t)}$ for every
forest~$t \in T_\Sigma^*$ of length~$m$.

\paragraph{Regular tree grammars}
A \emph{regular tree grammar} (in short, RTG) over~$\Sigma$ is a
context-free grammar $G = (N, \Sigma, S, R)$ such that $N$~is a ranked
alphabet with~$\rk(A) = 0$ for every~$A \in N$, $\Sigma$~is a ranked alphabet, 
and $w$~is a tree in~$T_{N \cup \Sigma}$ for every rule $A
\to w$ in~$R$. Throughout this contribution we assume that $G$ is in normal form;
i.e., that all its rules are of the form $A\to \sigma(\seq A 1k)$ with $k\in\nat_0$, 
$A,\seq A 1k \in N$, and $\sigma\in\Sigma^{(k)}$.
The language~$L(G)$ generated by an RTG~$G$ is a
\emph{regular tree language}.  The class of all regular tree languages
is denoted by~$\text{RT}$.  We assume the reader to be familiar with
regular tree grammars~\cite[Section~6]{gecste97}, and also more or less familiar
with (linear, nondeleting) context-free tree 
grammars~\cite[Section~15]{gecste97}, 
which we formally define in Section~\ref{sec:mcftg}.

\subsection{Substitution}
\label{sub:sub}
\noindent
In this subsection we define and discuss first- and second-order
substitution of trees and forests.  To this end, we use a fixed
countably infinite alphabet~$X = \{x_1, x_2, \dotsc\}\cup\{\hole\}$ of \emph{variables},
which is disjoint to the ranked alphabet~$\Sigma$, and for every $k
\in \nat_0$ we let $X_k = \{ x_i \mid i \in [k]\}$ be the first
$k$~variables from~$X$.  Note that $X_0 = \emptyset$. 
The use of the special variable $\hole$ will be explained in Section~\ref{sec:lex} 
(before Lemma~\ref{lem:main}). 
For $Z\subseteq X$, the set~$T_\Sigma(Z)$ of trees over~$\Sigma$ with variables in~$Z$ is
defined by~$T_\Sigma(Z) = T_{\Sigma \cup Z}$, where every variable
$x \in Z$ has rank~$0$.  Thus, the variables can only occur at the leaves.  
We will be mainly interested in the substitution of patterns.  
For every $k \in \nat_0$, we define the set~$P_\Sigma(X_k)$ of $k$"~ary
\emph{patterns} to consist of all trees~$t \in T_\Sigma(X_k)$ such
that each variable of $X_k$ occurs exactly once in~$t$; 
i.e., $\abs{\pos_x(t)} = 1$ for every~$x \in X_k$.\footnote{Note that 
the variable $\hole$ does not occur in patterns.
} 
Consequently, $P_\Sigma(X_0) = T_\Sigma(X_0) = T_\Sigma$, and for all distinct $i,j \in
\nat_0$ the sets $P_\Sigma(X_i)$~and~$P_\Sigma(X_j)$ are disjoint.
This allows us to turn the set~$P_\Sigma(X)
= \bigcup_{k \in \nat_0} P_\Sigma(X_k)$ of all patterns into a ranked
set such that~$P_\Sigma(X)^{(k)} = P_\Sigma(X_k)$ for every $k \in 
\nat_0$; in other words, for every $t \in P_\Sigma(X)$ let $\rk(t)$~be
the unique integer~$k \in \nat_0$ such that $t \in P_\Sigma(X_k)$.\footnote{Since
$P_\Sigma(X)\subseteq (\Sigma\cup X)^*$ by definition, every pattern 
$t\in P_\Sigma(X)$ also has a multiple rank $\rk_{\Sigma\cup X}(t)\in\nat_0^*$. 
This will, however, never be used. We also observe that we will not consider 
trees over the ranked set $P_\Sigma(X)$.
}
Since `$\mathord{\rk}$' also denotes $\mathord{\rk}^*$ 
(see the first paragraph of Section~\ref{sub:trees}), `$\mathord{\rk}$'~is also a mapping
from~$P_\Sigma(X)^*$ to~$\nat^*_0$.
There is a natural rank-preserving \emph{injection} $\init \colon
\Sigma \to P_\Sigma(X)$ of the alphabet~$\Sigma$ into the set of
patterns, which is given by~$\init(\sigma) = \sigma(\seq x1k)$ for
every $k \in \nat_0$~and~$\sigma \in \Sigma^{(k)}$.  Note that
$\init(\sigma) = \sigma$ if $k = 0$.  
The mapping $\init^*$ from~$\Sigma^*$
to~$P_\Sigma(X)^*$, as defined in Section~\ref{sub:seqs}, 
will also be denoted by `$\init$'.
It is a rank-preserving injection that associates a sequence of patterns 
to every sequence of elements of~$\Sigma$. 

We start with first-order substitution, in which variables are replaced by
trees.  For a tree~$t \in T_\Sigma(X)$, a set~$Z \subseteq X$ of variables, 
and a mapping $f \colon Z \to T_\Sigma(X)$, the
\emph{first-order substitution}~$t[f]$, also written as $t[z \gets
f(z) \mid z \in Z]$, yields the tree in~$T_\Sigma(X)$ obtained by
replacing in~$t$ every occurrence of~$z$ by~$f(z)$ for every $z \in
Z$.  Formally, $t[f]$~is defined by induction on the structure of~$t$
as follows:
\[ t[f] =
\begin{cases}
  f(z) & \text{if $t = z$ with $z \in Z$} \\
  \sigma(t_1[f], \dotsc, t_k[f]) & \text{if $t = \sigma(\seq t1k)$
    with $\sigma \in \Sigma\cup X$, $\sigma \notin Z$.}
\end{cases} \]
We note that~$t[f] = h^*(t)$, where $h$~is the string homomorphism
over~$\Sigma \cup X$ such that $h(\alpha) = f(\alpha)$
if~$\alpha \in Z$ and $h(\alpha) = \alpha$ otherwise.

Whereas we replace $X$"~labeled nodes (which are leaves) in first-order substitution, 
in second-order substitution we replace $\Sigma$"~labeled nodes (which can also be
internal nodes); i.e., nodes with a label in~$\Sigma^{(k)}$ for some
$k \in \nat_0$.  Such a node is replaced by a $k$-ary pattern, in which the
variables~$\seq x1k$ are used as unique placeholders for the
$k$~children of the node.  In fact, second-order substitutions are
just tree homomorphisms.  Let $\Sigma$~and~$\Delta$ be ranked
alphabets.  A \emph{(simple) tree homomorphism} from~$\Sigma$
to~$\Delta$ is a rank-preserving mapping $h \colon \Sigma \to P_\Delta(X)$; i.e.,
$\rk(h(\sigma)) = \rk(\sigma)$ for every $\sigma \in
\Sigma$.\footnote{Since $h(\sigma)$~is a pattern for every~$\sigma \in
  \Sigma$, the tree homomorphism~$h$ is simple; i.e., linear and
  nondeleting.  This is the only type of tree homomorphism considered
  in this paper (except briefly in the last section).}  
It determines a mapping~$\hat{h} \colon T_\Sigma(X)
\to T_\Delta(X)$, and we will use~$\hat{h}$ also to denote the mapping 
$(\hat{h})^* \colon
T_\Sigma(X)^* \to T_\Delta(X)^*$ as defined in 
the paragraph on homomorphisms in Section~\ref{sub:seqs}. 
Roughly speaking, for a tree (or forest)~$t$, the tree (or
forest)~$\hat{h}(t)$ is obtained from~$t$ by replacing, for every $p
\in \pos_\sigma(t)$ with label~$\sigma \in \Sigma^{(k)}$, the subtree at
position~$p$ by the pattern~$h(\sigma)$, into which the $k$~subtrees
at positions $p1, \dotsc, pk$ are (first-order) substituted for the
variables $\seq x1k$, respectively.  Since $h(\sigma)$~is a pattern,
these subtrees can neither be copied nor deleted, but they can be
permuted.  Thus, the pattern~$h(\sigma)$ is ``folded'' into~$t$ at
position~$p$.  Formally, the mapping~$\hat{h}$, which we also call
tree homomorphism, is defined inductively as follows for $t\in T_\Sigma(X)$:
\[ \hat h(t) =
\begin{cases}
  x & \text{if $t = x$ with $x \in X$} \\
  h(\sigma)[x_i \gets \hat{h}(t_i) \mid 1 \leq i \leq k] 
  & \text{if $t = \sigma(\seq t1k)$ with $\sigma \in \Sigma$}.
\end{cases} \]
Clearly, $\hat{h}(t)$~only depends on the values of~$h$ for the symbols
occurring in~$t$; in other words, if $g$~is another tree 
homomorphism from~$\Sigma$ to~$\Delta$ such that $g(\sigma) =
h(\sigma)$ for every $\sigma \in \alp_\Sigma(t)$, then $\hat{g}(t) =
\hat{h}(t)$.  We additionally observe that $\hat{h}(t) =
\delta(\hat{h}(t_1), \dotsc, \hat{h}(t_k))$ if $t = \sigma(\seq t1k)$
and $h(\sigma) = \init(\delta)$ for some $\delta \in  \Delta$.  A tree
homomorphism~$h$ is a \emph{projection} if for every~$\sigma \in
\Sigma$ there exists~$\delta \in \Delta$ such that $h(\sigma) =
\init(\delta)$.  Thus, a projection is just a relabeling of the nodes
of the trees.  For a ranked alphabet~$\Sigma$, a tree homomorphism
\emph{over~$\Sigma$} is a tree homomorphism from~$\Sigma$ to
itself.

The following lemma states elementary properties of (simple) tree
homomorphisms.  They can easily be proved by induction on the structure
of trees in~$T_\Sigma(X)$ and then extended to forests
in~$T_\Sigma(X)^*$.

\begin{lemma}
  \label{lem:treehom}
  Let $h$~be a tree homomorphism from~$\Sigma$ to~$\Delta$, and let
  $t \in T_\Sigma(X)^*$ and $u = \hat{h}(t)$.
  \begin{compactenum}[\upshape (1)] 
  \item $\abs{\pos_x(u)} = \abs{\pos_x(t)}$ for every $x \in X$.
  \item $\abs{\pos_\delta(u)} = \sum_{\sigma \in \Sigma}
    \abs{\pos_\sigma(t)} \cdot \abs{\pos_\delta(h(\sigma))}$ for every
    $\delta \in \Delta$. 
  \end{compactenum}
\end{lemma}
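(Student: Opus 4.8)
The plan is to prove both statements by structural induction on the tree $t \in T_\Sigma(X)$ and then lift each to forests $t \in T_\Sigma(X)^*$ using the definition of $\hat h$ as $(\hat h)^*$. Since the paper already notes that $\hat h(wv) = \hat h(w)\cdot\hat h(v)$ (being a string homomorphism) and since $\abs{\pos_\alpha(\cdot)}$ is additive over concatenation of forests, the forest case follows immediately from the single-tree case by induction on the length of the forest. So I would concentrate entirely on the case $t \in T_\Sigma(X)$ and handle both claims in a single induction, since they share the same two base/step structure.

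First I would treat the base case $t = x$ with $x \in X$. Here $u = \hat h(x) = x$ by definition. For claim~(1), $\abs{\pos_{x'}(u)} = \abs{\pos_{x'}(t)}$ holds for every variable $x' \in X$ (both sides are $1$ if $x' = x$ and $0$ otherwise). For claim~(2), note that $t = x$ contains no $\Sigma$-labeled positions and $u = x$ contains no $\Delta$-labeled positions, so both $\abs{\pos_\delta(u)}$ and the sum $\sum_{\sigma}\abs{\pos_\sigma(t)}\cdot\abs{\pos_\delta(h(\sigma))}$ are~$0$.

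For the inductive step, let $t = \sigma(\seq t1k)$ with $\sigma \in \Sigma^{(k)}$, so that by definition $u = \hat h(t) = h(\sigma)[x_i \gets \hat h(t_i) \mid 1 \leq i \leq k]$. The key observation is that this is a \emph{first-order} substitution into the pattern $h(\sigma) \in P_\Delta(X_k)$, in which each variable $x_i$ occurs exactly once (since $h(\sigma)$ is a pattern). For claim~(1), counting occurrences of a variable $x' \in X$ in $u$: the pattern $h(\sigma)$ contributes no occurrences of $x'$ once substitution replaces its variables $x_1,\dots,x_k$, and since each $x_i$ occurs exactly once in $h(\sigma)$, we get $\abs{\pos_{x'}(u)} = \sum_{i=1}^k \abs{\pos_{x'}(\hat h(t_i))}$, which by the induction hypothesis equals $\sum_{i=1}^k \abs{\pos_{x'}(t_i)} = \abs{\pos_{x'}(t)}$ (the last equality because $t = \sigma(\seq t1k)$ has $\sigma \in \Sigma$, which is not a variable). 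For claim~(2), counting $\delta$-labeled positions: the substitution does not create or destroy $\Delta$-labels inside $h(\sigma)$ (the replaced $x_i$ are rank-$0$ variables, hence not in $\Delta$), so $\abs{\pos_\delta(u)} = \abs{\pos_\delta(h(\sigma))} + \sum_{i=1}^k \abs{\pos_\delta(\hat h(t_i))}$, where again each $x_i$ is used exactly once. Applying the induction hypothesis to each $\hat h(t_i)$ and then splitting off the contribution of the root symbol $\sigma$ of $t$ yields the claimed formula $\abs{\pos_\delta(u)} = \sum_{\sigma'\in\Sigma}\abs{\pos_{\sigma'}(t)}\cdot\abs{\pos_\delta(h(\sigma'))}$, where the $\abs{\pos_\delta(h(\sigma))}$ term accounts for the single root occurrence of $\sigma$ and the sum over subtrees accounts for the rest.

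The main obstacle, such as it is, is purely bookkeeping: one must carefully verify that first-order substitution of the rank-$0$ variables $x_1,\dots,x_k$ into the pattern $h(\sigma)$ neither adds nor removes $\Delta$-labeled positions from within $h(\sigma)$ itself and leaves the label counts of the substituted subtrees $\hat h(t_i)$ intact (using that each $x_i$ occurs \emph{exactly once}, so no copying or deletion occurs). A clean way to make this rigorous is to prove an auxiliary additivity lemma for first-order substitution into a pattern $p \in P_\Delta(X_k)$, namely $\abs{\pos_\delta(p[x_i \gets s_i])} = \abs{\pos_\delta(p)} + \sum_{i=1}^k \abs{\pos_\delta(s_i)}$ for $\delta \in \Delta$ (and the analogous statement for variable counts), which follows by a routine induction on $p$; the whole result then drops out by combining this with the induction hypothesis. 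Given the linear-and-nondeleting (simple) nature of $h$ that the paper has fixed, no genuine difficulty arises.
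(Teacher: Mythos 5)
Your proposal is correct and follows exactly the route the paper indicates for this lemma: induction on the structure of trees in $T_\Sigma(X)$ (using that $h(\sigma)$ is a pattern, so substitution neither copies nor deletes), followed by the extension to forests via additivity of position counts over concatenation. The paper gives only this one-line sketch, and your write-up, including the auxiliary additivity observation for first-order substitution into a pattern, is a faithful and correct elaboration of it.
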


By the first statement of this lemma, tree homomorphisms preserve
patterns and their ranks; i.e., $\hat{h}(t) \in P_\Delta(X_k)$ for
all~$t \in P_\Sigma(X_k)$.  Moreover, $\hat{h}(t) \in
P_\Delta(X)^*$ and $\rk(\hat{h}(t)) = \rk(t)$ for
all $t \in P_\Sigma(X)^*$.  

Next, we recall two
other easy properties of tree homomorphisms.  Namely, they distribute
over first-order substitution, and they are closed under composition
(see \protect{\cite[Corollary~8(5)]{bak79}}).

\begin{lemma}
  \label{lem:treehomsub}
  Let $h$~be a tree homomorphism from~$\Sigma$ to~$\Delta$, let $t
  \in T_\Sigma(X)$, and let $f \colon Z \to
  T_\Sigma(X)$ for some $Z \subseteq X$.  Then
  $\hat{h}(t[f]) = \hat{h}(t)[\hat{h} \circ f]$. 
\end{lemma}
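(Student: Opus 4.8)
The plan is to prove the identity by structural induction on the tree $t \in T_\Sigma(X)$, simply unfolding the inductive definitions of the first-order substitution $t[f]$ and of the tree homomorphism $\hat h$, and reducing the inductive step to an associativity property of first-order substitution. Note first that $\hat h \circ f$ maps $Z$ to $T_\Delta(X)$, so applying $[\hat{h}\circ f]$ to a tree in $T_\Delta(X)$ is a legitimate first-order substitution replacing the variables in $Z$. For the base case $t = x$ with $x \in X$ I would distinguish two subcases. If $x \in Z$, then $t[f] = f(x)$, so $\hat h(t[f]) = \hat h(f(x)) = (\hat h \circ f)(x)$; on the other hand $\hat h(t) = x$, and $x[\hat h \circ f] = (\hat h \circ f)(x)$ because $x \in Z$, so the two sides agree. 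If $x \notin Z$, then $t[f] = x$ and $\hat h(t) = x$, and applying $[\hat h \circ f]$ to the variable $x \notin Z$ leaves it unchanged, so again both sides equal $x$.

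For the inductive step $t = \sigma(\seq t1k)$ with $\sigma \in \Sigma$, I would compute the left-hand side as $\hat h(t[f]) = \hat h(\sigma(t_1[f], \dotsc, t_k[f])) = h(\sigma)[x_i \gets \hat h(t_i[f]) \mid 1 \le i \le k]$ and then apply the induction hypothesis $\hat h(t_i[f]) = \hat h(t_i)[\hat h \circ f]$ to each argument. Unfolding the right-hand side gives $\hat h(t)[\hat h \circ f] = (h(\sigma)[x_i \gets \hat h(t_i) \mid 1 \le i \le k])[\hat h \circ f]$. Thus the whole step reduces to the associativity claim $(s[g])[f'] = s[x_i \gets g(x_i)[f'] \mid 1 \le i \le k]$, instantiated with $s = h(\sigma)$, with $g(x_i) = \hat h(t_i)$, and with $f' = \hat h \circ f$. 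Chaining the three equalities (definition, induction hypothesis, associativity) then yields $\hat h(t[f]) = \hat h(t)[\hat h \circ f]$.

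This associativity claim is the crux, and it is the one place where care is genuinely needed. I would establish it by a separate short induction on $s$, but crucially only for trees $s$ whose variables all lie in~$X_k$ — which is exactly the situation at hand, since $h(\sigma) \in P_\Delta(X_k)$ is a pattern and hence contains precisely the variables $\seq x1k$. The restriction to patterns is essential: for a general $s \in T_\Delta(X)$ the identity would fail, because a variable $y \in Z \setminus X_k$ occurring in~$s$ would be rewritten by $f'$ on the left-hand side yet left untouched on the right-hand side. Confined to variables from $X_k$ this anomaly cannot arise, so the sub-induction is routine: the case $s = x_i$ is immediate, and the case $s = \delta(\seq s1m)$ with $\delta \in \Delta$ follows by pushing both first-order substitutions through $\delta$ and invoking the sub-induction hypothesis on each~$s_j$. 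I therefore expect the only real subtlety to be this variable bookkeeping in the associativity step; the remainder is a direct unfolding of the definitions, and the extension from trees to forests in $T_\Sigma(X)^*$ is then immediate by componentwise application.
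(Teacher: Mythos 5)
Your proof is correct and takes exactly the route the paper intends: the paper merely remarks that the lemma "can be proved by induction on the structure of~$t$", and your argument is that induction, with the one genuinely delicate point — the substitution-composition step, valid here because $h(\sigma)\in P_\Delta(X_k)$ contains no variables outside $X_k$ — identified and handled correctly. Nothing is missing.
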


\begin{lemma}
  \label{lem:treehomcomp}
  Let $h_1$~and~$h_2$ be tree homomorphisms from~$\Sigma$ to~$\Omega$
  and from~$\Omega$ to~$\Delta$, respectively, and let $h = \hat{h}_2 \circ
  h_1$, which is a tree homomorphism from~$\Sigma$ to~$\Delta$.
  Then~$\hat{h} = \hat{h}_2 \circ \hat{h}_1$.
\end{lemma}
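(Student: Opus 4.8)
The plan is to establish the pointwise identity $\hat{h}(t) = \hat{h}_2(\hat{h}_1(t))$ for every tree $t \in T_\Sigma(X)$ by induction on the structure of $t$, and then to lift it to forests $t \in T_\Sigma(X)^*$ componentwise. Before starting the induction I would first verify the parenthetical claim that $h = \hat{h}_2 \circ h_1$ really is a tree homomorphism from $\Sigma$ to $\Delta$: for each $\sigma \in \Sigma$ the pattern $h_1(\sigma) \in P_\Omega(X)$ has rank $\rk(\sigma)$, and applying~$\hat{h}_2$ to a pattern yields a pattern of the same rank by Lemma~\ref{lem:treehom}(1), so $h(\sigma) = \hat{h}_2(h_1(\sigma)) \in P_\Delta(X)$ with $\rk(h(\sigma)) = \rk(\sigma)$. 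Hence $h$ is rank-preserving, so~$\hat{h}$ is defined.

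For the base case $t = x$ with $x \in X$, all three tree homomorphisms fix variables, so $\hat{h}(x) = x = \hat{h}_2(x) = \hat{h}_2(\hat{h}_1(x))$. For the inductive step, let $t = \sigma(\seq t1k)$ with $\sigma \in \Sigma^{(k)}$, and assume $\hat{h}(t_i) = \hat{h}_2(\hat{h}_1(t_i))$ for all $i \in [k]$. Unfolding the definition of~$\hat{h}_1$ gives $\hat{h}_1(t) = h_1(\sigma)[x_i \gets \hat{h}_1(t_i) \mid i \in [k]]$. Applying~$\hat{h}_2$ and using Lemma~\ref{lem:treehomsub} (with the homomorphism~$h_2$, the tree $h_1(\sigma) \in T_\Omega(X)$, and the mapping $x_i \mapsto \hat{h}_1(t_i) \in T_\Omega(X)$) then yields
\[ \hat{h}_2(\hat{h}_1(t)) = \hat{h}_2(h_1(\sigma))[x_i \gets \hat{h}_2(\hat{h}_1(t_i)) \mid i \in [k]]. \]
Here $\hat{h}_2(h_1(\sigma)) = h(\sigma)$ by the definition of~$h$, and $\hat{h}_2(\hat{h}_1(t_i)) = \hat{h}(t_i)$ by the induction hypothesis, so the right-hand side equals $h(\sigma)[x_i \gets \hat{h}(t_i) \mid i \in [k]]$, which is precisely $\hat{h}(t)$ by the defining equation for~$\hat{h}$. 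This closes the induction and proves the identity for all trees.

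Finally, since $\hat{h}$ on forests denotes the string-homomorphism extension $(\hat{h})^*$, the identity lifts to every forest $t = (\seq t1m)$ by evaluating both sides componentwise: $\hat{h}(t) = (\hat{h}(t_1), \dotsc, \hat{h}(t_m))$ and likewise $\hat{h}_2(\hat{h}_1(t)) = (\hat{h}_2(\hat{h}_1(t_1)), \dotsc, \hat{h}_2(\hat{h}_1(t_m)))$, and the two agree entry by entry by the tree case. There is no genuine obstacle here; the one step carrying the content is the inductive case, where the key move is recognizing that the definition of~$\hat{h}_1$ on a composite tree is a first-order substitution into the pattern~$h_1(\sigma)$, so that Lemma~\ref{lem:treehomsub} (the distributivity of tree homomorphisms over first-order substitution) can be invoked to commute~$\hat{h}_2$ past that substitution.
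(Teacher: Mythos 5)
Your proof is correct and follows exactly the route the paper indicates: it establishes $\hat{h}(t) = \hat{h}_2(\hat{h}_1(t))$ by induction on the structure of $t$, invoking Lemma~\ref{lem:treehomsub} in the induction step to commute $\hat{h}_2$ past the first-order substitution, and then extends to forests componentwise. The added check that $h$ is rank-preserving (via the pattern-preservation consequence of Lemma~\ref{lem:treehom}(1)) is a sound and welcome detail that the paper leaves implicit.
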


These lemmas have straightforward proofs.
Lemma~\ref{lem:treehomsub} can be proved by induction on the
structure of~$t$, and then Lemma~\ref{lem:treehomcomp} 
can be proved by showing that $\hat{h}(t) =
\hat{h}_2(\hat{h}_1(t))$, again by induction on the structure of~$t$, 
using Lemma~\ref{lem:treehomsub} in the induction step.

In the remainder of this subsection we consider tree homomorphisms
over~$\Sigma$.  Let $t$~be a forest
in~$T_\Sigma(X)^*$ and let $\sigma = (\seq \sigma1n)
\in \Sigma^n$ with $n \in \nat_0$ be a repetition-free sequence of
symbols in~$\Sigma$.  Moreover, let $u = (\seq u1n)$ be a forest
in~$P_\Sigma(X)^n$ such that $\rk(u) = \rk(\sigma)$.\footnote{Recall
that this means that $u_i \in P_\Sigma(X_{\rk(\sigma_i)})$ for every $i\in[n]$.} 
The \emph{second-order substitution} $t[\sigma \gets u]$ yields the
forest $\hat{h}(t) \in T_\Sigma(X)^*$, where
$h$~is the tree homomorphism over~$\Sigma$ corresponding to~$[\sigma
\gets u]$, which is defined by $h(\sigma_i) = u_i$ for~$i \in [n]$ and
$h(\tau) = \init(\tau)$ for $\tau \in \Sigma \setminus
\{\seq\sigma1n\}$.  If~$t \in P_\Sigma(X)^*$,
then $t[\sigma \gets u] \in P_\Sigma(X)^*$ and
$\rk(t[\sigma \gets u]) = \rk(t)$ by Lemma~\ref{lem:treehom}(1).
Obviously, the order of the  symbols and trees in $\sigma$~and~$u$ is
irrelevant:  if $\sigma' = (\seq\sigma{i_1}{i_n})$ and $u' = (\seq
u{i_1}{i_n})$, where $(\seq i1n)$~is a permutation of~$(1,\dotsc,n)$, then
$t[\sigma' \gets u'] = t[\sigma \gets u]$.  Thus, the use of sequences 
is just a way of associating each symbol~$\sigma_i$ with its replacing
tree~$u_i$.  Clearly, $t[\sigma \gets u] = t$ if no symbol of~$\sigma$
occurs in~$t$; i.e., if $\alp_\Sigma(t) \cap \alp(\sigma) =
\emptyset$.  We also note that $t[\sigma \gets \init(\sigma)] = t$ and 
$\init(\sigma)[\sigma \gets u] = u$.  Finally $t[\sigma \gets u] =
t_1[\sigma \gets u] \cdot t_2[\sigma \gets u]$ if $t = t_1t_2$ for
forests $t_1$~and~$t_2$.

In the next lemma, we state some additional elementary properties of
second-order substitution.

\begin{lemma}
  \label{lem:comm-assoc}
  Let $t \in T_\Sigma(X)^*$ be a forest and
  $\sigma_1, \sigma_2 \in \Sigma^*$ be
  repetition-free sequences of symbols.  Moreover, let $u_1, u_2 \in
  P_\Sigma(X)^*$ be forests of patterns such that
  $\rk(u_1) = \rk(\sigma_1)$ and $\rk(u_2) = \rk(\sigma_2)$.
  \begin{compactenum}[\upshape (1)]
  \item If $\alp(\sigma_1) \cap \alp(\sigma_2) = \emptyset$
    (i.e., $\sigma_1\sigma_2$~is repetition-free), then
\begin{quote} $t[\sigma_1 \gets u_1][\sigma_2 \gets u_2] =
    t[\sigma_1\sigma_2 \gets u_1[\sigma_2 \gets u_2] \cdot u_2]$.
\end{quote} 
  \item If $\alp(\sigma_1) \cap
    \alp(\sigma_2) = \emptyset$ and $\alp_\Sigma(u_1) \cap
    \alp(\sigma_2) = \emptyset$, then
\begin{quote} $t[\sigma_1 \gets u_1][\sigma_2 \gets u_2] =
    t[\sigma_1\sigma_2 \gets u_1u_2]$.
\end{quote}
  \item If
    $\alp(\sigma_1) \cap \alp(\sigma_2) = \emptyset$ and
    $\alp_\Sigma(u_2) \cap \alp(\sigma_1) = \emptyset$, then
\begin{quote} $t[\sigma_1 \gets u_1][\sigma_2 \gets u_2] = t[\sigma_2 \gets
    u_2][\sigma_1 \gets u_1[\sigma_2 \gets u_2]]$. 
\end{quote}
  \item If $\alp_\Sigma(t) \cap \alp(\sigma_2)
    \subseteq \alp(\sigma_1)$, then
\begin{quote} $t[\sigma_1 \gets u_1][\sigma_2 \gets u_2] = t[\sigma_1 \gets
    u_1[\sigma_2 \gets u_2]]$. 
\end{quote}
  \end{compactenum} 
\end{lemma}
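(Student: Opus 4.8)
The plan is to exploit that each second-order substitution $t[\sigma\gets u]$ is, by definition, the tree homomorphism $\hat h$ applied to $t$, where $h$ is the homomorphism over $\Sigma$ with $h(\sigma_i)=u_i$ for the entries of $\sigma$ and $h(\tau)=\init(\tau)$ otherwise. The whole argument then rests on Lemma~\ref{lem:treehomcomp} (closure of tree homomorphisms under composition) together with two elementary observations that are immediate from the inductive definition of $\hat h$: first, $\hat h(\init(\tau))=h(\tau)$ for every $\tau\in\Sigma$, since $\init(\tau)=\tau(\seq x1k)$ and each variable is mapped to itself; and second, as already noted after the definition of $\hat h$, the forest $\hat h(t)$ depends only on the values of $h$ on the symbols in $\alp_\Sigma(t)$.

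I would prove statement~(1) first, as the master statement from which (2) and (3) follow. Let $h_1$ and $h_2$ be the homomorphisms corresponding to $[\sigma_1\gets u_1]$ and $[\sigma_2\gets u_2]$. By Lemma~\ref{lem:treehomcomp}, $t[\sigma_1\gets u_1][\sigma_2\gets u_2]=\hat h_2(\hat h_1(t))=\hat h(t)$ for the composite homomorphism $h=\hat h_2\circ h_1$, so it suffices to compute $h$ on each symbol and recognize it. On an entry $\sigma_{1,i}$ of $\sigma_1$ we get $h(\sigma_{1,i})=\hat h_2(u_{1,i})=u_{1,i}[\sigma_2\gets u_2]$; on an entry $\sigma_{2,j}$ of $\sigma_2$ (distinct from every $\sigma_{1,i}$ because $\sigma_1\sigma_2$ is repetition-free) we get $h(\sigma_{2,j})=\hat h_2(\init(\sigma_{2,j}))=h_2(\sigma_{2,j})=u_{2,j}$; and on every other symbol $h$ acts as $\init$. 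This is precisely the homomorphism of $[\sigma_1\sigma_2\gets u_1[\sigma_2\gets u_2]\cdot u_2]$, which is well defined since $\sigma_1\sigma_2$ is repetition-free and second-order substitution preserves ranks; hence the two sides of~(1) coincide.

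Statements~(2) and~(3) then drop out of~(1). For~(2), the extra hypothesis $\alp_\Sigma(u_1)\cap\alp(\sigma_2)=\emptyset$ yields $u_1[\sigma_2\gets u_2]=u_1$, turning the right-hand side of~(1) into $t[\sigma_1\sigma_2\gets u_1u_2]$. For~(3), I would apply~(1) with the two substitutions interchanged to rewrite $t[\sigma_2\gets u_2][\sigma_1\gets u_1']$, where $u_1'=u_1[\sigma_2\gets u_2]$, as $t[\sigma_2\sigma_1\gets u_2[\sigma_1\gets u_1']\cdot u_1']$; the hypothesis $\alp_\Sigma(u_2)\cap\alp(\sigma_1)=\emptyset$ gives $u_2[\sigma_1\gets u_1']=u_2$, and the irrelevance of the order of symbols and trees in a substitution (noted just before the lemma) lets me reorder $\sigma_2\sigma_1$ back to $\sigma_1\sigma_2$, which by~(1) again equals $t[\sigma_1\gets u_1][\sigma_2\gets u_2]$.

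Statement~(4) needs a different treatment, because there $\sigma_1$ and $\sigma_2$ may share symbols, so $\sigma_1\sigma_2$ need not be repetition-free and~(1) does not apply. Instead I would argue directly with the second observation above: keeping $h=\hat h_2\circ h_1$ and letting $h'$ be the homomorphism of $[\sigma_1\gets u_1[\sigma_2\gets u_2]]$, it is enough to check that $h$ and $h'$ agree on every $\tau\in\alp_\Sigma(t)$. On $\tau=\sigma_{1,i}$ both give $u_{1,i}[\sigma_2\gets u_2]$, while for $\tau\notin\alp(\sigma_1)$ the hypothesis $\alp_\Sigma(t)\cap\alp(\sigma_2)\subseteq\alp(\sigma_1)$ forces $\tau\notin\alp(\sigma_2)$, so both leave $\tau$ as $\init(\tau)$; hence $\hat h(t)=\hat h'(t)$. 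The main obstacle is purely the bookkeeping of the composite homomorphism $\hat h_2\circ h_1$ on the three kinds of symbols, together with the need to recognize that~(4) must be handled through the restriction to $\alp_\Sigma(t)$ rather than through~(1), precisely because its weaker hypothesis allows $\sigma_1$ and $\sigma_2$ to overlap.
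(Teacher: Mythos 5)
Your proposal is correct and follows essentially the same route as the paper's proof: statement (1) is established via Lemma~\ref{lem:treehomcomp} by identifying $\hat h_2 \circ h_1$ with the homomorphism of the combined substitution, (2) and (3) are then reduced to (1) using the non-occurrence fact and reordering, and (4) is handled separately by checking agreement of the two composite homomorphisms on $\alp_\Sigma(t)$ only, exactly as the paper does. The only cosmetic difference is in (3), where you apply (1) a second time with the roles of the substitutions interchanged, whereas the paper invokes (2) in reversed roles — both rest on the same two facts, so the arguments coincide.
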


\begin{proof} 
  Let $h_1$~and~$h_2$ be the tree homomorphisms over $\Sigma$ that correspond to
  $[\sigma_1 \gets u_1]$~and~$[\sigma_2 \gets u_2]$, as defined above.
  Moreover, let $h$~be the tree homomorphism that corresponds to
  $[\sigma_1\sigma_2 \gets u_1[\sigma_2 \gets u_2] \cdot u_2]$.
  Provided that $\sigma_1\sigma_2$ is repetition-free, it is easy to
  check that $h = \hat{h}_2 \circ h_1$, and hence $\hat{h} = \hat{h}_2
  \circ \hat{h}_1$ by Lemma~\ref{lem:treehomcomp}.  This shows the
  first equality.  If additionally no symbol of~$\sigma_2$ occurs
  in~$u_1$, then $u_1[\sigma_2 \gets u_2] = u_1$, which shows the
  second equality.  The third equality is a direct consequence of the
  first two because $t[\sigma_1\sigma_2 \gets u_1[\sigma_2 \gets u_2]
  \cdot u_2] = t[\sigma_2\sigma_1 \gets u_2 \cdot u_1[\sigma_2 \gets
  u_2]]$.  To prove the fourth equality, let $g$~be the tree
  homomorphism that corresponds to~$[\sigma_1 \gets u_1[\sigma_2 \gets
  u_2]]$.  By Lemma~\ref{lem:treehomcomp}, it now suffices to show
  that $\hat{h}_2(h_1(\sigma)) = g(\sigma)$ for every $\sigma \in
  \alp_\Sigma(t)$.  This is obvious for $\sigma \in \alp(\sigma_1)$.
  If $\sigma \in \alp_\Sigma(t) \setminus \alp(\sigma_1)$ then, by
  assumption, $\sigma \notin \alp(\sigma_2)$, and so both sides of the
  equation are equal to~$\init(\sigma)$.
\end{proof}

In particular, Lemma~\ref{lem:comm-assoc}(3) 
implies that $t[\sigma_1 \gets
u_1][\sigma_2 \gets u_2] = t[\sigma_2 \gets u_2][\sigma_1 \gets u_1]$
provided that $\alp(\sigma_1) \cap \alp(\sigma_2) = \emptyset$,
$\alp_\Sigma(u_2) \cap \alp(\sigma_1) = \emptyset$, and
$\alp_\Sigma(u_1) \cap \alp(\sigma_2) = \emptyset$.  This is called
the confluence or commutativity of substitution in~\cite{cou87}.
Similarly, Lemma~\ref{lem:comm-assoc}(4) 
is called the associativity of substitution 
in~\cite{cou87}.  As shown in the proof above, 
these two properties of substitution are
essentially special cases of the composition of tree homomorphisms
as characterized in Lemma~\ref{lem:treehomcomp}.

Above, we have defined the substitution of a forest (of patterns) for a
repetition-free sequence over~$\Sigma$. In the next section we also
need to simultaneously substitute several forests for several such
sequences.  That leads to the following formal definitions, which may
now seem rather superfluous.  Let $\LL = \{\seq \sigma1k\}$ be a
finite subset of~$\Sigma^*$ such that $\word
\sigma1k$ is repetition-free, where $\word \sigma1k = \varepsilon$
if~$k = 0$.  A (second-order) \emph{substitution function} for~$\LL$
is a mapping~$f \colon \LL \to P_\Sigma(X)^*$
such that $\rk(f(\sigma)) = \rk(\sigma)$ for every $\sigma \in \LL$.
For a forest~$t \in P_\Sigma(X)^*$, the
\emph{simultaneous second-order substitution}~$t[f]$, also written as 
$t[\sigma \gets f(\sigma) \mid \sigma \in \LL]$, yields
$t[f] = t[\word \sigma1k \gets f(\sigma_1) \cdots f(\sigma_k)]$.
Clearly, $t[f]$~does not depend on the given order of the elements
in~$\LL$.  In the special case $\LL \subseteq \Sigma$ we obtain a
notion of second-order substitution that does not involve sequences,
with $f \colon \LL \to P_\Sigma(X)$.  In that case we have $t[f] =
t[(\seq \sigma1k) \gets (f(\sigma_1), \dotsc, f(\sigma_k))]$.

\section{Multiple context-free tree grammars}
\label{sec:mcftg}
\noindent
In this section we introduce the main formalism discussed in this
contribution: the multiple context-free tree grammars. 
In the first subsection we define their syntax and least fixed point semantics and
in the second and third subsection we discuss two alternative semantics, 
namely their derivation trees and their derivations, respectively. 
In the second subsection we also define the notion of \LDTR"~equivalence of 
multiple context-free tree grammars, which formalizes grammatical similarity. 

\subsection{Syntax and least fixed point semantics}
\label{sub:defmcftg}
\noindent
We start with the syntax of multiple context-free tree grammars, which we
explain after the formal definition.  The definition of their
semantics follows after that explanation. Then we give two examples.

\begin{definition}
  \label{def:mcftg}
  \upshape
  A \emph{multiple context-free tree grammar} (in short, MCFTG) is a
  system $G = (N, \N, \Sigma, S, R)$ such that 
  \begin{compactitem}
  \item $N$~is a finite ranked alphabet of \emph{nonterminals},
  \item $\N \subseteq N^{\scriptscriptstyle +}$~is a finite set of
    \emph{big nonterminals}, which are nonempty repetition-free sequences 
    of nonterminals, such that $\alp(A)\neq \alp(A')$ for all distinct $A,A'\in\N$,
  \item $\Sigma$~is a finite ranked alphabet of \emph{terminals} such
    that $\Sigma \cap N = \emptyset$ and $\mr_\Sigma \geq 1$,\footnote{To avoid trivialities, 
  we do not consider the case where all symbols of $\Sigma$ have rank~0.} 
  \item $S \in \N \cap N^{(0)}$~is the \emph{initial (big) nonterminal} (of length~1 and rank~$0$), and
  \item $R$~is a finite set of \emph{rules} of the form~$A \to (u,
    \LL)$, where $A \in \N$ is a big nonterminal, $u \in P_{N \cup
      \Sigma}(X)^{\scriptscriptstyle +}$ is a uniquely $N$"~labeled
    forest (of patterns) such that $\rk(u) = \rk(A)$, and $\LL
    \subseteq \N$ is a set of big nonterminals such that~$\{\alp(B) \mid B \in \LL\}$ 
    is a partition of~$\alp_N(u)$.\footnote{Thus, $\alp_N(u) = \bigcup_{B
        \in \LL} \alp(B)$ and $\alp(B) \cap \alp(B') = \emptyset$ for
      all distinct $B, B' \in \LL$.} \fin
  \end{compactitem}
\end{definition}

For a given rule~$\rho = A \to (u, \LL)$, the big nonterminal~$A$,
denoted by~$\lhs(\rho)$, is called the left-hand side of~$\rho$, the
forest~$u$, denoted by~$\rhs(\rho)$, is called the right-hand side
of~$\rho$, and the big nonterminals of~$\LL$, denoted by~$\LL(\rho)$,
are called the \emph{links} of $\rho$.  

The \emph{multiplicity} (or \emph{fan-out}) of the MCFTG~$G$,
which is denoted by~$\mu(G)$, is the maximal length of its big
nonterminals. The \emph{width} of~$G$, which is denoted
by~$\wid(G)$, is the maximal rank of its nonterminals.
And the \emph{rule-width} (or \emph{rank}) of $G$, which is denoted by $\lambda(G)$, is 
the maximal number of links of its rules. Thus 
$\mu(G)=\max\{\abs{A}\mid A\in\N\}$, $\,\wid(G)=\mr_N=\max\{\rk(A)\mid A\in N\}$, 
and $\lambda(G)=\max\{\abs{\LL(\rho)}\mid \rho\in R\}$.

Next, we define two syntactic restrictions.  An MCFTG~$G$ is a \emph{multiple
  regular tree grammar} (in short, MRTG) if~$\wid(G) = 0$, and it is
a \emph{(simple) context-free tree grammar} (in short, spCFTG)
if~$\mu(G) = 1$; i.e.,~$\N \subseteq N$.  In an~MRTG all nonterminals
thus have rank~$0$, and in an~spCFTG all big nonterminals are
nonterminals since their length is exactly~$1$.  Consequently, in
an~spCFTG we may simply assume that~$\N = N$, and thus there is no
need to specify~$\N$ for it.  In the literature, a rule~$A \to (u, \LL)$ of
an~spCFTG is usually written as~$\init(A) \to u$, in which~$\init(A) =
A(\seq x1{\rk(A)})$ and~$\LL$ can be omitted because it must be equal
to~$\alp_N(u)$.  Since the right-hand side~$u$ of this rule is a
pattern, our context-free tree grammars are simple; i.e., linear
and nondeleting.

Let us discuss the requirements on the components of~$G$ in more
detail.  Each big nonterminal is a nonempty repetition-free
sequence~$A = (\seq A1n)$ of nonterminals from~$N$.
Repetition-freeness of~$A$ requires that all these nonterminals~$A_i$
are distinct (cf.\@ Section~\ref{sub:seqs}).  
The requirement that `$\alp$' is injective on $\N$ 
(i.e., that $\alp(A)\neq \alp(A')$ for all distinct $A,A'\in\N$)
means that $\N$ can be viewed as consisting of sets of nonterminals,
where each set is equipped with a fixed linear order
(viz.\@ the set $\alp(A)=\{\seq A1n\}$ with the order $\sqsubseteq$ 
such that $A_1 \sqsubset \cdots \sqsubset A_n$). 
Moreover, since the alphabet $N$ is ranked, every big
nonterminal~$A$  has a (multiple) rank $\rk(A) = (\rk(A_1), \dotsc, 
\rk(A_n)) \in \nat_0^n$ (cf.\@ Section~\ref{sub:trees}), and
similarly, every forest~$u = (\seq u1n)$ with $\seq u1n \in P_{N \cup
  \Sigma}(X)$ has a (multiple) rank $\rk(u) = (\rk(u_1), \dotsc,
\rk(u_n)) \in \nat_0^n$ (cf.\@ Section~\ref{sub:sub}).  Thus, a rule
$A \to (u, \LL)$ of~$G$ is of the form~$(\seq A1n) \to ((\seq u1n),
\LL)$ where $n\in\nat_0$, $A_i \in N$ and $u_i \in P_{N \cup
  \Sigma}(X_{\rk(A_i)})$ for every~$i \in [n]$, and $\LL \subseteq \N$.
The use of sequences is irrelevant; it is just a way of associating   
each $A_i\in\alp(A)$ with the corresponding pattern $u_i$, thus 
facilitating the formal description of the syntax and semantics of~$G$.
Additionally, in the above rule, $u$~is uniquely $N$"~labeled, which means that also
in~$u$ no nonterminal occurs more than once (cf.\@
Section~\ref{sub:trees}).  This requirement, which is not essential
but technically convenient, is similar to the restriction discussed
for context-free grammars at the end of Section~\ref{sub:seqs}.
Moreover, the set~$\{\alp(B) \mid B\in \LL\}$ forms a partition
of~$\alp_N(u)$.  Since each big nonterminal~$B$ is repetition-free,
`$\alp$' is injective on $\N$, and
$u$~is uniquely $N$"~labeled, we obtain that each big nonterminal
from~$\LL$ occurs ``spread-out'' exactly once in~$u$ and no other
nonterminals occur in~$u$.  More precisely, for each big nonterminal
$B = (\seq C1m) \in \LL$ with~$\seq C1m \in N$, there is a unique
repetition-free sequence $p_B = (\seq p1m) \in \pos_N(u)^m$ of 
positions such that $(u(p_1), \dotsc, u(p_m)) = (\seq C1m)$,
and we have that $\alp(p_B)\cap \alp(p_{B'})=\emptyset$ for every other $B'\in\LL$
and $\pos_N(u) = \bigcup_{B\in \LL} \alp(p_B)$.
Note that if $\LL = \{\seq B1k\}$
with $\seq B1k \in \N$, then the concatenation $\word B1k \in
N^*$ of the elements of~$\LL$ is repetition-free
and $\alp(\word B1k) = \alp_N(u)$. 

\begin{figure}
  \centering
  \includegraphics{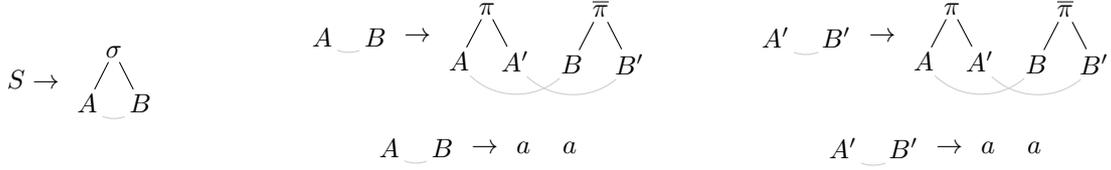}
  \caption{Rules of the MRTG~$G$ of
    Example~\protect{\ref{exa:copy}}.}
  \label{fig:links}
\end{figure}

Intuitively, the application of the above rule~$\rho = A \to (u, \LL)$
consists of the simultaneous application of the $n$~spCFTG rules
$A_i(\seq x1{\rk(A_i)}) \to u_i$ to an occurrence of the
``spread-out'' big nonterminal~$A = (\seq A1n)$ and the
introduction of (occurrences of) the new ``spread-out'' big
nonterminals from~$\LL$.  Every big nonterminal~$B = (\seq C1m) \in
\LL$, as above, can be viewed as a link between the positions~$\seq
p1m$ of~$u$ with labels~$\seq C1m$ as well as a link between the
corresponding positions after the application of~$\rho$~(see
Figure~\ref{fig:links}).  The rule~$\rho$ can only be applied to
positions with labels~$\seq A1n$ that are joined by such 
a link.  Thus, rule applications are~``local'' in the sense that a
rule can rewrite only nonterminals that were previously introduced
together in a single step of the derivation, just as for the local
unordered scattered context grammar of~\cite{ramsat99}, which is
equivalent to the multiple context-free (string) grammar.  However,
since it is technically a bit problematic to define such derivation
steps between trees in~$T_{N \cup \Sigma}$ that are not necessarily
uniquely $N$"~labeled (because it additionally requires to keep track
of each link as a sequence of positions rather than as a big
nonterminal), we prefer to define the language generated by the
MCFTG~$G$ through a least fixed point semantics similar to that of
multiple context-free (string) grammars in~\cite{sekmatfujkas91}.  As
will be discussed in Section~\ref{sub:dtrees}, this is closely related
to a semantics in terms of derivation trees, similar to that of
(string-based) linear context-free rewriting systems
in~\cite{shaweijos87}.  The derivations of an MCFTG will be considered
in Section~\ref{sub:deriv}.

In an spCFTG, a nonterminal~$A$ of rank~$k$ can be viewed as a
generator of trees in~$P_\Sigma(X_k)$ using derivations that start
with~$A(\seq x1k)$.  In the same fashion, a big nonterminal~$A$ of an
MCFTG generates nonempty forests in~$P_\Sigma(X)^*$ of the
same rank as~$A$, as defined next.  Let $G = (N, \N, \Sigma, S, R)$ be
an MCFTG.  For every big nonterminal~$A \in \N$ we define the
\emph{forest language generated by~$A$}, denoted by~$L(G, A)$, as
follows.  For all big nonterminals~$A \in \N$ simultaneously,
$L(G, A) \subseteq P_\Sigma(X)^*$ is the smallest
set of forests such that for every rule~$A \to (u, \LL) \in R$, if
$f \colon \LL \to P_\Sigma(X)^*$ is a
substitution function for~$\LL$ such that $f(B) \in L(G, B)$ for
every~$B \in \LL$, then $u[f] \in L(G, A)$.  Note that $u[f]$~is a
simultaneous second-order substitution as defined at the end of
Section~\ref{sub:sub}.  The fact that $f$~is a substitution function
for~$\LL$ means that $\rk(f(B)) = \rk(B)$ for every~$B \in \LL$, which
implies that $\rk(t) = \rk(A)$ for every~$t \in L(G, A)$; in particular, 
$t$ is a nonempty forest of the same length as $A$.  The
\emph{tree language~$L(G)$ generated by~$G$} is defined
by~$L(G) = L(G, S) \subseteq T_\Sigma$.  Two MCFTGs $G_1$~and~$G_2$
are \emph{equivalent} if $L(G_1) = L(G_2)$.\footnote{When viewing
  $G_1$~and~$G_2$ as specifications of the string languages
  $\yield(L(G_1))$~and~$\yield(L(G_2))$, they are \emph{strongly
    equivalent} if~$L(G_1) = L(G_2)$ and \emph{weakly equivalent} if
  $\yield(L(G_1)) = \yield(L(G_2))$.}  A tree language is
\emph{multiple context-free} (\emph{multiple regular}, (simple) \emph{context-free})
if it is generated by an~MCFTG (MRTG, spCFTG).  The corresponding
class of generated tree languages is denoted by~$\text{MCFT}$
($\text{MRT}$, $\text{CFT}_\text{sp}$).

As observed above, each big nonterminal can be viewed as a nonempty subset of~$N$,  
together with a fixed linear order on its elements.  It is easy to see that 
the tree language $L(G)$ generated by~$G$ does
not depend on that order.  For a given big nonterminal~$A = (\seq
A1n)$ and a given permutation~$A' = (\seq A{i_1}{i_n})$ of~$A$, we can
change every rule $A \to ((\seq u1n), \LL)$ into the rule $A' \to ((\seq
u{i_1}{i_n}), (\LL \setminus \{A\}) \cup \{A'\})$, provided that we also
change~$\LL(\rho)$ into~$(\LL(\rho) \setminus \{A\}) \cup \{A'\}$ for
every other rule~$\rho \in R$.

The restriction that the right-hand side of a rule of~$G$ must be
uniquely $N$"~labeled can be compensated for by the appropriate use of
aliases.  Two big nonterminals~$A, A' \in \N$ are said to be
\emph{aliases} if $\{(u, \LL) \mid A \to (u, \LL) \in R\} = \{(u, \LL)
\mid A' \to (u, \LL) \in R\}$.  It is not difficult to see that~$L(G,
A) = L(G, A')$ for aliases $A$~and~$A'$.  Of course, in examples, we
need not specify the rules of an alias (but we often will).
Additionally, to improve the readability of examples, we will write a rule~$A \to
(u, \LL)$ as~$\init(A) \to u$ and specify~$\LL$ separately.  Recall
from Section~\ref{sub:sub} that if~$A = (\seq A1n)$ and~$\rk(A_i) =
k_i$ for every~$i \in [n]$, then
\[ \init(A) = (A_1(\seq x1{k_1}), \dotsc, A_n(\seq x1{k_n}))
  \enspace. \]
If all the big nonterminals of $G$ are mutually disjoint, in the sense that
they have no nonterminals in common (i.e., $\alp(B) \cap \alp(B') =
\emptyset$ for all distinct~$B, B' \in \N$), then it is not even
necessary to specify~$\LL$ because it clearly is equal to $\{B \in \N
\mid \alp(B) \subseteq \alp_N(u)\}$.  

\begin{example}
  \label{exa:copy}
  \upshape
  We first consider the MRTG $G = (N, \N, \Sigma, S, R)$ such that
  (i)~$N = \{S, A, B, A', B'\}$, (ii)~$\N = \{S, (A, B), (A', B')\}$,
  and (iii)~$\Sigma = \{\sigma^{(2)}, \pi^{(2)}, \bar{\pi}^{(2)},
  a^{(0)}\}$.  Thus, $\mu(G) = 2$. And $\wid(G) = 0$ because $G$~is a
  multiple regular tree grammar.  The big nonterminal~$(A', B')$ is an
  alias of~$(A, B)$.  The set~$R$ contains the rules (illustrated in
  Figure~\ref{fig:links})
  \begin{alignat*}{5}
    S &\to \sigma(A, B) \qquad &
    (A, B) &\to (\pi(A, A'), \bar{\pi}(B, B')) \qquad
    & (A', B') &\to (\pi(A, A'), \bar{\pi}(B, B')) \\
    && (A,B) &\to (a,a) 
    & (A', B') &\to (a,a) \enspace.
  \end{alignat*}
  Since the big nonterminals in~$\N$ are mutually disjoint, the
  set~$\LL$ of links of each rule is uniquely determined.  In fact,
  $\LL = \{(A, B)\}$~for the leftmost rule in the first line, $\LL =
  \{(A, B), (A', B')\}$~for the two remaining rules in the first line,
  and $\LL = \emptyset$~for the two rules in the second line.  The tree
  language~$L(G)$ generated by~$G$ consists of all trees~$\sigma(t,
  \bar{t}\,)$, where $t$~is a tree over~$\{\pi, a\}$ and $\bar{t}$~is
  the same tree with every~$\pi$ replaced by~$\bar{\pi}$.  For readers
  familiar with the multiple context-free grammars
  of~\cite{sekmatfujkas91} we note that this tree language can be
  generated by such a grammar with nonterminals $S$~and~$C$, where 
  $C$~corresponds to our big nonterminal~$(A,B)$ and its alias, using
  the three rules
  \begin{compactitem}
  \item $S \to f[C]$ with $f(x_{11}, x_{12}) = \sigma x_{11} x_{12}$,
  \item $C \to g[C, C]$ with $g(x_{11}, x_{12}, x_{21}, x_{22}) = (\pi
    x_{11} x_{21},\, \bar{\pi} x_{12} x_{22})$, and 
  \item $C \to (a, a)$.
  \end{compactitem}
  Note that the variables~$x_{11}$, $x_{12}$, $x_{21}$, and~$x_{22}$
  of~\cite{sekmatfujkas91} correspond to our nonterminals~$A$, $B$,
  $A'$, and~$B'$, respectively.  In fact, every tree language
  in~$\text{MRT}$ can be generated by a multiple context-free grammar,
  just as every regular tree language can be generated by a
  context-free grammar (see Section~\ref{sub:trees}).  We will prove
  in Section~\ref{sec:mcfg} (Theorem~\ref{thm:mcft-to-mcf}) that this
  even holds for MCFT.  \fin
\end{example}

\begin{example}
  \label{exa:main}
  \upshape
  As a second example we consider the MCFTG~$G = (N, \N, \Sigma, S,
  R)$ such that
  \begin{compactitem}
  \item $N = \{S^{(0)}, A^{(0)}, B^{(1)}, B'^{(1)}, T_1^{(1)},
    T_2^{(0)}, T_3^{(0)}\}$ and $\N = \{S, A, B, B', (T_1, T_2,
    T_3)\}$, and
  \item $\Sigma = \{\sigma^{(2)}, \alpha^{(1)}, \beta^{(1)},
    \gamma^{(1)}, \tau^{(0)}, \nu^{(0)}\}$.
  \end{compactitem}
  Consequently, $\mu(G) = 3$~and~$\wid(G) = 1$.  The (big)
  nonterminal~$B'$ is an alias of~$B$.  The set~$R$ consists of the
  following rules~$\seq \rho16$ and the two rules
  $\rho'_3$~and~$\rho'_4$ with left-hand side~$B'$ (illustrated in
  Figure~\ref{fig:links2}).
  \begin{alignat*}{7}
  \rho_1 \colon&& S &\to \alpha(A)
  &\rho_2 \colon&& A &\to T_1(\sigma(B(T_2), T_3)) \\ 
  \rho_3 \colon&& B(x_1) &\to \sigma(B(x_1), B'(A))
  &\rho'_3 \colon&& B'(x_1) &\to \sigma(B(x_1), B'(A)) \\
  \rho_4 \colon&& B(x_1) &\to x_1  
  &\rho'_4 \colon&& B'(x_1) &\to x_1 \\
  \rho_5 \colon&& (T_1(x_1), T_2, T_3) &\to
  (\alpha(T_1(\beta(x_1))),\, \alpha(T_2),\, \gamma(T_3)) \qquad
  &\rho_6 \colon&& (T_1(x_1), T_2, T_3) &\to (x_1, \tau, \nu)
  \enspace.
  \end{alignat*}
  Since, again, all big nonterminals in~$\N$ are mutually disjoint,
  the sets of links of these rules are uniquely determined.  They are,
  in fact, as follows: 
  \begin{alignat*}{5}
    \LL(\rho_1) &= \{A\} \qquad
    &\LL(\rho_3) &= \LL(\rho'_3) = \{B, B', A\} \qquad
    & \LL(\rho_2) &= \{B, (T_1, T_2, T_3)\} \\
    &&\LL(\rho_4) &= \LL(\rho'_4) = \LL(\rho_6) = \emptyset
    & \LL(\rho_5) &= \{(T_1, T_2, T_3)\} \enspace. 
  \end{alignat*}
  Let~$T = (T_1, T_2, T_3)$.  The rule~$\rho_6$ shows that~$(x_1,
  \tau, \nu) \in L(G, T)$.  We can write the rule~$\rho_5$ also as $T
  \to (\alpha T_1 \beta x_1,\, \alpha T_2,\, \gamma T_3)$.
  Substituting~$(x_1, \tau, \nu)$ for~$T$ in~$u_5 = \rhs(\rho_5)$ we
  obtain that $L(G, T)$~also contains the forest $u_5[(T_1, T_2, T_3)
  \gets (x_1, \tau, \nu)] = (\alpha \beta x_1,\, \alpha \tau,\, \gamma
  \nu)$.  Then, substituting this forest for~$T$ in~$u_5$ we obtain
  that~$L(G, T)$ also contains $(\alpha \alpha \beta \beta x_1,\,
  \alpha \alpha \tau,\, \gamma \gamma \nu)$.  Continuing in this way
  we see that $L(G, T) = \{(\alpha^n \beta^n x_1, \alpha^n \tau,
  \gamma^n \nu) \mid n \in \nat_0\}$.  If we temporarily view~$A$ as a
  terminal, then $B(x_1)$~generates all trees~$t \in T_{\{\sigma, A,
    x_1\}}$ such that the left-most leaf of~$t$ has label~$x_1$ and
  all other leaves have label~$A$.  The right-hand side~$u_2 =
  T_1(\sigma(B(T_2), T_3))$ of~$\rho_2$ generates all trees~$u_2[B
  \gets t, T \gets t']$ with~$t$ as above and~$t' \in L(G, T)$; i.e., all trees
  $\alpha^n \beta^n \sigma(t[x_1 \gets \alpha^n \tau], \gamma^n \nu)$.
  This should give an idea of the form of the trees in~$L(G, A)$, and
  hence of the trees in~$L(G)$.  \fin
\end{example}

\begin{figure}
  \centering
  \includegraphics{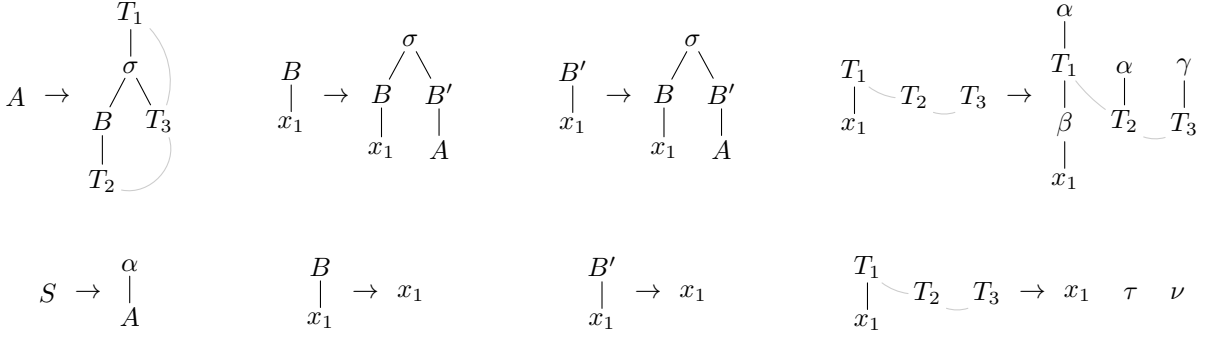}
  \caption{Rules of the MCFTG~$G$ of
    Example~\protect{\ref{exa:main}}.}
  \label{fig:links2}
\end{figure}

\subsection{Derivation trees}
\label{sub:dtrees}
\noindent The least fixed point semantics of an MCFTG~$G = (N, \N,
\Sigma, S, R)$ naturally leads to the notion of a derivation tree
of~$G$ that we define now.  We assume that for every rule~$\rho$
of~$G$, the links in~$\LL(\rho)$ are linearly ordered by an arbitrary,
fixed order~$\sqsubseteq$.  Whenever we write~$\LL(\rho) = \{\seq
B1k\}$ with $B_i \in \N$ for all $i \in [k]$, we will assume
that~$B_1 \sqsubset \dotsb \sqsubset B_k$.  The \emph{derivation tree
  grammar} of~$G$ is the RTG~$G_\der = (N_\der, R, S, R_\der)$ defined
as follows.\footnote{See Section~\protect{\ref{sub:trees}} for the definition of
  a regular tree grammar (RTG). Note that, in this contribution, 
RTGs are in normal form.}  First, $N_\der = \N$; i.e., its
nonterminals (of rank~$0$) are the big nonterminals of~$G$. Its
initial nonterminal is~$S$, which is the initial (big) nonterminal of~$G$.
Second, its terminal ranked alphabet is the set~$R$ of rules of~$G$
such that the rule~$\rho$ has rank~$\rk(\rho) = \abs{\LL(\rho)}$.\footnote{Note that, 
therefore, the rule-width of $G$ (as defined after Definition~\protect{\ref{def:mcftg}}) is
$\lambda(G)=\mr_R$, the maximal rank of its rules.}
Finally, the set~$R_\der$ consists of all rules~$A \to \rho(\seq B1k)$
such that~$\rho \in R$, $\lhs(\rho) = A$, and $\LL(\rho) = \{\seq
B1k\}$.  For~$A \in \N$, a \emph{derivation tree} of~$G$ of type~$A$
is a tree~$d \in T_{\N \cup R}$ such that~$A \Rightarrow^*_{G_\der}
d$.  Obviously, every derivation tree has a unique type, viz.~$\lhs(d(\varepsilon))$; 
i.e., the left-hand side of the rule that labels
its root.  We will denote the set of derivation trees of~$G$ of
type~$A$ by~$\DL(G_\der, A)$.  Note that $L(G_\der, A) = \DL(G_\der,
A) \cap T_R$.  To capture the semantics of~$G$, only the derivation
trees in~$L(G_\der) \subseteq T_R$ are relevant, but we will need the
other derivation trees for technical reasons in proofs.
As in the case of context-free grammars, it can be checked locally
whether a tree~$d \in T_{\N \cup R}$ is a derivation tree.  In fact,
let us say that the type of a position~$p \in \pos(d)$ is either~$d(p)$
if~$d(p) \in \N$, or~$\lhs(d(p))$ if~$d(p) \in R$.  Then $d$~is a
derivation tree if and only if for every position~$p \in
\pos_R(d)$ with $\LL(d(p)) = \{\seq B1k\}$,
the child~$pi$ of~$p$ has type~$B_i$ for every $i\in[k]$.

The \emph{value} of a
derivation tree~$d$ of type~$A$, denoted by~$\val(d)$, is a forest
in~$P_{N \cup \Sigma}(X)^{\scriptscriptstyle +}$ of the same rank
as~$A$ in~$G$, and is defined inductively as follows.  If~$d = A \in
\N$, then $\val(d) = \init(A)$.  If~$d = \rho(\seq d1k)$ for some
$\rho = A \to (u, \LL) \in R$ with $\LL = \{\seq B1k\}$ (and thus
$d_i$~is of type~$B_i$ for every~$i \in [k]$), then $\val(d) = u[B_i
\gets \val(d_i) \mid 1 \leq i \leq k]$.  The value~$\val(d)$ 
of the derivation tree $d$ can clearly be
computed in linear time. We also observe here that its computation 
can be realized by a macro tree transducer~\cite{coufra82,engvog85} (see
Lemma~\ref{lem:valismtt} in Section~\ref{sec:charact}).  Since that
macro tree transducer is finite-copying, `$\val$'~can also be realized
by a deterministic MSO-transducer (see~\cite{engman99}). 

\begin{example}
  \label{exa:dtree}
  \upshape
  The derivation tree grammar~$G_\der$ of the grammar~$G$ of
  Example~\ref{exa:main} has the following eight rules, where $T =
  (T_1, T_2, T_3)$ and the linear order of the links of each rule
  of~$G$ is fixed as indicated in Example~\ref{exa:main}:
  \begin{center}
    \vspace{-3ex}
    \begin{minipage}{0.4\linewidth}
      \begin{alignat*}{7}
        S &\to \rho_1(A) & A &\to \rho_2(B, T) \\
        B &\to \rho_3(B, B', A) \qquad & B' &\to \rho'_3(B, B', A) \\
        B &\to \rho_4 & B' &\to \rho'_4 \\
        T &\to \rho_5(T) \qquad & T &\to \rho_6 \enspace.
      \end{alignat*}
    \end{minipage}
    \resizebox{0.55\linewidth}{!}{
      \framebox{
        \begin{minipage}{.91\linewidth}
          Rules of Example~\ref{exa:main}:
          \begin{alignat*}{7}
            \rho_1 \colon&& S &\to \alpha(A)
            &\rho_2 \colon&& A &\to T_1(\sigma(B(T_2), T_3)) \\ 
            \rho_3 \colon&& B(x_1) &\to \sigma(B(x_1), B'(A))
            &\rho'_3 \colon&& B'(x_1) &\to \sigma(B(x_1), B'(A)) \\
            \rho_4 \colon&& B(x_1) &\to x_1  
            &\rho'_4 \colon&& B'(x_1) &\to x_1 \\
            \rho_5 \colon&& (T_1(x_1), T_2, T_3) &\to
            (\alpha(T_1(\beta(x_1))),\, \alpha(T_2),\, \gamma(T_3)) \qquad
            &\rho_6 \colon&& (T_1(x_1), T_2, T_3) &\to (x_1, \tau, \nu)
            \enspace.
          \end{alignat*}
        \end{minipage}
      }}
  \end{center}
  An example of a derivation tree of type~$A$ is~$d =
  \rho_2(\rho_3(\rho_4, B', A), \rho_5(\rho_6))$, which is shown in
  Figure~\ref{fig:deriv}.  Obviously, $\val(\rho_4) = x_1$~and we
  have~$\val(\rho_6) = (x_1, \tau, \nu)$.  Then
  $\val(\rho_5(\rho_6))$~is obtained by substituting~$(x_1, \tau,
  \nu)$ for~$T = (T_1, T_2, T_3)$ in the right-hand side of
  rule~$\rho_5$.  We saw in Example~\ref{exa:main} that the result is
  $(\alpha \beta x_1,\, \alpha \tau,\, \gamma \nu)$.  Similarly,
  $\val(\rho_3(\rho_4, B', A))$~is obtained from~$\rhs(\rho_3)$ by
  substituting~$\val(\rho_4) = x_1$ for~$B$ (and simultaneously
  substituting~$\init(B')$ for~$B'$ and~$\init(A)$ for~$A$, without
  effect).  The result is~$\sigma(x_1, B'(A))$.  Finally, $\val(d)$~is
  obtained from~$\rhs(\rho_2)$ by substituting~$\sigma(x_1, B'(A))$
  for~$B$ and $(\alpha \beta x_1,\, \alpha \tau,\, \gamma \nu)$
  for~$T$.  Hence $\val(d) = \alpha \beta(\sigma(\sigma(\alpha \tau,
  B'(A)), \gamma \nu))$.  The process is illustrated in
  Figure~\ref{fig:deriv}.  An example of a derivation tree
  in~$L(G_\der, S)$ is  
  \[ d' = \rho_1(d[(B', A) \gets (\rho'_4,\, \rho_2(\rho_4,
    \rho_6))]) \enspace, \] which equals 
  $\rho_1(\rho_2(\rho_3(\rho_4, \rho'_4, \rho_2(\rho_4, \rho_6)),
  \rho_5(\rho_6)))$.  Clearly, $\val(\rho'_4) =
  x_1$~and~$\val(\rho_2(\rho_4, \rho_6)) = \sigma(\tau, \nu)$.  It is
  straightforward to compute $\val(d') = \alpha \alpha
  \beta(\sigma(\sigma(\alpha \tau, \sigma(\tau, \nu)), \gamma \nu)) =
  \alpha(\val(d)[(B', A) \gets (x_1,\, \sigma(\tau, \nu))])$, which shows
  that `$\val$'~distributes over substitution. \fin
\end{example}

\begin{figure}
  \centering
  \includegraphics{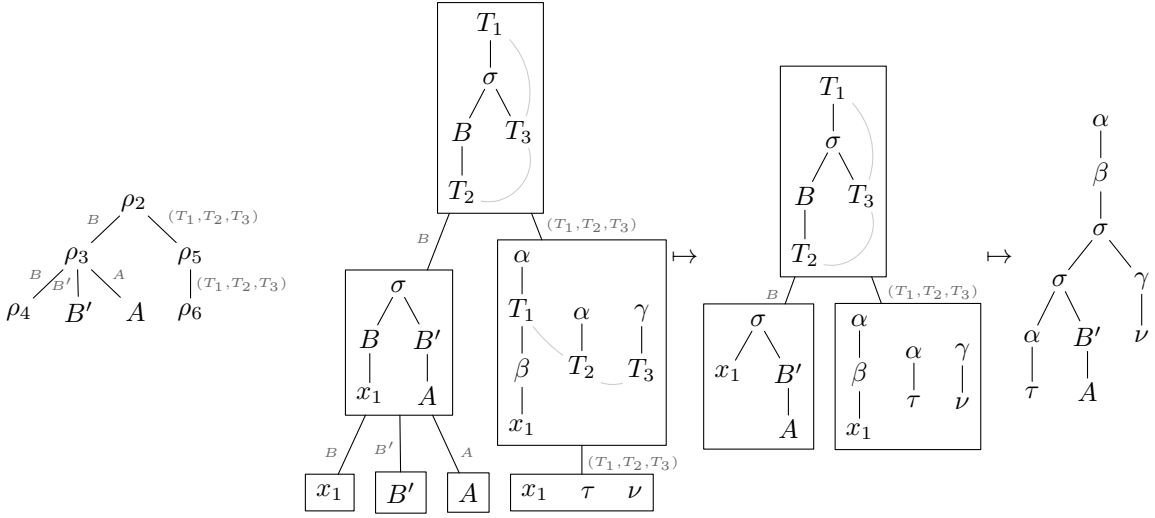}
  \caption{Derivation tree of the MCFTG~$G$ of
    Example~\protect{\ref{exa:main}} and illustration of the
    (bottom-up) computation of its value.} 
  \label{fig:deriv}
\end{figure}

From the least fixed point semantics we immediately obtain a
characterization by derivation trees.

\begin{theorem} 
  \label{thm:dtree}
  $L(G, A) = \val(L(G_\der, A))$ for every~$A \in \N$.  In particular,
  $L(G) = \val(L(G_\der))$.  
\end{theorem}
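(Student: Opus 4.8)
The plan is to prove the two inclusions of the first equality separately, writing $V_A = \val(L(G_\der, A))$ for brevity, and then to obtain the ``in particular'' claim as the special case $A = S$, since $L(G) = L(G, S)$ and $L(G_\der) = L(G_\der, S)$ hold by definition. Both inclusions rest on the fact, noted when `$\val$' was introduced, that the value of a derivation tree of type $A$ is a forest of rank $\rk(A)$, so that the substitution functions appearing below are always well defined.

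First I would show $V_A \subseteq L(G, A)$ for every $A \in \N$ by structural induction on a complete derivation tree $d \in L(G_\der, A) \subseteq T_R$. Such a $d$ has the form $d = \rho(\seq d1k)$ for a rule $\rho = A \to (u, \LL) \in R$ with $\LL = \{\seq B1k\}$ and $B_1 \sqsubset \dotsb \sqsubset B_k$, where each $d_i \in L(G_\der, B_i)$ because the rule $A \to \rho(\seq B1k)$ of $G_\der$ was used at the root. By the induction hypothesis $\val(d_i) \in L(G, B_i)$ for every $i \in [k]$, so the mapping $f$ with $f(B_i) = \val(d_i)$ is a substitution function for $\LL$ satisfying $f(B_i) \in L(G, B_i)$. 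The closure property defining $L(G, A)$ then yields $u[f] \in L(G, A)$, and $u[f] = \val(d)$ by the definition of `$\val$'. The base case $k = 0$ is subsumed, the empty substitution giving $\val(d) = u \in L(G, A)$.

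For the reverse inclusion $L(G, A) \subseteq V_A$ I would not induct on derivation trees but instead exploit that $(L(G, A))_{A \in \N}$ is by definition the \emph{least} family of forest languages closed under the rules of $G$. Hence it suffices to verify that the family $(V_A)_{A \in \N}$ enjoys the same closure property. So let $\rho = A \to (u, \LL) \in R$ with $\LL = \{\seq B1k\}$, and let $f$ be a substitution function for $\LL$ with $f(B_i) \in V_{B_i}$ for every $i$. Choosing $d_i \in L(G_\der, B_i)$ with $\val(d_i) = f(B_i)$ and forming $d = \rho(\seq d1k)$, the rule $A \to \rho(\seq B1k)$ of $G_\der$ shows $d \in L(G_\der, A)$, whence $\val(d) = u[B_i \gets \val(d_i) \mid 1 \leq i \leq k] = u[f]$ lies in $V_A$. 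By minimality, $L(G, A) \subseteq V_A$ for every $A$, and together with the first inclusion this gives $L(G, A) = V_A$.

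The argument is essentially routine, and I expect no genuinely hard step; the only point requiring care is to keep the two directions conceptually distinct. The inclusion $V_A \subseteq L(G, A)$ is a plain structural induction on derivation trees, whereas $L(G, A) \subseteq V_A$ is a minimality (post-fixed-point) argument that must \emph{not} be attempted by induction on trees, since the elements of $L(G, A)$ are forests rather than derivation trees. A secondary bookkeeping point is the correspondence between the fixed linear order $\sqsubseteq$ on $\LL(\rho)$ and the order of the children in $\rho(\seq d1k)$, which guarantees both that $d$ is a well-formed derivation tree and that the substitution $u[f]$ matches the child-indexing used in the definition of `$\val$'.
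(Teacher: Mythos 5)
Your proposal is correct and follows essentially the same route as the paper: one inclusion ($\val(L(G_\der,A)) \subseteq L(G,A)$) by structural induction on derivation trees, and the other ($L(G,A) \subseteq \val(L(G_\der,A))$) by verifying that the value sets satisfy the fixed point (closure) requirement and invoking minimality of $L(G,A)$. The only difference is presentational: the paper states the minimality argument first and the induction second, and is terser about both.
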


\begin{proof}
  Obviously, the sets~$\val(L(G_\der, A))$ satisfy the fixed point
  requirement for all~$A \in \N$, which says that for every rule~$\rho = A \to (u,
  \LL) \in R$ and substitution function~$f$ for~$\LL$ such that
  $f(B)$~is in $\val(L(G_\der, B))$ for every~$B \in \LL$, we have that
  $u[f] \in \val(L(G_\der, A))$.  In fact, if~$\LL = \{\seq B1k\}$
  and~$f(B_i) = \val(d_i)$ for all $i \in [k]$, then $u[B\gets f(B)\mid B\in\LL]$~is equal
  to $\val(\rho(\seq d1k))$ by definition of~`$\val$'.  This shows
  that $L(G, A) \subseteq \val(L(G_\der, A))$ for every~$A \in \N$.
  In the other direction, it is easy to show that $\val(d) \in L(G,
  A)$ for every~$d \in L(G_\der, A)$ and every~$A \in \N$ by induction
  on the structure of the derivation tree~$d$.
\end{proof}

This theorem implies that the emptiness problem is decidable for
$L(G)$~and~$L(G, A)$.  In fact, $L(G) = \emptyset$ if and only if
$L(G_\der) = \emptyset$, which is decidable because $G_\der$~is an
RTG; and similarly for~$L(G,A)$.  It is now also very easy to see that
$L(G, A) = L(G, A')$ for aliases $A$~and~$A'$: if $\rho = A \to (u,
\LL)$ and $\rho' = A' \to (u, \LL)$ are rules and $d = \rho(\seq d1k)$
is in~$L(G_\der, A)$, then $d' = \rho'(\seq d1k)$ is in~$L(G_\der,
A')$ and $\val(d) = \val(d')$, under the assumption that $\LL$~has the
same linear order in $\rho$~and~$\rho'$.

We will need three simple properties of derivation trees, 
which are stated in the next three lemmas.  The first
is a generalization of Lemma~\ref{lem:treehom}(2) and states that for
every derivation tree of~$G$, the number of occurrences of a terminal
in~$\val(d)$ is the sum of its occurrences in the right-hand sides of
the rules that occur in~$d$.  Also, the number of occurrences of a 
nonterminal in~$\val(d)$ is equal to the number of its ``occurrences''
(as part of a big nonterminal) in~$d$.  

\begin{lemma}
  \label{lem:valocc} 
  Let~$d \in \DL(G_\der, A)$ with~$A \in \N$, and let $\sigma \in
  \Sigma$~and~$C \in N$.
  \begin{compactenum}[\upshape (1)]
  \item $\abs{\pos_\sigma(\val(d))} = \sum_{p \in \pos_R(d)}
    \abs{\pos_\sigma(\rhs(d(p)))}$.  
  \item $\abs{\pos_C(\val(d))} = \sum_{B \in \N_C} \abs{\pos_B(d)}$,
    where $\N_C =\{B \in \N \mid C \in \alp(B)\}$. 
  \item $\val(d) \in T_\Sigma$ if and only if~$d \in T_R$.
  \end{compactenum}
\end{lemma}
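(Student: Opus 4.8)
The plan is to prove (1) and (2) together by structural induction on the derivation tree $d$, and then to derive (3) from (2). The point is that (1) and (2) are two readings of the same computation: I would express $\val(d)$ as the image of the right-hand side $\rhs(d(\varepsilon))$ under the tree homomorphism that substitutes the values of the children, and then count $\sigma$- and $C$-occurrences with Lemma~\ref{lem:treehom}(2). The main obstacle is the bookkeeping in the inductive step, where I must show that the partition structure of the links together with the unique $N$-labeling of the right-hand side makes those occurrence counts collapse to the per-child totals.

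For the base case $d = A \in \N$ I use $\val(d) = \init(A)$. As $\init(A)$ is built only from the nonterminals of $\alp(A)$ and variables, it contains no terminal, so $\abs{\pos_\sigma(\val(d))} = 0$; and $\pos_R(d) = \emptyset$, so the sum in (1) is $0$ as well. For (2), repetition-freeness of $A$ makes every nonterminal of $\alp(A)$ occur exactly once in $\init(A)$, whence $\abs{\pos_C(\val(d))} = 1$ if $C \in \alp(A)$ and $0$ otherwise; on the other side $\abs{\pos_B(d)} = 1$ exactly when $B = A$, so $\sum_{B \in \N_C}\abs{\pos_B(d)}$ equals $1$ precisely when $A \in \N_C$, i.e.\ when $C \in \alp(A)$.

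For the inductive step $d = \rho(\seq d1k)$ with $\rho = A \to (u, \LL) \in R$ and $\LL = \{\seq B1k\}$ (so $d_i$ has type $B_i$), I would write $\val(d) = u[B_i \gets \val(d_i) \mid 1 \le i \le k] = \hat h(u)$, where $h$ is the tree homomorphism over $N \cup \Sigma$ sending each nonterminal of $\alp(B_i)$ to its matching component of $\val(d_i)$ and fixing every other symbol $\tau$ to $\init(\tau)$. Lemma~\ref{lem:treehom}(2) gives $\abs{\pos_\delta(\val(d))} = \sum_\tau \abs{\pos_\tau(u)} \cdot \abs{\pos_\delta(h(\tau))}$ for $\delta = \sigma$ and for $\delta = C$. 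Because $u$ is uniquely $N$-labeled, $\abs{\pos_C(u)} \le 1$, and since $\{\alp(B_i) \mid i \in [k]\}$ partitions $\alp_N(u)$ while $h$ sends $\alp(B_i)$ to the components of $\val(d_i)$, the nonterminal part of the sum collapses to $\sum_{i=1}^k \abs{\pos_\delta(\val(d_i))}$. The terminal symbols contribute through $\init(\tau)$: this adds $\abs{\pos_\sigma(\rhs(\rho))}$ for (1) and $0$ for (2), since $\init(\tau)$ carries no nonterminal. It then remains to invoke the induction hypothesis and re-index using $\pos_R(d) = \{\varepsilon\} \cup \{ip \mid i \in [k],\, p \in \pos_R(d_i)\}$ with $\rhs(d(ip)) = \rhs(d_i(p))$ for (1), and $\pos_B(d) = \{ip \mid i \in [k],\, p \in \pos_B(d_i)\}$ for (2) (the root is labeled by $\rho$, not by a big nonterminal); swapping the order of summation in the latter yields $\sum_{B \in \N_C}\abs{\pos_B(d)}$, as required.

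Finally, I obtain (3) from (2). Since $\val(d) \in P_{N \cup \Sigma}(X)^{\scriptscriptstyle +}$ always, the forest $\val(d)$ is free of nonterminals iff $\abs{\pos_C(\val(d))} = 0$ for every $C \in N$, which by (2) is equivalent to $\sum_{B \in \N_C}\abs{\pos_B(d)} = 0$ for every $C$; summing over $C$ and using $\alp(B) \neq \emptyset$ for every $B \in \N$, this is equivalent to $\pos_\N(d) = \emptyset$, i.e.\ to $d \in T_R$. Thus $\val(d)$ contains a nonterminal exactly when $d \notin T_R$, so $\val(d) \in T_\Sigma$ forces $d \in T_R$; and since $\val(d)$ has length $\abs A$ and rank $\rk(A)$, nonterminal-freeness is membership in $T_\Sigma$ precisely when $\rk(A) = (0)$ (length~$1$, rank~$0$), as holds for the initial nonterminal $S$, giving the stated equivalence in the form used for Theorem~\ref{thm:dtree}.
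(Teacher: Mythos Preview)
Your argument for (1) and (2) is correct and follows exactly the paper's approach: structural induction on $d$, writing $\val(d) = \hat h(u)$ for the tree homomorphism corresponding to the second-order substitution $[\word B1k \gets \val(d_1)\dotsm\val(d_k)]$, and invoking Lemma~\ref{lem:treehom}(2) together with the unique $N$-labeling of $u$ and the partition of $\alp_N(u)$ by the links.

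For (3), what you derive from (2)---that $\val(d)$ is free of nonterminals if and only if $d \in T_R$---is precisely what the paper proves (stated there as $\alp_N(\val(d)) = \bigcup_{B \in \alp_\N(d)} \alp(B)$). Your further remark that nonterminal-freeness of $\val(d)$ coincides with membership in $T_\Sigma$ only when $A$ has length~$1$ and rank~$0$ is a valid observation that the paper's one-line proof elides: the backward implication of (3) as stated does fail for general $A \in \N$, but the lemma is only ever invoked where this holds (namely in the proof of Theorem~\ref{thm:deriv}, not Theorem~\ref{thm:dtree} as you wrote).
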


\begin{proof}
  The proofs of (1)~and~(2) can be achieved by induction on the
  structure of~$d$.  The statements are obvious for~$d = A \in \N$
  because we obtain~$0 = 0$ in~(1), $1 = 1$~in~(2) if~$C
  \in \alp(A)$, and~$0 = 0$ in~(2) otherwise.  Let us now consider $d
  =  \rho(\seq d1k)$ for some rule~$\rho = A \to (u, \LL)$ with $\LL =
  \{\seq B1k\}$.  By the definition of~`$\val$' we have $\val(d) =
  u[B_i \gets \val(d_i) \mid 1 \leq i \leq k]$, which equals
  the second-order substitution $u[\word B1k \gets \val(d_1) \dotsm
  \val(d_k)]$ by the definition of simultaneous second-order
  substitution.  Let $h$~be the tree homomorphism over~$N \cup \Sigma$
  corresponding to~$[\word B1k \gets \val(d_1) \dotsm \val(d_k)]$.  It
  is now straightforward to prove (1)~and~(2) using
  Lemma~\ref{lem:treehom}(2) and the induction hypotheses for~$\seq
  d1k$.  It follows from~(2) that $\alp_N(\val(d)) = \bigcup_{B \in
    \alp_\N(d)} \alp(B)$, which proves~(3).
\end{proof}

The second property is that `$\val$'~distributes over second-order
substitution, of which an example was presented at the end of
Example~\ref{exa:dtree}.  It can be viewed as a generalization of
Lemma~\ref{lem:comm-assoc}(4).  For convenience, and because it is all
we will need, we only prove this for the case where just one big
nonterminal is replaced.

\begin{lemma}
  \label{lem:valsub} 
  Let $A, B \in \N$, and let $d \in \DL(G_\der, A)$ and $d' \in
  \DL(G_\der, B)$ be derivation trees of type $A$ and $B$ 
  such that $B\in\alp_{\N}(d)$.  
  Then $\val(d[B \gets d']) = \val(d)[B \gets \val(d')]$.
\end{lemma}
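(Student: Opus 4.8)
The plan is to prove the identity by induction on the structure of the derivation tree~$d$. Since $B \in \N$ has rank~$0$ as a nonterminal of the RTG~$G_\der$, the hypothesis $B \in \alp_\N(d)$ says that $B$ labels one or more leaves of~$d$, and the first-order substitution $d[B \gets d']$ simply replaces every such leaf by~$d'$; as $d'$ has type~$B$, the result is again a derivation tree of type~$A$, so the left-hand side is well defined. Two facts established earlier will carry the bulk of the argument: the associativity of substitution in Lemma~\ref{lem:comm-assoc}(4), and the relation $\alp_N(\val(e)) = \bigcup_{B' \in \alp_\N(e)} \alp(B')$ for every derivation tree~$e$, which is exactly the identity derived in the proof of Lemma~\ref{lem:valocc}(3).

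For the base case $d$ is a single leaf. It cannot be a rank-$0$ rule~$\rho$, since then $\alp_\N(d) = \emptyset$ would contradict the hypothesis; hence $d = A \in \N$, and $B \in \alp_\N(d) = \{A\}$ forces $A = B$. Then $d[B \gets d'] = d'$, so the left-hand side is~$\val(d')$, while the right-hand side is $\init(A)[B \gets \val(d')] = \init(B)[B \gets \val(d')] = \val(d')$ by the identity $\init(\sigma)[\sigma \gets u] = u$ recalled in Section~\ref{sub:sub} (applicable because $\rk(\val(d')) = \rk(B)$). Thus both sides equal~$\val(d')$.

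For the induction step, let $d = \rho(\seq d1k)$ with $\rho = A \to (u, \LL)$, $\LL = \{\seq B1k\}$, and $d_i$ of type~$B_i$. Since the root carries the rule~$\rho \neq B$, first-order substitution distributes to the children, giving $d[B \gets d'] = \rho(d_1[B \gets d'], \dotsc, d_k[B \gets d'])$, so that $\val(d[B \gets d']) = u[\word B1k \gets \val(d_1[B \gets d']) \dotsm \val(d_k[B \gets d'])]$ by the definition of~$\val$. On the other side, $\val(d)[B \gets \val(d')] = u[\word B1k \gets \val(d_1) \dotsm \val(d_k)][B \gets \val(d')]$. Because every nonterminal of~$B$ occurring in~$u$ already lies in $\alp_N(u) = \alp(\word B1k)$, the side condition of Lemma~\ref{lem:comm-assoc}(4) is met, and I would push the outer substitution inside and then distribute it over the concatenation to obtain $u[\word B1k \gets \val(d_1)[B \gets \val(d')] \dotsm \val(d_k)[B \gets \val(d')]]$. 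Comparing the two forests, it remains to prove $\val(d_i)[B \gets \val(d')] = \val(d_i[B \gets d'])$ for each~$i \in [k]$.

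This per-subtree equality is where the only real work lies, and I expect the case $B \notin \alp_\N(d_i)$ to be the main obstacle. If $B \in \alp_\N(d_i)$, the equality is exactly the induction hypothesis for~$d_i$. If $B \notin \alp_\N(d_i)$, then $d_i[B \gets d'] = d_i$, so I must instead argue that $[B \gets \val(d')]$ acts as the identity on~$\val(d_i)$; this is the delicate bookkeeping step, relating the absence of a leaf~$B$ in~$d_i$ to the absence of the symbols of~$B$ in~$\val(d_i)$. Here Lemma~\ref{lem:valocc}(2) shows that no nonterminal of~$B$ occurs in~$\val(d_i)$, whence the substitution leaves~$\val(d_i)$ unchanged, since a second-order substitution for a sequence none of whose symbols occur in its argument is the identity (Section~\ref{sub:sub}). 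This closes the induction and establishes the lemma.
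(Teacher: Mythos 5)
Your proposal is correct and follows essentially the same route as the paper's own proof: structural induction on~$d$, with the induction step given by the chain $\val(d)[B \gets \val(d')] = u[\word B1k \gets \val(d_1) \dotsm \val(d_k)][B \gets \val(d')] = u[\word B1k \gets (\val(d_1) \dotsm \val(d_k))[B \gets \val(d')]]$ via Lemma~\ref{lem:comm-assoc}(4), then distributing the substitution over the concatenation and invoking the induction hypotheses per subtree. The only difference is cosmetic: the paper silently proves the statement without the hypothesis $B \in \alp_\N(d)$ (its base case also covers $B \neq A$), so the induction hypotheses apply to every~$d_i$ directly, whereas you keep the hypothesis and therefore handle the subtrees with $B \notin \alp_\N(d_i)$ by the separate (and correct) argument that the substitution acts as the identity on~$\val(d_i)$.
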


\begin{proof}
  As in Lemma~\ref{lem:valocc}, we proceed by induction on the
  structure of~$d$.  For~$d = A \in \N$ both sides of the equation are
  equal to~$\val(d')$ if~$B = A$ and equal to~$\init(A)$ otherwise.  Now 
  we consider~$d = \rho(\seq d1k)$ for some~$\rho = A \to (u, \LL)$
  with $\LL = \{\seq B1k\}$.  Then 
  \begin{alignat*}{3}
    &\phantom{{}={}} \val(d)[B \gets \val(d')] \\
    &= u[\word B1k \gets \val(d_1) \dotsm
    \val(d_k)] \, [B \gets \val(d')] = u \bigl[\word B1k \gets
    (\val(d_1) \dotsm \val(d_k))[B \gets \val(d')]\, \bigr] \\
    &= u \bigl[\word B1k \gets \val(d_1)[B \gets \val(d')] \dotsm
    \val(d_k)[B \gets \val(d')]\, \bigr] \\
    &= u \bigl[\word B1k \gets \val(d_1[B \gets d']) \dotsm \val(d_k[B
    \gets d']) \bigr] = \val(\rho(d_1[B \gets d'], \dotsc, d_k[B \gets
    d'])) \\ 
    &= \val(d[B \gets d']) \enspace,
  \end{alignat*}
  where the second equality is by Lemma~\ref{lem:comm-assoc}(4) and
  the fourth by the induction hypotheses.
\end{proof}

We will use the following simple third property in the proofs of
Lemmas \ref{lem:terminal-removal}~and~\ref{lem:d-chain-free}.  

\begin{lemma}
  \label{lem:easy-dtree} 
  Let~$F \subseteq R$, $\,\N' \subseteq \N$, and~$\D_B = \DL(G_\der, B)
  \cap T_{\N' \cup F}$ for every~$B \in \N$.  Moreover, let~$A \in
  \N$, $t \in \val(\D_A)$, and $L_{\langle A,t\rangle} = \{d \in \D_A
  \mid \val(d) = t\}$.  If $\val(\D_B)$~is finite for every~$B \in \N$,
  then $L_{\langle A, t\rangle}$~is a regular tree language. 
\end{lemma}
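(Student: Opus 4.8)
The plan is to build a regular tree grammar $G'$ that generates $L_{\langle A,t\rangle}$ directly, by refining each big nonterminal $B\in\N$ (used here as a nonterminal of the restricted derivation tree grammar) with the value it produces. Concretely, I would take as nonterminals of $G'$ all pairs $\langle B,s\rangle$ with $B\in\N$ and $s\in\val(\D_B)$, as terminal alphabet the ranked set $\N'\cup F$ (the same alphabet over which the trees in $\D_B$ range), and as initial nonterminal $\langle A,t\rangle$, which is well-defined since $t\in\val(\D_A)$. The crucial point is that, by the hypothesis that $\val(\D_B)$ is finite for every $B\in\N$ and since $\N$ itself is finite, there are only finitely many such pairs, so $G'$ is a genuine (finite) RTG.

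The rules of $G'$ are designed to track the value bottom-up along a derivation tree. First, for every $B\in\N'$ I would include the terminal rule $\langle B,\init(B)\rangle\to B$; this is justified because the single-node tree $d=B$ lies in $\D_B$ and has $\val(d)=\init(B)$. Second, for every rule $\rho=B\to(u,\LL)$ in $F$ with $\LL=\{\seq B1k\}$ (listed in the fixed order $B_1\sqsubset\dotsb\sqsubset B_k$) and all choices $s_i\in\val(\D_{B_i})$, I would set $s=u[B_i\gets s_i\mid 1\le i\le k]$ and include the rule $\langle B,s\rangle\to\rho(\langle B_1,s_1\rangle,\dotsc,\langle B_k,s_k\rangle)$. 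Here $s\in\val(\D_B)$ holds automatically, since if $d_i\in\D_{B_i}$ realizes $s_i$ then $\rho(\seq d1k)\in\D_B$ has value $s$; hence $\langle B,s\rangle$ is a legitimate nonterminal and the rule is in the required normal form.

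It then remains to verify that $L(G',\langle B,s\rangle)=\{d\in\D_B\mid\val(d)=s\}$ for every $B\in\N$ and $s\in\val(\D_B)$, which specializes at $\langle A,t\rangle$ to the claim. Both inclusions follow by routine inductions: the inclusion ``$\subseteq$'' by induction on the length of a derivation of $G'$, and ``$\supseteq$'' by induction on the structure of $d$, in each case reading off the matching $G'$-rule from the definition of $\val$ on $\rho(\seq d1k)$ (using that the $i$-th subtree of such a $d$ has type $B_i$) and on the base case $d=B$.

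I do not expect a serious obstacle here; the construction is essentially a product of the restricted derivation tree grammar with a bounded ``value component.'' The one ingredient that makes it work — and the sole hypothesis actually used — is the finiteness of the $\val(\D_B)$, which bounds the value component and thereby keeps the nonterminal set of $G'$ finite. Everything else is bookkeeping, namely respecting the fixed link order $\sqsubseteq$ and handling the nonterminal-leaf base case correctly.
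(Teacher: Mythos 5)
Your proposal is correct and is essentially the paper's own proof: the same product construction with nonterminals $\langle B, v\rangle$ for $v \in \val(\D_B)$, the same rules $\langle B, v\rangle \to \rho(\langle B_1, v_1\rangle, \dotsc, \langle B_k, v_k\rangle)$ with $v = u[B_i \gets v_i \mid 1 \leq i \leq k]$, and the same base rules $\langle B, \init(B)\rangle \to B$ for $B \in \N'$. The only cosmetic difference is that the paper summarizes correctness by viewing the grammar as a deterministic bottom-up tree automaton computing the type and value of each derivation tree, whereas you spell out the inductive verification explicitly.
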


\begin{proof}
  An RTG for~$L_{\langle A, t\rangle}$ has the nonterminals~$\langle
  B, v\rangle$ with $B \in \N$~and~$v \in \val(\D_B)$, of which 
  the nonterminal~$\langle A, t\rangle$ is initial.  For every rule
  $\rho = B \to (u, \LL)$ of~$G$ with $\rho \in F$ and $\LL = \{\seq
  B1k\}$, it has all the rules $\langle B, v\rangle \to \rho(\langle
  B_1, v_1\rangle, \dotsc, \langle B_k, v_k\rangle)$ such that $v_i
  \in \val(\D_{B_i})$ for every~$i \in [k]$, and $v = u[B_i \gets v_i
  \mid 1 \leq i \leq k]$.  Moreover, for every~$B \in \N'$ it has the
  rule~$\langle B, \init(B)\rangle \to B$.  This grammar can be viewed
  as a deterministic bottom-up finite tree
  automaton~\cite{gecste84,gecste97} that, for every derivation
  tree~$d \in T_{\N' \cup F}$, computes the type of~$d$ and its
  value~$\val(d)$.
\end{proof}

Let us turn to the comparison of the derivation trees of two MCFTGs $G$ and $G'$.
We can define $G$ and $G'$ to be ``$\X$"~equivalent'', where $\X$~is a
class of tree transductions, if there are value-preserving tree
transductions in~$\X$ from the derivation trees of each grammar to
those of the other grammar.  The idea here is that $G$~and~$G'$ are
grammatically closely related if $\X$~is a relatively simple class of
tree transductions.  For that purpose we choose the class~$\X =
\text{\LDTR}$, which we define now.  To define tree transducers we
use the infinite alphabet~$Y = \{y_1, y_2, \dotsc\}$ of \emph{input
  variables} to avoid confusion with the set~$X$ of
variables used in MCFTGs (the set~$X$ will also be used
as output variables, or 
parameters, for macro tree transducers in Section~\ref{sec:charact}).
For every~$k \in \nat_0$, we let~$Y_k = \{y_i \mid i \in [k]\}$.

A \emph{linear deterministic top-down tree transducer with regular
  look-ahead} (in short, \LDTR"~transducer) from~$\Omega$ to~$\Sigma$
is a system $M = (Q, \Omega, \Sigma, q_0, R)$, where $Q$~is a finite
set of \emph{states}, $\Omega$~and~$\Sigma$ are finite ranked
alphabets of \emph{input} and \emph{output symbols} with $Q \cap
\Sigma = \emptyset$, $q_0 \in Q$ is the \emph{initial state}, and
$R$~is a finite set of \emph{rules}.  Each rule in~$R$ is of the
form 
\[ \langle q,\, \omega(y_1\colon L_1, \dotsc, y_k \colon L_k) \colon
L_0 \rangle \to \zeta \enspace, \] 
where $q \in Q$, $k \in \nat_0$, $\omega \in \Omega^{(k)}$, $L_0,\seq L1k$~are
regular tree languages over~$\Omega$ (specified, e.g., by RTGs),
and~$\zeta\in T_{(Q \times Y_k) \cup \Sigma}$ using the ranked
alphabet~$Q \times Y_k$, in which every element has rank~$0$.
Additionally, we require that each~$y \in Y_k$ occurs at most once
in~$\zeta$ (linearity property), and that if $\langle q,\,
\omega(y_1 \colon L'_1, \dotsc, y_k \colon L'_k) \colon L'_0 \rangle
\to \zeta'$ is another rule in~$R$ (for the same $q$~and~$\omega$),
then there exists an index~$0 \leq i \leq k$ such that~$L_i \cap L'_i
= \emptyset$ (determinism property).  If~$L_i = T_\Omega$ in the
above rule, then we omit~`${} \colon L_i$'.  An \LDTR"~transducer
is called an LDT"~transducer (without regular look-ahead) if~$L_i
= T_\Omega$ for every~$0 \leq i \leq k$ in every rule.

For every input tree~$s \in T_\Omega$ and every state~$q \in Q$, we
define the \emph{$q$"~translation} of~$s$ by~$M$, denoted by~$M_q(s)$,
inductively as follows.  If $s = \omega(\seq s1k)$, the above rule is
in~$R$, $s \in L_0$, and~$s_i \in L_i$ for every~$i \in 
[k]$, then 
\[ M_q(s) = \zeta[\langle q', y_i \rangle \gets M_{q'}(s_i)
\mid q' \in Q, \,1 \leq i \leq k] \enspace. \]  
We observe that $M_q(s)$~is undefined if there does not exist an
appropriate rule or, using the rule above, $M_{q'}(s_i)$~is undefined
for some~$\langle q', y_i \rangle$ that occurs in~$\zeta$.  Moreover,
the \emph{tree transduction realized by~$M$}, also denoted by~$M$, is
the partial function~$M \colon T_\Omega \to T_\Sigma$, which is given
by~$M(s) = M_{q_0}(s)$ for every $s \in T_\Omega$.  The tree~$M(s)$,
provided it is defined, is also called the \emph{translation} of~$s$
by~$M$.  We denote by~\LDTR\@ the class of all tree transductions
realized by \LDTR"~transducers.  Note that every tree
homomorphism~$\hat{h}$ from~$\Omega$ to~$\Sigma$ can be realized by an
LDT"~transducer with one state~$q$ and with the rules $\langle q,\,
\omega(\seq y1k) \rangle \to h(\omega)[x_i \gets \langle q, y_i
\rangle \mid 1 \leq i\leq k]$ for every~$k\in\nat_0$ and $\omega \in
\Omega^{(k)}$.  We need the following two basic properties of \LDTR.

\begin{proposition}
  \label{pro:LDTRcomp}
  \LDTR\@ is closed under composition.
\end{proposition}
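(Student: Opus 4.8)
The plan is to prove closure under composition by a direct product (simulation) construction. Let $M_1 = (Q_1, \Omega, \Gamma, p_0, R_1)$ and $M_2 = (Q_2, \Gamma, \Sigma, q_0, R_2)$ be \LDTR"~transducers, and I want to build an \LDTR"~transducer $M$ from~$\Omega$ to~$\Sigma$ realizing $M_2 \circ M_1$, i.e.\@ $M(s) = M_2(M_1(s))$ for all $s \in T_\Omega$. The states of~$M$ are the pairs $\langle q, p\rangle$ with $q \in Q_2$ and $p \in Q_1$ (all of rank~$0$), the initial state is $\langle q_0, p_0\rangle$, and the invariant I aim for is $M_{\langle q,p\rangle}(s) = M_{2,q}(M_{1,p}(s))$. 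Intuitively, such a pair state records that $M_1$ is reading the input in state~$p$ while $M_2$ is simultaneously reading the (not yet materialized) output of~$M_1$ in state~$q$.

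To obtain the rules of~$M$, I would fix an $M_1$"~rule $\langle p,\, \omega(y_1 : L_1, \dotsc, y_k : L_k) : L_0\rangle \to \zeta$ with $\zeta \in T_{(Q_1 \times Y_k) \cup \Gamma}$ and ``run $M_2$ from state~$q$ on~$\zeta$'': $M_2$ rewrites the $\Gamma$"~labeled part of~$\zeta$ top-down exactly as it would on a genuine input tree, and whenever it reaches a special leaf $\langle p', y_i\rangle \in Q_1 \times Y_k$ in some state~$q'$, that leaf is turned into the new special leaf $\langle \langle q', p'\rangle, y_i\rangle$; the resulting tree is the right-hand side of the $M$"~rule for~$\langle q, p\rangle$ and~$\omega$ (the look-ahead tests that $M_2$ makes along the way are handled in the next paragraph). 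Linearity is preserved for free: since $M_1$ is linear, each $y_i$ labels at most one leaf of~$\zeta$, and since $M_2$ is linear, $M_2$ visits each node of~$\zeta$ along at most one branch of its computation, so each $y_i$ still occurs at most once in the new right-hand side.

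The main obstacle is the regular look-ahead of~$M_2$. When $M_2$ processes a $\Gamma$"~labeled node of~$\zeta$, its rule selection depends on look-ahead tests over~$\Gamma$ applied to subtrees of~$\zeta$; but such a subtree is only a $\Gamma$"~context whose holes $\langle p', y_i\rangle$ will be filled at run time by the trees $M_{1,p'}(s_i)$, which are not yet available. The fix is to precompute this information in the look-ahead of~$M$. Let $\mathcal{A}$ be a deterministic bottom-up tree automaton over~$\Gamma$ whose state simultaneously records membership in every look-ahead language occurring in~$M_2$ (a product of their minimal automata). Then the $\mathcal{A}$"~state reached on $\zeta[\langle p', y_i\rangle \gets M_{1,p'}(s_i)]$ is determined by the fixed context~$\zeta$ together with the $\mathcal{A}$"~states of the trees $M_{1,p'}(s_i)$, so $M$ only needs to know, for each relevant~$p'$ and each argument~$s_i$, the $\mathcal{A}$"~state of $M_{1,p'}(s_i)$ (or that it is undefined). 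The crux is therefore the following lemma, which I expect to be the genuinely delicate point: for every $p' \in Q_1$ the map $s \mapsto (\text{$\mathcal{A}$-state of } M_{1,p'}(s))$ is computable by regular look-ahead over~$\Omega$. I would prove it directly by constructing a bottom-up automaton~$\mathcal{B}$ on $\Omega$"~trees whose state at~$s$ records, for every $p' \in Q_1$, the $\mathcal{A}$"~state of $M_{1,p'}(s)$ (and ``undefined'' where appropriate); its transition at a symbol~$\omega$ first uses a product with the look-ahead automaton of~$M_1$ to select the applicable $M_1$"~rule for each~$p'$, obtaining the context~$\zeta$, and then evaluates~$\mathcal{A}$ on~$\zeta$ using the children's recorded $\mathcal{A}$"~states at the holes, which is a well-defined bottom-up computation since $\zeta$ is a fixed finite tree.

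Finally I would assemble the look-ahead of each $M$"~rule as the (regular) intersection of~$M_1$'s own look-ahead with the conditions fixing the $\mathcal{A}$"~profiles of the arguments just described, splitting it into finitely many disjoint regular pieces so that exactly one rule applies per state, input symbol, and profile. Determinism of~$M$ then follows from determinism of~$M_1$ (which selects~$\zeta$) together with determinism of~$M_2$ (which deterministically rewrites~$\zeta$ once the argument profiles are known), while correctness of the invariant $M_{\langle q,p\rangle}(s) = M_{2,q}(M_{1,p}(s))$, and hence $M = M_2 \circ M_1$, is verified by a routine induction on the structure of~$s$, using that running~$M_2$ symbolically on~$\zeta$ commutes with first substituting the~$M_{1,p'}(s_i)$ into~$\zeta$. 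The linearity and determinism restrictions are exactly what make the construction go through, since they rule out the copying-with-later-divergence phenomena that obstruct composition of unrestricted top-down tree transducers.
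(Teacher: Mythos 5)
Your proof is correct, but it takes a different route from the paper, whose own proof contains no construction at all: the paper simply cites \cite[Theorem~2.11]{eng77} for closure of \textup{DT}$^\textup{R}$ (deterministic top-down with regular look-ahead, not necessarily linear) under composition, and observes that the constructions in Lemmas 2.9 and~2.10 of that reference preserve linearity, so the result restricts to \LDTR. You instead reconstruct, in a self-contained way, essentially the classical composition construction underlying that citation: pair states $\langle q,p\rangle$ with the invariant $M_{\langle q,p\rangle}(s) = M_{2,q}(M_{1,p}(s))$, symbolic execution of $M_2$ on the right-hand sides of $M_1$'s rules, and---the genuinely delicate point, which you correctly isolate---pushing $M_2$'s regular look-ahead back through $M_1$ by showing that the map $s \mapsto (\text{$\mathcal{A}$-state of } M_{1,p'}(s)$, or ``undefined''$)$ is computable by a bottom-up automaton over $\Omega$; this is the standard inverse-image-regularity argument, and your bottom-up automaton $\mathcal{B}$ (which must also track $M_1$'s own look-ahead states to select the applicable $M_1$-rule, as you note) is the right way to realize it. Your linearity argument ($M_1$ linear gives each $y_i$ at most once in $\zeta$; $M_2$ linear gives that each node of $\zeta$ is visited along at most one branch) is precisely the ``constructions preserve linearity'' observation that the paper delegates to the cited proofs, and your handling of partiality (recording undefinedness in the profile) and of determinism (disjointness of look-ahead via distinct profiles) closes the remaining fine points. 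What the citation buys the paper is brevity and reuse of known results; what your approach buys is an explicit, verifiable construction that makes visible exactly where linearity, determinism, and look-ahead interact.
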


\begin{proof}
  This is stated after~\cite[Theorem~2.11]{eng77}.  Part~(2) of its
  proof shows the statement because the constructions in the proofs
  of~\cite[Lemmas~2.9 and~2.10]{eng77} preserve linearity.
\end{proof}

An \LDTR"~transducer $M$ is a \emph{finite-state relabeling} if, in each of its rules as above,
$\zeta$ is of the form $\sigma(\langle q_1,y_1\rangle,\dotsc,\langle q_k,y_k\rangle)$ for some
$\sigma\in \Sigma^{(k)}$ and $\seq q 1k \in Q$. Such a transducer just changes the labels 
of the nodes of the input tree. Note that every projection is a finite-state relabeling. 

\begin{proposition}
  \label{pro:LDTRinv}
  For every \LDTR"~transducer $M = (Q, \Omega, \Sigma, q_0, R)$ 
  there is a polynomial time algorithm that, for every 
  RTG~$H$ over~$\Sigma$ as input, outputs 
  an RTG~$H'$ over~$\Omega$ such that~$L(H') = M^{-1}(L(H))$.
  If~$M$ is a finite-state relabeling, then there is a linear time algorithm 
  for the same task. 
\end{proposition}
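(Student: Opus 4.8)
The plan is to build $H'$ by a product construction that simulates $M$ and $H$ in parallel, bottom-up over the input tree. Write $H = (N, \Sigma, S_H, R_H)$. First I would normalize the regular look-ahead of $M$: by determinizing the finitely many look-ahead RTGs occurring in the rules of $M$ and forming their product, I obtain a single deterministic bottom-up tree automaton over~$\Omega$ with finite state set~$P$ and transition function~$\delta$, such that for every subtree~$s$ the state $\delta^*(s) \in P$ reached on~$s$ decides membership of~$s$ in each look-ahead language occurring in~$M$. This preprocessing depends only on the fixed transducer~$M$, so it costs~$O(1)$, and afterwards every rule of~$M$ may be assumed to require a specific look-ahead state for each child (and hence, by~$\delta$, for the current node).

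The nonterminals of~$H'$ are the triples $\langle q, A, p\rangle$ with $q \in Q$, $A \in N$, and $p \in P$, together with auxiliary nonterminals $[p]$ for $p \in P$ and a fresh initial nonterminal~$S'$. The intended invariant, to be proved by induction on~$s$, is
\[ L(H', \langle q, A, p\rangle) = \{\, s \in T_\Omega \mid \delta^*(s) = p,\ M_q(s) \text{ is defined, and } A \Rightarrow_H^* M_q(s) \,\}, \]
and $L(H', [p]) = \{\, s \in T_\Omega \mid \delta^*(s) = p \,\}$, whose rules are read off directly from~$\delta$. For $\langle q, A, p\rangle$ I proceed as follows: for each rule $r$ of~$M$ with input symbol~$\omega$, right-hand side~$\zeta$, and required child look-ahead states $(\seq p1k)$ satisfying $\delta(\omega, \seq p1k) = p$, and for each assignment~$\psi$ of nonterminals of~$N$ to the leaves of~$\zeta$ labeled by pairs in $Q \times Y_k$ such that $A \Rightarrow_H^* \zeta[\langle q', y_i\rangle \gets \psi(\langle q', y_i\rangle)]$ (a derivation of the sentential form~$\zeta$ carrying those nonterminal leaves), I add the rule
\[ \langle q, A, p\rangle \to \omega(\xi_1, \dotsc, \xi_k), \]
where $\xi_i = \langle q', \psi(\langle q', y_i\rangle), p_i\rangle$ if the (by linearity unique) pair $\langle q', y_i\rangle$ occurs in~$\zeta$, and $\xi_i = [p_i]$ otherwise (the deleted case). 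Finally~$S'$ receives a copy $S' \to \omega(\xi_1, \dotsc, \xi_k)$ of every rule $\langle q_0, S_H, p\rangle \to \omega(\xi_1, \dotsc, \xi_k)$ over all $p \in P$, so that $L(H') = \bigcup_{p \in P} L(H', \langle q_0, S_H, p\rangle) = M^{-1}(L(H))$ by the invariant.

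The correctness of the invariant is a routine induction, using determinism of~$M$ to make $M_q(s)$ well defined and the independence of subderivations in~$H$ to split $A \Rightarrow_H^* \zeta[\langle q', y_i\rangle \gets M_{q'}(s_i)]$ into a derivation of~$\zeta$ followed by the derivations $\psi(\langle q', y_i\rangle) \Rightarrow_H^* M_{q'}(s_i)$. For the complexity I note that, with~$M$ (hence~$P$, $\delta$, and the rules of~$M$) fixed, $H'$ has $O(\abs N)$ nonterminals. The only nontrivial step is enumerating the valid pairs $(A, \psi)$ for each fixed right-hand side~$\zeta$; since~$\zeta$ has constant size, a single bottom-up pass over~$\zeta$ computes, for every node, the relation between the nonterminal sitting there and the tuple of leaf-nonterminals below it, using the rules of~$H$. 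Each such relation has size polynomial in~$\abs N$, so all rules of~$H'$ are produced in time polynomial in the size of~$H$. The linearity of~$M$ is exactly what keeps this polynomial: because each variable~$y_i$ occurs at most once in~$\zeta$, each child is subject to at most one translation constraint, so a single nonterminal $\langle q, A, p\rangle$ (rather than a set of output nonterminals) suffices and no exponential subset construction is incurred.

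For a finite-state relabeling each~$\zeta$ equals $\sigma(\langle q_1, y_1\rangle, \dotsc, \langle q_k, y_k\rangle)$, so the derivation condition $A \Rightarrow_H^* \zeta[\,\cdots]$ collapses to the single-step requirement that $A \to \sigma(\psi_1, \dotsc, \psi_k)$ be a rule of~$H$ (recall that~$H$ is in normal form); moreover~$\zeta$ is nondeleting, so the nonterminals~$[p]$ are not needed. Hence every rule $A \to \sigma(\seq B1k)$ of~$H$ contributes, for each compatible rule of~$M$ and each compatible look-ahead tuple (constantly many, as~$M$ is fixed), one rule $\langle q, A, p\rangle \to \omega(\langle q_1, B_1, p_1\rangle, \dotsc, \langle q_k, B_k, p_k\rangle)$ of~$H'$. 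This is a constant-factor expansion of~$R_H$, giving the claimed linear-time algorithm. The main obstacle here is the complexity bookkeeping---keeping the product nondeterministic and invoking linearity so as to avoid the exponential blow-up of determinization---rather than the correctness of the construction itself.
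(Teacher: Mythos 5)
Your proposal is correct---the invariant, the splitting of derivations (which works precisely because $H$ is in normal form, so every position of a derived tree is created by rewriting a nonterminal at that position), and the complexity analysis all go through---but it takes a genuinely different route from the paper's proof. The paper never builds a product automaton: it first invokes the decomposition of \LDTR-transductions as bimorphisms (via \cite[Theorem~2.8]{eng77} and the proof of \cite[Theorem~3.5]{eng75}), writing $M = \{(\hat{\pi}(t), \hat{h}(t)) \mid t \in K\}$ with $K$ a fixed regular tree language, $\pi$ a projection, and $h$ a tree homomorphism, so that $M^{-1}(L(H)) = \hat{\pi}(\hat{h}^{-1}(L(H)) \cap K)$. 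Look-ahead, deletion, and the state behavior of $M$ are thereby absorbed into the fixed objects $K$ and $\pi$ (intersection and projection cost only linear time), and the sole input-dependent work is the inverse-homomorphism step: keep the nonterminals of $H$ and add $A_0 \to \omega(\seq A1k)$ whenever $A_0 \Rightarrow^*_H h(\omega)[x_i \gets A_i \mid 1 \leq i \leq k]$. Your construction instead simulates transducer state, $H$-nonterminal, and determinized look-ahead state simultaneously in triples $\langle q, A, p\rangle$, with the nonterminals $[p]$ covering deleted children. The two arguments share the same computational core---enumerating, for each fixed right-hand side, the assignments of nonterminals of $H$ to its variable positions for which the sentential-form derivation $A \Rightarrow^*_H \zeta[\cdots]$ exists, which is exactly where both obtain a polynomial whose degree depends only on $M$---and both collapse that check to a single normal-form rule lookup in the relabeling case (the paper by observing that $h$ is then itself a projection, you by observing that the derivation condition becomes one rule application). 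What the paper's route buys is brevity and modularity, at the price of citing nontrivial structure theory of \LDTR-transducers; what your route buys is a self-contained argument that also makes explicit why linearity matters: each child carries at most one translation constraint, so a single $H$-nonterminal per child suffices and no subset construction (hence no exponential blow-up) is incurred, which is what a copying transducer would force.
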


\begin{proof}
  It is well known that the class~RT is closed under inverse
  \LDTR"~transductions~\cite[Lemma~1.2 and Theorem~2.6]{eng77}.  We
  now show that the transformation can be realized in polynomial time,
  for fixed~$M$.  By~\cite[Theorem~2.8]{eng77} and (the proof
  of)~\cite[Theorem~3.5]{eng75}, the transduction~$M$ can be written
  as the ``bimorphism''~$\{(\hat{\pi}(t), \hat{h}(t)) \mid t \in K\}$, where
  $K$~is a regular tree language over a finite alphabet~$\Delta$,
  $\pi$~is a projection from~$\Delta$ to~$\Omega$, and $h$~is a
  tree homomorphism from~$\Delta$ to~$\Sigma$.  Therefore
  $M^{-1}(L(H)) = \hat{\pi}(\hat{h}^{-1}(L(H)) \cap K)$.  Hence, since the intersection
  with~$K$ and the projection~$\hat{\pi}$ can be realized in linear time
  because $K$~and~$\pi$ are fixed, we may assume in the remainder of this proof that $M$~is a
  tree homomorphism $h$ from $\Omega$ to $\Sigma$.

  Now let $H = (N, \Sigma, S, R_H)$.  As mentioned at the end of Section~\ref{sub:trees},
  we assume that $H$~is in
  normal form; i.e., that the rules in~$R_H$ are of the form~$A_0 \to
  \sigma(\seq A1m)$ with~$m \in \nat_0$, $\sigma \in \Sigma^{(m)}$,
  and~$\seq A1m \in N$.  We
  construct $H' = (N, \Omega, S, R')$ such that for every $k \in
  \nat_0$, $\omega \in \Omega^{(k)}$, and $A_0,\seq A1k \in N$, 
  if $A_0
  \Rightarrow^*_H h(\omega)[x_i \gets A_i \mid 1 \leq i
  \leq k]$, then the rule~$A_0 \to \omega(\seq A1k)$ is in~$R'$.
  It is straightforward to show that $L(H', A) = \hat{h}^{-1}(L(H, A))$ for
  every~$A \in N$.  It should be clear that the construction of~$H'$
  takes polynomial time (in the size of~$H$). In fact, it
  takes time $O(n^k)$ where $n$ is the size of $H$ and $k=\mr_\Omega+1$ 
  (and recall that $\mr_\Omega$ is the maximal rank of the symbols in~$\Omega$). 
  If $M$ is a finite-state relabeling, then it can be checked that $h$ 
  is also a projection.
  Hence the set $R'$ can be constructed such that if $h(\omega)=\init(\sigma)$
  and $A_0 \to \sigma(\seq A 1k)$ is in~$R_H$, then $A_0 \to \omega(\seq A1k)$ is in~$R'$.
  That construction only takes linear time. 
\end{proof}

We now define $\X$"~equivalence of MCFTGs $G$~and~$G'$ for
$\X = \text{\LDTR}$.  However, for future use, we give a more general
definition that  involves a tree transformation~$\varphi$ and implies
that $L(G') = \varphi(L(G))$.

\begin{definition}
  \label{def:LDTReq}
  \upshape
  Let $G = (N, \N, \Sigma, S, R)$ and $G' = (N', \N', \Sigma', S',
  R')$ be MCFTGs, and  let $\varphi$~be a mapping from~$T_\Sigma$
  to~$T_{\Sigma'}$.  The grammar~$G'$ is
  \emph{\LDTR"~$\varphi$"~equivalent} to the grammar~$G$ if there
  exist tree transductions $M \colon T_R \to T_{R'}$~and~$M' \colon
  T_{R'} \to T_R$ in~\LDTR\@ such that
  \begin{compactenum}[\indent\upshape (1)]
  \item $M(d) \in L(G'_\der)$~and~$\val(M(d)) = \varphi(\val(d))$ for
    every~$d \in L(G_\der)$, and vice versa,   
  \item $M'(d') \in L(G_\der)$~and~$\varphi(\val(M'(d')))=\val(d')$
    for every~$d' \in L(G'_\der)$.
  \end{compactenum}
  In particular, $M(d)$~must be defined for every~$d \in L(G_\der)$,
  and similarly for~$M'(d')$.  

  The grammars $G$~and~$G'$ are
  \emph{\LDTR"~equivalent} if~$\Sigma = \Sigma'$ and $\varphi$~is the
  identity on~$T_\Sigma$. \fin
\end{definition}

It directly follows from item~(1) and Theorem~\ref{thm:dtree} that
$\varphi(L(G)) \subseteq L(G')$, and $L(G')\subseteq \varphi(L(G))$ 
follows from item~(2).  Hence $L(G') = \varphi(L(G))$. 
In particular, \LDTR"~equivalent MCFTGs are
equivalent.  Since \LDTR\@ is closed under composition by
Proposition~\ref{pro:LDTRcomp}, \LDTR"~equivalence of~MCFTGs is an
equivalence relation.  That is, of course, not true for
\LDTR"~$\varphi$"~equivalence in general.

It should be noted that the notion of \LDTR"~$\varphi$"~equivalence is
independent of the linear order of the links in the rules of
$G$~and~$G'$.  In fact, if $\rho = A \to (u, \LL)$~is a rule of~$G$
with~$\LL = \{\seq B1k\}$ and we change that order into~$\{\seq
B{i_1}{i_k}\}$, where $(\seq i1k)$~is a permutation of~$(1,\dots,k)$, then a 
tree homomorphism~$h$ over~$R$ can transform the old derivation trees
into the new ones via~$h(\rho) = \rho(\seq x{i_1}{i_k})$.  That proves
the observation because tree homomorphisms are in~\LDTR\@ and \LDTR~is
closed under composition.  Thus, whenever we construct a new 
grammar $G$~or~$G'$,  we can choose those orders in a convenient way.

As observed above, \LDTR"~equivalent grammars $G$~and~$G'$ are
grammatically closely related by means of the \LDTR"~transducers
$M$~and~$M'$.  Consequently, their parsing problems are closely
related as well because the transducer~$M'$ transforms a derivation
tree of~$G'$ with value~$t \in T_\Sigma$ in linear time into one
of~$G$ with the same value~$t$.  Moreover, if $H'$~is an RTG that
generates all derivation trees of~$G'$ with value~$t$, then an 
RTG~$H$ can be constructed in polynomial time that generates all
derivation trees of~$G$ with value~$t$.  This follows from
Proposition~\ref{pro:LDTRinv} because $L(H) = M^{-1}(L(H')) \cap
L(G_\der)$. The parsing problem for MCFTGs will be discussed in more detail 
in Section~\ref{sub:parsing}.

An important example of a tree transduction that \emph{cannot} be
realized by an \LDTR"~transducer is the transformation of a
left-recursive tree into a right-recursive tree with the same yield.
\begin{figure}
  \centering
  \includegraphics{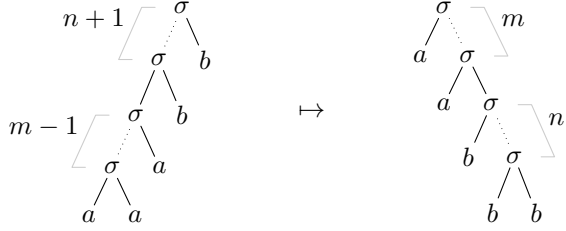}
  \caption{Translating left-recursive into right-recursive trees.}
  \label{fig:leftrighttransform}
\end{figure}
More precisely, for the ranked alphabet $\Sigma = \{\sigma^{(2)},
a^{(0)}, b^{(0)}\}$, the tree transformation 
\[ \tau = \{(\sigma^{m+n}a^mb^{n+1},\, (\sigma a)^m(\sigma b)^nb) \mid
m, n \in \nat_0\} \enspace, \] which translates the left-recursive
tree~$\sigma^{m+n}a^mb^{n+1}$ into the right-recursive tree~$(\sigma
a)^m(\sigma b)^nb$ (see Figure~\ref{fig:leftrighttransform}), cannot
be realized by any \LDTR"~transducer; i.e., $\tau \notin
\text{\LDTR}$.  This can be proved by a classical pumping argument; if
there would be such a transducer, then the language~$\{b^{n+1} a^m
b^{n+1} a^m \mid m, n \in \nat_0\}$ would be linear context-free.  By
a similar argument one can show that there is no yield-preserving
\LDTR"~transducer that translates the derivation trees of the
left-recursive context-free grammar with rules~$S\to Sa$, $S\to Sb$,
$S\to a$, and~$S\to b$ into the derivation trees of an equivalent
context-free grammar in \textsc{Greibach} Normal Form.  In other
words, the transformation of a context-free grammar into
\textsc{Greibach} Normal Form involves a grammatical transformation of
derivation trees that cannot be realized by any \LDTR"~transducer.

\subsection{Derivations}
\label{sub:deriv}
\noindent
In this subsection we present a rewriting semantics of MCFTGs,
inspired by the level grammars of~\cite{skyum74}.  The definitions and
results of this subsection will not be utilized in the other sections,
but we hope that they improve the intuition of the reader concerning
MCFTGs.

Let $G = (N, \N, \Sigma, S, R)$ be an MCFTG.  In a naive approach
we would define the derivation steps of~$G$ on trees~$t \in T_{N \cup
  \Sigma}$ and the application of a rule~$A \to (u, \LL)$ to~$t$
leading to a derivation step~$t \Rightarrow t[A \gets u]$, 
provided that $\alp(A)\subseteq \alp_N(t)$.  
Such a naive derivation is shown in Figure~\ref{fig:derivation} 
for the grammar~$G$ of Example~\protect{\ref{exa:main}}.
\begin{figure}
  \centering
  \includegraphics{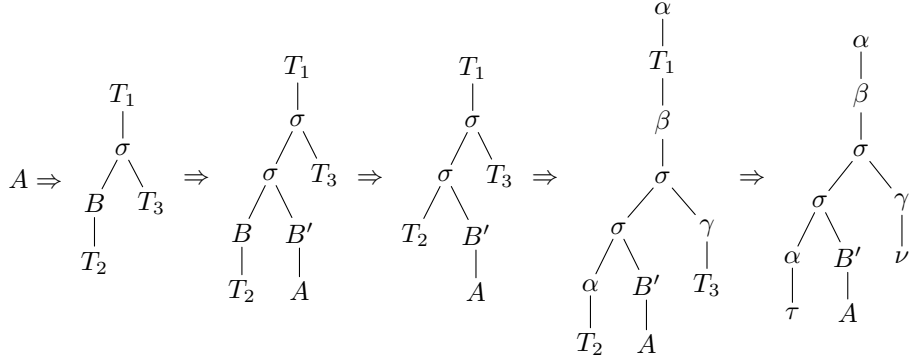}
  \caption{A naive (leftmost) derivation of the grammar~$G$ of
    Example~\protect{\ref{exa:main}}, corresponding to the derivation tree 
    in Example~\protect{\ref{exa:dtree}} and in
    Figure~\protect{\ref{fig:deriv}}.  All big nonterminals of $G$ are
    mutually disjoint and all the trees in this derivation are
    uniquely $N$-labeled.}
  \label{fig:derivation}
\end{figure}
Assuming that all big nonterminals of $G$ are mutually disjoint 
(as in Example~\protect{\ref{exa:main}}),
this naive derivation step works
if $A$~occurs exactly once in~$t$ (e.g., when $t$~is uniquely
$N$"~labeled). However, it fails if $A$ occurs  several times in~$t$ because the rule
is then applied to all occurrences simultaneously.  Moreover, if~$A =
(A_1, A_2)$ with~$A_1, A_2 \in N$, then it is unclear which
occurrences of $A_1$~and~$A_2$ are linked.  
If not all big nonterminals of $G$ are mutually disjoint, 
then it is not clear at all which nonterminals in $t$ are linked 
(even when $t$~is uniquely $N$"~labeled). 
Thus, we additionally have
to keep track of how the nonterminal occurrences in~$t$ are linked
together to form occurrences of big nonterminals.  To facilitate this,
we change for each position~$p \in \pos_N(t)$ of~$t$ the label~$t(p)$
into an appropriate label~$\langle t(p), \ell\rangle$, where $\ell \in
\nat^*$~is a position, which is also called \emph{link identifier}.
Nonterminal occurrences with the same link identifier~$\ell$ are
linked, and we only derive uniquely $(N \times \nat^*)$"~labeled
trees.  We note that the positions $p$~and~$\ell$ need not coincide.
In fact, $\ell$~is a position of the derivation tree corresponding to
the derivation. 

We need additional notation for the formalization.  As in the previous
subsection, we assume that for every rule~$\rho$ of~$G$ the
set~$\LL(\rho)$ of links is linearly ordered.  For a big
nonterminal~$A = (\seq A1n) \in \N$ and a link identifier~$\ell \in
\nat^*$, we define $A \otimes \ell = (\langle A_1, \ell \rangle,
\dotsc, \langle A_n, \ell \rangle) \in (N \times
\nat^*)^{\scriptscriptstyle +}$.  Moreover, for~$\ell \in \nat^*$ and
a rule~$\rho = A \to (u, \LL) \in R$ with $\LL = \{\seq B1k\}$, we
define $(u, \LL) \otimes \ell = u[B_i \gets \init(B_i \otimes \ell i)
\mid 1 \leq i \leq k]$.  Note that $(u, \LL) \otimes \ell$~is a
forest obtained from~$u$ by appropriately relabeling its $N$"~labeled
positions.

Now let $t_1, t_2 \in T_{(N \times \nat^*) \cup \Sigma}$ be trees,
$\rho = A \to (u, \LL) \in R$ be a rule, and $\ell \in \nat^*$ be a
link identifier.  We define the \emph{derivation step}~$t_1
\Rightarrow_G^{\rho, \ell} t_2$ if $\alp(A \otimes \ell) =
\alp_{N \times \{\ell\}}(t_1)$ and $t_2 = t_1[A \otimes \ell \gets (u, \LL) \otimes
\ell]$.  Intuitively, $A \otimes \ell$~occurs in~$t_1$ 
(and no other nonterminals with link identifier $\ell$ occur in~$t_1$)
and the occurrence of $A \otimes \ell$ is replaced by~$(u, \LL) \otimes \ell$.  
We write $t_1 \Rightarrow_G t_2$ if there exist $\rho$~and~$\ell$ such that $t_1
\Rightarrow_G^{\rho, \ell} t_2$.

\begin{figure}
  \centering
  \includegraphics{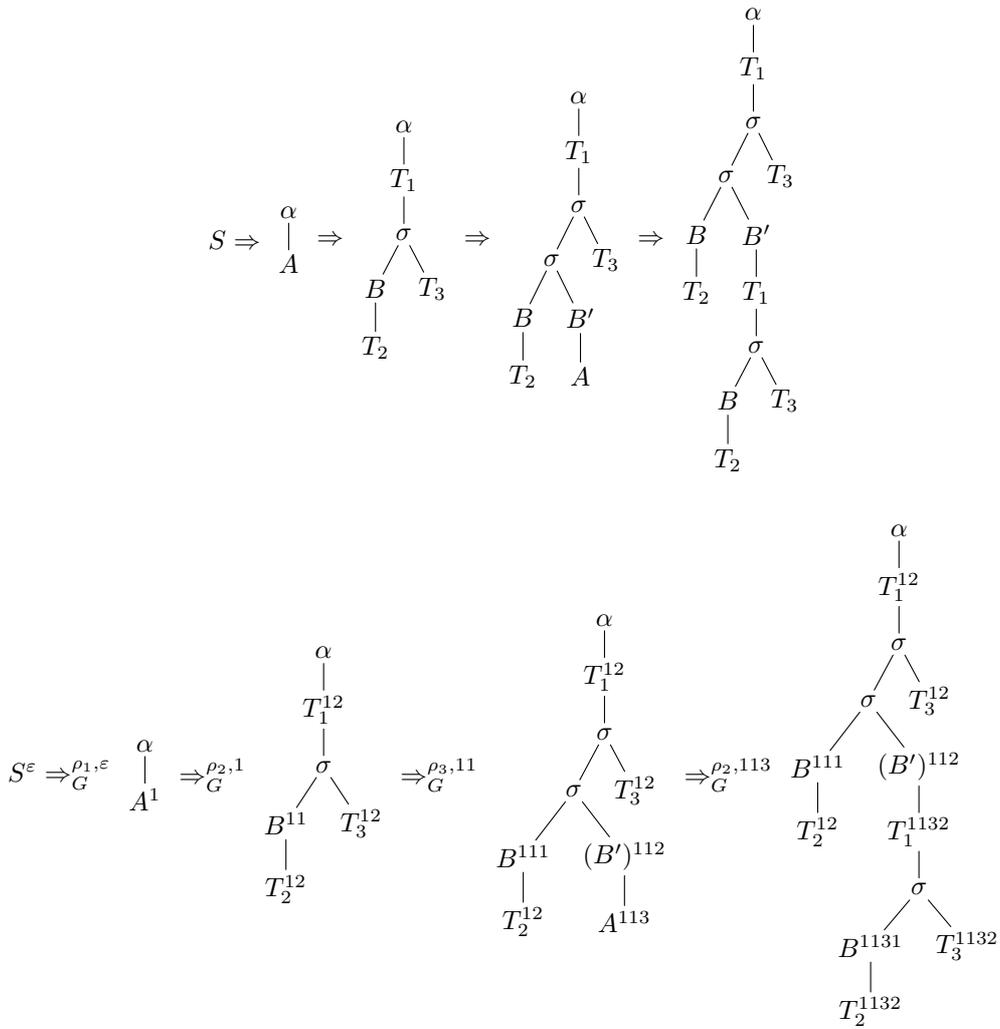}
  \caption{Derivation of the grammar~$G$ of
    Example~\protect{\ref{exa:main}}; naive in the top part and as
    formalized in the bottom part.}
  \label{fig:naivederiv}
\end{figure}

\begin{example}
  \label{exa:derivation}
  \upshape
  Let us consider the derivation tree $d = \rho_1(\rho_2(\rho_3(B, B',
  \rho_2(B, T)), T))$ of the grammar~$G$ of Examples
  \ref{exa:main}~and~\ref{exa:dtree}, where $T = (T_1, T_2, T_3)$.
  Starting with~$S$ and successively applying the rules~$\rho_1$,
  $\rho_2$, $\rho_3$, and~$\rho_2$ according to the naive approach
  yields the derivation presented in the top part of Figure~\ref{fig:naivederiv}.
  It can be checked that the final tree in this derivation
  is~$\val(d)$.  However, now we are in trouble because~$B$, $T_1$,
  $T_2$, and~$T_3$ occur twice.  With the help of the derivation steps
  as defined above and the shorthand~$C^{\ell}$ for~$\langle C,
  \ell\rangle$ with~$C\in N$ and~$\ell \in \nat^*$ we obtain the
  derivation presented in the bottom part of Figure~\ref{fig:naivederiv}.
  In its final tree the occurrences $B^{111}$~and~$B^{1131}$ of~$B$
  can be rewritten independently, and the occurrences of~$T$ are
  distinguished as~$T \otimes 12 = (T_1^{12}, T_2^{12}, T_3^{12})$ and
  $T \otimes 1132 = (T_1^{1132}, T_2^{1132}, T_3^{1132})$ and can be
  rewritten independently by $\rho_5$~or~$\rho_6$.  
  Note that $111=(1,1,1)$ and $1131=(1,1,3,1)$ are the positions of $d$ with label $B$, 
  $12=(1,2)$ and $1132=(1,1,3,2)$ are the positions of $d$ with label $T$,
  and $112=(1,1,2)$ is the unique position of $d$ with label~$B'$. \fin
\end{example}

We wish to prove that $L(G) = \{t \in T_\Sigma \mid S \otimes
\varepsilon \Rightarrow_G^* t\}$; note that~$S \otimes \varepsilon =
\langle S, \varepsilon\rangle$.  To that end, we define an infinite
MCFTG~$G^\infty$ using the properly annotated nonterminals and show
that it is equivalent to~$G$.  An infinite~MCFTG is defined as in
Definition~\ref{def:mcftg} except that~$N$, $\N$, and~$R$ are allowed
to be infinite (and similarly, in an infinite RTG $N$ and $R$ are 
allowed to be infinite). It is easy to check that all the definitions and
results for MCFTGs discussed until now are also valid for infinite
MCFTGs and infinite RTGs.  In particular, the derivation tree
grammar~$G^\infty_\der$ of~$G^\infty$ is infinite.

The infinite MCFTG is given by $G^\infty = (N^\infty, \N^\infty,
\Sigma, S^\infty, R^\infty)$ with nonterminals~$N^\infty = N \times
\nat^*$, big nonterminals $\N^\infty = \N \otimes \nat^* = \{A \otimes
\ell \mid A \in \N,\, \ell \in \nat^*\}$, initial
nonterminal~$S^\infty = S \otimes \varepsilon$, and rules~$R^\infty$
determined as follows.  If $\ell \in \nat^*$~and~$\rho = A \to (u,
\LL) \in R$ with~$\LL = \{\seq B1k\}$, then $R^\infty$~contains the
rule~$\rho \otimes \ell = A \otimes \ell \to ((u,\LL) \otimes \ell,\,
\LL \otimes \ell)$, where $\LL \otimes \ell = \{B_1 \otimes \ell1,
\dotsc, B_k \otimes \ell k\}$.  Note that $\rho$~can be reconstructed
from~$\rho \otimes \ell$.

\begin{lemma}
  \label{lem:infinite}
  $L(G^\infty) = L(G)$. 
\end{lemma}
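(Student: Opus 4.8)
The plan is to reduce everything to the derivation-tree characterization of Theorem~\ref{thm:dtree}, which the text notes remains valid for infinite MCFTGs, together with the observation that a derivation tree of $G^\infty$ is simply a derivation tree of $G$ in which every position carries its own \textsc{Dewey} address as link identifier. To make this precise I would first extend the operation $\otimes$ from big nonterminals and rules to derivation trees: for $d \in T_R$ and $\ell \in \nat^*$ define $d \otimes \ell \in T_{R^\infty}$ inductively by $(\rho(\seq d1k)) \otimes \ell = (\rho \otimes \ell)(d_1 \otimes \ell 1, \dotsc, d_k \otimes \ell k)$, which is consistent with the existing notations $A \otimes \ell$ and $\rho \otimes \ell$.

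Next I would prove, by induction on the RTG derivations of $G_\der$, that for every $A \in \N$ and $\ell \in \nat^*$ the map $d \mapsto d \otimes \ell$ is a \emph{bijection} from $L(G_\der, A)$ onto $L(G^\infty_\der, A \otimes \ell)$. For the forward inclusion one checks the local derivation-tree condition: if $d = \rho(\seq d1k)$ with $\rho = A \to (u, \LL)$ and $\LL = \{\seq B1k\}$, then $\rho \otimes \ell$ has left-hand side $A \otimes \ell$ and links $\LL \otimes \ell = \{B_1 \otimes \ell 1, \dotsc, B_k \otimes \ell k\}$, so the $i$-th subtree $d_i \otimes \ell i$ has the required type $B_i \otimes \ell i$ by the induction hypothesis. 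Injectivity is immediate since the projection $\pi \colon R^\infty \to R$ with $\pi(\rho \otimes \ell) = \rho$ satisfies $\hat\pi(d \otimes \ell) = d$ (here I use that $\rho$ can be reconstructed from $\rho \otimes \ell$). Surjectivity follows because in any $d^\infty \in L(G^\infty_\der, A \otimes \ell)$ the link identifiers are forced: its root must be labeled $\rho \otimes \ell$ for some $\rho$ with $\lhs(\rho) = A$ (as $A \otimes \ell$ determines both $A$ and $\ell$, big nonterminals being nonempty), and then each $i$-th child has type $B_i \otimes \ell i$, so by induction $d^\infty = d \otimes \ell$ with $d = \hat\pi(d^\infty) \in L(G_\der, A)$.

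The crux is then to show that this bijection preserves values, i.e., $\val(d \otimes \ell) = \val(d)$ for every $d \in L(G_\der, A)$; note that both values lie in $P_\Sigma(X)^{\scriptscriptstyle +}$, hence contain no nonterminals, by Theorem~\ref{thm:dtree} together with $L(G, A), L(G^\infty, A \otimes \ell) \subseteq P_\Sigma(X)^*$. I would argue again by induction on $d$. In the step $d = \rho(\seq d1k)$ with $\rho = A \to (u, \LL)$, the definition of $\val$ gives $\val(d \otimes \ell) = ((u, \LL) \otimes \ell)[B_i \otimes \ell i \gets \val(d_i \otimes \ell i) \mid 1 \leq i \leq k]$, and the induction hypothesis yields $\val(d_i \otimes \ell i) = \val(d_i)$. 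Since $(u, \LL) \otimes \ell = u[B_i \gets \init(B_i \otimes \ell i) \mid 1 \leq i \leq k]$, this is a composition of two second-order substitutions: the first relabels each $B_i$ of $u$ into $B_i \otimes \ell i$, the second substitutes $\val(d_i)$ for $B_i \otimes \ell i$. Working over the combined ranked alphabet $N \cup N^\infty \cup \Sigma$, the nonterminals $B_i \otimes \ell i$ do not occur in $u$, so the hypothesis of Lemma~\ref{lem:comm-assoc}(4) is satisfied (the relevant intersection is empty), and it collapses the two substitutions into $u[B_i \gets \init(B_i \otimes \ell i)[B_i \otimes \ell i \gets \val(d_i)] \mid i] = u[B_i \gets \val(d_i) \mid i] = \val(d)$, using the identity $\init(\sigma)[\sigma \gets v] = v$ from Section~\ref{sub:sub}.

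Finally, specializing to $A = S$ and $\ell = \varepsilon$ produces a value-preserving bijection between $L(G_\der) = L(G_\der, S)$ and $L(G^\infty_\der) = L(G^\infty_\der, S^\infty)$, whence $L(G) = \val(L(G_\der)) = \val(L(G^\infty_\der)) = L(G^\infty)$ by Theorem~\ref{thm:dtree}. I expect the main obstacle to be precisely the value-preservation step: one must be careful that the two substitutions act over the \emph{disjoint} nonterminal alphabets $N$ and $N^\infty$ and that the side conditions of Lemma~\ref{lem:comm-assoc}(4) are genuinely verified, rather than merely invoking the ``obvious'' cancellation of $\init(B_i \otimes \ell i)$ against the substitution for $B_i \otimes \ell i$.
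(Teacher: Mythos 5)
Your proof is correct and takes essentially the same route as the paper: the paper also works with the position-tagging map (its $d'$ with $d'(p) = d(p) \otimes p$ is exactly your $d \otimes \varepsilon$), its label-erasing projection $\rem$ is your $\hat\pi$, and its key computation is the same collapse of the two second-order substitutions via Lemma~\ref{lem:comm-assoc}(4) and $\init(B_i \otimes \ell i)[B_i \otimes \ell i \gets \val(d_i)] = \val(d_i)$, followed by the same appeal to Theorem~\ref{thm:dtree}. The only (cosmetic) difference is that you package the argument as a value-preserving bijection and run the value-preservation induction in the $G \to G^\infty$ direction, whereas the paper proves $\val(\rem(d)) = \val(d)$ for derivation trees of $G^\infty$ and then obtains the converse inclusion for free from $\rem(d') = d$, thereby avoiding any explicit surjectivity claim.
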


\begin{proof}
  If $d$~is a derivation tree of~$G^\infty$, then we denote
  by~$\rem(d)$ the derivation tree of~$G$ that is obtained by removing
  all link identifiers~$\ell$ from the labels of its nodes; i.e., a
  label~$\rho \otimes \ell \in R^\infty$ is changed into~$\rho$, and
  $A \otimes \ell \in \N^\infty$~is changed into~$A$.  It is
  straightforward to show by induction on the structure of~$d$ that $d
  \in L(G^\infty_\der, A \otimes \ell)$ implies both~$\rem(d) \in
  L(G_\der, A)$ and $\val(\rem(d)) = \val(d)$.  Indeed, if $d = (\rho
  \otimes \ell)(\seq d1k)$, then $\rem(d) = \rho(\rem(d_1), \dotsc,
  \rem(d_k))$ and  
  \begin{align*}
    \val(d) 
    &= ((u, \LL) \otimes \ell)[B_i \otimes \ell i \gets \val(d_i) \mid
      1 \leq i \leq k] \\
    &= u[B_i \gets \init(B_i \otimes \ell i) \mid 1 \leq i \leq k] \,
      [B_i \otimes \ell i \gets \val(d_i) \mid 1 \leq i \leq k] \\
    &= u \bigl[B_i \gets \init(B_i \otimes \ell i) [B_i \otimes \ell i \gets
      \val(d_i) \mid 1 \leq i \leq k] \mid 1 \leq i \leq k \bigr] =
      u[B_i \gets \val(d_i) \mid 1 \leq i \leq k] \\
    &= u[B_i \gets \val(\rem(d_i)) \mid 1 \leq i \leq k] =
      \val(\rem(d)) \enspace,
  \end{align*}
  where the third equality is by Lemma~\ref{lem:comm-assoc}(4) and the
  fifth by the induction hypotheses.  Taking $A \otimes \ell = S \otimes
  \varepsilon$, we thus obtain that~$L(G^\infty) \subseteq L(G)$ by
  Theorem~\ref{thm:dtree}.  In the other direction, we consider a
  derivation tree~$d \in L(G_\der, S)$, and let~$d'$ be the tree such
  that $\pos(d') = \pos(d)$ and $d'(p) = d(p) \otimes p$ for every $p
  \in \pos(d)$; i.e., we change the label~$d(p)$ of each position~$p$
  into~$d(p) \otimes p$.  Obviously, $d' \in L(G^\infty_\der, S
  \otimes \varepsilon)$~and~$\rem(d') = d$.  Hence, by the above,
  $d'$~has the same value as~$d$, which shows that~$L(G) \subseteq
  L(G^\infty)$.
\end{proof}

\begin{lemma}
  \label{lem:occlink}
Let $d \in \DL(G^\infty_\der, S \otimes \varepsilon)$ and $A \otimes\ell \in \N^\infty$. 
Then $A \otimes\ell\in\alp_{\N^\infty}(d)$ if and only if
$\alp(A \otimes\ell)=\alp_{N\times \{\ell\}}(\val(d))$.
\end{lemma}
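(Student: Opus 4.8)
The plan is to derive the lemma from a stronger statement that pins down, for \emph{every} link identifier $m$, which small nonterminals with identifier $m$ occur in $\val(d)$. First I would record the structural invariant that in any $d\in\DL(G^\infty_\der, C\otimes\ell)$ the node at position $q$ carries link identifier $\ell q$; this is immediate from the rules of $G^\infty_\der$, which rewrite $C\otimes\ell$ only by $(\rho\otimes\ell)(B_1\otimes\ell 1,\dotsc,B_k\otimes\ell k)$. Consequently distinct nodes of $d$ carry distinct link identifiers, so for each $m$ there is at most one big nonterminal of the form $B\otimes m$ occurring in $d$, which makes the statement below unambiguous.

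The strengthened claim is that for every $d\in\DL(G^\infty_\der, C\otimes\ell)$ and every $m\in\nat^*$,
\[
\alp_{N\times\{m\}}(\val(d)) =
\begin{cases}
\alp(B\otimes m) & \text{if } B\otimes m\in\alp_{\N^\infty}(d)\text{ for some }B\in\N,\\
\emptyset & \text{otherwise.}
\end{cases}
\]
Granting this, the lemma is the instance $C\otimes\ell = S\otimes\varepsilon$ with $m=\ell$. The forward direction is then immediate. For the converse, note that $\alp(A\otimes\ell)$ is nonempty because $A$ is a nonempty sequence; so if $\alp(A\otimes\ell)=\alp_{N\times\{\ell\}}(\val(d))$, the second case of the claim is excluded and there is a leaf $B\otimes\ell$ of $d$ with $\alp(B\otimes\ell)=\alp_{N\times\{\ell\}}(\val(d))=\alp(A\otimes\ell)$. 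Removing link identifiers gives $\alp(A)=\alp(B)$, whence $A=B$ by injectivity of $\alp$ on $\N$ (Definition~\ref{def:mcftg}), so $A\otimes\ell\in\alp_{\N^\infty}(d)$.

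I would prove the claim by induction on the structure of $d$. If $d=C\otimes\ell$ is a single leaf, then $\val(d)=\init(C\otimes\ell)$ contains exactly the small nonterminals $\alp(C\otimes\ell)$, all with identifier $\ell$, matching both cases. If $d=(\rho\otimes\ell)(\seq d1k)$ with $\rho = C\to(u,\LL)$ and $\LL=\{\seq B1k\}$, then
\[
\val(d) = \bigl((u,\LL)\otimes\ell\bigr)\bigl[B_i\otimes\ell i \gets \val(d_i)\mid 1\le i\le k\bigr].
\]
Writing $w=(u,\LL)\otimes\ell$, the partition property of $\LL$ together with the unique $N$-labeledness of $u$ gives that the identifier-$\ell i$ nonterminals of $w$ are exactly $\alp(B_i\otimes\ell i)$ and that $w$ has no nonterminal whose identifier equals $\ell$ or lies outside $\{\ell 1,\dotsc,\ell k\}$. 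By the induction hypothesis, every identifier occurring in $\val(d_i)$ extends $\ell i$, so these identifier sets are pairwise disjoint and avoid $\ell$. Since the substitution is a simple tree homomorphism, occurrences are neither copied nor deleted (Lemma~\ref{lem:treehom}). Hence for $m=\ell$ one gets $\emptyset$, consistent with the root being an internal node; while for $m\neq\ell$ the identifier-$m$ nonterminals of $\val(d)$ coincide with those of the single child $\val(d_i)$ for which $\ell i$ is a prefix of $m$ ($w$ contributes nothing new when $m\neq\ell i$, and merely reinserts $\val(d_i)$ when $m=\ell i$). As the big-nonterminal leaves of $d$ with identifier $m$ are precisely those of $d_i$, the claim for $d$ reduces to the induction hypothesis for $d_i$.

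The main obstacle is the bookkeeping of link identifiers through the nested second-order substitutions defining $\val$: one must verify that applying a rule at a node introduces, via $w$, only the sibling identifiers $\ell 1,\dotsc,\ell k$, and that each is either immediately bound by the corresponding child value or, for an unexpanded leaf, faithfully preserved, so that the link identifiers surviving in $\val(d)$ are exactly those of the unexpanded big-nonterminal leaves. The partition property of $\LL$ and the prefix-incomparability of sibling identifiers are precisely what prevent the substitutions from interfering; once these are in place the induction is routine.
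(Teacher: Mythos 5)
Your proof is correct, but it follows a genuinely different route from the paper's. The paper introduces no new induction at all: it makes the same opening observation as you (every node of $d$ at position $p$ is labeled $\alpha\otimes p$ with $\alpha\in\N\cup R$, so link identifiers are positions and at most one big nonterminal with identifier $\ell$ can occur in $d$), and then obtains both directions from Lemma~\ref{lem:valocc}(2) applied to the infinite grammar $G^\infty$ — the paper has stipulated earlier that all results proved for MCFTGs remain valid for infinite ones. That lemma already says that a nonterminal $\langle C,\ell\rangle$ occurs in $\val(d)$ exactly when some big nonterminal containing it occurs in $d$; combined with the uniqueness observation this yields the forward direction, and the converse then follows from nonemptiness of $\alp(A\otimes\ell)$ together with injectivity of $\alp$ on $\N$ (second item of Definition~\ref{def:mcftg}), exactly as in your final step. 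What you do instead is re-prove, by structural induction with explicit identifier bookkeeping, a strengthened claim that determines $\alp_{N\times\{m\}}(\val(d))$ for \emph{every} identifier $m$; this is in effect an inlined, identifier-refined version of Lemma~\ref{lem:valocc}(2) specialized to $G^\infty$, resting on Lemma~\ref{lem:treehom} just as the paper's proof of Lemma~\ref{lem:valocc} does. Your route is self-contained and makes visible \emph{why} the nested second-order substitutions cannot interfere (sibling identifiers $\ell 1,\dotsc,\ell k$ are prefix-incomparable and the substituted forests only mention identifiers extending them), which the paper leaves encapsulated inside the earlier counting lemma; the paper's route is shorter because it reuses that lemma rather than redoing the induction. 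Both arguments ultimately rest on the same two pillars — identifiers coincide with positions, and $\alp$ is injective on $\N$ — so the difference is one of economy versus explicitness rather than of substance.
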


\begin{proof}
We first observe that for every position $p\in\pos(d)$
there exists $\alpha\in \N\cup R$ such that $d(p)=\alpha\otimes p$,
cf.\@ the proof of Lemma~\ref{lem:infinite}. 
Thus, if $A \otimes\ell$ occurs in $d$ then it occurs exactly once in $d$
and no $B \otimes\ell$ occurs in $d$ with $B\neq A$.

Let $A \otimes\ell\in\alp_{\N^\infty}(d)$. Then  
$\alp(A \otimes\ell)\subseteq\alp_{N\times \{\ell\}}(\val(d))$ by Lemma~\ref{lem:valocc}(2). 
Moreover, if $\langle C,\ell\rangle\in \alp_{N\times \{\ell\}}(\val(d))$ 
then there exists $B\in \N$ such that $C\in\alp(B)$ and $B \otimes\ell\in\alp_{\N^\infty}(d)$.
From the above observation we obtain that $B=A$ and so $\langle C,\ell\rangle\in \alp(A \otimes\ell)$. 

Now let $\alp(A \otimes\ell)=\alp_{N\times \{\ell\}}(\val(d))$. 
From the inclusion $\alp(A \otimes\ell)\subseteq\alp_{N\times \{\ell\}}(\val(d))$
we obtain, by Lemma~\ref{lem:valocc}(2) and the above observation, 
that there exists $B\in \N$ such that 
$B \otimes\ell\in\alp_{\N^\infty}(d)$ and $\alp(A\otimes\ell)\subseteq\alp(B\otimes\ell)$. 
Hence $\alp(A\otimes\ell)=\alp(B\otimes\ell)$ by the previous paragraph, and so $A=B$ by 
the second item of Definition~\ref{def:mcftg}.
\end{proof}

\begin{theorem}
  \label{thm:deriv}
  $L(G) = \{t \in T_\Sigma \mid S \otimes \varepsilon \Rightarrow_G^*
  t\}$.
\end{theorem}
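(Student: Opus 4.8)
The plan is to reduce the statement to the infinite grammar $G^\infty$ and its derivation trees, exploiting that a derivation step $t_1\Rightarrow_G^{\rho,\ell}t_2$ is nothing but the application of the rule $\rho\otimes\ell$ of~$G^\infty$. First I would recall from the observation at the start of the proof of Lemma~\ref{lem:occlink} that every $d\in\DL(G^\infty_\der,S\otimes\varepsilon)$ satisfies $d(q)=\alpha\otimes q$ for every $q\in\pos(d)$ and suitable $\alpha\in\N\cup R$; in particular each label carries its own position as link identifier, so a big nonterminal $A\otimes\ell$ occurs \emph{at most once} in~$d$, namely at position~$\ell$. The heart of the proof is the following claim: for every tree $t\in T_{N^\infty\cup\Sigma}$ one has $S\otimes\varepsilon\Rightarrow_G^*t$ if and only if there is a derivation tree $d\in\DL(G^\infty_\der,S\otimes\varepsilon)$ with $\val(d)=t$.

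For the ``only if'' direction I would induct on the number of derivation steps. The base case $t=S\otimes\varepsilon$ is witnessed by the one-node tree $d=S\otimes\varepsilon$, for which $\val(d)=\init(S\otimes\varepsilon)=t$. For the inductive step, let $S\otimes\varepsilon\Rightarrow_G^*t_1\Rightarrow_G^{\rho,\ell}t$ with $\rho=A\to(u,\LL)$ and $\LL=\{\seq B1k\}$, and take $d_1$ with $\val(d_1)=t_1$ by the induction hypothesis. The precondition $\alp(A\otimes\ell)=\alp_{N\times\{\ell\}}(t_1)=\alp_{N\times\{\ell\}}(\val(d_1))$ gives $A\otimes\ell\in\alp_{\N^\infty}(d_1)$ by Lemma~\ref{lem:occlink}, so $A\otimes\ell$ occurs exactly once as a leaf of~$d_1$. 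Setting $d=d_1[A\otimes\ell\gets d']$ with $d'=(\rho\otimes\ell)(B_1\otimes\ell1,\dotsc,B_k\otimes\ell k)$ yields a derivation tree of type $S\otimes\varepsilon$, and since $\val(d')=((u,\LL)\otimes\ell)[B_i\otimes\ell i\gets\init(B_i\otimes\ell i)\mid 1\leq i\leq k]=(u,\LL)\otimes\ell$, Lemma~\ref{lem:valsub} gives $\val(d)=\val(d_1)[A\otimes\ell\gets(u,\LL)\otimes\ell]=t_1[A\otimes\ell\gets(u,\LL)\otimes\ell]=t$.

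For the ``if'' direction I would induct on the number $\abs{\pos_{R^\infty}(d)}$ of rule-labeled nodes of~$d$. If it is~$0$, then $d=S\otimes\varepsilon$ and $t=\val(d)=S\otimes\varepsilon$, reachable in zero steps. Otherwise pick a rule node $p$ of maximal depth; all its children are then nonterminal leaves, so the subtree of~$d$ at~$p$ equals $d'=(\rho\otimes\ell)(B_1\otimes\ell1,\dotsc,B_k\otimes\ell k)$ for some $\rho=A\to(u,\LL)$ with $\LL=\{\seq B1k\}$ and, by the invariant, $\ell=p$. Let $d_1$ be $d$ with this subtree replaced by the single leaf $A\otimes\ell$; it is again a derivation tree of type $S\otimes\varepsilon$ with one fewer rule node, and since $A\otimes\ell$ occurs exactly once in~$d_1$ we have $d=d_1[A\otimes\ell\gets d']$. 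By the induction hypothesis $S\otimes\varepsilon\Rightarrow_G^*\val(d_1)$. Now Lemma~\ref{lem:occlink} applied to~$d_1$ gives $\alp(A\otimes\ell)=\alp_{N\times\{\ell\}}(\val(d_1))$, and Lemma~\ref{lem:valsub} together with $\val(d')=(u,\LL)\otimes\ell$ gives $\val(d)=\val(d_1)[A\otimes\ell\gets(u,\LL)\otimes\ell]$. Hence $\val(d_1)\Rightarrow_G^{\rho,\ell}\val(d)=t$, completing the derivation.

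Finally I would harvest the theorem. Restricting the claim to $t\in T_\Sigma$ and using Lemma~\ref{lem:valocc}(3) for~$G^\infty$ (so that $\val(d)\in T_\Sigma$ forces $d\in T_{R^\infty}$, i.e.\@ $d\in L(G^\infty_\der)$) yields $\{t\in T_\Sigma\mid S\otimes\varepsilon\Rightarrow_G^*t\}=\val(L(G^\infty_\der))=L(G^\infty)$ by Theorem~\ref{thm:dtree}, and this equals $L(G)$ by Lemma~\ref{lem:infinite}. The main obstacle, and the step I would treat most carefully, is the ``if'' direction: I must verify that the maximal-depth rule node can indeed be detached as a single substitution, and that the link-identifier-equals-position invariant guarantees that $A\otimes\ell$ occurs exactly once in~$d_1$, so that the inverse substitution $d_1[A\otimes\ell\gets d']$ faithfully rebuilds~$d$ and matches the forest substitution prescribed by the derivation step.
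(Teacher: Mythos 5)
Your proposal is correct and takes essentially the same approach as the paper: the same reduction via Lemma~\ref{lem:infinite}, Theorem~\ref{thm:dtree}, and Lemma~\ref{lem:valocc}(3) to the claim about $G^\infty$, and the same two inductions combining Lemma~\ref{lem:occlink} (to match the rewriting precondition with occurrence of $A\otimes\ell$ in the derivation tree) with Lemma~\ref{lem:valsub} and $\val((\rho\otimes\ell)(B_1\otimes\ell 1,\dotsc,B_k\otimes\ell k))=(u,\LL)\otimes\ell$. The only cosmetic difference is in the ``if'' direction, where the paper inducts on the length of the $G^\infty_\der$-derivation of $d$ and peels off its last step, while you induct on the number of rule-labeled nodes and detach a maximal-depth rule node; since each $G^\infty_\der$-derivation step creates exactly one rule node with all-nonterminal children, these are the same decomposition under the same induction measure.
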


\begin{proof}
  By Lemma~\ref{lem:infinite}, Theorem~\ref{thm:dtree} and
  Lemma~\ref{lem:valocc}(3), it suffices to prove the following claim:
  \begin{quote}
    For every~$t \in T_{(N \times \nat^*) \cup \Sigma}$ we have $S
    \otimes \varepsilon \Rightarrow_G^* t$~if and only if there 
    exists~$d \in \DL(G^\infty_\der, S \otimes \varepsilon)$ such
    that~$\val(d) = t$.
  \end{quote}
  (If)~The proof is by induction on the length~$n$ of a derivation~$S
  \otimes \varepsilon \Rightarrow^n_{G^\infty_\der} d$ required for~$d
  \in \DL(G^\infty_\der, S \otimes \varepsilon)$.  The claim is
  obvious for~$n = 0$; i.e., for $d = S \otimes \varepsilon$.
  Otherwise, we consider the last step of the derivation $S \otimes
  \varepsilon \Rightarrow^{n-1}_{G^\infty_\der} d'
  \Rightarrow_{G^\infty_\der} d$, and let $A \otimes \ell \to (\rho
  \otimes \ell)(B_1 \otimes \ell1, \dotsc, B_k \otimes \ell k)$ be the
  rule of~$G^\infty_\der$ that was applied in the last step, where
  $\rho = A \to (u, \LL)$ with $\LL = \{\seq B1k\}$ is the
  corresponding rule of~$G$.  Clearly,
  since $A \otimes \ell$~occurs exactly
  once in~$d'$ (as observed in the proof of Lemma~\ref{lem:occlink}),
  \[d = d'[A \otimes \ell \gets (\rho \otimes \ell)(B_1 \otimes \ell 1, \dotsc,
  B_k \otimes \ell k)] \enspace .\] 
  Since $\val((\rho \otimes \ell)(B_1 \otimes
  \ell 1, \dotsc, B_k \otimes \ell k)) = (u, \LL) \otimes \ell$,
  we obtain $\val(d) = \val(d')[A \otimes
  \ell \gets (u, \LL) \otimes \ell]$ from Lemma~\ref{lem:valsub}. 
  Hence $S \otimes
  \varepsilon \Rightarrow_G^* \val(d') \Rightarrow^{\rho, \ell}_G \val(d)$
  by the induction hypothesis, Lemma~\ref{lem:occlink} and
  the definition of~$\Rightarrow^{\rho, \ell}_G$.

  (Only if) The proof is by induction on the length~$n$ of a
  derivation~$S \otimes \varepsilon \Rightarrow_G^n t$.  It is again
  obvious for~$n = 0$.  Otherwise, we consider the last step of the 
  derivation~$S \otimes \varepsilon \Rightarrow_G^{n-1} t'
  \Rightarrow_G t$.  By the induction hypothesis there exists~$d' \in
  \DL(G^\infty_\der, S \otimes \varepsilon)$ such that~$\val(d') =
  t'$.  Moreover, by the definition of~$\Rightarrow_G$, there exist a
  rule~$\rho = A \to (u, \LL) \in R$ and a link identifier~$\ell$ such
  that $\alp(A \otimes \ell) = \alp_{N \times \{\ell\}}(t')$ and $t = t'[A \otimes
  \ell \gets (u, \LL) \otimes \ell]$.  
  Then $A \otimes\ell$ occurs in $d'$ by Lemma~\ref{lem:occlink}. 
  Defining $d$ as displayed above, we obtain from Lemma~\ref{lem:valsub} that
  $\val(d) = \val(d')[A \otimes \ell \gets (u, \LL) \otimes \ell]$; 
  i.e., $\val(d) = t$.
\end{proof}

In exactly the same way it can be proved that $L(G, A) = \{t \in
P_\Sigma(X)^{\scriptscriptstyle +} \mid \init(A \otimes \varepsilon)
\Rightarrow_G^* t\}$ for every~$A \in \N$, after extending the notion
of derivation step to forests in~$P_{(N \times \nat^*) \cup
  \Sigma}(X)^{\scriptscriptstyle +}$.  We finally mention that 
it is straightforward to prove that for every $t \in
T_{(N \times \nat^*) \cup \Sigma}$, if~$S \otimes \varepsilon
\Rightarrow_G^* t$, then (1)~$t$~is uniquely $(N \times \nat^*)$"~labeled 
and (2)~there is a unique finite subset~$\LL$ of~$\N \otimes \nat^*$ such that
the set $\{\alp(B)\mid B \in \LL\}$ 
is equal to the set $\{\alp_{N\times \{\ell\}}(t)\neq\emptyset \mid \ell\in\nat^*\}$. 
Thus, $\LL$ is the set of big nonterminals (of $G^\infty$) that can be rewritten in $t$. 
For instance, for the last tree of Figure~\ref{fig:naivederiv}
we have $\LL = \{B \otimes 111, B \otimes 1131, B' \otimes 112, T \otimes 12, T \otimes 1132\}$.

\section{Normal forms}
\label{sec:norm}
\noindent
In this section, we establish a number of normal forms for MCFTGs.  We
start in Section~\ref{sub:basicnf} with some basic normal forms.
In Section~\ref{sub:lexnf} we define the notions of 
finite ambiguity and lexicalization, and then we prove a Growing Normal Form 
that is already part of our lexicalization procedure.
Along the way we show the decidability of finite ambiguity.
Finally we establish one additional basic normal form.  
From now on, let $G = (N, \N, \Sigma, S, R)$~be the considered MCFTG.

\subsection{Basic normal forms}
\label{sub:basicnf}
\noindent
The MCFTG~$G$ is \emph{start-separated} if~$\pos_S(u) = \emptyset$ for
every rule~$A \to (u, \LL) \in R$.  In other words, the initial
nonterminal~$S$ is not allowed in the right-hand sides of the rules.
It is clear that $G$~can be transformed into an \LDTR"~equivalent
start-separated MCFTG~$G'$.  We simply take a new initial
nonterminal~$S'$, all original rules, and for every rule~$\rho = S \to
(u, \LL) \in R$ we add the rule~$\rho' = S' \to (u, \LL)$.  Then we
obviously have that~$L(G'_\der, S') = \{\rho'(\seq d1k) \mid \rho(\seq
d1k)\in L(G_\der, S)\}$, and there exist LDT"~transducers that
change~$\rho(\seq d1k)$ into~$\rho'(\seq d1k)$ and vice versa.  The
MCFTGs of Examples~\ref{exa:copy} and~\ref{exa:main} are
start-separated.

\paragraph{Convention} From now on, we assume, without loss of
generality (by Proposition~\ref{pro:LDTRcomp}), and without mentioning
it, that every MCFTG is start-separated.  Each rule of the form~$S \to
(u, \LL)$ is called an \emph{initial} rule.  We call a rule~$A \to (u,
\LL)$ \emph{terminal} if~$u \in P_\Sigma(X)^{\scriptscriptstyle +}$;
i.e., $u$~does not contain nonterminal symbols or equivalently~$\LL =
\emptyset$.  Such a rule will also be written~$A \to u$.  Note that a
rule may be both initial and terminal.  A~rule is called
\emph{proper} if it is not both initial and terminal.

\medskip
The MCFTG~$G$ is \emph{reduced} if every big nonterminal~$A \in \N
\setminus \{S\}$ is reachable and useful.  A big nonterminal~$A \in
\N$ is \emph{reachable} if~$S \hookrightarrow^*_G A$, where for
all~$B, B' \in \N$ we define~$B \hookrightarrow_G B'$ if there is a
rule~$B \to (u, \LL) \in R$ such that~$B' \in \LL$.  Moreover, $A$~is 
\emph{useful} if $L(G, A) \neq \emptyset$.  Clearly, $G$~is reduced if
and only if the RTG~$G_\der$ is reduced (in the usual, analogous
sense);  this is obvious for reachability and follows from
Theorem~\ref{thm:dtree} for usefulness.  As in the case of
context-free grammars, we may and will always assume that a given
MCFTG~$G$ is reduced, which can be achieved by removing all
nonreachable and useless big nonterminals together with the rules in
which they occur.  Since this is the same procedure for~$G_\der$, we
have that $L(G'_\der) = L(G_\der)$ for the resulting grammar~$G'$,
and hence, trivially, $G'$~is \LDTR"~equivalent to~$G$.  The MCFTGs of
Examples~\ref{exa:copy} and~\ref{exa:main} are reduced.

Let $G' = (N', \N', \Sigma, S', R')$~be another MCFTG.  We say that
$G'$~is a \emph{renaming} of~$G$ if there exists a rank-preserving
bijection~$\beta \colon \N \to \N'$ such that $S' = \beta(S)$~and~$R'
= \{\rho_\beta\mid \rho\in R\}$, where for every rule~$\rho = A \to (u, \LL) \in R$ we
let~$\rho_\beta = \beta(A) \to (u[B \gets \init(\beta(B)) \mid B \in
\LL],\, \beta(\LL))$, where $\beta(\LL)=\{\beta(B_1),\dotsc,\beta(B_k)\}$ if
$\LL=\{\seq B1k\}$.  Note that $\rho$ can easily be reconstructed from $\rho_\beta$ 
(by applying $\beta^{-1}$); i.e., the mapping $\rho\mapsto \rho_\beta$ is also a
bijection, from $R$ to $R'$. 

\begin{lemma}
  \label{lem:renaming}
  For all MCFTGs $G$~and~$G'$, if $G'$~is a renaming of~$G$, then
  $G$~and~$G'$ are \LDTR"~equivalent.
\end{lemma}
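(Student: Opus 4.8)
The plan is to exhibit the two required \LDTR"~transducers as mutually inverse relabelings of derivation trees. Since $\beta \colon \N \to \N'$ is a rank-preserving bijection and $\rho \mapsto \rho_\beta$ is a bijection from $R$ to $R'$ (as noted right before the lemma), I would define the tree homomorphism $h$ from the ranked alphabet $R$ to $R'$ by $h(\rho) = \init(\rho_\beta)$, and the tree homomorphism $h'$ from $R'$ to $R$ by $h'(\rho_\beta) = \init(\rho)$. Each is a projection, hence a finite-state relabeling, and every tree homomorphism is realizable by an LDT"~transducer (as recalled after the definition of the \LDTR"~transducer); thus $M = \hat h$ and $M' = \hat{h'}$ are total transductions in \LDTR, with $\hat{h'} = \hat h^{-1}$ on trees. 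I would also fix the linear order on the links $\LL(\rho_\beta)$ so that it corresponds under $\beta$ to the chosen order on $\LL(\rho)$; this is harmless, since \LDTR"~$\varphi$"~equivalence is independent of the link order (as observed after Definition~\ref{def:LDTReq}).

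With $\varphi$ the identity on $T_\Sigma$, the heart of the proof is to verify the two conditions of Definition~\ref{def:LDTReq}. This reduces to two claims, both proved by induction on the structure of a derivation tree $d \in L(G_\der, A)$: first, $\hat h(d) \in L(G'_\der, \beta(A))$, and second, $\val(\hat h(d)) = \val(d)$. For $d = \rho(d_1, \dots, d_k)$ with $\rho = A \to (u, \LL)$ and $\LL = \{B_1, \dots, B_k\}$ we have $\hat h(d) = \rho_\beta(\hat h(d_1), \dots, \hat h(d_k))$; the membership claim then follows from the induction hypothesis together with the fact that $\beta(A) \to \rho_\beta(\beta(B_1), \dots, \beta(B_k))$ is a rule of $G'_\der$. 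For the value claim, writing $\rho_\beta = \beta(A) \to (u', \LL')$ with $u' = u[B \gets \init(\beta(B)) \mid B \in \LL]$, the induction hypothesis gives $\val(\hat h(d)) = u'[\beta(B_i) \gets \val(d_i) \mid 1 \le i \le k]$, so everything reduces to the substitution identity
\[ u[B \gets \init(\beta(B)) \mid B \in \LL]\,[\beta(B) \gets \val(d_B) \mid B \in \LL] = u[B \gets \val(d_B) \mid B \in \LL] \]
where $\val(d_B)$ abbreviates the value belonging to $B$.

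To establish this identity I would apply Lemma~\ref{lem:comm-assoc}(4) over the ranked alphabet $N \cup N' \cup \Sigma$ (assuming, as we may, that $N$ and $N'$ carry consistent ranks on any shared symbols), taking $\sigma_1$ to be the repetition-free concatenation $\sigma$ of the big nonterminals in $\LL$ (so that $\alp(\sigma) = \alp_N(u)$), $u_1 = \init(\beta(\sigma))$, $\sigma_2 = \beta(\sigma)$, and $u_2$ the concatenation of the $\val(d_B)$. Its hypothesis $\alp_{N \cup N' \cup \Sigma}(u) \cap \alp(\beta(\sigma)) \subseteq \alp(\sigma)$ holds because any common symbol lies in $N \cap N'$, and hence in $\alp_N(u) = \alp(\sigma)$; the lemma then gives $u[\sigma \gets \init(\beta(\sigma))][\beta(\sigma) \gets u_2] = u[\sigma \gets \init(\beta(\sigma))[\beta(\sigma) \gets u_2]]$, while $\init(\beta(\sigma))[\beta(\sigma) \gets u_2] = u_2$ by the identity $\init(\tau)[\tau \gets v] = v$ from Section~\ref{sub:sub}, so the right-hand side is $u[\sigma \gets u_2] = \val(d)$. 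This proves condition~(1); condition~(2) follows by symmetry, since $G$ is itself the renaming of $G'$ along $\beta^{-1}$ (with $(\rho_\beta)_{\beta^{-1}} = \rho$, again by the same identity) and $\hat{h'}$ plays the role of $\hat h$. The only mildly delicate point is the bookkeeping in this substitution identity, and in particular the verification of the hypothesis of Lemma~\ref{lem:comm-assoc}(4); everything else is routine, and the transducers themselves are trivial relabelings.
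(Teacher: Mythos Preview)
Your proposal is correct and follows essentially the same approach as the paper: both define the mutually inverse projections $\rho \leftrightarrow \rho_\beta$ as the required \LDTR"~transducers, prove $\val(\hat h(d)) = \val(d)$ by induction on~$d$, and reduce the induction step to exactly the substitution identity you wrote, which is dispatched via Lemma~\ref{lem:comm-assoc}(4). You are in fact slightly more explicit than the paper in verifying the hypothesis of Lemma~\ref{lem:comm-assoc}(4) over the combined alphabet $N \cup N' \cup \Sigma$, but this is the same proof.
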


\begin{proof}
  Let $\beta$ be the required bijection.  For every
  tree~$d \in T_R$, let $M(d)$~be obtained from~$d$ by  
  changing every label~$\rho$ into~$\rho_\beta$.  In this manner we
  obtain a bijection~$M \colon T_R \to T_{R'}$.  Obviously,
  $d \in L(G_\der, A)$ if and only if $M(d) \in L(G'_\der, \beta(A))$.
  Additionally, we can easily show that~$\val(M(d)) = \val(d)$ by
  induction on the structure of~$d$.  Indeed, let $d = \rho(\seq d1k)$
  for a rule $\rho = A \to (u, \LL) \in R$ with $\LL = \{\seq
  B1k\}$ and $d_i \in L(G_\der, B_i)$ for every $i \in [k]$.  We 
  have~$\val(M(d_i)) = \val(d_i)$ for every~$i \in [k]$ by the
  induction hypotheses.  Clearly, $M(d) =
  \rho_\beta(M(d_1), \dotsc, M(d_k))$, and hence
  $\val(M(d)) = u[B_i \gets \init(\beta(B_i)) \mid 1 \leq i \leq
  k]\,[f]$, where $f$~is the 
  substitution function for~$\beta(\LL)$ such that $f(\beta(B_i)) =
  \val(M(d_i)) = \val(d_i)$ for every $i\in[k]$.  It now follows from
  Lemma~\ref{lem:comm-assoc}(4) that $\val(M(d)) = u[B_i \gets 
  \init(\beta(B_i) [f]) \mid 1 \leq i \leq k]$, which equals~$u[B_i
  \gets \val(d_i) \mid 1 \leq i \leq k] = \val(d)$.  The
  transformation~$M \colon T_R \to T_{R'}$ as well as its
  inverse $M^{-1} \colon T_{R'} \to T_R$ are tree homomorphisms 
  (even projections), and every tree
  homomorphism can be realized by an \LDTR"~transducer, which shows
  the \LDTR"~equivalence.
\end{proof}

The previous lemma shows that the actual identity of nonterminals
constituting a big nonterminal is irrelevant in MCFTGs.  We say that
the MCFTG~$G$ \emph{has disjoint big nonterminals}  if~$\alp(A) 
\cap \alp(A') = \emptyset$ for all distinct~$A, A' \in \N$.  The MCFTGs
of Examples \ref{exa:copy}~and~\ref{exa:main} indeed have disjoint big
nonterminals.  Clearly, every MCFTG~$G$ has a renaming that has
disjoint big nonterminals.  Consequently, we may always assume that a
given MCFTG~$G$ has disjoint big nonterminals.  As observed before
Example~\ref{exa:copy}, the specification of the set of links of a
rule is then no longer necessary.  Indeed we could have required
disjoint big nonterminals in Definition~\ref{def:mcftg}, but
this would have been technically inconvenient, as we will see, e.g., 
in the proof of Lemma~\ref{lem:permfree}.  

We say that the MCFTG~$G$ is \emph{free-choice} if the following
holds.  For every rule~$A \to (u, \LL) \in R$ and every~$\LL'
\subseteq \N$ that satisfies the requirement in the last item of
Definition~\ref{def:mcftg}, we require that $A \to (u, \LL')$~is also 
a rule of~$G$.  This means that the rules of~$G$ can be specified
as~$A \to u$, which stands for all possible rules~$A\to (u, \LL)$.
Obviously, if $G$~has disjoint big nonterminals, then it is
free-choice because the links are uniquely determined by
$\N$~and~$u$.  Thus, we may always assume that a given MCFTG is
free-choice.  Free-choice MCFTGs with the derivation semantics of
Section~\ref{sub:deriv} generalize the local unordered scattered
context grammars~(LUSCGs) of~\cite{ramsat99},  which are an equivalent
formulation of multiple context-free (string) grammars.

The next easy result is not a normal form result in the usual sense of
the word, but shows that the class~MCFT is closed under (simple) tree
homomorphisms; for much stronger closure properties of~MCFT we refer
to Section~\ref{sec:charact}.  Nevertheless, a special case of this
result can be used in proofs to assume that the right-hand sides of a
given MCFTG~$G$ are not only uniquely $N$"~labeled  but also uniquely
$\Sigma$"~labeled.

Let $h$~be a tree homomorphism from~$\Sigma$ to~$\Sigma'$ 
where $\Sigma'$~is a finite ranked alphabet disjoint to~$N$.  We define
the MCFTG~$G_h = (N, \N, \Sigma', S, R')$ such that 
\[ R' = \{A \to (\hat{h}(u), \LL) \mid A \to (u, \LL) \in R\}
\enspace, \] 
where $h$ is extended to a tree homomorphism from~$N\cup\Sigma$ to~$N\cup\Sigma'$
by defining $h(C)=\init(C)$ for every $C\in N$. 
We refer to Definition~\ref{def:LDTReq} for the notion of 
\LDTR "~$\hat{h}$"~equivalence. 

\newpage
\begin{lemma}
  \label{lem:cover}
  For every MCFTG~$G$ and every tree homomorphism~$h$ (as above), the
  MCFTG~$G_h$ (as defined above) is \LDTR "~$\hat{h}$"~equivalent to~$G$.
  Hence $L(G_h)=\hat{h}(L(G))$.
\end{lemma}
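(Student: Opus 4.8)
The plan is to exhibit an explicit relabeling between the derivation trees of $G$ and those of $G_h$ and to verify that it preserves values up to $\hat h$. Since $G_h$ differs from $G$ only in the right-hand sides of its rules, there is a rank-preserving bijection $R \to R'$ sending each rule $\rho = A \to (u, \LL)$ to $\rho' = A \to (\hat h(u), \LL)$; it preserves left-hand sides and links, so $\abs{\LL(\rho)} = \abs{\LL(\rho')}$, and the derivation tree grammars $G_\der$ and $(G_h)_\der$ coincide up to this renaming of terminal symbols. Let $M \colon T_R \to T_{R'}$ be the projection that changes every label $\rho$ into $\rho'$, and let $M' = M^{-1}$; both are finite-state relabelings and hence lie in \LDTR. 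By the correspondence of the two RTGs, $d \in L(G_\der, A)$ if and only if $M(d) \in L((G_h)_\der, A)$ for every $A \in \N$, which supplies at once the membership requirements in items~(1) and~(2) of Definition~\ref{def:LDTReq}.

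It remains to prove the value-preservation identity $\val(M(d)) = \hat h(\val(d))$ for every derivation tree $d$, which I would establish by induction on the structure of $d$, exactly as in the proof of Lemma~\ref{lem:renaming}. The base case $d = A \in \N$ is immediate because $h(C) = \init(C)$ for every $C \in N$, so $\hat h(\init(A)) = \init(A)$. For $d = \rho(\seq d1k)$ with $\rho = A \to (u, \LL)$ and $\LL = \{\seq B1k\}$ we have $M(d) = \rho'(M(d_1), \dotsc, M(d_k))$, so the left-hand side equals $\hat h(u)[B_i \gets \val(M(d_i)) \mid 1 \leq i \leq k]$, which by the induction hypotheses becomes $\hat h(u)[B_i \gets \hat h(\val(d_i)) \mid 1 \leq i \leq k]$, whereas the right-hand side equals $\hat h(u[B_i \gets \val(d_i) \mid 1 \leq i \leq k])$. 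Thus everything reduces to the single fact that $\hat h$ distributes over simultaneous second-order substitution, namely $\hat h(t[B_i \gets v_i \mid i]) = \hat h(t)[B_i \gets \hat h(v_i) \mid i]$ whenever the substituted symbols are nonterminals.

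This commutation is the heart of the argument and the only nonroutine point. I would derive it from the composition lemma for tree homomorphisms (Lemma~\ref{lem:treehomcomp}), just as the associativity in Lemma~\ref{lem:comm-assoc}(4) was obtained from composition: writing the two sides as $\hat h \circ \hat g$ and $\hat{g'} \circ \hat h$, where $g$ and $g'$ are the tree homomorphisms realizing the substitutions before and after applying $\hat h$, it suffices to compare the composite homomorphisms $\hat h \circ g$ and $\hat{g'} \circ h$ on each symbol occurring in $t$. For a terminal $\sigma$ both composites yield $h(\sigma)$ (using that $g$ and $g'$ act as $\init$ on terminals and that $\hat h$ fixes variables), and for a substituted nonterminal $C$ occurring as the $j$-th component of some $B_i$ both composites yield $\hat h$ of the $j$-th component of $v_i$; here the crucial point is that $h(C) = \init(C)$, so $\hat h$ leaves the substituted positions structurally in place and merely propagates into the substituted forests. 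Granting this, the induction goes through. Combining the value identity with the membership correspondence yields both items of Definition~\ref{def:LDTReq} with $\varphi = \hat h$ (for item~(2), apply the identity to $d = M'(d')$ and use that $M \circ M'$ is the identity), so $G_h$ is \LDTR"~$\hat{h}$"~equivalent to $G$; the equality $L(G_h) = \hat h(L(G))$ then follows from the remark after Definition~\ref{def:LDTReq}.
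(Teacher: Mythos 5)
Your strategy is the paper's own: relabel derivation trees rule by rule, prove $\val(M(d)) = \hat{h}(\val(d))$ by induction using the fact that $\hat{h}$ commutes with second-order substitution of nonterminals (the paper invokes Lemma~\ref{lem:comm-assoc}(3) where you re-derive the commutation from Lemma~\ref{lem:treehomcomp}; since the paper itself proves Lemma~\ref{lem:comm-assoc} via composition of homomorphisms, this is the same argument), and then transfer the result back via a second relabeling. However, there is one genuine error: the map $\rho = A \to (u, \LL) \mapsto \rho_h = A \to (\hat{h}(u), \LL)$ is in general only a \emph{surjection} from $R$ onto $R'$, not a bijection. Since $R'$ is defined as the image set $\{A \to (\hat{h}(u), \LL) \mid A \to (u, \LL) \in R\}$, two distinct rules of $R$ with the same left-hand side and links collapse into a single rule of $R'$ as soon as $\hat{h}$ identifies their right-hand sides; for instance, if $h(\sigma_1) = h(\sigma_2)$ for distinct $\sigma_1, \sigma_2 \in \Sigma$, then $A \to \sigma_1$ and $A \to \sigma_2$ have the same image. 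This is not a pathological corner case: in the paper's principal application of this lemma (covers, where $h$ is a non-injective projection relating a uniquely terminal labeled grammar $G_{\text{u}}$ to $G = (G_{\text{u}})_h$, as used in the proof of Lemma~\ref{lem:main}), such collapses are exactly the point. Consequently your $M' = M^{-1}$ is ill-defined, and with it your verification of item~(2) of Definition~\ref{def:LDTReq}.

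The repair is small and is precisely what the paper does: keep $M$ as the (surjective) relabeling, and for the converse direction fix, for every $\rho' \in R'$, one arbitrary preimage $\rho \in R$ with $\rho_h = \rho'$, letting $M'$ relabel $\rho'$ into that chosen $\rho$. Then $M'$ is still a projection (hence in \LDTR), $M'(d')$ is still a derivation tree of $G$ of the same type because left-hand sides and links are preserved, and $M(M'(d')) = d'$ still holds on all of $T_{R'}$ (even though $M'(M(d)) = d$ now fails), which is all your final step needs: $\hat{h}(\val(M'(d'))) = \val(M(M'(d'))) = \val(d')$. With this single correction, the rest of your proof goes through as written.
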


\begin{proof}
  The proof is similar to the one of Lemma~\ref{lem:renaming}.
  Let~$G' = G_h = (N, \N, \Sigma', S, R')$.  For every rule~$\rho = A
  \to (u, \LL) \in R$, let $\rho_h$~be the rule~$A \to (\hat{h}(u), \LL) \in
  R'$, in which the links of $\LL$ have the same order as in $\rho$.  
  For every tree~$d \in T_R$, let
  $M(d)$~be obtained from~$d$ by changing every label~$\rho$
  into~$\rho_h$.  This defines a surjection $M\colon T_R\to T_{R'}$. 
  Obviously, $d \in L(G_\der, A)$ if and only if
  $M(d) \in L(G'_\der, A)$ for every~$A \in \N$.  We now show, by
  induction on the structure of~$d$,  that~$\val(M(d)) =
  \hat{h}(\val(d))$.  Indeed, let $d = \rho(\seq d1k)$
  with~$\rho = A \to (u, \LL)$ and~$\LL = \{\seq B1k\}$, and by the
  induction hypotheses $\val(M(d_i)) =
  \hat{h}(\val(d_i))$ for every~$i \in [k]$.  Then $M(d) =
  \rho_h(M(d_1), \dotsc, M(d_k))$, and hence we have
  \begin{align*}
    \val(M(d)) 
    &= \hat{h}(u)[B_i
      \gets \hat{h}(\val(d_i)) \mid 1 \leq i \leq k] \\
    &= \hat{h}(u[B_i \gets \val(d_i) \mid 1 \leq i \leq k]) =
      \hat{h}(\val(d)) \enspace,
  \end{align*}   
  where the second equality is by Lemma~\ref{lem:comm-assoc}(3)
  applied to $\sigma_1 = \word B1k$~and~$\alp(\sigma_2) = \Sigma$.
  This shows that $\hat{h}(L(G))\subseteq L(G')$. 
  
For every rule $\rho'\in R'$, let $\rho'_h$ be a fixed rule 
$\rho\in R$ such that $\rho_h = \rho'$.
  For every tree~$d' \in T_{R'}$, let
  $M'(d')$~be obtained from~$d'$ by changing every label~$\rho'$
  into~$\rho'_h$. This defines a mapping $M'\colon T_{R'}\to T_R$.
  Obviously $M(M'(d'))=d'$ and hence, by the above, if $d' \in L(G'_\der, A)$
  then $M'(d') \in L(G_\der, A)$ and $\val(M'(d'))=\val(d')$. 
  This shows that $L(G')\subseteq \hat{h}(L(G))$. 
  
  The transformations~$M$ and~$M'$
  can be realized by projections, and hence by \LDTR"~transducers.
\end{proof}

We say that the pair~$(G,h)$ is a \emph{cover} of the MCFTG~$G_h$
if $h$~is a projection; i.e., for every~$\sigma \in \Sigma$ there
exists~$\sigma' \in \Sigma'$ such that~$h(\sigma) = \init(\sigma')$.
We define the MCFTG~$G$ to be \emph{uniquely terminal labeled} if for
every rule~$\rho \in R$:
\begin{compactenum}[\indent\upshape(1)]
\item the right-hand side~$\rhs(\rho)$ is uniquely $\Sigma$"~labeled,
  and
\item $\alp_\Sigma(\rhs(\rho)) \cap \alp_\Sigma(\rhs(\rho')) =
  \emptyset$ for every other rule~$\rho' \in R$.
\end{compactenum}
Clearly, every MCFTG~$G$ has a cover~$(G_{\text{u}}, h)$ such that
$G_{\text{u}}$~is uniquely terminal labeled.  Although the tree
languages $L(G) = \hat h(L(G_{\text{u}}))$~and~$L(G_{\text{u}})$
differ in general, this may be viewed as a normal form of~$G$.

The last basic normal form that we consider in this subsection is permutation-freeness.  
Let $\Omega$ be a ranked alphabet (such as $N\cup\Sigma$). For a tree~$t \in
T_\Omega(X)$ the string~$\yield_X(t) \in X^*$ is the sequence of
occurrences of variables in~$t$, from left to right.\footnote{The yield of $t$
with respect to $X$ is defined in the paragraph on homomorphisms in Section~\protect{\ref{sub:seqs}}.}
Clearly, if $t \in P_\Omega(X_k)$, then $\yield_X(t)$ is a permutation 
$\word x{i_1}{i_k}$ of $\word x1k$.
We say that a pattern~$t \in P_\Omega(X)$ is \emph{permutation-free}
if $\yield_X(t) = \word x1k$ for~$k = \rk(t)$, and we denote the set
of permutation-free patterns over~$\Omega$ by~$\PF_\Omega(X)$.  For~$t
\in P_\Omega(X)$ we define~$\pf(t) \in \PF_\Omega(X)$ as follows:  
if $\yield_X(t) = \word x{i_1}{i_k}$, then 
$\pf(t)$ is the unique permutation-free pattern such that 
$t = \pf(t)[x_1\gets x_{i_1},\dotsc,x_k\gets x_{i_k}]$.
For a forest $t=(\seq t 1n)$ we define 
$\yield_X^*(t) = (\yield_X(t_1),\dotsc,\yield_X(t_n))$ and
$\pf^*(t) = (\pf(t_1), \dotsc, \pf(t_n))$. 
We say that a tree homomorphism~$h$
over~$\Omega$ is permutation-free if $h(\omega)$~is permutation-free
for every~$\omega \in \Omega$.  We observe that, for such a tree
homomorphism,  $\yield_X(\hat{h}(t)) = \yield_X(t)$ for every~$t \in
T_\Omega(X)$, as can easily be shown by induction on the structure
of~$t$, and $\hat{h}(\pf(t)) = \pf(\hat{h}(t))$ for every~$t\in
P_\Omega(X)$ by Lemma~\ref{lem:treehomsub}.

The MCFTG~$G$ is \emph{permutation-free} if $\rhs(\rho) \in \PF_{N
  \cup \Sigma}(X)^{\scriptscriptstyle +}$ for every rule~$\rho \in R$.
Intuitively, permutation-free MCFTGs are easier to understand than
arbitrary MCFTGs because the application of a rule to a node of a tree
does not involve a permutation of the subtrees at the children of that
node; thus, a rule application does not affect the global structure 
of the tree.  The MCFTG~$G$ of Example~\ref{exa:main} is trivially 
permutation-free because every nonterminal of~$G$ has rank $0$~or~$1$.  

\begin{lemma}
  \label{lem:permfree}
  For every MCFTG~$G$ there is an \LDTR"~equivalent MCFTG~$G'$
  that is permutation-free.  Moreover, $\wid(G') =
  \wid(G)$, $\mu(G') = \mu(G)$, and $\lambda(G')=\lambda(G)$.
\end{lemma}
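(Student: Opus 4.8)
The plan is to split every nonterminal according to the permutation of its arguments that it induces, recording that permutation as a tag, so that the right-hand sides become permutation-free while the final (ground) tree language is unchanged. Concretely, I would let the nonterminals of $G'$ be the pairs $\langle C,\pi\rangle$ with $C\in N$ and $\pi$ a permutation of $[\rk(C)]$, and set $\rk(\langle C,\pi\rangle)=\rk(C)$; a big nonterminal $A=(\seq A1n)$ of $G$ gives rise to the tagged big nonterminals $\langle A,\bar\pi\rangle=(\langle A_1,\pi_1\rangle,\dotsc,\langle A_n,\pi_n\rangle)$, one for each tuple $\bar\pi=(\seq\pi1n)$. The intended meaning is that $\langle C,\pi\rangle$ generates exactly the permutation-free patterns $\pf(v)$ for which $v$ is a pattern generated by $C$ in $G$ with $\yield_X(v)=x_{\pi(1)}\dotsm x_{\pi(k)}$, where $k=\rk(C)$, so that $v=\pf(v)[x_l\gets x_{\pi(l)}\mid l\in[k]]$. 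Since $S$ has rank~$0$ its only tag is the empty permutation and the associated patterns are ground; hence the top-level value will be preserved. Because the construction only tags nonterminals and never changes their ranks or the lengths of the big nonterminals, it will be immediate that $\wid(G')=\wid(G)$, $\mu(G')=\mu(G)$, and $\lambda(G')=\lambda(G)$.

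To build the rules I would fix, for each rule $\rho=A\to(u,\LL)$ of $G$ and each assignment of a tag $\bar\pi^B$ to every link $B\in\LL$, one rule of $G'$ as follows. Relabel every nonterminal occurrence in $u$ by its chosen tag (this is well defined because $u$ is uniquely $N$"~labeled and $\{\alp(B)\mid B\in\LL\}$ partitions $\alp_N(u)$), and let $g$ be the tree homomorphism over the tagged alphabet with $g(\langle C,\pi\rangle)=\langle C,\pi\rangle(x_{\pi(1)},\dotsc,x_{\pi(r)})$ for $r=\rk(C)$ and $g(\delta)=\init(\delta)$ for $\delta\in\Sigma$; applying $\hat g$ reorders the children of every nonterminal node so as to absorb the recorded permutations. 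I then set $u_i'=\pf(\hat g(u_i))$, read off $\pi_i$ from $\yield_X(\hat g(u_i))=x_{\pi_i(1)}\dotsm x_{\pi_i(k_i)}$ with $k_i=\rk(A_i)$, and take $\langle A,\bar\pi\rangle\to\bigl(u',\{\langle B,\bar\pi^B\rangle\mid B\in\LL\}\bigr)$ as the new rule, where $u'=(\seq{u'}1n)$ and $\bar\pi=(\seq\pi1n)$. Each $u_i'$ is permutation-free by the definition of $\pf$, and since $\hat g$ and $\pf$ only retag nodes and rename variable leaves, $u'$ stays uniquely $N'$"~labeled with the same multiple rank and the same link partition as $u$; thus $G'$ is a permutation-free MCFTG with the same three parameters.

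For \LDTR"~equivalence I would argue with derivation trees and values, exactly in the style of Lemmas~\ref{lem:renaming} and~\ref{lem:cover}. The backward transducer $M'\colon T_{R'}\to T_R$ is the projection that erases all tags; by induction on $d'\in L(G'_\der,\langle A,\bar\pi\rangle)$, using the key identity $v[x_l\gets s_l]=\pf(v)[x_l\gets s_{\pi(l)}]$ (which comes from $v=\pf(v)[x_l\gets x_{\pi(l)}]$) together with Lemmas~\ref{lem:treehomsub} and~\ref{lem:comm-assoc}(4), one shows that $M'(d')\in L(G_\der,A)$, that $\val(d')$ is a forest of permutation-free patterns, and that $\val(M'(d'))$ is obtained from $\val(d')$ by renaming the variables of its $i$"~th component back through $\pi_i$. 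At $S$ the tag is trivial and the components are ground, so the two values coincide, giving $L(G')\subseteq L(G)$ and item~(2) of Definition~\ref{def:LDTReq}. The forward transducer $M\colon T_R\to T_{R'}$ is a finite-state relabeling with regular look-ahead: at a node labeled $\rho$ it uses the look-ahead to read off, from each child subtree, the tag of the big nonterminal that subtree derives, and then emits the unique tagged rule determined by $\rho$ and those link tags. Induction again yields $M(d)\in L(G'_\der,\langle A,\bar\pi\rangle)$ with the matching value relation, hence $L(G)\subseteq L(G')$ and item~(1). As both $M$ and $M'$ are \LDTR"~transducers, $G$ and $G'$ are \LDTR"~equivalent.

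The main obstacle is twofold. First, I must check that the tag of the forest generated by a derivation subtree — the tuple of the $\yield_X$"~permutations of its components — is a regular property of that subtree, so that it is an admissible look-ahead for $M$; this holds because these tuples range over a finite set (bounded through $\wid(G)$) and compose in a finite-state manner under the substitution defining $\val$, which can be made precise exactly as in Lemma~\ref{lem:easy-dtree}. Second, and more delicately, I must get the permutation bookkeeping right, namely that reordering children via $\hat g$ followed by straightening with $\pf$ faithfully realizes the relation $v=\pf(v)[x_l\gets x_{\pi(l)}]$ under the nested second-order substitutions of the MCFTG; this is where the commutation of $\pf$ with permutation-free tree homomorphisms (noted before the statement) and the associativity Lemma~\ref{lem:comm-assoc}(4) do the real work.
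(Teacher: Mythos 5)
Your proposal is correct and follows essentially the same route as the paper's proof: tagging each nonterminal with the permutation given by the $\yield_X$ of the pattern it generates, absorbing those permutations via the permuting tree homomorphism, straightening right-hand sides with $\pf$, and establishing \LDTR"~equivalence through a finite-state relabeling with regular look-ahead (computing the tags bottom-up) in one direction and the tag-erasing projection in the other. The bookkeeping identities you invoke (commutation of $\pf$ with permutation-free homomorphisms, Lemma~\ref{lem:treehomsub}, and Lemma~\ref{lem:comm-assoc}(4)) are exactly the ones the paper's proof relies on.
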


\begin{proof}
  We construct the grammar~$G' = (N', \N', \Sigma, S', R')$, in which
  $S' = \langle S, \varepsilon\rangle$ and $N'$~is the set of all
  pairs~$\langle C, \pi\rangle$ such that~$C \in N$ and $\pi$~is a
  permutation of~$\word x1{\rk(C)}$.  The rank of~$\langle C,
  \pi\rangle$ is the same as the rank of~$C$.
  The set of big nonterminals~$\N'$ consists of
  all~$(\langle A_1, \pi_1\rangle, \dotsc, \langle A_n, \pi_n\rangle)$
  with $(\seq A1n) \in \N$~and~$\langle A_i, \pi_i \rangle \in N'$ for
  every $i \in [n]$.\footnote{Note that if $G$ has disjoint big nonterminals, 
  then that is in general not the case for $G'$. Thus, this property of an MCFTG $G$
  is not preserved when information is added to the nonterminals of $G$, 
  which is the reason that we did not require it in Definition~\protect{\ref{def:mcftg}}.}
  A big nonterminal~$A' = (\langle
  A_1, \pi_1\rangle, \dotsc, \langle A_n, \pi_n\rangle)$ 
  will also be denoted by $\pair(A,\pi)$, where $A=(\seq A1n)$ and $\pi=(\seq \pi 1n)$,
  and we define $\rem(A') = A= (\seq A1n)$.
  Intuitively, if~$A$ generates~$t = (\seq t1n)$ with $t_i
  \in P_\Sigma(X_{\rk(A_i)})$ and 
  $\yield_X^*(t) = (\yield_X(t_1),\dotsc,\yield_X(t_n)) = (\seq \pi 1n)$,
  then $A'$~generates $\pf^*(t) = (\pf(t_1), \dotsc,
  \pf(t_n))$. To define the rules of~$G'$ we need the (permuting)
  tree homomorphism~$h$ over~$N' \cup \Sigma$ that is defined by $h(\langle C,
  \pi\rangle) = \langle C, \pi \rangle \pi$ for every~$\langle C,
  \pi\rangle \in N'$ and $h(\sigma) = \init(\sigma)$ for
  every~$\sigma \in \Sigma$.  For example, if~$\pi = x_3x_2x_1x_4$,
  then $h(\langle C, \pi\rangle) = \langle C, \pi\rangle(x_3, x_2,
  x_1, x_4)$; in other words, $h$~permutes the subtrees of~$\langle C,
  \pi\rangle$ according to the permutation~$\pi$.  

  Let $\rho = A \to
  (u, \LL)$ be a rule of~$G$ with $\LL = \{\seq B1k\}$.  Moreover, let
  $\seq{B'}1k$~be big nonterminals in~$\N'$ such that $\rem(B'_i) =
  B_i$ for every~$i \in [k]$, and let $u' = u[B_i \gets \init(B'_i)
  \mid 1 \leq i \leq k]$ and $\overline{\pi} = \yield_X^*(\hat{h}(u'))$.
  Then $R'$~contains the rule  
  \[ \rho_{\word{B'}1k} = \pair(A, \overline{\pi}) \to
  (\pf^*(\hat{h}(u')),\, \{\seq{B'}1k\}) \enspace. \]
  Note that this rule satisfies the requirements of
  Definition~\ref{def:mcftg} by Lemma~\ref{lem:treehom}.  
  Note also that $\rho$ can be reconstructed from $\rho_{\word{B'}1k}$. 
  This completes the construction of~$G'$. 

  To show that~$L(G) \subseteq L(G')$ we prove that for every $A \in
  \N$ and every derivation tree~$d \in L(G_\der, A)$ there exists a
  derivation tree~$d' \in L(G'_\der, \pair(A, \pi))$ such
  that $\pi = \yield_X^*(\val(d))$~and~$\val(d') = \pf^*(\val(d))$.  For
  every derivation tree~$d \in \bigcup_{B \in \N} L(G_\der, B)$, we
  let $\bign(d) = \pair(A, \yield_X^*(\val(d)))$, where $A$~is
  the type of $d$.  The proof is by induction on the structure of~$d$.  
  Simultaneously we prove that $\bign(d)$~can be defined inductively.  Let $d =
  \rho(\seq d1k)$, where $\rho$~is as shown above.  By the induction
  hypotheses, let $B'_i = \bign(d_i)=\pair(B_i,\pi_i)$ 
  such that $\pi_i = \yield_X^*(\val(d_i))$, 
  and let $d'_i \in L(G'_\der,
  B'_i)$ be such that~$\val(d'_i) = \pf^*(\val(d_i))$, for every~$i \in
  [k]$. We define~$\bign(d)$ to be the left-hand side of the
  rule~$\rho_{\word{B'}1k}$.  Moreover, we take $d' =
  \rho_{\word{B'}1k}(\seq{d'}1k)$.  Additionally, let
  $[g'_{\pf}]$~abbreviate the (simultaneous) second-order substitution
  $[B'_i \gets \pf^*(\val(d_i)) \mid 1 \leq i \leq k]$, and let
  $[g']$~and~$[g]$ abbreviate the second-order substitutions $[B'_i
  \gets \val(d_i) \mid 1 \leq i \leq k]$~and~$[B_i \gets \val(d_i)
  \mid 1 \leq i \leq k]$. Then the definition of~`$\val$' gives
  $\val(d') = \pf^*(\hat{h}(u'))[B'_i \gets \val(d'_i) \mid 1 \leq i
  \leq k] = \pf^*(\hat{h}(u'))[g'_{\pf}] = \pf^*(\hat{h}(u')[g'_{\pf}])$,
  where the last equality holds because the permutation-free tree
  homomorphism~$g'_{\pf}$ corresponding to the
  substitution~$[g'_{\pf}]$ commutes with~`$\pf$' as observed before
  this lemma.  We now show that
  \[ \hat{h}(u')[g'_{\pf}]=u'[g']=u[g]=\val(d) \enspace. \]
  The first equality holds by Lemma~\ref{lem:treehomcomp} because the
  composition of the tree homomorphisms $h$~and~$\hat{g}'_{\pf}$ is equal to
  the tree homomorphism~$g'$ corresponding to the
  substitution~$[g']$ for every symbol in $\alp_{N'\cup\Sigma}(u')$, as shown next.  
  In fact, let $B'_i = \beta \langle C, \pi
  \rangle \gamma$ with $\langle C, \pi \rangle \in N'$~and~$\beta,
  \gamma \in (N')^*$, and let $\val(d_i) = \varphi t \psi$ with~$t \in
  P_\Sigma(X_{\rk(C)})$, $\varphi, \psi \in P_\Sigma(X)^*$,
  and~$\abs{\beta} = \abs{\varphi}$.  From $\pi_i =
  \yield_X^*(\val(d_i))$, we obtain that $\pi = \yield_X(t)$. 
  Now we have~$g'_{\pf}(\langle C,
  \pi\rangle) = \pf(t)$ and therefore $\hat{g}'_{\pf}(h(\langle C, \pi
  \rangle)) = \hat{g}'_{\pf}(\langle C, \pi\rangle \pi) = t =
  g'(\langle C, \pi \rangle)$.\footnote{To be precise, if
    $\yield_X(t) = \pi = \word x{i_1}{i_m}$, then
    $\hat{g}'_{\pf}(\langle C, \pi \rangle \pi) = \pf(t)[x_1 \gets
    x_{i_1}, \dotsc, x_m \gets x_{i_m}] = t$.}
  The second equality follows easily from
  Lemma~\ref{lem:comm-assoc}(4), and  the last equality is again by
  the definition of~`$\val$'.  Hence, we have shown that $\val(d') =
  \pf^*(\val(d))$, and it remains to show that the
  permutation~$\overline{\pi}$ in the left-hand side
  of~$\rho_{\word{B'}1k}$ fulfills~$\overline{\pi} =
  \yield_X^*(\val(d))$.  By the calculation above, $\yield_X^*(\val(d)) =
  \yield_X^*(\hat{h}(u')[g'_{\pf}])$.  In addition,  $\overline{\pi} =
  \yield_X^*(\hat{h}(u'))$ by the definition of~$\rho_{\word{B'}1k}$.
  Since $g'_{\pf}$~is permutation-free, these values are the same, as
  observed before this lemma. This proves that~$L(G) \subseteq
  L(G')$. 

  It is easy to see that the above transformation from~$d$ to~$d'$ can
  be realized by an \LDTR"~transducer~$M$ with one state~$q$.  In
  fact, it should be clear from the inductive definition of~$\bign(d)$
  that the set $L_{A'} = \{d \in \bigcup_{B \in \N} L(G_\der, B) \mid
  \bign(d) = A'\}$ is a regular tree language for every~$A' \in \N'$.
  Then, for the above rule~$\rho$, the transducer~$M$ has all the
  rules
  \[ \langle q,\, \rho(y_1 \colon L_{B'_1}, \dotsc, y_k \colon
  L_{B'_k}) \rangle \to \rho_{\word{B'}1k}(\langle q, y_1 \rangle,
  \dotsc, \langle q, y_k \rangle) \enspace. \]  
  Note that $M$ is a finite-state relabeling. 

  To show that~$L(G') \subseteq L(G)$, we observe that for every
  derivation tree~$d' \in L(G'_\der)$ the derivation tree~$d \in
  L(G_\der)$, which is obtained from~$d'$ by changing every 
  label~$\rho_{\word{B'}1k}$ into~$\rho$, satisfies~$M(d) = d'$ and
  hence~$\val(d) = \val(d')$.  Since this transformation from~$d'$
  to~$d$ is a projection, it can be realized by an LDT"~transducer.
\end{proof}

\subsection{Lexical normal forms}
\label{sub:lexnf}
\noindent
We first recall the notion of finite ambiguity
from~\cite{sch90,jossch92,kuhsat12}.\footnote{It should not be
  confused with the notion of finite ambiguity
  of~\protect{\cite{golleuwot92,klilommaipri04}}.}  
We distinguish a subset~$\Delta \subseteq \Sigma$ of \emph{lexical}
symbols, which are the symbols that are preserved by the lexical 
yield mapping. The \emph{lexical yield} of a tree $t \in T_\Sigma$
is the string $\yield_\Delta(t)\in\Delta^*$, as defined in Section~\ref{sub:seqs}. 
It is the string of occurrences of lexical symbols in $t$, from left to right;
all other symbols are simply dropped. 

\begin{definition}
  \label{df:limited}
  \upshape
  The tree language~$L \subseteq T_\Sigma$ has \emph{finite
    $\Delta$"~ambiguity} if $\{t \in L \mid \yield_\Delta(t) = w\}$ is
  finite for every~$w \in \Delta^*$.  The MCFTG~$G$ has finite
  $\Delta$"~ambiguity if $L(G)$~has finite $\Delta$"~ambiguity. \fin
\end{definition}

Roughly speaking, we can say that the language~$L$ has finite
$\Delta$"~ambiguity if each~$w \in \Delta^*$ has finitely many
syntactic trees in~$L$, where $t$~is a syntactic tree of~$w$
if $w$ is its lexical yield. 
Note that $\abs{\yield_\Delta(t)} =
\abs{\pos_\Delta(t)}$; thus, $L$~has finite $\Delta$"~ambiguity if and
only if $\{t \in L \mid \abs{\pos_\Delta(t)} = n\}$ is finite for
every~$n \in \nat_0$.  Note also that if $\Sigma^{(0)} \cup \Sigma^{(1)}
\subseteq \Delta$~or~$\Sigma \setminus \Sigma^{(0)} \subseteq \Delta$,
then every tree language~$L \subseteq T_\Sigma$ has finite
$\Delta$"~ambiguity.

\begin{example}
  \label{exa:finamb}
  \upshape
  For the MCFTG~$G$ of Example~\ref{exa:main} we consider the
  set~$\Delta = \Sigma \setminus \{\sigma, \gamma\} = \{\alpha, \beta,
  \tau, \nu\}$ of lexical symbols.  It should be clear from
  Example~\ref{exa:main} that in each tree of~$L(G)$ the number of
  occurrences of~$\gamma$ coincides with the number of occurrences
  of~$\beta$.  Since $\Delta \cup \{\gamma\} = \Sigma^{(0)} \cup
  \Sigma^{(1)}$, this implies that $L(G)$~as well as~$G$ have finite
  $\Delta$"~ambiguity.  Similarly, the number of occurrences of~$\nu$
  in a tree of~$L(G)$ coincides with the number of occurrences
  of~$\tau$, and the number of occurrences of~$\beta$ is half the
  number of occurrences of~$\alpha$.  Hence $G$~also has finite
  $\{\alpha, \tau\}$"~ambiguity, but for convenience we will continue
  to use the lexical symbols~$\Delta$ in examples. \fin
\end{example}

In this contribution, we want to lexicalize~MCFTGs, which means that
for each MCFTG~$G$ that has finite $\Delta$"~ambiguity, we want to
construct an equivalent MCFTG~$G'$ such that each proper
rule\footnote{Recall from the beginning of
  Section~\protect{\ref{sub:basicnf}} that a rule is proper if it is
  not both initial and terminal.} contains at least one lexical
symbol.  Let us formalize our lexicalization property.

\begin{definition}
  \label{def:lexicalized}
  \upshape
  The forest~$t$ is \emph{$\Delta$"~lexicalized} if $\pos_\Delta(t)
  \neq \emptyset$.  The rule~$A\to (u, \LL)$ is $\Delta$"~lexicalized
  if $u$~is $\Delta$"~lexicalized.  The MCFTG~$G$ is
  $\Delta$"~lexicalized if all its proper rules are
  $\Delta$"~lexicalized.  A~forest or rule is \emph{$\Delta$"~free} if
  it is not $\Delta$"~lexicalized.  The rule~$A \to (u, \LL)$ is
  \emph{doubly} $\Delta$"~lexicalized if $\abs{\pos_\Delta(u)} \geq
  2$, and it is \emph{singly} $\Delta$"~lexicalized if
  $\abs{\pos_\Delta(u)} = 1$. \fin
\end{definition}

Clearly, for every derivation tree~$d$, the value~$\val(d)$ is
$\Delta$"~free if and only if all rules that occur in~$d$ are
$\Delta$"~free by Lemma~\ref{lem:valocc}(1).  For the grammar~$G$ of
Example~\ref{exa:main} with $\Delta = \{\alpha, \beta, \tau, \nu\}$ as
in Example~\ref{exa:finamb},  the rules
\[ \rho_1 = S \to \alpha(A) \quad \rho_5 = (T_1(x_1), T_2, T_3) \to
(\alpha(T_1(\beta(x_1))), \alpha(T_2), \gamma(T_3)) \quad \rho_6 =
(T_1(x_1), T_2, T_3) \to (x_1, \tau, \nu) \]
are $\Delta$"~lexicalized ($\rho_1$~singly and both
$\rho_5$~and~$\rho_6$ doubly), whereas rule~$\rho_4 = B(x_1) \to x_1$
is not even $\Sigma$"~lexicalized.  

Thus, for each MCFTG~$G$ that has finite $\Delta$"~ambiguity, we want to
construct an equivalent MCFTG~$G'$ that is $\Delta$"~lexicalized.
This notion of lexicalization is also called strong
lexicalization~\protect{\cite{sch90,jossch92,kuhsat12}} because it
requires strong equivalence of $G$~and~$G'$; i.e.,~$L(G') = L(G)$.
Weak lexicalization~\protect{\cite{jossch92}} just requires weak
equivalence of $G$~and~$G'$; i.e., $\yield_\Delta(L(G')) =
\yield_\Delta(L(G))$. Clearly, with slight adaptations, 
these definitions can be applied 
to any type of context-free-like grammar that has terminal 
(ranked or unranked) alphabet~$\Sigma$. 
In the literature only two cases are considered: 
$\Delta=\Sigma$ for unranked alphabets and 
$\Delta=\Sigma^{(0)}\setminus\{e\}$ for ranked alphabets. 
It seems to be quite natural and relevant to consider arbitrary $\Delta$.
  
It should be intuitively clear
(and will be shown below) that an MCFTG that does not have finite
$\Delta$"~ambiguity cannot be lexicalized (with respect to $\Delta$).  Thus, we will
prove that an MCFTG can be lexicalized (with respect to $\Delta$) if and only if it has
finite $\Delta$"~ambiguity.  Moreover, we will prove that this
property is decidable.

To lexicalize an MCFTG of finite ambiguity, we need an auxiliary
normal form (stated in Theorem~\ref{thm:dec-growing}).  It generalizes
the Growing Normal Form of \cite{sta09,staott07} for
spCFTGs.  In the remainder of this section the MCFTG $G = (N, \N,
\Sigma, S, R)$ is not assumed to have finite $\Delta$"~ambiguity
unless this is explicitly mentioned.  We only assume that $G$~is
start-separated and reduced.  A rule~$\rho$ is \emph{monic} if
$\abs{\LL(\rho)} = 1$; i.e., $\LL(\rho)$~is a singleton or
equivalently $\rho$~has rank~1 in~$G_\der$.

\begin{definition}
  \label{def:grow}
  \upshape
  The MCFTG~$G$ is \emph{$\Delta$"~growing} if all its non-initial
  terminal rules are doubly $\Delta$"~lexicalized, and all its monic
  rules are $\Delta$"~lexicalized.  It is \emph{almost
    $\Delta$"~growing} if all its non-initial terminal rules and all
  its monic rules are $\Delta$"~lexicalized. \fin
\end{definition}

The application of a proper rule of a $\Delta$"~growing MCFTG
increases the sum of the number of occurrences of lexical symbols and the number of
occurrences of big nonterminals.  In this section we will prove that for every
MCFTG~$G$ of finite $\Delta$"~ambiguity there is an equivalent
$\Delta$"~growing MCFTG (see Theorem~\ref{thm:dec-growing}).
The instance of this result for spCFTGs~and~$\Delta = \Sigma$ is due
to~\cite[Proposition~2]{staott07} and fully proved in~\cite{sta09}.
Note that if $G$~is almost $\Sigma$"~growing, then \emph{all} its
terminal rules are $\Sigma$"~lexicalized.  Note also that every
$\Delta$"~growing MCFTG is almost $\Delta$"~growing, and that every
$\Delta$"~lexicalized MCFTG is almost $\Delta$"~growing.  The
grammar~$G$ of Example~\ref{exa:main} with~$\Delta = \{\alpha, \beta,
\tau, \nu\}$ as in Example~\ref{exa:finamb} is \emph{not} almost
$\Delta$"~growing because of rule~$\rho_4 = B(x_1) \to x_1$.

If the MCFTG~$G$ is almost $\Delta$"~growing, then all its rules
satisfy the requirements for a $\Delta$"~growing grammar except the
non-initial terminal rules, which might be singly
$\Delta$"~lexicalized.  The application of such a rule does not change
the sum of the number of occurrences of lexical symbols and the number of 
occurrences of big nonterminals because a big nonterminal is replaced by 
a lexical symbol. This leads to the following lemma.

\begin{lemma}
  \label{lem:grow-amb}
  If $G$~is almost $\Delta$"~growing, then $G$~has finite
  $\Delta$"~ambiguity and
  \begin{align}
    \label{eq:grow-amb:1} 
    \abs{\pos(d)} \leq 2 \cdot (\abs{\pos_\Delta(\val(d))} +
    \abs{\pos_\N(d)}) + 1 \leq 2 \cdot \abs{\pos_{N \cup
        \Delta}(\val(d))} + 1 \tag{$\dagger$}
  \end{align}
  for every derivation tree~$d$ of $G$; i.e., for every $d\in \bigcup_{A \in \N}
  \DL(G_\der, A)$.
\end{lemma}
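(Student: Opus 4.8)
The statement asserts two things about an almost $\Delta$-growing MCFTG $G$: (i) $G$ has finite $\Delta$-ambiguity, and (ii) the size bound~\eqref{eq:grow-amb:1} holds for every derivation tree. I would prove~(ii) first, since finite ambiguity will follow from it almost immediately. The plan is to establish the first inequality in~\eqref{eq:grow-amb:1} by structural induction on the derivation tree $d$, and to obtain the second inequality as a direct consequence of Lemma~\ref{lem:valocc}(1) and~(2), which relate occurrences of lexical symbols and nonterminals in $\val(d)$ to features of $d$ itself.

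First I would set up the induction on the structure of $d \in \DL(G_\der, A)$. The base case is $d = A \in \N$, where $\abs{\pos(d)} = 1$, $\pos_\N(d) = \{\varepsilon\}$ so $\abs{\pos_\N(d)} = 1$, and $\val(d) = \init(A)$ contains no lexical symbols, giving $1 \leq 2\cdot(0 + 1) + 1 = 3$, which holds. For the inductive step, let $d = \rho(\seq d1k)$ with $\rho = A \to (u,\LL)$, $\LL = \{\seq B1k\}$, and $d_i$ of type $B_i$. I would write $\abs{\pos(d)} = 1 + \sum_{i=1}^k \abs{\pos(d_i)}$ and apply the induction hypothesis to each $d_i$. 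The crux is then a case analysis on the rank $k = \abs{\LL(\rho)}$ of the rule, using the almost-$\Delta$-growing hypothesis to control the ``$+1$'' contributed by the root. The key bookkeeping facts are that $\abs{\pos_\Delta(\val(d))} = \sum_i \abs{\pos_\Delta(\val(d_i))} + \abs{\pos_\Delta(u)}$ (Lemma~\ref{lem:valocc}(1)) and $\abs{\pos_\N(d)} = \sum_i \abs{\pos_\N(d_i)} + 1$, so that the inductive quantity on the right grows additively.

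The main obstacle is the case analysis ensuring the additional $+1$ at the root is always ``paid for.'' I expect three cases. If $k \geq 2$ (the rule introduces at least two big nonterminals), then summing $k$ copies of the induction hypothesis produces a surplus: each hypothesis carries its own $+1$, and $\sum_{i=1}^k (\ldots + 1) = (\ldots) + k \geq (\ldots) + 2$, which absorbs both the root's contribution to $\abs{\pos(d)}$ and the single global $+1$. If $k = 1$ (a monic rule), the almost-growing hypothesis guarantees $\rho$ is $\Delta$-lexicalized, so $\abs{\pos_\Delta(u)} \geq 1$; this extra lexical occurrence, weighted by the factor~$2$ on the right-hand side, covers the root node. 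If $k = 0$ (a terminal rule), then either $\rho$ is initial—so $d$ is the whole derivation tree of type $S$ and there are no further terminal rules to worry about, handled as a boundary case—or $\rho$ is non-initial terminal, whence the hypothesis again gives $\abs{\pos_\Delta(u)} \geq 1$ and the factor~$2$ does the work. I would check that in each case the arithmetic $1 + \sum_i(\text{bound}_i)$ stays below $2\cdot(\abs{\pos_\Delta(\val(d))} + \abs{\pos_\N(d)}) + 1$; the delicate point is that the single additive $+1$ in the bound is a \emph{global} constant, not accumulated per node, so I must verify it is never over-spent, which is precisely why the $k \geq 2$ case needs the surplus argument rather than a naive sum.

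Finally, the second inequality $2\cdot(\abs{\pos_\Delta(\val(d))} + \abs{\pos_\N(d)}) + 1 \leq 2\cdot\abs{\pos_{N\cup\Delta}(\val(d))} + 1$ follows because $\abs{\pos_\N(d)} \leq \abs{\pos_N(\val(d))}$ by Lemma~\ref{lem:valocc}(2) (each big-nonterminal occurrence in $d$ contributes $\abs{\alp(B)} \geq 1$ nonterminal occurrences in $\val(d)$), together with $\Delta \subseteq \Sigma$ and $N \cap \Sigma = \emptyset$, so that $\abs{\pos_\Delta(\val(d))} + \abs{\pos_N(\val(d))} = \abs{\pos_{N\cup\Delta}(\val(d))}$. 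For finite $\Delta$-ambiguity I would then observe that for a derivation tree $d \in L(G_\der)$ we have $\val(d) \in T_\Sigma$, so $\abs{\pos_N(\val(d))} = 0$ and~\eqref{eq:grow-amb:1} yields $\abs{\pos(d)} \leq 2\cdot\abs{\pos_\Delta(\val(d))} + 1$; hence fixing the lexical yield (equivalently, fixing $\abs{\pos_\Delta(\val(d))} = n$) bounds the size of $d$, and since $R$ is finite there are only finitely many derivation trees of bounded size, so only finitely many values $\val(d)$, proving $L(G)$ has finite $\Delta$-ambiguity.
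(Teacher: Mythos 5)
Your handling of the second inequality and of the finite-ambiguity deduction matches the paper and is fine; the gap is in the inductive proof of the first inequality. First, a bookkeeping error: for $d = \rho(\seq d1k)$ the root of $d$ is labeled by the rule $\rho \in R$, not by a big nonterminal, so $\abs{\pos_\N(d)} = \sum_{i=1}^{k}\abs{\pos_\N(d_i)}$, \emph{without} the ``$+1$'' you claim. Second, and more seriously, your invariant does not propagate through rules of rank $k \geq 2$, and your ``surplus'' argument is backwards. Summing the induction hypotheses gives $\abs{\pos(d)} = 1 + \sum_i\abs{\pos(d_i)} \leq 1 + k + 2\cdot\sum_i\bigl(\abs{\pos_\Delta(\val(d_i))}+\abs{\pos_\N(d_i)}\bigr)$; the $+k$ lands on the \emph{upper-bound} side, so it is a deficit, not a surplus, and reaching the target requires $k \leq 2\cdot\abs{\pos_\Delta(\rhs(\rho))}$. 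But an almost $\Delta$-growing grammar may contain $\Delta$-free rules with $k \geq 2$ links (only non-initial terminal and monic rules are required to be lexicalized), and for such rules this fails. Concretely, if $\rho$ is $\Delta$-free with $\LL(\rho)=\{B_1,B_2\}$ and $d = \rho(B_1,B_2)$, the statement itself holds ($3 \leq 5$), but your hypotheses for the children only yield $\abs{\pos(d)} \leq 1+3+3 = 7 \not\leq 5$, so the inductive step cannot be closed. (Your spurious ``$+1$'' happens to mask this for $k=2$, since it inflates the target to $7$, but the same computation with a $\Delta$-free rule of rank $3$ gives $10 \not\leq 9$, so the induction breaks either way.)

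The statement is nevertheless provable by induction if you \emph{strengthen} the invariant by removing the slack: show $\abs{\pos(d)} + 1 \leq 2\cdot\bigl(\abs{\pos_\Delta(\val(d))}+\abs{\pos_\N(d)}\bigr)$ for every derivation tree $d$ that is not a single initial terminal rule (by start-separation such $d$ contain no initial rules at all, so the hypothesis applies to all children). With children satisfying $\abs{\pos(d_i)} \leq 2\cdot\bigl(\abs{\pos_\Delta(\val(d_i))}+\abs{\pos_\N(d_i)}\bigr) - 1$, the step needs only $2-k \leq 2\cdot\abs{\pos_\Delta(\rhs(\rho))}$, which is trivial for $k \geq 2$, follows from lexicalization of monic rules for $k=1$, and from lexicalization of non-initial terminal rules for $k=0$; the weak form with the global $+1$ then follows, since it holds trivially for initial terminal rules. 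This strengthened inequality is exactly what the paper reduces to, although the paper avoids the induction entirely: it applies the general counting fact of Section~\ref{sub:trees} (bounding $\abs{\pos(d)}+1$ by twice the number of rank-$0$ positions plus the number of rank-$1$ positions of $d$ over $\N \cup R$) and then bounds $\abs{\pos_{R^{(0)}}(d)}+\abs{\pos_{R^{(1)}}(d)} \leq \abs{\pos_\Delta(\val(d))}$ using almost-growing together with Lemma~\ref{lem:valocc}(1).
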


\begin{proof}
  We begin with~($\dagger$).  Let $R_{\text{it}}$~be the set of all
  initial terminal rules.  The first inequality is clearly fulfilled
  for~$d \in R_{\text{it}}$, and it suffices to show that
  $\abs{\pos(d)} + 1 \leq 2 \cdot (\abs{\pos_\Delta(\val(d))} +
  \abs{\pos_\N(d)})$ for the remaining derivation trees $d \notin
  R_{\text{it}}$.  For every such tree~$d$ we have
  \[ \abs{\pos(d)} + 1 \leq 2 \cdot \Bigl(\abs{\pos_\N(d)} +
  \abs{\pos_{R^{(0)}}(d)} + \abs{\pos_{R^{(1)}}(d)} \Bigr)
  \enspace, \]
  where $R^{(0)}$~and~$R^{(1)}$ are the sets of terminal and monic
  rules, respectively (see Section~\ref{sub:trees}).  Since $G$~is
  almost $\Delta$"~growing and $\pos_{R_{\text{it}}}(d) = \emptyset$,
  we obtain
  \[ \abs{\pos_{R^{(0)}}(d)} + \abs{\pos_{R^{(1)}}(d)} \leq
  \sum_{p \in \pos_R(d)} \abs{\pos_\Delta(\rhs(d(p)))} =
  \abs{\pos_\Delta(\val(d))} \enspace, \] where the last equality
  holds by Lemma~\ref{lem:valocc}(1).  The second inequality
  in~$(\dagger)$ follows from the first because $\abs{\pos_\N(d)} \leq
  \abs{\pos_N(\val(d))}$ by Lemma~\ref{lem:valocc}(2).

  For the first part of the statement, we consider the set~$L_w = \{t
  \in L(G) \mid \yield_\Delta(t) = w\}$ for some~$w \in \Delta^*$.
  For every derivation tree~$d \in L(G_\der)$ we have~$\pos_\N(d) =
  \emptyset$, and consequently we obtain $\abs{\pos_\Delta(\val(d))} +
  \abs{\pos_\N(d)} = \abs{\yield_\Delta(\val(d))}$.  Hence
  $\abs{\pos(d)} \leq 2 \cdot \abs{w} + 1$ if $\val(d)\in
  L_w$, utilizing~($\dagger$).  This shows that $D_w = \{d \in
  L(G_\der) \mid \val(d) \in L_w\}$~is finite, and so~$L_w$ is finite
  because $L_w = \val(D_w)$ by Theorem~\ref{thm:dtree}.
\end{proof}

The previous result also shows that if $G$~does not have finite
$\Delta$"~ambiguity, then there is no $\Delta$"~lexicalized MCFTG
equivalent to~$G$, as we observed above.  

Our first goal (in proving Theorem~\ref{thm:dec-growing}) is to make sure that all
the non-initial terminal rules are $\Delta$"~lexicalized; i.e., contain a
lexical symbol.  However, for later use, we start by proving a more
general lemma that will allow us to remove every non-initial terminal
rule of which the right-hand side has a certain property~$\F$ subject
to certain requirements.  In particular, the value of a derivation
tree~$d$ has property~$\F$ if and only if $d$~only contains rules of a
corresponding subset~$F \subseteq R$ of rules.   Additionally, each big
nonterminal can only generate finitely many forests with property~$\F$. 
An example of such a property is $\Sigma$"~freeness.
The next construction generalizes the removal of epsilon-rules~$A \to
\varepsilon$ from a context-free grammar~\cite{hopmotull01}.

\newpage
\begin{lemma}
  \label{lem:terminal-removal}
  Let $\F \subseteq P_\Sigma(X)^{\scriptscriptstyle +}$ and $F
  \subseteq R$.  If
  \begin{compactenum}[\indent\upshape(1)]
  \item $L(G,A) \cap \F$~is finite for every $A \in \N$, and
  \item $\val(d) \in \F$~if and only if~$d \in T_F$, for every $d \in
    \bigcup_{A \in \N} L(G_\der, A)$,
  \end{compactenum}
  then there is an \LDTR"~equivalent MCFTG~$G'= (N, \N, \Sigma, S,
  R')$ such that $\rhs(\rho) \notin \F$~for every non-initial terminal
  rule~$\rho \in R'$.
\end{lemma}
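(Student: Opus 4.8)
The plan is to adapt the classical elimination of $\varepsilon$-rules from context-free grammars: instead of building up an $\F$-value of a big nonterminal step by step with rules from $F$, I would substitute the finished $\F$-value directly into the parent rule. The two hypotheses are precisely what makes this work. By~(1) each big nonterminal $B$ has only finitely many $\F$-values, collected in $V_B = L(G,B) \cap \F$; and by~(2) a derivation tree has its value in $\F$ exactly when it lies in $T_F$, so an ``$\F$-subtree'' of a derivation tree is a self-contained unit that may be collapsed to its (terminal) value. The grammar $G'$ will use the original rules, but with some of the links of each rule optionally pre-substituted by such $\F$-values, and with the non-initial terminal rules in $\F$ discarded.

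First I would record, for every $B \in \N$ and $v \in V_B$, that the set $L_{\langle B, v\rangle} = \{d \in L(G_\der, B) \cap T_F \mid \val(d) = v\}$ is a regular tree language. This is Lemma~\ref{lem:easy-dtree} applied with $\N' = \emptyset$: there $\D_B = L(G_\der, B) \cap T_F$, and by~(2) we have $\val(\D_B) = L(G,B) \cap \F = V_B$, which is finite by~(1). These regular sets serve as the look-ahead of the forward transducer. Next I would build $R'$: for every rule $\rho = A \to (u, \LL)$ with $\LL = \{\seq B1k\}$, every $J \subseteq [k]$, and every choice $v_i \in V_{B_i}$ for $i \in J$, I form
\[ \rho_J = A \to \bigl(\,u[B_i \gets v_i \mid i \in J],\ \{B_i \mid i \in [k] \setminus J\}\,\bigr) \enspace. \]
Since each $v_i \in P_\Sigma(X)^{\scriptscriptstyle +}$ is terminal of rank $\rk(B_i)$, this substitution merely deletes the positions of $B_i$ without introducing new nonterminals, so $\rho_J$ again satisfies the unique $N$-labeling and partition conditions of Definition~\ref{def:mcftg}. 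I keep $\rho_J$ in $R'$ unless it is a non-initial terminal rule whose right-hand side lies in~$\F$; these are exactly the discarded rules, and $G'$ then has the required property.

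Then I would prove \LDTR"~equivalence by giving the two transducers of Definition~\ref{def:LDTReq} with $\varphi = \mathrm{id}$. The forward transducer $M$ runs top-down: at a node labeled $\rho$ it uses the look-ahead languages $L_{\langle B_i, v\rangle}$ (together with their ``not in $T_F$'' complements within $L(G_\der, B_i)$, which are also regular) to read off, for each child, whether that child is an $\F$-subtree and, if so, with which value; it outputs the matching $\rho_J$ and recurses only on the non-absorbed children $i \in [k] \setminus J$. The key point is that $M$ never needs a discarded rule on an input $d \in L(G_\der)$: a discarded rule would arise only at a node whose entire subtree lies in $T_F$ (all children absorbed and $\rho \in F$), but by~(2) such a subtree has value in $\F$ and is therefore already absorbed by its parent one level higher --- and the root, being initial, is never discarded. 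For the converse I fix one canonical witness $\mathrm{wit}(B,v) \in L_{\langle B, v\rangle}$ for each pair, and let the LDT"~transducer $M'$ replace each $\rho_J$ by $\rho$, reinserting the constant subtrees $\mathrm{wit}(B_i, v_i)$ for $i \in J$ and recursing on the remaining children. Value preservation in both directions follows by induction on the derivation tree, combining the two partial substitutions into the single simultaneous substitution $u[B_i \gets \val(d_i) \mid 1 \leq i \leq k] = \val(d)$ via Lemma~\ref{lem:comm-assoc}(2), which applies because the absorbed forests $v_i$ contain no nonterminals.

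The main obstacle I expect is the verification that $M$ is a well-defined \LDTR"~transducer that maps $L(G_\der)$ into $L(G'_\der)$. This needs the determinism and linearity bookkeeping for the look-ahead partition (the classes $L_{\langle B_i, v\rangle}$ and the complement are pairwise disjoint and cover $L(G_\der, B_i)$, and each input child is consumed at most once), together with the argument --- resting squarely on hypothesis~(2) --- that discarded rules are never produced because every maximal $\F$-subtree is absorbed exactly one level above. Once the commutation and substitution lemmas of Section~\ref{sub:sub} are invoked, the remaining type-checking (that $M(d) \in L(G'_\der)$ and $M'(d') \in L(G_\der)$) and the value-preservation inductions are routine.
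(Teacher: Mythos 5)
Your proposal is correct and follows essentially the same route as the paper's proof: the same key regularity fact (Lemma~\ref{lem:easy-dtree} with $\N'=\emptyset$), the same new rules with links optionally pre-substituted by $\F$-values (your $\rho_J$ are exactly the paper's $\rho_f$, and your retained initial rules with right-hand side in $\F$ play the role of the paper's explicit rules $S \to t$ for $t \in \F_S$), and the same pair of transducers --- forward with child-level look-ahead on the languages $L_{\langle B, v\rangle}$, backward reinserting fixed witness derivation trees. The only cosmetic deviations are your use of Lemma~\ref{lem:comm-assoc}(2) where the paper invokes Lemma~\ref{lem:comm-assoc}(4), and your root special-casing in place of the paper's look-ahead $T_R \setminus T_F$ on the current node; both variants are equally valid.
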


\begin{proof}
  For the effectiveness of the constructions in this proof, we assume
  that $\F$~is a decidable subset of~$P_\Sigma(X)^+$, and that the
  elements of~$L(G, A) \cap \F$ are effectively given for every~$A \in
  \N$.  For~$A \in \N$, let~$\F_A = L(G, A) \cap \F$, which is finite
  by~(1).  Moreover,  $\F_A = \val(L(G_\der, A) \cap T_F)$ by~(2) and
  Theorem~\ref{thm:dtree}.  For every $A \in \N$~and~$t \in \F_A$, let
  $L_{\langle A, t\rangle} = \{d \in L(G_\der, A) \cap T_F \mid
  \val(d) = t\}$.  By Lemma~\ref{lem:easy-dtree} applied with~$\N' =
  \emptyset$, the tree language~$L_{\langle A, t\rangle}$ is regular.

  We now construct the MCFTG~$G' = (N, \N, \Sigma, S, R')$. The
  rule~$\rho_{S, t} = S \to t$ is in~$R'$ for every~$t \in \F_S$.
  Moreover, for every rule~$\rho = A \to (u, \LL)$ of~$G$ and every
  substitution function~$f$ for~$\LL$ such that $f(B) \in
  \F_B \cup \{\init(B)\}$ for every~$B \in \LL$, the set~$R'$
  contains the rule
  \[ \rho_f = A \to (u[f],\, \{B \in \LL
  \mid f(B) = \init(B)\}) \enspace, \]
  provided that~$u[f] \notin \F$.  The linear order on~$\LL(\rho_f)$
  is inherited from the one on~$\LL$.  To be precise, let $\LL =
  \{\seq B1k\}$~and~$\Phi = \{i \in [k] \mid f(B_i) \in \F_{B_i}\}$.
  Moreover, let~$[k] \setminus \Phi = \{\seq i1n\}$ with~$i_1 < \dotsb
  < i_n$.  Then $\LL(\rho_f) = \{\seq B{i_1}{i_n}\}$.  This ends the
  construction of~$G'$, so no other rules are in~$R'$.

  First, we prove that for every derivation tree~$d \in L(G_\der, A)
  \setminus T_F$ a derivation tree~$d' \in L(G'_\der, A)$
  with~$\val(d') = \val(d)$ exists.  This shows~$L(G) \subseteq
  L(G')$ because $L(G) = \val(L(G_\der, S) \setminus T_F) \cup \F_S$.
  The proof proceeds by induction on the structure of~$d$.  Let $d =
  \rho(\seq d1k)$ for some $k \in \nat_0$, rule~$\rho = A \to (u, \LL)
  \in R$ with~$\LL = \{\seq B1k\}$, and $d_i \in L(G_\der, B_i)$ for
  every~$i \in [k]$.  Let $\Phi = \{i \in [k] \mid d_i \in T_F\}$, and
  let $f$~be the substitution function for~$\LL$ such that $f(B_i) =
  \val(d_i)$ if~$i \in \Phi$ and $f(B_i) = \init(B_i)$~otherwise.
  Note that $f(B_i) \in \F_{B_i}$ for every~$i \in \Phi$
  by~(2). Since~$d \notin T_F$ we have~$u[f] \notin \F$.  In fact, if
  $u[f] \in \F \subseteq P_\Sigma(X)^{\scriptscriptstyle +}$, then
  $f(B_i) \neq \init(B_i)$ for all~$i \in [k]$ by
  Lemma~\ref{lem:treehom}(2), which yields that~$u[f] = u[B_i \gets
  \val(d_i) \mid 1 \leq i \leq k] = \val(d)$ is in~$\F$ and thus that
  $d \in T_F$ by~(2).  Consequently, $\rho_f \in R'$.  Now let $[k]
  \setminus \Phi = \{\seq i1n\}$ with~$i_1 < \dotsb < i_n$.  By the
  induction hypothesis, there exists a derivation tree~$d'_{i_j} \in
  L(G'_\der, B_{i_j})$ with~$\val(d'_{i_j}) = \val(d_{i_j})$ for
  every~$j \in [n]$.   We now take $d' = \rho_f(\seq{d'}{i_1}{i_n}) 
  \in L(G'_\der, A)$ and prove that~$\val(d') = \val(d)$.  Let
  $[g]$~abbreviate~$[B_i \gets \val(d_i) \mid i \in
  \{i_1,\dots,i_n\}]$.   Then $\val(d') = u[f] [g]$.  By
  Lemma~\ref{lem:comm-assoc}(4) this implies that $\val(d') = u[B_i
  \gets f(B_i) [g] \mid 1 \leq i \leq k]$. 
  Clearly, $f(B_i)[g] = \val(d_i)$~for every~$i \in [k]$, which shows
  that $\val(d') = \val(d)$.

  It should be clear that the transformation from~$d$ to~$d'$, as
  defined above, can be realized by an \LDTR"~transducer~$M$ from~$R$
  to~$R'$.  It has one state~$q$, and for its look-ahead it uses the
  regular tree languages~$L_{\langle A, t\rangle}$, defined above
  for $A \in \N$~and~$t \in \F_A$ in addition to the regular tree
  language~$L_0 = T_R \setminus T_F$.   All subtrees in~$T_F$ are
  deleted by~$M$.  The translation of derivation trees~$d = \rho(\seq
  d1k) \in L(G_\der, A) \setminus T_F$ (as discussed above) is
  realized by the rules $\langle q,\, \rho(y_1 \colon L_{b_1}, \dotsc,
  y_k \colon L_{b_k}) \colon L_0 \rangle \to \rho_f(\langle q,
  y_{i_1}\rangle, \dotsc, \langle q,y_{i_n} \rangle)$ such that $b_i
  \in \{0\} \cup \{\langle B_i, t_i\rangle \mid t_i \in \F_{B_i}\}$
  for all $i \in [k]$, where $f(B_i) = \init(B_i)$ if~$b_i = 0$ and
  $f(B_i) = t_i$ if~$b_i = \langle B_i, t_i\rangle$, and $\{i \in [k]
  \mid b_i = 0\} = \{\seq i1n\}$ with~$i_1 < \dotsb < i_n$.  The
  translation of derivation trees~$d\in L(G_\der) \cap T_F$ is
  realized by the rules $\langle q,\, \rho(\seq y1k) \colon L_{\langle
    S, t\rangle}\rangle \to \rho_{S, t}$ with $t\in \F_S$.

  Second, we show that~$L(G') \subseteq L(G)$.  For every $A \in
  \N$~and~$t \in \F_A$, let $d_{A, t}$~be a fixed derivation tree
  in~$L_{\langle A, t\rangle}$, which can be constructed from the
  regular tree grammar that generates~$L_{\langle A, t\rangle}$.
  Since~$\F_S \subseteq L(G)$, it suffices to prove 
  that for every derivation tree~$d' \in L(G'_\der, A)$ of which the
  root is labeled with a rule~$\rho_f$, a derivation tree~$d \in
  L(G_\der, A) \setminus T_F$ can be constructed such that~$\val(d) =
  \val(d')$.  The proof proceeds by induction on the structure
  of~$d'$.  Let $d' = \rho_f(\seq{d'}{i_1}{i_n})$ with the same
  notation as in the construction of~$G'$.   By the induction
  hypotheses, there are derivation trees~$\seq d{i_1}{i_n}$ of~$G$
  such that $d_{i_j} \notin T_F$~and~$\val(d_{i_j}) = \val(d'_{i_j})$
  for every~$j \in [n]$.  We now take~$d = \rho(\seq d1k)$, where $d_i
  = d_{B_i, f(B_i)}$ for every~$i \in \Phi = [k] \setminus \{\seq
  i1n\}$.  Thus $d_i \in T_F$~and~$\val(d_i) = f(B_i)$ for every~$i
  \in \Phi$.  Then~$d \notin T_F$ because if we suppose~$d \in T_F$,
  then $\seq d1k \in T_F$, which yields~$\Phi = [k]$ and the equality
  \[ u[f] = u[B_i \gets f(B_i) \mid 1 \leq i \leq k] = u[B_i \gets
  \val(d_i) \mid 1 \leq i \leq k] = \val(d) \enspace, \]  which 
  in turn yields the statement~$u[f] \in \F$, contradicting the
  fact that $\rho_f\in R'$.  It is easy to check that the
  \LDTR"~transducer~$M$, in the proof of~$L(G) \subseteq L(G')$,
  transforms~$d$ into~$d'$.  Hence~$\val(d) = \val(d')$.

  The transformation from~$d'$ to~$d$, as defined above, can easily be
  realized by an LDT"~transducer~$M'$ with one state~$q$.  For every
  rule~$\rho'$ of~$G'$, fix either $\rho$~and~$f$ with~$\rho' =
  \rho_f$ or $S$~and~$t$ with~$\rho' = \rho_{S,t}$ (there may be more
  than one such choice).  In the first case, $M'$~has the
  rule~$\langle q,\, \rho'(\seq y1n) \rangle \to \rho(\seq t1k)$, 
  where $t_i = d_{B_i, f(B_i)}$~for every~$i \in \Phi$ and $t_{i_j} =
  \langle q, y_j \rangle$ for every~$j \in [n]$.  In the second case, 
  it has the rule~$\langle q,\, \rho' \rangle \to d_{S,t}$.  This ends
  the proof that $G$~and~$G'$ are \LDTR"~equivalent.
\end{proof}

In the next lemma we show how Lemma~\ref{lem:terminal-removal} can be
used to remove $\Delta$"~free non-initial terminal rules.

\begin{lemma}
  \label{lem:d-epsilon-free}
  Let $F \subseteq R$ be the set of $\Delta$"~free rules.  If
  $\val(L(G_\der, A) \cap T_F)$ is finite for every~$A \in \N$, then
  there is an \LDTR"~equivalent MCFTG~$G'$ such that all its
  non-initial terminal rules are $\Delta$"~lexicalized.  Moreover, if
  $G$~is almost $\Sigma$"~growing, then so is~$G'$.  
\end{lemma}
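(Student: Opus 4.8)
The plan is to obtain $G'$ by a single application of Lemma~\ref{lem:terminal-removal}, instantiated with the property $\F = \{t \in P_\Sigma(X)\plus \mid \pos_\Delta(t) = \emptyset\}$ of all $\Delta$"~free forests of patterns and with $F$ the given set of $\Delta$"~free rules. For a terminal rule $\rho$, whose right-hand side lies in $P_\Sigma(X)\plus$, the conclusion ``$\rhs(\rho) \notin \F$'' of that lemma says precisely that $\rhs(\rho)$ is $\Delta$"~lexicalized; so once its two hypotheses are checked, the \LDTR"~equivalent grammar $G' = (N, \N, \Sigma, S, R')$ it produces has all non-initial terminal rules $\Delta$"~lexicalized, which is the first assertion. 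To verify hypothesis~(2), I would use that every $d \in L(G_\der, A)$ lies in $T_R$, so summing Lemma~\ref{lem:valocc}(1) over $\sigma \in \Delta$ gives $\abs{\pos_\Delta(\val(d))} = \sum_{p \in \pos_R(d)} \abs{\pos_\Delta(\rhs(d(p)))}$; hence $\val(d) \in \F$ iff every rule occurring in $d$ is $\Delta$"~free, i.e.\ iff $d \in T_F$. Hypothesis~(1) then follows from~(2) together with Theorem~\ref{thm:dtree}, since $L(G, A) \cap \F = \val(L(G_\der, A)) \cap \F = \val(L(G_\der, A) \cap T_F)$, which is finite by assumption. (The decidability of $\F$ and the effective listing of $L(G,A) \cap \F$ required for effectiveness are routine, since $L(G_\der, A) \cap T_F$ is regular.)

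The genuinely new content is the \emph{moreover} statement, which is where I expect the main obstacle. Assuming $G$ is almost $\Sigma$"~growing, I must show that both the non-initial terminal rules and the monic rules of $G'$ are $\Sigma$"~lexicalized. The first kind is handled for free: they are $\Delta$"~lexicalized by the first part and $\Delta \subseteq \Sigma$. The crucial preparatory step is an auxiliary claim about $G$ itself: in a start-separated, almost $\Sigma$"~growing MCFTG, every forest in $L(G, B)$ with $B \neq S$ is $\Sigma$"~lexicalized. To prove it, write such a forest as $\val(d)$ with $d \in L(G_\der, B)$; the finite tree $d$ has a leaf, labeled by a rule with no links, hence a terminal rule. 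Since $G$ is start-separated, $S$ is never a link, so no node of $d$ (whose root has type $B \neq S$) has type $S$; in particular the leaf rule has left-hand side $\neq S$ and is therefore a non-initial terminal rule, which is $\Sigma$"~lexicalized by the growing hypothesis. Lemma~\ref{lem:valocc}(1) then forces $\val(d)$ to contain a terminal.

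Finally I would treat a monic rule $\rho_f = A \to (u[f], \LL')$ with $\abs{\LL'} = 1$, produced from some $\rho = A \to (u, \LL)$ of $G$. If $u$ itself contains a terminal, it survives the substitution and $u[f]$ is $\Sigma$"~lexicalized. Otherwise $u$ is $\Sigma$"~free; since $\LL' \subseteq \LL$ is nonempty, $\LL \neq \emptyset$. If $\abs{\LL} = 1$ then $\rho$ is monic in $G$, so it is $\Sigma$"~lexicalized and $u$ would contain a terminal, a contradiction; hence $\abs{\LL} \geq 2$, and at least one link $B \in \LL \setminus \LL'$ has been substituted, i.e.\ $f(B) \in \F_B \subseteq L(G, B)$ with $B \neq S$. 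By the auxiliary claim $f(B)$ is $\Sigma$"~lexicalized, and since $B$ occurs in $u$ a terminal of $f(B)$ survives into $u[f]$. Thus every monic rule of $G'$ is $\Sigma$"~lexicalized, completing the moreover part. The subtle point throughout — and the reason the auxiliary claim is needed — is exactly that a rule of $G$ with several links may turn into a monic rule of $G'$ once its other links are collapsed by $\F$"~substitutions, so monic-ness of $\rho_f$ cannot be read off from $\rho$ directly.
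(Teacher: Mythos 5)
Your proposal is correct and follows essentially the same route as the paper: both reduce the first claim to Lemma~\ref{lem:terminal-removal} with $\F$ the set of $\Delta$"~free forests (verifying its two hypotheses via Lemma~\ref{lem:valocc}(1) and Theorem~\ref{thm:dtree}), and the moreover part rests in both cases on the same key observation that every forest in $\F_B$ is $\Sigma$"~lexicalized, which you prove for $B \neq S$ via start-separation while the paper proves it for all $A \in \N$ using the remark that \emph{all} terminal rules of an almost $\Sigma$"~growing grammar are $\Sigma$"~lexicalized. The only difference is organizational: where you case-split on the type of rule of $G'$ and on whether $\rhs(\rho)$ is $\Sigma$"~free, the paper argues uniformly that any rule $\rho_f$ with at least one substituted link is $\Sigma$"~lexicalized by Lemma~\ref{lem:treehom}(2), and that otherwise $\rho_f = \rho$ inherits the requirements from $G$.
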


\begin{proof}
  For the purpose of effectiveness, we assume that the elements
  of~$\val(L(G_\der, A) \cap T_F)$ are effectively given for every~$A
  \in \N$.  Let $\F$~be the set of $\Delta$"~free forests
  in~$P_\Sigma(X)^{\scriptscriptstyle +}$.  As observed before, for
  every derivation tree~$d$, the value~$\val(d)$ is $\Delta$"~free if
  and only if all rules that occur in~$d$ are $\Delta$"~free.  Thus,
  $\F$~and~$F$ satisfy requirement~(2) of
  Lemma~\ref{lem:terminal-removal}.  Hence, for every $A \in \N$ the
  set~$\F_A$, given by~$\F_A = L(G,A) \cap \F = \val(L(G_\der, A) \cap
  T_F)$, is finite and its elements are effectively given.  Thus,
  $\F$~also satisfies requirement~(1) of
  Lemma~\ref{lem:terminal-removal}.

  Let $G'$~be the \LDTR"~equivalent MCFTG as constructed in the proof
  of Lemma~\ref{lem:terminal-removal}.  Then all non-initial terminal
  rules of~$G'$ are $\Delta$"~lexicalized.  Assume now that $G$~is
  almost $\Sigma$"~growing.  Since all non-initial terminal rules
  of~$G$ are $\Sigma$"~lexicalized, the elements of~$L(G, A)$, and
  hence of~$\F_A$, are $\Sigma$"~lexicalized (by
  Theorem~\ref{thm:dtree} and Lemma~\ref{lem:valocc}(1)).  Now
  consider a rule~$\rho$ of~$G$ and a substitution function~$f$
  for~$\LL(\rho)$ such that $f(B) \in \F_{B} \cup \{\init(B)\}$ for
  every $B \in \LL(\rho)$.  If there is at least one~$B \in \LL$ such
  that $f(B) \in \F_{B}$,  then the rule~$\rho_f$ of~$G'$ is
  $\Sigma$"~lexicalized by Lemma~\ref{lem:treehom}(2).  Otherwise, we
  obviously have~$\rho_f = \rho$ and $\rho$~satisfies the requirements
  by assumption.  Hence $G'$~is almost $\Sigma$"~growing.
\end{proof}

We now remove the $\Sigma$"~free terminal rules from~$G$. 

\begin{lemma}
  \label{lem:epsilon-free}
  For every MCFTG~$G$ there is an \LDTR"~equivalent MCFTG~$G'$ of
  which all terminal rules are $\Sigma$"~lexicalized.
\end{lemma}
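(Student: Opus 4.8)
The plan is to derive the statement directly from Lemma~\ref{lem:d-epsilon-free} specialized to $\Delta = \Sigma$, together with a trivial observation about initial rules. First I would take $F \subseteq R$ to be the set of $\Sigma$"~free rules and check the hypothesis of Lemma~\ref{lem:d-epsilon-free}, namely that $\val(L(G_\der, A) \cap T_F)$ is finite for every $A \in \N$. Granting this, Lemma~\ref{lem:d-epsilon-free} yields an \LDTR"~equivalent MCFTG $G'$ all of whose non-initial terminal rules are $\Sigma$"~lexicalized, and it will then only remain to deal with the initial terminal rules.

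The crucial point is the shape of $\Sigma$"~free forests. If $t \in P_\Sigma(X)$ is a $\Sigma$"~free pattern, then $\pos_\Sigma(t) = \emptyset$, so every node of $t$ carries a variable label; since variables have rank~$0$, the pattern $t$ consists of a single variable node, and being a pattern it must equal $x_1 \in P_\Sigma(X_1)$ (for any other $x_i$ the variable $x_1$ would fail to occur). Hence the only $\Sigma$"~free forest in $P_\Sigma(X)\plus$ of rank $\rk(A)$ is $(x_1, \dotsc, x_1)$, which exists precisely when every component of $A$ has rank~$1$. By Lemma~\ref{lem:valocc}(1), $\val(d)$ is $\Sigma$"~free for every $d \in L(G_\der, A) \cap T_F$, so $\val(L(G_\der, A) \cap T_F) \subseteq \{(x_1, \dotsc, x_1)\}$ contains at most one element and is therefore finite. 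For effectiveness, whether $(x_1, \dotsc, x_1)$ lies in this set is equivalent to nonemptiness of $L(G_\der, A) \cap T_F$, which is decidable because $G_\der$ is an RTG and $T_F$ is regular. This establishes the hypothesis of Lemma~\ref{lem:d-epsilon-free}.

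Finally I would observe that every initial terminal rule of $G'$ is automatically $\Sigma$"~lexicalized: its left-hand side is $S$ with $\rk(S) = 0$, so by Definition~\ref{def:mcftg} its right-hand side is a single pattern of rank~$0$, i.e.\ a tree $t \in T_\Sigma$, whose root is labeled by a symbol of $\Sigma$, giving $\pos_\Sigma(t) \neq \emptyset$. Combined with the non-initial terminal rules supplied by Lemma~\ref{lem:d-epsilon-free}, this shows that all terminal rules of $G'$ are $\Sigma$"~lexicalized, as required. The entire argument is short; the only step that is not purely mechanical is the finiteness check, and even that reduces to the triviality of $\Sigma$"~free patterns. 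As an alternative one could apply Lemma~\ref{lem:terminal-removal} directly, with $\F$ the set of $\Sigma$"~free forests in $P_\Sigma(X)\plus$ and $F$ as above, but routing through Lemma~\ref{lem:d-epsilon-free} is more economical.
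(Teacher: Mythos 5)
Your proposal is correct and follows essentially the same route as the paper: both apply Lemma~\ref{lem:d-epsilon-free} with $\Delta = \Sigma$, verify its finiteness hypothesis via the observation that the only $\Sigma$"~free forests in $P_\Sigma(X)^{\scriptscriptstyle +}$ are of the form $(x_1, \dotsc, x_1)$ (so each set $\val(L(G_\der, A) \cap T_F)$ has at most one element, computable via emptiness of the regular tree language $L(G_\der, A) \cap T_F$), and then note that initial terminal rules are automatically $\Sigma$"~lexicalized. Your explicit justification of the last point (the right-hand side of an initial terminal rule is a single nonempty tree in $T_\Sigma$) merely spells out what the paper dismisses as obvious.
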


\begin{proof}
  As in the previous lemma, let $F$~be the set of $\Sigma$"~free rules
  in~$R$, and let $\F$~be the set of $\Sigma$"~free forests
  in~$P_\Sigma(X)^{\scriptscriptstyle +}$.  Then $\val(L(G_\der, A)
  \cap T_F) = L(G, A) \cap \F$ as demonstrated in the proof of
  Lemma~\ref{lem:d-epsilon-free}.  Clearly, a forest~$t \in
  P_\Sigma(X)^{\scriptscriptstyle +}$ is $\Sigma$"~free if and only if
  $t \in x_1^{\scriptscriptstyle +}$; i.e., $t$~is of the form~$(x_1,
  \dotsc, x_1)$.  Such a forest~$t$ can only be generated by a big
  nonterminal of rank~$(1, \dotsc, 1)$.  Hence, $L(G,A) \cap \F$~is either empty or
  equal to~$\{x_1^k\}$ with~$k = \abs{A}$.  Moreover,
  $\val(L(G_\der, A) \cap T_F)$ can be computed because it is empty if
  and only if the regular tree language~$L(G_\der, A) \cap T_F$ is
  empty.  By Lemma~\ref{lem:d-epsilon-free} there is an
  \LDTR"~equivalent MCFTG~$G'$, of which all non-initial terminal
  rules are $\Sigma$"~lexicalized.  Obviously, the initial terminal
  rules of an MCFTG are also $\Sigma$"~lexicalized.
\end{proof}

\begin{example}
  \label{exa:epsilon-free}
  \upshape
  In the MCFTG~$G$ of Example~\ref{exa:main}, the rules $\rho_4 =
  B(x_1) \to x_1$~and~$\rho'_4 = B'(x_1) \to x_1$ are the only
  $\Sigma$"~free rules.  The construction in the proof of
  Lemma~\ref{lem:terminal-removal} asks us to apply these
  rules in all possible ways to the right-hand sides of the other
  rules.  Thus, we change the set~$R$ of rules by removing rules
  $\rho_4$~and~$\rho'_4$ and adding the following rules:
  \begin{align*}
    A 
    &\to T_1(\sigma(T_2, T_3)) \\
    B(x_1) 
    &\to \sigma(x_1, B'(A)) \qquad 
    & B(x_1)
    &\to \sigma(B(x_1), A) \qquad
    & B(x_1)
    &\to \sigma(x_1, A) \\
    B'(x_1) 
    &\to \sigma(x_1, B'(A)) \qquad
    & B'(x_1)
    &\to \sigma(B(x_1), A) \qquad
    & B'(x_1)
    &\to \sigma(x_1, A) \enspace.
  \end{align*}
  In the resulting MCFTG~$G'$, which we will call~$G$ again, all
  terminal rules are $\Sigma$"~lexicalized.  In fact, $G$~is now both
  $\Sigma$"~lexicalized and $\Sigma$"~growing, and all its terminal
  rules are $\Delta$"~lexicalized for~$\Delta = \{\alpha, \beta, \tau,
  \nu\}$ as in Example~\ref{exa:finamb}. \fin
\end{example}

Our second goal is to make sure that all monic rules (i.e., rules
whose right-hand side contains exactly one big nonterminal) are
$\Delta$"~lexicalized.  In the next construction we remove
$\Delta$"~free monic rules thereby generalizing the removal of chain
rules~$A \to B$ from a context-free grammar~\cite{hopmotull01}.

\newpage
\begin{lemma}
  \label{lem:d-chain-free}
  Suppose that all non-initial terminal rules of~$G$ are
  $\Delta$"~lexicalized.  Let $F \subseteq R$~be the set of
  $\Delta$"~free monic rules.  If $\val(\DL(G_\der, A) \cap T_{\N \cup
    F})$~is finite for every~$A \in \N$, then there is an
  \LDTR"~equivalent almost $\Delta$"~growing MCFTG~$G'$.
\end{lemma}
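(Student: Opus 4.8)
The plan is to generalize the removal of chain rules from a context-free grammar, exactly as announced before the lemma. Since every rule in $F$ is monic, it has rank~$1$ in $G_\der$, so a derivation tree in $\DL(G_\der, A) \cap T_{\N \cup F}$ is a unary ``chain'' $\rho_1(\rho_2(\cdots \rho_n(B)\cdots))$ with $\seq \rho1n \in F$ and leaf $B \in \N$; its value lies in the finite set $C_A = \val(\DL(G_\der, A) \cap T_{\N \cup F})$ and contains exactly the nonterminals $\alp(B)$ spread out, so by injectivity of `$\alp$' on $\N$ (second item of Definition~\ref{def:mcftg}) each $t \in C_A$ determines a unique big nonterminal $B_t$ with $\alp_N(t) = \alp(B_t)$. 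I would then define $G' = (N, \N, \Sigma, S, R')$ by contracting each maximal $F$"~chain together with the non-$F$ rule $\rho'$ that terminates it into a single rule. Concretely, for every $A \in \N$, every $t \in C_A$ with $B = B_t$, and every non-$F$ rule $\rho' = B \to (v, \LL) \in R$, I put the rule $\rho_{A,t,\rho'} = A \to (t[B \gets v], \LL)$ into $R'$, inheriting the order of the links. One checks that $t[B \gets v]$ is again uniquely $N$"~labeled of rank $\rk(A)$ with links $\LL$. Using the identity $\init(A)[A \gets v] = v$, the trivial chains ($t = \init(A)$, $B = A$) reproduce the non-$F$ rules themselves, so no rule is lost.

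Next I would verify that $G'$ is almost $\Delta$"~growing. Every rule of $G'$ is some $\rho_{A,t,\rho'}$ with $\rho' \notin F$, and since $t$ is $\Delta$"~free, the lexical occurrences of $\rho_{A,t,\rho'}$ are exactly those of $\rho'$. If $\rho'$ is monic, then $\rho' \notin F$ forces $\rho'$ to be $\Delta$"~lexicalized, hence so is $\rho_{A,t,\rho'}$; as this is the only way a monic rule of $G'$ arises, all monic rules of $G'$ are $\Delta$"~lexicalized. If $\rho'$ is non-monic and non-terminal, the combined rule has at least two links and carries no growth requirement. If $\rho'$ is terminal (hence non-monic, so automatically outside $F$), then for a nontrivial chain its left-hand side $B = B_t$ is reached as a link and thus $B \neq S$ (as $G$ is start-separated); so $\rho'$ is a non-initial terminal rule, $\Delta$"~lexicalized by hypothesis, making $\rho_{A,t,\rho'}$ $\Delta$"~lexicalized. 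The only remaining terminal rules of $G'$ are those with $A = S$ and $t = \init(S)$, which are initial. Hence all non-initial terminal rules and all monic rules of $G'$ are $\Delta$"~lexicalized.

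Finally I would establish \LDTR"~equivalence by two transducers, and here lies the main obstacle: a naive transducer that recognizes the whole chain by look-ahead and emits the combined rule at the \emph{top} of the chain would have to copy the single child of a chain node into the several children of $\rho'$, violating linearity. I would avoid this by emitting the combined rule at the \emph{bottom} of the chain while carrying the accumulated chain value in the finite state. Thus $M \colon T_R \to T_{R'}$ has states $\langle A, s\rangle$ with $A \in \N$ and $s \in C_A$, initial state $\langle S, \init(S)\rangle$; on an $F$"~rule $\rho = B \to (u, \{B'\})$ it passes through via $\langle \langle A, s\rangle, \rho(y_1)\rangle \to \langle \langle A, s[B \gets u]\rangle, y_1\rangle$ (the updated stopping value staying in $C_A$), and on a non-$F$ rule $\rho' = B \to (v, \LL)$ with $\LL = \{\seq B1k\}$ it emits $\langle \langle A, s\rangle, \rho'(\seq y1k)\rangle \to \rho_{A,s,\rho'}(\langle \langle B_1, \init(B_1)\rangle, y_1\rangle, \dotsc, \langle \langle B_k, \init(B_k)\rangle, y_k\rangle)$, which is linear and needs no look-ahead. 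For the converse, using Lemma~\ref{lem:easy-dtree} (applicable since each $C_A$ is finite) I fix for every $t \in C_A$ a representative chain $c_{A,t} \in \DL(G_\der, A) \cap T_{\N \cup F}$ with $\val(c_{A,t}) = t$, and let $M' \colon T_{R'} \to T_R$ expand a node $\rho_{A,t,\rho'}(\seq y1k)$ into $c_{A,t}$ with its unique leaf $B_t$ replaced by $\rho'(\langle q, y_1\rangle, \dotsc, \langle q, y_k\rangle)$. Both transducers are linear, deterministic LDT"~transducers; that $M(d) \in L(G'_\der)$ and $M'(d') \in L(G_\der)$ and that both preserve values follow by induction from Lemma~\ref{lem:valsub} and Lemma~\ref{lem:comm-assoc}(4), yielding the \LDTR"~equivalence. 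For effectiveness I would assume, as in the preceding lemmas, that the finite sets $C_A$ are given explicitly.
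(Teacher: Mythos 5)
Your proposal is correct and follows essentially the same route as the paper: it builds the identical grammar $G'$ with combined rules $A \to (t[B_t \gets u], \LL)$ for chain values $t$ and non-chain rules $\rho = B_t \to (u,\LL) \in R \setminus F$, verifies almost $\Delta$-growth by the same case analysis (monic combined rules force $\rho \notin F$ monic, hence lexicalized; start-separation handles the initial case), and uses the same backward LDT-transducer that expands each combined rule into a fixed representative chain obtained via Lemma~\ref{lem:easy-dtree}. Your forward transducer, which accumulates the chain value $s \in C_A$ in its finite state while descending and emits the combined rule at the bottom of the chain, is exactly the look-ahead-free variant that the paper itself observes is possible in the closing remark of its proof (the paper instead determines $(A,t)$ by regular look-ahead upon entering the chain and carries the state $q_{A,t}$ down unchanged); the only slip is your claim that the remaining terminal rules of $G'$ all have $A = S$ and $t = \init(S)$ --- trivial chains with $A \neq S$ also produce terminal rules of $G'$ (namely copies of $G$'s non-initial terminal rules), but these are $\Delta$-lexicalized by the lemma's hypothesis, so the conclusion is unaffected.
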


\begin{proof}
  Let $\F_A = \val(\DL(G_\der, A) \cap T_{\N \cup F})$ for every~$A
  \in \N$.  Again, for the purpose of effectiveness, we assume that
  the elements of~$\F_A$ are effectively given.  Note that~$\init(A)
  \in \F_A$.  Every forest~$t \in \F_A$ is of the form~$\val(d)$ 
  with~$d \in \DL(G_\der, A) \cap T_{\N \cup F}$, and every such
  derivation tree~$d$ is of the form~$d = wB$ with $w \in F^*$~and~$B
  \in \N$.  Hence $t$~is $\Delta$"~free because all rules that occur
  in~$d$ are $\Delta$"~free.  Moreover, by Lemma~\ref{lem:valocc}(2),
  $t$~is uniquely $N$"~labeled and $\alp_N(t) = \alp(B)$.  In other
  words, the big nonterminal~$B$ occurs exactly once in~$t$, and no
  other nonterminals occur in~$t$.  We will denote~$B$ by~$B_t$.  Note
  that, since $G$~is start-separated, if~$B_t = S$ then~$A = S$
  because~$w = \varepsilon$.  For every~$t\in\F_A$, let $d_{A, t} \in
  T_{\N \cup F}$~be a particular derivation tree of~$G$ of type~$A$
  such that~$\val(d_{A, t}) = t$.  Such a derivation tree can be
  computed by Lemma~\ref{lem:easy-dtree} applied with $\N' = \N$.

  We construct the MCFTG~$G' = (N, \N, \Sigma, S, R')$ such that for
  every big nonterminal~$A \in \N$, tree~$t \in \F_A$, and rule~$\rho
  = B_t \to (u, \LL) \in R \setminus F$, the rule~$\rho_{A,t} = A \to
  (t[B_t \gets u], \LL)$ is in~$R'$, where the links in~$\LL$ have the
  same order as in the rule~$\rho$.  Since~$\rho \notin F$, it is
  straightforward to check that $\rho_{A, t}$~satisfies the
  requirements for~$G'$ to be almost $\Delta$"~growing: (i)~If
  $\rho$~is $\Delta$"~lexicalized, then so is~$\rho_{A, t}$ because
  $u$~is substituted for~$B_t$.  (ii)~If $\rho_{A, t}$~is monic, then
  $\rho$~is monic and hence $\Delta$"~lexicalized because~$\rho \notin
  F$.  (iii)~If $\rho$~is initial (i.e.,~$B_t = S$), 
  then~$\rho_{A, t}$~is initial (because $A = S$); thus,
  if $\rho_{A,t}$~is
  non-initial terminal, then $\rho$~is non-initial terminal and hence
  $\Delta$"~lexicalized by assumption on~$G$.

  To show the correctness of~$G'$, we first prove that for every
  derivation tree~$d \in L(G_\der, A)$ there is a derivation tree~$d'
  \in L(G'_\der, A)$ with~$\val(d') = \val(d)$.  Clearly, $d$~has
  the unique form~$d = w \rho(\seq d1k)$ such that~$w\in F^*$, $\rho
  \notin F$, and~$\seq d1k \in T_R$.   Let $\rho = B \to (u, \LL)$
  with~$\LL = \{\seq B1k\}$, and let~$t = \val(wB) \in \F_A$.  By the
  induction hypothesis there is a derivation tree~$d'_i \in L(G'_\der,
  B_i)$ with~$\val(d'_i) = \val(d_i)$ for every $i \in [k]$.  We take
  $d' = \rho_{A,t}(\seq{d'}1k)$.  Then 
  \begin{align*}
    \val(d') 
    &= t[B \gets u] [B_i \gets \val(d_i) \mid 1 \leq i \leq k] 
      = t[B \gets u [B_i \gets \val(d_i) \mid 1 \leq i \leq k]\,] \\
    &= t[B \gets \val(\rho(\seq d1k))] = \val(w\rho(\seq d1k)) =
      \val(d) \enspace,
  \end{align*}
  where the second equality holds by Lemma~\ref{lem:comm-assoc}(4) and
  the penultimate equality holds by Lemma~\ref{lem:valsub}.  This
  shows that~$L(G) \subseteq L(G')$.

  The \LDTR"~transducer~$M$ that transforms~$d$ into~$d'$, as above,
  uses the tree languages
  \[ L_{A,t} = \{wd \in L(G_\der, A) \mid w \in F^*,\, d \in L(G_\der,
  B_t),\, d(\varepsilon) \notin F,\, \val(wB_t) = t\} \]
  as look-ahead, where $A \in \N$~and~$t \in \F_A$.  It is easy to see
  that $L_{A, t}$~is regular.  An RTG that generates~$L_{A,t}$ can be
  obtained from the grammar for the regular tree language~$L_{\langle
    A, t\rangle}$ in the proof of Lemma~\ref{lem:easy-dtree} as
  follows.  First, add the nonterminals and rules of~$G_\der$.
  Second, replace every rule~$\langle B, \init(B) \rangle \to B$ by
  all rules~$\langle B, \init(B) \rangle\to \rho(\seq B1k)$, where $B
  \to \rho(\seq B1k)$ is a rule of~$G_\der$ and~$\rho \notin F$.  The
  transducer~$M$ has initial state~$q_0$ and the states~$q_{A,t}$ for
  every $A \in \N$~and~$t \in \F_A$.  For every rule~$\rho \in R
  \setminus F$, the transducer~$M$ has the rule~$\langle q_0,
  \rho(\seq y1k) \rangle \to \rho(\langle q_0, y_1 \rangle, \dotsc,
  \langle q_0, y_k \rangle)$ and all the rules~$\langle q_{A, t},
  \rho(\seq y1k) \rangle \to \rho_{A, t}(\langle q_0, y_1 \rangle,
  \dotsc, \langle q_0, y_k \rangle)$.  Moreover, for every rule~$\rho
  \in F$, the transducer~$M$ has all rules $\langle q_0,
  \rho(y_1) \colon L_{A,t} \rangle \to \langle q_{A,t}, y_1 \rangle$
  and $\langle q_{A,t}, \rho(y_1) \rangle \to \langle q_{A,t}, y_1
  \rangle$.  It should be clear that $M$~indeed transforms~$d$
  into~$d'$.

  Next, we prove that for every derivation tree~$d' \in L(G'_\der, A)$
  there is a derivation tree~$d \in L(G_\der, A)$ with
  $\val(d) = \val(d')$.  The proof is by induction on~$d'$, so let $d'
  = \rho_{A,t}(\seq{d'}1k)$ with~$\rho$, $A$, and~$t$ as in the
  construction of~$G'$.  By the induction hypothesis, there is a
  derivation tree~$d_i$ of~$G$ such that~$\val(d_i) = \val(d'_i)$ for
  every~$i \in [k]$.  We now take~$d = d_{A,t}[B_t \gets \rho(\seq
  d1k)]$, where the derivation tree~$d_{A,t}$ was defined at the end
  of the first paragraph of this proof.  Since $d_{A,t}$~is of the
  form~$wB_t$ with~$w \in F^*$, and hence~$d = w\rho(\seq d1k)$, it
  should be clear that the construction in the proof of~$L(G)
  \subseteq L(G')$ (i.e., the \LDTR"~transducer~$M$) transforms~$d$
  into~$d'$, which implies that~$\val(d) = \val(d')$.

  The transformation from~$d'$ to~$d$, as defined above, can easily be
  realized by an LDT-transducer~$M'$ with one state~$q$.  For every
  rule~$\rho'$ of~$G'$, fix~$\rho$, $A$, and~$t$ such that $\rho' =
  \rho_{A,t}$.  Then $M'$~has the rule~$\langle q,\, \rho'(\seq y1k)
  \rangle \to d_{A,t}[B_t \gets \rho(\langle q, y_1 \rangle, \dotsc,
  \langle q, y_k \rangle)]$.  We finally observe that the
  transformation from~$d$ to~$d'$ can also be realized by an
  LDT"~transducer (without look-ahead), but the above transducer~$M$
  is easier to understand.
\end{proof}

\newpage
\begin{lemma}
  \label{lem:chain-free}
  For every MCFTG~$G$ there is an \LDTR"~equivalent almost
  $\Sigma$"~growing MCFTG~$G'$. 
\end{lemma}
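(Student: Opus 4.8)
The plan is to obtain $G'$ by combining the two preceding lemmas, instantiated with $\Delta = \Sigma$. First I would apply Lemma~\ref{lem:epsilon-free} to replace $G$ by an \LDTR"~equivalent MCFTG (which I will again call $G$) all of whose terminal rules are $\Sigma$"~lexicalized. In particular its non-initial terminal rules are then $\Sigma$"~lexicalized, which is precisely the standing hypothesis of Lemma~\ref{lem:d-chain-free} in the case $\Delta = \Sigma$. Since \LDTR"~equivalence is an equivalence relation by Proposition~\ref{pro:LDTRcomp}, it suffices after this step to produce an \LDTR"~equivalent almost $\Sigma$"~growing grammar for the modified $G$.

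Next I would prepare to invoke Lemma~\ref{lem:d-chain-free} with $\Delta = \Sigma$. Letting $F \subseteq R$ be the set of $\Sigma$"~free monic rules, the only thing left to check is its finiteness hypothesis, namely that $\val(\DL(G_\der, A) \cap T_{\N \cup F})$ is finite for every~$A \in \N$. This verification is the substance of the proof, and it is the step I expect to be the main obstacle. I would start from the structural observation (already used inside the proof of Lemma~\ref{lem:d-chain-free}) that a derivation tree $d \in \DL(G_\der, A) \cap T_{\N \cup F}$ uses only the rank"~$0$ symbols of~$\N$ and the rank"~$1$ symbols of~$F$, so it is a chain $d = wB$ with $w \in F^*$ and $B \in \N$. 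By Lemma~\ref{lem:valocc}(1) its value $t = \val(d)$ is then $\Sigma$"~free, and by Lemma~\ref{lem:valocc}(2) it is uniquely $N$"~labeled with $\alp_N(t) = \alp(B)$, while $\rk(t) = \rk(A)$.

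The key point I would then exploit is that every such forest $t$ has boundedly many nodes. Because $t$ is $\Sigma$"~free, each of its positions carries a label in $N \cup X$. The number of $N$"~labeled positions equals $\abs{\alp(B)} = \abs{B} \leq \mu(G)$ since $t$ is uniquely $N$"~labeled, and the number of $X$"~labeled positions equals the sum of the components of $\rk(A)$ (each component $A_i$ of $A$ forces a pattern with exactly $\rk(A_i)$ variable occurrences), which is at most $\mu(G)\cdot\wid(G)$. Hence $\abs{\pos(t)} \leq \mu(G)\,(1 + \wid(G))$. As all labels come from the finite set $N \cup X_{\wid(G)}$, there are only finitely many forests $t$ of this bounded size, so $\val(\DL(G_\der, A) \cap T_{\N \cup F})$ is indeed finite for every $A \in \N$.

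With the finiteness hypothesis established, Lemma~\ref{lem:d-chain-free} (applied with $\Delta = \Sigma$) directly yields an \LDTR"~equivalent almost $\Sigma$"~growing MCFTG~$G'$, which by transitivity is \LDTR"~equivalent to the original grammar. I expect no difficulty beyond the bounded"~size counting argument above: once that finiteness is in hand, the statement follows by citing Lemma~\ref{lem:epsilon-free}, Lemma~\ref{lem:d-chain-free}, and the closure of \LDTR\@ under composition.
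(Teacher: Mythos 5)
Your proposal is correct and follows essentially the same route as the paper's proof: reduce to $\Sigma$"~lexicalized terminal rules via Lemma~\ref{lem:epsilon-free}, bound the values of chain-shaped derivation trees in $T_{\N\cup F}$ by $\abs{\pos_N(t)} \leq \mu(G)$ and $\abs{\pos_X(t)} \leq \mu(G)\cdot\wid(G)$ to get finiteness, and then invoke Lemma~\ref{lem:d-chain-free}. The only (inessential for the statement) point the paper adds is that the elements of each $\F_A$ can actually be \emph{computed}, via the iteration $\init(A) \in \F_A$ and $u[B \gets t] \in \F_A$ for $A \to (u,\{B\}) \in F$, $t \in \F_B$, which matters for the effectiveness of the later decidability results.
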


\begin{proof}
  By Lemma~\ref{lem:epsilon-free} and Proposition~\ref{pro:LDTRcomp},
  we may assume that all terminal rules of~$G$ are
  $\Sigma$"~lexicalized.  Let~$F \subseteq R$ be the set of
  $\Sigma$"~free monic rules.  The statement holds using
  Lemma~\ref{lem:d-chain-free} if we prove that $\F_A =
  \val(\DL(G_\der,A) \cap T_{\N \cup F})$ is finite and that its
  elements can be computed for every~$A \in \N$.  For every big
  nonterminal~$A$, let $\M_A$~be the set of all $\Sigma$"~free
  forests~$t$ in~$P_{N\cup\Sigma}(X)^{\scriptscriptstyle +}$ such
  that~$\rk(t) = \rk(A)$, $t$~is uniquely $N$"~labeled, and~$\alp_N(t)
  = \alp(B)$ for some~$B \in \N$.  Clearly $\M_A$~is finite because
  $\abs{\pos_{N \cup \Sigma}(t)} = \abs{\pos_N(t)} \leq \mu(G)$ and
  $\abs{\pos_X(t)} \leq \mu(G) \cdot \wid(G)$.  As argued in the
  beginning of the proof of Lemma~\ref{lem:d-chain-free}, $\F_A
  \subseteq \M_A$.  Consequently, $\F_A$~is finite and its elements
  can be computed by a standard iteration because the sets~$\F_A$
  with~$A \in \N$ are the smallest sets of forests such that
  (i)~$\init(A) \in \F_A$ and (ii)~if $A \to (u, \{B\}) \in F$ and $t
  \in \F_B$, then~$u[B \gets t] \in \F_A$.
\end{proof}

Let $G$~be an almost $\Sigma$"~growing MCFTG.  Then, for every
forest~$t$, there are only finitely many derivation trees~$d$ such
that $\val(d) = t$ by inequality~($\dagger$) of Lemma~\ref{lem:grow-amb}.
This implies that the finiteness problem is decidable for $L(G)$~and~$L(G, A)$.  
In fact, $L(G)$~is finite if and only if $L(G_\der)$~is finite,  which is
decidable because $G_\der$~is an RTG.  Moreover, if $L(G)$~is finite,
then the elements of~$L(G)$ can be computed because the elements
of~$L(G_\der)$ can be computed and~$L(G) = \val(L(G_\der))$.  Similar
statements hold for~$L(G, A)$. Thus, by Lemma~\ref{lem:chain-free},
the finiteness problem is decidable for MCFTGs.

We now show that if $G$~is almost
$\Sigma$"~growing, then the requirements of
Lemmas~\ref{lem:terminal-removal} and~\ref{lem:d-chain-free} are
fulfilled.

\begin{lemma}
  \label{lem:finamb}
  Let $G$~be almost $\Sigma$"~growing.  Moreover, let $F$~be the set
  of all $\Delta$"~free rules and $F' \subseteq F$~be the set of all
  $\Delta$"~free monic rules.  Finally, let $\F_A = \val(L(G_\der, A)
  \cap T_F)$ and $\F'_A = \val(\DL(G_\der, A) \cap T_{\N \cup F'})$
  for every $A \in \N$.
  \begin{compactenum}[\upshape (1)]
  \item It is decidable for~$A \in \N$ whether or not~$\F_A$
    (respectively,~$\F'_A$) is finite, and if so, its elements can be
    computed.
  \item If $G$~has finite $\Delta$"~ambiguity, then $\F_A$~and~$\F'_A$
    are finite for every~$A \in \N$.
  \end{compactenum}
\end{lemma}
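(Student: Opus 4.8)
The plan is to prove the two statements about the forest sets $\F_A = \val(L(G_\der, A) \cap T_F)$ and $\F'_A = \val(\DL(G_\der, A) \cap T_{\N \cup F'})$ by exploiting the fact that $G$ is almost $\Sigma$"~growing, so that inequality~($\dagger$) of Lemma~\ref{lem:grow-amb} is available. The key observation is that for any derivation tree $d$ we can bound $\abs{\pos(d)}$ linearly in $\abs{\pos_\Delta(\val(d))} + \abs{\pos_\N(d)}$. First I would handle part~(2), since it is the conceptually simpler direction and will motivate the decision procedure in part~(1). Let me write $\Delta$"~free to mean the value contains no lexical symbol; every tree $d \in L(G_\der, A) \cap T_F$ consists entirely of $\Delta$"~free rules, so by Lemma~\ref{lem:valocc}(1) its value $\val(d)$ is a $\Delta$"~free forest, i.e.\ $\abs{\pos_\Delta(\val(d))} = 0$.

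\textbf{Proving part~(2).}
Assume $G$ has finite $\Delta$"~ambiguity. Consider $\F_A = \val(L(G_\der, A) \cap T_F)$; since all rules in $d$ are $\Delta$"~free and $d \in T_F \subseteq T_R$ we have $\pos_\N(d) = \emptyset$, so ($\dagger$) gives $\abs{\pos(d)} \leq 2 \cdot \abs{\pos_\Delta(\val(d))} + 1 = 1$. But this can only bound genuinely complete derivation trees; here the subtlety is that $L(G_\der, A) \subseteq T_R$ means the trees are complete, hence $\pos_\N(d) = \emptyset$ indeed holds and every $d \in L(G_\der,A) \cap T_F$ produces a $\Delta$"~free tree $\val(d) \in T_\Sigma$. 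The finiteness of $\F_A$ then follows because the value forest lies in $L(G,A)$ (by Theorem~\ref{thm:dtree}), is $\Delta$"~free, hence has lexical yield $\varepsilon$, so finite $\Delta$"~ambiguity of $L(G)$ — extended to $L(G,A)$ via the standard reduction — bounds the number of such forests. For $\F'_A$, trees lie in $\DL(G_\der, A) \cap T_{\N \cup F'}$ and may contain leaves labeled by big nonterminals, so $\pos_\N(d)$ need not be empty; I would bound $\abs{\pos_\N(d)}$ using Lemma~\ref{lem:valocc}(2), noting that by the argument opening the proof of Lemma~\ref{lem:d-chain-free}, every such $\val(d)$ is uniquely $N$"~labeled with $\alp_N(\val(d)) = \alp(B)$ for a single $B \in \N$, so $\abs{\pos_\N(d)} \leq \mu(G)$ is bounded. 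Combined with $\abs{\pos_\Delta(\val(d))} = 0$, inequality~($\dagger$) bounds $\abs{\pos(d)}$ by the constant $2\mu(G) + 1$, yielding finiteness directly.

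\textbf{Proving part~(1).}
For decidability I would reduce to emptiness/finiteness of regular tree languages. The sets $L(G_\der, A) \cap T_F$ and $\DL(G_\der, A) \cap T_{\N \cup F'}$ are regular tree languages (intersections of the regular $L(G_\der, A)$, respectively $\DL(G_\der, A)$, with $T_F$ and $T_{\N \cup F'}$), so their \emph{finiteness} is decidable and, when finite, their elements can be enumerated. The issue is that $\F_A = \val(\cdots)$ is the image under $\val$, and infinitely many derivation trees could in principle collapse to finitely many values. The clean route is the converse direction of ($\dagger$): since $G$ is almost $\Sigma$"~growing, \emph{every} forest $t$ has only finitely many derivation trees $d$ with $\val(d) = t$ (this is exactly the consequence of ($\dagger$) already noted in the text just before Lemma~\ref{lem:finamb}, because $\abs{\pos(d)}$ is bounded in terms of $t$). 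Therefore $\F_A$ is finite if and only if $L(G_\der, A) \cap T_F$ is finite, and likewise for $\F'_A$; both right-hand conditions are decidable. For computing the elements, enumerate the finite regular tree language and apply $\val$. The main obstacle I anticipate is handling the $\F'_A$ case carefully: the bound on $\abs{\pos_\N(d)}$ requires the uniquely-$N$"~labeled structure from Lemma~\ref{lem:d-chain-free}, and I must verify that the finitely-many-derivation-trees-per-value property still holds when $d$ ranges over $T_{\N \cup F'}$ rather than $T_R$, i.e.\ that leaves labeled by big nonterminals do not break the pumping bound; but since such leaves contribute to $\abs{\pos_\N(d)}$ on the right side of ($\dagger$) and that quantity is bounded, the argument goes through uniformly.
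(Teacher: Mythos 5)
Your part~(1) is correct and is essentially the paper's own argument: since $G$~is almost $\Sigma$"~growing, inequality~($\dagger$) of Lemma~\ref{lem:grow-amb} (in its $\Sigma$"~form) shows that `$\val$'~is finite-to-one on derivation trees, so $\F_A$ (respectively~$\F'_A$) is finite if and only if the regular tree language $L(G_\der,A) \cap T_F$ (respectively $\DL(G_\der,A) \cap T_{\N \cup F'}$) is finite, which is decidable; enumeration plus `$\val$' then computes the elements.

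Part~(2), however, has a genuine gap. Both of your key bounds --- $\abs{\pos(d)} \leq 2 \cdot \abs{\pos_\Delta(\val(d))} + 1 = 1$ for~$\F_A$, and $\abs{\pos(d)} \leq 2\mu(G) + 1$ for~$\F'_A$ --- apply inequality~($\dagger$) \emph{with respect to~$\Delta$}, which is only available when $G$~is almost $\Delta$"~growing. The lemma assumes only that $G$~is almost $\Sigma$"~growing, so only the $\Sigma$"~version of~($\dagger$) holds, and that version bounds $\abs{\pos(d)}$ in terms of $\abs{\pos_\Sigma(\val(d))}$, which does not help here: $\Delta$"~free rules may contain arbitrarily many \emph{non-lexical} terminal symbols. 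Concretely, a $\Delta$"~free monic rule $\rho = B(x_1) \to \gamma(B(x_1))$ with $\gamma \in \Sigma \setminus \Delta$ produces derivation trees $\rho(\rho(\dotsm\rho(B)\dotsm)) \in \DL(G_\der,B) \cap T_{\N \cup F'}$ of unbounded size with pairwise distinct values $\gamma^n(B(x_1))$, so $\F'_B$~is infinite and your constant bound fails. A warning sign you should have caught: if your bounds were valid, they would prove~(2) \emph{without} the finite $\Delta$"~ambiguity hypothesis, which cannot be --- that hypothesis is exactly what excludes such grammars. The correct argument (the paper's) proves the contrapositive: if $\F_A$~is infinite, pick by reducedness a derivation tree $d_0 \in \DL(G_\der,S)$ containing~$A$ exactly once; the set $D_0 = \{d_0[A \gets d] \mid d \in L(G_\der,A) \cap T_F\} \subseteq L(G_\der)$ is infinite, and since `$\val$'~is finite-to-one (this is where almost $\Sigma$"~growing is genuinely used), $L_0 = \val(D_0) \subseteq L(G)$ is infinite; but by Lemma~\ref{lem:valocc}(1) we have $\abs{\pos_\Delta(\val(d_0[A \gets d]))} = \pr_\Delta(d_0) + \pr_\Delta(d) = \pr_\Delta(d_0)$ for every~$d \in T_F$, where $\pr_\Delta(d') = \sum_{p \in \pos_R(d')} \abs{\pos_\Delta(\rhs(d'(p)))}$, so infinitely many trees of~$L(G)$ carry boundedly many lexical occurrences, contradicting finite $\Delta$"~ambiguity. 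For~$\F'_A$ one additionally substitutes a fixed $d_1 \in L(G_\der,B)$ for the big-nonterminal leaf~$B$, giving the bound $\pr_\Delta(d_0) + \pr_\Delta(d_1)$. Your remark about extending finite $\Delta$"~ambiguity ``to $L(G,A)$ via the standard reduction'' gestures at this embedding for~$\F_A$ but omits the steps that make it work (the embedding~$d_0$, the finite-to-one property of~`$\val$', and the constant lexical count after embedding), and for~$\F'_A$ you offer no correct argument at all.
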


\begin{proof} 
  For~(1) we observe that since $G$~is almost $\Sigma$"~growing,
  inequality~($\dagger$) of Lemma~\ref{lem:grow-amb} implies that $\F_A$~is finite if and
  only if~$L(G_\der, A) \cap T_F$ is finite.  The latter is a regular
  tree language, and it is decidable whether or not it is finite.
  Moreover, if so, its elements, and thus also the elements of~$\F_A$,
  can be computed.  The same argument holds for~$\F'_A$.
  
  For~(2) we assume that $G$~has finite $\Delta$"~ambiguity and
  that $\F_A$~is infinite.  Since we may assume that
  $G$~and~$G_\der$ are reduced, there is a derivation tree~$d_0
  \in \DL(G_\der, S)$ with $\abs{\pos_\N(d_0)} = \abs{\pos_A(d_0)} =
  1$.  Let $D_0 = \{d_0[A \gets d] \mid d \in L(G_\der, A) \cap T_F\} 
  \subseteq L(G_\der)$.  Since $\F_A$~is infinite, also $L(G_\der, A)
  \cap T_F$~is infinite, and thus $D_0$~is infinite by
  Lemma~\ref{lem:treehom}.  Since $G$~is almost $\Sigma$"~growing, the
  set~$L_0 = \val(D_0)$~is an infinite subset of~$L(G)$.  Now, for 
  every derivation tree~$d' \in T_{\N \cup R}$, let $\pr_\Delta(d') =
  \sum_{p \in \pos_R(d')} \abs{\pos_\Delta(\rhs(d'(p)))}$.
  Lemma~\ref{lem:valocc}(1) and Lemma~\ref{lem:treehom} yield
  $\abs{\pos_\Delta(\val(d_0[A \gets d]))} = \pr_\Delta(d_0[A\gets d])
  = \pr_\Delta(d_0) + \pr_\Delta(d) = \pr_\Delta(d_0)$ for every~$d
  \in L(G_\der, A) \cap T_F$, where the last equality uses~$d \in
  T_F$.  Consequently, $\abs{\pos_\Delta(t)} \leq \pr_\Delta(d_0)$ for
  every tree~$t$ in the infinite set~$L_0$,  which contradicts the
  finite $\Delta$"~ambiguity of~$L(G)$.

  A similar proof works for~$\F'_A$. Since $\DL(G_\der, A) \cap T_{\N
    \cup F'}$~is infinite, there exists~$B \in \N$ such
  that~$\DL(G_\der, A) \cap T_{\{B\} \cup F'}$ is infinite.  Since
  $G_\der$~is reduced, there exists a derivation tree~$d_1 \in
  L(G_\der, B)$.  Now let $D'_0 = \{d_0[A \gets d[B \gets d_1]\,] \mid
  d \in \DL(G_\der, A) \cap T_{\{B\} \cup F'}\} \subseteq L(G_\der)$.
  By similar arguments as above, we then obtain that
  $\abs{\pos_\Delta(t)} \leq \pr_\Delta(d_0) + \pr_\Delta(d_1)$ for
  every tree~$t$ in the infinite set~$L'_0 = \val(D'_0) \subseteq
  L(G)$,  which again contradicts the finite $\Delta$"~ambiguity
  of~$L(G)$.
\end{proof} 

Now we are able to turn~$G$ into an equivalent almost
$\Delta$"~growing MCFTG, provided that it has finite
$\Delta$"~ambiguity.

\begin{lemma}
  \label{lem:dec-almost}
  It is decidable whether or not the MCFTG~$G$ has finite
  $\Delta$"~ambiguity, and if so, there is an \LDTR"~equivalent almost
  $\Delta$"~growing MCFTG~$G'$.
\end{lemma}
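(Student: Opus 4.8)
The plan is to reduce to the almost $\Sigma$"~growing case and then combine the decidability statement of Lemma~\ref{lem:finamb} with the two rule-removal constructions of Lemmas~\ref{lem:d-epsilon-free} and~\ref{lem:d-chain-free}. First I would invoke Lemma~\ref{lem:chain-free} to replace $G$ by an \LDTR"~equivalent almost $\Sigma$"~growing MCFTG; since \LDTR"~equivalence preserves the generated tree language, it preserves finite $\Delta$"~ambiguity as well, so it suffices to treat an almost $\Sigma$"~growing grammar. For such a grammar, fix the set $F\subseteq R$ of all $\Delta$"~free rules and the subset $F'\subseteq F$ of $\Delta$"~free monic rules, together with the sets $\F_A=\val(L(G_\der,A)\cap T_F)$ and $\F'_A=\val(\DL(G_\der,A)\cap T_{\N\cup F'})$ as in Lemma~\ref{lem:finamb}.

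The decision procedure runs in two stages. By Lemma~\ref{lem:finamb}(1) I can decide, for each $A\in\N$, whether $\F_A$ is finite. If some $\F_A$ is infinite, then by the contrapositive of Lemma~\ref{lem:finamb}(2) the grammar $G$ does not have finite $\Delta$"~ambiguity, and the procedure answers ``no''. Otherwise all $\F_A$ are finite, so the hypothesis of Lemma~\ref{lem:d-epsilon-free} is met and I obtain an \LDTR"~equivalent grammar $G_1$ that is again almost $\Sigma$"~growing and whose non-initial terminal rules are all $\Delta$"~lexicalized. Applying Lemma~\ref{lem:finamb}(1) to $G_1$, I decide whether its sets $\F'_A$ are finite; if some is infinite, then by Lemma~\ref{lem:finamb}(2) applied to $G_1$ the language $L(G_1)=L(G)$ is not finitely $\Delta$"~ambiguous and the procedure answers ``no''. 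If all $\F'_A$ are finite, then the hypotheses of Lemma~\ref{lem:d-chain-free} hold for $G_1$ (its non-initial terminal rules are $\Delta$"~lexicalized and the relevant sets are finite), and I obtain the desired \LDTR"~equivalent almost $\Delta$"~growing MCFTG~$G'$.

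For correctness, the ``no'' answers are sound by the necessity direction of Lemma~\ref{lem:finamb}(2). In the ``yes'' case, $G'$ is almost $\Delta$"~growing and hence has finite $\Delta$"~ambiguity by Lemma~\ref{lem:grow-amb}; since the entire chain $G\rightsquigarrow G_1\rightsquigarrow G'$ (together with the initial reduction) consists of \LDTR"~equivalences and \LDTR\@ is closed under composition by Proposition~\ref{pro:LDTRcomp}, we have $L(G)=L(G')$, so $G$ is finitely $\Delta$"~ambiguous and $G'$ is \LDTR"~equivalent to the original $G$. Conversely, if $G$ has finite $\Delta$"~ambiguity, then so do the equivalent grammars $G$ and $G_1$, whence Lemma~\ref{lem:finamb}(2) guarantees that every finiteness test succeeds and the procedure reaches the construction and outputs ``yes'' with the required~$G'$. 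This shows decidability and simultaneously yields the construction.

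The step I expect to be the main obstacle is the two-stage bookkeeping: the hypotheses needed for Lemma~\ref{lem:d-chain-free} (that all non-initial terminal rules are $\Delta$"~lexicalized and that the $\Delta$"~free monic data are finite) refer to the intermediate grammar~$G_1$ rather than to~$G$, so one must re-invoke Lemma~\ref{lem:finamb} after the first removal and verify that $G_1$ is still almost $\Sigma$"~growing (supplied by the ``moreover'' clause of Lemma~\ref{lem:d-epsilon-free}) and still \LDTR"~equivalent to~$G$. Carefully threading these equivalences and the preservation of almost $\Sigma$"~growth through both constructions is the delicate part; the quantitative core---that the derived sets $\F_A$ and $\F'_A$ are finite exactly when $L(G)$ is finitely $\Delta$"~ambiguous---has already been isolated in Lemma~\ref{lem:finamb} via the bound~($\dagger$) of Lemma~\ref{lem:grow-amb}.
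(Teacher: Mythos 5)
Your proposal is correct and follows essentially the same route as the paper's own proof: reduce to an almost $\Sigma$"~growing grammar via Lemma~\ref{lem:chain-free}, use Lemma~\ref{lem:finamb} twice to decide finiteness of the sets $\F_A$ and $\F'_A$ (answering ``no'' when one is infinite), interleave the two removal constructions of Lemmas~\ref{lem:d-epsilon-free} and~\ref{lem:d-chain-free}, and conclude finite $\Delta$"~ambiguity in the ``yes'' case from Lemma~\ref{lem:grow-amb}. The extra bookkeeping you flag (re-checking that the intermediate grammar is still almost $\Sigma$"~growing and composing the \LDTR"~equivalences via Proposition~\ref{pro:LDTRcomp}) is exactly what the paper's terser ``we may assume'' phrasing implicitly relies on.
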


\begin{proof}
  By Lemma~\ref{lem:chain-free} we may assume that $G$~is almost
  $\Sigma$"~growing.  By Lemma~\ref{lem:finamb} it is decidable
  whether $\F_A$~is finite for every~$A \in \N$, and if not, then
  $G$~does not have finite $\Delta$"~ambiguity.  If they are, then we
  may assume by Lemma~\ref{lem:d-epsilon-free} that all non-initial
  terminal rules of~$G$ are $\Delta$"~lexicalized.   Again by
  Lemma~\ref{lem:finamb}, it is decidable whether $\F'_A$~is finite
  for every~$A \in \N$, and if not, then $G$~does not have finite
  $\Delta$"~ambiguity.  If they are, then we may assume by
  Lemma~\ref{lem:d-chain-free} that $G$~is almost $\Delta$"~growing.
  Finally, in this case $G$~has finite $\Delta$"~ambiguity by
  Lemma~\ref{lem:grow-amb}.
\end{proof}

\begin{example}
  \label{exa:d-chain-free}
  \upshape
  The MCFTG~$G$ of Example~\ref{exa:epsilon-free} is already
  $\Sigma$"~growing. Moreover, all its terminal rules are
  $\Delta$"~lexicalized for~$\Delta = \{\alpha, \beta, \tau, \nu\}$.
  Let us turn~$G$ into an almost $\Delta$"~growing grammar by
  Lemma~\ref{lem:d-chain-free}.  We omit parentheses around the
  arguments of unary terminals.  The set~$F$ of $\Delta$"~free monic
  rules of~$G$ consists of the rules~$A \to T_1(\sigma(T_2, T_3))$,
  $B(x_1) \to \sigma(x_1, A)$, and~$B'(x_1) \to \sigma(x_1, A)$.
  Next, for each big nonterminal~$A' \in \N$ we compute the sets
  $\F_{A'} = \val(\DL(G_\der, A') \cap T_{\N\cup F})$ and obtain
  \begin{align*}
    \F_T 
    &= \{\init(T)\} \qquad 
    & \F_A 
    &= \{\init(A),\, T_1(\sigma(T_2, T_3))\} \qquad
    & \F_B 
    &= \{\init(B),\, \sigma(x_1, A),\, \sigma(x_1, T_1(\sigma(T_2,
      T_3))) \}  \\
    && \F_S 
    &= \{ \init(S) \} 
    & \F_{B'} 
    &= \{\init(B'),\, \sigma(x_1, A),\, \sigma(x_1, T_1(\sigma(T_2,
      T_3))) \} \enspace,
  \end{align*}
  where~$T = (T_1, T_2, T_3)$, which are all finite.  The construction
  in the proof of Lemma~\ref{lem:d-chain-free} asks us to apply
  \begin{compactitem}
  \item the rules $\rho_5 = (T_1(x_1), T_2, T_3) \to (\alpha T_1(\beta
    x_1), \alpha T_2, \gamma T_3)$~and~$\rho_6 = (T_1(x_1), T_2, T_3)
    \to (x_1, \tau, \nu)$ for~$T$ to~$T_1(\sigma(T_2,T_3)) \in 
    \F_A$ and $\sigma(x_1, T_1(\sigma(T_2,T_3))) \in \F_B \cap
    \F_{B'}$, and
  \item the rule~$\rho_2 = A \to T_1(\sigma(B(T_2), T_3))$ for~$A$
    to~$\sigma(x_1, A) \in \F_B \cap \F_{B'}$.
  \end{compactitem}
  Consequently, we change the set of rules of~$G$ by removing the
  above three $\Delta$"~free monic rules and adding the following
  5~rules, and the 3~additional rules that make $B'$~an alias of~$B$:
  \begin{align*}
    A 
    &\to \alpha T_1(\beta \sigma(\alpha T_2, \gamma T_3)) \quad
    & A 
    &\to \sigma(\tau, \nu) \\
    B(x_1) 
    &\to \sigma(x_1, \alpha T_1(\beta \sigma(\alpha T_2, \gamma T_3)))
      \quad
    & B(x_1) 
    &\to \sigma(x_1, \sigma(\tau, \nu)) \quad
    & B(x_1) 
    &\to \sigma(x_1, T_1(\sigma(B(T_2), T_3))) \enspace. 
  \end{align*}
  The resulting grammar~$G'$, which we will again call~$G$, now has
  the following rules (and the rules required to make~$B'$ an alias
  of~$B$):
  \begin{align*}
    A 
    &\to \alpha T_1(\beta \sigma(\alpha T_2, \gamma T_3)) \quad
    & A 
    &\to \sigma(\tau, \nu) \quad
    & A 
    &\to T_1(\sigma(B(T_2), T_3)) \\
    B(x_1) 
    &\to \sigma(x_1, \alpha T_1(\beta \sigma(\alpha T_2, \gamma T_3)))
      \quad
    & B(x_1) 
    &\to \sigma(x_1, \sigma(\tau, \nu)) \quad
    & B(x_1) 
    &\to \sigma(x_1, T_1(\sigma(B(T_2), T_3))) \\
    B(x_1) &\to \sigma(B(x_1), B'(A)) \quad
    & B(x_1) 
    &\to \sigma(x_1, B'(A)) \qquad 
    & B(x_1)
    &\to \sigma(B(x_1), A) \\
    T &\to
    (\alpha T_1(\beta x_1),\, \alpha T_2,\, \gamma T_3) \qquad
    & S 
    &\to \alpha A
    & T 
    &\to (x_1, \tau, \nu)
  \end{align*}
  with $T = (T(x_1), T_2, T_3)$.  This MCFTG~$G$ is not only almost
  $\Delta$"~growing, but even $\Delta$"~growing.  It is also almost
  $\{\alpha, \tau\}$"~growing, which proves that $L(G)$~has finite
  $\{\alpha,\tau\}$"~ambiguity by Lemma~\ref{lem:grow-amb} (as
  observed in Example~\ref{exa:finamb}).  The only rules of~$G$
  (without rules with left-hand side~$B'$) that are not
  $\Delta$"~lexicalized are
  \begin{align*}
    A &\to T_1(\sigma(B(T_2), T_3)) \quad
    & B(x_1) &\to \sigma(B(x_1), A) \\
    B(x_1) &\to \sigma(x_1, T_1(\sigma(B(T_2), T_3))) \quad 
    & B(x_1) &\to \sigma(B(x_1), B'(A)) \quad
    & B(x_1) &\to \sigma(x_1, B'(A)) \enspace.
  \end{align*}
  It is easy to lexicalize this grammar.  The first non-lexicalized
  rule~$\rho_2 = A \to T_1(\sigma(B(T_2), T_3))$ can be replaced by the
  two lexicalized rules $A \to \alpha T_1(\beta(\sigma(B(\alpha T_2),
  \gamma T_3)))$~and~$A \to \sigma(B(\tau), \nu)$ that are obtained
  from~$\rho_2$ by applying the two rules for~$T$ to its right-hand
  side.  By Lemma~\ref{lem:comm-assoc}(4) this process
  preserves~$L(G)$, and it should be clear that the resulting grammar
  is \LDTR"~equivalent to~$G$.  Now all four rules for~$A$ are
  lexicalized.  The remaining non-lexicalized rule in the first column
  can be replaced by two lexicalized rules in the same way.  Finally,
  the same process can be used for all the remaining non-lexicalized
  rules by applying the four lexicalized rules for~$A$ to their
  right-hand sides; this does, however, not preserve 
  \LDTR"~equivalence.\footnote{\label{foo:fc1} The resulting MCFTG is $\X$"~equivalent to $G$
  for the class $\X$ of tree transductions realized by \emph{finite-copying} 
  deterministic top-down tree transducers
  with regular look-ahead.}   \fin
\end{example}

It remains to construct an equivalent $\Delta$"~growing MCFTG, which
is the main result of this section.

\begin{theorem}
  \label{thm:dec-growing}
  It is decidable whether or not the MCFTG~$G$ has finite
  $\Delta$"~ambiguity, and if so, there is an \LDTR"~equivalent
  $\Delta$"~growing MCFTG~$G'$.  Moreover, $\wid(G') =
  \wid(G)$~and~$\mu(G') = \mu(G)$.
\end{theorem}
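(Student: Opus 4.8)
The plan is to first obtain an almost $\Delta$"~growing grammar and then to promote its singly $\Delta$"~lexicalized non-initial terminal rules to doubly $\Delta$"~lexicalized ones, all while keeping the nonterminal alphabets fixed so that width and multiplicity are automatically preserved. The decidability of finite $\Delta$"~ambiguity and the construction of an \LDTR"~equivalent almost $\Delta$"~growing MCFTG are already supplied by Lemma~\ref{lem:dec-almost}; moreover, every construction feeding into it leaves $N$ and $\N$ unchanged. Hence, using Proposition~\ref{pro:LDTRcomp}, I may assume that $G = (N, \N, \Sigma, S, R)$ is itself almost $\Delta$"~growing. By Definition~\ref{def:grow} the sole remaining defect is the set $F \subseteq R$ of \emph{singly} $\Delta$"~lexicalized non-initial terminal rules; every other rule already meets the $\Delta$"~growing requirements.

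To remove $F$ I would use a local absorption construction in the spirit of Lemma~\ref{lem:terminal-removal} (note that Lemma~\ref{lem:terminal-removal} does not apply verbatim, since ``singly $\Delta$"~lexicalized'' is not characterized rule-by-rule as its hypothesis~(2) demands). Each rule of $F$ is a leaf of rank~$0$ in the derivation-tree grammar, so I contract every such leaf into its parent. Concretely, I set $G' = (N, \N, \Sigma, S, R')$, where $R'$ consists of all rules in $R \setminus F$ together with, for every rule $\rho = A \to (v, \LL) \in R$, every nonempty $\Phi \subseteq \LL$, and every substitution function $f$ on $\Phi$ with $f(C) = \rhs(\rho_C)$ for some $\rho_C \in F$ satisfying $\lhs(\rho_C) = C$, the absorbed rule $A \to (v[f],\, \LL \setminus \Phi)$. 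By the argument of Lemma~\ref{lem:treehom} these are legitimate MCFTG rules (the right-hand side stays uniquely $N$"~labeled and $\LL \setminus \Phi$ partitions its nonterminals).

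The key verification is that $G'$ is $\Delta$"~growing, and this is where I expect the main obstacle to lie. An absorbed rule with $\LL \setminus \Phi \neq \emptyset$ still carries a link; if it is monic it inherits the lexical symbol of some $f(C)$ and is thus $\Delta$"~lexicalized. An absorbed non-initial terminal rule arises exactly when $\Phi = \LL$, and it acquires $|\Phi| \geq 1$ lexical symbols from $f$; when $|\Phi| = 1$ the original rule $\rho$ was monic, hence \emph{already} $\Delta$"~lexicalized by the almost-growing hypothesis, so the absorbed rule carries at least two lexical symbols. It is precisely this clause—the monic part of ``almost $\Delta$"~growing'' forcing the terminal case to be \emph{doubly} lexicalized—that makes the construction work.

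Finally I would establish \LDTR"~equivalence. The transducer $M \colon T_R \to T_{R'}$ reads a node labeled $\rho$ and, using regular look-ahead to test for each child whether it is one of the finitely many single-node trees in $F$, collects those children into $\Phi$, emits the corresponding absorbed rule, and recurses on the remaining children while deleting the absorbed ones; this is linear, deterministic, and value-preserving by Lemma~\ref{lem:valsub} and Lemma~\ref{lem:comm-assoc}(4). The inverse $M' \colon T_{R'} \to T_R$ simply re-expands each absorbed rule into $\rho$ with the recorded rules of $F$ reattached as children, and is an LDT"~transducer. Since $M$ and $M'$ are mutually inverse on the relevant derivation trees and preserve values, $G$ and $G'$ are \LDTR"~equivalent; as $N$ and $\N$ are unchanged, $\wid(G') = \wid(G)$ and $\mu(G') = \mu(G)$, which completes the proof.
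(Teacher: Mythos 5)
Your proof is correct, and the construction is in essence the paper's own: reduce to an almost $\Delta$"~growing grammar via Lemma~\ref{lem:dec-almost} (which also settles decidability and keeps $N$ and $\N$ fixed), then absorb the singly $\Delta$"~lexicalized non-initial terminal rules into the rules that invoke them, with the monic clause of ``almost $\Delta$"~growing'' supplying the second lexical symbol in the critical case $\abs{\Phi} = 1$. The one point where you deviate is your parenthetical claim that Lemma~\ref{lem:terminal-removal} ``does not apply verbatim''; that claim is mistaken, and the paper's proof consists precisely of verifying that it \emph{does} apply. Take $\F = \{t \in P_\Sigma(X)^{\scriptscriptstyle +} \mid \abs{\pos_\Delta(t)} = 1\}$ and let $F$ be the set of terminal rules whose right-hand side lies in $\F$. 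Since every rule in $F$ has rank~$0$ in $G_\der$, one has $T_F = F$, so hypothesis~(2) of the lemma does not demand a rule-by-rule characterization of ``singly lexicalized'' over arbitrary derivation trees; it only demands that $\val(d) \in \F$ if and only if $d$ is a single node labeled by a rule of $F$. Under the almost-growing hypothesis this equivalence holds by exactly the counting argument you use to establish double lexicalization: any derivation tree with at least two nodes has at least two lexical positions, because every proper subderivation contributes one and, in the monic case, the rule at the root contributes another. Hypothesis~(1) is immediate since $L(G, A) \cap \F = \{u \mid A \to u \in F\}$ is finite. Finally, the proviso $u[f] \notin \F$ in the lemma's construction is vacuously satisfied by your absorbed rules (as you show, they always carry at least two lexical symbols when terminal), so the grammar you build coincides with the paper's $G'$; you have simply inlined the proof of Lemma~\ref{lem:terminal-removal}, including its \LDTR"~equivalence argument, instead of citing it. Both routes are sound; the paper's is shorter because it reuses the lemma as a black box.
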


\begin{proof}
  By Lemma~\ref{lem:dec-almost} it suffices to show that if $G$~is
  almost $\Delta$"~growing, then there is an \LDTR"~equivalent
  $\Delta$"~growing MCFTG~$G'$.  Consequently, it remains to remove
  all non-initial terminal rules that are singly $\Delta$"~lexicalized,
  using the construction in the proof of
  Lemma~\ref{lem:terminal-removal}.  Let~$\F = \{t \in
  P_\Sigma(X)^{\scriptscriptstyle +} \mid \abs{\pos_\Delta(t)} = 1\}$,
  and let $F$~be the set of all (terminal) rules~$A \to u \in R$ such
  that~$u \in \F$.  Note that~$T_F = F$.  Since $G$~is almost
  $\Delta$"~growing, $\val(d) \in \F$~if and only if~$d \in T_F$, for
  every~$d \in L(G_\der, A)$.  In fact, since all non-initial terminal
  rules are $\Delta$"~lexicalized, $\pos_\Delta(\val(d')) \neq
  \emptyset$ for every~$d' \in L(G_\der, B)$ with~$B \in \N \setminus
  \{S\}$.  Hence, if $d = \rho(\seq d1k)$ with~$k \geq 1$, then 
  either~$k \geq 2$ and both $\val(d_1)$~and~$\val(d_2)$ contribute a
  lexical position to~$\val(d)$, or~$k = 1$ and both $\val(d_1)$~and
  the right-hand side of~$\rho$ contribute a lexical position
  to~$\val(d)$ because the monic rule~$\rho$ is $\Delta$"~lexicalized.
  Thus, $\F$~satisfies requirement~(2) of
  Lemma~\ref{lem:terminal-removal}.  Additionally, $\F$~satisfies
  requirement~(1) of Lemma~\ref{lem:terminal-removal} because $L(G, A)
  \cap \F = \val(L(G_\der, A) \cap T_F) = \{u \mid A \to u \in F\}$. 
  Let $\F_A = \{u \mid A \to u \in F\}$ for every $A \in \N$, and let
  $G'$~be the \LDTR"~equivalent MCFTG as constructed in the proof of
  Lemma~\ref{lem:terminal-removal}.  If~$\rho = A \to (u, \LL)$ is a
  rule of~$G$, and $f$~is a substitution function for~$\LL$ such that
  $f(B) \in \F_B \cup \{\init(B)\}$ for every~$B \in \LL$, then the
  new rule~$\rho_f = A \to (u[f],\, \{B \in \LL \mid f(B) =
  \init(B)\})$ is either equal to the old rule~$\rho$ (because~$f(B) =
  \init(B)$ for all~$B \in \LL$) or is $\Delta$"~lexicalized
  (because~$f(B) \in \F$ for some~$B \in \LL$).  This implies that
  $G'$~is almost $\Delta$"~growing.  Moreover, $\rho_f$~is a rule
  of~$G'$ only if $u[f] \notin \F$, so $G'$~does not have non-initial
  terminal rules that are singly $\Delta$"~lexicalized, and hence is
  $\Delta$"~growing.

  We finally observe that $G'$~has the same ranked alphabet~$N$ of
  nonterminals and the same set~$\N$ of big nonterminals as~$G$, as
  one can easily check from the constructions in Lemmas
  \ref{lem:terminal-removal}~and~\ref{lem:d-chain-free}.  That implies
  that $\wid(G') = \wid(G)$~and~$\mu(G') = \mu(G)$.
\end{proof}

\begin{example}
  \label{exa:d-growing}
  \upshape
  We have seen that the new grammar~$G$ in
  Example~\ref{exa:d-chain-free} is almost $\{\alpha,
  \tau\}$"~growing.  However, it is not $\{\alpha, \tau\}$"~growing
  because the right-hand side of each terminal rule has exactly one
  lexical position (always labeled~$\tau$).  Let $F$~be the set of all
  terminal rules of~$G$; i.e., 
  \[ F = \bigl\{ A \to \sigma(\tau, \nu),\; B(x_1) \to \sigma(x_1,
  \sigma(\tau, \nu)),\; B'(x_1) \to \sigma(x_1, \sigma(\tau, \nu)),\;
  (T_1(x_1), T_2, T_3) \to (x_1, \tau, \nu) \bigr\} \enspace. \]
  In the construction in the proof of Theorem~\ref{thm:dec-growing} we
  apply the rules of~$F$ in all possible ways to the right-hand sides
  of the other rules of~$G$ (and then remove the rules~$F$).  As an
  example, the rule~$B(x_1) \to \sigma(x_1, B'(A))$ is replaced by
  itself and the following three additional  $\{\alpha,
  \tau\}$"~growing rules
  \[ B(x_1) \to \sigma \bigl(x_1, \underbrace{\sigma(A, \sigma(\tau,
    \nu))}_{B'(A)} \bigr) \quad B(x_1) \to \sigma \bigl(x_1,
  B'(\underbrace{\sigma(\tau, \nu)}_A) \bigr) \quad \text{and} \quad
  B(x_1) \to \sigma \bigl(x_1,
  \underbrace{\sigma(\underbrace{\sigma(\tau, \nu)}_A, \sigma(\tau,
    \nu))}_{B'(\sigma(\tau, \nu))} \bigr) \enspace, \]
  in which we marked the substitutions. \fin
\end{example}

Since every MCFTG has finite $\Sigma$"~ambiguity, we obtain the
following result from Theorem~\ref{thm:dec-growing}.
It generalizes the corresponding result of~\cite{staott07,sta09} for spCFTGs,
which is the special case~$\mu(G) = 1$. 

\begin{corollary}
  \label{cor:growing}
  For every MCFTG~$G$ there is an \LDTR"~equivalent $\Sigma$"~growing
  MCFTG~$G'$.  Moreover, $\wid(G') = \wid(G)$~and~$\mu(G') =
  \mu(G)$. 
\end{corollary}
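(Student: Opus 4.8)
The plan is to obtain the corollary as the instance $\Delta = \Sigma$ of Theorem~\ref{thm:dec-growing}. That theorem already delivers, for any MCFTG of finite $\Delta$"~ambiguity, an \LDTR"~equivalent $\Delta$"~growing MCFTG $G'$ satisfying $\wid(G') = \wid(G)$ and $\mu(G') = \mu(G)$. Hence the entire task reduces to checking that the hypothesis of finite $\Delta$"~ambiguity is automatically fulfilled when $\Delta = \Sigma$, so that the constructive (``if so'') branch of the theorem always applies and its decidability clause becomes vacuous.

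First I would verify finite $\Sigma$"~ambiguity directly from Definition~\ref{df:limited}. Since every tree $t \in T_\Sigma$ is by definition a (nonempty) string over $\Sigma$, and the lexical yield $\yield_\Sigma$ is the homomorphism $\yield_B$ for $B = \Sigma$, which erases no symbol, we have $\yield_\Sigma(t) = t$ for every $t \in T_\Sigma$. Consequently, for each $w \in \Sigma^*$ the set $\{t \in L(G) \mid \yield_\Sigma(t) = w\}$ contains at most one tree and is therefore finite; thus $L(G)$, and with it $G$, has finite $\Sigma$"~ambiguity. Alternatively, one may simply invoke the remark following Definition~\ref{df:limited}, which records that $\Sigma^{(0)} \cup \Sigma^{(1)} \subseteq \Delta$ already forces finite $\Delta$"~ambiguity; this trivially holds for $\Delta = \Sigma$.

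With finite $\Sigma$"~ambiguity established, Theorem~\ref{thm:dec-growing} applied with $\Delta = \Sigma$ yields the \LDTR"~equivalent $\Sigma$"~growing MCFTG $G'$ together with the equalities $\wid(G') = \wid(G)$ and $\mu(G') = \mu(G)$, which is exactly the claim; here $\Sigma$"~growing specialises Definition~\ref{def:grow} in the expected way, namely every non-initial terminal rule is doubly $\Sigma$"~lexicalized and every monic rule contains a terminal symbol. I do not expect any genuine obstacle: the corollary is a pure specialisation, and the sole point requiring justification---the finiteness of $\Sigma$"~ambiguity---is immediate once one observes that the $\Sigma$"~yield of a tree is the tree itself. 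The mathematical content lies entirely in Theorem~\ref{thm:dec-growing}; the special case $\mu(G) = 1$ of the present statement recovers the Growing Normal Form for spCFTGs of~\cite{staott07,sta09}.
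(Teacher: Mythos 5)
Your proposal is correct and matches the paper's own argument exactly: the paper derives the corollary from Theorem~\ref{thm:dec-growing} by observing that every MCFTG has finite $\Sigma$"~ambiguity, which is precisely your reduction. Your extra verification that $\yield_\Sigma(t)=t$ (so each fiber of $\yield_\Sigma$ on $L(G)$ is a singleton) is a fine elaboration of the fact the paper takes for granted.
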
 

At the end of this section we consider
an additional basic normal form for MCFTGs that generalizes one that
is familiar from multiple context-free grammars (viz.\@ condition~(N3)
of~\cite[Lemma~2.2]{sekmatfujkas91}), and will be needed in
Section~\ref{sub:mctag}.  We say that the MCFTG~$G$ is
\emph{nonerasing} if $u_i \neq x_1$~for every rule~$(\seq A1n) \to
((\seq u1n), \LL)$ and every~$i \in [n]$.  Note that in a grammar~$G$,
the tree~$u_i$ can only be equal to~$x_1$ if~$\rk(A_i) = 1$.

\begin{lemma}
  \label{lem:nonerasing}
  For every MCFTG~$G$ there is an \LDTR"~equivalent nonerasing
  MCFTG~$G'$.  If the grammar $G$~is $\Delta$"~lexicalized, then so is~$G'$.
  Moreover, $\wid(G') = \wid(G)$~and~$\mu(G') = \mu(G)$.
\end{lemma}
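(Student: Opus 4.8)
The plan is to eliminate, in a single construction, every rule component that is literally the identity pattern $x_1$. First I would apply Lemma~\ref{lem:epsilon-free} (composing transducers via Proposition~\ref{pro:LDTRcomp}) so that all terminal rules are $\Sigma$"~lexicalized. As a consequence no big nonterminal~$B$ generates the all"~identity forest $(x_1,\dotsc,x_1)$: by Lemma~\ref{lem:valocc}(1) such a $\Sigma$"~free value would force every rule of its derivation tree to be $\Sigma$"~free, which is impossible once every leaf rule carries a terminal symbol. When $G$ is already $\Delta$"~lexicalized this step is vacuous, because every tree over~$\Sigma$ contains a $\Sigma$"~symbol, so $G$ has no $\Sigma$"~free terminal rules and $\Delta$"~lexicalization is trivially preserved.

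The key observation is that making a rank"~$1$ nonterminal~$C$ transparent is exactly the second"~order substitution of the identity pattern $x_1$ for~$C$: an occurrence $C(s)$ becomes $x_1[x_1 \gets s] = s$, i.e. the $C$"~node is deleted and its child promoted. I would therefore record, for every big nonterminal $B = (\seq C1m)$, the finite family $E_B$ of index sets~$K$ that are \emph{exactly realized}, meaning there is a derivation tree~$d$ of type~$B$ with $\{\,i \mid \val(d)_i = x_1\,\} = K$; by the previous paragraph $[m] \notin E_B$. Each $E_B$ is the least fixed point of an obvious recursion over the rules, and the exact $x_1$"~pattern of $\val(d)$ is computable bottom"~up, so each $E_B$ is computable and each language $L_{B,K} = \{\,d \in \DL(G_\der,B) \mid \{\,i \mid \val(d)_i = x_1\,\} = K\,\}$ is regular, being recognized by the deterministic bottom"~up automaton whose states are the possible $x_1$"~patterns.

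Next I would build $G' = (N', \N', \Sigma, S, R')$. Its big nonterminals are fresh copies $\langle B, K\rangle$ with $B \in \N$ and $K \in E_B$, whose nonterminal sequence is that of~$B$ with the indices in~$K$ removed; since $K \subsetneq [m]$ this is nonempty, and $S = \langle S, \emptyset\rangle$ because the single rank"~$0$ component of~$S$ is never~$x_1$. For every rule $\rho = B \to (u, \LL)$ and every choice of $K_D \in E_D$ for each link $D \in \LL$, I form $u'$ by substituting $x_1$ for the $K_D$"~indexed nonterminals of each~$D$ (deleting those nodes) and relabelling $D$ to $\langle D, K_D\rangle$, set $K_B = \{\,i \mid u'_i = x_1\,\}$ (so $K_B \in E_B$), and add the rule $\langle B, K_B\rangle \to (u'', \LL')$, where $u''$ keeps only the components of~$u'$ outside~$K_B$ and $\LL' = \{\langle D, K_D\rangle \mid D \in \LL\}$. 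By construction no kept component equals~$x_1$, so $G'$ is nonerasing; the conditions of Definition~\ref{def:mcftg} hold by a routine check (using fresh, disjoint copies as in Lemma~\ref{lem:renaming}), and $\wid(G') = \wid(G)$, $\mu(G') = \mu(G)$ since each $\langle B, K\rangle$ reuses the ranks of~$B$ and has length $\le \abs{B}$, with $\langle B, \emptyset\rangle$ attaining~$\abs{B}$. For $\Delta$"~lexicalization note that substituting~$x_1$ deletes no terminal symbol and that a component containing a lexical symbol can never reduce to~$x_1$; hence such a component is kept, properness is preserved ($\LL' = \emptyset$ iff $\LL = \emptyset$, and $B = S$ iff the new left"~hand side is~$S$), and every proper rule of~$G'$ inherits a lexical symbol from its~$\rho$.

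For correctness I would show by induction on derivation trees that $L(G', \langle B, K\rangle)$ is exactly the set of forests obtained from those $t \in L(G, B)$ with $t_i = x_1$ for $i \in K$ by deleting the components in~$K$; the crucial algebraic step is that erasing a link component equals substituting~$x_1$ and is thus governed by the associativity of substitution, Lemma~\ref{lem:comm-assoc}(4), together with Lemma~\ref{lem:valsub}. Since $E_S = \{\emptyset\}$, this yields $L(G') = L(G)$ at the top level. The forward translation~$M$ is the finite"~state relabeling that, at a node labelled~$\rho$, uses the look"~ahead languages $L_{D,K_D}$ on its children to select the matching $G'$"~rule and keeps all children (links are never fully erased); the backward translation~$M'$ is the projection forgetting the annotations. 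Both lie in \LDTR, and at type~$S$ they are value"~preserving, which gives \LDTR"~equivalence in the sense of Definition~\ref{def:LDTReq}. The main obstacle I anticipate is the bookkeeping of this value"~equality through the simultaneous substitutions, namely verifying that the exact $x_1$"~pattern is preserved under rule application so that the families~$E_B$, the rule set~$R'$, and the two transducers all fit together; everything else is a routine adaptation of the epsilon/chain"~removal arguments of Lemmas~\ref{lem:terminal-removal} and~\ref{lem:d-chain-free}.
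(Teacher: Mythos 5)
Your construction is essentially the paper's own proof, with complementary bookkeeping: after the same first step (Lemma~\ref{lem:epsilon-free}, which you correctly observe is vacuous when $G$ is $\Delta$-lexicalized), the paper annotates each big nonterminal $A$ with the set $\Psi$ of components that do \emph{not} generate $x_1$, where you record the erased set $K$; the rules (substitute $x_1$ for erased nonterminals, drop erased components of the left-hand side), the regular look-ahead languages for the $x_1$-patterns, the one-state relabeling transducer in the forward direction, and the backward projection are then all the same. (One cosmetic difference: the paper reuses the nonterminals of $G$ and performs the key substitution split with Lemma~\ref{lem:comm-assoc}(2); in your fresh-copy formulation, where links are relabeled to $\langle D, K_D\rangle$, Lemma~\ref{lem:comm-assoc}(4) indeed applies because the fresh nonterminals do not occur in $u$, so that detail goes through either way.) The one step that fails is your justification of $\mu(G') = \mu(G)$: you claim $\langle B, \emptyset\rangle$ attains length $\abs{B}$, but $\emptyset \in E_B$ is not guaranteed, since a rank-$1$ component can generate $x_1$ in \emph{every} derivation even after Lemma~\ref{lem:epsilon-free} — e.g.\ a big nonterminal $B = (C_1, C_2)$ with $\rk(C_1)=1$, $\rk(C_2)=0$, whose only rule is $B \to ((x_1, a), \emptyset)$, which is $\Sigma$-lexicalized yet has $E_B = \{\{1\}\}$. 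With your realizable-only $\N'$ you therefore only obtain $\mu(G') \leq \mu(G)$ (and in the degenerate case $\wid(G) = 1$ possibly $\wid(G') < \wid(G)$). That inequality is of course the substance of the claim, but to get the literal equalities the paper admits \emph{all} annotations $A|_\Psi$ with $\Psi \neq \emptyset$ and the rank-$1$ condition on the erased indices, realized or not — so that in particular $A$ itself survives as a big nonterminal — and keeps the ranked alphabet $N$ of nonterminals unchanged.
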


\begin{proof}
For a sequence~$w =(\seq a1n)$ we denote, in this proof only,
$[n]$ by $\num(w)$, and $a_j$ by $w|_j$ for every~$j \in \num(w)$.  
For every~$\Psi \subseteq \num(w)$, we
denote by~$w|_\Psi$ the ``scattered subsequence'' 
$(\seq a{j_1}{j_m})$ of~$w$, in which $\Psi =\{\seq j1m\}$
and $1 \leq j_1 < \dotsb < j_m \leq n$.  
Intuitively, $w|_\Psi$~is obtained 
from $w$ by selecting the $j$-th element of $w$ for every $j\in\Psi$. 

  By Lemma~\ref{lem:epsilon-free} we may assume that all terminal
  rules of~$G = (N, \N, \Sigma, S, R)$ are $\Sigma$"~lexicalized.
  Moreover, we can assume that $G$~has disjoint big nonterminals, as
  observed after Lemma~\ref{lem:renaming}.  The set $\N'$ of big nonterminals of
  the new grammar~$G' = (N, \N', \Sigma, S, R')$ consists of 
  all~$A|_\Psi$ such that $A \in \N$, $\Psi \subseteq \num(A)$, $\Psi\neq\emptyset$,
  and $\rk(A|_j) = 1$ for every $j\in\num(A)\setminus\Psi$.
  Intuitively, $\Psi$~selects those nonterminals of~$A$
  that do not generate~$x_1$.  Since all terminal rules of~$G$ are
  $\Sigma$"~lexicalized, it is not possible that all nonterminals
  of~$A$ generate~$x_1$.  Note that $S = S|_{\{1\}}$ and that
  for every~$A' \in \N'$ there are a unique~$A \in \N$ and a
  unique~$\Psi \subseteq \num(A)$ such that $A' = A|_\Psi$
  because $G$~has disjoint big nonterminals. Note also that $\num(A)=\num(u)$
  for every rule $A \to (u, \LL)$ of~$G$. 

  Let $\rho = A \to (u, \LL)$~be a rule of~$G$ with~$\LL = \{\seq B1k\} \subseteq \N$,
  and let $\seq \Psi1k \subseteq \nat$ such that~$B_i|_{\Psi_i} \in \N'$ 
  for every~$i \in [k]$.  Finally, let
  $u' = u[B_i|_j \gets x_1 \mid i \in 
  [k],\, j\notin \Psi_i]$, and let~$\Psi = \{j \in \num(A) \mid u'|_j \neq
  x_1\}$.  Then $R'$~contains the rule~$\rho_{\seq \Psi1k} =
  A|_\Psi \to (u'|_\Psi, \LL')$ with
  $\LL' = \{B_1|_{\Psi_1}, \dotsc, B_k|_{\Psi_k}\}$ 
  provided that~$\Psi \neq \emptyset$.  This concludes the
  definition of~$G'$.

  For every derivation tree~$d \in L(G_\der, A)$ we define $\Psi(d) =
  \{j\in\num(A)\mid \val(d)|_j\neq x_1\}$.  Then, as already observed before, we
  have~$A|_{\Psi(d)} \in \N'$.  It is straightforward to
  verify that if~$d = \rho(\seq d1k)$, where $\rho$~is the rule of the
  previous paragraph, then the left-hand side of the
  rule~$\rho_{\Psi(d_1), \dotsc, \Psi(d_k)}$
  is~$A|_{\Psi(d)}$ because $\val(d)|_j = x_1$ if and only
  if~$u|_j = wx_1$ with~$w \in \{B_i|_\ell \mid i \in [k],\,
  \ell\in\num(B_i),\,\val(d_i)|_\ell = x_1\}^*$.  

  For every derivation tree~$d \in L(G_\der,
  A)$ there exists a derivation tree~$d' \in L(G'_\der,
  A|_{\Psi(d)})$ such that $\val(d') =
  \val(d)|_{\Psi(d)}$.  In fact, let $d = \rho(\seq d1k)$,
  and let $d'_i \in L(G'_\der, B_i|_{\Psi(d_i)})$ be a
  derivation tree such that $\val(d'_i) =
  \val(d_i)|_{\Psi(d_i)}$ for every~$i \in [k]$, which
  exist by the induction hypotheses.  By Lemma~\ref{lem:comm-assoc}(2)
  we have $\val(d') = \val(d)|_{\Psi(d)}$ for $d' =
  \rho_{\Psi(d_1), \dotsc, \Psi(d_k)}(\seq{d'}1k)$.  This shows
  that~$L(G) \subseteq L(G')$.  Clearly, $L_\Psi = \{d \in L(G_\der,
  A) \mid \Psi(d) = \Psi\}$ is a regular tree language for
  every~$\Psi$.  Thus, $d'$~can be computed from~$d$ by the one-state
  \LDTR"~transducer~$M$ with the rules
  \[ \langle q,\rho(y_1 \colon L_{\Psi_1}, \dotsc, y_k \colon
  L_{\Psi_k}) \rangle \to \rho_{\seq \Psi1k}(\langle q, y_1\rangle,
  \dotsc, \langle q,y_k\rangle) \enspace. \] 

  Vice versa, for every derivation tree~$d' \in L(G'_\der,
  A|_\Psi)$ there exists a derivation tree~$d \in L(G_\der,
  A)$ such that $M(d) = d'$~and~$\Psi = \Psi(d)$, where $A$~is
  uniquely determined by~$A|_\Psi$ because $G$~has
  disjoint big nonterminals.  In fact, let $d' = \rho'(\seq{d'}1k)$
  with $d'_i \in L(G'_\der, B_i|_{\Psi_i})$.  Then there
  exists a rule~$\rho$ as above such that $\rho' = \rho_{\seq
    \Psi1k}$.  Clearly, if~$d_i \in L(G_\der, B_i)$ such that $M(d_i)
  = d'_i$~and~$\Psi_i = \Psi(d_i)$, then $M(d) = d'$~and~$\Psi =
  \Psi(d)$ for~$d = \rho(\seq d1k)$.  Thus~$L(G') \subseteq L(G)$, and
  $d$~can be computed by an LDT"~transducer.
\end{proof}

\section{Lexicalization}
\label{sec:lex}
\noindent In this section, in Lemma~\ref{lem:main}, we present the
main lexicalization step, in which we lexicalize all non-monic
non-terminal rules.  It generalizes the transformation of a
context-free grammar into Operator Normal Form
(see~\cite[Theorem~1.2]{grahar72} and~\cite[Theorem~3.5]{aubebo97}).
We assume that $G$~is $\Delta$"~growing (see
Theorem~\ref{thm:dec-growing}).  Thus, all non-initial terminal rules
are doubly $\Delta$"~lexicalized and all monic rules are
$\Delta$"~lexicalized.  In the following we will simply write
`lexicalized' to mean `$\Delta$"~lexicalized'.

For a derivation tree~$d \in L(G_\der)$ and a position~$r \in \pos(d)$
such that $d(r)$~is a non-lexicalized rule of rank at least~$2$, we 
say that the ``source'' of~$r$ is the first position~$q$ in a
pre-order traversal of the second direct subtree of $r$ (i.e., the subtree 
at~$r2$) such that $d(q)$~is a doubly lexicalized rule.  Clearly, 
since every terminal rule at the leaves of~$d$ is doubly lexicalized,
such a position exists and can be found by only exploring the first
children of each visited node; i.e., $q = r21^m$ for some $m \in
\nat_0$.  The basic idea of the lexicalization construction is to
remove one lexical symbol~$\delta$ from the source~$q$ and
transport it to the ``target''~$r$.  Then $d(q)$~is still
lexicalized, and $d(r)$~has become lexicalized.  Note that different
targets have different sources, which is a simple fact that is well known
to be useful (cf.\@ \cite[Section~3]{pottho93} and~\cite[page~346]{hoopas97}).  The
transportation of~$\delta$ from the source node~$q$ to the target
node~$r$ is the task of the non-lexicalized or singly lexicalized
rules at the positions along the path from~$q$ to~$r$.  The required
relabeling of the derivation tree can be realized deterministically by an
\LDTR"~transducer that uses its look-ahead at~$r$ to determine the
node label~$d(q)$.  From the rewriting point of view
(Section~\ref{sub:deriv}), it is a guess-and-verify process.  We
guess~$\delta$ at position~$r$ and verify it at position~$q$.

\begin{example}
  \label{exa:newmain}
  \upshape
  As before, let~$\Delta = \{\alpha, \beta, \tau, \nu\}$.  Since the
  resulting grammar~$G$ in Example~\ref{exa:d-chain-free} can be
  lexicalized by simple substitution of rules (as discussed in 
  Example~\ref{exa:d-chain-free}), we consider another
  $\Delta$"~growing grammar, which is similar to the original grammar 
  of Example~\ref{exa:main}, but has an additional non-lexicalized
  rule~$A \to B(\gamma(A))$.  Moreover, we replace the rule~$\rho_4 =
  B(x_1) \to x_1$ by the two doubly lexicalized rules $B(x_1) \to
  \sigma(x_1, \alpha T_1(\beta \sigma(\alpha T_2, \gamma T_3)))$ and
  $B(x_1) \to \sigma(x_1, \sigma(\tau, \nu))$, which are taken from
  Example~\ref{exa:d-chain-free}.  The (big) nonterminal~$B'$ remains
  an alias of~$B$.  The resulting $\Delta$"~growing MCFTG, which we
  again call~$G$, has the following rules (renamed with respect to
  Example~\ref{exa:main}):  
  \begin{alignat*}{7}
    \rho_1 \colon && S &\to \alpha A
    & \rho_2 \colon && A &\to T_1(\sigma(B(T_2), T_3)) 
    & \rho_3 \colon && A &\to B(\gamma A) \\
    \rho_4 \colon && B(x_1) &\to \sigma(B(x_1), B'(A)) 
    & \rho_5 \colon && B(x_1) &\to \sigma(x_1, \alpha T_1(\beta
      \sigma(\alpha T_2, \gamma T_3)))
    & \rho_6 \colon && B(x_1) &\to \sigma(x_1, \sigma(\tau, \nu)) \\
    \rho'_4 \colon && B'(x_1) &\to \sigma(B(x_1), B'(A)) \quad 
    & \rho'_5 \colon && B'(x_1) &\to \sigma(x_1, \alpha T_1(\beta
      \sigma(\alpha T_2, \gamma T_3))) \quad
    & \rho'_6 \colon && B'(x_1) &\to \sigma(x_1, \sigma(\tau, \nu)) \\ 
    && && \rho_7 \colon && T &\to (\alpha T_1(\beta
      x_1),\, \alpha T_2,\,\gamma T_3) 
    & \rho_8 \colon && T &\to (x_1, \tau, \nu) 
  \end{alignat*}
  with $T = (T_1(x_1), T_2, T_3)$.  Rule~$\rho_1$ is singly
  lexicalized, whereas rules~$\rho_2$, $\rho_3$, $\rho_4$,
  and~$\rho'_4$ are non-lexicalized.  The remaining rules are doubly
  lexicalized.  We will remove the lexical symbol $\beta$~or~$\tau$
  from each doubly lexicalized rule that labels a source and transport
  it to the target.  For our derivation trees, we need to fix the
  order of the big nonterminals in the rules, so we let
  \[ \LL(\rho_2) = \{B, (T_1, T_2, T_3)\} \qquad \LL(\rho_3) = \{A,
  B\} \qquad \text{and} \qquad \LL(\rho_4) = \LL(\rho'_4) = \{B, B',
  A\} \enspace. \]
  Figure~\ref{fig:transport} shows a derivation tree of~$L(G_\der)$
  together with arrows indicating sources, corresponding targets, and
  transported lexical elements.  A transportation of~$\beta$ is marked
  by a dashed arrow, whereas a transport of~$\tau$ is marked by a
  dotted arrow. \fin
\end{example} 

\begin{figure}[t]
  \centering
  \includegraphics{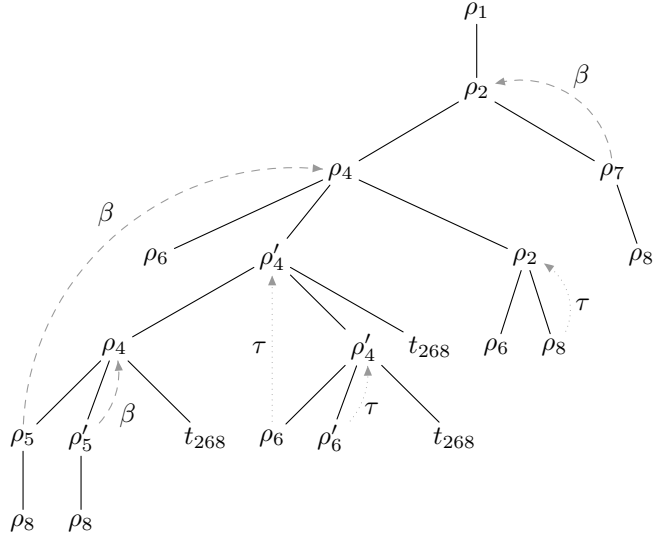}
  \caption{Derivation tree of~$L(G_\der)$ for the MCFTG~$G$ of
    Example~\protect{\ref{exa:newmain}} with indicated sources,
    targets, and transported lexical elements, where $t_{268} =
    \rho_2(\rho_6, \rho_8)$ with $\tau$ transported from 
    $\rho_8$ to $\rho_2$.}
  \label{fig:transport}
\end{figure}

We need some more terminology.  Let $\Omega$~be a ranked alphabet
(such as~$N \cup \Sigma$) and let $X_\infty=X\setminus\{\hole\}$;
i.e., $X_\infty=\{x_1,x_2,\dotsc\}$. 
For a finite subset~$Z$ of~$X_\infty$, if
$Z = \{\seq x{i_1}{i_n}\}$ with $n \in \nat_0$ and $i_1 <i_2 < \dotsb
< i_n$, then we define $\sequ(Z) = \word x{i_1}{i_n} \in X_\infty^*$, the
sequence of variables in~$Z$ with increasing indices.  A tree~$t$ in~$T_\Omega(X)$ 
is \emph{linear} if each variable occurs at most once
in it; i.e., $\abs{\pos_x(t)}\leq 1$ for every $x\in X$. 
For a linear tree~$t \in T_\Omega(X)$, we denote by~$\var(t)$
the set of variables $x_i$ that occur in~$t$; i.e.,~$\var(t) = \alp_{X_\infty}(t)$.
If $\sequ(\var(t)) = \word x{i_1}{i_n}$, then we define~$\ren(t) =
t[x_{i_j} \gets x_j \mid 1 \leq j \leq n]$, the \emph{renumbering}
of~$t$, which is a pattern in~$P_\Omega(X_n)$ if $\hole$ does not occur in $t$.  
Note that $t =
\ren(t)[x_j \gets x_{i_j} \mid 1 \leq j \leq n]$.  As an example,
if~$t = \sigma(x_4, \sigma(x_2, x_5))$ then~$\var(t) = \{x_2, x_4,
x_5\}$, $\sequ(\var(t)) = x_2x_4x_5$, and $\ren(t) = \sigma(x_2,
\sigma(x_1, x_3))$.  We will use the easy fact that if $h$~is a tree
homomorphism over~$\Omega$ and $t \in T_\Omega(X)$~is linear, then
$\hat{h}(t)$~is linear and $\var(\hat{h}(t)) = \var(t)$ by
Lemma~\ref{lem:treehom}(1), and $\ren(\hat{h}(t)) = \hat{h}(\ren(t))$
by Lemma~\ref{lem:treehomsub}.

To define contexts, we use the special variable~$\hole$.  
A \emph{context} is a tree~$t$ with exactly one
occurrence of~$\SBox$; i.e., $\abs{\pos_{\SBox}(t)} = 1$.  For a linear
context~$t \in T_\Omega(X)$ we define $\renh(t) =
\ren(t)[\SBox \gets x_{n+1}]$, where $n = \abs{\var(t)}$.  Note that
$\renh(t)$~is a pattern in~$P_\Omega(X_{n+1})$.  The above easy fact 
also holds for contexts: $\hat{h}(t)$~is a linear context
and, by Lemma~\ref{lem:treehomsub} again, $\renh(\hat{h}(t)) =
\hat{h}(\renh(t))$.

For a tree~$t \in T_\Omega(X_k)$ and a position~$p \in
\pos_\Omega(t)$, there exist a unique context~$c \in T_\Omega(X_k \cup
\{\SBox\})$ and a unique tree~$u \in T_\Omega(X_k)$
such that~$\pos_{\SBox}(c) = \{p\}$ and~$t = c[\SBox \gets u]$.  The context~$c$ is
called the \emph{$p$"~context} of~$t$ and denoted by~$t|^p$, and the
tree~$u$ is called the \emph{subtree} of~$t$ at~$p$ and denoted
by~$t|_p$.  If~$p \in \pos_\omega(t)$ with~$\rk(\omega) = m$, then $t
= t|^p[\SBox \gets \omega(\seq{t|}{p1}{pm})]$.  Let $h$~be a tree
homomorphism over~$\Omega$.  By Lemma~\ref{lem:treehomsub},
$\hat{h}(c[\SBox \gets u]) = \hat{h}(c)[\SBox \gets \hat{h}(u)]$.
Thus, if $\pos_{\SBox}(\hat{h}(t|^p)) = \{\hat p\}$, then
$\hat{h}(t|^p) = \hat{h}(t)|^{\hat p}$ and $\hat{h}(t|_p) =
\hat{h}(t)|_{\hat p}$.  Moreover,  if $p \in
\pos_\omega(t)$~and~$h(\omega) = \init(\omega)$, then $\hat p \in
\pos_\omega(\hat{h}(t))$ and $\hat{h}(t|_{pi}) = \hat{h}(t)|_{\hat pi}$ for
every~$i \in [m]$. 

\begin{lemma}
  \label{lem:main}
  For every $\Delta$"~growing MCFTG~$G$ there is an \LDTR"~equivalent
  $\Delta$"~lexicalized MCFTG~$G'$.
\end{lemma}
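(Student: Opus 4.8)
The plan is to remove the non-lexicalized rules of $G$, which—since $G$ is $\Delta$"~growing—are exactly the non-monic non-terminal rules (those with $\abs{\LL(\rho)} \geq 2$): indeed, a proper non-lexicalized rule cannot be terminal (a non-initial terminal rule is doubly lexicalized, and an initial terminal rule is not proper) and cannot be monic (monic rules are lexicalized), so it has rank at least~$2$ in $G_\der$. For such a \emph{target} rule at a position $r$ of a derivation tree $d \in L(G_\der)$ I use the \emph{source} $q = r21^m$ from the text, i.e.\ the first doubly lexicalized rule reached by descending into the second subtree and then following first children; it exists because every leaf reached inside the second subtree carries a non-initial, hence doubly lexicalized, terminal rule. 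The first combinatorial step is to verify the fact announced before the lemma that \emph{distinct targets have distinct sources}: for targets $r \neq r'$, a short case analysis on whether they are incomparable or one lies on the first-child spine issuing from the other's second child shows that the spines $r2, r21, \dotsc, q$ and $r'2, r'21, \dotsc, q'$ are disjoint, whence $q \neq q'$. This guarantees that each source is drained of at most one lexical symbol, so a doubly lexicalized source stays (at least singly) lexicalized.

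The construction of $G'$ transports, for every target $r$, one lexical symbol $\delta$ from its source $q$ up to $r$, realized by enriching the (big) nonterminals with a bounded \emph{transport marker} recording $\delta$ together with the contextual data needed to reinsert it; here the context notation $t|^p$, $t|_p$ and the operators $\ren$ and $\renh$ introduced just before the lemma are exactly what is needed. Concretely, the source rule is replaced by a variant deleting one occurrence of $\delta$: if $\delta$ has rank~$m$ and sits at position~$p$, the subtree $\delta(t|_{p1}, \dotsc, t|_{pm})$ is replaced by a fresh hole variable and its $m$ arguments are exported as $m$ additional forest components, the marker also carrying the $p$"~context $\renh(t|^p)$. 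Each rule strictly between $r2$ and $q$ on the spine is replaced by a variant that propagates the marker to its first link and routes the exported hole and arguments upward; such a rule is monic (hence already lexicalized) or is itself a non-monic non-terminal target, lexicalized by its own transport into its second subtree, which branches off the current spine and is therefore independent by the disjointness above. Finally, the target rule at~$r$ is replaced by a variant that sends the marker into its second link and reinserts $\delta$ at the surfaced hole (plugging back the exported arguments), thereby becoming lexicalized. Since extracting a rank-$m$ symbol exports $m$ components and one hole, this increases the multiplicity by at most $\mr_\Delta$ and the width by at most~$1$.

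It remains to exhibit the two value-preserving transducers of Definition~\ref{def:LDTReq}. From $G'$ to $G$ one simply erases all transport markers; this is a projection, hence an \LDTR"~transduction, and value preservation is immediate. From $G$ to $G'$ the transducer must, at each node, decide whether it is a target and, if so, determine $\delta$; since the source $q = r21^m$ is located by a deterministic second-then-first-child search, the symbol $\delta$ and the contextual data are computable by regular look-ahead at~$r$, and since every node lies on the spine of at most one ancestor target, the marker carried by each node is bounded and determined top-down. The relabeling preserves the shape of the derivation tree (the number of links of each rule is unchanged), so it is a finite-state relabeling with regular look-ahead. Value preservation $\val(M(d)) = \val(d)$ is proved by induction on~$d$, the inductive step being a computation that the deletion of $\delta$ at the source, its propagation along the spine, and its reinsertion at the target cancel out; this rests on the associativity of second-order substitution (Lemma~\ref{lem:comm-assoc}(4)) and on the fact that $\val$ distributes over substitution (Lemma~\ref{lem:valsub}), together with $\renh(\hat{h}(t)) = \hat{h}(\renh(t))$.

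The main obstacle is the semantic bookkeeping in the marker: for a lexical symbol of rank $m \geq 1$ one must track not only which $\delta$ is moved but exactly how its deletion reshapes the generated patterns—where the hole goes, how the $m$ exported arguments are threaded up the spine as new forest components, and how the variables $x_1, x_2, \dotsc$ are renumbered so that every intermediate right-hand side remains a linear, nondeleting pattern of the correct multiple rank. Making this data match at the reinsertion point, while simultaneously proving that the relabeling is deterministic with regular look-ahead and value-preserving, is the technically demanding part; the rank-$0$ case ($\Delta \subseteq \Sigma^{(0)}$) is considerably simpler because no arguments need be exported and the multiplicity does not grow.
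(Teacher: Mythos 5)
Your proposal is correct and takes essentially the same route as the paper's own proof: the same source position $q = r21^m$, the same splitting of the source rule into the context $\renh(u|^p)$ plus the subtrees $\ren(u|_{pi})$ exported as additional forest components, transport of this decomposition through first links, arrival of $\delta$ at the target through its second link, and the same pair of transducers (a projection from $G'$ back to $G$, and a finite-state relabeling with regular look-ahead from $G$ to $G'$, with value preservation via Lemma~\ref{lem:comm-assoc} and the commutation of $\ren$ and $\renh$ with tree homomorphisms). The bounded ``transport marker'' you postulate is exactly the paper's skeleton $(C,\delta,Z)$ with $Z$ a partition of $X_{\rk(C)}$ recording the variable bookkeeping you defer (the generated context and arguments themselves are forest components, not part of the nonterminal), and the paper carries out that bookkeeping precisely along the lines you sketch; your only slips are cosmetic, namely that the non-lexicalized proper rules are contained in, but not equal to, the non-monic non-terminal rules, and that Lemma~\ref{lem:valsub} is not actually needed in the induction.
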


\begin{proof}
  Let $G = (N, \N, \Sigma, S, R)$ be a $\Delta$"~growing MCFTG.  We
  can assume that all its terminal rules are doubly lexicalized
  because initial terminal rules can be removed from~$G$ and added
  after lexicalization.  Moreover, for technical convenience, we
  assume that there is a subset~$\Deltadl$ of~$\Delta$ such
  that (1)~for every doubly lexicalized rule~$A \to (u, \LL)$ there is
  a lexical symbol~$\delta \in \Deltadl$ that occurs exactly
  once in~$u$, and (2)~for every singly lexicalized rule~$A \to (u,
  \LL)$, the lexical symbol that occurs in~$u$ is not an element
  of~$\Deltadl$.  This can be assumed because we could even
  assume that $G$~is uniquely terminal labeled as defined after
  Lemma~\ref{lem:cover}.  In fact, as observed there, $G$~has a
  cover~$(G_{\text u}, h)$ such that $G_\textrm{u} = (N, \N,
  \Sigma_{\text u}, S, R_{\text u})$~is uniquely terminal labeled.  If
  we let~$\Delta_{\text u} = \{\sigma \in \Sigma_{\text u} \mid
  h(\sigma) \in \Delta\}$, then $G_{\text u}$~is $\Delta_{\text
    u}$"~growing.  Let $G'_{\text u}$~be a $\Delta_{\text
    u}$"~lexicalized MCFTG that is \LDTR"~equivalent to~$G_{\text u}$,
  and let $G' = (G'_{\text u})_h$; i.e., $G'$~is the unique MCFTG such
  that $(G'_{\text u}, h)$~is a cover of~$G'$.  Then $G'$~is
  $\Delta$"~lexicalized.  Moreover, $G$~is \LDTR "~$\hat h$"~equivalent
  to~$G_{\text u}$ and $G'$~is \LDTR "~$\hat h$"~equivalent
  to~$G'_{\text u}$, 
  by Lemma~\ref{lem:cover}.  Consequently, we can conclude that
  $G$~and~$G'$ are \LDTR"~equivalent.  This shows that we could even
  assume that $G$~is uniquely terminal labeled.  However we do not do so,
  because we wish to illustrate the construction in this proof on the
  grammar~$G$ of Example~\ref{exa:newmain}, for which
  $\Deltadl = \{\beta, \tau\}$.

  For every doubly lexicalized rule~$\rho = A \to (u, \LL)$ of~$G$,
  let $\lex(\rho) \in \Deltadl$ be a fixed lexical symbol
  that occurs exactly once in~$u$.  In the grammar~$G'$ to be
  constructed, this symbol will possibly be removed from~$u$, leaving
  a rule that is still lexicalized.  

  We let 
  \[ N_\new = \{\langle C, \delta, i, Z\rangle \mid C \in N,\, \delta \in
  \Deltadl,\, 0 \leq i \leq \rk(\delta),\, Z \subseteq X_{\rk(C)}\} \]
  be a set of new nonterminals such that $\rk(\langle C, \delta, 0,
  Z\rangle) = \abs{Z} + 1$ and $\rk(\langle C, \delta, i, Z\rangle) =
  \abs{Z}$ for every~$i \in [\rk(\delta)]$.  The grammar~$G'$ will 
  have the set of nonterminals~$N'=N \cup N_\new$.

  Let us provide some intuition for these new nonterminals.  We first 
  observe that for every derivation tree~$d \in L(G_\der, A)$ there is
  a natural label-preserving bijection~$\tau_d$ between the sets 
  $\pos_\Sigma(\val(d))$ and~$\bigcup_{q \in \pos(d)}
  \bigl(\{q\} \times \pos_\Sigma(\rhs(d(q))) \bigr)$; i.e., between the set of
  terminal positions of~$\val(d)$ and the disjoint union of the sets of terminal positions of 
  the right-hand sides of the rules that occur in~$d$, cf. Lemma~\ref{lem:valocc}(1).  For 
  positions~$q\in\pos(d)$ and $p\in\pos_\Sigma(\rhs(d(q)))$, let $\tau_d(q, p)$~be the corresponding
  position in~$\pos_\Sigma(\val(d))$.  Since $\tau_d$~is only needed in 
  this paragraph, we do not give its
  straightforward, but tedious, definition.  The existence of~$\tau_d$
  should be intuitively clear, and can be proved by induction on the
  structure of~$d$; the induction step is based on the fact that for a
  tree homomorphism~$h$ over~$N \cup \Sigma$ and a forest~$u$, there
  is a natural label-preserving bijection between the sets
  $\pos_\Sigma(\hat{h}(u))$ and~$\bigcup_{q \in \pos(u)}
  \bigl(\{q\} \times \pos_\Sigma(h(u(q))) \bigr)$, cf.\@
  Lemma~\ref{lem:treehom}(2).  Now, roughly speaking, the intuition for the
  new nonterminals is the following.  Consider a derivation tree~$d
  \in L(G_\der, A)$, and let $q$~be the shortest position of~$d$ 
  of the form $1^m$ for some $m\in\nat_0$
  such that $\rho = d(q)$~is doubly lexicalized.
  Thus, $q$ is a potential source for a target that has $d$ 
  as its second direct subtree (in some other derivation tree).
  Let $\lex(\rho) = \delta$, and let $p \in \pos_\Sigma(\rhs(\rho))$
  be the unique $\delta$"~labeled position of the right-hand side of
  rule~$\rho$.  Moreover, suppose that the corresponding $\delta$"~labeled 
  position~$\tau_d(q, p)$ of~$\val(d)$ belongs to the $j$"~th tree~$t$ of the
  forest~$\val(d)$ with $1 \leq j \leq \abs{A}$; i.e., $\tau_d(q, p) =
  \#^{j-1}p'$ with $p' \in \pos(t)$.  Let the nonterminal~$C$ be the
  $j$"~th element of the big nonterminal~$A$.  Thus, $C$~(as part
  of~$A$) generates (in~$G$) the terminal tree~$t$.  Then $\langle C,
  \delta, 0, Z_0\rangle$ generates (in~$G'$) the $p'$"~context of~$t$
  (with $\SBox$ at position~$p'$), 
  and $\langle C, \delta, i, Z_i\rangle$ generates the subtree of~$t$ at~$p'i$ 
  for every~$i \in [\rk(\delta)]$.  The sets $Z_0$~and~$Z_i$ consist of
  the variables that occur in that context and that subtree, so $Z_0 =
  \var(t|^{p'})$~and~$Z_i = \var(t|_{p'i})$.  To be more precise,
  $\langle C, \delta, 0, Z_0\rangle$~generates~$\renh(t|^{p'})$ and
  $\langle C, \delta, i, Z_i\rangle$~generates~$\ren(t|_{p'i})$. 

  We now continue the formal proof.  For a nonterminal~$C \in N$, we
  say that the triple~$(C, \delta, Z)$ is a \emph{skeleton} of~$C$
  if $\delta \in \Deltadl$~and~$Z = (Z_0,\seq Z1m)$, where $m =
  \rk(\delta)$ and $\{Z_0,\seq Z1m\}$ is a partition of~$X_{\rk(C)}$.\footnote{Recall from
the beginning of Section~\protect{\ref{sec:prelim}} that we allow the empty set to be an element of a partition.
Thus, we allow $Z_i=\emptyset$.}  
  For such a skeleton, we will denote by~$\tree(C, \delta, Z)$ the tree 
\[\langle C,\delta,0,Z_0\rangle \sequ(Z_0) 
\;\delta(\langle C,\delta,1,Z_1\rangle \sequ(Z_1),\dots,\langle C,\delta,m,Z_m\rangle \sequ(Z_m))\]
  of which we observe (for clearness sake) that it looks as follows:
  \begin{center}
    \begin{tikzpicture}[level/.style={sibling distance=35mm/#1,
        level distance=10mm}]
      \node {$\langle C, \delta, 0, Z_0\rangle$} child {node {$z^0_1$}
      } child {node {$\dots$} } child {node {$z^0_{\abs{Z_0}}$} }
      child {node {$\delta$} child {node
          {$\langle C, \delta, 1, Z_1\rangle$} child {node {$z^1_1$} }
          child {node {$\dots$} } child {node {$z^1_{\abs{Z_1}}$} } }
        child {node {$\dots$} } child {node
          {$\langle C, \delta, m, Z_m \rangle$} child {node {$z^m_1$}
          } child {node {$\dots$} } child {node {$z^m_{\abs{Z_m}}$} }
        } };
    \end{tikzpicture}
  \end{center}
  where $\sequ(Z_i) = \word{z^i}1{\abs{Z_i}}$ for every $0 \leq i
  \leq m$.  Note that $\tree(C, \delta, Z) \in P_{N_ \new \cup
    \{\delta\}}(X_{\rk(C)})$.  Moreover, we will denote 
  $\yield_{N_\new}(\tree(C, \delta, Z))$ by~$\sequ(C,\delta, Z)$; i.e., 
  $\sequ(C,\delta, Z)$ is the sequence 
  \[\langle C, \delta, 0, Z_0 \rangle \langle
  C, \delta, 1, Z_1 \rangle \dotsm \langle C, \delta, m, Z_m \rangle \enspace.\] 
  Obviously, the skeleton $(C, \delta, Z)$ can be reconstructed from $\sequ(C,\delta, Z)$,
  and hence from $\tree(C, \delta, Z)$. 

To motivate $\tree(C, \delta, Z)$ and $\sequ(C,\delta, Z)$, we observe that 
for every pattern $t \in P_{N'\cup\Sigma}(X_{\rk(C)})$ 
and every $\delta$-labeled position $p'$ of $t$ (i.e., $p' \in \pos_\delta(t)$), 
the pattern $t$ can be decomposed as
  \[t = \tree(C, \delta, Z)[\sequ(C, \delta, Z) \gets (t_0,\seq t1m)] \enspace, \] 
where $t_0 = \renh(t|^{p'})$ is the renumbered $p'$"~context and $Z_0 =
  \var(t|^{p'})$ is the set of its variables before renumbering, and moreover, for
  every~$i \in [m]$, $t_i = \ren(t|_{p'i})$ is the renumbered subtree at $p'i$
  and~$Z_i = \var(t|_{p'i})$ is the set of its variables before renumbering.
Intuitively, $\tree(C, \delta, Z)$ can be viewed as the ``skeleton'' of this decomposition,
which was our reason to call $(C, \delta, Z)$ a skeleton of $C$.  

  We let $\N_\new$~be the new set of big nonterminals of the form
  $\beta \cdot \sequ(C, \delta, Z) \cdot \gamma$, where $\beta C
  \gamma \in \N$ with $C \in N$ and $\beta, \gamma \in N^*$, and
  $(C, \delta, Z)$ is a skeleton of $C$.  We now construct the new MCFTG
  $G' = (N', \N', \Sigma, S, R')$ 
  with $N'=N\cup N_\new$ and $\N'=\N\cup \N_\new$.  To define the
  set~$R'$ of rules of~$G'$, we first define an auxiliary MCFTG 
  $G_+ = (N', \N', \Sigma, S, R\cup R_+)$ where $R_+$ 
  is a set of new rules
  that, intuitively, realize the transport of a lexical symbol from a
  source to a target (but not yet its arrival at the target). 

  For every doubly lexicalized rule~$\rho = \word A1n \to ((\seq u1n),
  \LL)$ of~$G$ with $\LL = \{\seq B1k\}$ (in that order), 
  $A_i \in N$, and $u_i \in P_{N \cup \Sigma}(X_{\rk(A_i)})$, 
  we define a skeleton~$\skel(\rho)$ and a new rule~$\overline{\rho}$ in $R_+$
  as follows.  Let $\delta = \lex(\rho)$ and let $\#^{j-1}p$~be the unique
  $\delta$"~labeled position of~$(\seq u1n)$, so~$j\in[n]$ and~$\pos_\delta(u_j)
  = \{p\}$.  Moreover, let $u = u_j$, 
  $\rk(\delta) = m$, and $Z = (Z_0,\seq Z1m)$ with $Z_0 =
  \var(u|^p)$ and $Z_i = \var(u|_{pi})$ for every~$i \in [m]$.  Then
  we define~$\skel(\rho) = (A_j, \delta, Z)$.  Note that $u \in P_{N \cup
    \Sigma}(X_{\rk(A_j)})$~and hence $(A_j, \delta, Z)$~is a skeleton
  of~$A_j$.  Additionally, we define the rule
  \begin{align*}
    \overline{\rho} = {} 
    & \phantom{{}\to{}} \word A1{j-1} \cdot \sequ(A_j, \delta, Z) \cdot
      \word A{j+1}n  \\
    &{} \to ((\seq u1{j-1}, v_0, \seq v1m, \seq u{j+1}n),\, \LL
      ) \enspace,
  \end{align*} 
  where $v_0 = \renh(u|^p)$~and~$v_i = \ren(u|_{pi})$ for every~$i \in
  [m]$ (and~$\LL = \{\seq B1k\}$, in the same order).  Clearly,
  $\overline{\rho}$~is lexicalized because $\abs{\pos_\Delta((\seq
    u1n))} \geq 2$ and $\abs{\pos_\Delta((\seq v0m))} =
  \abs{\pos_\Delta(u)} - 1$.

  For every non-lexicalized or singly lexicalized rule $\rho = \word
  A1n \to ((\seq u1n), \LL)$ of~$G$ with $\LL = \{\seq B1k\}$
  and $k \geq 1$, and for every skeleton~$(C, \delta, W)$ 
  such that $C \in \alp(B_1)$, we define a skeleton~$\skel(\rho, (C,
  \delta, W))$ and a new rule~$\rho_{C, \delta, W}$ in $R_+$ as follows.  Let
  $j \in [n]$~be the unique integer such that~$C \in \alp_N(u_j)$, and
  let $u' = u_j[C \gets \tree(C, \delta, W)]$.  Moreover, let
  $\rk(\delta) = m$, $\pos_\delta(u') = \{p\}$,\footnote{Note that by
    our second assumption on~$G$, the symbol~$\delta$ does not occur
    in~$u_j$ because~$\delta \in \Deltadl$ and $\rho$~is
    non-lexicalized or singly lexicalized.} 
  and $Z = (Z_0,\seq Z1m)$ with
  $Z_0 = \var(u'|^p)$ and $Z_i = \var(u'|_{pi})$ for every~$i \in
  [m]$.  Then we define $\skel(\rho, (C, \delta, W)) = (A_j, \delta,
  Z)$.  Let $B_1 = \beta C \gamma$ for some~$\beta, \gamma \in N^*$,
  which are unique because $B_1$~is repetition-free.  Then we define
  the rule
  \begin{align*}
    \rho_{C, \delta, W} = {}
    &\phantom{{}\to{}} \word A1{j-1} \cdot \sequ(A_j, \delta, Z) \cdot
      \word A{j+1}n  \\
    &{} \to ((\seq u1{j-1}, v'_0, \seq{v'}1m, \seq u{j+1}n),\, \LL')
      \enspace,
  \end{align*} 
  where $v'_0 = \renh(u'|^p)$ and $v'_i = \ren(u'|_{pi})$ for every~$i
  \in [m]$.  Additionally, $\LL' = \{B'_1, \seq B2k\}$ with $B'_1 =
  \beta \cdot \sequ(C, \delta, W) \cdot \gamma$.  Note that
  $\rho_{C, \delta, W}$~is non-lexicalized or singly lexicalized,
  respectively, because $\abs{\pos_\Delta((v'_0, \seq{v'}1m))} =
  \abs{\pos_\Delta(u')} - 1 = \abs{\pos_\Delta(u_j)}$.

  These are all the rules of $R_+$. Thus, $G_+$~is the grammar 
  obtained from~$G$ by adding all
  the above new rules $\overline{\rho}$~and~$\rho_{C, \delta, W}$
  to~$R$.  It is straightforward to check that from the
  rule~$\overline{\rho}$ the original rule~$\rho$ can be
  reconstructed, and similarly, from~$\rho_{C, \delta, W}$ we can
  reconstruct both $\rho$~and~$(C, \delta, W)$.  Note that all
  terminal and all monic rules of~$G_+$ are lexicalized.  

  We now define the set~$R'$ of rules of~$G'$.  First, $R'$~contains all
  lexicalized rules of~$G_+$.  Second, we define rules that 
  realize the arrival of a lexical symbol $\delta'$ at a target. 
  Let $\rho = A \to (u,
  \LL)$ be a non-lexicalized rule of~$G_+$ with $\LL =
  \{\seq B1k\}$, where~$k \geq 2$, $B_1 \in \N \cup \N_\new$, and~$B_i \in
  \N$ for~$2 \leq i \leq k$.  For every skeleton~$(C', \delta', Z)$
  such that $C' \in \alp(B_2)$, we define the new rule~$\langle \rho 
  \rangle_{C', \delta', Z}$ in~$R'$ as follows.  Let $B_2
  = \beta C' \gamma$ with~$C' \in N$ and~$\beta, \gamma \in N^*$, which
  are again unique because $B_2$~is repetition-free.  Then $\langle
  \rho \rangle_{C', \delta', Z} = A \to (u', \LL')$, where $u' = u[C'
  \gets \tree(C', \delta', Z)]$ and $\LL' = \{B_1, B'_2, \seq B3k\}$
  with $B'_2 = \beta \cdot \sequ(C', \delta', Z) \cdot \gamma$.
  Clearly, $\langle \rho \rangle_{C', \delta', Z}$ is lexicalized
  because $\delta'$~occurs in its right-hand side.  It is easy to check
  that from the rule~$\langle \rho \rangle_{C', \delta', Z}$ we can
  reconstruct both $\rho$~and~$(C', \delta', Z)$.  Thus, $R'$~consists
  of:
  \begin{compactitem}
  \item all lexicalized rules~$\rho$ of~$G$, 
  \item all rules~$\overline{\rho}$, where $\rho$~is a doubly
    lexicalized rule of~$G$,  
  \item all rules~$\rho_{C, \delta, W}$, where $\rho$~is a singly
    lexicalized rule of~$G$, and
  \item all rules $\langle \rho \rangle_{C', \delta', Z}$~and~$\langle
    \rho_{C, \delta, W} \rangle_{C', \delta', Z}$, where $\rho$~is a
    non-lexicalized rule of~$G$.
  \end{compactitem}
  This ends the construction of~$G'$.  It remains to show that
  $G$~and~$G'$ are \LDTR"~equivalent.  We first show how to transform
  the derivation trees of~$G$ into those of~$G'$.  We start by
  defining a skeleton for every derivation tree of~$G$.

  For every derivation tree~$d \in L(G_\der, A)$ we define a
  skeleton~$\skel(d) = (C, \delta, Z)$ with~$C \in \alp(A)$ (and
  $\delta = \lex(\rho)$ for the label~$\rho$ of the shortest position
  of~$d$ of the form $1^m$ such that $\rho$~is doubly
  lexicalized).  The definition is by induction on the structure of~$d
  = \rho(\seq d1k)$.  If $\rho$~is a doubly lexicalized rule (in
  particular if~$k = 0$), then we define~$\skel(d) = \skel(\rho)$ as
  defined above.  Otherwise $\rho$~is not doubly lexicalized (and
  so~$k \geq 1$); let $\rho = A \to (u, \LL)$ with~$\LL = \{\seq
  B1k\}$.  By the induction hypothesis we have~$\skel(d_1) = (C,
  \delta, W)$, where~$C \in \alp(B_1)$.  Then we define~$\skel(d) =
  \skel(\rho, (C, \delta, W))$ as defined above.  Clearly, for every
  skeleton~$(C, \delta, Z)$, the set of derivation trees
  \[ L_{C, \delta, Z} = \{d \in \bigcup_{A \in \N} L(G_\der, A) \mid
  \skel(d) = (C, \delta, Z)\} \]
  is a regular tree language, which can be recognized by a
  deterministic bottom-up finite tree automaton using all skeletons
  as states.
 
  For every derivation tree~$d \in L(G_\der, A)$ we define two
  derivation trees $\dtr_1(d)$~and~$\dtr_2(d)$ of~$G'$ with $\dtr_1(d)
  \in L(G'_\der, A)$~and~$\dtr_2(d) \in L(G'_\der, \beta \cdot
  \sequ(C, \delta, Z) \cdot \gamma)$, where $\skel(d) = (C, \delta,
  Z)$~and~$A = \beta C \gamma$ with~$\beta, \gamma\in N^*$.  These two
  derivation trees are relabelings of~$d$.  They are defined by
  induction on the structure of~$d = \rho(\seq d1k)$.
  \begin{compactitem}
  \item If $\rho$~is a doubly lexicalized rule (in particular, if~$k =
    0$), then we define
    \begin{align*}
      \dtr_1(d) 
      &= \rho(\dtr_1(d_1), \dotsc, \dtr_1(d_k)) \\
      \dtr_2(d) 
      &= \overline{\rho}(\dtr_1(d_1), \dotsc, \dtr_1(d_k)) \enspace.
    \end{align*}
  \item Now let $\rho = A \to (u, \LL)$~be a rule with~$\LL = \{\seq
    B1k\}$ that is not doubly lexicalized (and hence~$k \geq 1$).
    Moreover, let~$\skel(d_1) = (C, \delta, W)$, where~$C \in
    \alp(B_1)$.
    \begin{compactitem}
    \item If $\rho$~is singly lexicalized, then we define
      \begin{align*}
        \dtr_1(d) 
        &= \rho(\dtr_1(d_1), \dotsc, \dtr_1(d_k)) \\
        \dtr_2(d) 
        &= \rho_{C, \delta, W}(\dtr_2(d_1), \dtr_1(d_2), \dotsc,
          \dtr_1(d_k)) \enspace.
      \end{align*}
    \item If $\rho$~is non-lexicalized, and thus~$k \geq 2$, then we
      let~$\skel(d_2) = (C', \delta', Z)$, where $C' \in \alp(B_2)$,
      and we define
      \begin{align*}
        \dtr_1(d) 
        &= \langle \rho \rangle_{C', \delta', Z}(\dtr_1(d_1),
          \dtr_2(d_2), \dtr_1(d_3), \dotsc, \dtr_1(d_k)) \\
        \dtr_2(d) 
        &= \langle \rho_{C, \delta, W} \rangle_{C', \delta',
          Z}(\dtr_2(d_1), \dtr_2(d_2), \dtr_1(d_3), \dotsc,
          \dtr_1(d_k)) \enspace.
      \end{align*}
    \end{compactitem}
  \end{compactitem}

  Clearly, there is an \LDTR"~transducer~$M$ that transforms~$d \in
  L(G_\der)$ into~$\dtr_1(d) \in L(G'_\der)$.  It has states
  $q_1$~and~$q_2$ with initial state~$q_1$, and it uses the regular
  tree languages~$L_{C, \delta, Z}$ as look-ahead.  It has the
  following rules, corresponding directly to the above definitions,
  where $\langle q_1, \word yik\rangle$ abbreviates $\langle q_1, y_i
  \rangle, \dotsc, \langle q_1, y_k \rangle$ for~$i \in [k]$: 
  \begin{compactitem}
  \item for every doubly lexicalized rule~$\rho$
    \begin{align*}
      \langle q_1,\, \rho(\seq y1k) \rangle 
      &\to \rho(\langle q_1, \word y1k \rangle) \\
      \langle q_2,\, \rho(\seq y1k) \rangle 
      &\to \overline{\rho}(\langle q_1, \word y1k \rangle)
    \end{align*}
  \item for every singly lexicalized rule~$\rho$ and every
    skeleton~$(C, \delta, W)$
    \begin{align*}
      \langle q_1,\, \rho(\seq y1k) \rangle 
      &\to \rho(\langle q_1, \word y1k \rangle) \\
      \langle q_2,\, \rho(y_1 \colon L_{C, \delta, W}, \seq y2k)
        \rangle 
      &\to \rho_{C, \delta, W}(\langle q_2, y_1\rangle, \langle q_1,
        \word y2k \rangle)
    \end{align*}
  \item and for every non-lexicalized rule~$\rho$ and all skeletons
    $(C', \delta', Z)$~and~$(C, \delta, W)$
    \begin{align*}
      \langle q_1,\, \rho(y_1, y_2 \colon L_{C', \delta', Z}, \seq
      y3k) \rangle 
      &\to \langle \rho \rangle_{C', \delta', Z}(\langle q_1, y_1
        \rangle, \langle q_2, y_2 \rangle, \langle q_1, \word y3k
        \rangle) \\
      \langle q_2,\, \rho(y_1 \colon L_{C, \delta, W}, y_2 \colon
      L_{C', \delta', Z}, \seq y3k) \rangle 
      &\to \langle \rho_{C, \delta, W} \rangle_{C', \delta',
        Z}(\langle q_2, y_1 \rangle, \langle q_2, y_2 \rangle, \langle
        q_1, \word y3k\rangle) \enspace.
    \end{align*}
  \end{compactitem}
  We will prove below that $d$~and~$\dtr_1(d)$ have the same value.
  However, to express the relationship between
  $\val(d)$~and~$\val(\dtr_2(d))$, we need the following definition.
  Let $A \in \N$ be a big nonterminal and $(C, \delta, Z)$~be a
  skeleton such that~$A = \beta C \gamma$ for some~$\beta,
  \gamma \in N^*$.  Moreover, let $s$~and~$s'$ be forests
  in~$P_\Sigma(X)^{\scriptscriptstyle +}$ such that~$\rk(s) = \rk(A)$
  and $s = \zeta t \eta$ for some~$\zeta, \eta \in
  P_\Sigma(X)^*$ with $\abs{\zeta} = \abs{\beta}$~and~$t \in
  P_\Sigma(X_{\rk(C)})$.  We note that~$\beta$, $\gamma$, $\zeta$,
  $t$, and~$\eta$ are unique given~$A$, $C$, and~$s$.  We say that
  \emph{$s'$~decomposes~$s$ for $A$~and~$(C, \delta, Z)$} if there
  exists a position~$p' \in \pos_\delta(t)$ such that $s' = \zeta 
  \cdot (t_0,\seq t1m) \cdot \eta$, where~$m = \rk(\delta)$, $t_0 =
  \renh(t|^{p'})$, $Z_0 = \var(t|^{p'})$, and $t_i =
  \ren(t|_{p'i})$~and~$Z_i = \var(t|_{p'i})$ for every~$i \in [m]$.

  We now prove by induction on the structure of~$d \in L(G_\der, A)$ that  
  \begin{compactenum}[\indent (i)]
  \item $\val(\dtr_1(d)) = \val(d)$ and
  \item $\val(\dtr_2(d))$~decomposes~$\val(d)$ for $A$~and~$\skel(d)$.
  \end{compactenum}
  Let~$d = \rho(\seq d1k)$ and suppose that (i)~and~(ii) hold for
  $\seq d1k$.
  \begin{compactitem}
  \item We first consider the case where $\rho$~is \emph{doubly
      lexicalized}.  Since (i)~is obvious from the definition
    of~`$\val$' and by the induction hypotheses, it remains to
    prove~(ii).  Let $\rho$~be as in the definition
    of~$\overline{\rho}$, and let us adopt the terminology in that
    definition.  Abbreviating~$[B_i \gets \val(d_i) \mid 1 \leq i \leq
    k]$ by~$[f]$, we obtain that
    \begin{align*}
      \val(d) 
      &= (\seq u1{j-1}, u, \seq u{j+1}n)[f] = \zeta t \eta
      \\
      \val(\dtr_2(d)) 
      &= (\seq u1{j-1}, v_0,\seq v1m, \seq u{j+1}n)[f] = \zeta \cdot
        (t_0,\seq t1m) \cdot \eta \enspace,
    \end{align*}
    where the first equality in the second line uses the induction
    hypotheses and where we define $\zeta = (\seq u1{j-1})[f]$, $t = u[f]$, $\eta =
    (\seq u{j+1}n)[f]$, and~$t_i = v_i[f]$ for every~$0 \leq i \leq
    m$.  We know that~$v_0 = \renh(u|^p)$, $Z_0 = \var(u|^p)$, and
    $v_i = \ren(u|_{pi})$~and~$Z_i = \var(u|_{pi})$ for every~$i \in
    [m]$.  It remains to show that a position~$p' \in \pos_\delta(t)$
    exists with $t_0 = \renh(t|^{p'})$, $Z_0 = \var(t|^{p'})$, and
    $t_i = \ren(t|_{p'i})$~and~$Z_i = \var(t|_{p'i})$ for every~$i \in
    [m]$.  We select the unique position~$p' \in
    \pos_{\SBox}((u|^p)[f])$.  Then, using the easy facts that are
    stated before this lemma (for the tree homomorphism corresponding
    to~$[f]$), we obtain that~$(u|^p)[f] = u[f]|^{p'} = t|^{p'}$ with~$p' \in
    \pos_\delta(t)$, and~$(u|_{pi})[f] = u[f]|_{p'i} = t|_{p'i}$, and so 
    \begin{align*}
      t_0 
      &= v_0[f] = \renh(u|^p)[f] = \renh(u|^p[f]) = \renh(t|^{p'}) \\
      Z_0
      &= \var(u|^p) = \var(u|^p[f]) = \var(t|^{p'}) \enspace,  
    \end{align*}
    and similarly for $t_i = v_i[f]$~and~$Z_i$ for every~$i \in
    [m]$. 
  \item Next we consider the case where $\rho$~is
    \emph{non-lexicalized}, and we prove~(i).  Let $\rho$~be as in the
    definition of~$\langle \rho \rangle_{C', \delta', Z}$
    with~$\skel(d_2) = (C', \delta', Z)$, where $C' \in \alp(B_2)$, and
    let us adopt the terminology found there.  By definition, 
    $\dtr_1(d) = \langle \rho \rangle_{C', \delta', Z}(\dtr_1(d_1), 
    \dtr_2(d_2), \dtr_1(d_3), \dotsc, \dtr_1(d_k))$.  Hence, 
    \[ \val(\dtr_1(d)) = u[C' \gets \tree(C', \delta', Z)]
    [B_1B'_2\word B3k \gets \val(d_1) \val(\dtr_2(d_2)) \val(d_3)
    \dotsm \val(d_k)] \enspace. \]
    We know that $B_2 = \beta C' \gamma$~and~$B'_2 = \beta \cdot
    \sequ(C', \delta', Z) \cdot \gamma$.  Let $\val(d_2) = \zeta 
    t \eta$ with~$\abs{\zeta} = \abs{\beta}$.  By~(ii)
    for~$d_2$, there exists~$p' \in \pos_\delta(t)$ such that
    $\val(\dtr_2(d_2)) = \zeta \cdot (t_0,\seq t1m) \cdot \eta$, where $m
    = \rk(\delta)$, $t_0 = \renh(t|^{p'})$, $Z_0 = \var(t|^{p'})$, and
    $t_i = \ren(t|_{p'i})$~and~$Z_i = \var(t|_{p'i})$ for every~$i \in
    [m]$.  By Lemmas
    \ref{lem:comm-assoc}(2)~and~\ref{lem:comm-assoc}(4), we now obtain
    that
    \[ \val(\dtr_1(d)) = u \bigl[C' \gets \tree(C', \delta',
    Z)[\sequ(C', \delta', Z) \gets (t_0,\seq t1m)] \bigr]\,[g]
    \enspace, \]
    where~$[g] = [B_1 \cdot \beta \gamma \cdot \word B3k \gets
    \val(d_1) \cdot \zeta \eta \cdot \val(d_3) \dotsm
    \val(d_k)]$.  As observed earlier (in the paragraph after the definition of 
    `$\tree$'~and~`$\sequ$'),  
    \[ \tree(C', \delta', Z)[\sequ(C', \delta', Z) \gets (t_0,\seq t1m)] =
    t \] 
    and so, again by Lemmas
    \ref{lem:comm-assoc}(2)~and~\ref{lem:comm-assoc}(4), 
    \[ \val(\dtr_1(d)) = u[C' \gets t]\,[g] = u[B_1 \cdot \beta C'
    \gamma \cdot \word B3k \gets \val(d_1) \cdot \zeta t 
    \eta \cdot \val(d_3) \dotsm \val(d_k)] \enspace, \] 
    which equals~$u[B_i \gets \val(d_i) \mid 1 \leq i \leq k] =
    \val(d)$.
  \item Next we consider the case where the rule~$\rho$ is
    \emph{singly lexicalized}.  Again, (i)~is obvious, so it remains
    to prove~(ii).  Let $\rho$~be as in the definition of~$\rho_{C,
      \delta, W}$, and let us adopt the terminology there.  Note that
    $\rho = A \to ((\seq u1n), \LL)$ with $\LL = \{\seq
    B1k\}$~and~$B_1 = \beta C \gamma$.  Consider the auxiliary new
    rule~$\rho' = A \to ((\seq u1{j-1}, u', \seq u{j+1}n), \LL')$ with
    $\LL' = \{B'_1, \seq B2k\}$ and $B'_1 = \beta \cdot \sequ(C,
    \delta, W) \cdot \gamma$.  This rule~$\rho'$ is analogous to the
    rule~$\langle \rho \rangle_{C, \delta, W}$, except that $C$~occurs
    in~$B_1$ instead of~$B_2$ (and $\rho$~is singly lexicalized
    instead of non-lexicalized).  However, we can prove~$\val(d') =
    \val(d)$ exactly as in the previous case, where $d' =
    \rho'(\dtr_2(d_1), \dtr_1(d_2), \dotsc, \dtr_1(d_k))$. 
    Also, the rule~$\rho_{C, \delta, W}$ is analogous to the
    rule~$\overline{\rho'}$, if we define $\lex(\rho') = \delta$.  In
    the first (doubly lexicalized) case we have shown that the value
    of~$\overline{\rho}(\dtr_1(d_1), \dotsc, \dtr_1(d_k))$ decomposes
    the value of~$d = \rho(\seq d1k)$ for $A$~and~$\skel(d)$ under the
    assumption that $\dtr_1(d_i)$~and~$d_i$ have the same value.  In
    exactly the same way we can prove here that the value of~$\rho_{C,
      \delta, W}(\dtr_2(d_1), \dtr_1(d_2), \dotsc, \dtr_1(d_k))$
    decomposes the value of~$d' = \rho'(\dtr_2(d_1), \dtr_1(d_2),
    \dotsc, \dtr_1(d_k))$ for $A$~and~$\skel(d')$.  In other words,
    $\val(\dtr_2(d))$~decomposes~$\val(d)$ for $A$~and~$\skel(d')$.
    Since $\skel(d') = \skel(\rho') = \skel(\rho, (C, \delta, W))$, which
    in turn equals~$\skel(d)$, this proves~(ii).
  \item It remains to prove~(ii) in the case where the rule~$\rho$ is
    \emph{non-lexicalized}.  We now apply the argument that we used to
    prove~(i) to the rule~$\rho_{C, \delta, W}$ instead of~$\rho$.
    For~$\rho_{C, \delta, W}$ we obtain from the previous case (even
    though $\rho$~is a non-lexicalized rather than a singly
    lexicalized rule) that the value of
    \[ \rho_{C, \delta, W}(\dtr_2(d_1), \dtr_1(d_2), \dotsc,
    \dtr_1(d_k)) \] 
    decomposes~$\val(d)$ for $A$~and~$\skel(d)$.  From the argument
    for~(i) we obtain that the value of
    \[ \langle\rho_{C, \delta, W} \rangle_{C', \delta',
      Z}(\dtr_2(d_1), \dtr_2(d_2), \dtr_1(d_3), \dotsc,
    \dtr_1(d_k)) \] equals the value of~$\rho_{C, \delta,
      W}(\dtr_2(d_1), \dtr_1(d_2), \dotsc, \dtr_1(d_k))$,
    hence $\val(\dtr_2(d))$~decomposes~$\val(d)$ for
    $A$~and~$\skel(d)$.
\end{compactitem}
This concludes the proof that~$L(G) \subseteq L(G')$.  To prove
the converse~$L(G') \subseteq L(G)$, it is straightforward to check
that, vice versa, (i)~for every derivation tree~$d' \in L(G'_\der, A)$
there is a derivation tree~$d \in L(G_\der, A)$ with~$\dtr_1(d) = d'$,
and (ii)~for every derivation tree~$d' \in L(G'_\der, \beta \cdot
\sequ(C, \delta, Z) \cdot \gamma)$ there is a derivation tree~$d \in
L(G_\der, \beta C \gamma)$ with $\dtr_2(d) = d'$~and~$\skel(d) = (C,
\delta, Z)$.  To be precise, in both cases $d$~can be obtained
from~$d'$ by changing every label~$\overline{\rho}$, $\rho_{C, \delta,
  W}$, $\langle \rho \rangle_{C', \delta', Z}$, and $\langle \rho_{C,
  \delta, W} \rangle_{C', \delta', Z}$ into just~$\rho$.  Thus, it is
obvious that $d$~can be computed from~$d'$ by an LDT"~transducer.
Hence $G$~and~$G'$ are \LDTR"~equivalent.

Finally, we present a procedure that directly constructs the reduced
version of~$G'$ provided that $G$~is reduced.  For a rule~$\rho = A
\to (u, \LL)$ with~$\LL = \{\seq B1k\}$, we define~$\bign_i(\rho) = 
B_i$ for $i \in [k]$ and $\bign_0(\rho) = A$.
\begin{compactitem}
\item Construct the set~$\text{Target} \subseteq \N$ of
  all~$\bign_2(\rho)$, where $\rho$~is a non-lexicalized rule.
\item Construct the directed graph~$g$ with set~$\N$ of nodes and with 
  edges~$\bign_0(\rho) \to \bign_1(\rho)$ for all non-lexicalized and 
  singly lexicalized rules~$\rho$, and let $g_\red$~be obtained
  from~$g$ by removing all nodes (and all incident edges) that are not
  reachable from a node in~$\text{Target}$.
\item Let $\Dec$~be a variable set of skeletons, which is
  initialized to~$\emptyset$.
\item Compute all rules~$\overline{\rho}$ such that $\bign_0(\rho)$~is
  a node of~$g_\red$, and add~$\skel(\rho)$ to~$\Dec$. 
\item Repeat the following subitem until $\Dec$~does not change
  any more:
  \begin{compactitem}
  \item compute all rules~$\rho_{C, \delta, W}$ such that $(C, \delta,
    W)$~is in~$\Dec$ and the edge~$\bign_0(\rho) \to \bign_1(\rho)$ is
    in~$g_\red$, and add~$\skel(\rho, (C, \delta, W))$ to~$\Dec$.
  \end{compactitem}
\item Finally, compute all rules~$\langle \rho \rangle_{C', \delta', Z}$
  such that $(C', \delta', Z)$ is in~$\Dec$, for the rules~$\rho$
  obtained so far.
\end{compactitem}
We leave the correctness of this procedure to the reader.
\end{proof}

\begin{figure}[t]
  \centering
  \includegraphics{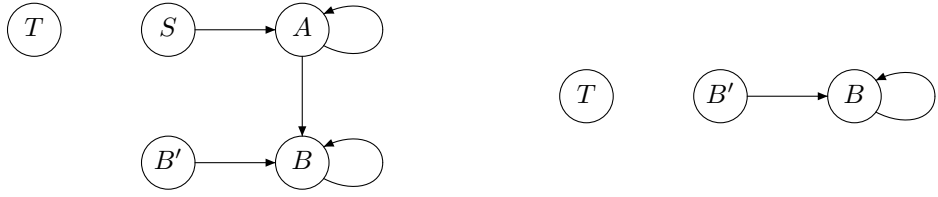}
  \caption{The graphs $g$~[left] and $g_\red$~[right] constructed in
    Example~\protect{\ref{exa:lex}}.}
  \label{fig:reach}
\end{figure}

\newpage
\begin{example}
  \label{exa:lex}
  \upshape
  Let us lexicalize the new grammar~$G$ of Example~\ref{exa:newmain},
  according to the construction in the proof of Lemma~\ref{lem:main}.
  We immediately construct the reduced version of~$G'$ with the
  procedure presented at the end of the proof of that lemma.  Note
  that $G$~satisfies the assumptions mentioned in the beginning of the
  proof for~$\Deltadl = \{\beta, \tau\}$.  For the doubly lexicalized
  rules~$\rho$ of~$G$; i.e., for the rules
  \begin{alignat*}{5}
    \rho_5 \colon && B(x_1) &\to \sigma(x_1, \alpha
    T_1(\underline{\beta} \sigma(\alpha T_2, \gamma T_3))) \quad
    & \rho_6 \colon && B(x_1) &\to \sigma(x_1,
    \sigma(\underline{\tau}, \nu)) \\
    \rho_7 \colon && T &\to (\alpha T_1(\underline{\beta}
      x_1),\, \alpha T_2,\,\gamma T_3) \quad
    & \rho_8 \colon && T &\to (x_1, \underline{\tau}, \nu) 
  \end{alignat*}
  (and the rules $\rho_5'$ and $\rho_6'$) 
  we define~$\lex(\rho) = \beta$ if $\beta$~occurs in~$\rho$, and
  $\lex(\rho) = \tau$ otherwise.  We marked the lexical element in the
  rules by underlining it.  We obtain that $\text{Target} = \{T,
  B, B'\}$, where~$T = (T_1, T_2, T_3)$.  The graphs $g$~and~$g_\red$
  are displayed in Figure~\ref{fig:reach}.  Since all doubly
  lexicalized rules~$\rho$ have their left-hand side in~$g_\red$, we
  construct the new rule~$\overline{\rho}$ for each of them.  We will
  use the following abbreviations for the new nonterminals
  \begin{align*}
    B_{\beta, 0} 
    &= \langle B, \beta, 0, \{x_1\} \rangle 
    & B_{\beta, 1} 
    &= \langle B, \beta, 1, \emptyset \rangle
    & B_\tau 
    &= \langle B, \tau, 0, \{x_1\} \rangle
      \tag*{of rank~$2$, $0$, and $2$, resp.} \\
    T_{1, \beta, 0} 
    &= \langle T_1, \beta, 0, \emptyset \rangle 
    & T_{1, \beta, 1} 
    &= \langle T_1, \beta, 1, \{x_1\} \rangle
    & T_{2, \tau} 
    &= \langle T_2, \tau, 0, \emptyset \rangle
      \tag*{all of rank~$1$.}
  \end{align*}
  Then we obtain the new rules
  $\overline{\rho}_5$~and~$\overline{\rho}_6$ on the first line and
  the rules $\overline{\rho}_7$~and~$\overline{\rho}_8$ on the second
  line:
  \begin{alignat*}{5}
    (B_{\beta,0}(x_1, x_2), B_{\beta,1})
    &\to (\sigma(x_1, \alpha T_1(x_2)),\, \sigma(\alpha T_2, \gamma
    T_3)) \quad
    & B_\tau(x_1, x_2)
    &\to \sigma(x_1, \sigma(x_2, \nu)) \\ 
    (T_{1, \beta, 0}(x_1), T_{1, \beta, 1}(x_1), T_2, T_3)
    &\to (\alpha T_1(x_1),\, x_1,\, \alpha T_2,\, \gamma T_3) \quad
    & (T_1(x_1), T_{2, \tau}(x_1), T_3)
    &\to (x_1, x_1, \nu) \enspace.  
  \end{alignat*}
  The construction of the first new rule is illustrated in the top box of 
  Figure~\ref{fig:cut}.  The rules
  $\overline{\rho}'_5$~and~$\overline{\rho}'_6$ are obtained from
  $\overline{\rho}_5$~and~$\overline{\rho}_6$ by changing every~$B$
  into~$B'$.  Let~$Z_\beta = (\{x_1\}, \emptyset)$, $Z_\tau =
  (\{x_1\})$, $Z'_\beta = (\emptyset, \{x_1\})$, and $Z'_\tau =
  (\emptyset)$. Then 
  \[ \Dec = \{(B, \beta, Z_\beta),\, (B', \beta, Z_\beta),\, (B, \tau,
  Z_\tau),\, (B', \tau, Z_\tau),\, (T_1, \beta, Z'_\beta),\, (T_2,
  \tau, Z'_\tau) \} \enspace. \] 

  The only non-lexicalized or singly lexicalized rules~$\rho$
  with~$\bign_0(\rho) \to \bign_1(\rho)$ in~$g_\red$ are
  the rules $\rho_4 = B(x_1) \to \sigma(B(x_1), B'(A))$ and the
  corresponding rule~$\rho'_4$ with left-hand side~$B'(x_1)$.  Since
  its first link is the nonterminal~$B$, we construct the new
  rules~$\rho_{C, \delta, W}$ for $(C, \delta, W) \in \{(B, \beta,
  Z_\beta), (B, \tau, Z_\tau)\} \subseteq \Dec$ and $\rho \in \{\rho_4,
  \rho'_4\}$.  For the right-hand side~$u$ of~$\rho_4$ (and~$\rho'_4$)
  we get
  \begin{align*} 
    u[B \gets \tree(B, \beta, Z_\beta)]
    &= u[B \gets B_{\beta, 0}(x_1, \beta(B_{\beta, 1}))] =
      \sigma(B_{\beta, 0}(x_1, \beta(B_{\beta, 1})), B'(A)) \\  
    u[B \gets \tree(B, \tau, Z_\tau)]
    &= u[B \gets B_\tau(x_1, \tau)] = \sigma(B_\tau(x_1, \tau), B'(A))
      \enspace,
  \end{align*}
  and consequently we obtain the rules
  \begin{alignat*}{5}
    (\rho_4)_{B, \beta, Z_\beta} = {}
    && (B_{\beta,0} (x_1, x_2), B_{\beta,1}) 
    &\to (\sigma(B_{\beta, 0}(x_1, x_2), B'(A)),\, B_{\beta, 1}) \\
    (\rho_4)_{B, \tau, Z_\tau} = {}
    && B_\tau(x_1, x_2) 
    &\to \sigma(B_\tau(x_1, x_2), B'(A)) \enspace,
  \end{alignat*}
  and similar rules for~$\rho'_4$.  
  The construction of the first rule is illustrated in the bottom box of 
  Figure~\ref{fig:cut}. Clearly, the set~$\Dec$ does not
  change, so we do not have to repeat this step.  In the final step we
  lexicalize the non-lexicalized (old and new) rules 
  by substituting $\tree(C',\delta',Z)$ for a nonterminal~$C'$ of the second link of each rule.  
  From $\rho_2 = A \to T_1(\sigma(B(T_2), T_3))$ we obtain
  the following two new rules, by substituting 
  $\tree(T_1, \beta, Z'_\beta)=T_{1, \beta, 0}(\beta \,T_{1, \beta, 1} (x_1))$ and 
  $\tree(T_2, \tau, Z'_\tau)=T_{2, \tau}(\tau)$ for $T_1$ and $T_2$ respectively:
 \begin{alignat*}{5}
    \langle \rho_2 \rangle_{T_1, \beta, Z'_\beta} = {}
    && A
    &\to T_{1, \beta, 0}(\beta \,T_{1, \beta, 1} (\sigma(B(T_2), T_3))) \\
    \langle \rho_2 \rangle_{T_2, \tau, Z'_\tau} = {} 
    && A
    &\to T_1(\sigma(B(T_{2, \tau}(\tau)), T_3)) 
    \enspace.
  \end{alignat*}
Moreover, from $\rho_3 = A \to B(\gamma A)$ and 
$\rho_4 = B(x_1) \to \sigma(B(x_1), B'(A))$
we obtain the new rules
  \begin{alignat*}{5}
    \langle \rho_3 \rangle_{B, \beta, Z_\beta} = {}
    && A
    &\to B_{\beta, 0}(\gamma A, \beta B_{\beta, 1}) \\
    \langle \rho_3 \rangle_{B, \tau, Z_\tau} = {}
    && A
    &\to B_\tau(\gamma A, \tau) \\
    \langle \rho_4 \rangle_{B', \beta, Z_\beta} = {}
    && B(x_1)
    &\to \sigma(B(x_1), B'_{\beta, 0}(A, \beta B'_{\beta, 1})) \\
    \langle \rho_4 \rangle_{B', \tau, Z_\tau} = {}
    && B(x_1)
    &\to \sigma(B(x_1), B'_\tau(A, \tau)) 
  \end{alignat*} 
  and from the rules $(\rho_4)_{B, \beta, Z_\beta}$ 
  and $(\rho_4)_{B, \tau, Z_\tau}$ we obtain 
  \begin{alignat*}{5}
    \langle(\rho_4)_{B, \beta, Z_\beta} \rangle_{B', \beta, Z_\beta} =
    {} 
    && (B_{\beta, 0}(x_1, x_2), B_{\beta, 1})
    &\to (\sigma(B_{\beta, 0}(x_1, x_2), B'_{\beta, 0}(A,
    \beta B'_{\beta, 1})),\, B_{\beta, 1}) \\ 
    \langle(\rho_4)_{B, \beta, Z_\beta} \rangle_{B', \tau, Z_\tau} =
    {} 
    && (B_{\beta, 0}(x_1, x_2), B_{\beta, 1})
    &\to (\sigma(B_{\beta, 0}(x_1, x_2), B'_\tau(A, \tau)),\,
    B_{\beta, 1}) \\ 
    \langle(\rho_4)_{B, \tau, Z_\tau} \rangle_{B', \beta, Z_\beta} =
    {} 
    && B_\tau(x_1, x_2) 
    &\to \sigma(B_\tau(x_1, x_2), B'_{\beta, 0}(A, \beta B'_{\beta,
      1})) \\ 
    \langle(\rho_4)_{B, \tau, Z_\tau} \rangle_{B', \tau, Z_\tau} = {}
    && B_\tau(x_1, x_2)
    &\to \sigma(B_\tau(x_1, x_2), B'_\tau(A, \tau))
    \end{alignat*}
    and similar rules for~$\rho'_4$.  The (reduced) lexicalized
    grammar~$G'$ has the rules  
    \begin{compactitem}
    \item $\rho_1$, $\rho_5$, $\rho_6$, $\rho_7$, $\rho_8$,
      $\overline{\rho}_5$, $\overline{\rho}_6$, $\overline{\rho}_7$,
      $\overline{\rho}_8$,
    \item $\langle \rho_2 \rangle_{T_1, \beta, Z'_\beta}$, $\langle
      \rho_2 \rangle_{T_2, \tau, Z'_\tau}$, $\langle \rho_3
      \rangle_{B, \beta, Z_\beta}$, $\langle \rho_3 \rangle_{B, \tau,
        Z_\tau}$,  $\langle \rho_4 \rangle_{B', \beta, Z_\beta}$,
      $\langle \rho_4 \rangle_{B', \tau, Z_\tau}$, 
    \item $\langle(\rho_4)_{B, \beta, Z_\beta} \rangle_{B', \beta,
        Z_\beta}$, $\langle(\rho_4)_{B, \beta, Z_\beta} \rangle_{B',
        \tau, Z_\tau}$, $\langle(\rho_4)_{B, \tau, Z_\tau}
      \rangle_{B', \beta, Z_\beta}$, $\langle(\rho_4)_{B, \tau,
        Z_\tau} \rangle_{B', \tau, Z_\tau}$, 
\end{compactitem}
and the corresponding rules for~$\rho'_4$, $\rho'_5$, and~$\rho'_6$.
Note that in all these rules, as in the grammar~$G$ of
Example~\ref{exa:main},  there is only one possibility for the set of
links~$\LL$.  Note also that the left-hand sides of the primed rules
are aliases of the left-hand sides of the nonprimed ones.  We finally
observe that rules $\langle \rho_2 \rangle_{T_1, \beta,
  Z'_\beta}$~and~$\overline{\rho}_7$ can be replaced by one rule~$A
\to \alpha T_1(\beta \sigma(B(\alpha T_2), \gamma T_3))$, and
similarly $\langle \rho_2 \rangle_{T_2, \tau,
  Z'_\tau}$ and $\overline{\rho}_8$ can be replaced by~$A \to
\sigma(B(\tau), \nu)$.  In fact, these rules could have been obtained
directly in the beginning as observed in
Example~\ref{exa:d-chain-free}.  After this replacement, and
disregarding the primed rules for aliases,  the resulting lexicalized
grammar has 17~rules.

Consider in the derivation tree~$d$ of Figure~\ref{fig:transport} the
path from the root to the left-most leaf with label~$\rho_8$.  The
sequence of node labels along this path is~$(\rho_1, \rho_2, \rho_4,
\rho'_4, \rho_4, \rho_5, \rho_8)$.  In the derivation tree~$\dtr_1(d)$
of~$G'_2$ these nodes are relabeled to~$(\rho_1, \langle \rho_2
\rangle_{T_1, \beta, Z'_\beta}, \langle \rho_4 \rangle_{B', \beta,
  Z_\beta}, \langle(\rho'_4)_{B, \beta, Z_\beta} \rangle_{B', \tau,
  Z_\tau}, \langle(\rho_4)_{B, \beta, Z_\beta} \rangle_{B', \beta,
  Z_\beta}, \overline{\rho}_5, \rho_8)$. \fin
\end{example}

\begin{figure}[t]
  \centering
  \includegraphics{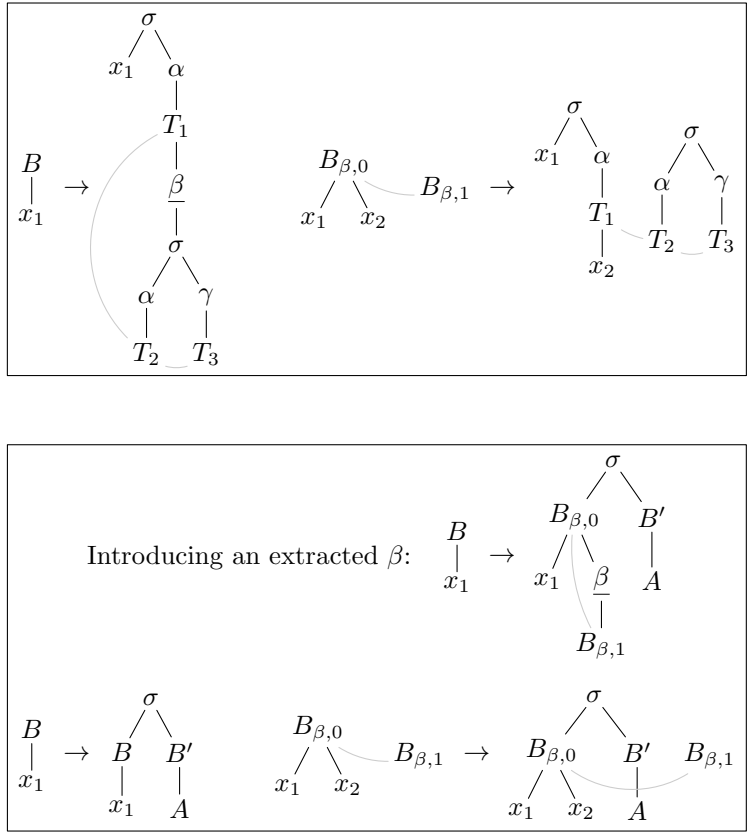}
  \caption{Illustration of the construction of the
    rule~$\overline{\rho}_5$ extracting the underlined~$\beta$~[top
    box] by splitting the right-hand side into the parts above and
    below the extracted symbol.  In the construction of the
    rule~$(\rho_4)_{B, \beta, Z_\beta}$ [bottom box] we first introduce
    the lexical element~$\beta$ (replacing~$B$) and the corresponding
    nonterminals [top rule] and then extract it again to obtain the
    final rule displayed at the bottom right.}
  \label{fig:cut}
\end{figure}

\newpage
We now state the main theorem of this paper.

\begin{theorem}
  \label{thm:main}
  It is decidable for the MCFTG~$G$ whether or not $G$~has finite
  $\Delta$"~ambiguity, and if so, there is a $\Delta$"~lexicalized
  MCFTG~$G'$ that is \LDTR"~equivalent to~$G$.  Moreover, $G'$ can be
  chosen such that $\wid(G') = \wid(G) + 1$~and~$\mu(G') = \mu(G)
  + \mr_\Delta$.\footnote{Recall that $\mr_\Delta$~is the maximal rank of the symbols
  in~$\Delta$.}
\end{theorem}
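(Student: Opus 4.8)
The plan is to assemble the theorem from the two main results of the preceding sections. First I would apply Theorem~\ref{thm:dec-growing}: it decides whether $G$ has finite $\Delta$"~ambiguity and, in the affirmative case, produces an \LDTR"~equivalent $\Delta$"~growing MCFTG $G_1$ with $\wid(G_1)=\wid(G)$ and $\mu(G_1)=\mu(G)$. Then I would feed $G_1$ into Lemma~\ref{lem:main}, which turns any $\Delta$"~growing MCFTG into an \LDTR"~equivalent $\Delta$"~lexicalized MCFTG $G'$. Since \LDTR\ is closed under composition (Proposition~\ref{pro:LDTRcomp}), \LDTR"~equivalence is transitive, so $G'$ is \LDTR"~equivalent to $G$; in particular $L(G')=L(G)$. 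This settles both the decidability assertion and the existence of the lexicalized grammar. The converse direction of the underlying characterization, namely that an MCFTG without finite $\Delta$"~ambiguity admits no $\Delta$"~lexicalized equivalent, is already recorded after Lemma~\ref{lem:grow-amb}, and confirms that finite $\Delta$"~ambiguity is exactly the lexicalizability criterion.

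It remains to bound $\wid(G')$ and $\mu(G')$, which I would obtain by inspecting the nonterminals and big nonterminals created in the proof of Lemma~\ref{lem:main} (applied to $G_1$). The new nonterminals are of the form $\langle C,\delta,i,Z\rangle$ with $C\in N$, $\delta\in\Deltadl\subseteq\Delta$, and $Z\subseteq X_{\rk(C)}$; their ranks are $\abs Z+1$ (for $i=0$) and $\abs Z$ (for $i\ge 1$). Since $\abs Z\le\rk(C)\le\wid(G_1)=\wid(G)$, every nonterminal of $G'$ has rank at most $\wid(G)+1$, giving $\wid(G')\le\wid(G)+1$. For the multiplicity, each new big nonterminal has the shape $\beta\cdot\sequ(C,\delta,Z)\cdot\gamma$, obtained from an old big nonterminal $\beta C\gamma\in\N$ by replacing the single nonterminal $C$ with the sequence $\sequ(C,\delta,Z)$ of length $\rk(\delta)+1$. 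Hence its length is $\abs{\beta C\gamma}+\rk(\delta)$, which exceeds $\mu(G_1)=\mu(G)$ by at most $\rk(\delta)\le\mr_\Delta$, while the unchanged big nonterminals of $\N$ have length at most $\mu(G)$. Therefore $\mu(G')\le\mu(G)+\mr_\Delta$. These are precisely the increases claimed, and they are attained in the worst case; in particular, when every lexical symbol has rank~$0$ we get $\mr_\Delta=0$ and the multiplicity is preserved.

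Because the substantive lexicalization work is entirely contained in Lemma~\ref{lem:main}, the only genuine task for this theorem is the parameter accounting above, which is routine given the explicit construction. The single point that warrants care is verifying that the bounds survive the passage through $G_1$: the reductions of Theorem~\ref{thm:dec-growing} must preserve width and multiplicity (which they do, as stated there), so that the increases incurred by Lemma~\ref{lem:main} are measured against $\wid(G)$ and $\mu(G)$ rather than against some larger intermediate values. No \LDTR"~transducer need be constructed here beyond those supplied by Theorem~\ref{thm:dec-growing} and Lemma~\ref{lem:main}, whose composition already yields the value-preserving translations in both directions required by Definition~\ref{def:LDTReq}.
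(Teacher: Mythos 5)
Your proposal is correct and matches the paper's own proof: the paper likewise combines Theorem~\ref{thm:dec-growing} (which preserves width and multiplicity) with Lemma~\ref{lem:main}, and then notes that the parameter bounds follow by inspecting that lemma's construction. Your explicit accounting of the ranks of the nonterminals $\langle C,\delta,i,Z\rangle$ and the lengths of the big nonterminals $\beta\cdot\sequ(C,\delta,Z)\cdot\gamma$ is exactly the verification the paper leaves implicit.
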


\begin{proof}
  The first statement is immediate from Theorem~\ref{thm:dec-growing}
  and Lemma~\ref{lem:main}.  Since Theorem~\ref{thm:dec-growing}
  preserves $\wid(G)$~and~$\mu(G)$, it suffices to check that the
  construction in the proof of Lemma~\ref{lem:main} satisfies the
  second statement.
\end{proof}

Note that if~$\Delta \subseteq \Sigma^{(0)}$, then $G'$~has the same
multiplicity as~$G$.  Thus, as a corollary we obtain (a more specific
version of) the main result of~\cite{maleng12}.  

\begin{corollary}
  \label{cor:main}
  If we have~$\Delta \subseteq \Sigma^{(0)}$, then
  Theorem~\ref{thm:main} holds for~spCFTG instead of~MCFTG.
\end{corollary}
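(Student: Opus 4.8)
The plan is to read the corollary off the multiplicity bound of Theorem~\ref{thm:main}, specialized to the case $\mr_\Delta = 0$. Since $\Delta \subseteq \Sigma^{(0)}$, every lexical symbol has rank~$0$, so $\mr_\Delta = 0$. Recall that an MCFTG is an spCFTG precisely when its multiplicity equals~$1$. Hence, starting from an spCFTG $G$ (so $\mu(G) = 1$), Theorem~\ref{thm:main} already provides a $\Delta$"~lexicalized MCFTG $G'$ that is \LDTR"~equivalent to $G$ with $\mu(G') = \mu(G) + \mr_\Delta = 1$, and therefore $G'$ is again an spCFTG. The decidability of finite $\Delta$"~ambiguity is inherited verbatim, as every spCFTG is an MCFTG.

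What remains to be verified is that the constructions underlying Theorem~\ref{thm:main} genuinely keep the multiplicity at~$1$ throughout when $\mr_\Delta = 0$, not merely that the final bound evaluates to~$1$. For the Growing Normal Form this is immediate: Theorem~\ref{thm:dec-growing} preserves both $\mu$ and $\wid$ (its proof even leaves $N$ and $\N$ unchanged), so the intermediate $\Delta$"~growing grammar is again an spCFTG. For the main lexicalization step I would inspect the new big nonterminals built in the proof of Lemma~\ref{lem:main}: each has the form $\beta \cdot \sequ(C, \delta, Z) \cdot \gamma$ with $\beta C \gamma \in \N$, where $\sequ(C, \delta, Z)$ has length $\rk(\delta) + 1$. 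Because every $\delta \in \Deltadl \subseteq \Delta$ has rank~$0$, this sequence collapses to the single new nonterminal $\langle C, \delta, 0, X_{\rk(C)}\rangle$, so each new big nonterminal has exactly the same length as the original big nonterminal $\beta C \gamma$ it refines. Consequently all big nonterminals of $G'$ have length~$1$, i.e.\@ $\N' \subseteq N'$, and $G'$ is an spCFTG of width $\wid(G') = \wid(G) + 1$.

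I do not expect a genuine obstacle; the argument is a specialization by bookkeeping. The one point deserving care is checking that the skeleton machinery of Lemma~\ref{lem:main} degenerates correctly when $\rk(\delta) = 0$. In that case the partition requirement forces $Z = (Z_0)$ with $Z_0 = X_{\rk(C)}$, and the tree $\tree(C, \delta, Z)$ reduces to $\langle C, \delta, 0, X_{\rk(C)}\rangle$ applied to the variables of $X_{\rk(C)}$ together with the rank-$0$ symbol $\delta$ in the single remaining argument. Thus exactly one new nonterminal replaces $C$ and no additional sequence components are spawned, which is precisely the multiplicity-preserving behaviour needed. Once this is confirmed, the remainder of the proof of Theorem~\ref{thm:main} applies without change.
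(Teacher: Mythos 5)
Your proposal is correct and matches the paper's own argument: the paper obtains the corollary precisely by observing that $\Delta \subseteq \Sigma^{(0)}$ forces $\mr_\Delta = 0$, so the bound $\mu(G') = \mu(G) + \mr_\Delta$ of Theorem~\ref{thm:main} yields $\mu(G') = 1$ whenever $\mu(G) = 1$, i.e.\@ $G'$ is again an spCFTG. Your additional verification that the skeleton construction of Lemma~\ref{lem:main} collapses to single nonterminals when $\rk(\delta) = 0$ is accurate but not needed, since the multiplicity equation stated in Theorem~\ref{thm:main} already guarantees the conclusion.
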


Since every MCFTG has finite $(\Sigma^{(0)}\cup\Sigma^{(1)})$-ambiguity, 
we also obtain the following special case of Theorem~\ref{thm:main}.

\begin{corollary}
For every MCFTG $G$ there is an \LDTR"~equivalent $\Sigma$"~lexicalized MCFTG $G'$
such that $\wid(G')=\wid(G)+1$ and $\mu(G')=\mu(G)+1$.
\end{corollary}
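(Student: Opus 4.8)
The plan is to obtain this statement as the instance of Theorem~\ref{thm:main} in which the lexical set is taken to be $\Delta = \Sigma^{(0)} \cup \Sigma^{(1)}$, i.e., every terminal of rank~$0$ or~$1$ is declared lexical. First I would check that the finite-ambiguity hypothesis is automatic for this~$\Delta$: by the observation recorded just after Definition~\ref{df:limited}, whenever $\Sigma^{(0)} \cup \Sigma^{(1)} \subseteq \Delta$ every tree language over~$\Sigma$ has finite $\Delta$"~ambiguity, and here this inclusion holds trivially with equality. Hence $G$ has finite $\Delta$"~ambiguity, so Theorem~\ref{thm:main} applies and yields an \LDTR"~equivalent $\Delta$"~lexicalized MCFTG~$G'$ with $\wid(G') = \wid(G) + 1$ and $\mu(G') = \mu(G) + \mr_\Delta$.

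It then remains only to reinterpret the two quantitative conclusions in terms of~$\Sigma$. The width bound is already the desired one. For the multiplicity, the key observation is that $\Delta = \Sigma^{(0)} \cup \Sigma^{(1)}$ contains only symbols of rank~$0$ and~$1$, so $\mr_\Delta = 1$ and therefore $\mu(G') = \mu(G) + \mr_\Delta = \mu(G) + 1$. For the lexicalization property, I would argue that $\Delta \subseteq \Sigma$ entails $\pos_\Delta(u) \subseteq \pos_\Sigma(u)$ for every forest~$u$, so that any $\Delta$"~lexicalized right-hand side is a fortiori $\Sigma$"~lexicalized; consequently every proper rule of~$G'$ contains a terminal symbol, i.e., $G'$ is $\Sigma$"~lexicalized.

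Since the substantive construction is entirely encapsulated in Theorem~\ref{thm:main}, I anticipate no real difficulty; the only points demanding a moment's care are the rank bookkeeping that gives $\mr_\Delta = 1$ (which presumes $\Sigma$ has a terminal of rank~$1$; if not, the multiplicity is simply left unchanged, which still respects the bound) and the elementary inclusion of labelled position sets that upgrades $\Delta$"~lexicalization to $\Sigma$"~lexicalization.
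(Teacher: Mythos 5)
Your proposal is exactly the paper's argument: the corollary is obtained as the special case of Theorem~\ref{thm:main} with $\Delta = \Sigma^{(0)} \cup \Sigma^{(1)}$, for which finite $\Delta$"~ambiguity is automatic (by the remark after Definition~\ref{df:limited}), $\mr_\Delta \leq 1$ gives the multiplicity bound, and $\Delta$"~lexicalization trivially implies $\Sigma$"~lexicalization. Your added care about the edge case $\Sigma^{(1)} = \emptyset$ is a sensible reading of the stated equalities as bounds and does not change the argument.
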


It should be clear that Theorems
\ref{thm:dec-growing}~and~\ref{thm:main} can be combined.  If $G$~has
finite $\Delta$"~ambiguity, then there is an \LDTR"~equivalent
$\Delta$"~growing $\Delta$"~lexicalized MCFTG.  Since every
$\Delta$"~lexicalized MCFTG is almost $\Delta$"~growing, it suffices
to apply once more the construction in the proof of
Theorem~\ref{thm:dec-growing} to the $\Delta$"~lexicalized MCFTG~$G'$
of Theorem~\ref{thm:main}.

It should even be clear that, by combining rules in a standard way, we
can now ensure that every rule contains at least $n$~lexical symbols
for any~$n \in \nat$.  This will be used in Section~\ref{sub:monadic}.  
Unfortunately, such a combination of rules cannot be realized 
by an \LDTR"~transducer.\footnote{\label{foo:fc2} It \emph{can} be realized by a \emph{finite-copying} 
  deterministic top-down tree transducer
  with regular look-ahead.}  
For every~$n \geq 1$, let
us say that a rule~$A \to (u, \LL)$ of an
MCFTG~$G$ is \emph{$n$"~$\Delta$"~lexicalized} if
$\abs{\pos_\Delta(u)} \geq n$, and that $G$~is
$n$"~$\Delta$"~lexicalized if all its proper rules are
$n$"~$\Delta$"~lexicalized.

\begin{lemma}
  \label{lem:nlex}
  For every $\Delta$"~lexicalized MCFTG~$G$ and every~$n \geq 1$ there
  is an $n$"~$\Delta$"~lexicalized MCFTG~$G'$
  such that $\wid(G') = \wid(G)$~and~$\mu(G') = \mu(G)$.
\end{lemma}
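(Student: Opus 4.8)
The plan is to keep the nonterminals, big nonterminals, terminals, and initial symbol of $G$ essentially unchanged and to let each rule of $G'$ be a \emph{combined rule}, obtained by composing a bounded number of rules of $G$. Concretely, for a derivation context $e\in\DL(G_\der,A)$ (a derivation tree of type $A$ whose leaves are labelled by big nonterminals or by terminal rules), the associated combined rule is $A\to(\val(e),\alp_\N(e))$, where the frontier big nonterminals $\alp_\N(e)$ become its links. By Lemma~\ref{lem:valocc} the forest $\val(e)$ has rank $\rk(A)$ and (once the frontier of $e$ is made repetition-free, see the last paragraph) its $N$-labelled positions are partitioned by $\{\alp(B)\mid B\in\alp_\N(e)\}$, so this is a legal right-hand side, and by Lemma~\ref{lem:valsub} the value function distributes over the substitution that composes contexts. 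Hence any set $E$ of contexts whose compositions reproduce exactly $L(G_\der)$ yields, via $R'=\{A\to(\val(e),\alp_\N(e))\mid e\in E\}$, a grammar with $L(G')=\val(L(G'_\der))=\val(L(G_\der))=L(G)$ by Theorem~\ref{thm:dtree}. Since second-order substitution preserves ranks (Lemma~\ref{lem:treehom}(1)), the links are big nonterminals of unchanged rank and length, so the construction keeps $\wid(G')=\wid(G)$ and $\mu(G')=\mu(G)$; note that combining rules increases only the number of links (the rule-width $\lambda$), not the length of the big nonterminals.

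The point of combining rules is the lexical count. Because $G$ is start-separated, the initial big nonterminal $S$ occurs only at the root of a derivation tree, so every rule occurring strictly below the root is proper and therefore $\Delta$-lexicalized; by Lemma~\ref{lem:valocc}(1) each such rule contributes at least one lexical symbol to the value. Consequently a context $e$ that contains at least $n$ rules below its own root has $\abs{\pos_\Delta(\val(e))}\ge n$. It thus suffices to decompose every $d\in L(G_\der)$ into contexts each carrying at least $n$ rules, using only boundedly many context shapes.

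The decomposition I would use is a top-down cutting scheme. Call a node $v$ of $d$ \emph{heavy} if $\abs{\pos_\Delta(\val(d|_v))}\ge n$ and \emph{light} otherwise; since every node below the root contributes at least one lexical symbol, a light subtree has fewer than $n$ nodes. Starting at the root, I grow a chunk by descending and absorbing every light child entirely, while descending into heavy children only until the chunk already carries $n$ lexical symbols, at which point the remaining heavy children are cut off and become new chunk roots (hence links). Each chunk either reaches $\ge n$ lexical symbols (when its root is heavy) or is an exhausted light subtree at the very bottom, or a light whole tree at the root (the latter two being terminal and non-proper, hence exempt from the requirement). A straightforward counting argument bounds the chunk size in terms of $n$ and $\mr_R$: at most $n$ nodes are included while the running count is below $n$, and each of the boundedly many boundary nodes absorbs at most $\mr_R$ light subtrees of fewer than $n$ nodes each. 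This gives finitely many chunk shapes, hence a finite $R'$, and guarantees that every \emph{proper} rule of $G'$ is $n$-$\Delta$-lexicalized.

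I expect the main obstacle to be exactly this branching decomposition: a naive greedy cut can leave a short chunk at the bottom of a branch, so the cut decision must depend on whether the subtree below a node can itself support a full chunk --- look-ahead into subtrees that is the reason (as the footnote records) that the transformation needs a finite-copying, rather than a linear, top-down transducer. A second, purely technical wrinkle is that a chunk may have two frontier leaves carrying the same big nonterminal, so that $\val(e)$ fails to be uniquely $N$-labelled as required by Definition~\ref{def:mcftg}. I would repair this by introducing finitely many \emph{aliases} (as after Definition~\ref{def:mcftg}) to make each chunk's frontier repetition-free; since aliases are rank- and length-preserving and do not change the generated forest languages, they affect neither $\wid(G')$ nor $\mu(G')$, and the correctness argument above goes through unchanged.
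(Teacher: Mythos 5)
Your proposal is correct, but it takes a genuinely different route from the paper's proof. The paper argues by induction on~$n$: in the induction step it first arranges, by reusing the construction of Theorem~\ref{thm:dec-growing} with $\F = \{t \in P_\Sigma(X)\plus \mid \abs{\pos_\Delta(t)} = n\}$, that all non-initial terminal rules are $(n+1)$-$\Delta$-lexicalized, and then replaces every rule $\rho = A \to (u,\LL)$ with $\LL = \{B_1,\dotsc,B_k\}$, $k \geq 1$, by all pairwise combinations $\langle\rho,\rho'\rangle$ in which the right-hand side of a rule $\rho'$ for the \emph{first} link~$B_1$ (with the links of $\rho'$ renamed to aliases) is substituted for $B_1$ in~$u$; terminal rules are kept. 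Since $\rho$ is proper and $\rho'$ is non-initial, both carry lexical symbols, so every combined rule is $(n+1)$-$\Delta$-lexicalized, and the derivation-tree correspondence is the simple relabeling that merges each non-leaf node of a derivation tree with its first child. Your construction instead reaches $n$ in a single shot, cutting derivation trees into bounded-size chunks at ``heavy'' subtrees and taking the chunks' values as rules. What the paper's route buys is simplicity: its rule set is defined purely syntactically and locally, with no semantic look-ahead and no counting argument for chunk size, and correctness is a one-line homomorphism argument; the price is the induction and the auxiliary pre-processing of terminal rules. Your route buys a uniform, induction-free construction that needs no pre-processing, at the price of the finiteness bound on chunks and a decomposition whose cut decisions need look-ahead into subtrees. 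Both proofs share the two essential ingredients: aliases to keep right-hand sides uniquely $N$-labeled, and preservation of $\wid$ and $\mu$ because $N$ and $\N$ are unchanged up to (rank- and length-preserving) aliasing.

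One slip is worth fixing, although it turns out to be harmless. You claim that a chunk which is an ``exhausted light subtree at the very bottom'' would be ``terminal and non-proper, hence exempt''. In this paper's terminology a non-initial terminal rule \emph{is} proper (proper means not both initial and terminal), so such a chunk, if it had fewer than $n$ lexical symbols, would actually violate $n$-$\Delta$-lexicalization. Your proof survives only because this case is vacuous in your scheme: cuts are made exclusively at heavy children, so every chunk other than the root chunk is rooted at a heavy node and therefore accumulates at least $n$ lexical symbols before (or upon) exhausting its subtree; the only possibly deficient chunk is the whole tree when the root of $d$ is light, and that one is an initial terminal rule, hence genuinely exempt. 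You should delete the phantom case (or justify its impossibility) rather than excuse it with an incorrect appeal to non-properness.
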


\begin{proof}
  The proof is by induction on~$n$.  For the induction step, let
  $G$~be an $n$"~$\Delta$"~lexicalized MCFTG.  We may assume that all
  non-initial terminal rules of~$G$ are
  $(n+1)$"~$\Delta$"~lexicalized because otherwise we can apply once
  more the construction in the proof of Theorem~\ref{thm:dec-growing}
  for~$\F = \{t \in P_\Sigma(X)^{\scriptscriptstyle +} \mid
  n = \abs{\pos_\Delta(t)} \}$.  Moreover, we may assume that every big
  nonterminal~$A \neq S$ has an alias~$\bar{A}$ such that
  $A$~and~$\bar{A}$ do not occur together in any right-hand side of a
  rule.  This can be achieved by introducing a new symbol~$\bar{C}$
  for every nonterminal~$C$, and letting~$\bar{A} = (\seq{\bar A}1n)$
  be an alias of~$A = (\seq A1n)$. 

  Now let $G = (N, \N, \Sigma, S, R)$.  We construct~$G' = (N, \N,
  \Sigma, S, R')$, where $R'$~is defined as follows.  Let $\rho = A
  \to (u, \LL)$~be a rule in~$R$ with $\LL = \{\seq B1k\}$~and~$k \geq
  1$, and let~$\rho' = B_1 \to (u', \LL')$ be a rule in~$R$ with
  left-hand side~$B_1$ and~$\LL' = \{\seq{B'}1{\ell}\}$.  Let $u'' =
  u'[B'_i \gets \init(\bar{B}'_i) \mid 1 \leq i \leq \ell]$.  Then
  $R'$~contains the rule~$\langle \rho, \rho' \rangle = A \to (u[A_1
  \gets u''], \LL'')$, where $\LL'' = \{\seq{\bar B'}1{\ell}, \seq B2k\}$.
  Moreover, $R'$~contains all terminal rules of~$R$.
  Obviously, $G'$ is $(n+1)$"~$\Delta$"~lexicalized. 

  It is straightforward to prove that the derivation trees of $G'$ are obtained from those of $G$
  by the value-preserving mapping $M$ such that if $d=\rho(\rho'(d_1',\dotsc,d_\ell'),d_2,\dotsc,d_k)$  
  then $$M(d)=\langle \rho,\rho'\rangle(M(d_1'),\dotsc,M(d_\ell'),M(d_2),\dotsc,M(d_k)),$$
  and if $d=\rho$ where $\rho$ is a terminal rule then $M(d)=d$. 
  Vice versa, the derivation trees of $G$ are obtained from those of $G'$
  by the value-preserving tree homomorphism $M'$ such that 
  $$M'(\langle \rho,\rho'\rangle)=\rho(\rho'(x_1,\dotsc,x_\ell),x_{\ell+1},\dotsc,x_{\ell+k-1})$$
  and $M'(\rho)=\rho$ for every terminal rule $\rho$. 
  That proves that $L(G')=L(G)$. 
\end{proof}

\section{MCFTG and MC"~TAG}
\label{sec:monadic}
\noindent 
In this section we show that MC"~TAGs have (``almost'') 
the same tree generating power as MCFTGs. It is shown in~\cite{keprog11} that 
non-strict tree adjoining grammars (nsTAGs) have the same 
tree generating power as monadic spCFTGs, where 
an spCFTG~$G$ is monadic if~$\wid(G) \leq 1$; i.e., all
its nonterminals have rank~$1$~or~$0$. 
In the first subsection we prove that MCFTGs have the
same tree generating power as non-strict set-local multi-component
tree adjoining grammars (nsMC"~TAGs), generalizing the result
of~\cite{keprog11}. To avoid the introduction of the formal machinery 
that is needed to define nsMC"~TAGs in the usual way, we define 
them to be ``footed'' MCFTGs, similar to the footed spCFTGs from~\cite{keprog11}. 
As shown in~\cite[Section~4]{keprog11} for nsTAGs, 
the translation from one definition to the other is straightforward.   
In the second subsection we prove that MCFTGs have the
same tree generating power as (strict) set-local
multi-component tree adjoining grammars~(MC"~TAGs), where we define MC"~TAGs 
as a special type of footed MCFTGs. 
The last result implies that MC"~TAGs can be (strongly) lexicalized.
It also implies, as shown in the third subsection, that MCFTGs have 
the same tree generating power as monadic MCFTGs
(i.e., MCFTGs of width at most~1), which is essentially the same result as
in~\cite[Theorem~3]{aludan11}.\footnote{It is shown in~\cite[Theorem~3]{aludan11}
that multi-parameter STTs (streaming tree transducers) have the same power as one-parameter STTs.
Multi-parameter STTs are closely related to finite-copying macro tree transducers
(cf.~\cite[Section~4.2]{aludan11}), and hence to MCFTGs as will be shown in Section~\ref{sec:charact}. 
The number of parameters of the STT corresponds to the width of the MCFTG.} 
These results can be viewed as additional normal forms for MCFTGs. 

Roughly speaking, the transformation of an MCFTG into an MC-TAG will be realized by decomposing 
each tree $u_i$ in the right-hand side of a rule $A\to (u,\LL)$ with $A=(\seq A 1n)$ and $u=(\seq u 1n)$ 
into a bounded number of parts, to replace $u_i$ in $u$ by the sequence of these parts,    
and to replace $A_i$ in $A$ by a corresponding sequence of new nonterminals that 
simultaneously generate these parts.  
This is similar to the construction in the proof of Lemma~\ref{lem:main} where, however,
just one $u_i$ was decomposed into parts. 

\subsection{Footed MCFTGs}
\label{sub:mctag}
\noindent 
Tree adjoining grammars~(TAGs) are closely 
related to ``footed''  (simple) context-free tree grammars as shown
in~\cite[Section~4]{keprog11}.
An spCFTG is footed if for every rule $A(\seq x1k)\to u$ with $k\geq 1$ there is 
a node of $u$ with exactly $k$ children, which are labeled $\seq x1k$ from left to right. 
In other words, the arguments of $A$ are passed in the same order to one node of $u$.  
In this section we generalize this notion to MCFTGs and prove that 
for every MCFTG there is an equivalent footed MCFTG. 

\begin{definition}
  \label{def:footed}
  \upshape
  Let $G = (N, \N, \Sigma, S, R)$~be an MCFTG.  A pattern~$t \in P_{N
    \cup \Sigma}(X_k)$ with $k \in \nat_0$ is~\emph{footed} if 
  either $k=0$, or $k\geq 1$ and
  there exists a position~$p \in \pos_{N \cup \Sigma}(t)$, called the
  \emph{foot node} of~$t$, such that $\rk(t(p)) = k$~and $t(pi) = x_i$
  for every~$i \in [k]$.  A rule~$\rho=A \to ((\seq u1n), \LL)\in R$ is footed
  if $u_j$~is footed for every~$j \in [n]$.  The MCFTG~$G$ is footed
  if every rule~$\rho \in R$ is footed. \fin
\end{definition}

Note that, by definition and for technical convenience, 
every tree $t \in T_{N\cup\Sigma}=P_{N \cup \Sigma}(X_0)$ is footed.
The foot node of a footed pattern~$t \in P_{N \cup \Sigma}(X_k)$ with
$k \geq 1$ is obviously unique.  If~$p$~is the foot node of~$t$, then
$t|_p = \init(t(p))$.  It is straightforward to show, for a footed MCFTG $G$, that if~$(\seq
t1n) \in L(G, A)$, then $t_j$~is footed for every~$j \in [n]$. Assuming that $G$ is reduced,
this implies that $\wid(G)\leq \mr_\Sigma$.
Moreover, $G$ is permutation-free and nonerasing (cf.\@
Lemmas~\ref{lem:permfree} and~\ref{lem:nonerasing}).    

Based on the close relationship between non-strict TAGs and 
footed context-free tree grammars as shown in~\cite[Section~4]{keprog11}, we
define a \emph{non-strict tree adjoining grammar} (in short,~nsTAG) to
be a footed~spCFTG, and similarly we define a \emph{non-strict (set-local)
  multi-component~TAG} (in short, nsMC"~TAG) to be a footed~MCFTG.
This definition will be motivated after we have proved that for every
MCFTG there is an equivalent footed~MCFTG, which shows that MCFTGs
and  nsMC"~TAGs have the same tree generating power. 

It is shown in~\cite[Proposition~3]{keprog11} that every monadic
nonerasing~spCFTG can be transformed into an equivalent
footed~spCFTG.  However, the proof of that proposition is not entirely
correct, which can be seen from the following example.  Consider the
spCFTG~$G$ with rules~$S \to A(e)$, $A(x_1) \to \sigma(A(x_1))$,
and~$A(x_1) \to \tau(a, x_1, b)$.  Clearly, the last rule is not
footed.  In the proof of~\cite[Proposition~3]{keprog11} this grammar
is transformed into the equivalent spCFTG~$G'$ with rules~$S \to
A(e)$, $S \to A'(T_1, e, T_3)$, $A(x_1) \to \sigma(A(x_1))$, $A(x_1)
\to \sigma(A'(T_1, x_1, T_3))$, $A'(x_1, x_2, x_3) \to \tau(x_1, x_2,
x_3)$, $T_1 \to a$, and~$T_3 \to b$.  However, the rule~$A(x_1) \to
\sigma(A'(T_1, x_1, T_3))$ is not footed, which is due to the fact that
the foot node of the right-hand side~$\sigma(A(x_1))$ of the second
rule of~$G$ has a nonterminal label.  The solution to this problem is
to introduce the nonterminals $T_1$~and~$T_3$ in the first step of
each derivation rather than in the last step.  Thus, the footed
spCFTG~$G''$ with rules~$S \to A'(T_1, e, T_3)$, $A'(x_1, x_2, x_3)
\to \sigma(A'(x_1, x_2, x_3))$, $A'(x_1, x_2, x_3) \to \tau(x_1, x_2,
x_3)$, $T_1 \to a$, and~$T_3 \to b$ is equivalent to~$G$.  It is not
difficult to repair the proof of~\cite[Proposition~3]{keprog11}, 
but the construction becomes more
complicated.  We generalize that construction in the proof of the next
theorem (without preserving the multiplicity, however). 
Since MRTGs are trivially footed, we restrict ourselves to MCFTGs $G$ 
with $\wid(G)\geq 1$. 

\begin{theorem}
  \label{thm:footed}
  For every MCFTG~$G$ with $\wid(G)\geq 1$ 
  there is an \LDTR"~equivalent footed MCFTG~$G'$ such that 
  $\mu(G') \leq \mu(G) \cdot \mr_\Sigma \cdot (2\cdot\wid(G)-1)$,
  where~$\Sigma$ is the terminal alphabet of $G$.  
  Moreover, if $G$~is $\Delta$"~lexicalized, then so is~$G'$.  
\end{theorem}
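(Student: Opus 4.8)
The plan is to generalize the decomposition technique from the proof of Lemma~\ref{lem:main}: there a single component of a right-hand side was cut at one node into a context and its immediate subtrees, whereas here each component must be cut into a bounded number of \emph{footed} pieces. First I would invoke Lemma~\ref{lem:permfree} and Lemma~\ref{lem:nonerasing} to assume, without loss of generality, that $G$ is permutation-free and nonerasing; both constructions preserve $\wid(G)$ and $\mu(G)$, and they preserve the property of being $\Delta$"~lexicalized (explicitly for Lemma~\ref{lem:nonerasing}, and trivially for Lemma~\ref{lem:permfree}, whose construction leaves the terminal symbols of each right-hand side untouched). Permutation-freeness guarantees that in every component $u_j$ of rank $k_j=\rk(A_j)\le\wid(G)$ the variables $\word x1{k_j}$ occur from left to right, which is exactly what lets the cutting produce footed pieces with their foot children in the correct order.

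The heart of the construction is a fixed decomposition of a permutation-free pattern $u_j\in P_{N\cup\Sigma}(X_{k_j})$ into footed pieces. I would consider the \emph{skeleton} of $u_j$, i.e.\ the smallest subtree containing the root and all variable leaves, together with its topological contraction obtained by suppressing every node that has a single skeleton-child. At each node $p$ of the contracted skeleton I place a foot: the piece anchored at $p$ is footed with foot node $p$, its foot children being, in order, the real variables that are children of $p$, the placeholders for the pieces anchored at the skeleton-children of $p$, and placeholders for the non-variable subtrees hanging directly at $p$, which are spun off as separate ground (hence footed) pieces. Crucially, all material strictly above a foot node along a spine --- including subtrees hanging off that spine --- is absorbed into the piece of the foot node below it and need not be cut, precisely because footedness constrains only the foot node. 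Since a tree with $k_j$ variable leaves has at most $2k_j-1$ contracted-skeleton nodes, and each foot node has at most $\mr_\Sigma$ children and so spins off at most $\mr_\Sigma-1$ ground pieces, summing $1+(\mr_\Sigma-1)$ over the at most $2k_j-1$ foot nodes yields at most $\mr_\Sigma\cdot(2k_j-1)\le\mr_\Sigma\cdot(2\wid(G)-1)$ pieces. As in Lemma~\ref{lem:main}, I would record this cutting data in new nonterminals and a skeleton tree $\tree(\dotsc)$ over them, so that substituting the piece values back into $\tree(\dotsc)$ via Lemma~\ref{lem:comm-assoc} reconstructs $u_j$, and I would replace the component $A_j$ by the corresponding sequence $\sequ(\dotsc)$ of new nonterminals that generate these pieces simultaneously.

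Assembling $G'$, one rule $\rho=\word A1n\to((\word u1n),\LL)$ of $G$ becomes \emph{one} rule of $G'$ whose left-hand side is the concatenation of the new sequences for $A_1,\dotsc,A_n$ and whose right-hand side is the forest of all footed pieces, the links in $\LL$ being likewise replaced by their decomposed versions. Because this right-hand side contains every terminal of $(\word u1n)$ exactly once and no others, it is $\Delta$"~lexicalized whenever $\rho$ is, which yields the final ``moreover'' statement; and because its left-hand side has length at most $\sum_{j}\mr_\Sigma(2\rk(A_j)-1)\le\mu(G)\cdot\mr_\Sigma\cdot(2\wid(G)-1)$, the multiplicity bound follows. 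Since a nonterminal $C$ occurring in $u_j$ may generate footed patterns of different ``shapes'', and is replaced according to that shape, I would, exactly as in Lemma~\ref{lem:main}, guess the shape of each link by regular look-ahead and verify it in the corresponding subderivation; this makes the translation of derivation trees of $G$ into those of $G'$ realizable by an \LDTR"~transducer, and the inverse (forgetting the shapes) by a projection, establishing \LDTR"~equivalence.

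The main obstacle I anticipate is the bookkeeping for nonterminal occurrences inside a piece: a link $C(s_1,\dotsc,s_m)$ in $u_j$ carries argument subtrees $s_1,\dotsc,s_m$ that may contain variables of $u_j$, and expanding $C$ into its footed skeleton before re-cutting must route these arguments and variables so that every resulting piece is genuinely footed and so that footedness is preserved under the later substitution of $C$'s value. Verifying this --- essentially that the foot variables of a substituted footed value land exactly as foot children, which follows from the homomorphism facts recorded just before Lemma~\ref{lem:main} and from Lemma~\ref{lem:comm-assoc}(2) and~(4) --- together with the precise definition of the shape annotations and the induction establishing $\val(\dtr_1(d))=\val(d)$, is where the bulk of the technical work lies; the counting, the lexicalization claim, and the multiplicity bound are then comparatively routine.
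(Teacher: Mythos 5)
Your proposal follows essentially the same route as the paper's proof: reduce $G$ to permutation-free nonerasing form, substitute guessed skeletons for the link nonterminals in each right-hand-side component and then decompose it into at most $\mr_\Sigma\cdot(2\cdot\wid(G)-1)$ footed pieces anchored at the (iterated) least common ancestors of the variable occurrences, record the decomposition in position-indexed new nonterminals whose concatenated sequences replace the old nonterminals, and obtain \LDTR-equivalence by a guess-and-verify relabeling of derivation trees using the regular look-ahead languages of derivation trees with a given skeleton function, the inverse being a projection. The counting argument, the multiplicity bound, and the observation that lexicalization is preserved because the new right-hand sides carry exactly the terminals of the old ones all coincide with the paper's proof.
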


\begin{proof}
The basic idea of this proof is that, for any ranked alphabet~$\Omega$,
every tree~$u\in T_\Omega(X)$ with $u\notin X$ and $\pos_X(u)\neq \emptyset$ can be decomposed 
into at most $\mr_\Omega \cdot (2k-1)$ footed patterns, where $k=\abs{\pos_X(u)}$.
This can be understood as follows. Clearly, there are a unique $m\geq 1$,
a unique footed pattern $u_\varepsilon\in P_\Omega(X_m)$, and unique trees $\seq u 1m\in T_\Omega(X)$ 
such that $u=u_\varepsilon[x_i\gets u_i\mid 1\leq i\leq m]$ and $\abs{\pos_X(u_i)} < \abs{\pos_X(u)}$
for every $i\in[m]$ with $u_i\notin X$. In fact, the foot node of $u_\varepsilon$ is the position $p$ 
which, in $u$, is the 
least common $\Omega$-labeled ancestor of the nodes in $\pos_X(u)$; i.e., the longest position
such that $u(p)\in\Omega$ and $\abs{\pos_X(u|_p)}=\abs{\pos_X(u)}$. Note that the requirement
$u(p)\in\Omega$ is only needed when $\abs{\pos_X(u)}=1$. 
Thus, we have decomposed $u$ as $u_\varepsilon[x_i\gets u_i\mid 1\leq i\leq m]$
where $u_\varepsilon$ is a footed pattern. For every $i\in[m]$ with $u_i\notin X$,
either $u_i\in T_\Omega$ and so $u_i$ is a footed pattern of rank~0, or 
$\pos_X(u_i)\neq \emptyset$ in which case $u_i$ can be decomposed further. It should also be clear that, 
in this inductive process, there are at most $2k-1$ such foot node positions $p$. The factor 
$\mr_\Omega$ is due to the footed patterns of rank~0. 
As an example, consider the ranked alphabet 
$\Omega=\{\tau^{(3)},\sigma^{(2)},\beta^{(1)},a^{(0)},b^{(0)}\}$ 
and the tree $u=\sigma(a,\sigma(v,w))$ with 
$v=\sigma(a,\sigma(a,\tau(x_1,a,\beta(\beta(x_2)))))$ and $w=\sigma(x_3,b)$.
For readability, let us use the notation $t_0[t_1,\dotsc,t_n]$ for $t_0[x_i\gets t_i\mid 1\leq i\leq n]$. 
Then we obtain the decomposition
$u=u_\varepsilon[u_1[x_1,u_{12},u_{13}[x_2]],u_2[x_3,u_{22}]]$,
illustrated in Figure~\ref{fig:monadic2}, of $u$ with the footed patterns $u_\varepsilon=\sigma(a,\sigma(x_1,x_2))$, 
$u_1=\sigma(a,\sigma(a,\tau(x_1,x_2,x_3)))$, $u_{12}=a$, $u_{13}=\beta(\beta(x_1))$,  
$u_2=\sigma(x_1,x_2)$, and $u_{22}=b$. 
Using new symbols $C^m_p$ of rank $m$, with $p\in\nat^*$, we can also express this as
$u=K[\gamma]$ where $K$ is the tree 
$C^2_\varepsilon(C^3_1(x_1,C^0_{12},C^1_{13}(x_2)),C^2_2(x_3,C^0_{22}))$,
which can be viewed as the skeleton of the decomposition, and 
$\gamma$ is the second-order substitution such that $\gamma(C^m_p)=u_p$. 
A~formal version of this decomposition
is formulated below and applied to (a variant of) the trees 
in the right-hand sides of the rules of $G$. 
We note here that this decomposition is closely related to the one used 
in~\cite[Section~6]{lohmansch12} to turn a ``straight-line'' spCFTG into a monadic one.

\begin{figure}[t]
  \centering
  \includegraphics{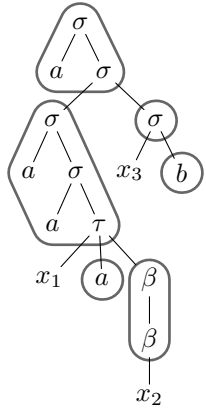}
  \caption{Decomposition into footed patterns.}
  \label{fig:monadic2}
\end{figure}

Let $G = (N, \N, \Sigma, S, R)$ be an MCFTG with $\wid(G)\geq 1$.
Then $\mr_\Sigma \cdot (2\cdot\wid(G)-1)\geq 1$, because
$\mr_\Sigma\geq 1$ by Definition~\ref{def:mcftg}. 
By Lemmas~\ref{lem:permfree} and~\ref{lem:nonerasing} we may assume that $G$ is 
permutation-free and nonerasing.\footnote{First apply Lemma~\protect{\ref{lem:nonerasing}} and 
then Lemma~\protect{\ref{lem:permfree}}. It is easy to check that
Lemma~\protect{\ref{lem:permfree}}  
preserves the nonerasing and $\Delta$"~lexicalized properties.}
This means that if $(\seq A 1n)\to ((\seq u 1n),\LL)$ is a rule in~$R$, 
then the pattern $u_i$ is in $\PF_{N \cup \Sigma}(X_{\rk(A_i)})\setminus X$ 
for every $i\in[n]$.\footnote{Recall 
from Lemma~\protect{\ref{lem:permfree}} that $\PF_\Omega(X)$ denotes the set of permutation-free patterns  
over the ranked alphabet $\Omega$. The requirement that $u_i\notin X$ is only relevant 
when $\rk(A_i)=1$, meaning that $u_i\neq x_1$.}

We define~$G' = (N', \N', \Sigma, S', R')$.  
The set~$N'$ of nonterminals consists of all
  triples~$\langle C, m, p \rangle$ with~$C \in N$,
  $0\leq m\leq \mr_\Sigma$, and~$p \in \nat^*$ such that $\abs{p} \leq \wid(G)$.
  The rank of~$\langle C, m, p \rangle$ is~$m$.  The initial nonterminal
  is~$S' = \langle S, 0, \varepsilon \rangle$. 
  For every nonterminal~$C \in N$, a \emph{skeleton} of~$C$ is a permutation-free
  pattern~$K \in \PF_{N'}(X_{\rk(C)})\setminus X$ such that\footnote{We usually do not denote 
  trees with a capital, but $k$ is already used for natural numbers.}
  \begin{compactenum}[\indent \quad (1)]
  \item for every $p \in \pos_{N'}(K)$ there exists $0\leq m\leq \mr_\Sigma$
    such that $K(p) = \langle C, m, p \rangle$, and
  \item for every $p\in\nat^*$ and $i\in\nat$, if $pi \in \pos_{N'}(K)$ 
    then $\abs{\pos_X(K|_{pi})} < \abs{\pos_X(K|_p)}$. 
  \end{compactenum}
  For such a skeleton~$K$, we define~$\sequ(K) = \yield_{N'}(K)$,
  which is an element of~$(N')^{\scriptscriptstyle +}$.\footnote{Recall the definition
    of $\yield_{N'}$ from the paragraph on homomorphisms in Section~\protect{\ref{sub:seqs}}.} 
There are only finitely many skeletons of~$C$. 
In fact, it is easy to show that $\abs{\pos_{N'}(K)}\leq \mr_\Sigma \cdot (2k-1)$ 
for every skeleton $K$ of $C$, if $k=\rk(C)\geq 1$. 
Additionally, if~$\rk(C) = 0$, then the only skeleton of~$C$ 
is~$\langle C, 0, \varepsilon \rangle$. 
Note that $K$~can be reconstructed from~$\sequ(K)$ because $K$~is permutation-free. 
In the example above, the tree $K$ is a skeleton of $C$, provided that $C^m_p$ 
denotes $\langle C, m, p \rangle$, and 
$\sequ(K) = (C^2_\varepsilon,C^3_1,C^0_{12},C^1_{13},C^2_2,C^0_{22})$. 

We will apply the above basic idea to a pattern 
$u\in \PF_{N' \cup \Sigma}(X_{\rk(C)})\setminus X$. 
This leads to a decomposition of $u$
that can be represented by a skeleton~$K$ of $C$ and 
a substitution function~$\gamma$ such that $u=K[\gamma]$.
This is formalized as follows. 
Let $K$~be
  a skeleton of~$C \in N$.  A~substitution function~$\gamma$
  for~$\alp_{N'}(K)$ is \emph{footed} if, for every~$C' \in
  \alp_{N'}(K)$, the pattern~$\gamma(C') \in P_{N' \cup \Sigma}(X)$ is
  footed. We say that the pair~$\langle K, \gamma \rangle$ is
  a \emph{footed $C$"~decomposition} of the tree~$K[\gamma]$.

  \paragraph{Basic fact} Every pattern~$u$ as above
  has a footed $C$"~decomposition~$\dec_C(u)$.\footnote{The decomposition is even unique,
  but that will not be needed.}  More
  precisely, for every~$C \in N$ and
  every~$u \in \PF_{N' \cup \Sigma}(X_{\rk(C)})\setminus X$ 
  there is a pair~$\dec_C(u) = \langle K, \gamma
  \rangle$ such that $K$~is a skeleton of~$C$,
  $\gamma$~is a footed substitution function for~$\alp_{N'}(K)$,
  and~$K[\gamma] = u$.

  \paragraph{Proof of the basic fact} 
To prove this by induction, we prove it for arbitrary $u \in T_{N' \cup \Sigma}(X)$
and we allow~$K$ to be an element of $T_{N'}(X)$ such that $\yield_X(K)=\yield_X(u)$. 
Obviously, if~$K[\gamma] = u$ and $u$ is a $k$-ary permutation-free pattern $\neq x_1$, 
then so is $K$. 

If $u=x\in X$, then~$\dec_C(u) = \langle K, \gamma \rangle$ with 
$K = x$~and~$\gamma$ is the empty function.
If $u\in T_{N' \cup \Sigma}$, then~$\dec_C(u) = \langle K, \gamma \rangle$ with 
$K = \langle C, 0, \varepsilon \rangle$ and $\gamma(\langle C, 0, \varepsilon \rangle)=u$.
Now suppose that~$u\notin X$ and~$\pos_X(u)\neq \emptyset$.  
We proceed by induction on~$\abs{\pos_X(u)}$. 
Let the footed pattern $u_\varepsilon$ in $P_{N' \cup \Sigma}(X_m)$ 
and the trees $\seq u 1m$ in $T_{N' \cup \Sigma}(X)$ be 
as in the basic idea above, and let, by the induction hypotheses or by the previous two basic cases,
$\dec_C(u_i) = \langle K_i, \gamma_i \rangle$ for every $i\in[m]$. 
Then~$\dec_C(u) = \langle K, \gamma \rangle$, where $K$ and $\gamma$ are defined as follows. 
For every $i\in[m]$ let $K'_i$~be obtained from~$K_i$ by changing every 
label~$\langle C, m', p \rangle$ into~$\langle C, m', ip \rangle$.
Then $K$~is the tree $K=\langle C,m,\varepsilon\rangle(\seq {K'} 1m)$. 
Moreover, the substitution function~$\gamma$ is defined by 
$\gamma(\langle C, m, \varepsilon \rangle) = u_\varepsilon$ and
$\gamma(\langle C, m', ip \rangle) = \gamma_i(\langle
    C, m', p \rangle)$ for every~$i\in[m]$ and every~$\langle C, m', ip
    \rangle \in \alp_{N'}(K'_i)$.
  It is straightforward to verify that $K$~and~$\gamma$ satisfy the
  requirements, which completes the proof of the basic fact.

\medskip
We define the set $\N'$ of big nonterminals to consist of all sequences $\sequ(K_1)
  \dotsm \sequ(K_n)$ for which there exists $(\seq A1n) \in \N$ such that $K_j$~is a
  skeleton of~$A_j$ for every~$j \in [n]$.  
  A \emph{skeleton function} for~$A \in \N$ is a substitution
  function~$\skel$ for~$\alp(A)$ that assigns a
  skeleton~$\skel(C)$ of~$C$ to every nonterminal~$C \in \alp(A)$. 
  The string homomorphism~$h_{\skel}$ from~$\alp(A)$ to~$N'$ is defined
  by $h_{\skel}(C) = \sequ(\skel(C))$ for every~$C \in \alp(A)$.  Note
  that $\N'$~is the set of all~$h^*_{\skel}(A)$, where $A \in
  \N$~and~$\skel$ is a skeleton function for~$A$.

  We now define the set $R'$ of rules. 
  Let~$\rho = A \to (u, \LL)$ be a rule in~$R$ such that~$A = (\seq
  A1n)$, $u = (\seq u1n)$, and~$\LL = \{\seq B1k\}$.  Moreover, let
  $\overline{\skel} = (\seq \skel1k)$, where $\skel_i$~is a
  skeleton function for~$B_i$ for every~$i \in [k]$.  
  Intuitively, $\overline{\skel}$ guesses 
  for every nonterminal $C$ that occurs in $\seq B1k$
  the skeleton of a footed $C$-decomposition of the tree generated by $C$. 
  Let $f$~be the
  substitution function for~$\alp_N(u)$ such that~$f = \bigcup_{i \in
    [k]} \skel_i$; i.e.,~$f(C) = \skel_i(C)$ if~$C \in \alp(B_i)$.
  It should be clear that $u_j[f] \in \PF_{N' \cup \Sigma}(X_{\rk(A_j)})\setminus X$
  for every~$j \in [n]$. 
  For every~$j \in [n]$, let $u'_j = u_j[f]$, let~$\dec_{A_j}(u'_j) =
  \langle K_j, \gamma_j \rangle$ 
(the footed $A_j$"~decomposition of $u'_j$ 
according to the above basic fact), and let~$v'_j =
  \gamma_j^*(\sequ(K_j))$.\footnote{Thus, if $\sequ(K_j)=(\seq {C'}1\ell)$ 
  with $\seq {C'}1\ell\in N'$,
  then $v'_j=(\gamma_j(C'_1),\dotsc,\gamma_j(C'_\ell))$.}
  Then $R'$~contains the rule
  \[ \langle \rho, \overline{\skel} \rangle = \sequ(K_1) \dotsm
  \sequ(K_n) \to (\word{v'}1n, \LL') \]
  with $\LL' = \{h^*_{\skel_1}(B_1), \dotsc, h^*_{\skel_k}(B_k)\}$.
  We also define the skeleton function~$\skel_{\rho,
    \overline{\skel}}$ for~$A$ by~$\skel_{\rho,
    \overline{\skel}}(A_j) = K_j$ for every~$j \in [n]$.  
  Intuitively, $K_j$ is the skeleton of a footed $A_j$-decomposition 
  of the tree generated by $A_j$, resulting from the skeletons
  guessed by $\overline{\skel}$.
  Note that the left-hand side of the rule~$\langle \rho, \overline{\skel}
  \rangle$ is~$h^*_{\skel_{\rho, \overline{\skel}}}(A)$.  This
  concludes the definition of~$G'$. It should be clear that $G'$ is footed. 
  Moreover, since the right-hand sides of the rules
  $\rho$~and~$\langle \rho, \overline{\skel} \rangle$ contain the
  same terminal symbols, $G'$~is $\Delta$"~lexicalized 
  if $G$~is $\Delta$"~lexicalized.  
  It remains to prove the correctness of~$G'$.

  For every derivation tree~$d \in L(G_\der, A)$ we define 
  a skeleton function~$\skel_d$ for~$A$ and 
  a derivation tree~$q(d) \in L(G'_\der, h^*_{\skel_d}(A))$ 
  inductively as follows. If~$d = \rho(\seq d1k)$
  with~the rule $\rho$ as above, then we define $\skel_d = \skel_{\rho,
    \overline{\skel}}$~and~$q(d) = \langle \rho, \overline{\skel}
  \rangle(q(d_1), \dotsc, q(d_k))$, where~$\overline{\skel} = (\seq
  \skel{d_1}{d_k})$.  We now claim the following.  

\smallskip
  \emph{Claim:} For every~$A = (\seq A1n)\in\N$ and every~$d \in L(G_\der, A)$, 
  if $\val(d) = (\seq t1n)$
  then $K_j[h^*_{\skel_d}(A) \gets \val(q(d))] = t_j$, 
  where $K_j=\skel_d(A_j)$, for every~$j \in [n]$.
  
\smallskip
\emph{Proof of Claim:}  Assume that~$d = \rho(\seq d1k)$ as above,
  and that the claim holds for $d_i$ for every~$i \in [k]$.
  Let $g$~be the substitution function for~$\alp_N(u)$ such that
  $g(C)$~is the $m$"~th element of~$\val(d_i)$ if $C$~is
  the $m$"~th element of~$B_i$.  So, $\val(d) = u[B_i \gets \val(d_i)
  \mid 1 \leq i \leq k] = u[g]$, and hence~$u_j[g] = t_j$ for every~$j
  \in [n]$.  We write~$[g']$ for the substitution~$[h^*_{\skel_{d_i
      }}(B_i) \gets \val(q(d_i)) \mid 1 \leq i \leq k]$.
  Consequently,~$\val(q(d)) = u'[g']$, where~$u' = \word{v'}1n$.
  We first show that~$u_j[f] [g'] = u_j[g]$ for every~$j \in [n]$.  By
  Lemma~\ref{lem:comm-assoc}(4) it suffices to show that $f(C)[g'] =
  g(C)$ for every~$C \in \alp_N(u)$.  For every~$C \in
  \alp(B_i)$ we obtain that  
  $f(C)[g'] = \skel_{d_i}(C)[h^*_{\skel_{d_i}}(B_i) \gets \val(q(d_i))]$,
  which equals $g(C)$~by the induction hypotheses.  Now let~$j \in
  [n]$. Then
  \[ K_j[h^*_{\skel_d}(A) \gets \val(q(d))] = K_j \bigl[\sequ(K_j)
  \gets v'_j[g'] \bigr] = K_j \bigl[\sequ(K_j) \gets
  \gamma_j^*(\sequ(K_j))[g'] \bigr] \enspace. \]
  By Lemma~\ref{lem:comm-assoc}(4) this equals~$K_j[\gamma_j][g']$.
  Since $\dec_{A_j}(u'_j) = \langle K_j, \gamma_j \rangle$, we obtain that
  \[ K_j[\gamma_j] [g'] = u'_j[g'] = u_j[f][g'] = u_j[g] = t_j
  \enspace. \]
  \emph{This proves the claim.} Note that it provides a footed $A_j$-decomposition of $t_j$
  (in fact, the unique one). 

\smallskip
  In the case where~$A = S$ we obtain that~$\skel_d(S) = \langle S,
  0, \varepsilon \rangle$.  Thus,
  $\val(q(d)) = \val(d)$~by the claim. Hence~$L(G) \subseteq L(G')$.   
  Clearly, for every skeleton function~$\skel$, the set $L_{\skel}$ of all
  derivation trees~$d$ with~$\skel_d = \skel$ is a
  regular tree language, which
  can be computed by a deterministic bottom-up finite tree
  automaton that uses all skeleton functions as states.  
  The \LDTR"~transducer~$M$ that computes~$q(d)$
  from~$d$ has one state~$q$, and it has the rules  
  \[ \langle q, \rho(y_1 \colon L_{\skel_1}, \dotsc, y_k
  \colon L_{\skel_k}) \rangle \to \langle \rho,
  \overline{\skel} \rangle (\langle q, y_1 \rangle, \dotsc,
  \langle q, y_k\rangle) \enspace, \] 
  where $\overline{\skel} = (\seq \skel 1k)$. 
  In the other direction, every derivation
  tree~$d' \in L(G'_\der)$ can be turned into a derivation tree~$d = M'(d')$
  in~$L(G_\der)$ by changing every label~$\langle \rho,
  \overline{\skel} \rangle$ into just~$\rho$, and it is straightforward to
  show that~$q(d) = d'$.  This shows that
  $L(G') \subseteq L(G)$, and hence the correctness of~$G'$.
\end{proof}

\begin{example}
  \label{exa:monadic}
  \upshape
  Let $\Sigma = \{\tau^{(3)}, \ell^{(1)}, r^{(1)}, a^{(0)}, b^{(0)},
  e^{(0)}\}$.  Intuitively $\ell$~stands for a left parenthesis and
  $r$~for a right parenthesis.  We consider the footed spCFTG~$G_1 =
  (N_1, \Sigma, S, R_1)$ with the set of nonterminals~$N_1 = \{S, A, A'\}$, of
  which $A$~has rank~$1$ and $A'$~is an alias of~$A$, and the rules
  \[ S \to \ell A(A'(re)) \qquad \qquad 
  A(x_1) \to \ell A(A'(rx_1)) \qquad \text{and} \qquad 
  A(x_1) \to \ell \tau(a, b, r x_1) \enspace, \]
  where we have omitted the rules with left-hand side~$A'(x_1)$.  Let
  $\Delta = \{a,b\}$.  Since $G_1$~is $\Delta$"~growing, it has finite
  $\Delta$"~ambiguity.  However, as we will show in
  Remark~\ref{rem:nstagnonlex},  there is no $\Delta$"~lexicalized
  footed spCFTG~$G$ with~$L(G) = L(G_1)$.  
  The basic reason for this is that the set 
  $\{\yield_{\{\ell,r\}}(t)\mid t\in L(G_1)\}\subseteq\{\ell,r\}^*$
  consists of all balanced strings of parentheses $\ell$ and $r$. 
  In fact, $G_1$~is a
  straightforward variant of the TAG of~\cite{kuhsat12}, for which
  there is no (strongly) equivalent $\Delta$"~lexicalized TAG.
  Note that we defined nsTAGs to be footed spCFTGs. 
  We will also show in Remark~\ref{rem:nstagnonlex} that there is no 
  $\Delta$"~lexicalized spCFTG~$G$ with $\wid(G)\leq 1$ that is 
  equivalent to~$G_1$.

  From Corollary~\ref{cor:main}, we obtain a $\Delta$"~lexicalized
  spCFTG~$G_2$ with~$\wid(G_2) = 2$ that is equivalent to~$G_1$.  It
  has the new nonterminals $B = \langle A, b, 0, X_1 \rangle$~and~$B'
  = \langle A', b, 0, X_1 \rangle$, where $\rk(B) = 2$~and $B'$~is an 
  alias of~$B$.  For the sake of readability we interchanged the two
  arguments of $B$ (and those of~$B'$), and similarly we used~$B$ instead of~$B'$
  in the first two rules, so that $A'$~has become superfluous.  Its
  rules are
  \begin{alignat*}{5}
    \rho_1 \colon \;S 
    &\to \ell A(B(b, re)) 
    & \rho_2 \colon \;A(x_1) 
    & \to \ell A(B(b, rx_1))
    & \rho_3 \colon \;A(x_1)
    & \to \ell \tau(a, b, rx_1) \\
    && \qquad \rho_4 \colon \;B(x_1, x_2) 
    & \to \ell B(x_1, B'(b, rx_2)) \qquad
    & \rho_5 \colon \;B(x_1, x_2)
    & \to \ell \tau(a, x_1, rx_2) \enspace,
  \end{alignat*}
  plus the rules $\rho_4'$ and $\rho_5'$ for the alias $B'$ of $B$. 
  Clearly, the tree~$B(b, x_1)$ generates the same terminal trees
  as~$A(x_1)$.  More precisely, if~$A(x_1)$ generates the tree~$\ell^n 
  \tau(a, b, wx_1)$, where $n \in \nat$~and~$w \in \Sigma^*$, then $B(x_1,
  x_2)$~generates~$\ell^n \tau(a, x_1, w x_2)$.

  Rules $\rho_4$ and $\rho_5$ are not footed. 
  We now turn~$G_2$ into an equivalent $\Delta$"~lexicalized footed
  MCFTG~$G_2'$ using the construction
  in the proof of Theorem~\ref{thm:footed}.  
  For rule $\rho_5=B(x_1, x_2)\to u_5$ and~$\overline{\skel} = \varepsilon$, we obtain 
  the footed $B$-decomposition $\dec_B(u_5)=\langle K_5,\gamma_5\rangle$ such that 
  $K_5= B_0(B_1,x_1,B_3(x_2))$, where $B_0=\langle B,3,\varepsilon\rangle$, 
  $B_1=\langle B,0,1\rangle$, and $B_3=\langle B,1,3\rangle$, and $\gamma_5$ is defined as follows:
  $\gamma_5(B_0)= \ell\tau(x_1,x_2,x_3)$, $\gamma_5(B_1)= a$, and $\gamma_5(B_3)= rx_1$. 
  The resulting rule $\tilde{\rho}_5=\langle \rho_5,\varepsilon\rangle$ is 
  \[\tilde{\rho}_5 \colon \quad (B_0(x_1,x_2,x_3), B_1, B_3(x_1))
    \to (\ell \tau(x_1,x_2,x_3), a, rx_1) \]
  with left-hand side $\sequ(K_5) = (B_0,B_1,B_3)$, 
  and the corresponding skeleton function for~$B$ is~$\skel_5 = \skel_{\rho_5, \varepsilon}$
  such that~$\skel_5(B)=K_5$. 
  The construction of this rule is illustrated in the first part of Figure~\ref{fig:decomp}. 
  Of course we obtain similar primed results for $B'$. 
  Taking $\overline{\skel} = (\skel_5,\skel_5')$ and 
  substituting $K_5$ for $B$ and $K_5'$ for $B'$ in the right-hand side $u_4=\ell B(x_1, B'(b, rx_2))$
  of rule $\rho_4$, we obtain 
  $u_4'= \ell B_0(B_1,x_1,B_3(B_0'(B_1',b,B_3'(rx_2))))$
  which has the footed $B$-decomposition $\dec_B(u_4')=\langle K_4,\gamma_4\rangle$ where $K_4=K_5$, 
  $\gamma_4(B_0)= \ell B_0(x_1,x_2,x_3)$, $\gamma_4(B_1)= B_1$, and 
  $\gamma_4(B_3)= B_3(B_0'(B_1',b,B_3'(rx_1)))$. 
  The resulting rule $\tilde{\rho}_4=\langle \rho_4,(\skel_5,\skel_5')\rangle$ is 
  \[\tilde{\rho}_4 \colon \quad (B_0(x_1,x_2,x_3), B_1, B_3(x_1))
    \to (\ell B_0(x_1,x_2,x_3),\,B_1,\,B_3(B_0'(B_1',b,B_3'(rx_1)))) \enspace.\]
  Since the skeleton function $\skel_{\rho_4,(\skel_5,\skel_5')}$ for $B$ is again $\skel_5$, 
  these are all the necessary rules of $G_2'$ with left-hand side $(B_0,B_1,B_3)$, and similarly for 
  $(B_0',B_1',B_3')$. 
  The decomposition $\dec_A(u_3)=\langle K_3,\gamma_3\rangle$ of $u_3=\ell \tau(a, b, rx_1)$
  is simply $K_3=\langle A,1,\varepsilon\rangle(x_1)$ and $\gamma_3(\langle A,1,\varepsilon\rangle)=u_3$. 
  Thus, identifying $\langle A,1,\varepsilon\rangle$ with~$A$, 
  grammar $G_2'$ has the rule $\tilde{\rho}_3=\rho_3$.
  Substituting $K_3$ for $A$ and $K_5$ for $B$ in the right-hand side $u_2$ of $\rho_2$ 
  we obtain the tree $u_2'= \ell A(B_0(B_1, b, B_3(rx_1)))$ which, just as $u_3$, decomposes into itself.
  Thus, $G_2'$ has the rule $\tilde{\rho}_2 = A(x_1)\to u_2'$. The construction of this rule is 
  illustrated in the second part of Figure~\ref{fig:decomp}. Finally, by a similar process 
  (identifying $\langle S,0,\varepsilon\rangle$ with $S$), 
  we obtain the rule $\tilde{\rho}_1 = S\to u_2'[x_1\gets e]$.
  Summarizing, $G'_2$~has the
  nonterminals~$\{S, A, B_0, B'_0, B_1, B'_1, B_3, B'_3\}$ and the big
  nonterminals~$\{S, A, (B_0, B_1, B_3), (B'_0, B'_1, B'_3)\}$. Its
  rules (apart from those for the alias $(B'_0, B'_1, B'_3)$) are 
   \begin{align*}
    \tilde{\rho}_1 \colon
    && S
    & \to \ell A(B_0(B_1, b, B_3(re))) \\
    \tilde{\rho}_2 \colon
    && A(x_1)
    & \to \ell A(B_0(B_1, b, B_3(rx_1))) \\
    \tilde{\rho}_3 \colon
    && A(x_1)
    & \to \ell \tau(a, b, rx_1) \\ 
    \tilde{\rho}_4 \colon
    && (B_0(x_1,x_2,x_3), B_1, B_3(x_1))
    &\to (\ell B_0(x_1,x_2,x_3),\, B_1,\, B_3(B'_0(B'_1, b,
    B'_3(rx_1)))) \\
    \tilde{\rho}_5 \colon
    && (B_0(x_1,x_2,x_3)), B_1, B_3(x_1))
    & \to (\ell \tau(x_1,x_2,x_3), a, rx_1) \enspace.
  \end{align*}
  To see that $L(G_2')=L(G_1)$ we observe that  
  the tree~$K_5=B_0(B_1, x_1, B_3(x_2))$
  generates the same terminal trees as~$B(x_1, x_2)$ (as formalized in 
  the Claim in the proof of Theorem~\ref{thm:footed}), and hence
  $B_0(B_1, b, B_3(x_1))$ generates the same terminal trees
  as~$A(x_1)$. \fin
\end{example} 

\begin{figure}[t]
  \centering
  \includegraphics{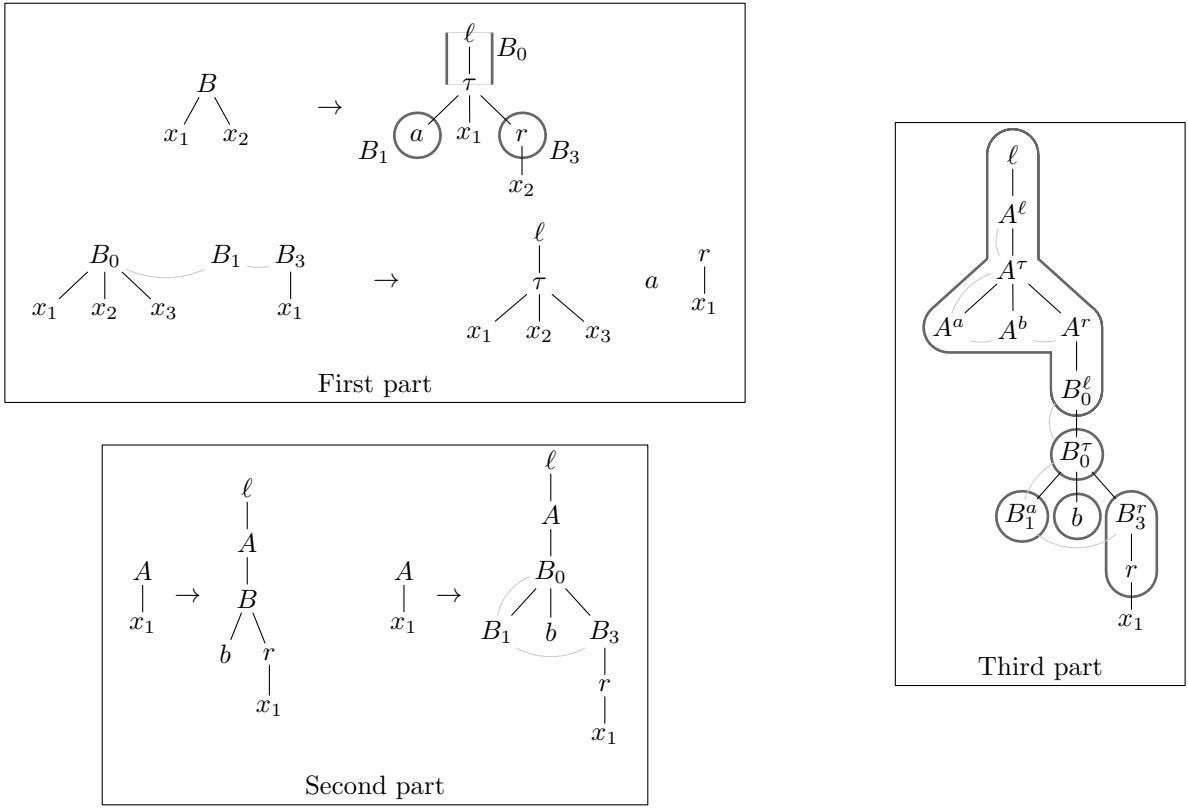}
  \caption{First part: Illustration of the footed
    decomposition~$\langle K_5,\gamma_5\rangle$ of the right-hand side
    of rule~$\rho_5$, with the resulting rule~$\tilde{\rho}_5$.  
    Second part: Substitution of the skeleton~$K_5$ of~$B$ into
    rule~$\rho_2$.  Third part: Adjoining $A$"~decomposition of
    Example~\protect{\ref{exa:foottotag1}}.} 
  \label{fig:monadic}
  \label{fig:decomp}
\end{figure}

\begin{example}
  \label{exa:footed}
  \upshape
  As another, very simple example we again consider the spCFTG~$G$
  with the following rules
  \[ S \to A(e) \qquad A(x_1) \to \sigma(A(x_1)) \qquad \text{and}
  \qquad A(x_1) \to \tau(a, x_1, b) \enspace, \] which was also
  discussed before Theorem~\ref{thm:footed}. The only skeleton of $A$ needed by the
  equivalent footed MCFTG~$G'$ is $A_0(A_1,x_1,A_3)$ where $A_0=\langle A,3,\varepsilon\rangle$,
  $A_1=\langle A,0,1\rangle$, and $A_3=\langle A,0,3\rangle$.
  Its big nonterminals are $S'=\langle S,0,\varepsilon\rangle$~and~$(A_0, A_1, A_3)$,
  and its rules are
  \begin{align*}
    S' 
    &\to A_0(A_1, e, A_3) \\
    (A_0(x_1, x_2, x_3), A_1, A_3)
    &\to (\sigma(A_0(x_1, x_2,
    x_3)), A_1, A_3)  \\
    (A_0(x_1, x_2, x_3), A_1, A_3) 
    &\to (\tau(x_1, x_2,
    x_3), a, b) \enspace. 
  \end{align*}
  Note that $G'$ is not an spCFTG. \fin
\end{example}

Let us now discuss set-local multi-component tree adjoining
grammars~(MC"~TAGs).  In the beginning of this subsection we have
defined a \emph{non-strict} MC"~TAG~(nsMC"~TAG) to be a footed MCFTG.
To convince the reader familiar with~TAGs we add some more
terminology, which should make this clear.  Let~$A \to (u, \LL)$ be a
rule with $A =(\seq A1n)$~and~$u = (\seq u1n)$.  If the rule is
initial (i.e.,~$A = S$), then the right-hand side~$u$ together with
the set~$\LL$ of links is called an \emph{initial tree}, and
otherwise it is called an \emph{auxiliary forest}.  Application of
the rule consists of adjunctions and substitutions.  The
replacement of the nonterminal~$A_j$ by~$u_j$ is called an
\emph{adjunction} if $\rk(A_j) > 0$ and a \emph{substitution}
if~$\rk(A_j) = 0$.  An occurrence of a nonterminal~$C \in N$ in~$u$
with~$\rk(C) > 0$ has an obligatory adjunction~(OA) constraint,
whereas an occurrence of a terminal~$\sigma \in \Sigma$ in~$u$
with~$\rk(\sigma) > 0$ has a null adjunction~(NA) constraint.  In the
same manner we handle obligatory and null substitution (OS~and~NS)
constraints.  Each big nonterminal~$B \in \LL$ can be viewed as a
selective adjunction/substitution~(SA/SS) constraint, which restricts
the auxiliary forests that can be adjoined/substituted for~$B$ to the
right-hand sides of the rules with left-hand side~$B$.

In the literature, MC"~TAGs~are usually free-choice, which means that
the set~$\LL$ of links can be dropped from the rules (see
Section~\ref{sub:basicnf}).  By Lemma~\ref{lem:renaming} this is no
restriction on footed~MCFTGs.  An~MCFTG is said to be
\emph{tree-local} (as opposed to `set-local') if for every rule as above and every~$B \in \LL$ 
there exists~$j \in [n]$ such that~$\alp(B) \subseteq \alp_N(u_j)$.
It can easily be proved that tree-local~MCFTGs have the same power as
spCFTGs, and similarly that tree-local nsMC"~TAGs have the same power
as nsTAGs.  

The first statement of Theorem~\ref{thm:footed} shows that
nsMC"~TAGs have the same tree generating power as~MCFTGs.  The second
statement shows together with Theorem~\ref{thm:main} that nsMC"~TAGs
can be (strongly) lexicalized.

\begin{corollary}
  \label{cor:mctaglex}
  For every finitely $\Delta$"~ambiguous nsMC"~TAG~$G$ there is an \LDTR"~equivalent
  $\Delta$"~lexicalized nsMC"~TAG~$G'$ such that 
  $\mu(G') \leq (\mu(G) + \mr_\Delta) \cdot \mr_\Sigma \cdot (2 \cdot \wid(G) + 1)$, 
  where $\Sigma$~is the terminal alphabet of~$G$.  
\end{corollary}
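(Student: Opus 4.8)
The statement asserts that a finitely $\Delta$"~ambiguous nsMC"~TAG (i.e., a footed MCFTG) can be transformed into an \LDTR"~equivalent $\Delta$"~lexicalized nsMC"~TAG, with an explicit multiplicity bound. Since an nsMC"~TAG is by definition a footed MCFTG, the first step is simply to invoke Theorem~\ref{thm:main}: from the finitely $\Delta$"~ambiguous MCFTG~$G$ we obtain an \LDTR"~equivalent $\Delta$"~lexicalized MCFTG~$G''$ with $\wid(G'') = \wid(G) + 1$ and $\mu(G'') = \mu(G) + \mr_\Delta$. The catch is that $G''$ need not be footed anymore, since the lexicalization construction of Lemma~\ref{lem:main} decomposes right-hand sides and introduces new nonterminals that destroy the footed property.

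To repair this, I would then apply Theorem~\ref{thm:footed} to $G''$, which (by its second statement) preserves $\Delta$"~lexicalization and produces an \LDTR"~equivalent footed MCFTG~$G'$. Since footed MCFTGs are exactly nsMC"~TAGs, this $G'$ is the desired lexicalized nsMC"~TAG. By Proposition~\ref{pro:LDTRcomp} (closure of \LDTR\@ under composition), \LDTR"~equivalence is transitive, so $G'$ is \LDTR"~equivalent to the original $G$. The only remaining work is to verify the multiplicity bound by tracking the parameters through the two applications.

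\textbf{The parameter bookkeeping is the one place that requires care.} Theorem~\ref{thm:footed} gives $\mu(G') \leq \mu(G'') \cdot \mr_\Sigma \cdot (2\cdot\wid(G'') - 1)$. Substituting $\mu(G'') = \mu(G) + \mr_\Delta$ and $\wid(G'') = \wid(G) + 1$ yields
\[
  \mu(G') \leq (\mu(G) + \mr_\Delta) \cdot \mr_\Sigma \cdot \bigl(2\cdot(\wid(G)+1) - 1\bigr)
         = (\mu(G) + \mr_\Delta) \cdot \mr_\Sigma \cdot (2\cdot\wid(G) + 1) \enspace,
\]
which is exactly the claimed bound. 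I should note that Theorem~\ref{thm:footed} requires its input to satisfy $\wid \geq 1$; this holds for $G''$ because $\wid(G'') = \wid(G) + 1 \geq 1$, so the hypothesis is automatically met (and the edge case where the input MCFTG is an MRTG causes no trouble, since lexicalization raises the width to at least~1).

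\textbf{I expect the main conceptual obstacle to be confirming that the order of application is forced and that each transformation preserves what the next one needs.} Lexicalization must come first: applying Theorem~\ref{thm:footed} first would give a footed grammar, but Theorem~\ref{thm:main} applied afterward would again break the footed property, and there is no guarantee the multiplicity bound would come out in the stated form. Applying lexicalization first and then footing works precisely because Theorem~\ref{thm:footed} is stated to preserve the $\Delta$"~lexicalized property, whereas Theorem~\ref{thm:main} makes no such claim about footedness. Thus the proof is essentially a two-line composition argument, and the substance lies entirely in citing the two theorems in the correct order and performing the short arithmetic above; no new construction is needed.
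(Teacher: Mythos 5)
Your proposal is correct and follows exactly the route the paper intends: the corollary is obtained by first applying Theorem~\ref{thm:main} to lexicalize (giving $\mu(G'') = \mu(G) + \mr_\Delta$ and $\wid(G'') = \wid(G) + 1$) and then Theorem~\ref{thm:footed} to restore the footed property, which preserves $\Delta$"~lexicalization, with \LDTR"~equivalence carried through by Proposition~\ref{pro:LDTRcomp}. Your arithmetic and your observation that the order of the two transformations is forced match the paper's reasoning precisely.
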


\begin{remark}
  \label{rem:nstagnonlex}
  \upshape
  In Example~\ref{exa:monadic}, the finitely $\Delta$"~ambiguous
  spCFTG~$G_1$ is footed and hence an~nsTAG.  
  Similarly, the $\Delta$"~lexicalized MCFTG~$G'_2$
  equivalent to~$G_1$ is footed and hence an nsMC"~TAG.  We now prove
  that there does not exist a $\Delta$"~lexicalized nsTAG equivalent
  to~$G_1$.  In other words, as opposed to nsMC"~TAGs, nsTAGs~cannot
  be strongly lexicalized.  The proof is a straightforward variant of
  the one in~\cite{kuhsat12}, and we present it here for completeness'
  sake.  

  To obtain a contradiction, let~$G = (N, \Sigma, S, R)$ be a
  reduced $\Delta$"~lexicalized nsTAG equivalent to~$G_1$.  Note that
  $G$~is a footed~spCFTG, and recall from the observations after
  Definition~\ref{def:footed} that every tree in~$L(G, A)$ is footed
  for every nonterminal~$A$.  Hence the nonterminals of~$G$
  have rank~$0$, $1$, or~$3$.  This implies that $G$~is
  \emph{right-footed}; i.e., for every rule~$A(\seq x1k) \to u \in R$ 
  of~$G$ with~$k \geq 1$, the right-hand side~$u$ is of the form~$v
  \word x1k$ with~$v \in (N \cup \Sigma)^{\scriptscriptstyle +}$.  In fact,
  if $u$~is not of that form, then it is of the form~$v \omega(u_1,
  u_2, u_3)$ with $v \in (N\cup\Sigma)^*$~and~$\omega \in N^{(3)} \cup
  \{\tau\}$ such that the foot node of~$u$ occurs in $u_1$~or~$u_2$; 
  i.e., either $u_1$~or~$u_2$ is of the form~$v_1 \omega'(\seq x1k)
  v_2$ with $v_1, v_2 \in (N \cup \Sigma)^*$~and~$\omega' \in N \cup
  \Sigma$.  But then $A$ generates terminal trees of the form~$w
  \tau(t_1, t_2, t_3)$ with~$w \in \Sigma^*$ such that either
  $t_1$~or~$t_2$ is of the form~$w_1 \gamma(\seq x1k) w_2$ with $w_1,
  w_2 \in \Sigma^*$~and~$\gamma \in \{\tau, \ell, r\}$.  This
  contradicts the form of the trees in~$L(G_1)$, in which the first
  and second arguments of~$\tau$ are always $a$~and~$b$,
  respectively.  Consequently $A$~cannot be reachable, contradicting
  the fact that $G$~is reduced.  Now it is easy to see that every right-footed
  spCFTG~$G$ can be viewed as an ordinary context-free grammar
  generating~$L(G)$ viewed as a string language.  We just replace
  every rule~$A(\seq x1k) \to v \word x1k$ by the rule~$A \to
  v$.\footnote{This generalizes the fact that every regular tree
    grammar is a context-free grammar (see
    Section~\protect{\ref{sub:trees}}).} Thus, it now remains to show
  that there is no $\{a, b\}$"~lexicalized context-free grammar~$G$
  such that~$L(G) = L(G_1)$, where $G_1$~is the context-free grammar
  with rules~$S \to \ell A A r e$, $A \to \ell AAr$, and~$A \to \ell \tau
  abr$.  Here `$\{a,b\}$"~lexicalized' means that $a$~or~$b$ occurs
  in every right-hand side of a rule of~$G$.  For a string~$w \in
  \Sigma^*$, let~$c(w) = \#_\ell(w) - \#_r(w)$, where $\#_\ell(w)$~is
  the number of occurrences of~$\ell$ in~$w$, and similarly
  for~$\#_r(w)$.  Since the ``parentheses'' $\ell$~and~$r$ are
  balanced in every string in~$L(G) = L(G_1)$, it follows
  from~\cite[Lemma~4]{knu67} that for every nonterminal~$A$ of~$G$
  there is a number~$c(A) \in \nat_0$ such that~$c(w) = c(A)$ for
  every~$w \in L(G, A)$.  For every~$v = \word v1k\in(N\cup\Sigma)^*$ 
  with~$\seq v1k \in N \cup \Sigma$, we let $c(v) = \sum_{i = 1}^k c(v_i)$.  
  Now consider a derivation~$S \Rightarrow_G v_1 \alpha v_2
  \Rightarrow_G^* w_1 \alpha w_2$ such that~$\alpha \in \{a,b\}$,
  $v_1, v_2 \in (N \cup \Sigma)^*$, $w_1, w_2 \in \Sigma^*$, and~$v_i
  \Rightarrow_G^* w_i$ for~$i \in \{1,2\}$.  Consequently,~$w_1 \alpha
  w_2 \in L(G)$. Thus $c(w_1)\in\nat_0$, due to the balancing of $\ell$ and $r$. 
  By the above, $c(w_1) = c(v_1)$.  Since $G$~is
  $\{a,b\}$"~lexicalized and has only finitely many initial rules,
  this shows that there is a number~$\kappa \in \nat_0$ 
  with the following property: for
  every string~$w \in L(G)$ there exist $\alpha \in \{a,
  b\}$~and~$w_1, w_2 \in \Sigma^*$ such that $w = w_1 \alpha
  w_2$~and~$c(w_1) \leq \kappa$.  This is a contradiction
  because it is easy to see that this does not hold for~$w = t_\kappa
  e \in L(G_1)$, where $t_0 = \ell \tau abr$~and~$t_{n + 1} = \ell t_n
  t_nr$ for every~$n \in \nat_0$.  

  This shows that nsTAGs cannot be strongly lexicalized. It also shows 
  that context-free grammars cannot be $\Delta$"~lexicalized.
  They can of course be $\Sigma$"~lexicalized.
  
  The spCFTG $G_1$ of Example~\ref{exa:monadic} is also monadic; 
  more precisely, it has width $\wid(G_1)=1$. 
  We finally prove that, as observed in Example~\ref{exa:monadic},
  there is no $\Delta$"~lexicalized monadic spCFTG equivalent to~$G_1$.  
  Let $G$~be such a grammar. By Lemma~\ref{lem:nonerasing} 
  we may assume that $G$~is nonerasing.  It can then be shown as
  above that $G$~is right-footed. However, in this case we must have 
  $k = 1$~and~$\omega = \tau$; moreover, either $u_1$~or~$u_2$ contains~$x_1$ and
  hence generates a tree that contains some~$\gamma \in \{\tau, \ell,
  r\}$ because $G$~is nonerasing.  The remainder of the proof is the
  same as above. This shows that to lexicalize an MCFTG $G$, either the width $\wid(G)$
  or the multiplicity $\mu(G)$ must increase. 
  \fin
\end{remark}

We now define strict MC"~TAGs as follows.  A \emph{(strict set-local)
  multi-component tree adjoining grammar} (in short,~MC"~TAG) is a
footed MCFTG~$G = (N, \N, \Sigma, S, R)$ for which there exists an
equivalence relation~$\equiv$ on~$N \cup \Sigma$ such that
\begin{compactenum}[\indent (1)]
\item  for all~$\sigma, \tau \in \Sigma$, if $\sigma \neq \tau$ and
  $\sigma \equiv \tau$, then $\rk(\sigma) \neq \rk(\tau)$; 
\item for every~$C \in N$ there exists~$\sigma \in \Sigma$ such
  that~$C \equiv \sigma$; and
\item for every rule~$(\seq A1n) \to ((\seq u1n), \LL)$ in~$R$ and
  every~$j \in [n]$,
  \begin{compactenum}[(a)]
  \item $u_j(\varepsilon) \equiv A_j$ and 
  \item if~$\rk(A_j) \geq 1$, then~$u_j(p) \equiv A_j$, where $p$~is the
    foot node of~$u_j$.
  \end{compactenum}
\end{compactenum}
The first requirement means that distinct equivalent terminal symbols
can be viewed as the same ``final'' symbol with different ranks.  In
this way, $\Sigma$~can be viewed as corresponding to a ``final''
alphabet, in which each symbol can have a finite number of different 
ranks, as for example in derivation trees of context-free grammars.
The second requirement means that each nonterminal~$C$ that is
equivalent to terminal~$\sigma$ can be viewed as the same final symbol
as~$\sigma$ together with some information that is relevant to 
SA~constraints.  The third requirement means that the root and foot
node of~$u_j$ are equivalent to~$A_j$; i.e., represent the same final 
symbol as~$A_j$.  Thus, intuitively, adjunction always replaces a
final symbol by a tree with that same final symbol as root label and
foot node label.  We define a \emph{tree adjoining grammar}~(in
short,~TAG) to be an MC"~TAG of multiplicity~$1$; i.e., a
footed~spCFTG that satisfies the requirements above.

\begin{example}
  \label{exa:tag}
  \upshape
A simple example of a TAG~$\overline{G}_1$ is obtained from
    the spCFTG~$G_1$ in Example~\ref{exa:monadic} by adding a terminal
    symbol~$\gamma$ of rank~$1$.  The rules of~$\overline{G}_1$ are
    \[ S \to \gamma \ell A(A'(re)) \qquad \qquad 
    A(x_1) \to \gamma \ell A(A'(r\gamma x_1)) \qquad \text{and} \qquad
    A(x_1) \to \gamma \ell \tau(a, b, r\gamma x_1) \enspace, \]
    where $A'$~is an alias of~$A$.  The equivalence relation~$\equiv$
    is the smallest one such that~$S \equiv A \equiv A' \equiv
    \gamma$.  It clearly satisfies the above three requirements.  This
    TAG is closely related to the one in~\cite{kuhsat12}.  It can be proved
    in exactly the same way as in Remark~\ref{rem:nstagnonlex} that
    there is no $\{a,b\}$"~lexicalized nsTAG equivalent
    to~$\overline{G}_1$, which slightly generalizes the result
    of~\cite{kuhsat12}.\footnote{The language class~TAL generated by
      TAGs is properly included in the language class~nsTAL, which is
      generated by~nsTAGs.  The tree language~$L = \{\ell^n r^n
      e \mid n \in \nat\}$, which is root consistent (cf.\@
      Corollary~\protect{\ref{cor:charmctal}} in the next subsection), is a witness for the
      properness. It is generated by an nsTAG with rules $S\to A(e)$, 
      $A(x_1)\to \ell A(rx_1)$, and $A(x_1)\to \ell rx_1$. 
      For the sake of a contradiction, let $G = (N,
      \Sigma, S, R)$ be a TAG such that $L(G) = L$.  
      Clearly, $\wid(G)\leq 1$ and $G$ must be right-footed 
      (cf.\@ Remark~\protect{\ref{rem:nstagnonlex}}). 
      For any unary nonterminal~$A \in N^{(1)}$ we have $L(G, A)
      \subseteq \{\ell^k x_1 \mid k \in \nat\}$ or
      $L(G, A) \subseteq \{r^k x_1 \mid k \in \nat\}$ due to the
      condition that the root label and foot node
      label must coincide.  However, since $G$ can be viewed 
      as an ordinary context-free grammar generating the string language $L$,  
      these languages $L(G,A)$ must be finite, due to pumping. 
      Hence we can transform $G$ into an equivalent right-linear context-free grammar, 
      which is a contradiction because $L$ is not regular.}  
      Thus, TAGs~cannot be strongly lexicalized by~nsTAGs. 

The MCFTG~$G'$ of Example~\ref{exa:footed} is an MC"~TAG. 
    The equivalence relation~$\equiv$
    is the smallest one such that~$S' \equiv A_0 \equiv \sigma \equiv \tau$, 
    $A_1 \equiv a$, and~$A_3 \equiv b$. Note that $\rk(\sigma) \neq \rk(\tau)$. 
 \fin
\end{example}

Let MC"~TAL denote the class of tree languages generated by MC"~TAGs. 
In the next subsection we prove that MCFT~and~MC"~TAL are almost the
same class of tree languages.

\subsection{MC"~TAL almost equals MCFT}
\label{sub:mcftismctal}
\noindent By definition, we have $\text{MC"~TAL} \subseteq
\text{MCFT}$.  In the other direction, the inclusion~$\text{MCFT}
\subseteq \text{MC"~TAL}$ does not hold because a tree language
from~MC"~TAL cannot contain two trees of which the roots are labeled
with two different symbols of the same rank.  In this subsection we
show that this is indeed the only necessary restriction.  
To prove that every language $L \in \text{MCFT}$ satisfying this restriction 
is in $\text{MC"~TAL}$, we begin with the case where the root of each tree~$t
\in L$ is labeled by the same symbol~$\sigma_0$.   In this case we
will construct an MC"~TAG of a special type, which we define next.
We first need some more terminology.

Let $G = (N, \N, \Sigma, S, R)$~be an MCFTG.  Recall from
Definition~\ref{def:footed} that a pattern~$t \in P_{N \cup
  \Sigma}(X_k)$ with $k \in \nat_0$ is footed if either $k=0$, 
or $k\geq 1$ and there is a
position~$p \in \pos_{N \cup \Sigma}(t)$, called the foot node
of~$t$, with $\rk(t(p)) = k$~and~$t(pi) = x_i$ for every~$i \in  
[k]$.  
Given a footed pattern $t \in P_{N \cup \Sigma}(X_k)$ 
with $k\geq 1$, we define~$\rlab(t) = t(\varepsilon)$
and~$\flab(t) = t(p)$ where~$p$ is the (unique) foot node of $t$.
Thus, $\rlab(t)$~and~$\flab(t)$ are the
labels of the root and the foot node of~$t$, respectively.
In the case where $k=0$ we define~$\rlab(t) = t(\varepsilon)$
and, for technical convenience, also~$\flab(t) = t(\varepsilon)$.
Thus, in this case $\rlab(t)$ is also the
label of the root of~$t$ and~$\flab(t)=\rlab(t)$. 
For~$k \geq 1$ we define the \emph{spine} of~$t$ to be the set of all
ancestors of its foot node (including the foot node itself), whereas
for~$k = 0$ the spine of~$t$ is defined to be the empty set.  

An \emph{adjoining} MCFTG is a footed MCFTG~$G$ for which there is a
mapping~$\varphi \colon N \cup \Sigma \to \Sigma$ such that 
\begin{compactenum}[\indent \quad (1)]
\item $\varphi(\sigma) = \sigma$ for every~$\sigma \in \Sigma$, and
\item $\varphi(\rlab(u_j)) = \varphi(\flab(u_j)) = \varphi(A_j)$ for
  every rule~$(\seq A1n) \to ((\seq u1n), \LL)\in R$ and every~$j \in [n]$.
\end{compactenum}
This implies that $\varphi$~is rank-preserving for nonterminals of
rank at least~$1$ (assuming that such a nonterminal generates at least
one terminal tree).  Obviously, every adjoining MCFTG is an MC"~TAG
with respect to the equivalence relation~$\equiv$ that is the kernel
of~$\varphi$; i.e.,~$\alpha \equiv \beta$ if~$\varphi(\alpha) =
\varphi(\beta)$.  By~(1) above, $\equiv$~is the identity on~$\Sigma$.
Vice versa, if $G$~is an MC"~TAG with respect to an equivalence
relation that is the identity on~$\Sigma$, then $G$~is an adjoining
MCFTG (as can easily be checked).

We now prove that for every footed~MCFTG~$G$ that generates a tree
language in which all trees have the same root label~$\sigma_0$, there is an
equivalent adjoining~MCFTG, which is also lexicalized if
$G$~is lexicalized.  In fact, the next lemma proves a slightly more
general fact, which will be needed to prove the theorem following the
lemma. The proof of the lemma is very similar to the one of Theorem~\ref{thm:footed},
with a further decomposition of the trees in the right-hand sides of the rules. 

\begin{lemma}
  \label{lem:foottotag}
  Let $G = (N, \N, \Sigma, S, R)$ be a footed~MCFTG and let $\sigma_0 \in
  \Sigma$.  Then there is an adjoining MCFTG~$G^{\sigma_0}$ such that
  $L(G^{\sigma_0}) = \{t \in L(G) \mid t(\varepsilon) = \sigma_0\}$ 
  and~$\mu(G^{\sigma_0}) =
  \mu(G) \cdot \abs{\Sigma} \cdot \mr_\Sigma$.
  Moreover, if~$G$~is $\Delta$"~lexicalized, then so is~$G^{\sigma_0}$.
\end{lemma}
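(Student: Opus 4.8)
The goal is to transform a footed MCFTG~$G$ into an \emph{adjoining} MCFTG~$G^{\sigma_0}$ that generates exactly the trees of~$L(G)$ with root label~$\sigma_0$. The obstacle, compared with Theorem~\ref{thm:footed}, is that being footed is not enough: an adjoining grammar additionally requires a rank-preserving map $\varphi\colon N\cup\Sigma\to\Sigma$ with $\varphi(\rlab(u_j))=\varphi(\flab(u_j))=\varphi(A_j)$ for every rule. In other words, along the spine of each right-hand side pattern the root label and the foot label must both ``represent'' the same terminal symbol $\varphi(A_j)$. A footed pattern need not satisfy this, because its root and foot may carry labels with differing $\varphi$-images, or its root may be a nonterminal whose eventual terminal root label is not yet determined. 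The plan is therefore to refine the nonterminals of~$G$ so that each one records the terminal symbol that will appear at the root (and, along the spine, at the foot) of the tree it generates, and then to \emph{re-decompose} each right-hand side along its spine into footed pieces whose root and foot labels are controlled.

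First I would guess, for each nonterminal $C\in N$, the root label $\varphi(C)\in\Sigma$ of the terminal (sub)tree that $C$ will generate; since $G$ is footed, for a nonterminal of rank $\geq 1$ the foot label coincides with this same symbol, so a single terminal symbol $s\in\Sigma$ of the appropriate rank suffices as the guessed annotation. This is analogous to the skeleton-function device in the proof of Theorem~\ref{thm:footed}: I would attach to each $C$ a pair $\langle C,s\rangle$ with $s\in\Sigma$ and $\rk(s)=\rk(C)$ (when $\rk(C)\geq 1$), and verify the guess bottom-up via regular look-ahead on the derivation trees. Second, for a rule $(\seq A1n)\to((\seq u1n),\LL)$ with guessed annotations, I would take each footed pattern $u_j$ and cut it along its spine at every position where the label does not yet have the correct $\varphi$-image, replacing those spine positions by fresh nonterminals that carry the required terminal annotation. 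Concretely, each $u_j$ is written as a composition of footed patterns $u_j=w_0[w_1[\cdots]]$ following the spine, where the cuts are placed so that in the resulting rules every right-hand side piece has its root \emph{and} its foot node labeled by a symbol with $\varphi$-image equal to the annotation of the corresponding left-hand side nonterminal. The new nonterminals needed for this are indexed by a nonterminal of $N$, a terminal symbol of $\Sigma$ (the guessed $\varphi$-value), and a position bounded by $\mr_\Sigma$ — this is exactly where the multiplicity bound $\mu(G)\cdot\abs{\Sigma}\cdot\mr_\Sigma$ comes from.

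The correctness proof then mirrors the Claim in the proof of Theorem~\ref{thm:footed}. I would define, by induction on a derivation tree $d\in L(G_\der,A)$, an annotation (the root-label guess determined by $\val(d)$) and a translated derivation tree $q(d)\in L(G^{\sigma_0}_\der,\cdot)$, and prove that reassembling the pieces generated by $q(d)$ along the spine skeleton recovers $\val(d)$. The distributivity lemmas Lemma~\ref{lem:comm-assoc}(2) and~(4) together with Lemma~\ref{lem:valsub} handle the substitution bookkeeping. The restriction to root label $\sigma_0$ is imposed only at the initial nonterminal: $G^{\sigma_0}$ keeps only those initial rules whose guessed annotation of $S$ forces $\rlab$ of the generated tree to be $\sigma_0$. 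The map $\varphi$ sends $\langle C,s\rangle$ to $s$ and fixes $\Sigma$, so requirement~(1) holds by construction and requirement~(2) holds precisely because of the spine re-decomposition; thus $G^{\sigma_0}$ is adjoining. The translations between derivation trees of $G$ and $G^{\sigma_0}$ are realized by an \LDTR"~transducer (using the annotation guesses as look-ahead) and by a projection in the reverse direction, exactly as in Theorem~\ref{thm:footed}, which also gives the \LDTR"~equivalence. Finally, $\Delta$"~lexicalization is preserved because the re-decomposition only redistributes the symbols of $u_j$ among the new right-hand side pieces without deleting any terminal occurrence, so the total number of lexical positions in the rules is unchanged.

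The main obstacle I anticipate is the spine re-decomposition itself: ensuring that the cuts can always be placed so that \emph{both} the root and the foot of every resulting piece carry the correct $\varphi$-image, while keeping the pieces footed and the number of new nonterminals bounded by $\abs{\Sigma}\cdot\mr_\Sigma$. The key observation making this work is that in a footed pattern the spine is a single path from root to foot, so a potential label mismatch can be repaired by introducing at most one fresh spine nonterminal per spine node, and the foot condition of a footed grammar guarantees that the bottom of each piece already has the right rank; the guessed terminal annotation then fixes the symbol, closing the argument.
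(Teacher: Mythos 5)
Your proposal has a genuine gap, and it sits at the very first step. You claim that since $G$ is footed, the foot label of the terminal tree generated by a nonterminal of rank $\geq 1$ coincides with its root label, so that a single annotated copy $\langle C,s\rangle$ with $\rk(s)=\rk(C)$ suffices as the guess. Footedness only requires that a foot node exists and passes the variables in order; it says nothing about its label. In Example~\ref{exa:footed}, the nonterminal $A_0$ has rank~$3$ and generates the trees $\sigma^n\tau(x_1,x_2,x_3)$, whose root label $\sigma$ has rank~$1$ and whose foot label $\tau$ has rank~$3$: root and foot labels differ, and the root label does not even have rank $\rk(A_0)$, so your annotation cannot record it. This mismatch is precisely the difficulty the lemma must overcome: in an adjoining grammar, every nonterminal of rank $\geq 1$ generates only trees whose root and foot labels have the same $\varphi$-image, so a tree such as $\sigma^n\tau(x_1,x_2,x_3)$ can never be generated by a \emph{single} nonterminal of the target grammar. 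The original nonterminal must itself be split into a sequence of new nonterminals, each generating one spine piece of the generated tree; a one-symbol annotation cannot express this, and if your root-equals-foot claim were true the whole lemma would reduce to a relabeling, which it does not.

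Even granting the spine-cutting in your second step, two essential ingredients are missing. First, the object to be guessed is not a root label but the entire decomposition shape of the tree generated by each nonterminal (in the paper's terminology, a \emph{skeleton} of $C$: a footed pattern over the new nonterminals). Trees generated by the same nonterminal with the same root label can require different cut structures (spine label sequence $aba$ yields one piece, $ab$ yields two), so the root label determines neither where the cuts go nor how many new nonterminals replace $C$; and without knowing the skeletons of the link nonterminals you cannot substitute them into $u_j$, re-decompose, and write down the left-hand side $\sequ(K_1)\dotsm\sequ(K_n)$ and the links of the new rule. Second, you never establish why the number of pieces is bounded; your bound on positions by $\mr_\Sigma$ accounts only for the rank-0 off-spine pieces. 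The $\abs{\Sigma}$ factor in the multiplicity bound rests on a combinatorial fact your proof never states: every nonempty string over $\Sigma$ factors into at most $\abs{\Sigma}$ blocks, each beginning and ending with the same symbol; equivalently, cutting the spine greedily at the \emph{longest} spine position whose label has the same $\varphi$-image as the root yields a skeleton whose spine labels are pairwise distinct, hence at most $\abs{\Sigma}$ spine nodes and at most $\abs{\Sigma}\cdot\mr_\Sigma$ skeleton nodes in total. Without this distinctness condition the skeletons do not range over a finite set, the constructed grammar need not be finite, and the claimed bound $\mu(G^{\sigma_0})=\mu(G)\cdot\abs{\Sigma}\cdot\mr_\Sigma$ has no justification.
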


\begin{proof}
  The basic idea of this proof is that, for any alphabet~$\Omega$,
  every string~$w \in \Omega^{\scriptscriptstyle +}$ can be decomposed
  as~$w = \word w1n$ such that $1 \leq n \leq \abs{\Omega}$, $w_i \in
  \Omega^{\scriptscriptstyle +}$, and the first and last symbol
  of~$w_i$ are the same.  We quickly prove this by induction
  on~$\abs{\Omega}$.  Let $a$~be the first symbol of~$w$, and let
  $w_1$~be the longest prefix of~$w$ that ends on~$a$.  Then~$w = w_1
  w'$ with~$w' \in (\Omega \setminus \{a\})^*$.  If~$w' =
  \varepsilon$, then we are ready.  Otherwise we apply the induction
  hypothesis. This decomposition is of course not unique. For example,
  the proof gives $abab = aba \cdot b$, but another 
  decomposition is $abab = a \cdot bab$. 

  Let $G = (N, \N, \Sigma, S, R)$ be a footed~MCFTG, and let~$\sigma_0
  \in \Sigma$.  We define~$G^{\sigma_0} = (N', \N', \Sigma,
  S^{\sigma_0}, R')$, where~$N'$, $\N'$, and~$R'$ do not depend
  on~$\sigma_0$.  The set~$N'$ of nonterminals consists of all
  $4$"~tuples~$\langle C, \sigma, m, p \rangle$ with~$C \in N$,
  $\sigma \in \Sigma$, $m \in \{0,\rk(\sigma)\}$, and~$p \in \nat^*$ 
  such that $\abs{p} < \abs{\Sigma}$.
  The rank of~$\langle C, \sigma, m, p \rangle$ is $m$.  The initial nonterminal
  is~$S^{\sigma_0} = \langle S, \sigma_0, 0, \varepsilon \rangle$. 
  Let~$\varphi \colon N' \cup \Sigma \to \Sigma$ be defined
  by~$\varphi(\langle C, \sigma, m, p \rangle) = \varphi(\sigma) =
  \sigma$.  We will define $\N'$~and~$R'$ in such a way that
  $G^{\sigma_0}$~is an adjoining MCFTG with respect to~$\varphi$.

  For every nonterminal~$C \in N$, a \emph{skeleton} of~$C$ is a footed
  pattern~$K \in P_{N'}(X_{\rk(C)})$ such that
  \begin{compactenum}[\indent \quad (1)] 
  \item for every $p \in \pos_{N'}(K)$ there exist $\sigma\in\Sigma$ 
  and $m \in \{0,\rk(\sigma)\}$ such that 
    $K(p) = \langle C, \sigma, m, p \rangle$,
  \item every subtree of~$K$ in~$T_{N'}$ is in~$(N')^{(0)}$, and
  \item $\varphi(K(p)) \neq \varphi(K(p'))$ for every two distinct
    positions $p, p' \in \pos_{N'}(K)$ on the spine of~$K$.
  \end{compactenum}
  For such a skeleton~$K$, we define~$\sequ(K) = \yield_{N'}(K)$,
  which is an element of~$(N')^{\scriptscriptstyle +}$. We note
  that there are only finitely many skeletons of~$C$. In fact, 
  $\abs{\pos_{N'}(K)}\leq \abs{\Sigma}\cdot \mr_\Sigma$ 
  for every skeleton $K$ of~$C$, if $\rk(C)\geq 1$. 
  Additionally, if~$\rk(C) = 0$, then every skeleton of~$C$ is of the
  form~$\langle C, \sigma, 0, \varepsilon \rangle$ with~$\sigma \in
  \Sigma$.  We finally note that $K$~can be reconstructed
  from~$\sequ(K)$ because $K$~is footed.

We will apply the above basic idea to the sequence of $\varphi$-images of the labels 
of the nodes on the spine of a footed pattern~$u$; i.e., to the sequence 
$(\varphi(u(p_1)),\dotsc,\varphi(u(p_n)))$ where $\seq p1n$ are the positions 
on the spine of $u$, in the order of increasing length. 
This leads to a decomposition of $u$
that can be represented by a skeleton~$K$ and 
a substitution function~$\gamma$ such that $u=K[\gamma]$.
Formally, let $K \in P_{N'}(X)$~be
  a skeleton of~$C \in N$.  A~substitution function~$\gamma$
  for~$\alp_{N'}(K)$ is \emph{adjoining} if, for every~$C' \in
  \alp_{N'}(K)$, the pattern~$\gamma(C') \in P_{N' \cup \Sigma}(X)$ is
  footed and $\varphi(\rlab(\gamma(C'))) = \varphi(\flab(\gamma(C'))) =
  \varphi(C')$.  We say that the pair~$\langle K, \gamma \rangle$ is
  an \emph{adjoining $C$"~decomposition} of the tree~$K[\gamma]$.

  \paragraph{Basic fact} Every footed pattern~$u$ has an adjoining 
  $C$"~decomposition~$\dec_C(u)$.  More precisely, 
  for every~$C \in N$ and every 
  footed pattern~$u \in P_{N' \cup \Sigma}(X_{\rk(C)})$ 
  there is a pair~$\dec_C(u) = \langle K, \gamma
  \rangle$ such that $K$~is a skeleton of~$C$,
  $\gamma$~is an adjoining substitution function for~$\alp_{N'}(K)$,
  and~$K[\gamma] = u$.

  \paragraph{Proof of the basic fact} Let~$\sigma =
  \varphi(\rlab(u))$.  First suppose that~$\rk(u) = 0$.
  Then~$\dec_C(u) = \langle K, \gamma \rangle$ with $K = \langle C,
  \sigma, 0, \varepsilon \rangle$~and~$\gamma(\langle C, \sigma, 0,
  \varepsilon \rangle) = u$.  Now suppose that~$\rk(u) \geq 1$.  We
  use induction on the cardinality of the spine of~$u$.  Let $q$~be the
  longest position on the spine of~$u$ such that~$\varphi(u(q)) = 
  \sigma$, and let $\rk(\sigma) = m$.  If $q$~is the foot node of~$u$, 
  then $\dec_C(u) = \langle
  K, \gamma \rangle$ with $K = \init(\langle C, \sigma, m, \varepsilon
  \rangle)$~and~$\gamma(\langle C, \sigma, m, \varepsilon \rangle) =
  u$.  Otherwise, let~$i \in \nat$ be the unique integer such that
  $qi$~is a position on the spine of~$u$.  Let~$u' = u|_{qi}$, and
  let~$\dec_C(u') = \langle K', \gamma' 
  \rangle$ by the induction hypothesis.  Then~$\dec_C(u) = \langle K,
  \gamma \rangle$, where $K$~and~$\gamma$ are defined as follows.  Let
  $K''$~be obtained from~$K'$ by changing every label~$\langle C,
  \sigma', m', p \rangle$ into~$\langle C, \sigma', m', ip \rangle$.
  Then $K$~is the tree
  \begin{center}
    \begin{tikzpicture}[level/.style={sibling distance=20mm/#1,
        level distance=15mm}]
      \node {$\langle C, \sigma, m, \varepsilon \rangle$} 
      child {node {$\langle C, \sigma_1, 0, 1 \rangle$} } 
      child {node {$\dots$} } 
      child {node {$\langle C, \sigma_{i-1}, 0, i-1 \rangle$} }
      child {node {$K''$} }
      child {node {$\langle C, \sigma_{i+1}, 0, i+1 \rangle$} } 
      child {node {$\dots$} } 
      child {node {$\langle C, \sigma_m, 0, m \rangle$} };
    \end{tikzpicture}
  \end{center}
  where $\sigma_j = \varphi(u(qj))$ for every~$j
  \in [m] \setminus \{i\}$.  Moreover, the substitution
  function~$\gamma$ is defined by: 
  \begin{compactitem}
  \item $\gamma(\langle C, \sigma, m, \varepsilon \rangle) =
    (u|^q)[\SBox \gets \init(\sigma)]$,
  \item $\gamma(\langle C, \sigma_j, 0, j \rangle) = u|_{qj}$ for
    every~$j \in [m] \setminus \{i\}$, and
  \item $\gamma(\langle C, \sigma', m', ip \rangle) = \gamma'(\langle
    C, \sigma', m', p \rangle)$ for every~$\langle C, \sigma', m', ip
    \rangle \in \alp_{N'}(K'')$.
  \end{compactitem}
  It is straightforward to verify that $K$~and~$\gamma$ satisfy the
  requirements, which completes the proof of the basic fact.

\medskip
The definition of the set $\N'$ of big nonterminals and the set $R'$ of rules
is exactly the same as in the proof of Theorem~\ref{thm:footed}.\footnote{Except that 
in the construction of the rule $\langle \rho,\overline{\delta}\rangle$ 
it should be clear that $u_j[f]$ is a footed pattern in~$P_{N' \cup \Sigma}(X_{\rk(C)})$.
Moreover, the decomposition $\dec_{A_j}(u_j[f])$ is of course an 
adjoining $A_j$-decomposition of $u_j[f]$.}
It should be clear that $G^{\sigma_0}$~is adjoining with respect to~$\varphi$.
The correctness of~$G^{\sigma_0}$ is also proved in the same way as 
in the proof of Theorem~\ref{thm:footed}.
The Claim and its proof are exactly the same. 
In the case where~$A = S$ we obtain in the claim that~$\skel_d(S) = \langle S,
  \sigma, 0, \varepsilon \rangle$ with~$\sigma \in \Sigma$, and hence
  $\val(q(d)) = \val(d)$.   Since $G^{\sigma_0}$~is
  adjoining, it is easy to see that~$\sigma = \val(q(d))(\varepsilon)$;
  i.e., the root symbol of~$\val(d)$.  Hence~$\{t \in L(G)
  \mid t(\varepsilon) = \sigma_0\} \subseteq L(G^{\sigma_0})$.  As in the proof of
  Theorem~\ref{thm:footed} there is an \LDTR"~transducer~$M$ that
  computes~$q(d)$ from~$d$, and every derivation
  tree~$d' \in L(G^{\sigma_0}_\der, \langle S, \sigma, 0, \varepsilon
  \rangle)$ can be turned into a derivation tree~$d = M'(d') \in L(G_\der)$ 
  such that~$q(d) = d'$ by changing every label~$\langle \rho,
  \overline{\skel} \rangle$ into~$\rho$. 
  Taking~$\sigma = \sigma_0$ this shows that
  $L(G^{\sigma_0}) \subseteq \{t \in L(G) \mid t(\varepsilon) =
  \sigma_0\}$, and hence the correctness of~$G^{\sigma_0}$.
\end{proof}

\begin{example}
  \label{exa:foottotag1}
  \upshape
    Let us consider the MCFTG~$G'_2$ of
    Example~\ref{exa:monadic}.  As already observed in
    Remark~\ref{rem:nstagnonlex}, $G'_2$~is footed and hence
    an~nsMC"~TAG. Here we illustrate the proof of
    Lemma~\ref{lem:foottotag} by constructing the adjoining
    MCFTG~$G^{\ell}$ for~$G = G'_2$; note that $G^{\ell}$~is
    equivalent to~$G'_2$ because~$t(\varepsilon) = \ell$ for every~$t
    \in L(G'_2)$.  We recall that $G'_2$~has the following rules 
    (where we replace $\tilde{\rho}_i$ by $\rho_i$, for convenience): 
   \begin{align*}
    \rho_1 \colon 
    && S
    & \to \ell A(B_0(B_1, b, B_3(re))) \\
    \rho_2 \colon
    && A(x_1)
    & \to \ell A(B_0(B_1, b, B_3(rx_1))) \\
    \rho_3 \colon
    && A(x_1)
    & \to \ell \tau(a, b, rx_1) \\  
    \rho_4 \colon
    && (B_0(x_1,x_2,x_3), B_1, B_3(x_1))
    &\to (\ell B_0(x_1,x_2,x_3),\, B_1,\, B_3(B'_0(B'_1, b,
    B'_3(rx_1)))) \\
    \rho_5 \colon
    && (B_0(x_1), B_1, B_3(x_1))
    & \to (\ell \tau(x_1,x_2,x_3), a, rx_1) \enspace,
  \end{align*}
    plus the rules $\rho_4'$ and $\rho_5'$ for the alias $(B_0',B_1',B_3')$ of $(B_0,B_1,B_3)$.
    For rule~$\rho_5$
    and~$\overline{\skel} = \varepsilon$, we obtain the skeleton
    function~$\skel_5 = \skel_{\rho_5, \varepsilon}$
    for~$(B_0, B_1, B_3)$ such that
    \[ \skel_5(B_0) = B_0^\ell(B_0^\tau(x_1,x_2,x_3))
    \qquad \skel_5(B_1) = B_1^a
    \qquad \text{and} \qquad \skel_5(B_3) = B_3^r(x_1) \enspace, \] 
where 
$B_0^\ell = \langle B_0, \ell, 1, \varepsilon \rangle$, 
$B_0^\tau = \langle B_0, \tau, 1, 1 \rangle$,
$B_1^a = \langle B_1, a, 0, \varepsilon \rangle$, and 
$B_3^r = \langle B_3, r, 1, \varepsilon \rangle$. 
The resulting rule $\tilde{\rho}_5=\langle \rho_5,\varepsilon\rangle$ is 
  \[\tilde{\rho}_5 \colon \quad (B_0^\ell(x_1),B_0^\tau(x_1,x_2,x_3), B_1^a, B_3^r(x_1))
    \to (\ell x_1, \,\tau(x_1,x_2,x_3), \,a, \,rx_1) \enspace.\]
Substituting $\skel_5(B_i)$ for $B_i$ (and $\skel'_5(B_i')$ for $B_i'$) 
in the right-hand side $u_4$ of rule $\rho_4$, we obtain the forest 
$u_4'= (\ell B_0^\ell(B_0^\tau(x_1,x_2,x_3)), \,B_1^a, 
\,B_3^r(B_0'^\ell(B_0'^\tau(B_1'^a,b,B_3'^r(rx_1)))))$ and 
from that the following rule $\tilde{\rho}_4 = \langle \rho_4, (\skel_5, \skel'_5) \rangle$: 
    \begin{align*}
      \tilde{\rho}_4 \colon \quad 
      & \phantom{{}\to{}}
        (B_0^\ell(x_1),B_0^\tau(x_1,x_2,x_3), B_1^a, B_3^r(x_1)) \\*
      &\to (\ell B_0^\ell(x_1), \,B_0^\tau(x_1,x_2,x_3), \,B_1^a, 
            \,B_3^r(B_0'^\ell(B_0'^\tau(B_1'^a,b,B_3'^r(rx_1))))) \enspace,
    \end{align*}
and the skeleton function~$\skel_{\rho_4, (\skel_5, \skel'_5)} = \skel_5$
for~$(B_0, B_1, B_3)$. Thus, these are all the new rules obtained from $\rho_4$ and $\rho_5$. 
    We now turn to rules $\rho_3$~and~$\rho_2$.  The only skeleton
    needed for~$A$ is the tree
    \[ K = \skel_{\rho_3, \varepsilon}(A) = A^\ell(A^\tau(A^a, A^b,
    A^r(x_1))) \enspace, \] 
    where~$A^\ell = \langle A, \ell, 1, \varepsilon \rangle$, $A^\tau
    = \langle A, \tau, 1, 1 \rangle$, $A^a = \langle A, a, 0, 11
    \rangle$, $A^b = \langle A, b, 0, 12 \rangle$, and~$A^r = \langle
    A, r, 1, 13 \rangle$.  The resulting rule~$\tilde{\rho}_3 = \langle
    \rho_3, \varepsilon \rangle$ is
    \[ \tilde{\rho}_3 \colon \quad (A^\ell(x_1), A^\tau(x_1, x_2, x_3), A^a,
    A^b, A^r(x_1)) \to (\ell x_1,\, \tau(x_1, x_2, x_3),\, a,\, b,\,
    rx_1) \enspace. \]
    Substituting~$K$ for~$A$ and $\skel_5(B_i)$ for $B_i$ in the right-hand side~$u_2=\ell
    A(B_0(B_1, b, B_3(rx_1)))$ of~$\rho_2$, we
    obtain the tree
    \[ u_2' = \ell A^\ell(A^\tau(A^a, A^b, A^r(B_0^\ell(B_0^\tau(B_1^a, b,
    B_3^r(rx_1)))))) \enspace. \]
    It has the adjoining $A$"~decomposition
    $\dec_A(u_2') = \langle K, \gamma \rangle$ such that $\gamma(A^\ell)
    = \ell A^\ell(A^\tau(A^a, A^b, A^r(B_0^\ell(x_1))))$, $\gamma(A^\tau) = 
    B_0^\tau(x_1, x_2, x_3)$, $\gamma(A^a) = B_1^a$, $\gamma(A^b) = b$,
    and~$\gamma(A^r) = B_3^r(r x_1)$, which is illustrated in
    the third part of Figure~\ref{fig:decomp}.  The resulting rule~$\tilde{\rho}_2 =
    \langle \rho_2, (\skel_{\rho_3, \varepsilon}, \skel_5) \rangle$
    is 
    \begin{align*}
      \tilde{\rho}_2 \colon \quad 
      & \phantom{{}\to{}}
        (A^\ell(x_1), A^\tau(x_1, x_2, x_3), A^a, A^b, A^r(x_1)) \\
      &\to (\ell A^\ell(A^\tau(A^a, A^b, A^r(B_0^\ell(x_1)))),\, B_0^\tau(x_1,
        x_2, x_3),\, B_1^a,\, b,\, B_3^r(r x_1)) \enspace.
    \end{align*}
    Finally, we consider rule~$\rho_1$.  The only skeleton needed
    for~$S$ is~$S^\ell = \langle S, \ell, 0, \varepsilon \rangle$,
    which is the initial nonterminal of~$G^\ell$.  Substituting~$K$
    for~$A$ and $\skel_5(B_i)$ for $B_i$ in the right-hand side~$\ell A(B_0(B_1, b,
    B_3(re)))$ of~$\rho_1$, we obtain the tree~$u_2'[x_1 \gets e]$ and
    the new rule
    \[ \tilde{\rho}_1 \colon \quad S^\ell \to \ell A^\ell(A^\tau(A^a, A^b,
    A^r(B_0^\ell(B_0^\tau(B_1^a, b, B_3^r(re)))))) \enspace, \]
    where~$\tilde{\rho}_1 = \langle \rho_1, (\skel_{\rho_3, \varepsilon},
    \skel_5) \rangle$.  Thus, $G^\ell$~has the rules~$\{\tilde{\rho}_1,
    \tilde{\rho}_2, \tilde{\rho}_3, \tilde{\rho}_4, \tilde{\rho}_5, 
    \tilde{\rho}'_4, \tilde{\rho}'_5\}$.  Clearly, the tree~$K$
    generates the same terminal trees as~$A(x_1)$ and the tree $\skel_5(B_i)$
    generates the same terminal trees as~$\init(B_i)$ for every $i\in[3]$.  
    It is easy to
    check that $G^\ell$~is an $\{a, b\}$"~lexicalized MC"~TAG with
    respect to the smallest equivalence~$\equiv$ such that~$C^x \equiv x$
    for every $C\in\{S,A,B_0,B_0',B_1,B_1',B_3,B_3'\}$ and 
    every~$x \in \{\ell, \tau, a, b, r\}$. 

We finally mention that, in Example~\ref{exa:monadic}, the first rule of the grammar $G_2$ 
could be changed into 
the rule $S\to \ell B(b,B'(b,re))$, because $B(b,x_1)$ generates the same terminal trees as $A(x_1)$.
This makes the nonterminal $A$ superfluous. We have not done this, 
for the sake of illustration of our constructions. As a result of this change, 
the three rules $\tilde{\rho}_1,\tilde{\rho}_2, \tilde{\rho}_3$ of $G^\ell$ can be changed into 
the one rule 
$S^\ell \to \ell B_0^\ell(B_0^\tau(B_1^a, b, B_3^r(B_0'^\ell(B_0'^\tau(B_1'^a, b, B_3'^r(re))))))$.
\fin
\end{example}

\begin{example}
  \label{exa:foottotag2}
  \upshape
  As another, similar example, let us consider the $\{a,
  b\}$"~lexicalized MCFTG~$G$ obtained from~$G_2'$ by changing in its
  rules every~$\ell$ into~$\gamma\ell$ and every~$r$ (except the one
  in~$\rho_1$) into~$r\gamma$, where $\gamma$~has rank~$1$.  Thus,
  $G$~has the rules
  \begin{align*}
    \rho_1 \colon
    && S
    & \to \gamma \ell A(B_0(B_1, b, B_3(re)))
    & \rho_3 \colon
    && A(x_1)
    &\to \gamma\ell\tau(a, b, r \gamma x_1) \\
    \rho_2 \colon
    && A(x_1) 
    &\to \gamma \ell A(B_0(B_1, b, B_3(r\gamma x_1)))
    & \rho_5 \colon
    && B
    & \to \bigl(\gamma\ell \tau(x_1,x_2,x_3), a, r\gamma x_1 \bigr) \\
    \rho_4 \colon 
    && B 
    & \rlap{$\displaystyle {} \to \bigl(\gamma \ell B_0(x_1,x_2,x_3),\,
      B_1,\, B_3(B'_0(B'_1, b, B'_3(r \gamma x_1))) \bigr) \enspace,$}
  \end{align*}
  where $B = (B_0(x_1,x_2,x_3), B_1, B_3(x_1))$.  Clearly, $G$~is
  equivalent to the TAG~$\overline{G}_1$ of Example~\ref{exa:tag},
  for which there is no equivalent $\{a, b\}$"~lexicalized nsTAG. 

  Since $\rho_2$~and~$\rho_3$ are MC"~TAG rules with respect to~$A
  \equiv \gamma$, they do not have to be changed.  It is not difficult
  to see that the only skeleton function needed for~$(B_0, B_1,
  B_3)$ is~$\skel_5$ with~$\skel_5(B_0) =
  B_0^\gamma(B_0^\ell(B_0^\tau(x_1)))$, $\skel_5(B_1) = B_1^a$,
  and~$\skel_5(B_3) = B_3^r(B_3^\gamma(x_1))$, where~$B_0^\gamma =
  \langle B_0, \gamma, 1, \varepsilon \rangle$, $B_0^\ell = \langle
  B_0, \ell, 1, 1 \rangle$, $B_0^\tau = \langle B_0, \tau, 1, 11
  \rangle$, and similarly for~$B_3$, and~$B_1^a = \langle B_1, a, 0,
  \varepsilon \rangle$.  Given these skeletons, it is
  straightforward to construct the following rules for~$G^\gamma$: 

  \begin{alignat*}{5}
    \tilde{\rho}_1 \colon
    && \; S^\gamma
    & \to \gamma \ell A(B_0^\gamma(B_0^\ell(B_0^\tau(B_1^a, b,
    B_3^r(B_3^\gamma(re))))))
    & \tilde{\rho}_3 \colon
    && \; A(x_1)
    & \to \gamma \ell \tau(a, b, r\gamma x_1) \\
    \tilde{\rho}_2 \colon
    && \; A(x_1) 
    & \to \gamma \ell A(B_0^\gamma(B_0^\ell(B_0^\tau(B_1^a, b,
    B_3^r(B_3^\gamma(r \gamma x_1))))))
    & \hspace{0.25cm} \tilde{\rho}_5 \colon
    && \; \bar{B}
    & \to \bigl(\gamma x_1, \ell x_1, \tau(x_1,x_2,x_3), a, r x_1,
    \gamma x_1 \bigr) \\
    \rlap{$\displaystyle \hspace{-1.7em} \tilde{\rho}_4 \colon \;
      \bar{B} \to \bigl(\gamma \ell B_0^\gamma(x_1),\,
      B_0^\ell(x_1),\, B_0^\tau(x_1,x_2,x_3),\,  B_1^a,\,
      B_3^r(B_3^\gamma(B_0'^\gamma(B_0'^\ell(B_0'^\tau(B_1'^a,
      b, B_3'^r(B_3'^\gamma(rx_1))))))),\, \gamma x_1 \bigr)$}
  \end{alignat*}   
  where~$\bar{B} = (B_0^\gamma(x_1), B_0^\ell(x_1),
  B_0^\tau(x_1,x_2,x_3), B_1^a, B_3^r(x_1), B_3^\gamma(x_1))$.
  Clearly, $G^\gamma$~is an $\{a, b\}$"~lexicalized MC"~TAG equivalent
  to the TAG~$\overline{G}_1$. \fin
\end{example}

Let us say that a tree language~$L$ is \emph{root consistent}
if~$\rk(t_1(\varepsilon)) \neq \rk(t_2(\varepsilon))$
for all~$t_1, t_2 \in L$
such that $t_1(\varepsilon) \neq t_2(\varepsilon)$.  It should
be clear that every tree language in MC"~TAL is root consistent.

\begin{theorem} 
  \label{thm:mctal}
  For every MCFTG~$G$ such that $L(G)$~is root consistent, there is an
  \LDTR"~equivalent MC"~TAG~$G'$ such that 
  \[\mu(G') \leq
  \begin{cases} 
  \mu(G)  & \text{if } \wid(G) = 0 \\
  \mu(G) \cdot \abs \Sigma \cdot \mr^2_\Sigma \cdot (2 \cdot \wid(G) - 1)
  & \text{if } \wid(G) \geq 1 \enspace,
  \end{cases} \]
  where $\Sigma$~is the terminal
  alphabet of~$G$.  Moreover, if~$G$~is $\Delta$"~lexicalized, then so is~$G'$.
\end{theorem}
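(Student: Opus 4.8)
The plan is to reduce the general root-consistent case to the single-root-label case already handled by Lemma~\ref{lem:foottotag}. First I would apply Theorem~\ref{thm:footed} to replace $G$ by an \LDTR"~equivalent footed MCFTG, which (assuming $\wid(G)\geq 1$) increases the multiplicity by a factor of at most $\mr_\Sigma\cdot(2\cdot\wid(G)-1)$; if $\wid(G)=0$ the grammar is an MRTG, which is trivially footed, and no increase is needed. Note that Theorem~\ref{thm:footed} preserves $\Delta$"~lexicalization, so the lexicalized hypothesis is carried along. The resulting footed grammar generates the same root-consistent tree language $L(G)$.

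Next I would exploit root consistency. Let $\Sigma_0=\{t(\varepsilon)\mid t\in L(G)\}$ be the (finite) set of root symbols actually occurring; by root consistency, distinct symbols in $\Sigma_0$ have distinct ranks, so $\abs{\Sigma_0}\leq \mr_\Sigma+1$. For each $\sigma_0\in\Sigma_0$ apply Lemma~\ref{lem:foottotag} to the footed grammar to obtain an adjoining MCFTG $G^{\sigma_0}$ with $L(G^{\sigma_0})=\{t\in L(G)\mid t(\varepsilon)=\sigma_0\}$, at the cost of a further factor $\abs{\Sigma}\cdot\mr_\Sigma$ in multiplicity; this step also preserves $\Delta$"~lexicalization. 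Each $G^{\sigma_0}$ is an adjoining MCFTG and hence, as observed before Lemma~\ref{lem:foottotag}, an MC"~TAG with respect to the equivalence relation $\equiv_{\sigma_0}$ that is the kernel of its witnessing map $\varphi_{\sigma_0}$, which is the identity on $\Sigma$.

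The final step is to combine the grammars $G^{\sigma_0}$ for $\sigma_0\in\Sigma_0$ into a single MC"~TAG $G'$. I would take the disjoint union of their nonterminal sets (renaming if necessary, using Lemma~\ref{lem:renaming} which preserves \LDTR"~equivalence), add a fresh initial nonterminal $S'$ together with, for each initial rule $S^{\sigma_0}\to(u,\LL)$ of $G^{\sigma_0}$, a copy with left-hand side $S'$. Since each $G^{\sigma_0}$ is start-separated, $S'$ does not occur in any right-hand side, so the languages are simply merged: $L(G')=\bigcup_{\sigma_0\in\Sigma_0}L(G^{\sigma_0})=L(G)$, and as with the start-separation normal form in Section~\ref{sub:basicnf} the derivation trees are related by LDT"~transducers, giving \LDTR"~equivalence. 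The multiplicity of $G'$ is the maximum over the $G^{\sigma_0}$, which does not grow under the union, yielding the stated bound $\mu(G)\cdot\abs{\Sigma}\cdot\mr^2_\Sigma\cdot(2\cdot\wid(G)-1)$ in the case $\wid(G)\geq 1$ (the two factors of $\mr_\Sigma$ coming from Theorem~\ref{thm:footed} and Lemma~\ref{lem:foottotag}, the factor $2\cdot\wid(G)-1$ from Theorem~\ref{thm:footed}, and $\abs{\Sigma}$ from Lemma~\ref{lem:foottotag}). The $\Delta$"~lexicalized property is preserved throughout.

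The main obstacle I anticipate is the combination step: I must supply a single equivalence relation $\equiv$ on $N'\cup\Sigma$ witnessing that the merged $G'$ is a \emph{strict} MC"~TAG, not merely a disjoint collection of MC"~TAGs. Here root consistency is exactly what is needed. Each $\equiv_{\sigma_0}$ is the identity on $\Sigma$, so their union on $\Sigma$ is already consistent; on the nonterminal side the sets are disjoint, so I can define $\equiv$ as the union of the $\equiv_{\sigma_0}$ together with the class of the new symbol $S'$. The delicate point is requirement~(1) in the definition of MC"~TAG, namely that equivalent distinct terminals have different ranks: this must hold for the whole of $\Sigma$ simultaneously, and it does because each $G^{\sigma_0}$ is adjoining with $\varphi_{\sigma_0}$ the identity on $\Sigma$, so no two distinct terminals are forced to be equivalent by any $G^{\sigma_0}$; thus $\equiv$ restricted to $\Sigma$ is the identity and requirement~(1) is vacuous. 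Checking requirements~(2) and~(3) for $S'$ and its rules then follows directly from the corresponding properties of the rules $S^{\sigma_0}\to(u,\LL)$, since $S'$ inherits the root and foot structure of each copied initial rule and can be assigned $\varphi(S')=\sigma_0$ consistently on a per-rule basis once one observes that each initial rule of an adjoining grammar has a single determined root symbol.
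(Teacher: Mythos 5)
Your overall route is the same as the paper's: make $G$ footed via Theorem~\ref{thm:footed}, apply Lemma~\ref{lem:foottotag} for the root symbols in $\Omega=\{t(\varepsilon)\mid t\in L(G)\}$, and merge the resulting adjoining grammars under a fresh initial nonterminal, with the same accounting of the multiplicity factors. (The paper merges more economically: since $N'$, $\N'$, and $R'$ constructed in Lemma~\ref{lem:foottotag} do not depend on $\sigma_0$, it simply identifies the nonterminals $\langle S,\sigma,0,\varepsilon\rangle$ for $\sigma\in\Omega$ into one symbol $S'$, instead of taking a disjoint union; this difference is cosmetic, and your \LDTR"~equivalence and language arguments go through.)

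However, your treatment of the equivalence relation in the combination step is wrong, and it is precisely there that root consistency must be used. You claim that $\equiv$ restricted to $\Sigma$ remains the identity, so that requirement~(1) of the MC"~TAG definition is vacuous, and that $\varphi(S')$ can be chosen ``on a per-rule basis''. Neither is tenable: $\equiv$ is a single equivalence relation, so $S'$ lies in exactly one class, and requirement~(3)(a) applied to each initial rule $S'\to(u,\LL)$ forces $u(\varepsilon)\equiv S'$. For a rule inherited from $G^{\sigma_0}$, the root label $u(\varepsilon)$ satisfies $\varphi_{\sigma_0}(u(\varepsilon))=\sigma_0$, hence $u(\varepsilon)\equiv_{\sigma_0}\sigma_0$; since your $\equiv$ contains every $\equiv_{\sigma_0}$, transitivity yields $\sigma_1\equiv S'\equiv\sigma_2$ for all $\sigma_1,\sigma_2\in\Sigma_0$. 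Thus as soon as $\abs{\Sigma_0}\geq 2$, distinct terminals are forced to be equivalent, and requirement~(1) is exactly the nontrivial condition: it demands that these equivalent terminals have distinct ranks, which is precisely what root consistency of $L(G)$ guarantees. (Your only stated use of root consistency, the bound $\abs{\Sigma_0}\leq\mr_\Sigma+1$, plays no essential role.) The correct equivalence is the one the paper takes: the smallest $\equiv$ such that $\sigma_1\equiv\sigma_2\equiv S'$ for all $\sigma_1,\sigma_2\in\Omega$ and $\langle C,\sigma,b,p\rangle\equiv\sigma$ for every nonterminal $\langle C,\sigma,b,p\rangle$; requirements~(2) and~(3) then follow from the adjoining property of the $G^{\sigma_0}$, and~(1) from root consistency. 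With this repair your construction works; as stated, the final step fails.
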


\begin{proof}
  With the help of Theorem~\ref{thm:footed}, we may assume that~$G =
  (N, \N, \Sigma, S, R)$ is a footed MCFTG.  The set~$\Omega =
  \{t(\varepsilon) \mid t \in L(G)\}$ can be
  computed by deciding the emptiness of~$L(G^\sigma)$ for
  every~$\sigma \in \Sigma$, where $G^\sigma$~is the MCFTG of
  Lemma~\ref{lem:foottotag}.  Now let $\sigma_0$~be an arbitrary
  element of~$\Omega$, and construct the adjoining MCFTG~$G^{\sigma_0}
  = (N', \N', \Sigma, S^{\sigma_0}, R')$ as in the proof of
  Lemma~\ref{lem:foottotag}.  From~$G^{\sigma_0}$ we construct~$G'$ by
  identifying all nonterminals~$\langle S, \sigma, 0, \varepsilon
  \rangle$ such that~$\sigma \in \Omega$ and taking the resulting
  nonterminal~$S'$ to be the initial nonterminal of~$G'$.  Since
  $G^{\sigma_0}$~is adjoining, it is straightforward to check that
  $G'$~is an MC"~TAG with respect to the smallest equivalence~$\equiv$
  such that~$\sigma_1 \equiv \sigma_2 \equiv S'$ for all~$\sigma_1,
  \sigma_2 \in \Omega$ and~$\langle C, \sigma, b, p \rangle \equiv
  \sigma$ for all~$\langle C, \sigma, b, p \rangle \in N'$.  It is
  easy to modify the \LDTR"~transducers $M$~and~$M'$ in the proof of
  Lemma~\ref{lem:foottotag} such that they show the \LDTR"~equivalence
  of $G$~and~$G'$. We finally note that if $\wid(G)=0$, then 
  $\mu(G^{\sigma_0})= \mu(G)$ by the proof of 
  Lemma~\ref{lem:foottotag}.
\end{proof}

We now can characterize MCFT~and~MC"~TAL in terms of each other in a
very simple way.  

\begin{corollary}
  \label{cor:charmctal}
  Let $\#$~be a new symbol of rank~$1$.  Then
  \[ \textup{MC"~TAL} = \{L \in \textup{MCFT} \mid L \text{ is root
    consistent}\} \qquad \text{and} \qquad 
  \textup{MCFT} = \{ L \mid
  \#(L) \in \textup{MC"~TAL}\} \enspace.\]  
\end{corollary}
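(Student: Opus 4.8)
The first equality follows almost immediately from the earlier results. The inclusion from left to right holds because every MC"~TAG is by definition a (footed) MCFTG, so that MC"~TAL $\subseteq$ MCFT, and because every tree language in MC"~TAL is root consistent: in an MC"~TAG the root label of every generated tree is $\equiv$"~equivalent to the initial nonterminal, so requirement~(1) of the definition of MC"~TAG forces two distinct such root labels to have different ranks. For the reverse inclusion, let $L\in\textup{MCFT}$ be root consistent and let $G$ be an MCFTG with $L(G)=L$. Then $L(G)$ is root consistent, so Theorem~\ref{thm:mctal} provides an \LDTR"~equivalent MC"~TAG~$G'$, and \LDTR"~equivalence gives $L=L(G)=L(G')\in\textup{MC"~TAL}$.

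For the second equality I first treat the inclusion $\textup{MCFT}\subseteq\{L\mid\#(L)\in\textup{MC"~TAL}\}$. Given $L=L(G)$ for a start-separated MCFTG~$G$, I would form $G^{\#}$ by replacing every initial rule $S\to(u,\LL)$ (here $u$ is a single tree, since $\rk(S)=0$) by $S\to(\#(u),\LL)$, leaving all other rules unchanged. This is a legal MCFTG rule of rank~$0$, and since $S$ is start-separated its only effect is to prepend $\#$ to the root of each generated tree, so $L(G^{\#})=\#(L)\in\textup{MCFT}$. As every tree of $\#(L)$ has root label $\#$, the language $\#(L)$ is trivially root consistent, and the first equality yields $\#(L)\in\textup{MC"~TAL}$.

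The reverse inclusion is the heart of the matter: it amounts to showing that MCFT is closed under removal of a root marker, i.e., if $M\in\textup{MCFT}$ and every tree of $M$ has root $\#$ (with $\#$ occurring nowhere else), then $\{s\mid\#(s)\in M\}\in\textup{MCFT}$. This is the main obstacle, since removing the root $\#$ is not a tree homomorphism: the $\#$ may be produced arbitrarily deep in a derivation and then carried to the root along an unbounded spine. The plan is a finite annotation construction. Starting from an MCFTG~$G$ with $L(G)=M$, I would use Lemma~\ref{lem:nonerasing} to assume $G$ nonerasing and Lemma~\ref{lem:renaming} to assume disjoint big nonterminals; then every generated component has a root label in $\Sigma\cup\{\#\}$, and by Lemma~\ref{lem:valocc}(1) the symbol $\#$ is produced exactly once in every complete derivation. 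A least fixed point computation then determines the set $P$ of pairs $(A,i)$ such that the $i$"~th component of some forest in $L(G,A)$ has root $\#$; because $\#$ occurs only at the overall root, any such component lies on the unique root spine.

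For every $(A,i)\in P$ I would introduce a flagged big nonterminal $A^{[i]}$, obtained from $A=(A_1,\dots,A_n)$ by replacing $A_i$ with a fresh nonterminal $A_i^{\flat}$ of the same rank that generates the $i$"~th component with its root $\#$ deleted. Its rules are read off from the rules $A\to(u,\LL)$ of $G$ according to the root of $u_i$: if $u_i=\#(w)$, then the flagged rule deletes the $\#$ and emits $w$ in position~$i$, keeping $\LL$; if $u_i(\varepsilon)=D$ is a nonterminal that is the $m$"~th entry of a link $B\in\LL$ with $(B,m)\in P$, then the flagged rule relabels $D$ to $D^{\flat}$ and replaces $B$ in the links by $B^{[m]}$, thus propagating the flag one step down the spine. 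Taking $S^{\flat}$ as the new initial nonterminal yields an MCFTG generating $\{s\mid\#(s)\in M\}$. Correctness is a routine induction on derivation trees in both directions, relating a $G$"~derivation whose $i$"~th value has root $\#$ with the corresponding flagged derivation, and uses only Lemma~\ref{lem:valocc} together with the substitution identity of Lemma~\ref{lem:comm-assoc}(4), exactly as in the correctness proofs of the normal form lemmas. Applying this closure to $M=\#(L)\in\textup{MC"~TAL}\subseteq\textup{MCFT}$ finally gives $L\in\textup{MCFT}$, completing the second equality.
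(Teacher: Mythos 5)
Your handling of the first equality and of the inclusion $\textup{MCFT} \subseteq \{L \mid \#(L) \in \textup{MC"~TAL}\}$ matches the paper's proof. The divergence is in the reverse inclusion, and there your guiding premise is simply false in this paper's framework: removing the root marker \emph{is} a tree homomorphism. Since $\#$ has rank~$1$, the mapping $h$ with $h(\#) = x_1$ and $h(\sigma) = \init(\sigma)$ for every other $\sigma$ is a legitimate \emph{simple} tree homomorphism: ``linear and nondeleting'' constrains the variables (the subtree below a $\#$-labeled node is kept, exactly once), not the node label itself, and indeed $x_1 \in P_\Sigma(X_1)$. Since $\#$ does not occur in the trees of $L$, we get $\hat{h}(\#(L)) = L$, so Lemma~\ref{lem:cover} (closure of MCFT under tree homomorphisms) gives immediately: if $\#(L) \in \textup{MC"~TAL} \subseteq \textup{MCFT}$, then $L \in \textup{MCFT}$. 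That is the paper's entire proof of this direction --- one application of a lemma you yourself cite. The paper uses exactly this device elsewhere, e.g.\ $h(\alpha) = x_1$ for unary $\alpha \notin \Delta$ in the proof of Lemma~\ref{lem:ymcft}(1), and the footnote in the proof of Theorem~\ref{thm:monadic}, which introduces a root marker and afterwards removes it ``by Lemma~\ref{lem:cover}''.

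Your alternative --- the relation $P$, flagged nonterminals $A^{[i]}$ with a fresh component $A_i^{\flat}$, and flag propagation down the root spine --- is essentially a hand-made re-derivation of this special case of Lemma~\ref{lem:cover}. I believe it can be pushed through: after Lemma~\ref{lem:nonerasing} the root of each component $u_i$ is in $N \cup \Sigma$, so your case split ($u_i = \#(w)$ versus $u_i(\varepsilon)$ a nonterminal entry of a link) is exhaustive for the components that can produce a $\#$-root, and the induction you gesture at is the standard one used in the paper's normal-form lemmas. But it replaces a single lemma application by a page of construction whose correctness is only sketched, and it obscures the actual reason the statement is easy. The point worth internalizing is definitional: ``nondeleting'' in this paper forbids dropping subtrees, not erasing a unary node --- which is precisely why rules like $B(x_1) \to x_1$ and homomorphisms like $\yield_\Delta$ are admissible throughout the paper.
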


\begin{proof}
  The first equality is immediate from Theorem~\ref{thm:mctal} and the
  fact that every tree language in MC"~TAL is root consistent.  It is
  easy to see that if~$L \in \text{MCFT}$, then $\#(L) \in
  \text{MCFT}$.  This also holds in the other direction because MCFT
  is closed under tree homomorphisms by Lemma~\ref{lem:cover}.  The
  second equality now follows from Theorem~\ref{thm:mctal} because
  $\#(L)$~is root consistent.
\end{proof}

As observed in the Introduction this corollary 
settles a problem stated in~\cite[Section~4.5]{wei88}, which 
can be reformulated as ``it would be interesting to investigate whether 
MC"~TAL is properly included in MCFT''. 
By the first statement of Corollary~\ref{cor:charmctal} that is indeed the case; 
i.e., MCFTGs are slightly more powerful than MC"~TAGs.
However, by the second statement they have the same power 
provided that MC"~TAGs are allowed to make use of a root-marker.
Another obvious way to ``force'' equality of MCFT and MC"~TAL is to allow 
MCFTGs, and hence MC"~TAGs, to use several initial nonterminals instead of just one.
It is clear that this does not change the class MCFT. 
Thus, the proper inclusion of MC"~TAL in MCFT is due to minor technicalities. 
For that reason we feel justified to state that MCFTGs and MC"~TAGs
have the same tree generating power. 

As another corollary we obtain from Theorems
\ref{thm:mctal}~and~\ref{thm:main} that MC"~TAGs
can be (strongly) lexicalized.  Thus,
although TAGs cannot be strongly lexicalized, as proved
in~\cite{kuhsat12} (cf.\@ Remark~\ref{rem:nstagnonlex}), MC"~TAGs
can.  This was illustrated in Example~\ref{exa:foottotag2}. 

\begin{theorem}
  \label{thm:lexmctag}
  For every finitely $\Delta$"~ambiguous MC"~TAG~$G$ there is an \LDTR"~equivalent
  $\Delta$"~lexicalized MC"~TAG~$G'$ such that 
  $\mu(G') \leq
  (\mu(G) + \mr_\Delta) \cdot \abs \Sigma \cdot \mr^2_\Sigma \cdot (2 \cdot \wid(G) + 1),$
  where $\Sigma$~is the terminal alphabet of~$G$.
\end{theorem}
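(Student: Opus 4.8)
The plan is to obtain $G'$ by composing two constructions already established in the paper: the lexicalization of arbitrary MCFTGs (Theorem~\ref{thm:main}) followed by the conversion of a root-consistent MCFTG back into an MC"~TAG (Theorem~\ref{thm:mctal}). The only real work is to check that the hypotheses chain together correctly and that the two multiplicity estimates compose to the stated bound. First I would regard the given MC"~TAG~$G$ simply as an MCFTG, which is legitimate since every MC"~TAG is by definition a footed MCFTG. As $G$~is assumed to be finitely $\Delta$"~ambiguous, Theorem~\ref{thm:main} applies and yields a $\Delta$"~lexicalized MCFTG~$G_1$ that is \LDTR"~equivalent to~$G$ and satisfies $\wid(G_1) = \wid(G) + 1$ and $\mu(G_1) = \mu(G) + \mr_\Delta$. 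Note that $G_1$~need no longer be footed, let alone an MC"~TAG, so a second step is required to return to the MC"~TAG format.

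Next I would apply Theorem~\ref{thm:mctal} to~$G_1$. Its hypothesis is that $L(G_1)$~be root consistent, which holds because $L(G_1) = L(G)$ by \LDTR"~equivalence, and $L(G)$~lies in MC"~TAL, so that it is root consistent by the observation made just before Theorem~\ref{thm:mctal}. Theorem~\ref{thm:mctal} then produces an \LDTR"~equivalent MC"~TAG~$G'$, which is moreover $\Delta$"~lexicalized because $G_1$~is (by the ``Moreover'' clause of that theorem). Since \LDTR"~equivalence is an equivalence relation (using Proposition~\ref{pro:LDTRcomp} and the discussion following Definition~\ref{def:LDTReq}), $G'$~is \LDTR"~equivalent to~$G$, as required.

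For the multiplicity bound I would use that $\wid(G_1) = \wid(G) + 1 \geq 1$, so that the second case of the estimate in Theorem~\ref{thm:mctal} applies to~$G_1$, giving
\[ \mu(G') \leq \mu(G_1) \cdot \abs{\Sigma} \cdot \mr_\Sigma^2 \cdot (2 \cdot \wid(G_1) - 1) \enspace. \]
Substituting $\mu(G_1) = \mu(G) + \mr_\Delta$ together with $2 \cdot \wid(G_1) - 1 = 2 \cdot (\wid(G) + 1) - 1 = 2 \cdot \wid(G) + 1$ yields exactly the claimed inequality. The terminal alphabet~$\Sigma$ is left unchanged by both constructions, so $\Sigma$~and $\mr_\Sigma$ denote the same objects for $G$, $G_1$, and $G'$ throughout.

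Since the result is essentially a corollary of Theorems~\ref{thm:main} and~\ref{thm:mctal}, I do not expect a serious technical obstacle; the points that require care are purely bookkeeping, namely verifying in sequence that the hypotheses are met (finite $\Delta$"~ambiguity to enter Theorem~\ref{thm:main}, root consistency to enter Theorem~\ref{thm:mctal}, and width at least~$1$ to select the correct case of its bound) and confirming that the two multiplicity factors multiply out to the form $(\mu(G) + \mr_\Delta) \cdot \abs{\Sigma} \cdot \mr_\Sigma^2 \cdot (2 \cdot \wid(G) + 1)$.
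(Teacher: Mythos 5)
Your proposal is correct and is exactly the argument the paper intends: Theorem~\ref{thm:lexmctag} is stated there as a corollary of Theorems~\ref{thm:main} and~\ref{thm:mctal}, obtained by lexicalizing the MC"~TAG viewed as an MCFTG and then converting back to an MC"~TAG, with root consistency of $L(G)$ and the width increase $\wid(G)+1 \geq 1$ justifying the second case of the bound in Theorem~\ref{thm:mctal}. Your bookkeeping of the multiplicity and width parameters, and the use of Proposition~\ref{pro:LDTRcomp} to compose the two \LDTR"~equivalences, matches the paper's reasoning precisely.
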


\subsection{Monadic MCFTGs}
\label{sub:monadic}
\noindent 
We say that an MCFTG $G$ is \emph{monadic} if $\wid(G) \leq 1$.
For instance, the grammars of Examples~\ref{exa:copy}, \ref{exa:main},  
and~\ref{exa:monadic} are monadic.
As observed in the beginning of this section, it is shown in~\cite{keprog11} that 
nsTAGs have the same tree generating power as monadic spCFTGs. 
Similarly, on the basis of Theorem~\ref{thm:mctal}, we can now prove that 
MCFTGs have the same tree generating power as monadic MCFTGs.
The construction in the proof is the same as in~\cite{gebost15}. 

\begin{theorem}
  \label{thm:monadic}
  For every MCFTG~$G$ with $\wid(G) \geq 2$ there is an \LDTR"~equivalent 
  monadic MCFTG~$G'$ such that
  $\mu(G') \leq \mu(G) \cdot \abs \Sigma \cdot \mr^2_\Sigma \cdot (2 \cdot \wid(G) - 1)$, 
  where $\Sigma$~is the terminal
  alphabet of~$G$.  Moreover, if~$\Delta \subseteq \Sigma^{(0)}$ and
  $G$~is $\Delta$"~lexicalized, then $G'$~is
  $\Delta$"~lexicalized. 
  \end{theorem}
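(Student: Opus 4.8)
The plan is to derive Theorem~\ref{thm:monadic} as a direct consequence of Theorem~\ref{thm:mctal}, exactly as suggested by the placement of this result right after the MC"~TAG characterization. The key observation is that an MC"~TAG is a special kind of \emph{footed} MCFTG, and footed MCFTGs are automatically monadic-friendly in the following sense: in a footed rule, every nonterminal of rank~$\geq 1$ has its arguments passed in order to a single foot node. More precisely, the nonterminals introduced by the construction of Theorem~\ref{thm:mctal} (via Lemma~\ref{lem:foottotag} and Theorem~\ref{thm:footed}) are of the form $\langle C, \sigma, m, p\rangle$ with $m \in \{0, \rk(\sigma)\}$, so their ranks are either~$0$ or equal to the rank of a terminal symbol. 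This does \emph{not} by itself give width~$\leq 1$, so the real content is a further transformation.

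First I would apply Theorem~\ref{thm:mctal} to obtain an \LDTR"~equivalent MC"~TAG~$G'$ (which is in particular a footed MCFTG), with the stated bound on $\mu(G')$; note that since $\wid(G) \geq 2 \geq 1$ the multiplicity bound of Theorem~\ref{thm:mctal} applies and reads $\mu(G') \leq \mu(G) \cdot \abs\Sigma \cdot \mr^2_\Sigma \cdot (2\cdot\wid(G)-1)$, which is precisely the bound claimed here. Then I would argue that any footed MCFTG can be converted into an \LDTR"~equivalent monadic MCFTG \emph{without increasing the multiplicity}. The idea is that a footed pattern $u_j \in P_{N\cup\Sigma}(X_k)$ with foot node~$p$ can be split at its foot node: the spine from the root down to~$p$ carries all $k$ variables down to one node, and the subtrees hanging off the spine (the ``siblings'' of spine nodes) are variable-free or carry disjoint variable sets. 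Because of footedness, each such $u_j$ decomposes into a spine part (which can be made monadic by threading a single argument through it) together with finitely many auxiliary trees of rank~$0$, much as in the decompositions used in the proofs of Theorem~\ref{thm:footed} and Lemma~\ref{lem:foottotag}. Since the paper states that ``the construction in the proof is the same as in~\cite{gebost15}'', I would cite that construction and verify that it preserves \LDTR"~equivalence and does not increase~$\mu$.

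The main steps, in order, are: (i)~invoke Theorem~\ref{thm:mctal} to reduce to the footed (MC"~TAG) case and record the multiplicity bound; (ii)~describe the spine-based decomposition of each footed right-hand side $u_j$ into a monadic spine component plus rank-$0$ auxiliary components, introducing new nonterminals that replace each rank-$k$ nonterminal $A_j$ by a sequence of nonterminals of rank $\leq 1$; (iii)~assemble these into the monadic grammar~$G'$, with big nonterminals obtained by concatenating the component sequences, exactly as in the ``skeleton'' constructions already carried out twice in this section; (iv)~prove correctness by the now-standard inductive claim relating $\val(d)$ for a derivation tree~$d$ of the footed grammar to the values of the corresponding derivation tree of~$G'$, using Lemma~\ref{lem:comm-assoc}(4) in the induction step; and (v)~realize both directions of the derivation-tree translation by \LDTR"~transducers (using regular look-ahead to recover the chosen skeleton functions), just as in Theorems~\ref{thm:footed} and~\ref{thm:mctal}. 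For the lexicalization claim, I would observe that the decomposition keeps all terminal symbols (in particular all lexical symbols of $\Delta \subseteq \Sigma^{(0)}$) within the right-hand sides, so if $G$~is $\Delta$"~lexicalized then so is~$G'$; the restriction $\Delta \subseteq \Sigma^{(0)}$ guarantees that lexical symbols, being of rank~$0$, end up as rank-$0$ auxiliary components and are never erased or split.

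The hard part will be \textbf{verifying that the monadization step does not increase the multiplicity} beyond the bound already produced by Theorem~\ref{thm:mctal}. The decomposition replaces a single nonterminal $A_j$ by a \emph{sequence} of new nonterminals, which naively multiplies the length of big nonterminals; the subtlety is that for a footed pattern the spine contributes only one monadic nonterminal per ``level'' and the off-spine material is already accounted for by the $\abs\Sigma \cdot \mr^2_\Sigma$ factors built into the MC"~TAG construction of Lemma~\ref{lem:foottotag}. Making this bookkeeping precise---showing that the number of components produced per footed tree is exactly absorbed by the existing factors, so that the final bound coincides with that of Theorem~\ref{thm:mctal}---is the delicate point, and is presumably why the theorem is stated with the identical multiplicity bound. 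I would handle it by counting, for each footed pattern, the spine nodes and off-spine rank-$0$ subtrees and checking that the induced sequence length is already dominated by the skeleton sizes bounded in the proofs of Theorem~\ref{thm:footed} and Lemma~\ref{lem:foottotag}, so that concatenating skeletons as in~\cite{gebost15} yields the stated~$\mu(G')$.
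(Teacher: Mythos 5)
Your first step is already not quite sound as stated: Theorem~\ref{thm:mctal} has the hypothesis that $L(G)$ is root consistent, and an arbitrary MCFTG with $\wid(G)\geq 2$ need not satisfy it (indeed, if $L(G)$ is not root consistent there is \emph{no} equivalent MC"~TAG at all). This is repairable --- the paper adds a root marker, replacing every initial rule $S\to (u,\LL)$ by $S\to (\#(u),\LL)$, and removes $\#$ afterwards via Lemma~\ref{lem:cover} --- but you never address it. More importantly, what the paper extracts from this reduction is not footedness per se but the \emph{adjoining} property of Lemma~\ref{lem:foottotag}: a mapping $\varphi\colon N\cup\Sigma\to\Sigma$ with $\varphi(\rlab(u_j))=\varphi(\flab(u_j))=\varphi(A_j)$ for every rule, which makes the foot-node label of every terminal tree generated by a nonterminal $C$ the \emph{fixed, predictable} terminal $\varphi(C)$. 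You notice the annotated nonterminals $\langle C,\sigma,m,p\rangle$ but never exploit this predictability, and that is exactly the idea your proof is missing.

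The genuine gap is your monadization step. You propose to decompose each footed right-hand side along its spine and to replace each nonterminal of rank $\geq 2$ by a \emph{sequence} of nonterminals of rank $\leq 1$, and you concede that making the multiplicity bookkeeping work is ``the delicate point'' you cannot settle. It cannot be settled along those lines: the bound claimed in Theorem~\ref{thm:monadic} is \emph{identical} to the bound of Theorem~\ref{thm:mctal}, so the monadization step must not increase the multiplicity at all, whereas any construction that replaces one nonterminal by a sequence of new ones lengthens big nonterminals. The paper's construction (which is the one in the reference you cite, and not a skeleton construction) avoids sequences entirely: starting from an adjoining $G=(N,\N,\Sigma,S,R)$, it keeps \emph{the same} sets $N$ and $\N$ and merely lowers the rank of every $C\in N$ with $\rk(C)\geq 2$ to $1$. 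For a footed pattern $t$ of rank $\geq 1$, let $\cut(t)$ be the unique rank-$1$ pattern with $t=\cut(t)[x_1\gets\init(\flab(t))]$; each right-hand side component $u_j$ becomes $\cut(u_j[f])$, where $f$ substitutes $C(\init(\varphi(C)))$ for every nonterminal $C$ of positive rank. That is, in $G'$ a nonterminal generates only the context \emph{above} its foot node, and the missing foot symbol --- known in advance to be $\varphi(C)$ --- is re-inserted explicitly at every use site as $C(\varphi(C)(x_1,\dotsc,x_m))$. Since $\N$ is unchanged, the multiplicity is unchanged, and since every symbol inserted or removed has positive rank, no lexical symbol of $\Delta\subseteq\Sigma^{(0)}$ is affected. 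Your spine picture is actually correct (in a footed pattern all variables sit below the foot node, so only the foot node needs to be cut off, not a whole sequence of parts), but without the adjoining property you cannot know which symbol to re-attach at the use sites, and without keeping $\N$ fixed you cannot meet the stated multiplicity bound.
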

  
\begin{proof} 
It should be clear from Lemma~\ref{lem:foottotag} and the proof of Corollary~\ref{cor:charmctal} 
that we may assume that $G = (N,\N,\Sigma,S,R)$ is an adjoining MCFTG with respect to
a mapping $\varphi \colon N\cup \Sigma \to \Sigma$, 
as defined in Section~\ref{sub:mcftismctal}.\footnote{Otherwise, we replace every initial rule 
$S\to (u,\LL)$ by $S\to (\#(u),\LL)$ and after the construction remove $\#$ by Lemma~\protect{\ref{lem:cover}}.} 
We define the monadic $G' = (N,\N,\Sigma,S,R')$ such that every nonterminal $C\in N$ 
with $\rk(C)\geq 2$ in $G$ now has rank $\rk'(C) = 1$ in~$G'$, and 
$\rk'(C)=\rk(C)$ for the nonterminals with $\rk(C)\leq 1$.
The idea of the proof is that every occurrence of a nonterminal $C(\seq x 1m)$ of rank $m\geq 1$ 
is replaced by $C(\sigma(\seq x 1m))$ where $\sigma = \varphi(C)$, such that in $G'$ the nonterminal 
$C$ does not generate the foot node of the tree generated by $C$ in $G$.   
Thus, for a footed pattern $t\in P_{N \cup \Sigma}(X)$ of rank at least 1, 
let $\cut(t)$ denote the unique pattern of rank~1 
such that $t = \cut(t)[x_1\gets \init(\flab(t))]$. For instance, 
$\cut(\sigma(a,\tau(x_1,x_2))) = \sigma(a,x_1)$. Moreover, for simplicity, let $\cut(t)=t$
for every tree $t\in T_{N \cup \Sigma}$. 
Now let $\rho = A \to ((\seq u 1n),\LL)$ be a rule in $R$ with $A=(\seq A 1n)$, and let $f$ be the 
substitution function for~$N$ such that $f(C) = C(\init(\varphi(C)))$  
if $\rk(C)\geq 1$ and $f(C)=C$ if $\rk(C)=0$, for every $C\in N$. 
Then $R'$ contains the rule $\rho' = A \to ((\seq {u'} 1n),\LL)$
where $u'_j = \cut(u_j[f])$ for every $j\in[n]$.
It can be shown that $L(G',A) = \{(\cut(t_1),\dotsc,\cut(t_n))\mid (\seq t 1n)\in L(G,A)\}$
and so $L(G')=L(G)$. The formal proof, together with the proof of \LDTR"~equivalence, 
is left to the reader.  

  If $G'$~is $\Delta$"~lexicalized and~$\Delta \subseteq \Sigma^{(0)}$, 
  then $G$~is $\Delta$"~lexicalized. In fact, 
  the right-hand sides of~$\rho$ and~$\rho'$ contain the same elements of~$\Delta$ 
  because the only symbols that are removed or added have rank at least~$2$.
  We also observe that, for unrestricted $\Delta\subseteq \Sigma$,
  if $G$~is $n$"~$\Delta$"~lexicalized for~$n > \mu(G)$, 
  as defined before Lemma~\ref{lem:nlex}, then $G'$~is $(n -
   \mu(G))$"~$\Delta$"~lexicalized. In fact, in the definition of~$\rho'$
  we have that for every $j\in[n]$, 
   $\abs{\pos_\Delta(u_j[f])} \geq \abs{\pos_\Delta(u_j)}$ and 
   $\abs{\pos_\Delta(u'_j)}\geq \abs{\pos_\Delta(u_j[f])}-1$. 
\end{proof}

For unrestricted $\Delta$ this theorem also holds except that $G'$ 
is just equivalent to $G$, not necessarily \LDTR"~equivalent. 
This follows from Lemma~\ref{lem:nlex} and the last paragraph of 
the proof of Theorem~\ref{thm:monadic}. Thus,
for every $\Delta$"~lexicalized MCFTG~$G$ with $\wid(G) \geq 2$ there is an equivalent 
$\Delta$"~lexicalized monadic MCFTG~$G'$ such that
$\mu(G') \leq \mu(G) \cdot \abs \Sigma \cdot \mr^2_\Sigma \cdot (2 \cdot \wid(G) - 1)$.

\begin{example}
\upshape
We consider the MCFTG $G = (N,\N,\Sigma,S,R)$ with $N=\{S,A^{(2)},B^{(2)}\}$, 
$\N=\{S,(A,B)\}$, $\Sigma=\{\sigma^{(2)},\tau^{(2)},a^{(0)},b^{(0)},e^{(0)}\}$, 
and the rules 
  \begin{align*}
    S 
    &\to  A(a,B(e,b)) \\
    (A(x_1, x_2),B(x_1,x_2))
    &\to (\sigma(a,A(x_1,x_2)),\,B(\tau(x_1,x_2),b))  \\
    (A(x_1, x_2),B(x_1,x_2))
    &\to (\sigma(x_1,x_2),\,\tau(x_1,x_2)) \enspace. 
  \end{align*}
It generates the tree language $L(G)=\{(\sigma a)^n \tau^n e b^n \mid n\geq 1\}$. 
Note that we here use string notation. Thus, e.g., $(\sigma a)^2 \tau^2 e b^2$ is the tree
$\sigma a \sigma a \tau \tau e b b$ which can be written as the term 
$\sigma(a,\sigma(a,\tau(\tau(e,b),b)))$. 
Obviously, $G$ is an adjoining MCFTG with $\varphi(S)=\varphi(A)=\sigma$ and $\varphi(B)=\tau$. 
The equivalent monadic grammar $G'$ as constructed in the proof of Theorem~\ref{thm:monadic} 
has the rules 
  \begin{align*}
    S 
    &\to  A(\sigma(a,B(\tau(e,b)))) \\
    (A(x_1),B(x_1))
    &\to (\sigma(a,A(x_1)),\,B(\tau(x_1,b)))  \\
    (A(x_1),B(x_1))
    &\to (x_1,\,x_1) \enspace. 
  \end{align*}
Note that $G'$ is not footed. 
\fin
\end{example}

As observed in the Introduction, Theorem~\ref{thm:monadic} does not hold for spCFTGs; 
i.e., spCFTGs do not have the same tree generating power as monadic spCFTGs. In fact, 
it is shown in~\cite[Theorem~6.5]{engrozslu80} (see also~\cite[Lemma~24]{lohmansch12}) that 
spCFTGs (and arbitrary context-free tree grammars) give rise to a strict hierarchy 
with respect to $\wid(G)$. It is shown in~\cite[Theorem~10]{lohmansch12}) that every 
``straight-line'' spCFTG can be transformed into an equivalent monadic one in polynomial time; 
the construction is similar to the one for Theorem~\ref{thm:monadic} 
(in particular to the one in the proof of Theorem~\ref{thm:footed}). 

We finally observe that some tree languages in MCFT cannot be generated by an
MCFTG that is both monadic and footed.  An example is the
language~$L = \{(ca)^n (da)^n e \mid n \in \nat_0\}$ that is generated
by the spCFTG with rules~$S \to A(e)$, $A(x_1) \to c(a, A(d(a,
x_1)))$, and~$A(x_1) \to x_1$.  If $G$~is a monadic footed MCFTG
with~$L(G) = L$, then $G$~must be an MRTG because there is
no terminal symbol of rank~1.\footnote{We already observed below
  Definition~\protect{\ref{def:footed}} that every tree of the
  forest~$(\seq t1n) \in L(G, (\seq A1n))$ is footed.  Suppose that
  $\rk(A_j) = 1$ for some $j \in [n]$, then the corresponding
  tree~$t_j \in P_{\Sigma}(X_1)$ only contains terminal symbols (and
  the variable~$x_1$).  The foot node label of~$t_j$ must have
  rank~$1$, but the ranked alphabet~$\Sigma$ does not contain a unary  
  symbol.  Hence no unary nonterminal can be useful.}  It follows from
Theorem~\ref{thm:charact} in Section~\ref{sec:charact}
and~\cite[p.~277]{rou70} that all tree languages in~MRT have regular
``path languages''.  However, the intersection of the path language
of~$L$ with~$c^* d^* e$ is~$\{c^n d^n e \mid n \in \nat_0\}$, which is
not regular. Thus, $L$ is not in MRT (see also the last paragraph of
Section~\ref{sec:charact}).

\section{Multiple context-free grammars}
\label{sec:mcfg}
\noindent
In this section we define the multiple context-free (string) grammars (MCFG)
of~\cite{sekmatfujkas91,shaweijos87}. We first prove that MCFGs can be lexicalized.
Then we prove that every tree language in MCFT 
can be generated by an MCFG, which is possible because we defined $T_\Sigma$
as a subset of $\Sigma^*$. Using this we prove that MCFTGs have the same 
string generating power as MCFGs, by taking the yields of the generated tree languages. 
Moreover, we show that MCFTGs can be parsed in polynomial time.

\subsection{String generating power of MCFTGs}
\label{sub:stringgen}
\noindent
To avoid the formalities involved in defining MCFGs in the classical way,
we define them as a special case of~MCFTGs. 
We introduce a special symbol~$\e$ of rank~$0$
and we identify, as usual, the strings over a finite (unranked) alphabet~$\Sigma$
with the trees  over the ``monadic'' ranked alphabet~$\Sigma \cup
\{\e\}$, where every symbol in~$\Sigma$ has rank~$1$.  Thus, $w \in
\Sigma^*$~is identified with~$w \e \in T_{\Sigma \cup \{\e\}}$.  

A \emph{multiple context-free grammar} (in short,~MCFG) is an MCFTG~$G
= (N \cup \{S\}, \N, \Sigma \cup \{\e\}, S, R)$ such that~$S \notin
N$, every nonterminal in~$N$ has rank~$1$, $\e \notin \Sigma$, and
every terminal in~$\Sigma$ has rank~$1$.  We also require (without
loss of generality) that $G$~is start-separated; i.e., that $S$~does
not occur in the right-hand sides of rules.  With the above
identification we have~$L(G) \subseteq \Sigma^*$, and for every~$A \in
\N \setminus \{S\}$ we have $L(G, A) \subseteq
P_\Sigma(X_1)^{\scriptscriptstyle +}$~and~$P_\Sigma(X_1) = \Sigma^*
x_1$.  Note that every rule of~$G$ is either of the form~$S \to (u \e,
\LL)$ with~$u \in (N \cup \Sigma)^*$ or of the form~$(\seq A1n) \to
((u_1 x_1, \dotsc, u_nx_1), \LL)$ where $\seq A1n \in N$~and~$\seq u1n
\in (N \cup \Sigma)^*$.  For a uniquely $N$"~labeled tree~$t = vCw
\e$ (or~$vCw x_1$) with $v, w \in (N \cup \Sigma)^*$~and~$C \in N$,
the rewriting of~$C$ by~$u x_1$ with~$u \in (N \cup \Sigma)^*$ results
in the tree~$t[C \gets u x_1]$, which equals~$vuw \e$ (or~$vuw x_1$);
thus, it is the usual rewriting of a nonterminal in a sentential form
of a context-free grammar.  It is straightforward to see that this
definition of~MCFG is equivalent to the classical notion of multiple
context-free grammar~\cite{sekmatfujkas91,shaweijos87}, taking
into account the information-lossless condition~(f3)
of~\cite[Lemma~2.2]{sekmatfujkas91}.  The class of languages generated
by~MCFGs will be denoted by~MCF.  

Through the above identification of strings with monadic trees,
MCFTGs can also generate strings directly as opposed to taking yields of the generated trees. 
In the next lemma we show that every MCFTG that generates strings in this way, has an equivalent MCFG. 

\begin{lemma}
  \label{lem:strmcftg}
  For every MCFTG~$G$ with terminal alphabet~$\Sigma \cup \{\e\}$,
  where every symbol in~$\Sigma$ has rank~$1$, there is an
  \LDTR"~equivalent MCFG~$G'$.  Moreover,~$\mu(G') = \mu(G)$.
\end{lemma}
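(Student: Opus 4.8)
The plan is to first bring $G$ into a shape where it is already almost an MCFG, and then to eliminate the only remaining discrepancy, namely the presence of nonterminals of rank~$0$ other than~$S$.

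First I would observe that, since we may assume $G$ to be reduced and, by Lemma~\ref{lem:renaming}, to have disjoint big nonterminals, $G$ is automatically monadic. The key structural fact is that $P_{\Sigma \cup \{\e\}}(X_k) = \emptyset$ for every $k \geq 2$: a tree over $\Sigma \cup \{\e\}$ has only nodes of rank~$1$ (the symbols of $\Sigma$) and nodes of rank~$0$ (namely $\e$ and the variables), so it has exactly one leaf and hence can contain at most one variable. Consequently a nonterminal of rank at least~$2$ generates the empty forest language and cannot be useful, so $\wid(G) \leq 1$. The same observation pins down the shape of the right-hand sides: in a rule $(\seq A1n) \to ((\seq u1n), \LL)$, a component $u_i$ with $\rk(A_i) = 1$ lies in $(\Sigma \cup N^{(1)})^* x_1$ (a path of unary terminal and nonterminal symbols ending in $x_1$, containing neither a rank-$0$ nonterminal nor $\e$), whereas a component $u_i$ with $\rk(A_i) = 0$ lies in $(\Sigma \cup N^{(1)})^* \e$ or in $(\Sigma \cup N^{(1)})^* C$ for a unique rank-$0$ nonterminal $C$ occurring as its trailing leaf. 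Thus the only obstruction to $G$ being an MCFG is the rank-$0$ nonterminals $C \neq S$.

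Next I would remove these by deferring the bottom symbol~$\e$. Concretely, replace every rank-$0$ nonterminal $C \neq S$ by a fresh nonterminal $C'$ of rank~$1$, and replace each big nonterminal accordingly, turning every rank-$0$ component $A_i \neq S$ into its primed rank-$1$ version; this leaves $S$ as the unique rank-$0$ nonterminal and preserves the length of every big nonterminal. Each rule $A \to (u, \LL)$ is rewritten componentwise: a rank-$1$ component $u_i = w_i x_1$ is kept, a rank-$0$ component $u_i = w_i C$ becomes $w_i\, C'(x_1)$, and a rank-$0$ component $u_i = w_i \e$ with $A_i \neq S$ becomes $w_i x_1$. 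For the initial nonterminal $S$ (which stays of rank~$0$ and, by start-separation, does not occur in any right-hand side), the component $w\e$ is kept and $wC$ becomes $w\, C'(\e)$, so that every $S$-rule has the form $S \to (v\e, \LL)$ with $v \in (N \cup \Sigma)^*$ while every other rule has all its components ending in $x_1$. This is exactly the MCFG normal form, so the resulting grammar $G'$ is an MCFG, and since big-nonterminal lengths are untouched, $\mu(G') = \mu(G)$.

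Finally I would prove correctness by the value-preservation invariant that, for a derivation tree $d$ of type $A$ with $\val(d) = (\seq t1n)$, the corresponding relabelled derivation tree $d'$ of the primed type satisfies $\val(d') = (\seq{t'}1n)$, where $t'_i = t_i$ for a rank-$1$ component and $t_i = t'_i[x_1 \gets \e]$ for a (now deferred) rank-$0$ component; taking $A = S$ then gives $\val(d') = \val(d)$, hence $L(G') = L(G)$. The induction step is a routine application of Lemma~\ref{lem:comm-assoc}(4) to commute the substitution of $\e$ past the rule application. Because $d \mapsto d'$ and its inverse are mere relabellings of rules, they are projections and therefore \LDTR"~transducers, which yields the \LDTR"~equivalence. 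The main obstacle is not any single deep step but the careful bookkeeping needed to check that the three cases of the componentwise rewriting really produce rules in the precise MCFG form, while the single $\e$ at the bottom of each generated string is reintroduced in exactly the right place; the structural fact $P_{\Sigma \cup \{\e\}}(X_k) = \emptyset$ for $k \geq 2$ is what makes this bookkeeping manageable, since it forces $G$ to be monadic and confines the rank-$0$ nonterminals to the bottom of their components.
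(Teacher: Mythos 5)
Your proposal is correct and follows essentially the same route as the paper's proof: observe that the monadic terminal alphabet forces $\wid(G)\leq 1$ (nonterminals of rank $\geq 2$ cannot be useful), then promote the nullary nonterminals to rank~$1$, replacing each occurrence of a nullary nonterminal~$C$ by $C(x_1)$ and each~$\e$ by~$x_1$ in non-initial rules, while in initial rules $C$ becomes $C(\e)$. The paper leaves the value-preservation invariant and the \LDTR-equivalence via projections implicit, which you spell out; this is only a difference in level of detail, not of method.
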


\begin{proof}
  Due to the specific form of the terminal alphabet, it should be
  clear that reachable and useful big nonterminals cannot contain
  nonterminals of rank strictly larger than~$1$.  Consequently, we may
  assume that~$G$ is monadic without the help of
  Theorem~\ref{thm:monadic}.  We transform~$G$ into an MCFG~$G'$ with
  the same big nonterminals and the same nonterminals, which all have
  rank~$1$ in~$G'$ except for the initial nonterminal~$S$ of rank~$0$.
  Additionally, in the right-hand side of every initial rule we
  replace every occurrence of a nullary nonterminal~$C$ by~$C(\e)$,
  and in the right-hand side of every non-initial rule we replace
  every occurrence of a nullary nonterminal~$C$ by~$C(x_1)$ and every
  occurrence of~$\e$ by~$x_1$.
\end{proof}

Let strMCFT~denote the class of all string languages
generated by~MCFTGs, where strings over~$\Sigma$ are viewed as
monadic trees over~$\Sigma \cup \{\e\}$ as explained above.

\begin{corollary}
  \label{cor:strmcftg}
  $\textup{strMCFT} = \textup{MCF}$. 
\end{corollary}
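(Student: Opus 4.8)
The plan is to establish the equality $\textup{strMCFT} = \textup{MCF}$ by proving the two inclusions separately, both of which will follow almost immediately from the definitions together with Lemma~\ref{lem:strmcftg}. The whole mathematical content of the result is already contained in that lemma; what remains is to match the definition of $\textup{strMCFT}$ to its hypotheses.

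First I would treat $\textup{MCF} \subseteq \textup{strMCFT}$. Here I would simply observe that, by definition, every MCFG is a special MCFTG, namely one whose terminal alphabet has the form $\Sigma \cup \{\e\}$ with every symbol of $\Sigma$ of rank~$1$ and $\e$ of rank~$0$. Under the identification of a string $w \in \Sigma^*$ with the monadic tree $w\e$, the tree language generated by such an MCFTG is precisely the string language generated by the MCFG (this is the point noted when the rewriting of a nonterminal $C$ by $ux_1$ was shown to mimic the usual rewriting of a nonterminal in a context-free sentential form). Hence every language in $\textup{MCF}$ is the string language, in the sense of the monadic-tree identification, of an MCFTG with monadic terminal alphabet, and so lies in $\textup{strMCFT}$.

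For the converse inclusion $\textup{strMCFT} \subseteq \textup{MCF}$ I would take an arbitrary $L \in \textup{strMCFT}$. By the definition of $\textup{strMCFT}$, there is an MCFTG~$G$ with terminal alphabet of the form $\Sigma \cup \{\e\}$, where every symbol in $\Sigma$ has rank~$1$ and $\e$ has rank~$0$, such that $L$ is the string language corresponding to $L(G)$ under the identification of strings with monadic trees. This is exactly the hypothesis of Lemma~\ref{lem:strmcftg}, which provides an \LDTR"~equivalent MCFG~$G'$. Since \LDTR"~equivalent MCFTGs are equivalent, as observed after Definition~\ref{def:LDTReq}, we have $L(G') = L(G)$, whence $L$ is the string language of the MCFG~$G'$; that is, $L \in \textup{MCF}$. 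Combining the two inclusions yields the claimed equality.

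There is essentially no obstacle in this argument: it is purely a matter of unwinding the definition of $\textup{strMCFT}$ and invoking Lemma~\ref{lem:strmcftg} in the nontrivial direction. The only point that deserves a moment's care is the first inclusion, where one must confirm that an MCFG, regarded as an MCFTG, generates under the string/monadic-tree identification exactly the language it generates as an MCFG; but this is immediate from the way MCFG rules were defined to reproduce context-free string rewriting. I expect no computation and no auxiliary construction to be required beyond what Lemma~\ref{lem:strmcftg} already supplies.
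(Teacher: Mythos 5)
Your proof is correct and follows exactly the route the paper intends: the paper states the corollary immediately after Lemma~\ref{lem:strmcftg} precisely because that lemma gives the inclusion $\textup{strMCFT} \subseteq \textup{MCF}$, while the reverse inclusion is immediate from the definition of an MCFG as a special MCFTG. Nothing is missing.
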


Another consequence of Lemma~\ref{lem:strmcftg} is that MCFGs can be
lexicalized, as stated in~\cite[Section~4.4]{yos06} for the case 
$\Delta = \Sigma$. This should be contrasted to the fact that 
context-free grammars cannot be $\Delta$"~lexicalized for every $\Delta$, 
as shown in Remark~\ref{rem:nstagnonlex}. 

\begin{corollary}
  \label{cor:ambmcf}
  For every finitely $\Delta$"~ambiguous MCFG~$G$ there is a
  $\Delta$"~lexicalized MCFG $G'$ that is \LDTR"~equivalent to $G$.  Moreover,
  $\mu(G') = \mu(G) + 1$.
\end{corollary}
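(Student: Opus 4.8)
The plan is to produce $G'$ by composing two constructions that are already available in the paper: the general lexicalization of MCFTGs (Theorem~\ref{thm:main}) and the conversion of a string-generating MCFTG back into an MCFG (Lemma~\ref{lem:strmcftg}). Since an MCFG is by definition a special MCFTG whose terminal alphabet is $\Sigma \cup \{\e\}$, with every symbol of~$\Sigma$ of rank~$1$ and $\e$ of rank~$0$, both results apply to it directly, and the desired $G'$ will be their composite. Throughout I take the lexical symbols of the MCFG to be among the genuine alphabet symbols, i.e.\ $\Delta \subseteq \Sigma$ (the degenerate case $\Delta = \emptyset$, which forces $L(G)$ to be finite and leaves no proper rule to lexicalize, may be set aside).

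First I would regard the given finitely $\Delta$"~ambiguous MCFG~$G$ as an MCFTG and apply Theorem~\ref{thm:main}. This yields a $\Delta$"~lexicalized MCFTG~$G_1$ that is \LDTR"~equivalent to~$G$, with $\mu(G_1) = \mu(G) + \mr_\Delta$ and $\wid(G_1) = \wid(G) + 1$. Crucially, $G_1$ has the same terminal alphabet $\Sigma \cup \{\e\}$ as~$G$; in particular every symbol of~$\Sigma$ still has rank~$1$. Although $\wid(G_1)$ may a priori equal~$2$ (the new nonterminals $\langle C, \delta, i, Z\rangle$ of the construction in the proof of Lemma~\ref{lem:main} can have rank up to $\wid(G)+1 = 2$), no reachable and useful nonterminal of~$G_1$ can have rank exceeding~$1$: a pattern over the monadic alphabet $\Sigma \cup \{\e\}$ is a path with a single leaf and hence contains at most one variable, so $P_{\Sigma \cup \{\e\}}(X_k) = \emptyset$ for $k \geq 2$. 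This is exactly the situation exploited in the proof of Lemma~\ref{lem:strmcftg}, which I would then invoke on (the reduced version of)~$G_1$ to obtain an \LDTR"~equivalent MCFG~$G'$ with $\mu(G') = \mu(G_1)$.

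Two points remain to be verified. For \LDTR"~equivalence I would simply combine the equivalence of $G_1$ with~$G$ and that of $G'$ with~$G_1$: since \LDTR\ is closed under composition (Proposition~\ref{pro:LDTRcomp}), \LDTR"~equivalence is transitive, so $G'$ is \LDTR"~equivalent to~$G$. For the lexicalization property I would check that the transformation in the proof of Lemma~\ref{lem:strmcftg} preserves $\Delta$"~lexicalization: that transformation only replaces nullary nonterminals~$C$ by $C(\e)$ or $C(x_1)$ and occurrences of~$\e$ by~$x_1$, so it neither deletes nor introduces any symbol of~$\Sigma$ in a right-hand side. Consequently, for $\Delta \subseteq \Sigma$ the number of lexical symbols in each rule is unchanged, and every proper rule of~$G'$ remains $\Delta$"~lexicalized.

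Finally, for the multiplicity I would record that the lexical symbols of the MCFG lie in~$\Sigma$ and therefore all have rank~$1$, so $\mr_\Delta = 1$; hence $\mu(G') = \mu(G_1) = \mu(G) + \mr_\Delta = \mu(G) + 1$, as claimed. The main obstacle I anticipate is not conceptual but a careful bookkeeping check at the seam between the two constructions: confirming that the width-$2$ nonterminals created by Theorem~\ref{thm:main} are genuinely all useless over the monadic terminal alphabet (so that Lemma~\ref{lem:strmcftg} indeed applies and the rank constraints of an MCFG are met), and that the passage from the width-$2$ MCFTG~$G_1$ back to the width-$1$ MCFG~$G'$ preserves the count of lexical terminals per rule, so that $\Delta$"~lexicalization survives intact.
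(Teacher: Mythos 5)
Your proposal is correct and follows essentially the same route as the paper: apply Theorem~\ref{thm:main} to the MCFG viewed as an MCFTG (noting $\mr_\Delta = 1$, hence $\mu(G_1) = \mu(G)+1$ and width~$2$), then convert back to an MCFG via Lemma~\ref{lem:strmcftg}. The extra checks you carry out (uselessness of rank-$2$ nonterminals over the monadic alphabet, preservation of $\Delta$-lexicalization, and transitivity of \LDTR-equivalence via Proposition~\ref{pro:LDTRcomp}) are exactly the details the paper leaves implicit, the first being handled inside the proof of Lemma~\ref{lem:strmcftg} itself.
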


\begin{proof}
  By Theorem~\ref{thm:main} there is an \LDTR"~equivalent
  $\Delta$"~lexicalized MCFTG~$G'$ such that~$\wid(G') =
  2$ and $\mu(G') = \mu(G) + 1$.  Next we apply
  Lemma~\ref{lem:strmcftg}.
\end{proof}

\begin{example}
  \label{exa:ambmcf}
  \upshape
  Consider the context-free grammar~$G$ with rules~$S \to \ell AAr$,
  $A \to \ell AAr$, and~$A \to \ell \tau abr$ 
(cf.\@ Example~\ref{exa:monadic} and Remark~\ref{rem:nstagnonlex}).  Obviously, we may
  view~$G$ as an MCFG of multiplicity~$1$ with an alias~$A'$ of~$A$.
  Its terminal alphabet is~$\Sigma \cup \{\e\}$ with~$\Sigma =
  \{\tau^{(1)}, \ell^{(1)}, r^{(1)}, a^{(1)}, b^{(1)}\}$, and its
  rules for $S$~and~$A$ are
  \[ S \to \ell A(A(r \e)) \qquad A(x_1) \to \ell A(A'(rx_1)) \qquad
  \text{and} \qquad A(x_1) \to \ell \tau abr x_1 \enspace. \]
  Let~$\Delta = \{a, b\}$.  Since $G$~is $\Delta$"~growing, it has
  finite $\Delta$"~ambiguity.  Applying a slightly simplified version
  of the proof of Corollary~\ref{cor:ambmcf}, we obtain a
  $\Delta$"~lexicalized MCFG~$G'$ of multiplicity~$2$ such that~$L(G')
  = L(G)$. It has the big nonterminals~$\{S, A, (B,C), (B',C')\}$,
  where $(B', C')$~is an alias of~$(B, C)$, and the following rules,
  in which we omit $\e$~and~$x_1$ (and all the parentheses in trees) for readability:    
  \begin{align*}
    S 
    &\to \ell ABbCr 
    & A
    &\to \ell ABbCr
    & A
    &\to \ell\tau abr
    & (B, C)
    &\to (\ell B,\, CB'bC'r)
    & (B, C)
    &\to (\ell\tau a,\, r) \enspace.
  \end{align*}
  Clearly, the string~$BbC$ generates the same terminal strings as~$A$. \fin
\end{example}

In the next theorem, we show that every tree language that is generated by
an~MCFTG can also be generated by an~MCFG provided that we change the
ranks of the terminal symbols.  In this theorem we (temporarily)
identify each tree~$t$ over the ranked alphabet~$\Sigma$ (which is defined
as a string over the unranked alphabet~$\Sigma$) 
with the tree~$t\e$ over the ranked alphabet~$\Sigma \cup
\{\e\}$, in which every symbol of~$\Sigma$ has rank~$1$.
As an example, the tree $\sigma(a,b)=\sigma ab$ 
is identified with the tree $\sigma ab\e=\sigma(a(b(\e)))$.
The idea behind the proof is essentially the same as the one
of~\cite[Theorem~15]{engman02}.  In the case of an~spCFTG, the
resulting~MCFG is well-nested
(see~\cite{kan09,kan09b,mon10,rodkuhsat10}).

\begin{theorem}
  \label{thm:mcft-to-mcf}
  For every MCFTG~$G$ there is an \LDTR"~equivalent MCFG~$G'$.  If
  $G$~is $\Delta$"~lexicalized, then so is~$G'$.  Moreover,~$\mu(G') =
  \mu(G) \cdot (\wid(G) + 1)$ and $\lambda(G')=\lambda(G)$. 
  If $G$ is footed (i.e., is an nsMC"~TAG) then $\mu(G') = 2\cdot\mu(G)$. 
\end{theorem}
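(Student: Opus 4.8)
The plan is to represent each tree by its pre-order string over a monadic alphabet and to encode the pattern generated by a nonterminal of rank~$k$ by the tuple of the $k+1$ maximal blocks of that string lying strictly between consecutive argument positions; an MCFG then recombines these blocks exactly as the MCFTG composes patterns. First I would use Lemma~\ref{lem:permfree} to assume that $G=(N,\N,\Sigma,S,R)$ is permutation-free, so that in every pattern the variables occur as $x_1,\dots,x_k$ in this order and the blocks are naturally ordered (for a footed~$G$ this holds automatically by the observations after Definition~\ref{def:footed}). Throughout I use the identification from the theorem statement, viewing a tree $t\in T_\Sigma$ as the monadic tree $t\e$ in which every symbol of $\Sigma$ has rank~$1$; the transformation realised will be $\varphi(t)=t\e$.

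For every $C\in N$ of rank~$r$ I introduce rank-$1$ nonterminals $\langle C,0\rangle,\dots,\langle C,r\rangle$ and define a map $\mathrm{enc}$ on $P_{N\cup\Sigma}(X)$ by $\mathrm{enc}(x_i)=x_i$, by $\mathrm{enc}(\sigma(s_1,\dots,s_r))=\sigma\,\mathrm{enc}(s_1)\cdots\mathrm{enc}(s_r)$ for $\sigma\in\Sigma$, and by $\mathrm{enc}(C(s_1,\dots,s_r))=\langle C,0\rangle\,\mathrm{enc}(s_1)\,\langle C,1\rangle\cdots\langle C,r\rangle$ for $C\in N$. Note that $\mathrm{enc}$ is the identity on ground trees over~$\Sigma$, so it inserts component symbols only at nonterminal occurrences. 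For a permutation-free pattern $t$ of rank~$k$ the string $\mathrm{enc}(t)$ contains $x_1,\dots,x_k$ in order and splits uniquely as $\mathrm{enc}(t)=\gamma_0\,x_1\,\gamma_1\cdots x_k\,\gamma_k$ with each $\gamma_i$ a string over the component nonterminals and $\Sigma$; I call $(\gamma_0,\dots,\gamma_k)$ the decomposition of~$t$. I then build $G'$: the big nonterminal $A=(A_1,\dots,A_n)$ becomes $A'$ obtained by replacing each $A_j$ by $\langle A_j,0\rangle\cdots\langle A_j,\rk(A_j)\rangle$, each link $B\in\LL$ becomes its analogue $B'$, and a rule $A\to((u_1,\dots,u_n),\LL)$ becomes a rule $A'\to(w,\LL')$ whose right-hand side $w$ is the (tuple-)concatenation of the decompositions of $u_1,\dots,u_n$, each block $\gamma_i$ turned into the pattern $\gamma_i x_1$; initial rules are handled as the rank-$0$ case and receive a trailing~$\e$.

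The technical core is to show that $\mathrm{enc}$ is compatible with second-order substitution: the decomposition of $u[B\gets v]$ is obtained from that of~$u$ by substituting, component for component, the decomposition of~$v$. This is where I would lean on the associativity and commutativity of substitution in Lemma~\ref{lem:comm-assoc} together with the defining clause $\mathrm{enc}(C(s_1,\dots,s_r))=\langle C,0\rangle\,\mathrm{enc}(s_1)\cdots\langle C,r\rangle$, which is exactly the way an MCFG rule threads its arguments through a split nonterminal. Granting this, a routine induction on derivation trees, parallel to the proofs of Theorem~\ref{thm:dtree} and Lemma~\ref{lem:valsub}, yields $\val_{G'}(d')=\varphi(\val_G(d))$, where $d'$ is obtained from $d$ by relabelling every rule $\rho$ to its image~$\rho'$. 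Since $\rho\mapsto\rho'$ is a bijection that preserves the link structure, the relabellings in both directions are projections, hence \LDTR"~transducers, and Proposition~\ref{pro:LDTRcomp} gives the \LDTR"~equivalence. Because $\mathrm{enc}$ neither creates nor destroys terminal occurrences, a $\Delta$"~lexicalized rule maps to a $\Delta$"~lexicalized rule; the number of links is unchanged, so $\lambda(G')=\lambda(G)$; and $\abs{A'}=\sum_{j}(\rk(A_j)+1)\le\mu(G)\cdot(\wid(G)+1)$, the stated equality being forced by padding each nonterminal to $\wid(G)+1$ components (the surplus ones generating the empty string) and one big nonterminal to the full length.

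For the final, footed statement I would instead split each nonterminal $C$ of rank at least~$1$ into just two components $\langle C,\uparrow\rangle,\langle C,\downarrow\rangle$ (and one for rank~$0$) and set $\mathrm{enc}(C(s_1,\dots,s_r))=\langle C,\uparrow\rangle\,\mathrm{enc}(s_1)\cdots\mathrm{enc}(s_r)\,\langle C,\downarrow\rangle$. The justification rests on the invariant stated after Definition~\ref{def:footed}, that a footed MCFTG generates only footed forests: every pattern produced by~$C$ has its argument positions $x_1,\dots,x_r$ consecutive, namely the children of a terminal foot node, so it has exactly two non-empty blocks and two components suffice. This consecutiveness is preserved by the rule right-hand sides as well, even when a foot node carries a nonterminal~$\omega$, since $\mathrm{enc}(\omega(x_1,\dots,x_k))=\langle\omega,\uparrow\rangle\,x_1\cdots x_k\,\langle\omega,\downarrow\rangle$ again keeps the variables adjacent; hence each footed $u_j$ decomposes into two blocks and each $A_j$ contributes two components, giving $\abs{A'}=2n\le 2\mu(G)$ and thus $\mu(G')=2\mu(G)$ after the same padding. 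The main obstacle in this case is precisely to verify, hand in hand, the footedness invariant and the substitution-compatibility of the two-component encoding; once these are in place, the correctness argument and the transducers are identical to those of the general case.
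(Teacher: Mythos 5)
Your proposal is correct and takes essentially the same approach as the paper: your map $\mathrm{enc}$ is exactly the paper's $\exp'$, your block decomposition is the paper's $\exp$ (via the bijection~$\pi$), your substitution-compatibility statement is the paper's central Claim, and your two-component encoding for the footed case coincides with the paper's observation that $\langle C,1\rangle,\dotsc,\langle C,k-1\rangle$ always generate~$x_1$. The only cosmetic deviations are your padding remark to force the stated multiplicity equality and your plan to derive the Claim from Lemma~\ref{lem:comm-assoc}, where the paper instead uses a direct induction with string homomorphisms.
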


\begin{proof}
  By Lemma~\ref{lem:permfree} we may assume that $G = (N, \N, \Sigma,
  S, R)$~is permutation-free.  We will define the MCFG~$G' = (N' \cup
  \{S'\}, \N' \cup \{S'\}, \Sigma \cup \{\e\}, S', R')$, where $S'$~is
  a new nonterminal and all the symbols in~$\Sigma$ now have rank~$1$.
  First of all, we let~$N' = \{\langle C, i \rangle \mid C \in N,\, 0
  \leq i \leq \rk(C)\}$.
  For every~$C \in N^{(k)}$ the intuition behind this is that $\langle C, i
  \rangle(x_1)$~generates the string~$w_i x_1$, when $C(\seq
  x1k)$~generates (as part of a big nonterminal) the terminal
  tree~$w_0 x_1 w_1 \dotsm x_k w_k \in \PF_\Sigma(X_k)$ with~$\seq w1k
  \in \Sigma^*$.  For every~$C \in N^{(k)}$, let its expansion be $\exp(C)
  = (\langle C, 0 \rangle, \langle C, 1 \rangle, \dotsc, \langle C, k
  \rangle) \in (N')^{\scriptscriptstyle +}$, and for every~$A = (\seq
  A1n) \in \N$, let~$\exp(A) = \exp^*(A) = \exp(A_1) \dotsm \exp(A_n) \in
  (N')^{\scriptscriptstyle +}$ be the concatenation of the expansions
  of its nonterminals.  Then we define~$\N' = \{\exp(A) \mid A \in
  \N\}$.

  In the remainder of this proof we need the following two bijections $\pi$~and~$\lambda$. 
  The right-hand side forest~$u = (u_1x_1, \dotsc, u_nx_1) \in 
  P_{N' \cup \Sigma}(X_1)^{\scriptscriptstyle +}$ of a possible non-initial rule of~$G'$ 
  is in one-to-one correspondence with the string~$\pi(u) =
  u_1x_1 \dotsm u_nx_1 \in (N' \cup \Sigma \cup X_1)^*$ that ends
  on~$x_1$ and with the sequence~$\lambda(u) = (\seq u1n)$ of
  strings~$\seq u1n \in (N' \cup \Sigma)^*$.  
  For the definition of the rules of~$G'$ we need the
  expansion of the right-hand side forests of the rules of~$G$.
  For every~$t \in T_{N \cup \Sigma}(X)$ we define $\exp(t) =
  \pi^{-1}(\exp'(t) \cdot x_1) \in P_{N' \cup
    \Sigma}(X_1)^{\scriptscriptstyle +}$, where $\pi$~is the bijection
  defined above and where $\exp'(t) \in (N' \cup \Sigma \cup
  X_1)^*$ is defined inductively as follows:
  \[ \exp'(t) =
  \begin{cases}
    x_1 & \text{if } t \in X \\
    \sigma \cdot \exp'(t_1) \dotsm \exp'(t_k) & \text{if } t =
    \sigma(\seq t1k) \text{ with } \sigma \in \Sigma \\
    \langle C, 0 \rangle \cdot \exp'(t_1) \cdot \langle C, 1 \rangle
    \dotsm \exp'(t_k) \cdot \langle C, k \rangle & \text{if } t =
    C(\seq t1k) \text{ with } C \in N \enspace.
  \end{cases}
  \]
  We note that~$\exp'(t) = t[x \gets x_1 \mid x \in X]$ 
  if~$t \in T_\Sigma(X)$.  Given~$t = (\seq t1n) \in T_{N \cup
    \Sigma}(X)^{\scriptscriptstyle +}$ we let~$\exp(t) = \exp^*(t) = \exp(t_1)
  \dotsm \exp(t_n)$ be the concatenation of the expansions of its elements. 

  Now, if~$\rho = A \to (u, \{\seq B1k\}) \in R$, then $R'$~contains the non-initial rule 
  \[\rho_{\exp} = \exp(A) \to
  (\exp(u), \{\exp(B_1), \dotsc, \exp(B_k)\}) \enspace.\]  
  Clearly, the
  rule~$\rho$ can be reconstructed from~$\rho_{\exp}$.   
  Finally we define the initial rules of $G'$. 
  If $\rho_{\exp}=$ $\langle S,0\rangle \to (vx_1,\LL)$ is a rule in $R'$ 
  as constructed above for $A=S$,
  then $R'$ contains the additional rule $\rho'_{\exp}=S' \to (v\e,\LL)$.\footnote{The rule
$\rho_{\exp}=$ $\langle S,0\rangle \to (vx_1,\LL)$ is then superfluous, but we keep it to simplify
the correctness proof.}
  At this point, we completed the construction of~$G'$.  To prove its
  correctness we need the following claim.

  \paragraph{Claim} Given a tree~$t \in T_{N \cup \Sigma}(X)$, a
  repetition-free sequence~$(\seq A1n) \in N^n$, and permutation-free
  patterns~$\seq s1n \in \PF_\Sigma(X)$ such that~$\rk(A_i) = \rk(s_i)$
  for every~$i \in [n]$, we have
  \[ \exp(t[(\seq A1n) \gets (\seq s1n)]) = \exp(t)[\exp(A_1) \dotsm
  \exp(A_n) \gets \exp(s_1) \dotsm \exp(s_n)] \enspace. \]

  \paragraph{Proof of claim} It can first be shown that $\exp'(t[(\seq
  A1n) \gets (\seq s1n)]) = h^*(\exp'(t))$, where $h$~is the string
  homomorphism over~$N' \cup \Sigma \cup X_1$ such that the string
  $h(\langle A_i, j \rangle)$ is the $(j+1)$"~th element of the
  sequence~$\lambda(\exp(s_i))$, with the bijection $\lambda$
  defined above, for every $i \in [n]$~and~$\langle A_i,
  j \rangle \in N'$.  Moreover, $h$~is the identity for the remaining
  elements of~$N' \cup \Sigma \cup X_1$.   The straightforward proof
  is left to the reader; it is by induction on the structure of~$t$
  using the obvious fact that $\exp'(s[x_i \gets t_i \mid 1 \leq i
  \leq k] = g^*(s)$ for all $s \in T_\Sigma(X_k)$~and~$\seq t1k \in T_{N
    \cup \Sigma}(X)$, where $g$~is the string homomorphism
  over~$\Sigma \cup X_k$ such that~$g(x_i) = \exp'(t_i)$ for~$i \in
  [k]$ and~$g(\sigma) = \sigma$ for~$\sigma \in \Sigma$.  Thus, the
  left-hand side of the equation is~$\pi^{-1}(h^*(\exp'(t)) \cdot x_1)$.
  We now observe that this is equal to~$\pi^{-1}(h^*(\pi(\exp(t))))$,
  which clearly equals the right-hand side of the equation.   
  \emph{This proves the claim.}

\medskip
  For every derivation tree~$d \in L(G_\der, A)$ there is a derivation
  tree~$d' \in L(G'_\der, \exp(A))$ such that $\val(d') =
  \exp(\val(d))$.  In fact, $d'$~is obtained from~$d$ by changing
  every label~$\rho$ simply into~$\rho_{\exp}$.  Let~$d = \rho(\seq
  d1k)$, and let~$d'_i$ be such that~$\val(d'_i) = \exp(\val(d_i))$.
  Now consider~$d' = \rho_{\exp}(\seq{d'}1k)$. Then we have
  \begin{align*}
    \val(d')
    &= \exp(u)[\exp(B_1) \dotsm \exp(B_k) \gets \exp(\val(d_1))
      \dotsm \exp(\val(d_k))] \text{ and }\\
    \val(d) 
    &= u[\word B1k \gets \val(d_1) \dotsm \val(d_k)] \enspace.
  \end{align*}
  Thus~$\val(d') = \exp(\val(d))$ by the above claim.  Hence if~$d
  \in L(G_\der)$, then $d'\in L(G'_\der,\langle S,0\rangle)$
  and hence $d''\in L(G'_\der)$ where $d''$ is obtained from $d'$ 
  by priming the label of its root. Moreover,
  $\val(d')=\exp(\val(d))=\val(d)x_1$. Hence, by Lemma~\ref{lem:treehomsub},
  $\val(d'')=\val(d')[x_1 \gets \e]=\val(d)\e = \val(d)$.
  This shows that~$L(G) \subseteq
  L(G')$.  Clearly, there is a two-state LDT"~transducer that
  transforms~$d$ into~$d''$. In fact, it is a finite-state relabeling. 
  Since, obviously, every
  derivation tree in~$L(G'_\der)$ is of the form~$d''$ with~$d
  \in L(G_\der)$, it also follows that~$L(G') \subseteq L(G)$.
  Clearly, there is a one-state LDT"~transducer that
  transforms~$d''$ into~$d$ by changing
  every~$\rho_{\exp}$ and~$\rho'_{\exp}$ into~$\rho$.
  
  If $G$ is footed, then the nonterminals $\langle C,1\rangle,\dotsc,\langle C,k-1\rangle$,
  where $k=\rk(C)$, are superfluous because they always generate $x_1$. 
  Thus, in this case it suffices to define $\exp(C)=(\langle C,0\rangle,\langle C,k\rangle)$
  and adapt the construction accordingly. The resulting construction is similar to the one 
  described in~\cite[Section~4.5.1]{wei88} where it is shown that the yield of 
  a tree language in MC"~TAL is in MCF. 
\end{proof}

\begin{example}
\upshape
  We consider the permutation-free MCFTG~$G = (N, \N, \Sigma, S, R)$ with
  $N = \{S, A^{(2)}, B^{(0)}\}$, $\,\N =
  \{S, (A, B)\}$, $\,\Sigma = \{\sigma^{(2)}, \alpha^{(0)},
  \beta^{(0)}, \gamma^{(0)}\}$, and the following three rules:
  \begin{align*}
    S 
    &\to \sigma(A(\alpha, \beta), B) \\
    \bigl(A(x_1, x_2), B \bigr) 
    &\to \bigl(\sigma(\alpha, A(\sigma(\beta, x_1), \sigma(\gamma,
      x_2))),\, \sigma(B, \beta) \bigr)
    & \bigl(A(x_1, x_2), B \bigr)
    &\to \bigl(\sigma(x_1, x_2),\, \gamma \bigr) \enspace. 
  \end{align*}
  Clearly, $L(G, (A, B))$~consists of all
  forests~$\bigl((\sigma \alpha)^n \sigma(\sigma \beta)^n x_1 (\sigma
  \gamma)^n x_2,\, \sigma^n \gamma \beta^n)$ with~$n \in \nat_0$.
  Consequently, $L(G) = \{\sigma(\sigma \alpha)^n \sigma(\sigma
  \beta)^n \alpha(\sigma \gamma)^n \beta \sigma^n \gamma \beta^n \mid
  n \in \nat_0\}$.  The MCFG~$G'$ constructed in the proof of
  Theorem~\ref{thm:mcft-to-mcf} has the following four rules
  (in which we omit all parentheses in trees):
  \begin{align*}
    S'
    &\to \sigma \langle A, 0 \rangle \alpha \langle A, 1 \rangle \beta
      \langle A, 2 \rangle \langle B, 0 \rangle \e \\
    \langle S, 0 \rangle x_1
    &\to \sigma \langle A, 0 \rangle \alpha \langle A, 1 \rangle \beta
      \langle A, 2 \rangle \langle B, 0 \rangle x_1 \\
    (\langle A, 0 \rangle x_1, \langle A, 1 \rangle x_1,
    \langle A, 2 \rangle x_1, \langle B, 0 \rangle x_1)
    &\to (\sigma \alpha \langle A, 0 \rangle \sigma \beta x_1,
      \,\langle A, 1 \rangle \sigma \gamma x_1, \,\langle A, 2
      \rangle x_1, \,\sigma \langle B, 0 \rangle \beta x_1) \\
    (\langle A, 0 \rangle x_1, \langle A, 1 \rangle x_1,
    \langle A, 2 \rangle x_1, \langle B, 0 \rangle x_1)
    &\to (\sigma x_1, \,x_1, \,x_1, \,\gamma x_1) \enspace.
    \end{align*}
    Clearly,
    $L(G', (\langle A, 0 \rangle, \langle A, 1 \rangle, \langle A, 2
    \rangle, \langle B, 0 \rangle)) = \bigl\{\bigl((\sigma \alpha)^n
    \sigma(\sigma \beta)^n x_1,\, (\sigma \gamma)^n x_1,\, x_1,\,
    \sigma^n \gamma \beta^n x_1 \bigr) \mid n \in\nat_0 \bigr\}$  
    and hence~$L(G') = L(G)$. The second rule of $G'$ is of course superfluous.

    For the next lemma and corollary we note that 
    the MCFG $G''$ that is obtained from~$G'$ by removing~$\sigma$ 
    and thus has the rules
  \begin{align*}
    S'
    &\to \langle A, 0 \rangle \alpha \langle A, 1 \rangle \beta
      \langle A, 2 \rangle \langle B, 0 \rangle \e \\
    (\langle A, 0 \rangle x_1, \langle A, 1 \rangle x_1,
    \langle A, 2 \rangle x_1, \langle B, 0 \rangle x_1)
    &\to (\alpha \langle A, 0 \rangle \beta x_1,
      \,\langle A, 1 \rangle \gamma x_1, \,\langle A, 2
      \rangle x_1, \,\langle B, 0 \rangle \beta x_1) \\
    (\langle A, 0 \rangle x_1, \langle A, 1 \rangle x_1,
    \langle A, 2 \rangle x_1, \langle B, 0 \rangle x_1)
    &\to (x_1, \,x_1, \,x_1, \,\gamma x_1) \enspace, 
    \end{align*}
generates the string language 
$\yield(L(G)) = \{\alpha^n \beta^n \alpha \gamma^n \beta
    \gamma \beta^n \mid n \in \nat_0\}$. \fin
\end{example}

Theorem~\ref{thm:mcft-to-mcf} suggests that we do not need~MCFTGs at
all, because MCFGs~can generate the ``same'' languages.  However,
the~MCFTG is a way of guaranteeing that all intermediate results
during the generation process are trees, which supports the structured
generation of the trees.  

It follows from
Theorem~\ref{thm:mcft-to-mcf} that known properties of MCF~languages
(see, e.g., \cite{shaweijos87,sekmatfujkas91,kall10}) 
also hold for MCFT~tree languages.  Thus $\text{MCFT} \subseteq
\text{LOG}(\text{CFL})$; i.e.,  the recognition problem for an
MCFT~tree language is log"~space reducible to that of a context-free
string language.  Also, every tree language generated by an MCFTG~$G$
can be parsed in polynomial time by first parsing the given tree
according to the MCFG~$G'$ of Theorem~\ref{thm:mcft-to-mcf} in
polynomial time and then transforming the resulting derivation tree
of~$G'$ by the corresponding \LDTR"~transducer into one of~$G$ in
linear time.  This will be discussed in more detail in
Section~\ref{sub:parsing}.  Additionally, every MCFT~tree language
is semi-linear.

Next we show that MCFGs~generate exactly the yield languages of the
tree languages generated by~MCFTGs. 
We recall that the yield of a tree $t\in T_\Sigma$ is defined 
as~$\yield(t) = \yield_{\Sigma^{(0)} \setminus \{e\}}(t)$,
where~$e$ is a special symbol~$e$ of rank~$0$ that 
satisfies~$\yield(e) = \varepsilon$. 
For a class~$\X$ of tree
languages, let $\text{y}\X$ be the class of all languages~$\yield(L)$
with~$L \in \X$.
Thus, we will show that $\text{yMCFT} = \text{MCF}$. 
In fact, this is already a consequence of (the second equation of) 
Corollary~\ref{cor:charmctal} in Section~\ref{sub:mcftismctal},
which implies that $\text{yMCFT}=\text{yMC"~TAL}$, 
and the equation $\text{yMC"~TAL} = \text{MCF}$ 
which was shown in~\cite{wei88}.\footnote{The equality 
$\text{yMCFT} = \text{MCF}$ is also stated
in~\protect{\cite[Theorem~1]{boukalsal12}}.}
We additionally prove \LDTR "~$\yield$"~equivalence, for which we
refer to Definition~\ref{def:LDTReq}. 
In the first half of the next lemma we consider, more generally,
a subset $\Delta\subseteq \Sigma$ of lexical symbols and we prove 
\LDTR "~$\yield_\Delta$"~equivalence (where $\yield_\Delta$ is defined 
in the paragraph on homomorphisms in Section~\ref{sub:seqs}). This
general case will be used in the proof of Theorem~\ref{thm:mcftgparse}.

\begin{lemma}
  \label{lem:ymcft}
  Let~$\Delta \subseteq \Sigma$.
  \begin{compactenum}[(1)]
  \item For every MCFTG~$G$ there is an MCFG~$G''$ that is
    \LDTR"~$\yield_\Delta$"~equivalent to~$G$. 
  \item For every MCFG~$G$ there is an MRTG~$G_1$ such that $G$~is
    \LDTR"~$\yield$"~equivalent to~$G_1$.
  \end{compactenum}
\end{lemma}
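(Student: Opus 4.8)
The plan is to derive both statements from machinery already available, in each case chaining together several \LDTR"~$\varphi$"~equivalences and using that \LDTR\@ is closed under composition (Proposition~\ref{pro:LDTRcomp}). For part~(1) I would reduce to the tree case via Theorem~\ref{thm:mcft-to-mcf} and then erase non-lexical symbols by a homomorphism through Lemma~\ref{lem:cover}; for part~(2) I would build the MRTG by hand, keeping the rule and big-nonterminal structure of the given MCFG fixed so that the two derivation-tree grammars become isomorphic up to a relabeling.

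For part~(1), first I would apply Theorem~\ref{thm:mcft-to-mcf} to obtain an MCFG~$G'$ over $\Sigma\cup\{\e\}$ (with every symbol of $\Sigma$ of rank~$1$) that is \LDTR"~equivalent to~$G$ under the identification of each tree $t\in T_\Sigma$ with the monadic tree $t\e$. It then remains to delete all non-lexical symbols. The key observation is that on the monadic representation this deletion is realized by a single simple tree homomorphism $h$ from $\Sigma\cup\{\e\}$ to $\Delta\cup\{\e\}$, namely $h(\sigma)=\init(\sigma)$ for $\sigma\in\Delta$, $h(\sigma)=x_1$ for $\sigma\in\Sigma\setminus\Delta$, and $h(\e)=\e$. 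Since the term-string of a tree lists its node labels in pre-order and the monadic spine of $t\e$ reproduces exactly that order, a short induction on~$t$ shows $\hat h(t\e)=\yield_\Delta(t)\,\e$, i.e.\ $\hat h$ computes the lexical yield. I would then set $G''=G'_h$ and invoke Lemma~\ref{lem:cover}, which makes $G''$ an MCFG over $\Delta\cup\{\e\}$ that is \LDTR"~$\hat h$"~equivalent to~$G'$. Composing the two equivalences gives that $G''$ is \LDTR"~$\yield_\Delta$"~equivalent to~$G$, because the value transformation composes to $\hat h(\val(d)\,\e)=\yield_\Delta(\val(d))$ for $d\in L(G_\der)$.

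For part~(2) I would construct the MRTG~$G_1$ directly. Its ranked alphabet consists of a rank-$0$ copy $\bar a$ of each $a\in\Sigma$, the empty-string symbol~$e$, and a fresh binary symbol~$\omega$; every nonterminal of~$G$ (including those of rank~$1$) is given rank~$0$ in~$G_1$, and every big nonterminal $A=(\seq A1n)$ becomes $\bar A=(\bar A_1,\dotsc,\bar A_n)$. Each right-hand side $u_ix_1=\alpha_0 C_1\alpha_1\dotsm C_r\alpha_r x_1$ of a non-initial rule (with $\alpha_j\in\Sigma^*$ and $C_j$ nonterminals) is turned into a right comb of $\omega$-nodes whose left children carry, in order, the rank-$0$ terminals of the $\alpha_j$ and the now rank-$0$ nonterminals $\bar C_j$, and whose bottom-right leaf is~$e$ (the trailing variable $x_1$ is replaced by~$e$, and in an initial rule $\e$ is likewise replaced by~$e$). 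Thus each nonterminal of~$G_1$ generates a complete tree whose yield is exactly the string contributed by the corresponding nonterminal of~$G$, and the ``continuation'' that in~$G$ fills $x_1$ is re-attached locally, as the right child of the $\omega$-node at the site where $\bar C_j$ (or $\bar A_i$) is used in the parent rule. Since the rules of $G_1$ and $G$ are then in bijection $\rho\leftrightarrow\bar\rho$ with matching links and ranks, the required \LDTR"~transducers $M$ and $M'$ are mere projections that relabel $\bar\rho\mapsto\rho$ and $\rho\mapsto\bar\rho$.

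The real work, and the step I expect to be the main obstacle, is the correctness of this localization of continuations. I would prove, by induction on derivation trees, the invariant that for every $A=(\seq A1n)\in\N$ and every $d\in L(G_\der,A)$ with $\val(d)=(w_1x_1,\dotsc,w_nx_1)$, the relabeled tree $\bar d\in L((G_1)_\der,\bar A)$ satisfies $\val(\bar d)=(t_1,\dotsc,t_n)$ with $\yield(t_i)=w_i$ for all~$i$ (and $\yield(\val(\bar d))=\val(d)$ when $A=S$). The reason the construction is legitimate is precisely that in an MCFG every pattern has its unique variable as a \emph{suffix}, i.e.\ at the very bottom of the monadic spine, so the material substituted for $x_1$ is always appended at the right end of the generated string; associativity of concatenation then lets us place $e$ at the bottom hole and re-introduce the continuation one level up through~$\omega$ without changing the yield. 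Carrying this argument through the second-order substitutions of the least-fixed-point semantics (using Lemma~\ref{lem:comm-assoc}) is where the care is needed; once the invariant is established, the two value conditions of Definition~\ref{def:LDTReq} for \LDTR"~$\yield$"~equivalence are exactly its two directions, and $L(G)=\yield(L(G_1))$ follows.
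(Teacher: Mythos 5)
Your proposal is correct and follows essentially the same route as the paper: part~(1) is exactly the paper's argument (Theorem~\ref{thm:mcft-to-mcf} followed by the erasing homomorphism $h$ with $\hat h(t\e)=\yield_\Delta(t)\e$ and Lemma~\ref{lem:cover}, composed via Proposition~\ref{pro:LDTRcomp}), and your part~(2) comb construction with the binary symbol~$\omega$ and bottom leaf~$e$ is the paper's construction with its binary symbol~$c$, differing only cosmetically (barred rank-$0$ copies instead of reusing the same symbols) and in that you spell out the yield-preservation invariant that the paper leaves as ``it should be clear.''
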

\begin{proof}
  It is straightforward to generalize the well-known proofs for~RTGs
  and context-free grammars (see, e.g., \cite[Theorem~3.28]{eng75b}). 
    To prove statement~(1), 
    let~$G = (N, \N, \Sigma, S, R)$ be an MCFTG, and let~$G' =
    (N', \N', \Sigma \cup\{\e\}, S', R')$ be the \LDTR"~equivalent
    MCFG that exists by Theorem~\ref{thm:mcft-to-mcf}.\footnote{For
      the purpose of this proof, there is no need to reconsider the
      construction of~$G'$ in its proof.}  Clearly, the
    mapping~$\yield_\Delta$ is a tree homomorphism over the monadic
    ranked alphabet~$\Sigma \cup \{\e\}$.  To be precise, let $h$~be
    the tree homomorphism from~$\Sigma \cup \{\e\}$ to~$\Delta \cup
    \{\e\}$ such that~$h(\alpha) = x_1$ if~$\alpha \in \Sigma
    \setminus \Delta$ and $h(\alpha) = \init(\alpha)$~otherwise.  Then
    $\hat{h}(t\e) = \yield_\Delta(t)\e$ for every~$t \in T_\Sigma$,
    and so $\hat{h} = \yield_\Delta$.  Now let $G''$~be the
    grammar~$G'_h$ as defined before Lemma~\ref{lem:cover}.  Clearly,
    $G''$~is again an MCFG and \LDTR "~$\hat{h}$"~equivalent to~$G'$ by
    that lemma.  Since $G'$~and~$G$ are \LDTR"~equivalent, it follows that
    $G''$~is \LDTR"~$\yield_\Delta$"~equivalent to~$G$. 

To prove statement~(2), let~$G = (N, \N, \Sigma \cup \{\e\}, S, R)$ be an MCFG. We
    construct the MRTG $G_1 = (N, \N, \Sigma \cup \{e, c\}, S, R_1)$,
    where $c$~is a new terminal symbol of rank~$2$, and all symbols of
    $\Sigma \cup \{e\}$~and~$N$ have rank~$0$.  The new set~$R_1$ of
    rules is obtained by replacing each rule~$\rho = A \to ((\seq
    u1n), \LL)$ of~$G$ by the rule $\rho' = A \to ((\seq{u'}1n),
    \LL)$ of~$G_1$, where $\seq{u'}1n$ are defined as follows.  For~$u
    \in (N \cup \Sigma)^*\{\e, x_1\}$, if~$u \in \{\e, x_1\}$,
    then~$u' = e$, and if~$u = \gamma v$ with~$\gamma \in N \cup
    \Sigma$, then~$u' = c(\gamma, v')$.  Note that $\rho$~can be
    reconstructed from~$\rho'$.  It should be clear that $G$
    is \LDTR"~$\yield$"~equivalent to $G_1$ because the derivation trees
    of~$G_1$ are the primed versions of the derivation trees of~$G$.
\end{proof}

Recall that if $G'$~is \LDTR"~$\yield$"~equivalent to~$G$, then~$L(G')
= \yield(L(G))$.  Thus, we immediately obtain from
Lemma~\ref{lem:ymcft} (with $\Delta=\Sigma^{(0)} \setminus \{e\}$) 
that MCFGs~generate the yield languages of the
tree languages generated by~MCFTGs.  

\begin{corollary}
  \label{cor:ymcft}
  $\textup{yMCFT} = \textup{MCF} = \textup{yMRT}$.
\end{corollary}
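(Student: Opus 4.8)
The plan is to establish the two equalities by a cyclic chain of three inclusions,
\[ \text{yMRT} \subseteq \text{yMCFT} \subseteq \text{MCF} \subseteq \text{yMRT} \enspace, \]
each of which is essentially immediate once Lemma~\ref{lem:ymcft} is in hand, together with the elementary fact that every MRTG is a special MCFTG. Since all three inclusions together force the three classes to coincide, no separate argument for the two displayed equalities is then needed.

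First I would dispose of the trivial inclusion $\text{yMRT} \subseteq \text{yMCFT}$. By the definitions in Section~\ref{sub:defmcftg}, an MRTG is precisely an MCFTG $G$ with $\wid(G) = 0$, so $\text{MRT} \subseteq \text{MCFT}$; applying the yield operation to both sides preserves this inclusion. Next, for $\text{yMCFT} \subseteq \text{MCF}$, I would invoke Lemma~\ref{lem:ymcft}(1) specialized to $\Delta = \Sigma^{(0)} \setminus \{e\}$, for which $\yield_\Delta$ coincides with the yield mapping $\yield$ of Section~\ref{sub:trees}. That lemma produces, for an arbitrary MCFTG $G$, an MCFG $G''$ that is \LDTR"~$\yield$"~equivalent to $G$; by the observation following Definition~\ref{def:LDTReq} this means $L(G'') = \yield(L(G))$, so $\yield(L(G)) \in \text{MCF}$. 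Finally, for $\text{MCF} \subseteq \text{yMRT}$, I would apply Lemma~\ref{lem:ymcft}(2), which yields for every MCFG $G$ an MRTG $G_1$ with $G$ \LDTR"~$\yield$"~equivalent to $G_1$, whence $L(G) = \yield(L(G_1)) \in \text{yMRT}$.

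Combining the three inclusions closes the cycle and gives $\text{yMCFT} = \text{MCF} = \text{yMRT}$. I do not expect any genuine obstacle, as both substantive directions are already packaged inside Lemma~\ref{lem:ymcft}; the only point requiring care is the correct specialization of the lexical set in part~(1), namely choosing $\Delta = \Sigma^{(0)} \setminus \{e\}$ so that $\yield_\Delta = \yield$, which is exactly the instance flagged in the paragraph preceding the corollary.
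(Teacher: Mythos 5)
Your proposal is correct and matches the paper's own (very terse) proof: the paper likewise obtains $\text{yMCFT} \subseteq \text{MCF}$ from Lemma~\ref{lem:ymcft}(1) with $\Delta = \Sigma^{(0)} \setminus \{e\}$, $\text{MCF} \subseteq \text{yMRT}$ from Lemma~\ref{lem:ymcft}(2), and closes the cycle via the trivial inclusion $\text{MRT} \subseteq \text{MCFT}$. Your explicit three-inclusion chain is just a cleaner write-up of exactly that argument.
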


Thus, $\text{strMCFT} = \text{yMCFT}$ by Corollary~\ref{cor:strmcftg}.
This is quite unusual for a class of tree languages as already
observed at the end of~\cite[Section~4]{engman02}.  For instance, the
monadic tree languages generated by RTGs are the regular string languages, whereas the yield
languages are the context-free string languages.  

The proof of~$\text{MRT} \subseteq \text{MCF}$, and hence of~$\text{MCF} = \text{yMRT}$, 
is also straightforward (cf.\@ Example~\ref{exa:copy}). 
For an MRTG $G = (N, \N, \Sigma, S, R)$, we construct
the MCFG $G' = (N \cup \{S'\}, \N \cup \{S'\}, \Sigma, S', R')$, where
the set~$R'$ consists of 
all rules~$A \to ((u_1 x_1, \dotsc, u_nx_1), \LL)$ 
such that~$A \to ((\seq u1n), \LL)\in R$ and 
all rules~$S' \to (v \e, \LL)$ such that~$S \to (v, \LL)\in R$. 
Then~$L(G') = L(G)$ because this
construction is a special case of the construction in the
proof of Theorem~\ref{thm:mcft-to-mcf} if, in that proof, $\langle C, 0
\rangle$~is identified with~$C$ for every~$C \in N$. 
Note that in the constructions that prove $\text{MCF} = \text{yMRT}$
the multiplicity of the grammars is preserved. 

As observed before Theorem~\ref{thm:mcft-to-mcf}, the above proofs
show that $\text{CFT}_{\text{sp}} \subseteq
\text{MCF}_{\text{wn}}$~and~$\text{yCFT}_{\text{sp}} \subseteq
\text{MCF}_{\text{wn}}$, where $\text{MCF}_{\text{wn}}$~denotes the
class of languages generated by well-nested MCFGs.  It is, in fact, 
not difficult to prove that $\text{yCFT}_{\text{sp}} =
\text{MCF}_{\text{wn}}$ as stated in~\cite{kan09b}.  The multiplicity
of the well-nested MCFG equals one plus the width of the spCFTG.
It is proved in~\cite{kansal10} that
$\text{MCF}_{\text{wn}}$ is properly included in~$\text{MCF}$.

\subsection{Parsing of MCFTGs}
\label{sub:parsing}
\noindent
In the remainder of this section we consider the parsing problem
for~MCFTGs.  We start by showing the well-known fact (cf., e.g.,
\cite{sekmatfujkas91,kall10}) that every MCFG~$G$ can be parsed in
polynomial time in the sense that 
given a string~$w$ as input, the
parsing algorithm outputs an RTG~$H_w$ that generates all derivation
trees of~$G$ with value~$w$. In fact, the usual CYK~parsing algorithm
for~MCFGs constructs the RTG~$H_w$ in such a way that all its nonterminals are useful.  
Clearly, $w \in L(G)$~if and only if~$L(H_w) \neq \emptyset$, which can be
tested in linear time.  Moreover, a derivation tree with value~$w$ can
be computed from $H_w$ in linear time provided that~$L(H_w) \neq \emptyset$. 
In the next lemma we also state the degree of the polynomial, 
as taken from~\cite{sekmatfujkas91}.\footnote{In~\protect{\cite{sekmatfujkas91}}
a recognition algorithm is presented for MCFGs in a certain normal form.
In~\protect{\cite[Section~7]{kall10}} a parsing algorithm is presented for all MCFGs, 
with the RTG defined as a chart with back-pointers, but the degree of the polynomial 
is not analyzed.} 
It involves both the multiplicity $\mu(G)$ and the rule-width $\lambda(G)$ of $G$.\footnote{As 
defined after Definition~\protect{\ref{def:mcftg}}, the rule-width of $G$ is
$\lambda(G)=\max\{\abs{\LL(\rho)}\mid \rho\in R\}$ where $R$ is the set of rules of~$G$.}
It should be noted that, as shown in~\cite{sat92} (see also~\cite{bjobereri16,kajnaksekkas94}), 
the uniform membership problem for MCFGs is NP-hard, even when $\mu(G)$ or $\lambda(G)$ is fixed 
(except of course for $\mu(G)=1$ and for the trivial case $\lambda(G)=0$).

\begin{lemma}
  \label{lem:cyk}
  For every MCFG~$G$ with terminal alphabet~$\Sigma \cup \{\e\}$ there
  is a polynomial time algorithm that, on input~$w \in \Sigma^*$,
  outputs an RTG~$H_w$ such that~$L(H_w) = \{d \in L(G_\der) \mid
  \val(d) = w\}$. The degree of the polynomial is $\mu(G)\cdot(\lambda(G)+1)$.
\end{lemma}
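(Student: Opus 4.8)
The plan is to cast the classical CYK (chart) parser for MCFGs as the construction of an RTG whose terminal alphabet is the rule set $R$ of $G$ (so that its trees are derivation trees of $G$) and whose nonterminals record which factors of the input each big nonterminal covers. First I would recall, using the definition of MCFG and Theorem~\ref{thm:dtree}, that every big nonterminal $A = (\seq A1n) \in \N$ has fan-out $n = \abs A \le \mu(G)$ and that each $t \in L(G, A)$ is an $n$-tuple of strings, i.e.\ has the form $(u_1 x_1, \dotsc, u_n x_1)$ with $\seq u1n \in \Sigma^*$. Call a pair $(a, b)$ of positions with $0 \le a \le b \le \abs w$ a \emph{span}, denoting the factor of $w$ between them. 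The nonterminals of $H_w$ (all of rank~$0$) are the pairs $\langle A, s\rangle$ with $A \in \N$ and $s = ((a_1, b_1), \dotsc, (a_n, b_n))$ an $n$-tuple of spans; the intended invariant is that $\langle A, s\rangle$ generates exactly those $d \in L(G_\der, A)$ whose $j$-th value-component is the factor named by $(a_j, b_j)$, for all $j \in [n]$. The initial nonterminal is $\langle S, (0, \abs w)\rangle$.

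Next I would read off the rules of $H_w$ from those of $G$. Fix a rule $\rho = A \to (u, \LL) \in R$ with $\LL = \{\seq B1k\}$ in its chosen order and $u = (\seq u1n)$. Because $u$ is uniquely $N$-labeled, the nonterminal occurrences in $\word u1n$ are precisely the components of $\seq B1k$, each appearing once, interleaved with terminals. For every assignment of spans to the components of $A$ and of $\seq B1k$ that is \emph{consistent} with $u$---meaning that, reading each $u_j$ left to right as a concatenation of terminals and nonterminal-components, the assigned factors tile exactly the span of $A_j$ and every terminal of $u_j$ matches the corresponding letter of $w$---I would add the rule $\langle A, s\rangle \to \rho(\langle B_1, s_1\rangle, \dotsc, \langle B_k, s_k\rangle)$. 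Initial rules $S \to (v\e, \LL)$ are treated analogously, anchoring $\e$ at position~$\abs w$. Since $\rk(\rho) = \abs{\LL(\rho)} \le \lambda(G)$ and each rule has the displayed shape, $H_w$ is an RTG in normal form. Its correctness, namely that $\langle A, s\rangle \Rightarrow^*_{H_w} d$ iff $d \in L(G_\der, A)$ and $\val(d)$ equals the tuple of factors named by $s$, follows by a straightforward induction on $d$ whose only ingredient is that second-order substitution in an MCFG is ordinary string concatenation; taking $A = S$ then gives $L(H_w) = \{d \in L(G_\der) \mid \val(d) = w\}$ by Theorem~\ref{thm:dtree}.

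Finally I would establish the degree $\mu(G)\cdot(\lambda(G)+1)$, which is the heart of the matter. Rather than choosing all span endpoints independently (which would give degree $2\mu(G)\lambda(G)$), I would enumerate, per rule $\rho$, only its \emph{free boundary positions}. Within a component $u_j$ consecutive items of the concatenation share an endpoint, and each terminal is anchored to its neighbours with length one, so the distinct positions that can be chosen freely inside $u_j$ number exactly one (the start of $u_j$) plus the number of nonterminal-components occurring in $u_j$. Summing over $j \in [n]$, the free positions of $\rho$ total
\[ n + \sum_{i=1}^{k} \abs{B_i} \le \mu(G) + \lambda(G)\cdot\mu(G) = \mu(G)\cdot(\lambda(G)+1) \enspace, \]
and each such assignment determines $s, \seq s1k$ completely (the endpoints not chosen freely being forced), is tested against $w$ in time depending only on $G$, and yields at most one rule of $H_w$. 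Hence the whole construction runs in time $O(\abs R \cdot \abs w^{\,\mu(G)\cdot(\lambda(G)+1)})$. I expect the main obstacle to be exactly this book-keeping of shared endpoints: one must check that a choice of the free positions always reconstructs a unique consistent span assignment (so that the forced endpoints never clash) and that the enumeration can be organised---e.g.\ bottom-up as in the usual chart parser, retaining only derivable items---so that the \emph{running time}, not merely the size of $H_w$, respects the stated degree.
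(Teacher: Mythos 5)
Your proposal is correct and follows essentially the same route as the paper's proof: annotate each big nonterminal with a tuple of spans of~$w$ to form the RTG nonterminals, read off the rules from consistent span assignments, and obtain the degree $\mu(G)\cdot(\lambda(G)+1)$ by observing that only the left endpoint of each component of the left-hand side and the right endpoint of each nonterminal occurrence are free, all other endpoints being forced by the terminal blocks. The paper's account makes the same counting explicit by choosing $\seq \ell1m$ and the mapping~$r$ and noting that $\seq r1m$ and~$\ell$ are then determined, and it likewise handles usefulness of nonterminals via the CYK-style bottom-up construction you sketch.
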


\begin{proof}
Let $G = (N \cup \{S\}, \N, \Sigma \cup \{\e\}, S, R)$~and~$w \in
\Sigma^*$. Moreover, let $w=\word \sigma 1n$ with $n\in\nat_0$ and $\seq \sigma 1n\in\Sigma$. 
We define the set of positions of $w$ by $\pos(w)=\{0,1,\dots,n\}$. 
Intuitively, position~0 is just before $\sigma_1$ and position $i$ is 
just after $\sigma_i$ for every $i\in[n]$. For positions $i,j\in\pos(w)$ with $i\leq j$
we let $w[i,j]=\word\sigma{i+1}j$ be the substring of $w$ between positions $i$ and $j$. 
Note that $w[i,i]=\varepsilon$ for every $i\in\pos(w)$. 

The construction of $H_w$ is similar to the usual ``triple construction'' 
for proving that the intersection of a context-free language 
with a regular language is again context-free (in this case the regular language $\{w\}$). 
We construct the RTG~$H_w = (N_w, R, S_w, R_w)$, in which
$N_w$~is the set of all sequences $(\langle \ell_1,A_1, r_1 \rangle,
\dotsc, \langle \ell_m,A_m, r_m \rangle)$ such that $(\seq A1m) \in \N$ and
$0\leq \ell_i\leq r_i\leq n$ for all~$j \in [m]$. Moreover, $S_w=\langle 0,S,n\rangle$. 
The idea of the proof is that $(\langle \ell_1,A_1, r_1 \rangle,
\dotsc, \langle \ell_m,A_m, r_m \rangle)$ generates all derivation trees 
$d\in L(G_\der,(\seq A 1m))$ such that $\val(d)= (w[\ell_1,r_1],\dotsc,w[\ell_m,r_m])$. 

We now define the set~$R_w$ of rules of $H_w$. 
Let $\rho = A\to (u,\LL)$ be a rule in~$R$ with $A=(\seq A1m)$, $\LL = \{\seq B1k\}$, and
$u=(u_1x_1,\dotsc,u_mx_1)$ if $A\neq S$ and $u=u_1\e$ otherwise 
(with $A_j\in N$ and $u_j\in\Sigma^*$ for every $j\in[m]$). 
Moreover, let $\ell_1,r_1,\dots,\ell_m,r_m\in\pos(w)$ and 
let $\ell$ and $r$ be mappings from $\alp_N(u)$ to $\pos(w)$ such that
\begin{compactenum}[\indent (a)]
\item $\ell_i\leq r_i$ for every $i\in[m]$ and 
$\ell(C)\leq r(C)$ for every $C\in\alp_N(u)$,
\item for every $j\in[m]$, if $u_j=v_0C_1v_1\cdots C_pv_p$ with $p\in\nat_0$, 
$v_0,v_i\in\Sigma^*$, and $C_i\in N$ for every $i\in[p]$, then 
\begin{compactenum}[(1)]
\item $v_0=w[\ell_j,\ell_j+\abs{v_0}]$ and $v_i=w[r(C_i),r(C_i)+\abs{v_i}]$ for every $i\in[p]$, 
\item $\ell(C_1)=\ell_j+\abs{v_0}$ and $\ell(C_{i+1})=r(C_i)+\abs{v_i}$ for every $i\in[p-1]$,
\item $r_j=l_j+\abs{v_0}$ if $p=0$ and $r_j=r(C_p)+\abs{v_p}$ otherwise.
\end{compactenum}
\end{compactenum}
Then the set $R_w$ contains the rule~$(\langle \ell_1, A_1, r_1 \rangle,
\dotsc, \langle \ell_m, A_m, r_m \rangle) \to \rho(\hat{h}(B_1),\dotsc,\hat{h}(B_k))$, 
where~$h$ is the string homomorphism from $\alp_N(u)$ to $\pos(w)\times N\times \pos(w)$ 
such that $h(C)=\langle \ell(C), C, r(C) \rangle$ for every $C\in\alp_N(u)$.
Note that $\alp_N(u)=\bigcup_{i=1}^k\alp(B_i)$. 

The above proof idea can easily be shown by induction on the structure of $d$.
Thus, $S_w$ generates all derivation trees in $d\in L(G_\der)$ such that $\val(d)=w[0,n]=w$. 
Before constructing the rules of $R_w$, the set $\{i\in\pos(w)\mid v=w[i,i+\abs{v}]\}$ 
can be computed for every string $v\in\Sigma^*$ that occurs in a rule of $R$. 
Since $G$ is fixed, this can be done in linear time and 
takes care of the conditions in (1) above. 
When constructing a rule in $R_w$ corresponding to the rule $\rho\in R$ as above, 
it clearly suffices to choose $\seq \ell 1m$ and the mapping $r$, because $\seq r 1m$
are determined by~(3) above and the mapping $\ell$ is determined by~(2) above. 
Since each rule of $R_w$ can be constructed in constant time, 
constructing the rules corresponding to $\rho$ takes time $O(n^q)$ 
where $q=m+\sum_{i=1}^k\abs{B_i}$ is the number of possible choices of $\seq \ell 1m$ and $r$. 
Thus, the algorithm runs in time $O(n^k)$ 
where $k=\mu(G)+\lambda(G)\cdot \mu(G)=\mu(G)\cdot(\lambda(G)+1)$. 

We note that the set $N_w$ can be constructed in quadratic time.\footnote{In the trivial case 
where $\lambda(G)=0$ (and hence $\mu(G)=1$) we can take $N_w=\{S_w\}$.}
In fact, it should be clear that
$H_w$ can be constructed in such a way that only useful nonterminals occur in its rules.
Such a construction corresponds directly to a CYK parsing algorithm. 
\end{proof}

We now generalize this result to~MCFTGs.  Let $G$~be an MCFTG with
terminal alphabet~$\Sigma$, and let~$\Delta \subseteq \Sigma^{(0)}
\setminus \{e\}$ be a set of lexical symbols.  We can use the
MCFTG~$G$ to specify the MCF string language~$\yield_\Delta(L(G))$
together with a set of ``syntactic trees'',  where every tree~$t$
in~$L(G)$ is viewed as a syntactic tree for the
string~$\yield_\Delta(t)$.  In such a case, the parsing problem
for~$G$ amounts to finding the syntactic trees for a given string
over~$\Delta$.

\begin{theorem}
  \label{thm:mcftgparse}
  For every MCFTG~$G$ with terminal alphabet~$\Sigma$ and
  every~$\Delta \subseteq \Sigma$, there is a polynomial time
  algorithm that, on input~$w \in \Delta^*$, outputs an RTG~$H_w$ and
  an MCFTG~$G_w$ such that
  \[ L(H_w) = \{d \in L(G_\der) \mid \yield_\Delta(\val(d)) = w\}
  \qquad \text{and} \qquad L(G_w) = \{t \in L(G) \mid \yield_\Delta(t)
  = w\} \enspace. \]
  The degree of the polynomial is $\mu(G)\cdot(\wid(G)+1)\cdot(\lambda(G)+1)$.
  If $G$ is footed (i.e., is an nsMC"~TAG) then the degree is $2\cdot\mu(G)\cdot(\lambda(G)+1)$. 
\end{theorem}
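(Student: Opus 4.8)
The plan is to reduce parsing of the MCFTG $G=(N,\N,\Sigma,S,R)$ to parsing of an equivalent MCFG, to pull the result back along the connecting \LDTR"~transducers, and finally to realize the tree language $\{t\in L(G)\mid\yield_\Delta(t)=w\}$ by a product construction. First I would apply Lemma~\ref{lem:ymcft}(1) to obtain an MCFG $G''$ with rule set $R''$ that is \LDTR"~$\yield_\Delta$"~equivalent to $G$, together with the witnessing \LDTR"~transducers $M\colon T_R\to T_{R''}$ and $M'\colon T_{R''}\to T_R$. Since $G''$ is built as $(G')_h$ from the \LDTR"~equivalent MCFG $G'$ of Theorem~\ref{thm:mcft-to-mcf} (with $\hat h=\yield_\Delta$), and the construction of Lemma~\ref{lem:cover} leaves the nonterminals and big nonterminals unchanged, we have $\mu(G'')=\mu(G')=\mu(G)\cdot(\wid(G)+1)$ and $\lambda(G'')=\lambda(G')=\lambda(G)$, while $\mu(G'')=2\cdot\mu(G)$ when $G$ is footed. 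Inspecting the transducers in the proofs of Theorem~\ref{thm:mcft-to-mcf} and Lemma~\ref{lem:cover} shows that $M$ is a composition of finite-state relabelings (projections) and hence is itself a finite-state relabeling.

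Next I would run the algorithm of Lemma~\ref{lem:cyk} on $G''$ and the input $w\in\Delta^*$ (recall that $G''$ has terminal alphabet $\Delta\cup\{\e\}$ and $L(G'')=\yield_\Delta(L(G))$). This outputs, in time $O(n^k)$ with $n=\abs w$ and $k=\mu(G'')\cdot(\lambda(G'')+1)$, an RTG $H''_w$ with $L(H''_w)=\{d''\in L(G''_\der)\mid\val(d'')=w\}$. Here $k=\mu(G)\cdot(\wid(G)+1)\cdot(\lambda(G)+1)$ in general and $k=2\cdot\mu(G)\cdot(\lambda(G)+1)$ for footed $G$, which is exactly the claimed degree.

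From $H''_w$ I would construct the required RTG $H_w$ over $R$ as an RTG for $M^{-1}(L(H''_w))\cap L(G_\der)$. By Proposition~\ref{pro:LDTRinv} an RTG for $M^{-1}(L(H''_w))$ can be computed in time linear in $\abs{H''_w}$ because $M$ is a finite-state relabeling, and intersecting with the fixed regular tree language $L(G_\der)$ is again linear; I would take the product so that every state of $H_w$ carries its $G_\der$"~type $A\in\N$ as its $G_\der$"~component. For every $d\in L(G_\der)$ the translation $M(d)$ is defined, lies in $L(G''_\der)$, and satisfies $\val(M(d))=\yield_\Delta(\val(d))$; hence $M(d)\in L(H''_w)$ iff $\yield_\Delta(\val(d))=w$, so $L(H_w)=\{d\in L(G_\der)\mid\yield_\Delta(\val(d))=w\}$ as desired. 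Finally, I would obtain $G_w$ from $H_w$ by the product construction of Section~\ref{sec:charact}: to a state $p$ of $H_w$ of type $A=(\seq A1m)$ I associate the big nonterminal $\langle p\rangle=(\langle p,A_1\rangle,\dots,\langle p,A_m\rangle)$ with $\rk(\langle p,A_i\rangle)=\rk(A_i)$, and to each rule $p\to\rho(\seq p1k)$ of $H_w$ with $\rho=A\to(u,\{\seq B1k\})$ and $B_i$ the type of $p_i$ I associate the rule $\langle p\rangle\to(u',\{\langle p_1\rangle,\dots,\langle p_k\rangle\})$, where $u'$ is $u$ with every $C\in\alp(B_i)$ renamed to $\langle p_i,C\rangle$; the initial big nonterminal is $\langle p_0\rangle$ for the initial state $p_0$ (of type $S$). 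Because the links of a $G_\der$"~rule are distinct big nonterminals, the states $\seq p1k$ are pairwise distinct, so $u'$ is again uniquely labeled and the rule is well formed. A routine induction on derivation trees, using the definition of $\val$ and Theorem~\ref{thm:dtree}, then gives $L(G_w,\langle p\rangle)=\val(L(H_w,p))$ and hence $L(G_w)=\val(L(H_w))=\{t\in L(G)\mid\yield_\Delta(t)=w\}$.

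Since $H''_w$ dominates the running time and all subsequent steps are linear in $\abs{H''_w}$, the overall degree equals the degree from Lemma~\ref{lem:cyk}. The main obstacle is precisely this degree bookkeeping: the inverse of a general \LDTR"~transducer costs $O(\abs{H''_w}^{\,\mr_R+1})$ by Proposition~\ref{pro:LDTRinv}, which would inflate the exponent, so the argument crucially relies on $M$ being a finite-state relabeling (making $M^{-1}$ linear); a secondary point requiring care is the typing of the states of $H_w$ and the accompanying renaming in the product construction, which is what makes $\val$ transport correctly from $H_w$ to $G_w$.
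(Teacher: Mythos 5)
Your proposal is correct and follows essentially the same route as the paper's proof: obtain the \LDTR"~$\yield_\Delta$"~equivalent MCFG via Lemma~\ref{lem:ymcft}(1) (tracking that its multiplicity is $\mu(G)\cdot(\wid(G)+1)$, its rule-width is $\lambda(G)$, and that the connecting transducer is a finite-state relabeling, so Proposition~\ref{pro:LDTRinv} gives a linear-time inverse), parse $w$ with Lemma~\ref{lem:cyk}, pull back and intersect with $L(G_\der)$ to get $H_w$, and finally build $G_w$ by a product construction between $G$ and $H_w$. Your variant of the product construction (pairing only typed states of $H_w$ with their matching big nonterminals, instead of all pairs in $\N\times N_w$) and your explicit treatment of the footed case are only cosmetic differences from the paper's argument.
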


\begin{proof}
  Let $G'$~be the \LDTR"~$\yield_\Delta$"~equivalent MCFG that exists
  by Lemma~\ref{lem:ymcft}(1), and let $M$~be the \LDTR"~transducer
  from~$G$ to~$G'$. It can easily be verified that $\mu(G')=\mu(G)\cdot(\wid(G)+1)$
  and $\lambda(G')=\lambda(G)$, and that $M$ is a (composition of) finite-state relabeling(s). 
  Moreover, let~$w \in \Delta^*$.  By
  Lemma~\ref{lem:cyk} we can construct an RTG~$H'_w$ such
  that~$L(H'_w) = \{d \in L(G'_\der) \mid \val(d) = w\}$, in the required polynomial time.  
  Then, by Proposition~\ref{pro:LDTRinv} and using a product construction with
  the RTG~$G_\der$, we construct in linear time an RTG~$H_w$ such that
  \[ L(H_w) = M^{-1}(L(H'_w)) \cap L(G_\der) = \{d \in L(G_\der) \mid
  \val(M(d)) = w\} \enspace, \]
  which satisfies the requirement because~$\val(M(d)) =
  \yield_\Delta(\val(d))$.  It remains to construct~$G_w$ from
  $G$~and~$H_w$, which we achieve in linear time by an easy product construction.
  Let $G = (N, \N, \Sigma, S, R)$~be the MCFTG and~$H_w = (N_w, R,
  S_w, R_w)$ be the constructed RTG.  
  We construct $G_w = (N', \N', \Sigma, S', R')$ such that $N' =
  N \times N_w$~and~$\N'$ consists of all~$(\langle A_1, C \rangle,
  \dotsc, \langle A_n, C \rangle)$ with $(\seq A1n) \in \N$~and~$C
  \in N_w$.  For $A = (\seq A1n)$, we denote~$(\langle A_1, C
  \rangle, \dotsc, \langle A_n, C \rangle)$ by~$A \otimes C$.  The
  initial nonterminal of~$G_w$ is~$S' = S \otimes S_w = \langle S,
  S_w \rangle$.  If~$A \to (u, \LL)$ is a rule in~$R$
  with~$\LL = \{\seq B1k\}$ and $C_0 \to \rho(\seq C1k)$ is a rule
  in~$R_w$, then~$R'$ contains the rule~$A \otimes C_0 \to
  (u', \LL')$, in which $u' = u[B_i \gets \init(B_i \otimes C_i) \mid
  1 \leq i \leq k]$ and $\LL' = \{B_1 \otimes C_1, \dotsc, B_k
  \otimes C_k\}$.
  It is easy to show that~$L(G_w, A \otimes C) =
  \val(L(G_\der, A) \cap L(H_w, C))$ for every big nonterminal $
  A \otimes C \in \N'$.  Hence~$L(G_w) = \val(L(H_w))$, which shows
  that $G_w$~satisfies the requirement.
\end{proof}

For $\Delta = \Sigma$ this theorem shows that MCFTGs can be parsed as
tree grammars in polynomial time. For every input tree $t\in T_\Sigma$ the parsing algorithm 
produces as output an RTG $H_t$ such
that $L(H_t) = \{d \in L(G_\der) \mid \val(d) = t\}$.  The algorithm
can easily be extended to test in linear time whether or not $t \in
L(G)$ by testing whether $L(H_t)$ is nonempty.  Additionally, if
$L(H_t) \neq \emptyset$, then it can also compute in linear time an element
of $L(H_t)$; i.e., a derivation tree $d \in L(G_\der)$ such
that $\val(d) = t$.  

For~$\Delta \subseteq \Sigma^{(0)} \setminus
\{e\}$ we are in the situation described before the theorem. For every input string $w\in\Delta^*$ 
the parsing algorithm outputs an MCFTG~$G_w$ such that $L(G_w)$~is the set of all
syntactic trees~$t \in L(G)$ with~$\yield_\Delta(t) = w$.  Using $H_w$ as in the
previous case, the algorithm can be extended to test in linear time whether~$w \in
\yield_\Delta(L(G))$, and if so compute a derivation tree~$d \in
L(G_\der)$ such that~$\yield_\Delta(\val(d)) = w$.  Moreover, it can
then compute~$t = \val(d)$ in linear time; i.e., a syntactic tree~$t
\in L(G)$ with~$\yield_\Delta(t) = w$.

We note that, by the proof of Theorem~\ref{thm:mcftgparse}, these
parsing algorithms are directly based on a parsing algorithm
for~MCFGs; i.e., any algorithm that satisfies Lemma~\ref{lem:cyk}.  If
such a parsing algorithm for the \LDTR"~$\yield_\Delta$"~equivalent
MCFG~$G'$ does not output an RTG $H_w'$ for \emph{all} derivation trees~$d'$
with value~$w$, but outputs just \emph{one} such derivation tree~$d'$,
then there is no need to construct $H_w$ and $G_w$ because 
the above derivation tree~$d \in L(G_\der)$ and syntactic tree~$t
\in L(G)$ can be obtained in linear time as $d = M'(d')$~and~$t =
\val(d)$,  where $M'$~is the \LDTR"~transducer from~$G'$ to~$G$.

\section{Characterization}
\label{sec:charact}
\noindent
In this section we prove that MCFT~is equal to the
class~$\text{DMT}_{\text{fc}}(\text{RT})$ of images of the regular
tree languages under (total) deterministic finite-copying macro tree
transducers, and hence equal to the class~$\text{DMSOT}(\text{RT})$ of
images of the regular tree languages under (total) deterministic MSO
tree transducers.\footnote{\label{foo:dom} Since the domain of a macro tree
  transduction is a regular tree
  language~\protect{\cite[Theorem~7.4]{engvog85}},  the
  class~$\text{DMT}_{\text{fc}}(\text{RT})$ does not depend on the totality of the
  transducers. The same is true for MSO tree transductions and 
  the class~$\text{DMSOT}(\text{RT})$.} 
After proving this result we discuss a number of consequences, 
in particular several alternative characterizations of MCFT. 
As opposed to the usual notation in the
literature~\cite{engvog85,fulvog98,engman99,engman00},  we use~$Y$ as the set
of input variables and $X$~as the set of output variables (or
parameters) for macro tree transducers.  We only consider \emph{total
  deterministic} macro tree transducers that are \emph{simple} (i.e.,
linear and nondeleting) \emph{in the parameters}; this is indicated by
`D'~and~`sp', respectively.

A \emph{macro tree transducer} (in short,
$\text{DMT}_{\text{sp}}$"~transducer) is a system~$M = (Q, \Omega,
\Sigma, q_0, R)$, where $Q$~is a finite ranked alphabet of
\emph{states}, $\Omega$~and~$\Sigma$ are finite ranked alphabets of
\emph{input} and \emph{output symbols}, respectively, with~$Q \cap
\Sigma = \emptyset$, $q_0 \in Q^{(0)}$~is the \emph{initial state}, 
and $R$~is a finite set of \emph{rules}.  For every $q \in
Q^{(m)}$~and~$\omega \in \Omega^{(k)}$ with~$m, k \in \nat_0$ there is
exactly one rule of the form 
\[ \langle q, \omega(\seq y1k) \rangle(\seq x1m) \to \zeta \] 
in~$R$ such that~$\zeta \in P_{(Q \times Y_k) \cup \Sigma}(X_m)$,
where every element~$\langle q', y_i \rangle$ of~$Q \times Y_k$ has
the same rank as~$q'$.  We denote~$\zeta$ by~$\rhs_M(q,
\omega)$.

For every input tree~$s \in T_\Omega$ and every state~$q \in Q$, the
\emph{$q$"~translation} of~$s$ by~$M$, denoted by~$M_q(s)$, is a tree
in~$P_\Sigma(X_{\rk(q)})$ defined inductively as follows.  Let~$s =
\omega(\seq s1k)$ and consider the above rule.  Then~$M_q(s) =
\zeta[\langle q', y_i \rangle \gets M_{q'}(s_i) \mid q' \in Q, 1 \leq
i \leq k]$.  As in the case of \LDTR"~transducers, we define~$M(s) =
M_{q_0}(s)$ and call it the \emph{translation} of~$s$ by~$M$. 
Since $q_0$ has rank~0, $M(s)$ is a tree in $T_\Sigma$. The
\emph{tree transduction realized by~$M$}, also denoted by~$M$, is the
total function~$M = \{(s, M(s)) \mid s \in T_\Omega\}$ from~$T_\Omega$
to~$T_\Sigma$.  A $\text{DMT}_{\text{sp}}$"~transducer is a (total
deterministic) \emph{top-down tree transducer} (in short,
DT"~transducer) if all its states have rank~$0$. 

Finite-copying macro tree transducers were introduced
in~\cite{engman99}.  To define them, we need the well-known notion of
``state sequence'' (cf.~\cite[Definition~3.1.8]{engrozslu80}).
Let~$(\seq q1n) \in Q^*$ with $n \in \nat_0$ and $\seq q1n \in Q$, and
let~$\omega \in \Omega^{(k)}$ for some~$k \in \nat_0$.  For~$i \in
[k]$ we define~$\sts_{\omega, i}(\seq q1n) \in Q^*$ to be the sequence
of states
\[ \sts_{\omega, i}(\seq q1n) = \pi^*_i(\rhs_M(q_1, \omega) \dotsm
\rhs_M(q_n, \omega)) \enspace, \]
where $\pi_i$~is the string homomorphism from~$(Q \times Y_k) \cup
\Sigma \cup X$ to~$Q$ such that~$\pi_i(\langle q', y_i \rangle) = q'$
for every~$q' \in Q$ and $\pi_i(\alpha) = \varepsilon$~for every 
$\alpha \in \Sigma \cup X$.  For $s \in T_\Omega$~and~$p \in \pos(s)$, we define
the \emph{state sequence} of~$M$ at~$p$, denoted by~$\sts(s, p)$,
inductively as follows: (i)~$\sts(s, \varepsilon) = q_0$ and
(ii)~if~$\sts(s, p) = (\seq q1n)$ and~$s(p) = \omega \in \Omega^{(k)}$, 
then $\sts(s, pi) = \sts_{\omega, i}(\seq q1n)$ for every~$i \in [k]$.  
The \emph{set of state sequences} of~$M$, denoted by~$\sts(M)$, is defined 
by~$\sts(M) = \{\sts(s, p) \mid s \in T_\Omega,\, p \in \pos(s)\}$.
Note that it is the smallest subset~$S$ of~$Q^*$ such that (i)~$q_0
\in S$ and (ii)~if~$\overline{q} \in S$, then~$\sts_{\omega,
  i}(\overline{q}) \in S$ for all~$k \in \nat_0$, $\omega \in
\Omega^{(k)}$, and~$i \in [k]$.  We say that the
$\text{DMT}_{\text{sp}}$"~transducer~$M$ is \emph{finite-copying} (in
short, $\text{DMT}_{\text{fc}}$"~transducer) if $\sts(M)$~is finite;
it is \emph{$m$"~copying} for~$m \in \nat$, if the state sequences
in~$\sts(M)$ have length at most~$m$.
A~$\text{DT}_{\text{fc}}$"~transducer is a finite-copying $\text{DT}$"~transducer.

For a notion of $\X$"~transducer, we denote by~$\X$ the class of
transductions realized by $\X$"~transducers.  For a class~$\X$ of
transductions, we denote by~$\X(\text{RT})$ the class of all tree
languages~$M(L)$, where $M \in \X$~and~$L \in \text{RT}$ is a regular
tree language.  

The finite-copying macro tree transducers
of~\cite{engman99} are not necessarily simple; i.e., linear and
nondeleting in the parameters.  However, it follows from the results
of~\cite[Section~6]{engman99} that adding the feature of regular
look-ahead, which we do not need here, to the above finite-copying
macro tree transducers yields the same expressive power as
in~\cite{engman99}.  In particular, our notion of state sequence
corresponds to the one in Definition~6.8 and Lemma~6.9
of~\cite{engman99}.  Since regular look-ahead can be simulated by a 
relabeling of the input tree (see~\cite{eng77}), the
class~$\text{DMT}_{\text{fc}}(\text{RT})$, which we are interested in
here, coincides with the one in~\cite{engman99}
(denoted~$\text{MTT}_{\text{fc}}(\text{REGT})$ there).  Let us finally
note that it is decidable whether or not a macro tree transducer is
finite-copying~\cite[Lemma~4.10]{engman03a},  and if so, its set of
state sequences can be computed by iteration.

The inclusion~$\text{MCFT} \subseteq
\text{DMT}_{\text{fc}}(\text{RT})$ is a direct consequence of the next
lemma and Theorem~\ref{thm:dtree}.  The lemma shows that `$\val$' can be realized 
by a $\text{DMT}_{\text{fc}}$"~transducer. In its proof we use 
the following additional terminology.  For~$\bar{q} = (\seq q1n) \in
Q^{\scriptscriptstyle +}$ with $n \in \nat$~and~$\seq q1n \in Q$, we
define the $\bar{q}$"~translation of~$s \in T_\Omega$ by~$M_{\bar{q}}(s) =
(M_{q_1}(s), \dotsc, M_{q_n}(s))$.

\begin{lemma}
  \label{lem:valismtt}
  For every MCFTG~$G$ there is a
  $\text{DMT}_{\text{fc}}$"~transducer~$M$ such that~$M(d) = \val(d)$
  for every~$d\in L(G_\der)$.  If $G$~is an MRTG, then $M$~is a
  $\text{DT}_{\text{fc}}$"~transducer.
\end{lemma}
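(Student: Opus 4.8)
The plan is to construct the macro tree transducer $M$ directly from the definition of $\val$, using the states of $M$ to compute the value of each derivation subtree. Recall that a derivation tree $d$ of $G$ is a tree over the ranked alphabet $R$ (the rules of $G$, where $\rho$ has rank $\abs{\LL(\rho)}$), and that $\val(d)$ is a forest in $P_{N\cup\Sigma}(X)^{\scriptscriptstyle +}$ whose length equals $\abs{A}$ for the type $A$ of $d$. Since $M$ must output a single tree in $T_\Sigma$, whereas big nonterminals generate \emph{forests} of patterns (with variables from $X$), the first design decision is how to encode such a forest as a single tree built from states. The natural approach is to use one state $q_{A,i}$ for every big nonterminal $A\in\N$ and every $i\in[\abs{A}]$, where $q_{A,i}$ has rank $\rk(A_i)$ (the rank of the $i$-th component of $A$), and to arrange that the $q_{A,i}$-translation of $d$ equals the $i$-th tree of $\val(d)$, with the output parameters $\seq x1{\rk(A_i)}$ of $M$ playing exactly the role of the variables $\seq x1{\rk(A_i)}\in X$ in that pattern. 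Thus I would identify $Q$ with $\{q_{A,i}\mid A\in\N,\ i\in[\abs{A}]\}$ together with an initial state $q_0=q_{S,1}$ of rank $0$ (legitimate since $\rk(S)=0$ and $\abs{S}=1$).

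Next I would write down the rules of $M$. For a rule $\rho = A\to(u,\LL)\in R$ with $A=(\seq A1n)$, $u=(\seq u1n)$, and $\LL=\{\seq B1k\}$, the transducer reads the input symbol $\rho$, whose children in $d$ are derivation trees $\seq d1k$ of types $\seq B1k$. For each $i\in[n]$ I would set $\rhs_M(q_{A,i},\rho)$ to be the pattern $u_i$ with each nonterminal occurrence suitably rewritten: recalling that each $B_j=(\seq C1{m_j})$ occurs ``spread-out'' exactly once in $u$, a position $p$ of $u$ labeled by the $\ell$-th component $C_\ell$ of $B_j$ is replaced by a state call $\langle q_{B_j,\ell},\, y_j\rangle(\dots)$ whose argument subtrees are the (recursively translated) children of $p$ in $u_i$, substituted into the output parameters. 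Formally this is exactly the substitution that defines $\val$, namely $\val(d)=u[B_j\gets\val(d_j)\mid 1\le j\le k]$, read componentwise and with $\val(d_j)$ replaced by the state sequence $(\langle q_{B_j,1},y_j\rangle,\dots,\langle q_{B_j,m_j},y_j\rangle)$. Because $u$ is a forest of patterns (uniquely $N$-labeled, with each variable from $X_{\rk(A_i)}$ occurring exactly once in $u_i$), the right-hand side $\rhs_M(q_{A,i},\rho)$ is linear and nondeleting in the parameters $X$, so $M$ is indeed a $\text{DMT}_{\text{sp}}$-transducer. Correctness, i.e. $M_{q_{A,i}}(d)$ equals the $i$-th tree of $\val(d)$ for every $d\in\DL(G_\der,A)$, then follows by a routine induction on the structure of $d$, the induction step being precisely the defining equation of $\val$ together with the definition of the $q$-translation of a $\text{DMT}_{\text{sp}}$-transducer.

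I expect the main obstacle to be the finite-copying condition, which is the whole point of the lemma and the only nontrivial part. I would argue that $\sts(M)$ is finite by analyzing which state sequences actually arise. The key observation is the locality structure of MCFTGs: when $M$ descends into the $j$-th child $d_j$ of a node labeled $\rho=A\to(u,\LL)$, the states that ``call'' $d_j$ are exactly $q_{B_j,1},\dots,q_{B_j,m_j}$, one for each component of the single big nonterminal $B_j\in\LL$, and each occurs exactly once because $u$ is uniquely $N$-labeled and each $B_j$ occurs spread-out exactly once in $u$. Hence $\sts_{\rho,j}$ applied to the relevant incoming state sequence produces precisely the sequence $(q_{B_j,1},\dots,q_{B_j,m_j})$, whose length is $\abs{B_j}\le\mu(G)$. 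Starting from $\sts(d,\varepsilon)=q_0=q_{S,1}$ of length $1$, every state sequence of $M$ is therefore of the form $(q_{B,1},\dots,q_{B,\abs{B}})$ for some single big nonterminal $B\in\N$, so $\sts(M)\subseteq\{(q_{B,1},\dots,q_{B,\abs{B}})\mid B\in\N\}\cup\{q_0\}$, which is finite (in fact $M$ is $\mu(G)$-copying). This proves $M$ is a $\text{DMT}_{\text{fc}}$-transducer. Finally, if $G$ is an MRTG then $\wid(G)=0$, so every nonterminal has rank $0$, every state $q_{A,i}$ has rank $0$, and $M$ is a top-down tree transducer; combined with the finite-copying argument this makes $M$ a $\text{DT}_{\text{fc}}$-transducer, as claimed. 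I would close by noting that $M$ computes $\val$ correctly on \emph{all} of $T_R$, but the statement only asserts it for $d\in L(G_\der)$, for which $\val(d)\in T_\Sigma$ by Lemma~\ref{lem:valocc}(3), matching the output type of $M$.
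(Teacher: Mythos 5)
Your construction is, in substance, the paper's own: the paper takes the nonterminals of $G$ themselves as states (with their ranks, initial state~$S$, input alphabet~$R$, output alphabet~$\Sigma$) and, for each rule $\rho = (\seq A1n) \to ((\seq u1n), \LL)$ with $\LL = \{\seq B1k\}$, sets $\rhs_M(A_j,\rho) = u_j[C \gets \init(\langle C, y_i \rangle) \mid C \in \alp(B_i),\, 1 \leq i \leq k]$. Your states $q_{A,i}$ are just a renaming of this (the index~$i$ is redundant, since each nonterminal identifies its component of~$A$), your correctness induction is the paper's, and your finite-copying argument via the unique ``spread-out'' occurrence of each link is the paper's key observation as well. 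The treatment of the MRTG case is also identical.

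There is, however, one technical omission. By the paper's definition, a $\text{DMT}_{\text{sp}}$-transducer must have \emph{exactly one} rule for every pair of a state and an input symbol, and the finite-copying condition is a property of $\sts(M)$, which ranges over the state sequences at positions of \emph{all} input trees $s \in T_R$, not only over well-formed derivation trees. You only define $\rhs_M(q_{A,i},\rho)$ when $\lhs(\rho) = A$, so your $M$ is not total, and your analysis of $\sts(M)$ covers only well-typed trees. The paper handles both points at once by adding dummy rules $\langle C, \rho(\seq y1k) \rangle(\seq x1m) \to t_m$, with $t_m \in P_\Sigma(X_m)$ arbitrary and containing no state calls, for every state~$C$ not occurring in $\lhs(\rho)$: these make $M$ total, and they make the state sequence become empty as soon as a type error occurs in the input tree, so that $\sts(M)$ is finite on all of~$T_R$. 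A further small imprecision: the state sequence at the $j$-th child of a node labeled~$\rho$ is in general a \emph{permutation} of $(q_{B_j,1}, \dotsc, q_{B_j,m_j})$, not that sequence itself, since the components of~$B_j$ occur in~$u$ in an arbitrary order; this affects neither the finiteness of $\sts(M)$ nor the $\mu(G)$-copying bound, but it is the form in which the paper states its claim.
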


\begin{proof}
  Let~$G = (N, \N, \Sigma, S, R)$ be an MCFTG.  Since the result is
  obvious if~$L(G) = \emptyset$, we may assume that~$\Sigma^{(0)} \neq
  \emptyset$.  We construct the macro tree transducer~$M = (N, R,
  \Sigma, S, R_M)$.  Thus, $M$ uses the nonterminals of~$G$ with the
  same rank as states, of which $S$ is the initial state.  Moreover, the
  input alphabet is~$R$ and the output alphabet is~$\Sigma$.  If
  $\rho = (\seq A1n) \to ((\seq u1n), \LL)$ is a rule in~$R$ such
  that~$\LL = \{\seq B1k\}$ with $\seq B1k \in \N$, then
  $R_M$~contains the following rule for every~$j \in [n]$: 
  \[ \langle A_j, \rho(\seq y1k) \rangle(\seq x1{\rk(A_j)}) \to u_j[C
  \gets \init(\langle C, y_i \rangle) \mid C \in \alp(B_i),\, 1 \leq i \leq
  k] \enspace. \] 
  Moreover, it has the (dummy) rule~$\langle C, \rho(\seq y1k)
  \rangle(\seq x1m) \to t_m$ for every~$C \in N \setminus \{\seq A1n\}$ of
  rank~$m$, where $t_m$~is an arbitrary element
  of~$P_\Sigma(X_m)$.\footnote{If~$\Sigma^{(k)} \neq \emptyset$ for
    some~$k \geq 2$, then~$P_\Sigma(X_m) \neq \emptyset$ for all~$m$
    (recall that~$\Sigma^{(0)} \neq \emptyset$).  If~$\Sigma =
    \Sigma^{(0)} \cup \Sigma^{(1)}$, then~$N=N^{(0)} \cup N^{(1)}$
    (because $G$~is reduced) and we only need $t_0 \in
    \Sigma^{(0)}$~and~$t_1 = x_1$.}

  Clearly, $M$~is simple in the parameters because~$u_j \in P_{N \cup
    \Sigma}(X_{\rk(A_j)})$.  Let~$d \in L(G_\der)$.  We claim
  that~$\sts(d, p)$, the state sequence of~$M$ at a node~$p$ of~$d$,
  is a permutation of the left-hand side of the rule~$d(p)$ of~$G$.
  This is obvious for the root of~$d$ with state sequence~$S$, and if
  it holds for~$p$, then it holds for~$pi$ for every~$i \in [k]$ by
  the definition of the above rules of~$M$.  Hence $M$~is
  finite-copying on~$L(G_\der)$.  It is, in fact, finite-copying
  everywhere because the state sequence becomes empty due to the dummy
  rules as soon as there is a type error in the input tree (which means that 
  the input tree is not a derivation tree of $G$).  

  We now claim that~$M_A(d) = \val(d)$ for every~$A \in \N$ and every
  derivation tree~$d \in L(G_\der, A)$, where $M_A(d)$~is defined just
  before this lemma. The proof is by induction on the structure of $d$.
  Let~$d = \rho(\seq d1k)$.  For the above
  rule~$\rho$ of~$G$, let $A = (\seq A1n)$~and~$u = (\seq u1n)$.  Then
  $\val(d) = u[B_i \gets \val(d_i) \mid 1 \leq i \leq k]$.   From the
  definition of the rules of~$M$ we obtain that
  \[ M_A(d) = u[C \gets M_C(d_i) \mid C \in \alp(B_i),\, 1 \leq i \leq
  k] = u[B_i \gets M_{B_i}(d_i) \mid 1 \leq i \leq k] \enspace. \] 
  By the induction hypotheses, $M_{B_i}(d_i) = \val(d_i)$~for every~$i
  \in [k]$.  Consequently,~$M_A(d) = \val(d)$.  In particular,
  if~$d\in L(G_\der)$, then~$M(d) = M_S(d) = \val(d)$.
\end{proof}

For the converse inclusion we need a normal form
for~$\text{DMT}_{\text{fc}}$"~transducers from~\cite{engman99}, which
is based on the same result for~$\text{DT}_{\text{fc}}$"~transducers
in~\cite{vug96}.  The $\text{DMT}_{\text{fc}}$"~transducer~$M$ is
\emph{repetition-free} if all its state sequences in~$\sts(M)$ are
repetition-free.

\begin{proposition}
  \label{pro:mtt-repfree}
  For every $\text{DMT}_{\text{fc}}$"~transducer~$M$ there is a
  repetition-free $\text{DMT}_{\text{fc}}$"~transducer~$M'$ that
  realizes the same tree transduction as~$M$.  Moreover, if~$M$ is a
  $\text{DT}_{\text{fc}}$"~transducer, then so is~$M'$.  
\end{proposition}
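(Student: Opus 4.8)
The plan is to turn $M$ into $M'$ by splitting each state into several \emph{colored} copies, one per possible occurrence of that state in a state sequence, so that two occurrences of the same original state inside a single state sequence of $M$ become occurrences of two \emph{distinct} states of $M'$. The difficulty — and what I expect to be the main obstacle — is a locality-versus-global-distinctness tension: the rule for a given state of $M'$ may depend only on that one state and the current input symbol, whereas the ``color'' (position) that renders a copy distinct is a property of the \emph{whole} state sequence $\sts(s,p)$, which a single state does not see. I would resolve this by letting each state of $M'$ carry the \emph{entire} state sequence in which it occurs; this is possible precisely because $M$ is finite-copying, so $\sts(M)$ is finite.

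Concretely, I would set $M' = (Q', \Omega, \Sigma, q_0', R')$ with $Q' = \{(\overline q, c) \mid \overline q \in \sts(M),\ 1 \le c \le \abs{\overline q}\}$, where $(\overline q, c)$ has the same rank as the $c$"~th entry $q_c$ of $\overline q = (\seq q1n)$, and with initial state $q_0' = ((q_0),1)$. For the rule of $(\overline q, c)$ on $\omega \in \Omega^{(k)}$ I would take the original right-hand side $\rhs_M(q_c,\omega)$ and recolor it, replacing each occurrence of $\langle q', y_i\rangle$ by $\langle (\sts_{\omega,i}(\overline q),\, t),\, y_i\rangle$, where $t$ is the position of that particular occurrence inside $\sts_{\omega,i}(\overline q) = \pi^*_i(\rhs_M(q_1,\omega) \dotsm \rhs_M(q_n,\omega))$. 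Since the rule knows both $\overline q$ and $c$, it knows that its own right-hand side is the $c$"~th factor $\rhs_M(q_c,\omega)$ of this concatenation, so $t$ is computable deterministically from $\overline q$, $c$, $\omega$, and the position of the occurrence within $\rhs_M(q_c,\omega)$; hence $R'$ is well defined and $M'$ is a total deterministic $\text{DMT}_{\text{sp}}$"~transducer. Because $\sts_{\omega,i}(\overline q) \in \sts(M)$ by the closure property of $\sts(M)$, the recolored child states lie in $Q'$, and the ranks match since the $t$"~th entry of $\sts_{\omega,i}(\overline q)$ is exactly $q'$; moreover recoloring preserves linearity and nondeletion in the parameters.

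The verification would then use two inductions. First I would establish the invariant that for every input tree $s$ and node $p$, the state sequence of $M'$ at $p$ is $((\overline q,1), \dotsc, (\overline q,n))$ with $\overline q = \sts(s,p) = (\seq q1n)$; the induction step is immediate from the recoloring, which assigns to the child"~$i$ occurrences, read left to right, the consecutive indices $1,\dotsc,n'$ all sharing the first component $\sts_{\omega,i}(\overline q)$. This shows $\sts(M') = \{((\overline q,1), \dotsc, (\overline q,\abs{\overline q})) \mid \overline q \in \sts(M)\}$, which is finite (so $M'$ is finite-copying) and repetition-free (the second components are distinct). Second I would prove $M'_{(\overline q,c)}(s|_p) = M_{q_c}(s|_p)$ whenever $\overline q = \sts(s,p)$, again by induction on $s|_p$ from the definition of the rules; taking $p = \varepsilon$ gives $M'(s) = M'_{((q_0),1)}(s) = M_{q_0}(s) = M(s)$, so $M'$ realizes the same transduction. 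Finally, the ``moreover'' is clear: if every state of $M$ has rank $0$ then so does every $(\overline q,c)$, and the recolored right-hand sides contain no parameters, so $M'$ is a $\text{DT}_{\text{fc}}$"~transducer. The one point deserving care is the deterministic computation of the index $t$ inside each rule, which is exactly what carrying the full (finite) state sequence in each state makes local.
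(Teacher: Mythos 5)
Your proof is correct, but it takes a genuinely different route from the paper's. The paper gives no construction at all: it cites Lemma~6.10 of~\cite{engman99}, which converts $M$ into an equivalent \emph{single-use restricted} $\text{DMT}_{\text{sp}}$-transducer (a construction that happens to preserve the absence of regular look-ahead), together with the proof of Theorem~6.12 of~\cite{engman99}, which shows that single-use restricted transducers are automatically finite-copying and repetition-free; for the top-down case it points to Lemma~5.3 of~\cite{vug96}. You instead build $M'$ directly, letting each state carry the entire state sequence in which it occurs together with its position in that sequence. The delicate points of your construction all check out: the closure of $\sts(M)$ under the maps $\sts_{\omega,i}$ guarantees that the recolored states lie in $Q'$; ranks are preserved, so the recolored right-hand sides are still patterns, simple in the parameters; the global index $t$ of each occurrence is computable from $\overline q$, $c$, $\omega$ and the occurrence itself, so $M'$ is well defined, total and deterministic; distinct occurrences of the same $\langle q', y_i\rangle$ receive distinct colors but are replaced by equal trees (since the corresponding entries of $\sts_{\omega,i}(\overline q)$ are all $q'$, the induction hypothesis gives the same translation of $u_i$ for each of them), so the computed output is unchanged; and the second components $1,\dotsc,n$ make every sequence in $\sts(M')$ repetition-free, with $\sts(M')$ finite. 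What your argument buys is a self-contained, elementary proof that stays entirely within the paper's formalism and makes preservation of the top-down case and of finite-copying transparent; what the paper's citation buys is brevity and the stronger conclusion that $M$ can even be made single-use restricted. One minor streamlining: it is slightly cleaner to prove $M'_{(\overline q,c)}(u) = M_{q_c}(u)$ for \emph{all} $u \in T_\Omega$ and all $(\overline q,c) \in Q'$ by induction on $u$, dropping the reachability hypothesis $\overline q = \sts(s,p)$, which plays no role in the induction step; your restricted version also goes through, because the hypothesis propagates to the children via $\sts(s,pi) = \sts_{\omega,i}(\sts(s,p))$.
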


\begin{proof}
  It is proved in~\cite[Lemma~6.10]{engman99} that there is a
  single-use restricted $\text{DMT}_{\text{sp}}$"~transducer~$M'$ that
  realizes the same tree transduction as~$M$.  It is in fact proved
  for macro tree transducers with regular look-ahead, but the
  construction preserves the absence of look-ahead.  Moreover, in the
  proof of~\cite[Theorem~6.12]{engman99} it is shown that single-use
  restricted $\text{DMT}_{\text{sp}}$"~transducers are finite-copying
  and repetition-free.  The construction
  in~\cite[Lemma~6.10]{engman99} preserves
  $\text{DT}_{\text{fc}}$"~transducers, but for them the result was
  already proved in~\cite[Lemma~5.3]{vug96}.
\end{proof}

\begin{lemma}
  \label{lem:mttrt-mcft}
  $\textup{DMT}_{\textup{fc}}(\textup{RT}) \subseteq \textup{MCFT}$ and
  $\textup{DT}_{\textup{fc}}(\textup{RT}) \subseteq \textup{MRT}$. 
\end{lemma}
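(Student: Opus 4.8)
The plan is to prove the two inclusions in Lemma~\ref{lem:mttrt-mcft} by a product construction that combines an RTG generating the input language with a repetition-free $\text{DMT}_{\text{fc}}$"~transducer. Given $L \in \text{DMT}_{\text{fc}}(\text{RT})$, write $L = M(K)$ with $K \in \text{RT}$ and $M$ a $\text{DMT}_{\text{fc}}$"~transducer. By Proposition~\ref{pro:mtt-repfree} I may assume that $M = (Q, \Omega, \Sigma, q_0, R_M)$ is repetition-free, so that every state sequence in the finite set $\sts(M)$ is a repetition-free word over $Q$. Let $H = (N_H, \Omega, S_H, R_H)$ be an RTG in normal form with $L(H) = K$. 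The goal is to construct an MCFTG $G$ whose derivation trees essentially range over the input trees $s \in K$, and whose value on such a derivation tree is $M(s)$.

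The key idea is to let the big nonterminals of $G$ encode pairs consisting of a state sequence of $M$ together with the $H$"~nonterminal reached at the corresponding input position. First I would take the nonterminal alphabet $N$ of $G$ to consist of all pairs $\langle q, A\rangle$ with $q \in Q$ and $A \in N_H$, giving $\langle q, A\rangle$ the rank $\rk(q)$. For a state sequence $\bar q = (q_1,\dots,q_n) \in \sts(M)$ and an $H$"~nonterminal $A$, I would form the big nonterminal $(\langle q_1,A\rangle,\dots,\langle q_n,A\rangle)$; repetition-freeness of $\bar q$ guarantees that this sequence is repetition-free, as required by Definition~\ref{def:mcftg}, and injectivity of $\alp$ on $\N$ can be arranged since the state sequences are distinguished. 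The initial big nonterminal is $\langle q_0, S_H\rangle$, which has length~$1$ and rank~$0$ because $q_0 \in Q^{(0)}$. The rules of $G$ come from pairing each RTG rule $A \to \omega(A_1,\dots,A_k)$ of $H$ with the $M$"~rules for $\omega$: for a state sequence $\bar q = (q_1,\dots,q_n)$ reaching $A$, the right-hand sides $\rhs_M(q_1,\omega),\dots,\rhs_M(q_n,\omega)$ form the forest $u$, in which each occurrence $\langle q', y_i\rangle$ is replaced by the nonterminal $\langle q', A_i\rangle$; the links $\LL$ are exactly the big nonterminals $(\ldots\langle \cdot, A_i\rangle\ldots)$ corresponding to the induced state sequences $\sts_{\omega,i}(\bar q)$ for $i \in [k]$. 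Because $M$ is simple in the parameters, each $\rhs_M(q_j,\omega)$ is a pattern of rank $\rk(q_j)$, so $u$ is a forest of patterns of the correct multiple rank; because $M$ is repetition-free and finite-copying, the links partition $\alp_N(u)$ as Definition~\ref{def:mcftg} demands.

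With this construction in place, the heart of the correctness argument is an induction establishing that for every $s \in T_\Omega$ accepted from $H$"~nonterminal $A$ with state sequence $\bar q = (q_1,\dots,q_n)$, the big nonterminal $(\langle q_1,A\rangle,\dots,\langle q_n,A\rangle)$ generates in $G$ exactly the forest $(M_{q_1}(s),\dots,M_{q_n}(s))$. The inductive step amounts to matching the definition of $\val$ via simultaneous second-order substitution (as used throughout, e.g.\ in Lemma~\ref{lem:valocc}) against the inductive definition of $M_{q}(s)$ through $\zeta[\langle q',y_i\rangle \gets M_{q'}(s_i)]$; the two substitutions agree precisely because the links of the constructed rule were chosen to mirror $\sts_{\omega,i}$. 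Specializing to $A = S_H$ and $\bar q = q_0$ then yields $L(G) = M(K) = L$. The restricted statement $\text{DT}_{\text{fc}}(\text{RT}) \subseteq \text{MRT}$ follows by the same construction: if $M$ is a $\text{DT}_{\text{fc}}$"~transducer, all its states have rank~$0$, so every $\langle q,A\rangle$ has rank~$0$ and the resulting $G$ satisfies $\wid(G) = 0$, i.e.\ is an MRTG.

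The main obstacle I expect is bookkeeping rather than conceptual: ensuring that the finite set $\N$ of big nonterminals is well defined and satisfies all three syntactic constraints of Definition~\ref{def:mcftg} simultaneously. Specifically, one must check that distinct state sequences yield big nonterminals with distinct underlying sets (so that $\alp$ is injective on $\N$), and that the link sets of the constructed rules form genuine partitions of $\alp_N(u)$ with no nonterminal omitted or repeated. Both hinge critically on repetition-freeness of the state sequences, which is exactly why Proposition~\ref{pro:mtt-repfree} is invoked first; without it the encoding of a state sequence as a repetition-free sequence of nonterminals would fail. A minor additional point is handling the dummy rules and ensuring that only state sequences actually in $\sts(M)$ (a finite, computable set) are used, so that $N$ and $\N$ stay finite.
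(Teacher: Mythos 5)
Your construction is the same as the paper's: pair a repetition-free $\text{DMT}_{\text{fc}}$"~transducer (via Proposition~\ref{pro:mtt-repfree}) with an RTG, take big nonterminals $\bar q \otimes A$ for nonempty state sequences $\bar q \in \sts(M)$ and RTG"~nonterminals $A$, build the rule for a pair $(\bar q,\, A \to \omega(A_1,\dots,A_k))$ by substituting $\init(\langle q', A_i\rangle)$ for $\langle q', y_i\rangle$ in the trees $\rhs_M(q_j,\omega)$, with links $\sts_{\omega,i}(\bar q)\otimes A_i$, and prove $L(G,\bar q \otimes A) = M_{\bar q}(L(H,A))$ by induction in both directions; the MRT case follows since all states of a $\text{DT}_{\text{fc}}$"~transducer have rank~$0$.

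There is, however, one step that fails as written. You claim that unique $N$"~labeledness of the right-hand sides and the partition property of the links hinge only on the repetition-freeness of the state sequences of~$M$, and you impose no condition on $H$ beyond normal form. That is not sufficient: you also need each sequence $(A_1,\dots,A_k)$ in the RTG rules to be repetition-free. Take the rule $A \to \omega(B,B)$ and a state $q$ with $\rhs_M(q,\omega) = \sigma(\langle p,y_1\rangle, \langle p,y_2\rangle)$; this transducer is finite-copying and repetition-free, since $\sts_{\omega,1}(q) = \sts_{\omega,2}(q) = p$ are repetition-free state sequences, yet your substituted right-hand side is $\sigma(\langle p,B\rangle,\langle p,B\rangle)$, which is not uniquely $N$"~labeled, so the constructed rule violates Definition~\ref{def:mcftg}. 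Worse, if one waived that condition, the MCFTG semantics would rewrite both occurrences of $\langle p,B\rangle$ by the same tree, generating only the trees $\sigma(t,t)$ instead of all $\sigma(t_1,t_2)$ with $t_1,t_2 \in M_p(L(H,B))$ --- precisely the copying behaviour that separates PMCFTGs from MCFTGs. The repair is cheap and is exactly the assumption the paper makes at the start of its proof: replace $H$ by an equivalent RTG in which no right-hand side contains a nonterminal more than once, using the alias construction of Section~\ref{sub:seqs}.
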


\begin{proof}
  Let $M = (Q, \Omega, \Sigma, q_0, R_M)$ be a
  $\text{DMT}_{\text{fc}}$"~transducer, of which we assume, by
  Proposition~\ref{pro:mtt-repfree}, that it is repetition-free.
  Moreover, let~$G = (N, \Omega, S, R)$ be an RTG.
  We can assume that in each of its rules $C \to \omega(\seq C1k)$, with $C, \seq C1k \in
  N$~and~$\omega \in \Omega^{(k)}$, the sequence~$(\seq C1k)$ is repetition-free (cf.\@
  Section~\ref{sub:seqs}).  We will construct an MCFTG~$G' = (N', \N',
  \Sigma, S', R')$ such that~$L(G') = M(L(G))$.  The MCFTG~$G'$ will
  simulate both $M$~and~$G$.  Thus, we define~$N' = Q \times N$, where
  every~$\langle q, C \rangle \in N'$ has the same rank as~$q$, and
  $S' = \langle q_0, S \rangle$.  For every nonempty state
  sequence~$\bar{q} = (\seq q1n) \in Q^{\scriptscriptstyle +}$ and
  nonterminal~$C \in N$, we abbreviate the sequence~$(\langle q_1, C
  \rangle, \dotsc, \langle q_n, C \rangle) \in
  (N')^{\scriptscriptstyle +}$ by~$\bar{q} \otimes C$.
  Then we define $\N' = \{\bar{q} \otimes C \mid \bar{q} \in \sts(M) 
  \setminus \{\varepsilon\},\, C \in N\}$, so in other words, the big
  nonterminals of~$G'$ are of the form~$(\langle q_1, C \rangle,
  \dotsc, \langle q_n, C \rangle)$, where $(\seq q1n)$~is a nonempty
  state sequence of~$M$, and $C$~is a nonterminal of~$G$.  It remains
  to define the rules of~$G'$.  Let~$\rho = C \to \omega(\seq C1k)$ be
  a rule of~$G$, and let~$\bar{q} = (\seq q1n)$ be a nonempty state
  sequence of~$M$.  Then $R'$~contains the rule 
  \[ \rho_{\bar{q}} = (\langle q_1, C \rangle, \dotsc, \langle q_n, C
  \rangle) \to ((\seq u1n), \LL) \]
  with left-hand side $\bar{q} \otimes C$, 
  where $u_j = \rhs_M(q_j, \omega)[\langle q, y_i \rangle \gets
  \init(\langle q, C_i \rangle) \mid q \in Q,\, 1 \leq i \leq k]$ 
  for every~$j \in [n]$
  and $\LL = \{\sts_{\omega, i}(\bar{q}) \otimes C_i \mid i \in [k]\}
  \cap \N'$ .  Note that~$(\seq u1n)$ is uniquely
  $N'$"~labeled because $(\seq C1k)$~is repetition-free and every
  state sequence~$\sts_{\omega, i}(\bar{q})$ is repetition-free.  The 
  correctness of~$G'$ is a direct consequence of the following claim.

  \paragraph{Claim} For every nonempty state sequence~$\bar{q} \in
  \sts(M) \setminus \{\varepsilon\}$, nonterminal~$C \in N$, and
  forest~$t \in P_\Sigma(X)^{\scriptscriptstyle +}$ we have $t \in L(G',
  \bar{q} \otimes C)$ if and only if there exists~$s \in L(G,
  C)$ such that~$M_{\bar{q}}(s) = t$.\footnote{In other words,
  $L(G',\bar{q} \otimes C \rangle) = M_{\bar{q}}(L(G,C))$. 
  Recall the definition of~$M_{\bar{q}}(s)$ just before
  Lemma~\protect{\ref{lem:valismtt}}.}

  \paragraph{Proof of sufficiency} We have to show that
  $M_{\bar{q}}(s) \in L(G', \bar{q} \otimes C)$ for every~$s
  \in L(G, C)$.  The proof is by induction on the structure of~$s$.
  Let $s = \omega(\seq s1k)$.  Then there is a rule~$\rho = C \to
  \omega(\seq C1k)$ of~$G$ such that~$s_i \in L(G, C_i)$ for every~$i
  \in [k]$.  Let~$\bar{q}_i = \sts_{\omega, i}(\bar{q})$ for every~$i
  \in [k]$.  By the induction hypotheses, $M_{\bar{q}_i}(s_i) \in
  L(G', \bar{q}_i \otimes C_i)$ provided that~$\bar{q}_i \neq
  \varepsilon$.  Let $\rho_{\bar{q}}$~be the rule in~$R'$ as defined
  above.  Then the least fixed point semantics of~$G'$ implies that
  $L(G', \bar{q} \otimes C)$ contains the forest
  \[ (\seq u1n)[\bar{q}_i \otimes C_i \gets M_{\bar{q}_i}(s_i)
  \mid i\in[k],\, \bar{q}_i \neq \varepsilon] \enspace, \]
  which equals~$M_{\bar{q}}(s)$.

  \paragraph{Proof of necessity} The proof is similar and proceeds by
  induction on the structure of a derivation tree~$d \in L(G'_\der,
  \bar{q} \otimes C)$ with~$\val(d) = t$.  Let~$d =
  \rho_{\bar{q}}(\seq d1k)$.  Then 
  \[ t = (\seq u1n)[\bar{q}_i \otimes C_i \gets \val(d_i)
  \mid i\in[k],\, \bar{q}_i \neq \varepsilon] \enspace. \]
  By the induction hypotheses, there exist trees~$s_i \in L(G, C_i)$
  such that~$M_{\bar{q}_i}(s_i) = \val(d_i)$ for every~$i \in [k]$
  with~$\bar{q}_i \neq \varepsilon$.  Since we assume that $G$~is
  reduced, there also exist trees~$s_i \in L(G, C_i)$ for every~$i \in
  [k]$ with~$\bar{q}_i = \varepsilon$.  Consequently, $s \in L(G,
  C)$~and~$M_{\bar{q}}(s) = t$ for~$s = \omega(\seq s1k)$.
\end{proof}

From Lemmas \ref{lem:valismtt}~and~\ref{lem:mttrt-mcft}
we obtain our characterization result, of which the second part  
was proved in~\cite[Proposition~4.8]{rao97}.

\begin{theorem}
  \label{thm:charact}
  $\textup{MCFT} = \textup{DMT}_{\textup{fc}}(\textup{RT})$ and $\textup{MRT} =
  \textup{DT}_{\textup{fc}}(\textup{RT})$. 
\end{theorem}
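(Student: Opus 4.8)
The plan is to read the theorem off as two pairs of inclusions, all of which are already available from the two preceding lemmas together with the derivation-tree characterization of Theorem~\ref{thm:dtree}. The forward inclusions come from Lemma~\ref{lem:valismtt} and Theorem~\ref{thm:dtree}, and the backward inclusions are exactly Lemma~\ref{lem:mttrt-mcft}. So the proof is essentially a bookkeeping argument with no new content; the only point needing a moment's care is that applying the (total) transducer $M$ to the regular language $L(G_\der)$ yields precisely $\val(L(G_\der))$.

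First I would establish $\textup{MCFT} \subseteq \textup{DMT}_{\text{fc}}(\textup{RT})$. Let $G$ be an MCFTG. Its derivation tree grammar $G_\der$ is an RTG, so $L(G_\der) \in \textup{RT}$. By Theorem~\ref{thm:dtree} we have $L(G) = \val(L(G_\der))$. By Lemma~\ref{lem:valismtt} there is a $\text{DMT}_{\text{fc}}$-transducer $M$ with $M(d) = \val(d)$ for every $d \in L(G_\der)$; hence
\[ M(L(G_\der)) = \{M(d) \mid d \in L(G_\der)\} = \{\val(d) \mid d \in L(G_\der)\} = \val(L(G_\der)) = L(G), \]
so that $L(G) \in \textup{DMT}_{\text{fc}}(\textup{RT})$. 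Here the values of $M$ on input trees outside $L(G_\der)$ are irrelevant, since we take the image of $L(G_\der)$ only. The reverse inclusion $\textup{DMT}_{\text{fc}}(\textup{RT}) \subseteq \textup{MCFT}$ is the first statement of Lemma~\ref{lem:mttrt-mcft}, and the two combine to give $\textup{MCFT} = \textup{DMT}_{\text{fc}}(\textup{RT})$.

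The second equation is proved in exactly the same way, keeping track of the restriction to regular tree grammars. If $G$ is an MRTG, then the second sentence of Lemma~\ref{lem:valismtt} guarantees that the transducer $M$ realizing `$\val$' on $L(G_\der)$ is in fact a $\text{DT}_{\text{fc}}$-transducer, whence the same computation yields $\textup{MRT} \subseteq \textup{DT}_{\text{fc}}(\textup{RT})$. The reverse inclusion $\textup{DT}_{\text{fc}}(\textup{RT}) \subseteq \textup{MRT}$ is the second statement of Lemma~\ref{lem:mttrt-mcft}, and the two give $\textup{MRT} = \textup{DT}_{\text{fc}}(\textup{RT})$.

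As for difficulty, there is essentially no obstacle: both nontrivial directions were discharged in Lemmas~\ref{lem:valismtt} and~\ref{lem:mttrt-mcft}, and the whole theorem is their immediate corollary via Theorem~\ref{thm:dtree}. The only thing to double-check is the harmless point noted above, namely that restricting the total transducer $M$ to the regular domain $L(G_\der)$ produces exactly $\val(L(G_\der))$; this is also consistent with the earlier observation that the class $\textup{DMT}_{\text{fc}}(\textup{RT})$ does not depend on the totality of the transducers.
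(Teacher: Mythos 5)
Your proposal is correct and follows exactly the paper's own route: the inclusion $\textup{MCFT} \subseteq \textup{DMT}_{\textup{fc}}(\textup{RT})$ (and its MRT analogue) from Lemma~\ref{lem:valismtt} together with Theorem~\ref{thm:dtree}, and the reverse inclusions directly from Lemma~\ref{lem:mttrt-mcft}. The paper treats the theorem as an immediate consequence of these results, just as you do, so there is nothing to add.
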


We observe that the multiplicity of the MCFTG corresponds to 
the ``copying number'' of the corresponding
$\text{DMT}_{\text{fc}}$"~transducer.  For every~$m \in \nat$, let
$m$"~MCFT be the class of tree languages generated by MCFTGs~$G$
with~$\mu(G) \leq m$, and let $\text{DMT}_{\text{fc}(m)}$~be the class
of transductions realized by~$m$"~copying
$\text{DMT}_{\text{fc}}$"~transducers, and similarly for
subclasses of these grammars and transducers.  Then, checking
the proofs above, we obtain that $\text{$m$"~MCFT} =
\text{DMT}_{\text{fc}(m)}(\text{RT})$~and~$\text{$m$"~MRT} =
\text{DT}_{\text{fc}(m)}(\text{RT})$ for every~$m \in \nat$.  For the
preservation of the $m$"~copying property in
Proposition~\ref{pro:mtt-repfree} we additionally need to inspect the
proof of~\cite[Lemma~6.10]{engman99}).
For $m=1$ we obtain that $\text{CFT}_{\text{sp}} = \text{DMT}_{\text{fc}(1)}(\text{RT})$.
A $\text{DMT}_{\text{sp}}$"~transducer is \emph{simple} (in short,
$\text{DMT}_{\text{si,sp}}$"~transducer) if it is also simple (i.e.,
linear and nondeleting) in the input variables.  Clearly, 
$\text{DMT}_{\text{si,sp}}$"~transducers are $1$"~copying. 
Checking again the proofs
above, it is easy to see that $\text{CFT}_{\text{sp}} =
\text{DMT}_{\text{si,sp}}(\text{RT})$.\footnote{The only small
  technical problem is the deletion of all input variables in the
  dummy rules in the proof of Lemma~\protect{\ref{lem:valismtt}}.
  This can be easily remedied by introducing an additional state~$q$
  of rank~$1$, changing the dummy rules into~$\langle C, \rho(\seq
  y1k) \rangle(\seq x1m) \to \langle q, y_1 \rangle(\dotsm (\langle q,
  y_k \rangle(t_m)) \dotsm)$ and adding additionally all dummy rules
  of the form~$\langle q, \rho(\seq y1k) \rangle(x_1) \to \langle q,
  y_1 \rangle(\dotsm(\langle q, y_k \rangle(x_1)) \dotsm)$.}

In the remainder of this section we discuss the consequences of 
the characterization result in Theorem~\ref{thm:charact}. 
One immediate consequence is that MCFT is closed under intersection 
with regular tree languages: If $M$ is a $\text{DMT}_{\text{fc}}$"~transducer
and $R_1$ and $R_2$ are in RT, then $M(R_1)\cap R_2= M(R_1\cap M^{-1}(R_2))$. Moreover,
$M^{-1}(R_2)$ is in RT by~\protect{\cite[Theorem~7.4]{engvog85}} 
and so $R_1\cap M^{-1}(R_2)$ is in RT. 

From Theorem~\ref{thm:charact} and Corollary~\ref{cor:ymcft} we obtain
two known results.  First, $\text{MCF} =
\text{yDT}_{\text{fc}}(\text{RT})$.  Since it is easy to check from
the proof of Corollary~\ref{cor:ymcft} that $\text{$m$"~MCF} =
\text{y($m$"~MRT)}$, we even obtain that $\text{$m$"~MCF} =
\text{yDT}_{\text{fc}(m)}(\text{RT})$ for every~$m \in \nat$.  It was,
in fact, proved in~\cite{wei92} that $m$"~MCF equals the class of
output languages of deterministic tree-walking transducers with
``crossing number''~$m$, which equals
$\text{yDT}_{\text{fc}(m)}(\text{RT})$
by~\cite[Corollary~4.11]{engrozslu80}.  Second,
$\text{yDMT}_{\text{fc}}(\text{RT}) =
\text{yDT}_{\text{fc}}(\text{RT})$, which was proved
in~\cite[Corollary~7.10]{engman99}.  Vice versa, this equality and
Theorem~\ref{thm:charact} imply that $\text{yMCFT} = \text{yMRT}$
(Corollary~\ref{cor:ymcft}).  We also observe that this equality is a
restricted version of $\text{yDMT}_{\text{sp}}(\text{RT}) =
\text{yDT}(\text{RT})$, which was proved
in~\cite[Theorem~15]{engman02} (cf.\@ the last sentence before
Theorem~\ref{thm:mcft-to-mcf}) and will follow from the results in
Section~\ref{sec:parallel}.

More interestingly, Theorem~\ref{thm:charact} implies three other
characterizations of MCFT~and~MCF (of which those of~MCF are already
known).  First, they can be characterized in monadic second-order
logic~(MSO).  Let~$\text{DMSOT}$ be the class of deterministic (or
parameterless) MSO"~definable tree transductions (see, e.g.,
\cite[Chapter~8]{coueng12}), and let~$\text{DMSOTS}$ be the analogous
class of tree-to-string transductions.  Since regular look-ahead can
be simulated by a relabeling of the input tree,  it follows
from~\cite[Theorem~7.1]{engman99} that~$\text{DMSOT}(\text{RT}) =
\text{DMT}_{\text{fc}}(\text{RT})$ and
from~\cite[Theorem~7.7]{engman99} that $\text{DMSOTS}(\text{RT}) =
\text{yDT}_{\text{fc}}(\text{RT})$.

\begin{corollary}
  \label{cor:mcft-mso}
  $\textup{MCFT} = \textup{DMSOT}(\textup{RT})$~and~$\textup{MCF} =
  \textup{DMSOTS}(\textup{RT})$.  
\end{corollary}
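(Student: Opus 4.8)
The plan is to derive the corollary by chaining the transducer characterizations already obtained in this section with the equivalence between finite-copying macro tree transductions and deterministic MSO-definable transductions established in~\cite{engman99}. Both equalities are immediate once the right pieces are lined up, so the ``proof'' is really a matter of invoking the correct prior results rather than constructing anything new.

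First I would treat the tree-language equality $\textup{MCFT} = \textup{DMSOT}(\textup{RT})$. By Theorem~\ref{thm:charact} we already have $\textup{MCFT} = \textup{DMT}_{\textup{fc}}(\textup{RT})$, so it suffices to identify $\textup{DMT}_{\textup{fc}}(\textup{RT})$ with $\textup{DMSOT}(\textup{RT})$. As recorded in the paragraph preceding the corollary, \cite[Theorem~7.1]{engman99} gives exactly this equality for finite-copying macro tree transducers with regular look-ahead; since regular look-ahead can be simulated by a relabeling of the input tree (see~\cite{eng77}) and $\textup{RT}$ is closed under relabelings, the look-ahead feature does not enlarge $\textup{DMT}_{\textup{fc}}(\textup{RT})$ as defined here. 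Composing the two equalities yields $\textup{MCFT} = \textup{DMSOT}(\textup{RT})$.

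Second I would handle the string-language equality $\textup{MCF} = \textup{DMSOTS}(\textup{RT})$ in the same style. Here I would use the observation, noted just after Theorem~\ref{thm:charact}, that $\textup{MCF} = \textup{yDT}_{\textup{fc}}(\textup{RT})$, which itself follows from Theorem~\ref{thm:charact} together with Corollary~\ref{cor:ymcft} (namely $\textup{MCF} = \textup{yMRT}$ and $\textup{MRT} = \textup{DT}_{\textup{fc}}(\textup{RT})$). It then remains to match $\textup{yDT}_{\textup{fc}}(\textup{RT})$ with $\textup{DMSOTS}(\textup{RT})$, and this is precisely \cite[Theorem~7.7]{engman99}, again after absorbing the regular look-ahead into a relabeling. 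Chaining gives $\textup{MCF} = \textup{DMSOTS}(\textup{RT})$.

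There is essentially no hard mathematical step here; the whole content lies in the external results of~\cite{engman99}. The one point that needs genuine care—and which I would expect to be the main obstacle—is making sure that the totality/determinism conventions and the ``simple in the parameters'' restriction imposed on our macro tree transducers coincide with the conventions under which \cite[Theorems~7.1 and~7.7]{engman99} are stated, so that the cited equalities apply verbatim to the classes $\textup{DMT}_{\textup{fc}}(\textup{RT})$ and $\textup{yDT}_{\textup{fc}}(\textup{RT})$ used in this section. This matching has in fact already been carried out in the discussion at the start of the section (where it is argued that $\textup{DMT}_{\textup{fc}}(\textup{RT})$ coincides with the class denoted $\textup{MTT}_{\textup{fc}}(\textup{REGT})$ in~\cite{engman99}), so I would simply point back to that discussion rather than re-verify the conventions.
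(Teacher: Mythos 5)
Your proposal is correct and takes essentially the same approach as the paper, which obtains the corollary by combining Theorem~\ref{thm:charact} (and, for the string case, Corollary~\ref{cor:ymcft}, via $\textup{MCF} = \textup{yMRT} = \textup{yDT}_{\textup{fc}}(\textup{RT})$) with the equalities $\textup{DMSOT}(\textup{RT}) = \textup{DMT}_{\textup{fc}}(\textup{RT})$ and $\textup{DMSOTS}(\textup{RT}) = \textup{yDT}_{\textup{fc}}(\textup{RT})$ from \cite[Theorems~7.1 and~7.7]{engman99} stated just before the corollary. Your final caveat about matching conventions (totality, simplicity in the parameters, look-ahead) is likewise resolved exactly where you point, namely in the discussion at the start of Section~\ref{sec:charact} identifying $\textup{DMT}_{\textup{fc}}(\textup{RT})$ with the corresponding class of~\cite{engman99}.
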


Since MSO"~definable transductions are closed under
composition~\cite[Theorem~7.14]{coueng12},  this implies that MCFT~is
closed under DMSOT"~transductions, and hence under
$\text{DMT}_{\text{fc}}$"~transductions even when they are equipped
with regular look-ahead by~\cite[Theorem~7.1]{engman99}.  Similarly,
MCF~is closed under deterministic MSO"~definable string transductions,
which are the transductions realized by two-way deterministic
finite-state transducers~\cite{enghoo01}. 
In particular, it follows from Lemma~\ref{lem:valismtt} 
that MCFT is closed under control, in the following sense.
Let $G$ be an MCFTG and let $C$ be a (``control'') tree language in MCFT.
Then $\val(L(G_\der) \cap C)$ is in MCFT. Intuitively, the derivation trees 
of the grammar $G$ are restricted to be an element of $C$; in that way 
$C$ ``controls'' the derivation trees (and hence the derivations) of $G$. 

Second, MCFT~and~MCF
can be characterized in terms of context-free graph grammars.  It is
known that $\text{DMSOT}(\text{RT})$ equals the class of tree
languages that can be generated by (either hyperedge-replacement or
vertex-replacement) context-free graph grammars (see, e.g.,
\cite[Section~6]{eng97} or the introduction
of~\cite[Section~8.9]{coueng12}).  Similarly,
$\text{DMSOTS}(\text{RT})$~is the class of string languages generated
by such grammars.  These facts were also used to 
obtain~\cite[Corollaries~7.3~and~7.8]{engman99}.

\begin{corollary}
  \label{cor:mcft-cfgg}
  $\textup{MCFT}$ (resp.~$\textup{MCF}$) is the class of tree languages (resp.~string
  languages) generated by context-free graph grammars. 
\end{corollary}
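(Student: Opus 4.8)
The plan is to derive this corollary directly by chaining the logical characterization of Corollary~\ref{cor:mcft-mso} with the known equivalence between the output languages of context-free graph grammars and the deterministic MSO"~definable transductions of regular tree languages. Since both equalities of the corollary are already available in the excerpt (for the logical side) and in the cited literature (for the graph-grammar side), no new construction on grammars or transducers is required; the proof is essentially a substitution of one characterization into another.

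Concretely, first I would recall from Corollary~\ref{cor:mcft-mso} that $\textup{MCFT} = \textup{DMSOT}(\textup{RT})$ and $\textup{MCF} = \textup{DMSOTS}(\textup{RT})$. Next I would invoke the known fact, stated in the paragraph preceding the corollary and attributed to~\cite[Section~6]{eng97} and the introduction of~\cite[Section~8.9]{coueng12}, that the class of tree languages generated by (hyperedge-replacement or vertex-replacement) context-free graph grammars coincides with $\textup{DMSOT}(\textup{RT})$, and that the corresponding class of string languages coincides with $\textup{DMSOTS}(\textup{RT})$. Combining these two statements yields that $\textup{MCFT}$ equals the class of tree languages generated by context-free graph grammars, and likewise $\textup{MCF}$ equals the class of string languages they generate. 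I would note for completeness that this route parallels the way~\cite[Corollaries~7.3~and~7.8]{engman99} were obtained, so the graph-grammar characterization is exactly the bridge already exploited in that earlier work.

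The only delicate point—and thus the main obstacle, though a mild one—is to make sure that the notion of ``language generated by a context-free graph grammar'' is taken in the sense that matches the MSO"~transduction framework: namely, the graphs produced must be read off as trees (respectively, as strings) in the standard way, and the equivalence with $\textup{DMSOT}(\textup{RT})$ (respectively $\textup{DMSOTS}(\textup{RT})$) must be the one that holds uniformly for either the hyperedge-replacement or the vertex-replacement variant. Once it is clear that the cited equivalences are stated for precisely the tree-generating and string-generating interpretations used here, the corollary follows immediately, with no further case analysis needed.
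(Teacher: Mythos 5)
Your proposal is correct and follows exactly the paper's own route: the paper derives this corollary by combining Corollary~\ref{cor:mcft-mso} ($\textup{MCFT} = \textup{DMSOT}(\textup{RT})$ and $\textup{MCF} = \textup{DMSOTS}(\textup{RT})$) with the cited literature fact that $\textup{DMSOT}(\textup{RT})$ (resp.\ $\textup{DMSOTS}(\textup{RT})$) is the class of tree (resp.\ string) languages generated by hyperedge-replacement or vertex-replacement context-free graph grammars. Your caveat about matching the tree-generating and string-generating interpretations is also the right thing to check, and it is indeed how the cited equivalences are stated (the paper's subsequent Remark~\ref{rem:mcft-cfgg} makes the tree-as-hypergraph encoding explicit, but only as supplementary material, not as part of the proof).
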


\begin{remark}
\label{rem:mcft-cfgg}
\upshape
For completeness' sake we show here how easy it is to simulate an MCFTG by 
a context-free graph grammar, in particular a hyperedge-replacement grammar (HRG).
We assume the reader to be familiar with HRGs (see, e.g., \cite{drekrehab97,eng97,baucou87}). 
Let us first recall how trees and forests can be represented as hypergraphs. 
Let $\Omega$ be a ranked alphabet. 
A forest $t=(\seq t 1n)\in P_\Omega(X)\plus$ is represented by the hypergraph 
$\gr(t)$ that has the set of nodes $\pos(t)$ and the set of hyperedges $\{e_p\mid p\in\pos_\Omega(t)\}$ 
such that $e_p$ has label $t(p)$ and sequence of incident nodes $\inc_t(p)=(p1,\dotsc,pk,p)$ 
where $k=\rk(t(p))$. 
Moreover, $\gr(t)$ has the sequence of external nodes $\ext(t)=\ext(t_1)\cdots\ext(t_n)$
such that $\ext(t_j)=(p_ {j,1},\dotsc,p_{j,k_j},\#^{j-1})$ 
where $t_j(p_{j,\ell})=x_\ell\in X$ for every $j\in[n]$ and $\ell\in[k_j]$ with $k_j=\rk(t_j)$. 
We say that an HRG is \emph{tree generating} (or, generates a tree language) if 
its terminal alphabet is a ranked alphabet $\Sigma$
and the generated hypergraph language is a subset of $\{\gr(t)\mid t\in T_\Sigma\}$.  

Now let $G = (N, \N, \Sigma, S, R)$ be an MCFTG.
We construct an HRG $G'$ that has the set of nonterminals~$\N$, with initial nonterminal $S$, 
and the set of terminals~$\Sigma$.  
Let $A\to (u,\LL)$ be a rule in $R$.
Then $G'$ has the rule $A\to \gr(u,\LL)$, 
where $\gr(u,\LL)$ is the hypergraph obtained from $\gr(u)$ as follows. 
For every $B=(u(p_1),\dotsc,u(p_m))\in\LL$ with $\seq p 1m \in \pos_N(u)$,
remove the hyperedges $\seq e {p_1}{p_m}$
and replace them by one new hyperedge $e_B$ that has label~$B$
and sequence of incident nodes $\inc_u(p_1)\cdots\inc_u(p_m)$.\footnote{Every terminal 
or nonterminal symbol $\alpha$ of an HRG should have a ``rank''. For 
every hyperedge $e$ with label $\alpha$ the ``rank'' of $\alpha$ should be equal to 
the number of nodes that are incident with $e$. Moreover, 
for every rule $A\to g$ the ``rank'' of $A$ should be equal to 
the number of external nodes of the hypergraph $g$.
In the grammar $G'$, every terminal $\sigma\in\Sigma$ has ``rank'' $\rk(\sigma)+1$
and every nonterminal $A=(A_1,\dots,A_n)$ has ``rank'' $\sum_{i=1}^n(\rk(A_i)+1)$.} 
Intuitively, the hyperedge $e_B$ explicitly links the occurrences in $u$ of the 
nonterminals $u(p_1),\dotsc,u(p_m)$ of the link $B$. 
Now let $t_B\in P_\Sigma(X)\plus$ 
be a forest with $\rk(t_B)=\rk(B)$, for every $B\in\LL$.
Then it is straightforward to check that $\gr(u[B\gets t_B\mid B\in\LL])$ is equal to 
the result of simultaneously substituting $\gr(t_B)$ 
for the hyperedge $e_B$ in $\gr(u,\LL)$ for every $B\in\LL$.
Thus, using the least fixed point semantics of the HRG $G'$ 
(see \cite[Theorem~2.4.2]{drekrehab97}), 
we obtain that $L(G')=\{\gr(t)\mid t\in L(G)\}$. 
It can also easily be checked that the derivations of $G$, as
defined in Section~\ref{sub:deriv}, can be simulated by
the derivations of~$G'$: for every $t\in T_{(N\times\nat^*)\cup\Sigma}$ and $n\in\nat_0$, 
if $S\otimes\varepsilon \Rightarrow^n_G t$ then $\gr(S)\Rightarrow^n_{G'} \gr(t,\LL)$,
where $\gr(t,\LL)$ is defined similarly to $\gr(u,\LL)$ above 
using the set $\LL\subseteq \N\otimes \nat^*$ mentioned at the end of Section~\ref{sub:deriv}.
Moreover, these are all possible derivations in $G'$. 
Intuitively, the role of the link identifiers 
in the derivation $S\otimes\varepsilon \Rightarrow^n_G t$ is taken over by explicit hyperedges. 

We say that an HRG is in \emph{tree generating normal form} if it can be obtained from
an MCFTG in the way described above, eventually followed by a renaming of its nonterminals
and an identification of nonterminals that are aliases.\footnote{It
can be checked that this is equivalent to~\protect{\cite[Definition~6]{engman00}}, provided that
the MCFTG is assumed to be nonerasing.} 
Then the above, together with Lemma~\ref{lem:nonerasing} and Corollary~\ref{cor:mcft-cfgg}, 
proves that every tree generating HRG  
has an equivalent HRG in tree generating normal form (see~\cite[Theorem~7]{engman00}). 
We finally note that there is a similar easy construction showing that every string
language in~MCF can be generated by an HRG (see~\cite[Theorem~6.4]{eng97}). 

As an example of the above construction 
we consider the MCFTG $G$ of Example~\protect{\ref{exa:main}}. 
The rules of the HRG $G'$ are shown in Figure~\ref{fig:hrgrammar} 
(without the rules for the alias $B'$ of $B$) and the derivation of $G'$ 
corresponding to the one of $G$ in Figure~\ref{fig:derivation} is shown 
in Figure~\ref{fig:hrderivation}. By definition, $G'$~is in tree generating normal form. 
Note that the sequence of external nodes of the right-hand side of rule $\rho_4$ 
(and of rule $\rho_6$) of $G'$ is not repetition-free, 
which allows $G'$ to erase hyperedges (or ``parts'' of hyperedges). 
For a nonerasing MCFTG $G$ the above construction
results in an HRG $G'$ for which all sequences of external nodes 
(and all sequences of incident nodes) are repetition-free. Thus by 
Lemma~\ref{lem:nonerasing}, this requirement can be added to the 
tree generating normal form. 
\fin
\end{remark}

\begin{figure}
  \centering
  \includegraphics{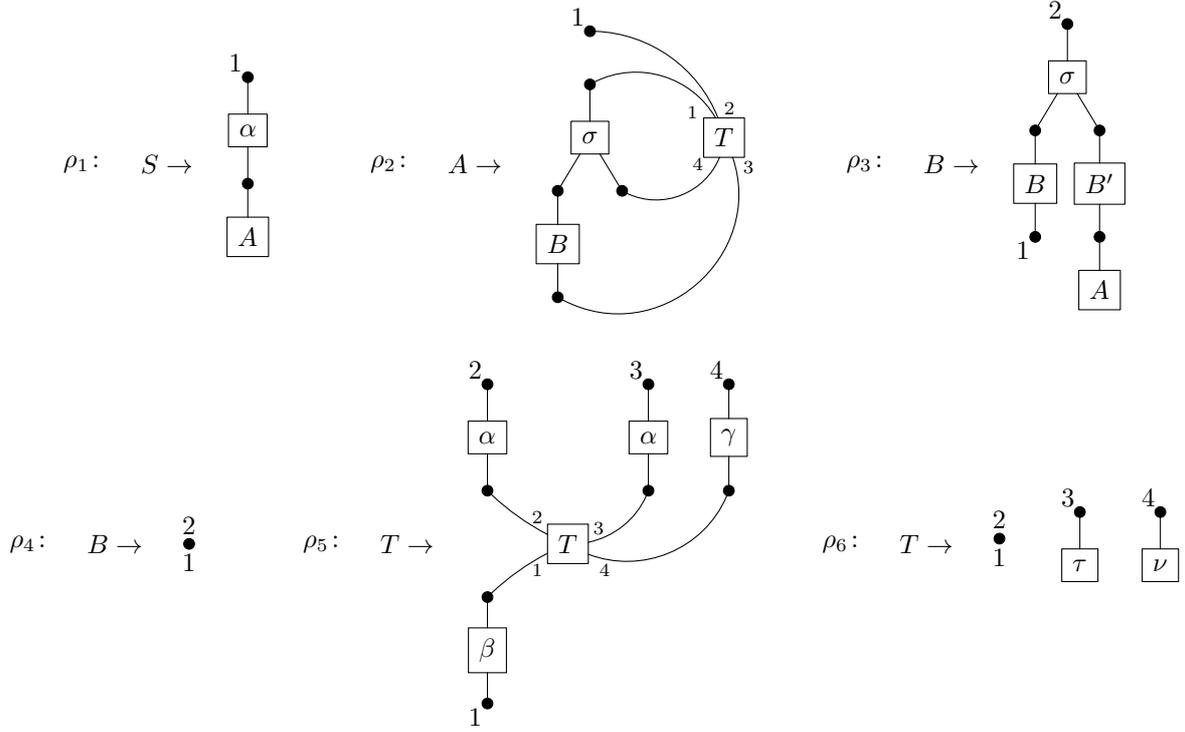}
  \caption{Rules of the HRG~$G'$ corresponding to the MCFTG~$G$ of
    Example~\protect{\ref{exa:main}} (without the rules for~$B'$).
    Hypergraphs are drawn as in~\protect{\cite{drekrehab97,eng97}}.
    A~hyperedge~$e$ is drawn as a box containing the label of~$e$.  A
    line with label~$i$ connects~$e$ with its $i$"~th incident node.
    If the label of~$e$ is in~$N \cup \Sigma$ with rank~$k$, then the
    labels of the incidence lines are dropped; by convention, the
    first $k$~incident nodes of~$e$ are below the box, from left to
    right, and the last incident node is above the box.  The $j$"~th
    external node of the hypergraph is labeled~$j$.} 
  \label{fig:hrgrammar}
\end{figure}

\begin{figure}
  \centering
  \includegraphics{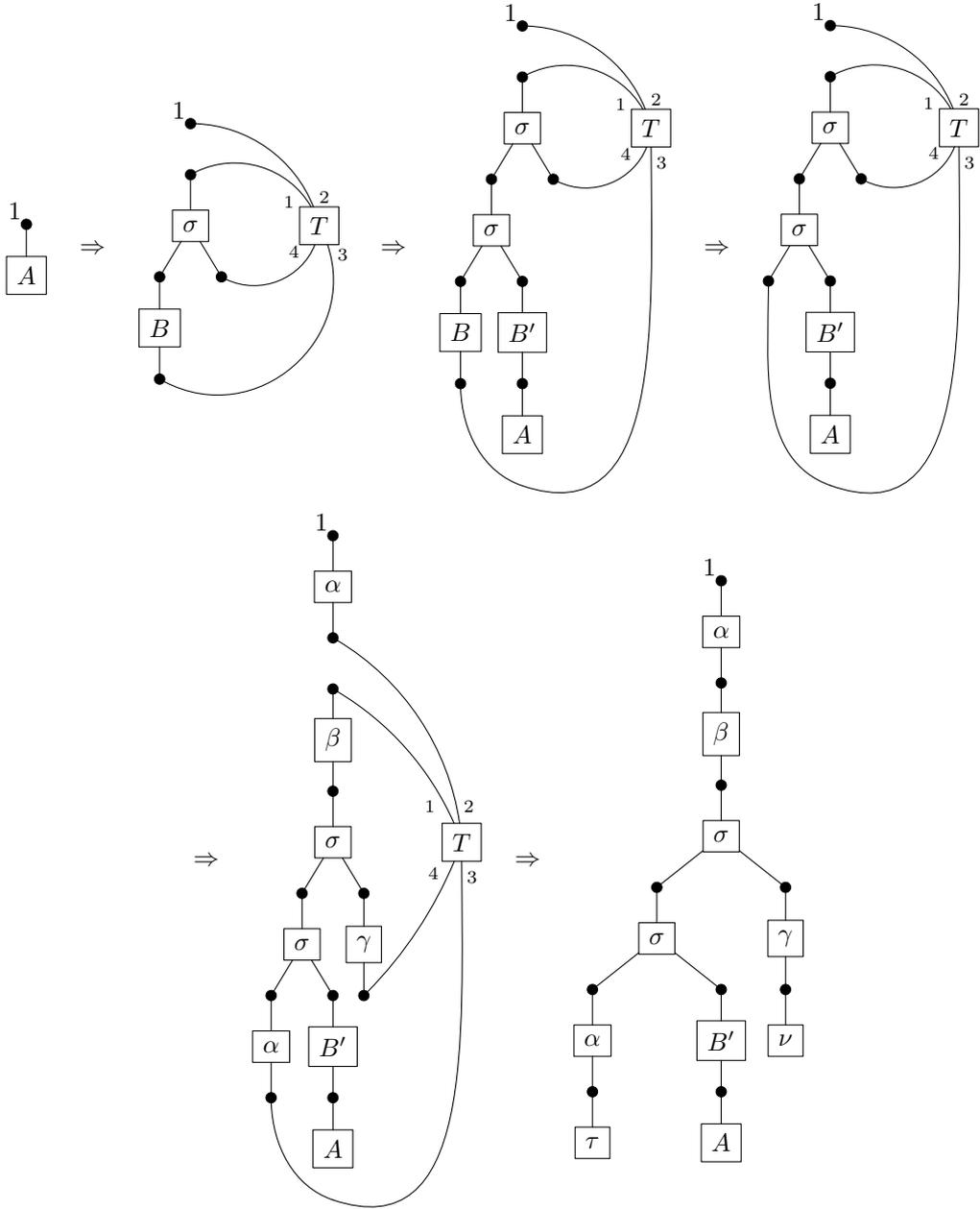}
  \caption{Derivation of the HRG of Figure~\protect{\ref{fig:hrgrammar}} 
    corresponding to the MCFTG derivation of
    Figure~\protect{\ref{fig:derivation}}.} 
  \label{fig:hrderivation}
\end{figure}

Third,
MCFT~and~MCF can be characterized in terms of second-order abstract
categorial grammars.  It is shown in~\cite{kan10} that such grammars
have the same tree and string generating power as
hyperedge-replacement context-free graph grammars, which was already
known for strings from earlier results as discussed in~\cite{kan10}. 

\begin{corollary}
  \label{cor:mcft-acg}
  $\textup{MCFT}$ (resp.~$\textup{MCF}$) is the class of tree languages (resp.~string
  languages) generated by second-order abstract categorial grammars.
\end{corollary}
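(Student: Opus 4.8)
The plan is to obtain this corollary purely by transitivity, combining two facts that are already available: the characterization of $\textup{MCFT}$ and $\textup{MCF}$ by context-free graph grammars just established in Corollary~\ref{cor:mcft-cfgg}, and the equivalence between second-order abstract categorial grammars and context-free graph grammars proved in~\cite{kan10}. No new grammar construction is needed; the work has all been done by Theorem~\ref{thm:charact} and its consequences.

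First I would invoke the relevant result of~\cite{kan10}, namely that 2ACGs have exactly the same tree generating power, and the same string generating power, as hyperedge-replacement context-free graph grammars. Second, Corollary~\ref{cor:mcft-cfgg} identifies $\textup{MCFT}$ (resp.~$\textup{MCF}$) with the class of tree languages (resp.~string languages) generated by context-free graph grammars. Chaining these two equalities then yields immediately that $\textup{MCFT}$ (resp.~$\textup{MCF}$) is precisely the class of tree languages (resp.~string languages) generated by 2ACGs, which is the assertion of the corollary.

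The one point that I expect to require a moment of care — and which is really the only candidate for an ``obstacle'' here — is verifying that the notion of \emph{context-free graph grammar} appearing in Corollary~\ref{cor:mcft-cfgg} is compatible with the hyperedge-replacement grammars referred to in~\cite{kan10}. This is settled by the discussion preceding Corollary~\ref{cor:mcft-cfgg}, where it is recalled (via~\cite[Section~6]{eng97} and~\cite[Section~8.9]{coueng12}) that the hyperedge-replacement and the vertex-replacement variants generate the same class $\textup{DMSOT}(\textup{RT})$ of tree languages, and correspondingly the same class of string languages. Hence either variant may be used in Corollary~\ref{cor:mcft-cfgg}, and in particular the hyperedge-replacement variant that matches~\cite{kan10}.

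Consequently the proof is essentially a single sentence: the result follows by transitivity from Corollary~\ref{cor:mcft-cfgg} and the cited equivalence of~\cite{kan10}, once the identification of the two notions of context-free graph grammar has been noted. This is in keeping with the brevity of the other corollaries drawn from Theorem~\ref{thm:charact} in this section.
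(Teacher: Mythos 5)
Your proposal is correct in outline and is essentially the paper's own argument: both obtain the corollary by chaining the 2ACG--HRG correspondence of~\cite{kan10} with Corollary~\ref{cor:mcft-cfgg}. The one point of care, however, lies elsewhere than where you put it. What \cite{kan10} supplies in the direction from graph grammars to 2ACGs is a construction that applies to HRGs \emph{in tree generating normal form}, not to arbitrary tree generating HRGs; a blanket equivalence of 2ACGs with arbitrary HRGs implicitly uses the normal-form theorem of~\cite{engman00}, whose full proof never appeared in a journal and which this paper deliberately reproves rather than cites. Accordingly, the paper's proof is asymmetric: $\textup{TR}(\textup{2AC}) \subseteq \textup{MCFT}$ follows from the inclusion of 2ACG tree languages in HRG tree languages together with Corollary~\ref{cor:mcft-cfgg}, while $\textup{MCFT} \subseteq \textup{TR}(\textup{2AC})$ is obtained by first converting an MCFTG into an HRG in tree generating normal form via the explicit construction of Remark~\ref{rem:mcft-cfgg} and only then applying the simple construction of~\cite{kan10}. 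By contrast, the hyperedge- versus vertex-replacement compatibility issue you single out is harmless, since Corollary~\ref{cor:mcft-cfgg} is stated for both variants. So your transitivity argument is fine in spirit, but to make it airtight you should either invoke the normal-form theorem explicitly or, as the paper does, route the inclusion $\textup{MCFT} \subseteq \textup{TR}(\textup{2AC})$ through Remark~\ref{rem:mcft-cfgg}.
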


\begin{proof}
  Let $\text{TR}(\text{2AC})$ denote the class of tree languages
  generated by second-order abstract categorial grammars (in short,
  2ACGs).  It is shown in~\cite{kan10} that $\text{TR}(\text{2AC})$ is
  included in the class of tree languages generated by
  hyperedge-replacement context-free graph grammars (HRG), and hence
  $\text{TR}(\text{2AC}) \subseteq \text{MCFT}$ by
  Corollary~\ref{cor:mcft-cfgg}.  In the other direction, it is shown
  in~\cite{kan10} by a simple construction that every tree language
  generated by an HRG in tree generating normal form 
  (as in~\cite[Definition~6]{engman00} or equivalently in~Remark~\ref{rem:mcft-cfgg})
  is in~$\text{TR}(\text{2AC})$. Note that together with the construction 
  in Remark~\ref{rem:mcft-cfgg} this also shows that there is 
  a simple construction to transform every MCFTG into an equivalent 2ACG. 
\end{proof}

We finally observe (cf.~the paragraph
after~\cite[Corollary~7.10]{engman99}) that MRT~is properly included
in~MCFT.  The tree language~$\{a^n b^n \e \mid n \in \nat_0\}$
over~$\Sigma = \{a^{(1)}, b^{(1)}, \e^{(0)}\}$ is in~MCFT and even
in~$\text{CFT}_{\text{sp}}$, but not
in~$\text{DT}_{\text{fc}}(\text{RT})$ because all tree languages
over~$\Sigma$ in this class are regular~\cite[Theorem~4]{rou70}.  Also
$\text{CFT}_{\text{sp}}$~is properly included in~MCFT since it is
shown in~\cite[Section~5]{engfil81} that the tree language~$L(G)$,
where $G$~is the MRTG of Example~\ref{exa:copy}, is not
in~$\text{CFT}_{\text{sp}}$.  Thus, MRT~and~$\text{CFT}_{\text{sp}}$
are incomparable subclasses of~MCFT.

\section{Translation}
\label{sec:trans}
\noindent
As observed in~\cite{rao97} for~MRTGs, 
MCFTGs are not only a natural generation device but also a natural translation device. 
In general, we can also use an MCFTG~$G$
to define a forest language (i.e., an $n$"~ary relation on~$T_\Sigma$)
by considering~$L(G, A)$ for a big nonterminal~$A = (\seq A1n)$
with~$\rk(A_i) = 0$ for every~$i \in [n]$.  In particular, for the
case~$n = 2$, the MCFTG~can be used as a synchronous translation device, 
which we will call an MCFT"~transducer. After defining MCFT"~transducers we present 
two results analogous to those in~\cite{nedvog12} (see also~\cite{mal13}).  Namely, we prove a
characterization of the corresponding MCFT"~transductions by macro tree transducers,
similar to the one for MCFT tree languages in Theorem~\ref{thm:charact} (in the previous section), 
and we present a solution to the parsing and translation problem for MCFT"~transducers,
similar to the one for MCFTGs in Theorem~\ref{thm:mcftgparse} (in Section~\ref{sec:mcfg}).  

A \emph{multiple context-free tree transducer} (in
short,~MCFT"~transducer) is a system~$G = (N, \N, \Sigma, S, R)$,
where~$N$,~$\N$,~$\Sigma$, and~$R$ are as in
Definition~\ref{def:mcftg} and $S = (S_1, S_2) \in \N$ is the
\emph{initial big nonterminal} with~$S_1, S_2 \in N^{(0)}$.  We
require (without loss of generality) that $G$~is start-separated;
i.e., that $S_1$~and~$S_2$ do not occur in the right-hand sides of
rules.  Moreover, we require that $N$~is partitioned into two subsets
$N_1$~and~$N_2$ of input nonterminals and output nonterminals,
respectively, such that 
\begin{compactenum}[\indent (1)]
\item $S_1 \in N_1$~and~$S_2 \in N_2$, and 
\item for every rule~$(\seq A1n) \to ((\seq u1n), \LL)$ in~$R$, every $j \in
[n]$, and every $i \in [2]$ we have $\alp_N(u_j) \subseteq N_i$ if $A_j \in N_i$.  
\end{compactenum}
Intuitively this requirement means that the
nonterminals in~$N_1$ generate the input tree, and those in~$N_2$
generate the output tree.  For every~$A \in \N$, the forest
language~$L(G, A)$ is defined as for MCFTGs, and the \emph{tree
  transduction realized by~$G$} is the binary relation $\tau(G) = L(G)
= L(G,S) \subseteq T_\Sigma \times T_\Sigma$.  We also define~$G_\der$
as for MCFTGs.  Thus, the initial nonterminal of~$G_\der$ is~$S =
(S_1, S_2)$.  Consequently,~$\tau(G) = \val(L(G_\der))$ by
Theorem~\ref{thm:dtree}.  Note that the input and output alphabet of $G$ are
the same ranked alphabet~$\Sigma$.  This is a slight restriction that
could be solved by allowing symbols in a ranked alphabet to have more
than one rank.  The latter feature is easy to implement, but
technically rather tiresome.  We will say that $G$~is an
MCFT"~transducer \emph{over~$\Sigma$} and that $\tau(G)$ is 
an MCFT"~transduction over $\Sigma$.   The synchronous context-free
tree grammar of~\cite{nedvog12} is the special case of the
MCFT"~transducer in which~$\N \subseteq N_1 \times N_2$.

Our characterization of MCFT"~transductions by macro tree transducers uses 
a generalization of the notion of bimorphism. Bimorphisms are a classical 
symmetrical way to characterize classes of string and tree transductions 
(see, e.g., \cite{niv68,arndau76,mal07e}). 
Let $\X$ be a class of tree transductions. For a finite
ranked alphabet~$\Sigma$, we define an
$\X$"~\emph{bimorphism} over~$\Sigma$ to be a
transduction~$\tau \subseteq T_\Sigma \times T_\Sigma$ such that~$\tau
= \{(M_1(s), M_2(s)) \mid s \in L\}$, where $L$~is a regular tree
language over a finite ranked alphabet~$\Omega$ and $M_1$~and~$M_2$
are $\X$"~transductions with input alphabet~$\Omega$ and output alphabet~$\Sigma$. 
In the classical case $\X$ is a class of tree homomorphisms (or string homomorphisms 
in the similar case of strings); cf.\@ the proof of Proposition~\ref{pro:LDTRinv}. 
In the present case we take $\X=\text{DMT}_{\text{fc}}$ and we 
show that MCFT"~transductions are as expressive as
$\text{DMT}_{\text{fc}}$"~bimorphisms. 
Clearly, if $M_1$ and $M_2$ are $\text{DMT}_{\text{fc}}$"~transductions, 
then the domain $L_1=\{M_1(s)\mid s\in L\}$ and the range $L_2=\{M_2(s)\mid s\in L\}$
of the $\text{DMT}_{\text{fc}}$"~bimorphism $\tau$ 
are tree languages in MCFT by Theorem~\ref{thm:charact}
and $\tau$ can be viewed as translating $L_1$ into $L_2$. 
The inverse of $\tau$ is the $\text{DMT}_{\text{fc}}$"~bimorphism 
$\tau^{-1} = \{(M_2(s), M_1(s)) \mid s \in L\}$
which translates $L_2$ into $L_1$. 
Thus, $\text{DMT}_{\text{fc}}$"~bimorphisms are a natural symmetrical model 
for the translation of MCFT languages. 
To prove the characterization we need a few more definitions. 

We first modify the notion of 
$\text{DMT}_{\text{fc}}$"~transducer in such a way that it translates
trees into forests of length~$2$.  We define a $\text{DMT}_{\text{sp},
  2}$"~transducer to be a system~$M = (Q, \Omega, \Sigma, q_0, R)$,
where the only difference to a $\text{DMT}_{\text{sp}}$"~transducer is
that $q_0 = q_1q_2$~is the \emph{initial state sequence} with~$q_1,
q_2 \in Q^{(0)}$.  For $s \in T_\Omega$~and~$q \in Q$, the
tree~$M_q(s)$ is defined as for $\text{DMT}_{\text{sp}}$"~transducers,
and $M(s) = M_{q_0}(s)$ which equals $(M_{q_1}(s), M_{q_2}(s))$ by the
definition before Lemma~\ref{lem:valismtt}.  The tree transduction
realized by~$M$ is defined as for
$\text{DMT}_{\text{sp}}$"~transducers; i.e., it is the total
function~$M = \{(s, M(s)) \mid s \in T_\Omega\}$ from~$T_\Omega$
to~$T_\Sigma \times T_\Sigma$.  The state sequences of a
$\text{DMT}_{\text{sp}, 2}$"~transducer are defined in the same way as
for $\text{DMT}_{\text{sp}}$"~transducers with $\sts(s, \varepsilon) =
q_0$, and finite-copying $\text{DMT}_{\text{sp}, 2}$"~transducers are
called~$\text{DMT}_{\text{fc}, 2}$"~transducers. 

We now define the product of two $\text{DMT}_{\text{sp}}$"~transducers
$M_1$~and~$M_2$ with the same input and output alphabets to be the
$\text{DMT}_{\text{sp}, 2}$"~transducer~$M_1 \otimes M_2$ given as
follows.  Let~$M_i = (Q_i, \Omega, \Sigma, q_i, R_i)$ with~$i \in
[2]$, where we assume that $Q_1$~and~$Q_2$ are disjoint.  Then $M_1
\otimes M_2 = (Q_1 \cup Q_2, \Omega, \Sigma, q_1q_2, R_1 \cup
R_2)$.  It should be clear that for every~$s \in T_\Omega$ we
have~$(M_1 \otimes M_2)(s) = (M_1(s), M_2(s))$.  It should also be
clear that for every~$p \in \pos(s)$ the state sequence of~$M_1
\otimes M_2$ at~$p$ is the concatenation of the state sequences of
$M_1$~and~$M_2$ at~$p$.  This implies that $M_1 \otimes M_2$~is
finite-copying (repetition-free) if and only if $M_1$~and~$M_2$ are
both finite-copying (repetition-free).  Vice versa, for every
$\text{DMT}_{\text{sp}, 2}$"~transducer~$M$ there are
$\text{DMT}_{\text{sp}}$"~transducers $M_1$~and~$M_2$ such that
$M$~and~$M_1 \otimes M_2$ realize the same tree transduction.
Clearly, if~$M = (Q, \Omega, \Sigma, q_1q_2, R)$, then we can take
$M_1 = (Q, \Omega, \Sigma, q_1, R)$~and~$M_2 = (Q', \Omega, \Sigma,
q'_2, R')$, where the primes indicate a consistent renaming of the
states of~$M$ such that~$Q \cap Q' = \emptyset$.  The transducer~$M_1
\otimes M_2$ is obviously equivalent to~$M$ and it is finite-copying
if~$M$ is. Thus we have shown that 
$\text{DMT}_{\text{fc}, 2} = \{M_1 \otimes M_2 \mid M_1,M_2\in\text{DMT}_{\text{fc}}\}$. 
Note that it follows from
Proposition~\ref{pro:mtt-repfree} that this proposition also holds for
$\text{DMT}_{\text{fc}, 2}$"~transducers.

With these preparations, we can now prove our
characterization of MCFT"~transductions as bimorphisms of macro tree transductions. 

\begin{theorem}
  \label{thm:bimorphism}
  Let $\Sigma$~be a finite ranked alphabet.  A transduction~$\tau
  \subseteq T_\Sigma \times T_\Sigma$ is an $\textup{MCFT}$"~transduction
  over~$\Sigma$ if and only if it is a
  $\textup{DMT}_{\textup{fc}}$"~bimorphism over~$\Sigma$.
\end{theorem}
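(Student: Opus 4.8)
The plan is to reduce both implications to the machinery already developed for the tree-language characterization in Theorem~\ref{thm:charact}, replacing ``initial state'' by ``initial state sequence of length~$2$'' throughout. The two bridges are a two-output version of Lemma~\ref{lem:valismtt}, realizing $\val$ by a $\text{DMT}_{\text{fc}, 2}$-transducer, and a two-block version of the construction in Lemma~\ref{lem:mttrt-mcft}, turning a $\text{DMT}_{\text{fc}, 2}$-transduction of a regular tree language into an MCFT-transducer. The product decomposition $\text{DMT}_{\text{fc}, 2} = \{M_1 \otimes M_2 \mid M_1, M_2 \in \text{DMT}_{\text{fc}}\}$ then converts the single $\text{DMT}_{\text{fc}, 2}$-transducer into the pair of morphisms required by a bimorphism, and conversely.

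For the ``only if'' direction, let $G = (N, \N, \Sigma, S, R)$ be an MCFT-transducer with $S = (S_1, S_2)$. I would first observe that the construction in the proof of Lemma~\ref{lem:valismtt} applies verbatim, with the sole change that the initial state sequence is $q_0 = S_1 S_2$ rather than a single initial state; the finite-copying argument (the state sequence of $M$ at a node $p$ of a derivation tree is a permutation of $\lhs(d(p))$) is unaffected, and the identity $M_A(d) = \val(d)$ for all $A \in \N$ now yields $M(d) = M_{S_1 S_2}(d) = \val(d)$ for every $d \in L(G_\der)$. Hence, by Theorem~\ref{thm:dtree}, $\tau(G) = \val(L(G_\der)) = M(L(G_\der))$ for a $\text{DMT}_{\text{fc}, 2}$-transducer $M$. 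Decomposing $M$ as $M_1 \otimes M_2$ gives $\tau(G) = \{(M_1(d), M_2(d)) \mid d \in L(G_\der)\}$; since $L(G_\der)$ is a regular tree language over the ranked alphabet $R$, this exhibits $\tau(G)$ as a $\text{DMT}_{\text{fc}}$-bimorphism over $\Sigma$.

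For the ``if'' direction, let $\tau = \{(M_1(s), M_2(s)) \mid s \in L\}$ with $L$ regular over $\Omega$. Here I would first apply Proposition~\ref{pro:mtt-repfree} to each of $M_1$ and $M_2$ \emph{separately}, so that both become repetition-free $\text{DMT}_{\text{fc}}$-transducers, and then form $M = M_1 \otimes M_2$, which is again repetition-free and finite-copying and whose two state sets $Q_1, Q_2$ are disjoint. I would then run the construction of Lemma~\ref{lem:mttrt-mcft} on $M$ and a reduced, start-separated RTG for $L$, once more with the only change that the initial state sequence $q_1 q_2$ has length~$2$, so that the initial big nonterminal of the resulting MCFTG $G'$ is $(\langle q_1, S\rangle, \langle q_2, S\rangle)$ of length~$2$. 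The central claim $L(G', \bar{q} \otimes C) = M_{\bar{q}}(L(G, C))$ carries over unchanged, and instantiating it at $\bar{q} = q_1 q_2$ and the RTG's initial nonterminal yields $L(G') = M(L) = \tau$.

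The main point to verify---and the only real obstacle---is that $G'$ is genuinely an MCFT-transducer, i.e.\ that it meets the input/output partition conditions. Here I would set $N_1 = Q_1 \times N$ and $N_2 = Q_2 \times N$, so that $S_1' = \langle q_1, S\rangle \in N_1$ and $S_2' = \langle q_2, S\rangle \in N_2$, giving requirement~(1). Requirement~(2) is where disjointness of the product's state sets is used: in a rule $\rho_{\bar{q}}$ of $G'$ the tree $u_j$ is built from $\rhs_M(q_j, \omega)$, and since $M = M_1 \otimes M_2$ arises from $R_1 \cup R_2$, a right-hand side for a state $q_j \in Q_i$ only ever produces states from $Q_i$; hence every nonterminal occurring in $u_j$ lies in $N_i$ whenever $A_j = \langle q_j, C\rangle \in N_i$. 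Start-separatedness of $G'$ is inherited from that of the RTG, since then $S$ never occurs as a child and so $\langle q_1, S\rangle$ and $\langle q_2, S\rangle$ never occur in right-hand sides. Because the big nonterminals of $G'$ have the form $\bar{q} \otimes C$ with $\bar{q}$ the concatenation of an $M_1$-state sequence and an $M_2$-state sequence, the mixed initial big nonterminal $(S_1', S_2')$ is typical rather than exceptional, exactly as the definition of MCFT-transducer allows.
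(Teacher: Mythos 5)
Your proposal is correct and follows essentially the same route as the paper: both reduce the theorem to two-output adaptations of Lemma~\ref{lem:valismtt} and Lemma~\ref{lem:mttrt-mcft} (showing the MCFT"~transductions equal $\textup{DMT}_{\textup{fc},2}(\textup{RT})$) and then use the product decomposition $\textup{DMT}_{\textup{fc},2} = \{M_1 \otimes M_2 \mid M_1, M_2 \in \textup{DMT}_{\textup{fc}}\}$, including the observation that disjointness of $Q_1$ and $Q_2$ induces the required input/output partition $N_1 = Q_1 \times N$, $N_2 = Q_2 \times N$. Your write-up merely spells out details the paper leaves implicit (separate repetition-freeness of $M_1$ and $M_2$, start-separation, and the verification of requirement~(2)).
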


\begin{proof}
  Exactly the same proofs as those of Lemmas
  \ref{lem:valismtt}~and~\ref{lem:mttrt-mcft} show that the class of
  MCFT"~transductions equals the class~$\text{DMT}_{\text{fc},
    2}(\text{RT})$.  The latter class coincides with the class
  of~$\text{DMT}_{\text{fc}}$"~bimorphisms because 
  if $M = M_1 \otimes M_2$,
  where $M$ is a $\text{DMT}_{\text{fc}, 2}$"~transducer and 
  $M_1$~and~$M_2$ are $\text{DMT}_{\text{fc}}$"~transducers, 
  then $M(L) = \{(M_1(s),
  M_2(s)) \mid s \in L\}$ for every regular tree language~$L \in
  \text{RT}$.  Note that if $M_1$~and~$M_2$ have the disjoint
  sets of states $Q_1$~and~$Q_2$, then the set~$N'$ of nonterminals of
  the MCFT"~transducer~$G'$ constructed in the proof of
  Lemma~\ref{lem:mttrt-mcft} is partitioned into the set~$Q_1 \times
  N$ of input nonterminals and the set~$Q_2 \times N$ of output
  nonterminals, where $N$~is the set of nonterminals of the given RTG.
\end{proof} 

We note that we can define MRT"~transducers and
$\text{DT}_{\text{fc}}$"~bimorphisms in the obvious way, and prove as
a special case of Theorem~\ref{thm:bimorphism} that the
MRT"~transductions (which are the binary rational tree translation of~\cite{rao97})
coincide with the
DT$_{\mathrm{fc}}$-bimorphisms.  In~\cite{mal13} the MRT"~transducers
are called synchronous forest substitution grammars, and it is shown
in~\cite[Theorem~3]{mal13} that the MRT"~transductions are the
ld"~MBOT"~bimorphisms, where ld"~MBOT is the class of transductions
realized by linear deterministic multi bottom-up tree
transducers~\cite{englilmal09}.\footnote{The restriction to
  \emph{linear} d"~MBOT is implicit in~\protect{\cite{mal13}}.}
By~\cite[Theorem~18]{englilmal09} and~\cite[Theorems
5.10~and~7.4]{engman99}, this is essentially the same result.  We also
note that we can define
$\text{DMT}_{\text{fc}}^{\text{R}}$"~bimorphisms in the obvious way,
where $\text{DMT}_{\text{fc}}^{\text{R}}$"~transducers are defined
just as $\text{DMT}_{\text{fc}}$"~transducers, but with regular
look-ahead as in the definition of \LDTR"~transducer.  Since regular
look-ahead can be simulated by a relabeling of the input tree, the
$\text{DMT}_{\text{fc}}^{\text{R}}$"~bimorphisms are the same as the
$\text{DMT}_{\text{fc}}$"~bimorphisms.  In other words, the addition
of regular look-ahead does not increase the power of these
bimorphisms.  Moreover, the class of
$\text{DMT}_{\text{fc}}^{\text{R}}$"~transductions coincides with the
class~$\text{DMSOT}$ of deterministic MSO"~definable tree transductions
(cf.\@ Corollary~\ref{cor:mcft-mso} and the preceding paragraph).
Thus, the $\text{MCFT}$"~transductions are the $\text{DMSOT}$"~bimorphisms.  
The notion of $\text{DMSOT}$"~bimorphism is quite natural
as it is a transduction of the form~$\{(M_1(s), M_2(s)) \mid s \in
L\}$, where $L$~is an MSO"~definable tree language and $M_1$~and~$M_2$
are deterministic MSO"~definable tree transductions.
Even if we assume that $\text{DMSOT}$ transductions need not be total (cf.\@ footnote~\ref{foo:dom}),
it follows  that the class of 
$\text{MCFT}$"~transductions properly includes the class $\text{DMSOT}$.
To see this note that, in particular, every $\text{DMSOT}$ transduction and its inverse are 
$\text{DMSOT}$"~bimorphisms. Thus, since $\text{DMSOT}$ is not closed under inverse
(see~\cite[Remark~7.23]{coueng12}),
$\text{DMSOT}$ is properly included in the class of $\text{DMSOT}$"~bimorphisms. 

We now turn to the parsing and translation problem for
MCFT"~transducers, generalizing the parsing algorithm for MCFTGs in Theorem~\ref{thm:mcftgparse}. 
Let~$G$ be an MCFT"~transducer over~$\Sigma$, and let~$\Delta \subseteq \Sigma^{(0)}
\setminus \{e\}$ be a set of lexical symbols.  
We can view $G$ as translating input strings into output strings, 
thereby realizing the string transduction 
$\{(\yield_\Delta(t_1), \yield_\Delta(t_2)) \mid (t_1, t_2) \in \tau(G)\}$.
In such a case the parsing and translation problem
for~$G$ amounts to finding the syntactic trees for a given string over
$\Delta$ and finding its possible translations 
together with their syntactic trees. 
In the next result we show that this can be done in polynomial time. 
For its proof we need some more terminology. 
It is straightforward to prove the analogue of Lemma~\ref{lem:cover} for
MCFT"~transducers, which shows that MCFT"~transductions are closed
under tree homomorphisms.  For a given MCFT"~transducer~$G$ and tree
homomorphism~$h$, the MCFT"~transducer~$G_h$ has the same initial big
nonterminal as~$G$.  Moreover, the lemma implies that~$\tau(G_h) =
\{(\hat{h}(t_1), \hat{h}(t_2)) \mid (t_1, t_2) \in \tau(G)\}$.  As 
before, $(G, h)$~is said to be a cover of~$G_h$ if $h$~is a
projection.  An MCFT"~transducer~$G$ over $\Sigma$ is \emph{i/o"~disjoint} if
$\Sigma$~is partitioned into subsets $\Sigma_1$~and~$\Sigma_2$ of
input and output terminal symbols, and 
\begin{compactenum}
\item[\indent ($2'$)] for every rule~$(\seq A1n) \to ((\seq u1n),
  \LL)$ in~$R$, every $j \in [n]$, and every $i \in [2]$ we have 
$\alp_{N\cup \Sigma}(u_j) \subseteq N_i \cup \Sigma_i$ if $A_j \in N_i$. 
 \end{compactenum}
This guarantees that~$\tau(G) \subseteq
T_{\Sigma_1} \times T_{\Sigma_2}$. It should be clear that every
MCFT"~transducer~$G$ over $\Sigma$ has a cover~$(G^{\text{u}}, h)$ such that
$G^{\text{u}}$~is i/o"~disjoint, the terminal alphabet~$\Sigma
\cup \Sigma'$ of $G^{\text{u}}$ is partitioned into $\Sigma_1 =
\Sigma$~and~$\Sigma_2 = \Sigma'$, and the restriction of $h$ to 
$\Sigma$ is `in'. To construct~$G^{\text{u}}$ from~$G$, change every~$u_j$
with~$A_j \in N_2$ in the above rule into~$u'_j$, where $u'_j$~is
obtained from~$u_j$ by changing every label~$\sigma$ into its primed 
version~$\sigma'$, and define~$h(\sigma') = h(\sigma) =
\init(\sigma)$ for every $\sigma \in \Sigma$.

\begin{theorem}
  \label{thm:transparse}
  For every MCFT"~transducer~$G$ over~$\Sigma$ and
  every~$\Delta \subseteq \Sigma$, there is a polynomial time
  algorithm that, on input~$w \in \Delta^*$, outputs an RTG~$H_w$ and
  an MCFT"~transducer~$G_w$ such that 
  \[ L(H_w) = \{d \in L(G_\der) \mid \val(d) \in \tau(G_w)\} \qquad
  \text{and} \qquad \tau(G_w) = \{(t_1, t_2) \in \tau(G) \mid
  \yield_\Delta(t_1) = w\} \enspace. \]
  The degree of the polynomial is $\mu(G)\cdot(\wid(G)+1)\cdot(\lambda(G)+1)$.
\end{theorem}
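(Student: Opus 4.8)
The plan is to follow the proof of Theorem~\ref{thm:mcftgparse} almost verbatim, the only genuinely new ingredient being a version of the yield"~to"~MCFG machinery that parses a transduction through the lexical yield of its \emph{input} component. Write $\val(d)=(t_1,t_2)$ for $d\in L(G_\der)$, so that $t_1\in T_\Sigma$ is the input tree and $t_2$ the output tree. Since every $d\in L(G_\der)$ has $\val(d)\in\tau(G)$, the condition $\val(d)\in\tau(G_w)$ is equivalent to $\yield_\Delta(t_1)=w$, and hence $H_w$ must generate $\{d\in L(G_\der)\mid \yield_\Delta(\val(d)_1)=w\}$. This is precisely the language parsed in Theorem~\ref{thm:mcftgparse}, with $\yield_\Delta(\val(d))$ replaced by $\yield_\Delta(\val(d)_1)$.

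First I would reduce to the case that $G$ is i/o"~disjoint. Using the cover $(G^{\text{u}},h)$ constructed just before the theorem, with input alphabet $\Sigma_1=\Sigma$ and output alphabet $\Sigma_2=\Sigma'$ a disjoint primed copy, and with $h$ restricted to $\Sigma$ equal to `$\init$', the transducers $G^{\text{u}}$ and $G$ are related by projections on their derivation trees (the transducer analogue of Lemma~\ref{lem:cover}), and $\tau(G)=\{(t_1,\hat h(t_2))\mid (t_1,t_2)\in\tau(G^{\text{u}})\}$. Crucially $\Sigma_2\cap\Delta=\emptyset$, so $\yield_\Delta(t_2)=\varepsilon$ for every output tree of $G^{\text{u}}$, while $\hat h$ is the identity on the input side and therefore leaves $t_1$ and $\yield_\Delta(t_1)$ unchanged. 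It thus suffices to carry out the parsing for $G^{\text{u}}$ and transfer the results back: the RTG produced for $G^{\text{u}}_\der$ is pulled back through the projection and intersected with $L(G_\der)$ by a product construction (Proposition~\ref{pro:LDTRinv}, in linear time since the projection is a finite-state relabeling), and $G_w$ is obtained from the transducer produced for $G^{\text{u}}$ by applying $\hat h$ to the output side. Since $\hat h$ fixes the input, $\yield_\Delta(t_1)=w$ is preserved.

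Next I would build an MCFG $G''$ that is \LDTR"~$\psi$"~equivalent to $G^{\text{u}}$ for the map $\psi(t_1,t_2)=\yield_\Delta(t_1)$, the transducer counterpart of Lemma~\ref{lem:ymcft}(1). I expand each big nonterminal of $G^{\text{u}}$ into an MCFG big nonterminal exactly as in the proof of Theorem~\ref{thm:mcft-to-mcf}, keeping \emph{all} links (both the input and the output ones) so that the relating transducer stays arity"~preserving, i.e. a finite-state relabeling, and then compose with the tree homomorphism over $\Sigma_1\cup\Sigma_2$ that sends every $\Delta$"~symbol to itself and deletes all other symbols (Lemma~\ref{lem:cover}). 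Because the entire output alphabet is deleted, the expansions of the output big nonterminals generate only empty strings, and the length"~$2$ initial big nonterminal $(S_1,S_2)$ is handled by one initial rule that emits the expanded input component closed off with $\e$ while keeping the output links as erasing links. The resulting grammar satisfies $\mu(G'')=\mu(G)\cdot(\wid(G)+1)$ and $\lambda(G'')=\lambda(G)$, and comes with finite-state relabelings $M$ and $M'$ between $G^{\text{u}}_\der$ and $G''_\der$ satisfying $\val(M(d))=\yield_\Delta(\val(d)_1)$.

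With $G''$ in hand the remaining steps are identical to Theorem~\ref{thm:mcftgparse}. Lemma~\ref{lem:cyk} applied to $G''$ on input $w$ produces in time of degree $\mu(G'')\cdot(\lambda(G'')+1)=\mu(G)\cdot(\wid(G)+1)\cdot(\lambda(G)+1)$ an RTG $H''_w$ with $L(H''_w)=\{d''\in L(G''_\der)\mid \val(d'')=w\}$; pulling back through $M$ and taking a product with $G^{\text{u}}_\der$ gives $H^{\text{u}}_w$ in linear time, and the transducer $G^{\text{u}}_w$ is then obtained from $G^{\text{u}}$ and $H^{\text{u}}_w$ by the same product construction $A\otimes C$ as in Theorem~\ref{thm:mcftgparse}, which preserves the input/output partition of the nonterminals and hence again yields an MCFT"~transducer with $\tau(G^{\text{u}}_w)=\val(L(H^{\text{u}}_w))$. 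Transferring back through the cover as in the second paragraph produces $H_w$ and $G_w$. The main obstacle is the construction of $G''$: one must check that carrying the deleted output big nonterminals along as erasing links (required to keep $M$ a relabeling) is compatible both with the collapse of the whole output side and with the multiplicity bound $\mu(G)\cdot(\wid(G)+1)$, and that it does not disturb the one"~sided yield $\psi$. The CYK, inverse"~\LDTR, and product steps then transfer unchanged.
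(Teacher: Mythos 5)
Your proposal is correct and achieves the stated degree, but its central step takes a genuinely different route from the paper's. You and the paper share the outer skeleton: both pass to an i/o-disjoint cover $(G^{\text{u}},h)$ of~$G$, both transfer the RTG built for the cover back to one for~$G$ via the (bijective) relabeling of derivation trees, and both obtain $G_w$ by the product construction $A\otimes C$ from Theorem~\ref{thm:mcftgparse}. The difference is how, for an i/o-disjoint transducer, the input yield is parsed. The paper turns the transducer into a \emph{grammar}: it adds a root marker~$\#$ of rank~$2$ and the single initial rule $\rho_\# = S'\to(\#(S_1,S_2),\{(S_1,S_2)\})$, so that $L(G^\#)=\{\#(t_1,t_2)\mid(t_1,t_2)\in\tau(G)\}$ and, since $\Delta\subseteq\Sigma_1$ and $\Sigma_1\cap\Sigma_2=\emptyset$, $\yield_\Delta(\#(t_1,t_2))=\yield_\Delta(t_1)$; then Theorem~\ref{thm:mcftgparse} is invoked as a black box and $\rho_\#$ is stripped from the resulting RTG. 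You instead inline the proof of Theorem~\ref{thm:mcftgparse}: you build by hand an MCFG for the input-side yields --- expansion as in Theorem~\ref{thm:mcft-to-mcf}, with the length-$2$ initial big nonterminal absorbed into new initial rules and the output links retained as erasing links, followed by the deleting yield homomorphism of Lemma~\ref{lem:ymcft}(1) --- and then rerun the CYK, inverse-relabeling, and product steps. This is sound: retaining the output links is exactly what keeps the derivation-tree correspondence a finite-state relabeling (dropping them would violate the partition condition on links and break Proposition~\ref{pro:LDTRinv}'s linear-time case), and after deletion of $\Sigma_2$ the output nonterminals generate only~$x_1$, so they disturb neither the yield nor the bounds $\mu(G'')=\mu(G)\cdot(\wid(G)+1)$ and $\lambda(G'')=\lambda(G)$; indeed, concatenating the erased output expansion behind the input one before the closing~$\e$ is the MCFG-level analogue of the paper's root marker. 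What you pay is having to re-verify value preservation, $\val(M(d))=\yield_\Delta(\val(d)_1)$, for a construction the paper never needs; what you gain is an explicit MCFG for the input-yield language of the transducer. Either way the degree is $\mu(G)\cdot(\wid(G)+1)\cdot(\lambda(G)+1)$.
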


\begin{proof}
  We first show how to construct the RTG~$H_w$.  Note that
  $L(H_w)$~should consist of all derivation trees~$d \in L(G_\der)$
  such that~$\yield_\Delta(\val(d)_1) = w$, where $\val(d)_1$~is the
  first tree of the forest~$\val(d)$.  To show this, we may assume
  that $G$~is i/o"~disjoint with $\Sigma$~partitioned into
  $\Sigma_1$~and~$\Sigma_2$ and with~$\Delta \subseteq \Sigma_1$.  In
  fact, let $(G^{\text{u}}, h)$~be an i/o"~disjoint cover of~$G$ with
  the properties described before this theorem.  Now let
  $H_w^{\text{u}}$~be an RTG such that~$L(H_w^{\text{u}}) = \{d \in
  L(G_\der^{\text{u}}) \mid \yield_\Delta(\val(d)_1) = w\}$.  By the
  proof of Lemma~\ref{lem:cover} there is a projection~$\pi$ such that
  $\hat{\pi}(L(G^{\text{u}}_\der)) =
  L(G_\der)$~and~$\val(\hat{\pi}(d)) = \hat{h}(\val(d))$ for every~$d
  \in L(G^{\text{u}}_\der)$.  Applying~$\hat{\pi}$ to the rules
  of~$H_w^{\text{u}}$, we obtain an RTG~$H_w$ such that~$L(H_w) =
  \hat{\pi}(L(H_w^{\text{u}}))$.  Clearly, $H_w$~satisfies the above
  requirement.

  Assuming that $G =(N, \N, \Sigma, (S_1, S_2), R)$ is i/o"~disjoint
  with $\Sigma$~partitioned into $\Sigma_1$~and~$\Sigma_2$ and
  with~$\Delta \subseteq \Sigma_1$, we construct the MCFTG~$G^\# = (N
  \cup \{S'\}, \N \cup \{S'\}, \Sigma \cup \{\#^{(2)}\}, S', R^\#)$,
  where $S'$~is a new nonterminal, $\#$~is a new terminal, and
  $R^\#$~contains all rules of~$R$ and the rule~$\rho_\# =
  S' \to (\#(S_1, S_2), \LL)$ with~$\LL = \{(S_1, S_2)\}$.  Note that 
  \[ L(G^\#) = \{\#(t_1, t_2) \mid (t_1, t_2) \in \tau(G)\} \qquad
  \text{and} \qquad L(G_\der^\#) = \{\rho_\#(d) \mid d \in L(G_\der)\}
  \enspace. \]
  By Theorem~\ref{thm:mcftgparse} there is a polynomial time algorithm
  that, on input~$w \in \Delta^*$, outputs an RTG~$H_w^\#$ such
  that~$L(H_w^\#) = \{d \in L(G_\der^\#) \mid \yield_\Delta(\val(d)) =
  w\}$.  We construct the RTG~$H_w$ from~$H_w^\#$ by
  removing~$\rho_\#$; i.e., changing every initial rule $S\to \rho_\#(C)$
  of~$H_w^\#$ into all rules $S\to \rho(\seq C 1k)$ such that 
  $C\to \rho(\seq C 1k)$ is a rule of~$H_w^\#$. 
  Then~$L(H_w) = \{d \in L(G_\der) \mid
  \yield_\Delta(\val(d)) = w\}$ because~$\# \notin \Delta$.  Clearly,
  since $\Sigma_1$~and~$\Sigma_2$ are disjoint and~$\Delta \subseteq
  \Sigma_1$, we have~$\yield_\Delta(\val(d)) = w$ if and only
  if~$\val(d) = (t_1, t_2)$ with~$\yield_\Delta(t_1) = w$.  Thus,
  $H_w$~satisfies the requirement.  

  Finally, we construct~$G_w$ from
  $G$~and~$H_w$ as in the proof of Theorem~\ref{thm:mcftgparse} with
  initial big nonterminal~$(S_1, S_2) \otimes S_w = (\langle S_1, S_w
  \rangle, \langle S_2, S_w \rangle)$.  Then~$\tau(G_w) =
  \val(L(H_w))$, and hence $G_w$~satisfies the requirement.
\end{proof}

Remarks similar to those following Theorem~\ref{thm:mcftgparse} are also valid here.  
For $\Delta = \Sigma$, Theorem~\ref{thm:transparse} solves the parsing and translation problem 
for MCFTG"~transducers as tree transducers in polynomial time. 
For every input tree $t\in T_\Sigma$ the algorithm 
produces as output an RTG $H_t$ such
that $L(H_t) = \{d \in L(G_\der) \mid \exists t'\in T_\Sigma\colon \val(d) = (t,t')\}$.  
The algorithm can be extended to test in linear time whether or not $t$
is in the domain of~$\tau(G)$, by testing whether $L(H_t)$ is nonempty.  
Additionally, if $L(H_t) \neq \emptyset$, then it can also compute in linear time a
derivation tree $d\in L(H_t)$ and a tree $t'\in T_\Sigma$ such that 
$\val(d) = (t,t')$. Thus, $t'$ is a possible translation of $t$. 

For~$\Delta \subseteq \Sigma^{(0)} \setminus \{e\}$, 
we are in the situation described before Theorem~\ref{thm:transparse}. 
For every input string $w\in\Delta^*$ the algorithm outputs 
an MCFT"~transducer $G_w$ such that $\tau(G_w)$~is the set of all
pairs of syntactic trees~$(t_1,t_2) \in \tau(G)$ such that $t_1$ is a syntactic tree for $w$;
i.e., $\yield_\Delta(t_1)=w$. 
Using $H_w$ as before, the algorithm can be extended to test in linear time whether~$w$ 
is in the domain of the string transduction 
$\{(\yield_\Delta(t_1), \yield_\Delta(t_2)) \mid (t_1, t_2) \in \tau(G)\}$
realized by $G$, and if so compute a 
derivation tree~$d\in L(H_w)$, its value $(t_1,t_2)$ such that~$\yield_\Delta(t_1) = w$,
and the string $w'= \yield_\Delta(t_2)$.  
Thus, $t_1$ is a syntactic tree of $w$ and $t_2$ is a syntactic tree of a possible translation
$w'$ of $w$. 
Note that, since the
proof of Theorem~\ref{thm:transparse} is based on
Theorem~\ref{thm:mcftgparse}, these parsing and translation algorithms for
MCFT"~transducers are, again, based on a parsing algorithm for MCFGs.

Let us finally consider the class of string transductions realized by MCFT"~transducers
as discussed above. We first restrict attention to 
the case $\Delta = \Sigma^{(0)} \setminus \{e\}$, 
which means that each MCFT"~transducer~$G$ realizes the
string transduction~$\{(\yield(t_1), \yield(t_2)) \mid (t_1, t_2) \in
\tau(G)\}$.  Let us call this a yMCFT"~transduction.  We can define
MCF"~transducers in the obvious way, with $S_1$~and~$S_2$ being the
only nonterminals of rank~$0$.  It should now be clear that we can
generalize Corollary~\ref{cor:ymcft} as follows:  The
yMCFT"~transductions coincide with the MCF"~transductions
(and with the yMRT"~transductions).  These MCF"~transductions
can also be characterized as the
$\text{yDT}_{\text{fc}}$"~bimorphisms, or equivalently, as the
bimorphisms determined by deterministic tree-walking transducers
(cf.\@ the third paragraph after Theorem~\ref{thm:charact}).
Since there is an analogue of Lemma~\ref{lem:cover} for
MCFT"~transducers (as discussed before Theorem~\ref{thm:transparse}),
the MCF"~transductions are closed under string homomorphisms.
This implies that, for every MCFT"~transducer $G$ and every set  
$\Delta \subseteq \Sigma^{(0)} \setminus \{e\}$ of lexical symbols, 
the string transduction 
$\{(\yield_\Delta(t_1), \yield_\Delta(t_2)) \mid (t_1, t_2) \in \tau(G)\}$
is also a yMCFT"~transduction.

\section{Parallel and general MCFTG}
\label{sec:parallel}
\noindent
In this last section we consider two natural extensions of the MCFTG 
that allow the grammar to make an unbounded number of copies of subtrees.
The definitions of the syntax and semantics of these extensions 
are easy variants of those for the MCFTG. 
The first extension is the \emph{parallel} MCFTG (or PMCFTG), 
which is the obvious generalization of 
the well-known parallel MCFG of~\cite{sekmatfujkas91}. 
In a parallel MCFTG (or parallel MCFG), two or more occurrences 
of the same nonterminal may appear in the right-hand side of a rule.
In the least fixed point semantics the terminal tree 
generated by that nonterminal is therefore copied. 
In the derivation semantics, after application of the rule,  
the occurrences must be rewritten in exactly the same way 
in the remainder of the derivation.
The second generalization, which we only briefly consider, is the \emph{general} 
(P)MCFTG, for which we drop the restriction that the rules must be linear. 
Thus, two or more occurrences of the same variable may appear 
in the same tree of the right-hand side of a rule and, 
when the rule is applied in a derivation step,
the tree that is the current value of the variable is copied. 
The classical (nondeleting) IO context-free tree grammar 
is the general MCFTG of multiplicity~1. 
 
A \emph{parallel multiple
  context-free tree grammar} (in short,~PMCFTG) is a system~$G = (N,
\N, \Sigma, S, R)$ as in Definition~\ref{def:mcftg} except that the
right-hand side~$u$ of a rule~$A \to (u, \LL) \in R$ is not required
to be uniquely $N$"~labeled.  The least fixed point semantics of~$G$ is defined just as
for an MCFTG.  As an example, the PMCFTG~$G$ with $N = \N =
\{S\}$~and~$\Sigma = \{\sigma^{(2)}, a^{(0)}, b^{(0)}\}$ using the
rules
\[ S \to (\sigma(S, S), \{S\}) \qquad S \to (a, \emptyset) \qquad
\text{and} \qquad S \to (b, \emptyset) \]
generates the tree language~$L(G)$ consisting of all full binary trees
over~$\Sigma$ of which all leaves have the same label.  Thus, 
$\yield(L(G)) = \{a^{2^n} \mid n \in \nat_0\} \cup \{b^{2^n} \mid n
\in \nat_0\}$.  In fact, from the least fixed point semantics we first
obtain that $a$~and~$b$ are in~$L(G)$.  Next, we obtain that
the trees $\sigma(S, S)[S \gets a] = \sigma(a, a)$ and $\sigma(S, S)[S
\gets b] = \sigma(b, b)$ are in~$L(G)$, and then we confirm that
$\sigma(S, S)[S \gets \sigma(a, a)] = \sigma(\sigma(a, a), \sigma(a,
a))$~is in~$L(G)$, etc.  Here we use the trivial
fact that a tree homomorphism 
(and hence a second-order substitution) replaces different occurrences 
of the same nonterminal by the same tree. 
Since $\yield(L(G))$ is not semi-linear, PMCFTGs are more powerful than~MCFTGs, 
even when they are used to define string languages via the yields of
the generated tree languages.

Intuitively, for a rule~$A \to (u, \LL)$ of~$G$, it is still the case
that every big nonterminal~$B \in \LL$ occurs ``spread-out'' exactly
once in~$u$, but now each nonterminal of~$B$ may occur more than once
in~$u$. More precisely, for each big nonterminal
$B = (\seq C1m) \in \LL$ with~$\seq C1m \in N$, there is a unique
set $P_B \subseteq \pos_N(u)$ of 
positions such that $\{u(p)\mid p\in P_B\} = \{\seq C1m\}$,
and we have that $P_B \cap P_{B'}=\emptyset$ for every other $B'\in\LL$
and $\pos_N(u) = \bigcup_{B\in \LL} P_B$.
After the application of the rule, all occurrences of each nonterminal~$C_i$
must be rewritten in the same way.  This idea was first introduced for
context-free grammars in~\cite{rose64} with a least fixed point
semantics; for a rewriting semantics similar to the one in
Section~\ref{sub:deriv} we refer to~\cite{skyum74}.  

Derivation trees
can be defined for~$G$ as in Section~\ref{sub:dtrees} with the same
results, which are proved in the same way, with one notable exception.
Statements (1)~and~(2) of Lemma~\ref{lem:valocc} do not hold and must
be reformulated.  For our purposes here it suffices to replace them by
the following weaker statements:
\begin{compactenum}[\indent (1)]
\item $\alp_\Delta(\val(d)) = \bigcup_{\rho \in \alp_R(d)}
  \alp_\Delta(\rhs(\rho))$ for every~$\Delta \subseteq \Sigma$, and
\item $\alp_N(\val(d)) = \bigcup_{B \in \alp_\N(d)} \alp(B)$, 
\end{compactenum}
which can easily be proved by induction on the structure of~$d$.  The
rewriting semantics in Section~\ref{sub:deriv} also applies to~PMCFTGs
without change.  For instance, the tree~$\sigma(\sigma(a, a),
\sigma(a, a))$ is derived by the above grammar in three derivation
steps:
\[ S^\varepsilon \Rightarrow_G^{\rho_1, \varepsilon} \sigma(S^1, S^1)
\Rightarrow_G^{\rho_1, 1} \sigma(\sigma(S^{11}, S^{11}),
\sigma(S^{11}, S^{11})) \Rightarrow_G^{\rho_2, 11} \sigma(\sigma(a,
a), \sigma(a, a)) \enspace, \]
where $\rho_1$~is the first rule of~$G$ and $\rho_2$~is the second.

The results and proofs of Section~\ref{sub:basicnf} on basic normal
forms are also valid for PMCFTGs. The same is true for 
Lemmas~\ref{lem:epsilon-free} and~\ref{lem:nonerasing}.
However, we did not further study the
lexicalization of~PMCFTGs.  Thus, we leave it as an open problem
whether finitely ambiguous~PMCFTGs can be lexicalized, which we
conjecture to be true.  The results and proofs of 
Section~\ref{sec:monadic} are also valid for PMCFTGs (without the
statements on lexicalization). Thus, for every
PMCFTG there are an equivalent monadic PMCFTG, an equivalent footed PMCFTG,
and an equivalent ``parallel'' MC"~TAG (provided that 
the generated tree language is root consistent). 

Parallel MCFGs (in short,~PMCFGs) can be defined as in Section~\ref{sec:mcfg}, and all
the results and proofs in that section are also valid for the parallel
case, except Corollary~\ref{cor:ambmcf} on lexicalization.  Thus, we
have that $\text{yPMCFT} = \text{PMCF} = \text{yPMRT}$.  Moreover,
PMCFGs and PMCFTGs~can be parsed in polynomial time; i.e.,
Lemma~\ref{lem:cyk} and Theorem~\ref{thm:mcftgparse} also hold in the parallel case
(cf.~\cite{sekmatfujkas91,lju12}). 
However, as observed in~\cite{sekmatfujkas91}, the degree of the polynomial 
is one more than in those results
because in the proof of Lemma~\ref{lem:cyk}, in the construction of the rules of $H_w$, 
it must be checked additionally in linear time that 
$w[\ell(C_{i_1}),r(C_{i_1})]=w[\ell(C_{i_2}),r(C_{i_2})]$ whenever $C_{i_1}=C_{i_2}$
(where $C_{i_1}$ may occur in a different $u_j$ than $C_{i_2}$). 
It should also be noted that, for a given derivation tree~$d$, the syntactic
tree~$t = \val(d)$ can no longer be computed in linear time.  Instead,
it should be clear that in linear time a directed acyclic graph~$g$ can be computed 
that represents the tree~$t$ with shared nodes.  In the case
where~$\Sigma^{(0)} \subseteq \Delta$, this graph~$g$ can be unfolded
into~$t$ in time linear in the size of~$g$ plus the size of~$w =
\yield_\Delta(t)$, and thus $t$~is obtained in the required polynomial time from
the string~$w$ by the parsing algorithm.  
 
The results of
Section~\ref{sec:charact} (except Corollaries~\ref{cor:mcft-mso},
\ref{cor:mcft-cfgg} and~\ref{cor:mcft-acg}) as well as those of Section~\ref{sec:trans} are also
valid for the parallel case provided that we
change~$\text{DMT}_{\text{fc}}$ into~$\text{DMT}_{\text{sp}}$,
and~$\text{DT}_{\text{fc}}$ into~DT.  The proofs are also the same,
except that in the proof of Lemma~\ref{lem:valismtt} we do not have to
consider the state sequences of~$M$, and for the proof of
Lemma~\ref{lem:mttrt-mcft} we do not need
Proposition~\ref{pro:mtt-repfree} and we have to redefine state
sequences, as follows.  Roughly speaking, the new state sequences are
the old ones from which repetitions have been removed; thus, they can
be viewed as `state sets' (cf.~\cite[Definition~3.1.8]{engrozslu80}).
Formally, let~$M = (Q, \Omega, \Sigma, q_0, R)$ be a
$\text{DMT}_{\text{sp}}$"~transducer,  and consider a fixed order~$p_1
\sqsubset  \dotsb \sqsubset  p_r$ on the set~$Q=\{\seq p1r\}$ of states of~$M$.  For a
subset~$Q' = \{\seq p{i_1}{i_m}\}$ of~$Q$ with~$i_1 < \dotsb < i_m$,
we define the state sequence~$\sequ(Q') = \word p{i_1}{i_m}$.  Now let
$\seq q1n \in Q$~and~$n \in \nat_0$, and let $\omega \in
\Omega^{(k)}$~with~$k \in \nat_0$.  For~$i \in [k]$ we
(re-)define~$\sts_{\omega, i}(\seq q1n) \in Q^*$ to be the sequence of
states 
\[ \sts_{\omega, i}(\seq q1n) = \sequ(\{q' \in Q \mid \exists j \in[n]
\colon \langle q', y_i \rangle \in \alp_{Q\times Y}(\rhs_M(q_j, \omega))\})
\enspace. \]
Then $\sts(s,p)$~and~$\sts(M)$ can be defined as in
Section~\ref{sec:charact}, and with these definitions the proof of
Lemma~\ref{lem:mttrt-mcft} is valid.  Note that $\sts(M)$~is now
finite for every $\text{DMT}_{\text{sp}}$"~transducer.
Consequently, we have that $\text{PMCFT} =
\text{DMT}_{\text{sp}}(\text{RT})$~and~$\text{PMRT} =
\text{DT}(\text{RT})$.  As further consequences we obtain the known
result~$\text{yDMT}_{\text{sp}}(\text{RT}) =
\text{yDT}(\text{RT})$, which was proved
in~\cite[Theorem~15]{engman02}, and the known result~$\text{PMCF} =
\text{yDT}(\text{RT})$, which was proved in~\cite[Theorem~3.1]{vug96}
by taking into account the well-known fact that string-valued
attribute grammars without inherited attributes
generate~$\text{yDT}(\text{RT})$.  As in Section~\ref{sec:charact},
the multiplicity of the grammars corresponds to the copying power of
the transducers.  Thus, $\text{$m$"~PMCFT} = \text{DMT}_{\text{sp},
  (m)}(\text{RT})$ and $\text{$m$"~PMRT} = \text{DT}_{(m)}(\text{RT})$
and $\text{$m$"~PMCF} = \text{yDT}_{(m)}(\text{RT})$, where the
prefix~`$m$-' means that the grammars have multiplicity at most~$m$
and the subscript~`$(m)$' means that the transducers are $m$"~copying
(with the new definition of state sequence).  As shown
in~\cite[Theorem~3.2.5]{engrozslu80} by a pumping lemma
for~$\text{yDT}_{(m)}(\text{RT})$, the language~$L_m = \{a_1^n a_2^n
\dotsm a_{2m+2}^n \mid n \in \nat_0\}$ is in~$(m+1)$"~MCF but not
in~$m$"~PMCF.  
As results analogous to those in
Section~\ref{sec:trans} we obtain that the PMCFT"~transductions are
the same as the $\text{DMT}_{\text{sp}}$"~bimorphisms, and the
PMRT"~transductions are the same as the DT"~bimorphisms, and hence
by~\cite{fulkuhvog04} they coincide with the d"~MBOT"~bimorphisms, where
the d"~MBOTs are not necessarily linear.  Moreover,
PMCFT"~transductions can be parsed and translated in polynomial time
(with the degree of the polynomial one more than in Theorem~\ref{thm:transparse}).

Finally we consider a further extension of~PMCFTGs.  Until now we have
restricted our grammars to be simple (i.e., linear and nondeleting),
which means that for every rule~$(\seq A1n) \to ((\seq u1n), \LL)$ and
every~$j \in [n]$, the tree~$u_j$ contains every variable
in~$X_{\rk(A_j)}$ exactly once.  We now drop the linearity condition
and just require every such variable to occur at least once.
Technically it is convenient to achieve this by \emph{redefining the
  notion of pattern} (see the first paragraph of
Section~\ref{sub:sub}).  Thus, we redefine the set~$P_\Sigma(X_k)$ of
patterns of rank~$k$ to consist of all trees~$t \in T_\Sigma(X_k)$
such that~$\alp_X(t) = X_k$; i.e., each~$x \in X_k$ occurs at least
once in~$t$.  It should be noted that this also changes our definition
of tree homomorphism, which is now only required to be nondeleting,
and hence that of second-order substitution.  Clearly,
Lemma~\ref{lem:treehom} is not true anymore.  For our purposes here it
can be replaced by the following weaker statements:
\begin{compactenum}[\indent (1)]
\item $\alp_X(\hat{h}(t)) = \alp_X(t)$, and 
\item $\alp_\Sigma(\hat{h}(t)) = \bigcup_{\tau \in \alp_\Sigma(t)}
  \alp_\Sigma(h(\tau))$.  
\end{compactenum}
The remaining definitions and results of Section~\ref{sub:sub} can be
taken over without change.  

The definition of a \emph{general parallel multiple
context-free tree grammar} (in short,~gPMCFTG) is identical to the one of
a~$\text{PMCFTG}$ with the new meaning of~$P_{N \cup \Sigma}(X)$ as above.  The
semantics of a $\text{gPMCFTG}$~$G$ is defined just as for an~MCFTG.
The class of tree languages generated by gPMCFTGs is denoted by $\text{PMCFT}_\text{g}$. 
Derivation trees are defined for~$G$ just as for an~MCFTG, and
Section~\ref{sub:dtrees} is valid for~gPMCFTGs with the same change of
Lemma~\ref{lem:valocc} as stated above for~PMCFTGs.  The rewriting
semantics in Section~\ref{sub:deriv} is also valid for~gPMCFTGs.  The
semantics of a~PMCFTG is essentially an ``inside-out'' semantics in
the sense of~\cite{engsch77}.  In fact, consider a classical
IO~context-free tree grammar~$G$ such that (i)~$G$ is nondeleting
(i.e., every variable in the left-hand side of a rule also occurs  
in the right-hand side) and (ii)~the right-hand side of each rule is
uniquely $N$"~labeled (i.e., every nonterminal occurs at most once in
the right-hand side of each rule).  Viewing~$G$ as a~gPMCFTG  
in the obvious way, it is easy to see that the least fixed point semantics  
of~$G$ as a~gPMCFTG coincides with the least fixed point semantics
of~$G$ as an IO~context-free tree grammar as stated
in~\cite[Theorem~3.4]{engsch77}.  Since requirements (i)~and~(ii) are a
normal form for IO~context-free tree grammars
(cf.~\cite[Theorem~3.1.10]{fis68b}), this shows that all 
IO~context-free tree languages can be generated by gPMCFTGs.
More precisely, they are the tree languages generated by 
the (nonparallel) gMCFTGs of multiplicity~1. 

As an example, the gPMCFTG~$G$ with $N = \N = \{S^{(0)}, A^{(1)},
B^{(1)}\}$~and~$\Sigma = \{\sigma^{(2)}, a^{(0)}, b^{(0)}\}$ using the
rules
\[ S \to A(b) \qquad A(x_1) \to B(A(\sigma(a, x_1))) \qquad A(x_1) \to
x_1 \qquad \text{and} \qquad B(x_1) \to \sigma(x_1, x_1) \enspace, \]
generates the tree language~$L(G)$ consisting of all trees~$t_1[x_1
\gets t_2]$, where $t_1$~is a full binary tree over~$\{\sigma, x_1\}$
of height~$n$ and $t_2$~equals~$(\sigma a)^n b$. Thus,~$\yield(L(G)) =
L_{\text{ec}} = \{(a^n b)^{2^n} \mid n \in \nat\}$.  For~$n = 2$, the
tree~$t = \sigma(\sigma(\sigma a \sigma ab, \sigma a \sigma ab),
\sigma(\sigma a\sigma ab, \sigma a \sigma ab))$ is obtained by the
derivation
\begin{align*}
  S^\varepsilon 
  &\Rightarrow_G^{\rho_1, \varepsilon} A^1(b) \Rightarrow_G^{\rho_2,
    1} B^{11}(A^{12}(\sigma ab)) \Rightarrow_G^{\rho_2, 12}
    B^{11}(B^{121}(A^{122}(\sigma a \sigma ab))) \\
  &\Rightarrow_G^{\rho_3, 122} B^{11}(B^{121}(\sigma a \sigma ab))
    \Rightarrow_G^{\rho_4, 121} B^{11}(\sigma(\sigma a \sigma ab,
    \sigma a \sigma ab)) \Rightarrow_G^{\rho_4, 11} t \enspace,
\end{align*}
which corresponds to the ``inside-out'' derivation of the
IO~context-free tree grammar~$G$, but is, for instance, also obtained by
the ``outside-in'' derivation  
\begin{align*}
  S^\varepsilon 
  &\Rightarrow_G^{\rho_1, \varepsilon} A^1(b) \Rightarrow_G^{\rho_2,
    1} B^{11}(A^{12}(\sigma ab)) \Rightarrow_G^{\rho_4, 11}
    \sigma(A^{12}(\sigma ab), A^{12}(\sigma ab)) \\
  &\Rightarrow_G^{\rho_2, 12} \sigma(B^{121}(A^{122}(\sigma a \sigma
    ab)), B^{121}(A^{122}(\sigma a \sigma ab))) \\
  &\Rightarrow_G^{\rho_4, 121} \sigma(\sigma(A^{122}(\sigma a \sigma
    ab), A^{122}(\sigma a \sigma ab)), \sigma(A^{122}(\sigma a \sigma
    ab), A^{122}(\sigma a \sigma ab))) \Rightarrow_G^{\rho_3, 122} t
    \enspace.
\end{align*}
The language~$L_{\text{ec}}$ is the well-known example of an
IO~context-free tree language that is not an OI~context-free tree
language (see~\cite[Section~4.3]{fis68b}).  It is shown
in~\cite[Theorem~3.16]{eng82b}, using again the pumping lemma
for~$\text{yDT}(\text{RT})$, that $L_{\text{ec}}$~is not
in~$\text{yDT}(\text{RT})$, and hence not in~PMCF.  Thus,
gPMCFTGs are more powerful than~PMCFTGs, even when they 
are used to define string languages via the yields of
the generated tree languages.
Note that the above grammar is even a~gMCFTG because the
right-hand sides of its rules are uniquely $N$"~labeled.\footnote{We do not know
whether there is a tree language in PMCFT that is not in $\text{MCFT}_\text{g}$; 
i.e., we do not know whether
PMCFT~and~$\text{MCFT}_\text{g}$ are incomparable subclasses of~$\text{PMCFT}_\text{g}$.}  
The multiple context-free tree grammars in~\cite{boukalsal12} are the~gMCFTGs, whereas
our MCFTGs are there called \emph{linear} multiple context-free tree grammars. 
It is shown in~\cite{boukalsal12} that the closure 
of MCF under IO-substitution is included in $\text{yMCFT}_\text{g}$
and that the string languages in this closure 
satisfy the constant-growth property and can be recognized in polynomial time. 

The only result we have for~gPMCFTGs is their characterization in
terms of macro tree transducers.  Let $\text{DMT}_{\text{np}}$~denote
the class of tree transductions realized by macro tree transducers
with the new definition of pattern (where `np'~stands for~`nondeleting
in the parameters').  The semantics of such transducers is as in
Section~\ref{sec:charact}.  Using the redefined notion of state
sequence as for~PMCFTGs, the proofs of Lemmas
\ref{lem:valismtt}~and~\ref{lem:mttrt-mcft} are still valid.  Thus, we
obtain that~$\text{PMCFT}_\text{g} = \text{DMT}_{\text{np}}(\text{RT})$.  Now
let DMT~denote the class of tree transductions realized by all (total
deterministic) macro tree transducers as known from the literature,
which means that also deletion of parameters is allowed; i.e., for a
rule~$\langle q, \omega(\seq y1k) \rangle(\seq x1m) \to \zeta$, it is
just required that~$\zeta \in T_{(Q \times Y_k) \cup \Sigma}(X_m)$.
Their semantics is still the same as in Section~\ref{sec:charact}.
It is proved in~\cite[Lemma~6.6]{engman99} that for every
DMT"~transducer with regular look-ahead there is an equivalent one
that is nondeleting in the parameters.  Since regular look-ahead can
be simulated by relabeling the input tree, this implies
that~$\text{DMT}(\text{RT}) = \text{DMT}_{\text{np}}(\text{RT})$.
Thus we obtain the characterization~$\text{PMCFT}_\text{g} =
\text{DMT}(\text{RT})$.  
We observe that the two types (P and g) of copying subtrees that can be realized by gPMCFTGs,
correspond for macro tree transducers to the copying of input variables (from~$Y$)
and the copying of output variables (or parameters, from~$X$), respectively. 

At the end of this section we discuss the class S"~CF of synchronized-context-free tree languages 
introduced in~\cite{chacheret06} and applied, e.g., in~\cite{boicharet15}. 
The logic programs generating these tree languages are
essentially tree-valued attribute grammars, which means that
$\text{S"~CF}=\text{AT}(\text{RT})$, where AT~denotes the class of attributed tree transductions 
(see, e.g.,~\cite{fulvog98,engman99}).
It was shown in~\cite{enghey92}
that $\text{AT}(\text{RT})$~is the class of tree languages obtained by
unfolding the term graphs generated by a context-free graph grammar,
where a term graph is a directed acyclic graph representing a tree
with shared subtrees (cf.~Corollary~\ref{cor:mcft-cfgg}).
It is well known
that~$\text{DT} \subsetneq \text{AT} \subsetneq \text{DMT}$ (see,
e.g.,~\cite{fulvog98}).  Thus, the class~$\text{AT}(\text{RT})$ is
included in~$\text{PMCFT}_\text{g}$.  It seems to be unknown whether the
inclusion is proper.  It follows from~\cite[Theorem~7.1]{engman99}
that~$\text{DMT}_{\text{fc}}(\text{RT}) \subseteq
\text{AT}(\text{RT})$.  Thus, MCFT~is included
in~$\text{AT}(\text{RT})$, but the relationship
of~$\text{AT}(\text{RT})$ to~PMCFT is not clear.  However,
$\text{PMCF} = \text{yDT}(\text{RT}) \subsetneq
\text{yAT}(\text{RT})$, because~$L_{\text{ec}} \in
\text{yAT}(\text{RT})$.  Hence we have 
$\text{MCFT} \subsetneq \text{AT}(\text{RT}) \subseteq \text{PMCFT}_\text{g}$ and
$\text{MCF} \subsetneq \text{PMCF} \subsetneq \text{yAT}(\text{RT}) \subseteq
\text{yPMCFT}_\text{g}$. 
We finally note that the class~$\text{CFT}_{\text{sp}}$ is characterized in terms
of a special type of attributed tree transducers in~\cite{mon10}.

\section{Conclusion} 
\noindent
We have proved in Theorem~\ref{thm:main} that 
every finitely ambiguous MCFTG can be lexicalized, for an arbitrary set $\Delta$ of lexical symbols.
A remaining question is whether the given bounds on the multiplicity and width of the 
resulting MCFTG are optimal. In the particular case where all lexical symbols in $\Delta$ have rank~0,
the multiplicity stays the same, but the width increases by~1. 
By Theorems~\ref{thm:main} and~\ref{thm:monadic} together, there is also 
an equivalent lexicalized grammar of width at most~1 but with increased multiplicity. 
A similar question is relevant for the transformation of an MCFTG into an equivalent MC"~TAG
(Theorem~\ref{thm:mctal}), and for the lexicalization of MC"~TAGs (Theorem~\ref{thm:lexmctag}).
As shown in~\cite{maleng17}, the factor $\mr_\Sigma^2$ in Theorems~\ref{thm:mctal}, \ref{thm:lexmctag}, 
and~\ref{thm:monadic} can be reduced to $\mr_\Sigma$ by combining the two constructions in the proofs of 
Theorem~\ref{thm:footed} and Lemma~\ref{lem:foottotag} into one.

All our grammar transformations produce an MCFTG that is grammatically close (i.e., \LDTR-equivalent) 
to the given MCFTG, except for the transformation of an MCFTG into a monadic MCFTG
(Lemma~\ref{lem:nlex} and~Theorem~\ref{thm:monadic}), for which we could only prove 
\LDTR-equivalence in the special case in which all lexical symbols in $\Delta$ have rank~0. 
As already observed in footnotes~\ref{foo:fc1} and~\ref{foo:fc2}, 
this problem can be ``solved'' by considering the weaker notion of \DTRfc-equivalence 
instead of \LDTR-equivalence, where \DTRfc\ is the class of transductions realized by finite-copying
top-down tree transducers with regular look-ahead. The definition of \DTRfc-equivalence is
the same as that of \LDTR-equivalence in Definition~\ref{def:LDTReq}. 
Since \DTRfc\ is closed under composition (see, e.g., \cite[Theorem~5.4]{engrozslu80}), 
this is indeed an equivalence relation. Actually, we feel that \DTRfc-equivalence is a better 
formalization of the notion of grammatical closeness than \LDTR-equivalence because it 
can also handle the combination of rules as needed, e.g., in the proof of Lemma~\ref{lem:nlex}. 
Such a combination of rules is also needed for the binarization of grammars 
(which we did not study for MCFTGs), to transform the derivation trees of the binarized grammar 
into those of the original one. An MCFTG $G$ is \emph{binary} if its rule-width $\lambda(G)$
is at most $2$. 
In view of Lemma~\ref{lem:cyk} and Theorem~\ref{thm:mcftgparse}, 
binarization is important for parsing (see, e.g., \cite{ramsat99,gomsat09}). 
We note that most of our constructions preserve $\lambda(G)$. 
The two exceptions are Lemmas~\ref{lem:terminal-removal} and~\ref{lem:nlex} 
which decrease and increase $\lambda(G)$, respectively. 

In Theorem~\ref{thm:charact} we have proved a characterization of MCFTGs in terms of finite-copying
macro tree transducers, and from that we have deduced characterizations in terms of 
monadic second-order logic (Corollary~\ref{cor:mcft-mso}), 
context-free graph grammars (Corollary~\ref{cor:mcft-cfgg}), 
and abstract categorial grammars (Corollary~\ref{cor:mcft-acg}). 
It would be worthwhile to investigate whether there are more results 
from the literature on macro tree transducers that can be applied to MCFTGs. 

In Section~\ref{sec:trans} we have introduced the $\text{MCFT}$"~transducer and we have 
shown that they realize the $\textup{DMT}_{\textup{fc}}$"~bimorphisms and 
hence the $\text{DMSOT}$"~bimorphisms. This class of $\text{MCFT}$"~transductions 
deserves further study. Only subclasses have been investigated in the literature. 
As stated in~\cite[Example~5]{rad08}, the $\text{MRT}$"~transductions are not closed under composition.
We do not know whether the $\text{MCFT}$"~transductions are closed under composition or 
whether composition gives rise to a proper hierarchy. Another question is whether or not
every functional $\text{MCFT}$"~transduction is a composition of 
deterministic macro tree transductions. 

Our remaining problems concern the extensions of MCFTGs
discussed in Section~\ref{sec:parallel}: the PMCFTGs and the g(P)MCFTGs.  
As observed in that section it is open whether PMCFTGs can be lexicalized, 
and the same is true for g(P)MCFTGs. 
Although Theorem~\ref{thm:charact} can be generalized to PMCFTGs and gPMCFTGs,
it is not clear whether there are natural generalizations of the three corollaries
mentioned above. Also, a characterization of $\text{MCFT}_\text{g}$ is missing. 
Finally, it would be interesting to determine the correctness (or incorrectness) 
of the obvious Hasse diagram of the six classes 
MRT, MCFT, PMRT, PMCFT, $\text{MCFT}_\text{g}$, $\text{PMCFT}_\text{g}$. 
The tree language~$\{a^n b^n \e \mid n \in \nat_0\}$, 
which we considered at the end of Section~\ref{sec:charact}, is
in~MCFT (even in $\text{CFT}_{\text{sp}}$) but not
in PMRT because all monadic tree languages
in the class $\text{DT}(\text{RT})$ are regular~\cite[Theorem~4]{rou70}.
The IO context-free tree language $L_{\text{ec}}$ that we considered 
in Section~\ref{sec:parallel} is in~$\text{MCFT}_\text{g}$ but not in PMCFT.
The PMRTG (of multiplicity~1) that we considered in the second paragraph 
of Section~\ref{sec:parallel}, generates a tree language that is not in MCFT.
However, we do not know whether there exists a tree language in PMCFT
(or even in PMRT) that is not in $\text{MCFT}_\text{g}$. 
If we also add the six classes (as above) with multiplicity~1, then 
the situation is less clear. In view of~\cite[Corollary~3.5]{engsky82} 
we guess that $\text{1-PMRT}=\text{HOM(RT)}$ where HOM is the class of all 
(not necessarily simple) tree homomorphisms. Thus, apart from the trivial inclusions,
we obtain the additional inclusion  
$\text{1-PMRT} \subseteq \text{1-MCFT}_{\text{g}}$ because the class of IO context-free
tree languages is closed under arbitrary tree homomorphisms~\cite[Corollary~6.4]{engsch78}.
The tree language of Example~\ref{exa:copy},
which we also considered at the end of Section~\ref{sec:charact}, 
is in MRT but not in $\text{1-MCFT}_{\text{g}}$
because it cannot be generated by an IO context-free tree grammar
as shown in~\cite[Section~5]{engfil81}. However, we do not know 
whether there exists a tree language in MRT that is not in $\text{1-PMCFT}_{\text{g}}$;
i.e., that cannot be generated by a parallel IO context-free tree grammar.

\section*{References}
\bibliographystyle{plain}
\bibliography{multi-extra}

\end{document}